\DeclareMathOperator{\Hom}{Hom}
\DeclareMathOperator{\Pic}{Pic}
\DeclareMathOperator{\Gr}{Gr}
\DeclareMathOperator{\bGr}{\overline{Gr}}
\DeclareMathOperator{\gl}{\mathfrak{gl}}
\DeclareMathOperator{\SL}{SL}
\DeclareMathOperator{\GL}{GL}
\DeclareMathOperator{\PGL}{PGL}
\DeclareMathOperator{\Aut}{Aut}
\DeclareMathOperator{\End}{End}
\DeclareMathOperator{\Spec}{Spec}
\DeclareMathOperator{\Hilb}{Hilb}
\DeclareMathOperator{\Quot}{Quot}
\DeclareMathOperator{\ch}{ch}
\newcommand*{\cM}{\mathcal{M}}
\newcommand*{\bA}{\mathbb{A}}
\newcommand*{\bC}{\mathbb{C}}
\newcommand*{\bN}{\mathbb{N}}
\newcommand*{\bZ}{\mathbb{Z}}
\newcommand*{\bR}{\mathbb{R}}
\newcommand*{\bP}{\mathbb{P}}
\newenvironment{customthm}[1]
  {\innercustomthm}
  {\endinnercustomthm}
\theoremstyle{plain}
  \newtheorem{theorem}{Theorem}[section]
  \newtheorem{proposition}[theorem]{Proposition}
  \newtheorem{lemma}[theorem]{Lemma}
  \newtheorem{corollary}[theorem]{Corollary}
\theoremstyle{definition}
  \newtheorem{definition}{Definition}[section]
  \newtheorem{notation}{Notation}
\theoremstyle{remark}
  \newtheorem{example}{Example}[section]
  \newtheorem{remark}{Remark}[section]
\numberwithin{equation}{section}
\title{On the Hilbert Space of the Chern-Simons Matrix Model, Deformed Double Current Algebra Action, and the Conformal Limit}
\author[1]{Sen Hu\thanks{shu@ustc.edu.cn}}
\author[2]{Si Li\thanks{sili@mail.tsinghua.edu.cn}}
\author[1]{Dongheng Ye\thanks{ydrj163@mail.ustc.edu.cn}}
\author[3]{Yehao Zhou\thanks{yehao.zhou@ipmu.jp}}
\affil[1]{School of Mathematical Sciences, University of Science and Technology of China, Hefei, China}
\affil[2]{Department of Mathematical Sciences and Yau Mathematical Sciences Center, Tsinghua University, Beijing, China}
\affil[3]{Kavli Institute for the Physics and Mathematics of the Universe (WPI), University of Tokyo, Kashiwa, Chiba, Japan}
\date{}
\begin{document}

\NiceMatrixOptions{code-for-first-row = \scriptstyle,code-for-first-col = \scriptstyle }

\maketitle

\begin{abstract}
A Chern-Simons matrix model was proposed by Dorey, Tong, and Turner to describe non-Abelian fractional quantum Hall effect. In this paper we study the Hilbert space of the Chern-Simons matrix model from a geometric quantization point of view. We show that the Hilbert space of the Chern-Simons matrix model can be identified with the space of sections of a line bundle on the quiver variety associated to a framed Jordan quiver. We compute the character of the Hilbert space using localization technique. Using a natural isomorphism between vortex moduli space and a Beilinson-Drinfeld Schubert variety, we prove that the ground states wave functions are flat sections of a bundle of conformal blocks associated to a WZW model. In particular they solve a Knizhnik-Zamolodchikov equation. We show that there exists a natural action of the deformed double current algebra (DDCA) on the Hilbert space, moreover the action is irreducible.

We define and study the conformal limit of the Chern-Simons matrix model. We show that the conformal limit of the Hilbert space is an irreducible integrable module of $\widehat{\mathfrak{gl}}(n)$ with level identified with the matrix model level. Moreover, we prove that $\widehat{\mathfrak{gl}}(n)$ generators can be obtained from scaling limits of matrix model operators, which settles a conjecture of Dorey-Tong-Turner. The key to the proof is the construction of a Yangian $Y(\mathfrak{gl}_n)$ action on the conformal limit of the Hilbert space, which we expect to be equivalent to the $Y(\mathfrak{gl}_n)$ action on the integrable $\widehat{\mathfrak{gl}}(n)$ modules constructed by Uglov. We also characterize eigenvectors and eigenvalues of the matrix model Hilbert space with respect to a maximal commutative subalgebra of Yangian.

\end{abstract}

{\tableofcontents}

\section{Introduction}
Susskind suggested that the hydrodynamic properties of the quantum Hall fluid are captured by a non-commutative Chern-Simons theory \cite{Susskind}. Polychronakos proposed a matrix model as a regularization of the aforementioned non-commutative Chern-Simons theory \cite{Poly,polychronakos2006physics}. Quantization of the matrix models were constructed and the Laughlin wave function are reproduced in \cite{Poly, Hellerman-Raamsdonk, Karabali-Sakita}. 

A generalization of the above matrix model describing a class of non-Abelian quantum Hall states was introduced by Dorey, Tong, and Turner in \cite{Dorey-Tong-Turner}. It is shown in \cite{Dorey-Tong-Turner} that this model describes the microscopic dynamics of $N$ vortices in the $\frac{\mathrm{U}(1)_{(k+n)n}\times \mathrm{SU}(n)_k}{\mathbb Z_n}$ Chern-Simons theory, which generalizes the previous work on the $n=1$ case \cite{morariu2001finite}. From now on the terminology ``Chern-Simons matrix model'' refers to such model, which will be reviewed in Section \ref{sec:model}. Canonical quantization of the Chern-Simons matrix model results in a Hilbert space which will be denoted by $\mathcal H_N(n,k)$.

Recently, Bourgine and Matsuo \cite{bourgine2024calogero} pointed out the relationship between Chern-Simons matrix model and a higher spin generalization of Calogero model. The phase space of Chern-Simons matrix model is a Lagrangian inside the Nakajima quiver variety of the Jordan quiver, so the geometric quantization procedure tells us that the quantized ring of function on the Nakajima quiver variety acts on the Hilbert space of the Chern-Simons matrix model. This is the ``Calogero representation'' studied by Gaiotto, Rapčák, and the last author in \cite{Gaiotto-Rapcek-Zhou}. We note that relation between Nakajima quiver varieties of the Jordan quiver and (spin) Calogero models \cite{gibbons1984generalisation} was extensively studied \cite{bielawski2011symplectic,tacchella2015family}. There are also relations between Nakajima quiver varieties of the cyclic quiver quivers and generalizations of spin Calogero systems \cite{chalykh2017kp}. Relations between quantized multiplicative quiver varieties \cite{jordan2014quantized} and (generalizations of) spin Ruijsenaars–Schneider systems were explored in \cite{chalykh2017multiplicative,chalykh2020hamiltonian,fairon2019spin,fairon2021integrable}. It hints that a super Chern-Simons matrix model is related to quantization of a quiver super-variety \cite{tong2015quantum,okazaki2018matrix}.

The $N\to \infty$ limit of the Chern-Simons matrix model is expected to capture the physics of $\mathrm{SU}(n)_k$ WZW model \cite{dorey2016matrix}. It was proposed in \cite{dorey2016matrix} that there should be an isomorphism
\begin{align}\label{eq:limit of hilbert space_intro}
    `` \lim_{M\to \infty}" \mathcal H_{nM+r}(n,k)\cong L_{k\varpi_r}(\widehat{\mathfrak{sl}}(n)_k)\otimes \text{Fock space of }\widehat{\gl}(1),
\end{align}
where $r\in \{0,1,\cdots,n-1\}$, and $\varpi_r$ is the $r$-th fundamental weight of $\mathfrak{sl}_n$, and $L_{k\varpi_r}(\widehat{\mathfrak{sl}}(n)_k)$ is the irreducible integrable module of $\widehat{\mathfrak{sl}}(n)_k$ with highest weight $k\varpi_r$. Moreover \eqref{eq:limit of hilbert space_intro} should  be compatible with the natural graded $\gl_n$ module structures on two sides. The limit is dressed with quotation mark because it was not defined in \cite{dorey2016matrix}, rather \eqref{eq:limit of hilbert space_intro} was proposed as a result of comparing characters of two sides.

Moreover, it was argued in \cite{dorey2016matrix} that $\widehat{\mathfrak{sl}}(n)$ generators can be obtained as certain scaling limit of matrix model operators. Namely if we set
\begin{align}
    \mathcal J^a_{b,m}=\left(\frac{n}{(n+k)N}\right)^{\frac{|m|}{2}}\cdot \begin{cases}
        \varphi^{\dag a}(Z)^m\varphi_b-\frac{\delta^a_b}{n}\varphi^{\dag c}(Z)^m\varphi_c &,\text{ if }m> 0,\\
        \varphi^{\dag a}(Z^\dag)^{-m}\varphi_b-\frac{\delta^a_b}{n}\varphi^{\dag c}(Z^\dag)^{-m}\varphi_c &,\text{ if }m\le 0,\\
    \end{cases}
\end{align}
where $(Z,Z^\dag,\varphi,\varphi^\dag)$ are quantized matrix model operators (see Section \ref{sec:Quantization and Hilbert Space}), then $\mathcal J^a_{b,m}$ is conjectured to satisfy the $\widehat{\mathfrak{sl}}(n)_k$ commutation relation asymptotically:
\begin{align}\label{eq:asymptotic hat sl(n)_intro}
    [{\mathcal J}^a_{b,m},{\mathcal J}^c_{d,l}]=\delta^c_b {\mathcal J}^a_{d,m+l}-\delta^a_d {\mathcal J}^c_{b,m+l}+km\delta_{m+l,0}\left(\delta^a_d\delta^b_c-\frac{1}{n}\delta^a_b\delta^c_d\right)+O(1/N).
\end{align}
Here we say a sequence of operators $\{{\mathcal A}_N\in \End(\mathcal H_N(n,k))\}$ is of order $O(1/N^h)$ if for every $E\in \mathbb N$ there exists a constant $C_E>0$ such that
\begin{align}
    \rVert {\mathcal A}_N(v)\rVert_N \le C_E N^{-h} \rVert v\rVert_N\text{ holds for all }v\in \mathcal H_N(n,k)_{\le E}.
\end{align}
Here $\mathcal H_N(n,k)_{\le E}$ denotes the subspace of states which has energy at most $E$ above the ground states (see Definition \ref{def:energy grading}), and $\rVert \cdot\rVert_N$ is the natural norm on $\mathcal H_N(n,k)$ (see Section \ref{subsec:Hermitian inner product}). Evidence of \eqref{eq:asymptotic hat sl(n)_intro} was provided in \cite{Dorey-Tong-Turner}, but a full proof is still lacking, due to the difficulty of proving Identity 1 and Identity 2 in \cite{Dorey-Tong-Turner} in the quantum theory (semi-classical analysis was given there). In \cite{hu2023quantum}, it was conjectured that $\widehat{\gl}(1)$ generators can also be obtained by the scaling limit of matrix model operators. Namely, if we set 
\begin{align}
    \mathcal B_m:=\left(\frac{n}{(n+k)N}\right)^{\frac{|m|}{2}}\cdot \begin{cases}
        \frac{1}{k}\varphi^{\dag a}(Z)^m\varphi_a &,\text{ if }m< 0,\\
        \frac{1}{k}\varphi^{\dag a}(Z^\dag)^{-m}\varphi_a &,\text{ if }m\le 0,\\
    \end{cases}
\end{align}
then it is conjectured in \cite{hu2023quantum} that
\begin{align}\label{eq:asymptotic hat gl(1)_intro}
    [\mathcal B_m,\mathcal J^a_{b,l}]=O(1/N),\;\text{ and }\; [\mathcal B_m,\mathcal B_l]=\frac{n}{n+k}m\delta_{m+l,0}+O(1/N).
\end{align}
Note that $\mathcal B_0=kN$ by the moment map equation \eqref{eq:constraint}, in particular it is central. In the case of $n=1$, a proof of \eqref{eq:asymptotic hat gl(1)_intro} was given in \cite{hu2023quantum}. However, the proof in \cite{hu2023quantum} used a particular choice of basis of $\mathcal H_N(1,k)$ that does not obviously generalize to arbitrary $\mathcal H_N(n,k)$. 

In this paper we will settle the conjectures \eqref{eq:limit of hilbert space_intro}, \eqref{eq:asymptotic hat sl(n)_intro}, and \eqref{eq:asymptotic hat gl(1)_intro}, following two observations: one made by Bourgine and Matsuo in \cite{bourgine2024calogero} about the level-rank duality relating $\mathcal H_{N}(n,k)$ and $\mathcal H_{kN}(kn,1)$, another one made by Gaiotto, Rapčák, and the last author in \cite{Gaiotto-Rapcek-Zhou} about the deformed double current algebra action on $\mathcal H_N(n,k)$. Although proving conjectures \eqref{eq:limit of hilbert space_intro}, \eqref{eq:asymptotic hat sl(n)_intro}, and \eqref{eq:asymptotic hat gl(1)_intro} is the motivation of this study, it is not the only goal in this paper. Along the way paving to the proof, we will also diverge to the discussions of other aspects of the Chern-Simons matrix model and its conformal limit, for example the Knizhnik-Zamolodchikov equation that ground states solve, and the Gelfand-Tsetlin bases of the Hilbert space, and relation to higher spin generalization of Calogero-Sutherland model.

\subsection{Techniques and outline}
We will review the definition of a $\mathrm{U}(N)$ Chern-Simons matrix model of level $k$ with $n$ non-relativistic matter contents in Section \ref{sec:model}. The phase space $\cM(N,n)$ is studied in \ref{sec:phase space}, and we will see that $\cM(N,n)$ has three equivalent characterizations: as a quiver variety of the Jordan quiver:
\begin{equation*}
\begin{tikzpicture}[x={(2cm,0cm)}, y={(0cm,2cm)}, baseline=0cm]
  \node[draw,circle,fill=white] (Gauge) at (0,0) {$N$};
  \node[draw,rectangle,fill=white] (Framing) at (1,0) {$n$};
  \node (Z) at (-.73,0) {\scriptsize $Z$};
 \draw[<-] (Gauge.0) -- (Framing.180) node[midway,below] {\scriptsize $\varphi$};

  \draw[->,looseness=6] (Gauge.240) to[out=210,in=150] (Gauge.120);

\end{tikzpicture}\quad ,
\end{equation*}
as a Quot scheme $\mathrm{Quot}^{N}({\mathcal O}^{\oplus n}_{{\bA}^{1}})$, and as a Beilinson-Drinfeld Schubert variety $\bGr^{\omega_1,\cdots,\omega_1}_{\GL_n,\bA^{(N)}}$.

\bigskip In Section \ref{sec:Quantization and Hilbert Space}, we study the quantization of the Chern-Simons matrix model, which produces a Hilbert space $\mathcal H_N(n,k)$. $\mathcal H_{N}(n,k)$ is the subspace of the polynomial ring $\bC[Z^\dag,\varphi^\dag]$ in variables $\{Z^{\dag i}_j,\varphi^{\dag a}_j\:|\: 1\le i,j\le N, 1\le a\le n\}$ which consists of $f\in \bC[Z^\dag,\varphi^\dag]$ that satisfies the moment map constraints:
\begin{align*}
    (\mu^i_j+k\delta^i_j)f=0, \quad \mu^i_j=Z^{\dag i}_l Z^l_j-Z^{\dag l}_j Z^i_l- \varphi^{\dag a}_j \varphi^i_a.
\end{align*}
Here $\mu^i_j$ generates infinitesimal $\GL_N$ action on $\bC[Z^\dag,\varphi^\dag]$, such that $Z^\dag$ transforms as adjoint representation and $\varphi^\dag$ transforms as dual vector representation, and $\mathcal H_{N}(n,k)$ is the space of semi-invariants $\bC[Z^\dag,\varphi^\dag]^{\GL_N,-k}$. In Proposition \ref{prop: H(N,n) = global section} we show that 
\begin{align*}
    \mathcal H_{N}(n,k)\cong \Gamma(\cM(N,n),\mathcal L_{\det}^{\otimes k}),
\end{align*}
where $\mathcal L_{\det}$ is the determinant line bundle on $\cM(N,n)$. $\mathcal L_{\det}$ is a $\GL_n\times\bC^{\times}_q$-equivariant line bundle on $\cM(N,n)$, so the equivariant Euler characteristic of $\mathcal L_{\det}^{\otimes k}$ is defined, and we denote it by $\chi_{q,\mathbf a}(\cM(N,n),\mathcal L_{\det}^{\otimes k})$, where $\mathbf a=(\mathbf a_1,\cdots,\mathbf a_n)$ (resp. $q$) are the equivariant variables of $\GL_n$ (resp. $\bC^{\times}_q$). By a standard cohomology-vanishing argument (Lemma \ref{lem: pushforward of line bundle}), $\chi_{q,\mathbf a}(\cM(N,n),\mathcal L_{\det}^{\otimes k})$ is equal to the character of zeroth cohomology of $\mathcal L_{\det}^{\otimes k}$, i.e. $\mathcal H_{N}(n,k)$. The character of $\mathcal H_{N}(n,k)$ was computed in \cite{dorey2016matrix} using contour integrals. In Section \ref{sec:Quantization and Hilbert Space} we compute $\mathcal H_{N}(n,k)$ using the identification of a convolution product operator on affine Grassmannian and the Jing operator \cite{jing1991vertex} that is used to define transformed Hall-Littlewood polynomials.

\begin{customthm}{A}[First appeared in \cite{dorey2016matrix}. Theorem \ref{thm: Hilbert series}, Proposition \ref{prop: ground states}]
The character of the Hilbert space $\mathcal H_N(n,k)$ is
\begin{align}
    \mathrm{ch}_{q,\mathbf{a}}(\mathcal H_N(n,k))=H_{(k^N)}(\mathbf{a};q)\prod_{i=1}^N\frac{1}{1-q^i}.
\end{align}
Here $H_{(k^N)}(\mathbf{a};q)$ is the transformed Hall-Littlewood polynomial associated to the partition $(k^N)$. Moreover, the character for the space of ground states $\mathcal H_N(n,k)_0$ is
\begin{align}
\ch_{q,\mathbf{a}}(\mathcal H_N(n,k)_0)=\mathfrak{A}^{kL} s_{k\varpi_r}(\mathbf a)q^{\frac{k}{2}L(L-1)n+krL},
\end{align}
where $L=\lfloor\frac{N}{n}\rfloor$, $r=N-nL$, $\mathfrak{A}=\prod_{i=1}^n \mathbf a_i$, $\varpi_r$ is the $r$-th fundamental weight of $\GL_n$, and $s_{k\varpi_r}(\mathbf a)$ is the Schur polynomial associated to the weight $k\varpi_r$.
\end{customthm}

We point out that restriction-to-torus-fixed-points map $\Gamma(\cM(N,n),\mathcal L_{\det})\to \Gamma(\cM(N,n)^T,\mathcal L_{\det})$ is an isomorphism (Proposition \ref{prop: restrict to fixed pts}). This is part of the induction procedure in the proof of \cite[Theorem 0.2.2]{zhu2009affine}, see \cite[2.1.3]{zhu2009affine}. Using this isomorphism, we prove that $\mathcal H_N(n,1)$ is isomorphic to a fermion Fock space $\mathcal F_N(n)$ (Proposition \ref{prop: fock space}). $\mathcal F_N(n)$ is explicitly presented as a wedge space $\bigwedge^N[\psi^a_m\:|\: 1\le a\le n,\: m\in \bZ_{\ge 0}]$.

\bigskip In Section \ref{sec:operators} we study  operators that arise from the phase space quantization. The quantization amounts to double the quiver:
\begin{equation*}
\begin{tikzpicture}[x={(2cm,0cm)}, y={(0cm,2cm)}, baseline=0cm]
  \node[draw,circle,fill=white] (Gauge) at (0,0) {$N$};
  \node[draw,rectangle,fill=white] (Framing) at (1,0) {$n$};
  \node (Z) at (-.73,0) {\scriptsize $Z$};
 \draw[<-] (Gauge.0) -- (Framing.180) node[midway,below] {\scriptsize $\varphi$};

  \draw[->,looseness=6] (Gauge.240) to[out=210,in=150] (Gauge.120);

\end{tikzpicture}
\qquad 
\underrightarrow{\text{\tiny double}}
\qquad
\begin{tikzpicture}[x={(2cm,0cm)}, y={(0cm,2cm)}, baseline=0cm]
  \node[draw,circle,fill=white] (Gauge) at (0,0) {$N$};
  \node[draw,rectangle,fill=white] (Framing) at (1,0) {$n$};
  \node (Zdag) at (-.5,0) {\scriptsize $X$};
  \node (Z) at (-.73,0) {\scriptsize $Y$};
 \draw[->] (Gauge.15) -- (Framing.155) node[midway,above] {\scriptsize $A$};
 \draw[<-] (Gauge.345) -- (Framing.205) node[midway,below] {\scriptsize $B$};

  \draw[<-,looseness=5] (Gauge.210) to[out=210,in=150] (Gauge.150);
  \draw[->,looseness=6] (Gauge.240) to[out=210,in=150] (Gauge.120);

\end{tikzpicture}
\end{equation*}
Here we make the change of variables: $X=Z^{\dag }, Y=Z, A=\varphi^{\dag },B= \varphi$. The right-hand-side is the well-known quiver for the instanton moduli space on $\bC^2$, whose quantization is defined by quantum Hamiltonian reduction of the Weyl algebra generated by $X,Y,A,B$. In \cite{Gaiotto-Rapcek-Zhou} it was shown that quantization of the right-hand-side is a quotient of deformed double current algebra (DDCA)\footnote{DDCA is also called the $1$-shifted affine Yangian, see \cite{rapcak2023cohomological,ishtiaque2024r,ashwinkumar2024r}. This algebra also shows up in the context of twisted holography, see \cite{costello2016m,Costello:2017fbo,Gaiotto:2019wcc,oh2021feynman,Oh:2021wes}.} \cite{guay2005cherednik,guay2007affine,guay2017deformed,etingof2023new,kalinov2021deformed,Gaiotto-Rapcek-Zhou}. 

The DDCA $\mathsf A^{(n)}$ is a $\bC[\epsilon_1,\epsilon_2]$-algebra generated by $\{\mathsf T_{p, q}(x), \mathsf t_{p,q}\:|\: x \in \gl_{n}, (p, q) \in {\bN}^{2}\}$ with relations \eqref{eqn: A0}-\eqref{eqn: A4}, see Definition \ref{def:DDCA}. According to \cite{Gaiotto-Rapcek-Zhou,hu2023quantum}, $\mathsf A^{(n)}$ acts on $\mathcal H_N(n,k)$ by the assignment of generators:
\begin{align*}
    \epsilon_1\mapsto 1,\quad\epsilon_2\mapsto k,\quad \mathsf T_{p,q}(E^a_b)\mapsto A^a\mathrm{Sym}(Y^p X^q)B_b, \quad \mathsf t_{p,q}\mapsto \mathrm{Tr}\:\mathrm{Sym}(Y^p X^q).
\end{align*}
We note that $\mathsf A^{(n)}$ contains two copies of $U(\gl_n[z])$. One is generated by $\{\mathsf T_{0, m}(x)\:|\: x \in \gl_{n}, m \in {\bN}\}$ which is identified with the infinitesimal $\GL_n[z]$ action that comes from the symmetry of $\cM(N,n)$ (see Lemma \ref{lem: gl_n[z] action}). The other one is generated by $\{\mathsf T_{m, 0}(x)\:|\: x \in \gl_{n}, m \in {\bN}\}$. There is also a Yangian subalgebra $Y(\gl_{n}) \subset \mathsf A^{(n)}$ such that $Y(\gl_{n})\ni T^{a}_{b}(u) \mapsto \delta^{a}_{b} + A^{a} \frac{1}{u - XY} B_{b}$ when it acts on ${\cal H}_{N}(n, k)$. Here $T^{a}_{b}(u)$ obeys RTT relation: $(u-v) [T^{a}_{b}(u), T^{c}_{d}(v)] = T^{c}_{b}(u) T^{a}_{d}(v) - T_{b}^{c}(v) T^{a}_{d}(u)$. We denote $T^{a}_{b}(u) = \delta^{a}_{b} + \Sigma_{n \ge 0} T^{a}_{b;n} u^{-n-1}$.

\begin{customthm}{B}[Corollary \ref{cor: cyclic}, Corollary \ref{cor:simplicity}, Corollary \ref{cor:semisimple Yangian module}]
$\mathcal H_N(n,k)$ is a simple $\mathsf A^{(n)}$ module. Moreover,
\begin{itemize}
    \item $\mathcal H_N(n,k)$ is a cyclic $U(\gl_n[z])$ module which is generated by an arbitrary nonzero ground state element $v\in \mathcal H_N(n,k)_0$, where $\gl_n[z]$ generators are given by $E^a_b\otimes z^m\mapsto \mathsf T_{0,m}(E^a_b)$,
    \item $\mathcal H_N(n,k)$ is a semisimple $Y(\gl_n)$ module, where RTT generators of $Y(\gl_n)$ are given by $T^a_b(u)\mapsto \delta^a_b+A^a\frac{1}{u-XY}B_b$.
\end{itemize}
\end{customthm}
The following level-rank relation was proposed by Bourgine and Matsuo in \cite{bourgine2024calogero}, and we give a proof using the geometric methods in this paper.

\begin{customthm}{C}[Proposition \ref{prop: level-rank map}]
Take $k,n,N\in \mathbb Z_{>0}$, then there is a graded $\gl_n[z]$-equivariant surjective map
\begin{align}\label{level-rank map_intro}
    \mathcal H_{kN}(kn,1)_{\mathfrak{sl}_k[z]}\twoheadrightarrow\mathcal H_{N}(n,k).
\end{align}
\end{customthm}
Such map is constructed geometrically: there is a closed embedding $\Quot^N(\mathcal O^{\oplus n}_{\bA^1})\hookrightarrow \Quot^{kN}(\mathcal O^{\oplus kn}_{\bA^1})$ sending a subsheaf $\mathcal F\subset \mathcal O^{\oplus n}_{\bA^1}$ to the subsheaf $\mathcal F^{\oplus k}\subset \mathcal O^{\oplus kn}_{\bA^1}$. This realizes $\cM(N,n)$ as the $\SL_k$ fixed point locus in $\cM(kN,kn)$.
\begin{remark}
We do not know if \eqref{level-rank map_intro} is isomorphism or not in general, nevertheless we can show the following:
\begin{itemize}
    \item \eqref{level-rank map_intro} is an isomorphism if $n=1$ (Corollary \ref{cor:level-rank map: n=1}) or $N=1$ (Corollary \ref{cor:level-rank map: N=1}).
    \item In general, \eqref{level-rank map_intro} becomes an isomorphism after inverting the function $\mathrm{Disc}:=\prod_{i<j}(x_i-x_j)^2$, i.e. 
\begin{align}
    \mathcal H_{kN}(kn,1)_{\mathfrak{sl}_k[z]}[\mathrm{Disc}^{-1}]\cong \mathcal H_{N}(n,k)[\mathrm{Disc}^{-1}].
\end{align}
See Theorem \ref{thm: level-rank}.
\end{itemize}
\end{remark}

\bigskip In Section \ref{sec:wave functions} we study the wave function presentation of $\mathcal H_N(n,k)$. This means an embedding
\begin{align*}
    \mathfrak{W}:\mathcal H_N(n,k)\hookrightarrow \Gamma((\mathbb P^{n-1})^{N}\times \bA^N_{\mathrm{disj}},\mathcal O(k)^{\boxtimes N})\cong \bC[\bA^N_{\mathrm{disj}}]\otimes (S^k\bC^n)^{\otimes N}.
\end{align*}
The map is defined by first restricting to the open subset $h^{-1}(\bA^{(N)}_{\mathrm{disj}})\subset \cM(N,n)$ where $\bA^{(N)}_{\mathrm{disj}}$ parametrizes $N$ disjoint points in $\bA^1$ and $h$ is the Hilbert-Chow morphism $\cM(N,n)\to \bA^{(N)}$, then followed by de-symmetrization i.e. pullback to $\bA^N_{\mathrm{disj}}$ along the covering map $\bA^N_{\mathrm{disj}}\to \bA^{(N)}_{\mathrm{disj}}$. Note that fiber of $h$ at every point in $\bA^{(N)}_{\mathrm{disj}}$ is isomorphic to $(\mathbb P^{n-1})^{N}$, and the restriction of $\mathcal L_{\det}^{\otimes k}$ to the fiber $(\bP^{n-1})^N$ is isomorphic to $\mathcal O(k)^{\boxtimes N}$. $\mathfrak{W}$ is explicitly given by sending $f(X,A)\ket{\emptyset}\in \mathcal H_N(n,k)$ to 
\begin{align}
    f(\mathrm{diag}(x_1,\cdots,x_N),(y^a_i))\in \bC[\bA^N_{\mathrm{disj}}]\otimes (S^k\bC^n)^{\otimes N},
\end{align}
where $\{x_i\}_{1\le i\le N}$ are the coordinates on $\bA^N$ and the $i$-th copy of $S^k\bC^n$ is represented by homogeneous polynomials in $\{y^1_i,\cdots,y^n_i\}$ of degree $k$.

The DDCA operators $\{\mathsf T_{p,q}(x),\mathsf t_{p,q}\}$ act on $\bC[\bA^N_{\mathrm{disj}}]\otimes (S^k\bC^n)^{\otimes N}$ as differential operators dressed with $\gl_n^{\otimes N}$. For example the following was computed in \cite{Gaiotto-Rapcek-Zhou}:
\begin{align*}
    \mathsf t_{2,0}\mapsto \sum_{i=1}^N\Delta^{-1}\partial_{i}^2\Delta-2\sum_{i<j}^N\frac{ \Omega_{ij}+k}{(x_i-x_j)^2}, \text{ where }\Delta=\prod_{i<j}(x_i-x_j),\;\Omega_{ij}=E^a_{b,i}E^b_{a,j},
\end{align*}
$E^a_{b,i}$ are the $\gl_n$ generators that act on the $i$-th $S^k\bC^n$. $\mathsf t_{2,0}$ is a higher-spin analog of Calogero-Moser Hamiltonian. In Lemma \ref{lem:Calogero-Sutherland} we compute the following higher-spin analog of Calogero-Sutherland Hamiltonian:
\begin{align}\label{eq:higher spin CS_intro}
    \mathrm{Tr}((XY)^2)\mapsto \sum_{i=1}^N\Delta^{-1}(x_i\partial_{i})^2\Delta-2\sum_{i<j}^N\frac{x_ix_j(\Omega_{ij}+k)}{(x_i-x_j)^2}-(N-1)\sum_{i=1}^Nx_i\partial_i-\frac{N(N-1)(2N-1)}{6}.
\end{align}

A particularly interesting feature of the wave function presentation is that, the ground states wave functions are expected to solve a Knizhnik-Zamolodchikov equation \cite{Dorey-Tong-Turner}. The $N=nL$ case was proved in \cite{Dorey-Tong-Turner}, and the general case was proved in \cite{bourgine2024calogero}. Both of the proofs are computational and involve careful analysis on explicit formulae of the ground states. In Section \ref{sec:wave functions} we give a new proof based on the geometric construction of conformal block \cite{frenkel2004vertex}. The proof essentially boils down to the simple fact that the KZ connection $\nabla_i$ is $\bC^{\times}_q$ equivariant of weight $-1$, so it lowers the degree of the ground states, which then must be zero. Our method does not require explicit formulae of the ground states.

\begin{customthm}{D}[Corollary \ref{cor:KZ eqn}]
For an arbitrary ground state wave function $\phi\in \mathfrak{W}(\mathcal H_N(n,k)_0)$, $\phi$ solves the Knizhnik-Zamolodchikov equation$:$
\begin{align*}
    (k+n)\partial_i\phi-\sum_{j\neq i}\frac{\Omega_{ij}+k}{x_i-x_j}\phi=0,\; \forall i\in\{1,\cdots,N\}.
\end{align*}
\end{customthm}

\bigskip Section \ref{sec:Conformal Limit, Part I} is devoted to the construction of the left-hand-side of \eqref{eq:limit of hilbert space_intro}. Our approach is based on the following simple observation: there is a natural closed embedding $\iota:\mathrm{Quot}^{N}({\mathcal O}^{\oplus n}_{{\bA}^{1}})\hookrightarrow \mathrm{Quot}^{N+n}({\mathcal O}^{\oplus n}_{{\bA}^{1}})$ sending a subsheaf $\mathcal F$ to $\mathcal F\otimes\mathcal O_{\bA^1}(-[0])$, where $[0]$ is the divisor of the point $0\in \bA^1$ (see Section \ref{subsec:transition map_geometric}). We show in Lemma \ref{lem: transition for det line bundle} that $\iota^*\mathcal L_{\det}$ is $\GL_n[z]\rtimes \bC^{\times}_q$-equivariantly isomorphic to $\mathcal L_{\det}\otimes \chi$, where $\chi$ is the character of $\GL_n[z]\rtimes \bC^{\times}_q$ that maps $(g[z],t)$ to $t^N\cdot\det g[0]$.

We show in Lemma \ref{lem: surjectivity for transition map} that the induced map $\mathcal H_{N+n}(n,k)\to \mathcal H_{N}(n,k)$ is surjective. This allows us to define a projective limit $\underset{\substack{\longleftarrow\\ L}}{\lim}\: \mathcal H_{nL+r}(n,k)$. Precise construction involves replacing $\mathcal H_{N}(n,k)$ by $\widetilde{\mathcal H}_{N}(n,k)$ (the same vector space endowed with a shifted grading) and taking projective limit degree-wise, see Definition \ref{def:conformal limit}. The resulting limit vector space will be denoted by $\widetilde{\mathcal H}^{(r)}_{\infty}(n,k)$. Although the transition map $\mathcal H_{N+n}(n,k)\to \mathcal H_N(n,k)$ is not $\gl_n[z]$-equivariant, we will cure the non-equivariance by twisting $\gl_n[z]$ action by a central character, see Definition \ref{def:p_N and sigma_N}. Then $\gl_n[z]$ acts on $\widetilde{\mathcal H}^{(r)}_{\infty}(n,k)$ by taking inverse limit of compatible $\gl_n[z]$ actions. This $\gl_n[z]$ action is actually cyclic: $\widetilde{\mathcal H}^{(r)}_{\infty}(n,k)$ is generated from any nonzero ground state by $\gl_n[z]$ action (see Proposition \ref{prop:cyclic at limit}). 

We also give an algebraic construction of transition map $p_N:\mathcal H_{N+n}(n,k)\to \mathcal H_{N}(n,k)$ which is equivalent to the aforementioned geometric one. See Section \ref{subsec:transition map_algebraic}.

It is worth noting that the Hermitian inner product on $\mathcal H_{N+n}(n,k)$ induces a natural section of $p_N$, and we denote it by $\sigma_N$ (see Definition \ref{def:p_N and sigma_N}). Using the inductive system generated by $\sigma_N: \mathcal H_{N}(n,k)\to \mathcal H_{N+n}(n,k)$, we can define a inductive limit $\underset{\substack{\longrightarrow\\L}}{\lim}\: \widetilde{\mathcal H}_{nL+r}(n,k)$, and we show in Proposition \ref{prop:ind limit=proj limit} that this inductive limit is actually isomorphic to $\widetilde{\mathcal H}^{(r)}_{\infty}(n,k)$.

When $k=1$, we can identify $\widetilde{\mathcal H}_N(n,1)$ with a shifted fermion Fock space $\widetilde{\mathcal F}_N(n)=\bigwedge^N[\psi^a_m\:|\: 1\le a\le n,\: m\in \bZ_{\ge -L}]$, where $L=\lfloor\frac{N}{n}\rfloor$. In the limit $L\to \infty$, $\widetilde{\mathcal H}^{(r)}_{\infty}(n,1)$ is isomorphic to the charge $r$ semi-infinite wedge space $\widetilde{\mathcal F}^{(r)}_\infty(n)$, which is spanned by elements of the form $\psi^{a_1}_{m_1}\wedge \psi^{a_2}_{m_2}\wedge \cdots$ with $m_j=\lfloor\frac{r-j}{n}\rfloor $ when $j\gg 0$.

\bigskip Section \ref{sec:Conformal Limit, Part II} is devoted to the proof of conjectures \eqref{eq:limit of hilbert space_intro}, \eqref{eq:asymptotic hat sl(n)_intro}, and \eqref{eq:asymptotic hat gl(1)_intro}. To prove \eqref{eq:limit of hilbert space_intro}, we need to construct the full $\widehat{\gl}(n)$ action on $\widetilde{\mathcal H}^{(r)}_{\infty}(n,k)$. In the case of $k=1$, it is known that $\widetilde{\mathcal F}^{(r)}_\infty(n)$ is isomorphic to irreducible integrable module $L_{\varpi_r}(\widehat{\gl}(n)_1)$, with $\widehat{\gl}(n)$ generators acting by $J^a_{b,m}\mapsto\sum_{\ell\ge 0} \psi^a_{\ell}\psi^*_{b,\ell+m}-\sum_{\ell<0} \psi^*_{b,\ell+m}\psi^a_{\ell}$ \cite[Lecture 9]{raina2013bombay}. For the general $k$, we construct $\widehat{\gl}(n)$ action on $\widetilde{\mathcal H}^{(r)}_{\infty}(n,k)$ using the level-rank duality, see Corollary \ref{cor:affine gl(n) action on conformal limit}. 

To prove \eqref{eq:asymptotic hat sl(n)_intro} and \eqref{eq:asymptotic hat gl(1)_intro}, we first develop a framework of conformal limit of operators in Section \ref{subsec:conformal limit of operators}. Let $\{\prescript{L}{}{\mathcal O}\in \End(\widetilde{\mathcal H}_{nL+r}(n,k))\}_{L\in \bN}$ be a collection of linear operators. We say that $$\lim_{L\to\infty}\prescript{L}{}{\mathcal O}=\prescript{\infty}{}{\mathcal O}\in \End(\widetilde{\mathcal H}^{(r)}_{\infty}(n,k))$$ if $\sigma^\infty_{nL+r}\circ \prescript{L}{}{\mathcal O}\circ p^\infty_{nL+r}$ point-wise converge to $\prescript{\infty}{}{\mathcal O}$ with respect to a fixed norm $\rVert\cdot\rVert$ on $\widetilde{\mathcal H}^{(r)}_{\infty}(n,k)$. Here $p^\infty_{nL+r}:\widetilde{\mathcal H}^{(r)}_{\infty}(n,k)\rightleftarrows \widetilde{\mathcal H}_{nL+r}(n,k): \sigma^\infty_{nL+r}$ are natural projection and section respectively (see Definition \ref{def:conformal limit}). For our purpose, we will only work with sequences of operators with uniformly bounded degree, i.e. there exists $C\in \bZ$ such that $\forall d\in \bN$ and $\forall L\in \bN$, $\prescript{L}{}{\mathcal O}(\widetilde{\mathcal H}_{nL+r}(n,k)_{d})\subset \bigoplus_{i\le d+C}\widetilde{\mathcal H}_{nL+r}(n,k)_{i}$. See Definition \ref{def:conformal limit of operators} for details. Moreover, we say that $\prescript{L}{}{\mathcal O}$ converges to $\prescript{\infty}{}{\mathcal O}$ with error term of order $O(L^{-h})$, notation $\prescript{L}{}{\mathcal O}\xrightarrow{O(L^{-h})}\prescript{\infty}{}{\mathcal O}$, if for all $v\in \widetilde{\mathcal H}_{nL+r}(n,k)$ there exists constant $C_v>0$ such that
\begin{align*}
    \lVert \prescript{\infty}{}{\mathcal O}(v)-\sigma^\infty_{nL+r}\circ \prescript{L}{}{\mathcal O}\circ p^\infty_{nL+r}(v)\rVert \le C_v L^{-h}\text{ holds for all }L.
\end{align*}
See Definition \ref{def:speed of convergence}. We note that the conformal limit of operators is additive and multiplicative (Lemma \ref{lem:linearity and multiplication}), and the conformal limit of operators that are compatible with projection maps $p_{nL+r}$ is exactly the projective limit (Proposition \ref{prop:inv lim is conf lim}).

\begin{customthm}{E}[Corollary \ref{cor:affine gl(n) action on conformal limit}, Theorem \ref{thm:conformal limit of T[m,0]}, Theorem \ref{thm:L[1]}]
The $\gl_n[z]$-action on $\widetilde{\mathcal H}^{(r)}_{\infty}(n,k)$ extends to an $\widehat{\gl}(n)_k$-action, such that $\widetilde{\mathcal H}^{(r)}_{\infty}(n,k)$ is isomorphic to irreducible integrable module $L_{k\varpi_r}(\widehat{\mathfrak{sl}}(n)_k)\otimes \mathrm{Fock}_{kr}(\widehat{\gl}(1)_{kn})$. Moreover, 
\begin{itemize}
    \item $\widehat{\gl}(n)_k$ annihilation operators $J^a_{b,m}$ $(m> 0)$ are obtained from scaling limit of $\mathsf T_{m,0}(E^a_b):$
\begin{align}\label{eq:limit of T[m,0]_intro}
    \frac{1}{(k+n)^{m}L^{m}} \prescript{L}{}{\mathsf T}_{m,0}(E^a_b)\xrightarrow{O(L^{-1})} J^a_{b,m}-\frac{\delta^a_b}{k+n}J^c_{c,m}. 
\end{align}
\item Let $\mathcal T(z)=\sum_{m\in \bZ}\mathcal L_m z^{-m-2}$ be the Sugawara's stress-operator of affine vertex algebra associated to $\widehat{\mathfrak{sl}}(n)_k\oplus \widehat{\gl}(1)_{kn}$, then $\mathcal L_{1}$ is obtained from conformal limit by
\begin{align}\label{eq:limit of t[2,1]_intro}
    \frac{1}{(k+n)L} \prescript{L}{}{\mathsf t}_{2,1}-\prescript{L}{}{\mathsf t}_{1,0}\xrightarrow{O(L^{-1})}-\mathcal L_{1}-\frac{k(n+2r)}{n(n+k)}\cdot J^a_{a,1}.
\end{align}
\end{itemize}
\end{customthm}
The proof of \eqref{eq:limit of T[m,0]_intro} and \eqref{eq:limit of t[2,1]_intro} is the most technical part of this paper. A key ingredient in the proof is the following modified Yangian RTT generators
\begin{align}\label{eq:modified Yangian_intro}
    \widetilde{T}^a_b(u)\mapsto \left[\delta^a_b+A^a\frac{1}{u+(k+n)L-XY }B_b\right]\prod_{j=1}^L \frac{u+(n+k)j}{u+k+(n+k)j},
\end{align}
and we will prove in Theorem \ref{thm:Yangian comptible with p_N and sigma_N} that $p_{nL+r}\circ \prescript{L+1}{}{\widetilde{T}}^a_b(u)=\prescript{L}{}{\widetilde{T}}^a_b(u)\circ p_{nL+r}$. Although both \eqref{eq:modified Yangian_intro} and the statement in Theorem \ref{thm:Yangian comptible with p_N and sigma_N} are explicit, we could not find an explicit proof of Theorem \ref{thm:Yangian comptible with p_N and sigma_N}. Instead, our approach is representation-theoretic, see Section \ref{subsec:Yangian action on H_N(n,k)} for details. 

We will prove in Section \ref{subsec:Comparison with DTT} that \eqref{eq:limit of T[m,0]_intro} implies \eqref{eq:asymptotic hat sl(n)_intro} and \eqref{eq:asymptotic hat gl(1)_intro}. A key step in the proof is to relate the norm on $\widetilde{\mathcal H}_N(n,k)$ to the norm on $\widetilde{\mathcal H}^{(r)}_{\infty}(n,k)$, see Lemma \ref{lem:bound of norm}.

\bigskip Finally, in Section \ref{sec:Application} we present some applications of our studies in previous sections. Namely, we characterize the Yangian simple submodules of $\widetilde{\mathcal H}^{(r)}_{\infty}(n,k)$; we show that the action of the Gelfand-Tsetlin subalgebra of $Y(\gl_n)$ on $\widetilde{\mathcal H}^{(r)}_{\infty}(n,k)$ has simple spectrum which leads to the Yangian Gelfand-Tsetlin basis of $\widetilde{\mathcal H}^{(r)}_{\infty}(n,k)$; we also compute the eigenvalues of Gelfand-Tsetlin subalgebra of $Y(\gl_n)$, in particular the quantum determinant. 

\begin{customthm}{F}[Theorem \ref{thm:spectral decomposition}, Theorem \ref{thm:Yangian irrep decomposition}]\label{thm:F}
$\prescript{\infty}{}{\widetilde{T}}^a_b(u):=\underset{\substack{\longleftarrow\\ L}}{\lim}\prescript{L}{}{\widetilde{T}}^a_b(u)$ defines an action of RTT generators of Yangian algebra $Y(\gl_n)$ on $\widetilde{\mathcal H}^{(r)}_{\infty}(n,k)$ such that the latter decomposes into simple $Y(\gl_n)$-modules:
\begin{align}
    \widetilde{\mathcal H}^{(r)}_{\infty}(n,k)=\bigoplus_{\substack{\lambda=(\lambda_1\ge \lambda_2\ge \cdots)\in \bZ^\infty\\  \lambda_{j}=-k\lfloor\frac{j-1-r}{n}\rfloor \text{ for }j\gg 0.}} \widetilde{\mathcal H}(\lambda).
\end{align}
Moreover,
\begin{itemize}
\item $\widetilde{\mathcal H}(\lambda)$ is homogeneous of degree $\sum_{j=1}^\infty\left(\lambda_j+k\lfloor\frac{j-1-r}{n}\rfloor\right)$ with respect to the shifted energy grading (see Definition \ref{def:p_N and sigma_N}).
\item The Drinfeld polynomials of $\widetilde{\mathcal H}(\lambda)$ are
\begin{align}
    P_m(u)=\prod_{\substack{\text{$(i,j)$ is a top box of}\\ \text{a height $m$ column}\\ \text{in }\mathrm{SYD}(\lambda/\lambda^\shortdownarrow)}}(u-r+i-j),\quad m=1,\cdots, n-1,
\end{align}
where $\lambda^\shortdownarrow$ is the downward shift of $\lambda$ by $k$-units (Definition \ref{def:decrease and increase}), and $\mathrm{SYD}(\lambda/\lambda^\shortdownarrow)$ is the skew Young diagram associated to $\lambda/\lambda^\shortdownarrow$ (Definition \ref{def:SYD}).
\item The eigenvector decomposition of $\widetilde{\mathcal H}(\lambda)$ with respect to quantum minors $\{\widetilde A_m(u)\}_{1\le m\le n}$ is 
\begin{align}
    \widetilde{\mathcal H}(\lambda)=\bigoplus_{\substack{\Lambda\in \mathrm{GT}^{(r)}_{\infty}(n)\\ \Lambda_1=\lambda}}V_{\Lambda},
\end{align}
where $\mathrm{GT}^{(r)}_{\infty}(n)$ is the set of semi-infinite GT patterns of height $n$ and type $r$ (Definition \ref{def:semi-inf GT pattern}).
\item The eigenvalue of $\widetilde{A}_m(u)$ on $V_{\Lambda}$ is
\begin{align}
    \prod_{i=1}^r\left(\frac{u-r-\Lambda_{m+1,i}+i}{u-r-\Lambda_{1,i}+i}\right)\times \prod_{i=r+1}^\infty\left(\frac{u-r-\Lambda_{m+1,i}+i}{u-r-\Lambda^{\mathrm{vac}}_{m+1,i}+i}\times \frac{u-r-\Lambda^{\mathrm{vac}}_{1,i}+i}{u-r-\Lambda_{1,i}+i}\right),
\end{align}
where $\Lambda^{\mathrm{vac}}_{i,j}=-k\lfloor\frac{j+i-2-r}{n}\rfloor$ for all $i,j$.
\end{itemize}
\end{customthm}

In the case $k=1$, the Chern-Simons matrix model becomes the spin Calogero model, and Theorem \ref{thm:F} recovers the result of Takemura and Uglov in \cite{takemura1997orthogonal}. Notably, the Yangian Gelfand-Tsetlin bases in the spin Calogero model can be represented by $\gl_n$ Jack polynomials \cite{uglov1998yangian}, see also \cite{bernard1993yang,bernevig2008model}. It will be interesting to explore the relation between the Yangian Gelfand-Tsetlin bases at a general level $k$ and symmetric polynomials.

Another application is the complete solution to the higher spin generalization of Calogero-Sutherland model defined in \eqref{eq:higher spin CS_intro}.

\begin{customthm}{G}[{Theorem \ref{thm:spectrum of H_CS}}]
The eigenspace decomposition of $\mathcal H_N(n,k)$ with respect to the action of $H_{\mathrm{CS}}=\mathrm{Tr}((XY)^2)$ is given by
\begin{align}
    \mathcal H_N(n,k)=\bigoplus_{\lambda} \widetilde{\mathcal H}_N(\lambda),
\end{align}
where the sum is taken for all $\lambda=(\lambda_1\ge\cdots\ge\lambda_N)\in \bN^N$ such that $\lambda$ is admissible (Definition \ref{def:admissible and cuttable}), and $\widetilde{\mathcal H}_N(\lambda)$ is defined in Definition \ref{def:H(la)}. The eigenvalue of $H_{\mathrm{CS}}$ on $\widetilde{\mathcal H}_N(\lambda)$ is
\begin{align}
    \sum_{i=1}^N (\lambda_i-k)(\lambda_i-k+N+1-2i).
\end{align}
\end{customthm}

\bigskip In Appendix \ref{sec:hall-littlewood} we briefly review the Hall-Littlewood polynomials which show up in the character formula for $\mathcal H_N(n,k)$. In Appendix \ref{sec:affine Gr} we give a crash introduction to affine Grassmannians, and also a geometric interpretation of the Jing operator which is used to define transformed Hall-Littlewood polynomials. In Appendix \ref{sec:Identities} we prove some identities in the quantized Nakajima quiver variety for the Jordan quiver.

\section{Definition of the Model}\label{sec:model}

The action of the Dorey-Tong-Turner matrix model is given in \cite{Poly,Dorey-Tong-Turner,dorey2016matrix}:
\begin{equation}\label{Action}
    S=i\int dt \left [ \mathrm{Tr}\left \{Z^{\dag}(\dot{Z}-i[\alpha,Z])+i (k+n) \alpha+i \omega (Z^{\dag}Z)\right \}+\sum_{a=1}^{n}\varphi^{\dag a} \left( \dot{\varphi}_a-i\alpha\varphi_a\right) \right ]
\end{equation}
Here $Z, \alpha$ are $N \times N$ complex matrices and $\{\varphi_a\}_{a=1}^n$ are $n$-tuples of 
$N$-dimensional vectors, $k$ is called the level of the model which is the reminiscent of noncommutativity \cite{Poly,tong2004quantum}.

The action \eqref{Action} is invariant with respect to the following gauge transformations:
\begin{align}\label{Gauge invariant}
Z \mapsto U Z U^{-1},\quad \partial_{t} - i \alpha \mapsto U (\partial_{t} - i \alpha) U^{-1},\quad \varphi_a\mapsto U\varphi_a,
\end{align}
where $ U \in \mathrm{U}(N)$ is a unitary $N\times N$ matrix. The gauge group $\mathrm{U}(N)$ is better considered as area preserving diffeomorphism symmetry over a non-commutative space \cite{Susskind}. We collect $(Z,\varphi)$ into the vector space $V(N,n):=\mathrm{End}(\mathbb C^N)\times \mathrm{Hom}(\mathbb C^n,\mathbb C^N)$, then \eqref{Gauge invariant} is the natural action of $\mathrm{U}(N)$ on $V(N,n)$. As a complex vector space, $V(N,n)$ has a standard K\"ahler structure $\omega=\frac{i}{2}\left(d Z^l_j\wedge d Z^{\dag j}_l+d\varphi_a^j\wedge d\varphi^{\dag a}_j\right)$, which is invariant with respect to the $\mathrm{U}(N)$ action. Moreover, the $\mathrm{U}(N)$ action on $V(N,n)$ is Hamiltonian with moment map
\begin{equation}\label{eq: moment map}
\begin{aligned}
\mu_{\mathbb R}: &V(N,n)\longrightarrow \mathrm{Lie}(\mathrm{U}(N))^{*},\\
 \mu_{\mathbb R}&(Z, \varphi)  = -i\left([Z, Z^\dag] + \varphi \varphi^\dag\right).
\end{aligned}
\end{equation}
We can rewrite the action \eqref{Action} using the moment map:
\begin{align}
 S=i\int dt \left [ \mathrm{Tr}\left (Z^{\dag}\dot{Z}+i \omega Z^{\dag}Z\right )+\sum_{a=1}^{n}{\varphi^{\dag}}^{a} \dot{\varphi}_a \right ]+i\int dt\: \mathrm{Tr}\left[ \alpha(\mu_{\bR}(Z,\varphi)+i(k+n)\cdot\mathbf{1}) \right]
\end{align}
The matrix model can be viewed as a Hamiltonian system with constraints which is given by variation with respect to the auxiliary field $\alpha$:
\begin{equation}\label{Classical constraints}
    \frac{\delta S}{\delta \alpha}=(k+n)\cdot{\bf 1}-i\mu_{\mathbb R}(Z,\varphi)=0.
\end{equation}
The phase space is then $\mu_{\mathbb R}^{-1}(-i(k+n)\cdot{\bf 1})/ \mathrm{U}(N)$.

Let us take $(Z,\varphi)\in \mu_{\mathbb R}^{-1}(-i(k+n)\cdot{\bf 1})$, and then we have $\mathrm{Tr}(i \mu_{\mathbb R}(Z, \varphi))=(k+n) N$. On the other hand $\mathrm{Tr}( i\mu_{\mathbb R}(Z, \varphi))=\mathrm{Tr}(\varphi \varphi^\dag)\ge 0$. Therefore in order for $\mu_{\mathbb R}^{-1}(-i(k+n)\cdot{\bf 1})$ to be nonempty, we must have $k+n\ge 0$.

In the critical case $k=-n$, the above argument shows that $\varphi=0$ and $[Z,Z^\dag]=0$. Then the $U(N)$ action on $\mu_{\mathbb R}^{-1}(-i(k+n)\cdot{\bf 1})$ is not free, and the quotient space $\mu_{\mathbb R}^{-1}(-i(k+n)\cdot{\bf 1})/U(N)$ is badly-behaved.

Therefore, we shall assume that $k>-n$ in order to get a sensible phase space. We will see shortly that after taking quantization into account, it makes sense to assume that $k\in \bZ_{>0}$ in order to get nontrivial Hilbert space.

\section{Geometry of Phase Space}\label{sec:phase space}
Define $$V(N,n)=\mathrm{End}(\mathbb C^N)\times \mathrm{Hom}(\mathbb C^n,\mathbb C^N).$$
As we have seen in the previous section, the phase space of the Chern-Simons matrix model \eqref{Action} is
\begin{align}
    \cM(N,n)=\{(Z,\varphi)\in V(N,n)\:|\: [Z,Z^{\dag}]+\varphi\varphi^{\dag}=(k+n) \cdot \mathbf{1}\}/\mathrm{U}(N).
\end{align}
Since we have assume that $k>-n$, then by the Kempf-Ness theorem \cite{kempf1979length}, more precisely by \cite[Proposition 3.1, Corollary 6.2]{king1994moduli}, $\cM(N,n)$ admits a natural complex structure which is induced from the following GIT quotient:
\begin{align}\label{eq: M(N,n)}
    \cM(N,n)\cong \{(Z,\varphi)\in V(N,n)\:|\: \mathbb C[Z]\cdot\mathrm{Im}(\varphi)=\mathbb C^N\}\sslash \mathrm{GL}_N(\mathbb C).
\end{align}
This is the quiver variety of the Jordan quiver:
\begin{equation*}
\begin{tikzpicture}[x={(2cm,0cm)}, y={(0cm,2cm)}, baseline=0cm]
  \node[draw,circle,fill=white] (Gauge) at (0,0) {$N$};
  \node[draw,rectangle,fill=white] (Framing) at (1,0) {$n$};
  \node (Z) at (-.73,0) {\scriptsize $Z$};
 \draw[<-] (Gauge.0) -- (Framing.180) node[midway,below] {\scriptsize $\varphi$};

  \draw[->,looseness=6] (Gauge.240) to[out=210,in=150] (Gauge.120);

\end{tikzpicture}\quad .
\end{equation*}
By the geometric invariant theory \cite{mumford1994geometric}, we have isomorphism
\begin{align}\label{eq: M(N,n) as GIT quotient}
    \cM(N,n)\cong \mathrm{Proj}\:\bigoplus_{m\le 0}\mathbb C[V(N,n)]^{\GL_N,m},
\end{align}
where $\mathbb C[V(N,n)]^{\GL_N,m}$ denotes the subspace of $\mathbb C[V(N,n)]$ on which $\GL_N$ acts by the multiplication of the character $g\mapsto \det(g)^m$. $\bigoplus_{m\ge 0}\mathbb C[V(N,n)]^{\GL_N,m}$ is graded since $$\mathbb C[V(N,n)]^{\GL_N,m}\cdot \mathbb C[V(N,n)]^{\GL_N,s}\subset\mathbb C[V(N,n)]^{\GL_N,m+s},$$ and this allows us to define its projective spectra.

\begin{remark}
By construction, there is a natural projective morphism
\begin{align}
    \cM(N,n)\to \Spec\:\bC[V(N,n)^{\GL_n}]=V(N,n)\sslash \GL_N.
\end{align}
We note that $\bC[V(N,n)^{\GL_n}]$ consists of $\GL_N$-invariant polynomials in $Z$ only (since the central $\mathbb C^{\times}\subset \GL_N$ acts on $\varphi$ with weight $-1$), thus $V(N,n)\sslash \GL_N\cong \gl_N\sslash\GL_N$. $\gl_N\sslash\GL_N$ is isomorphic to $\bA^N/S_N$ by the Kostant theorem \cite{kostant1963}.
\end{remark}

\begin{lemma}
$\cM(N,n)$ is a smooth connected algebraic variety of complex dimension $nN$.
\end{lemma}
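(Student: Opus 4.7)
The plan is to work with the GIT presentation \eqref{eq: M(N,n)}, since there smoothness, connectedness, and dimension can all be extracted from the geometry of the stable locus in the smooth ambient space $V(N,n)\cong \bA^{N^2+nN}$.

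First I would compute the dimension. Since $V(N,n)$ itself is affine of complex dimension $N^2+nN$, and the $\GL_N$-action has a $(N^2)$-dimensional group acting on it, once one knows the action is free on the stable locus the quotient has dimension $N^2+nN-N^2=nN$. So the whole problem reduces to proving (i) the stable locus $V(N,n)^s:=\{(Z,\varphi)\mid \bC[Z]\cdot\mathrm{Im}(\varphi)=\bC^N\}$ is a nonempty open subset, and (ii) the $\GL_N$-action on $V(N,n)^s$ is free, so that $\cM(N,n)=V(N,n)^s/\GL_N$ is a geometric quotient by a free action of a reductive group on a smooth variety, hence smooth of the expected dimension.

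For (i), openness is immediate because the condition $\bC[Z]\cdot\mathrm{Im}(\varphi)=\bC^N$ is equivalent to the non-vanishing of some $N\times N$ minors of the matrix built out of $\varphi, Z\varphi, Z^2\varphi,\ldots, Z^{N-1}\varphi$. Non-emptiness is witnessed, for example, by the pair consisting of a single Jordan block for $Z$ and a $\varphi$ whose first column is a cyclic vector. For (ii), suppose $g\in \GL_N$ fixes some $(Z,\varphi)\in V(N,n)^s$, i.e.\ $gZ=Zg$ and $g\varphi=\varphi$. Then $g$ acts trivially on $\mathrm{Im}(\varphi)$, and since $g$ commutes with $Z$ it also acts trivially on $Z^m\cdot\mathrm{Im}(\varphi)$ for every $m\ge 0$; the cyclicity condition forces $g=\mathrm{id}$. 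Connectedness is then automatic: $V(N,n)$ is an irreducible affine space, the stable locus is a nonempty Zariski open, hence also irreducible and in particular connected, and $\cM(N,n)$ is its image under the quotient map $V(N,n)^s\to V(N,n)^s/\GL_N$.

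The only mildly technical point is justifying that a free action of the reductive group $\GL_N$ on the smooth quasi-affine $V(N,n)^s$ produces a smooth geometric quotient of the expected dimension; this is standard (e.g.\ Luna's étale slice theorem, or descent of smoothness along a principal $\GL_N$-bundle, together with \cite[Proposition 3.1, Corollary 6.2]{king1994moduli} which already identifies $\cM(N,n)$ with the GIT quotient). I do not anticipate any genuine obstacle — the entire argument hinges on the single observation that cyclicity forces the stabilizer to be trivial, after which smoothness, connectedness and dimension all fall out of general principles.
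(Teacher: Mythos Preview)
Your proof is correct and follows essentially the same approach as the paper: identify $\cM(N,n)$ as the quotient of the open stable locus $R(N,n)\subset V(N,n)$ by a free $\GL_N$-action, then read off smoothness, connectedness, and dimension from general principles. The paper simply cites GIT for openness and freeness, whereas you spell out the freeness argument via cyclicity and the openness via minors—so your version is more detailed but not genuinely different.
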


\begin{proof}
By the geometric invariant theory \cite{mumford1994geometric}, the stable subset $R(N,n):=\{(Z,\varphi)\in V(N,n)\:|\: \mathbb C[Z]\cdot\mathrm{Im}(\varphi)=\mathbb C^N\}$ is open in $V(N,n)$, so $R(N,n)$ is smooth and connected. Moreover $\GL_N$ acts on $R(N,n)$ freely \cite{mumford1994geometric}, thus the quotient is smooth and connected. The dimension is computed by $\dim \cM(N,n)=\dim R(N,n)-\dim \GL_N=nN$.
\end{proof}

\begin{lemma}\label{lem: M(N,n)=Quot}
$\cM(N,n)$ is isomorphic to $\mathrm{Quot}^{N}({\mathcal O}^{\oplus n}_{{\mathbb A}^{1}})$, the Quot scheme of affine line $\mathbb A^1$ parametrizing length $N$ finite quotients of $\mathcal O_{\mathbb A^1}^{\oplus n}$. Moreover, the flavour symmetry is the $\GL_{n}$-action on $\mathcal O_{\mathbb A^1}^{\oplus n}$ and the $\bC^{\times}$-scaling of $X$ with weight $-1$ is mapped to the $\bC_q^{\times}$-rotation of the $\mathbb A^1$-plane.
\end{lemma}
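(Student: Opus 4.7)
The plan is to exhibit an explicit bijection between $\cM(N,n)$ and $\mathrm{Quot}^N(\mathcal O_{\bA^1}^{\oplus n})$ using the standard equivalence between torsion coherent sheaves on $\bA^1$ and finite-dimensional $\bC[z]$-modules. Given a stable pair $(Z,\varphi) \in R(N,n)$, I would turn $\bC^N$ into a $\bC[z]$-module $M_Z$ by declaring that $z$ acts as $Z$; this corresponds to a coherent sheaf $\mathcal F_Z$ on $\bA^1$ of length $N$ whose set-theoretic support is the spectrum of $Z$. The linear map $\varphi\colon \bC^n \to \bC^N$ extends uniquely to a $\bC[z]$-module map $\bC[z]^{\oplus n} \to M_Z$, equivalently a sheaf map $q_{(Z,\varphi)}\colon \mathcal O_{\bA^1}^{\oplus n} \to \mathcal F_Z$, and the stability condition $\bC[Z]\cdot\mathrm{Im}(\varphi) = \bC^N$ is precisely the surjectivity of $q_{(Z,\varphi)}$. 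Since conjugating $(Z,\varphi)$ by $g \in \GL_N$ only changes the basis identifying the underlying vector space of $M_Z$ with $\bC^N$, this descends to a well-defined morphism $\Phi\colon \cM(N,n) \to \mathrm{Quot}^N(\mathcal O_{\bA^1}^{\oplus n})$.

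For the inverse, starting from $q\colon \mathcal O_{\bA^1}^{\oplus n} \twoheadrightarrow \mathcal F$ with $\mathcal F$ of length $N$, I would form $H^0(\bA^1,\mathcal F)$, which is an $N$-dimensional $\bC[z]$-module; choosing a basis identifies it with $\bC^N$ and produces $Z$ as multiplication by $z$. Composing the inclusion of constant sections $\bC^n \hookrightarrow H^0(\bA^1,\mathcal O_{\bA^1}^{\oplus n}) = \bC[z]^{\oplus n}$ with $H^0(q)$ yields $\varphi\colon \bC^n \to \bC^N$, and surjectivity of $q$ translates back to $\bC[Z]\cdot\mathrm{Im}(\varphi) = \bC^N$. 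The basis ambiguity on $H^0(\bA^1,\mathcal F)$ is absorbed into the $\GL_N$-quotient, so we get a well-defined map $\Psi$ in the opposite direction. The two constructions are manifestly mutually inverse under the equivalence of categories between finite-length $\bC[z]$-modules and torsion coherent sheaves on $\bA^1$.

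To upgrade this pointwise bijection to a scheme isomorphism, I would run the same assignments in families: for any base $S$, an $S$-family $(Z_S,\varphi_S)$ of stable pairs produces a coherent sheaf on $\bA^1 \times S$, flat over $S$ with constant fibrewise length $N$, together with a surjection from $\mathcal O_{\bA^1 \times S}^{\oplus n}$, and this defines a natural transformation between the functor of points of $\cM(N,n)$ and that of $\mathrm{Quot}^N(\mathcal O_{\bA^1}^{\oplus n})$. The universal property of $\mathrm{Quot}$ supplies an inverse natural transformation via the relative version of taking global sections (pushforward along $\bA^1 \times S \to S$). Both schemes are smooth of dimension $nN$, so the induced morphism is an isomorphism.

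Finally, the compatibility of symmetries is essentially tautological once the dictionary is set up: the $\GL_n$-action on $\varphi$ by $\varphi \mapsto \varphi\, g^{-1}$ is precisely the action induced by automorphisms of $\mathcal O_{\bA^1}^{\oplus n}$, because the map $\varphi$ is the restriction of the presentation $q$ to the constant sections. For the torus action, a scaling $Z \mapsto t^{-1} Z$ amounts to replacing the module structure "$z$ acts by $Z$" with "$z$ acts by $t^{-1}Z$", which is exactly the pullback of $\mathcal F_Z$ along the rotation $z \mapsto t^{-1}z$ of $\bA^1$. I anticipate that the only delicate part of the proof is the family-over-$S$ step, where flatness of the resulting sheaf over $S$ must be verified so as to produce an honest $S$-point of $\mathrm{Quot}$; this is where writing the argument in purely scheme-theoretic rather than set-theoretic terms demands care. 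Sign/weight conventions in the $\bC^\times_q$-identification are the only other bookkeeping to watch, but they are determined uniquely by demanding that $z \in \bC[z]$ be an eigenvector of weight $1$.
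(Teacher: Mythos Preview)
Your proposal is correct and follows essentially the same approach as the paper: turn $\bC^N$ into a $\bC[z]$-module via $Z$, extend $\varphi$ to a surjective $\bC[z]$-module map, and observe that the $\GL_N$-action is exactly the change-of-basis ambiguity. The paper is terser, phrasing the conclusion as ``$q:R(N,n)\to\mathrm{Quot}^N$ is a principal $\GL_N$-bundle'' rather than writing out the inverse and the family argument, but the content is the same.
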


\begin{proof}
Consider the prequotient $R(N,n):=\{(Z,\varphi)\in \mathrm{End}(\mathbb C^N)\times \mathrm{Hom}(\mathbb C^n,\mathbb C^N)\:|\: \mathbb C[Z]\cdot\mathrm{Im}(\varphi)=\mathbb C^N\}$, it admits a natural map $q: R(N,n) \to \mathrm{Quot}^{N}({\mathcal O}^{\oplus n}_{{\mathbb A}^{1}})$, defined as follows. Consider a point $(Z,\varphi)\in R(N,n)$, the action of the matrix $Z$ on $\mathbb C^N$ makes it into a $\mathbb C[z]$-module such that $z$ acts as the matrix $Z$, and $\varphi: \mathbb C^n\to \mathbb C^N$ is equivalent to a $\mathbb C[z]$-module map $\widetilde\varphi:{\mathcal O}^{\oplus n}_{{\mathbb A}^{1}}\to \mathbb C^N$ where $\mathbb A^1=\Spec\mathbb C[z]$, and the stability condition is equivalent to the surjectivity of $\widetilde\varphi$. This defines a morphism $q:R(N,n)\to \mathrm{Quot}^{N}({\mathcal O}^{\oplus n}_{{\mathbb A}^{1}})$. Every point in $\mathrm{Quot}^{N}({\mathcal O}^{\oplus n}_{{\mathbb A}^{1}})$ is contained in the image of $q$, and $\GL_N$ acts on $R(N,n)$ by changing the basis of $\mathbb C^N$, thus $q$ is a principal $\GL_N$-bundle, whence $\mathrm{Quot}^{N}({\mathcal O}^{\oplus n}_{{\mathbb A}^{1}})\cong R(N,n)/\GL_N=\cM(N,n)$.
\end{proof}

\begin{remark}\label{rmk: Hilbert-Chow}
We note that the Hilbert-Chow morphism $h: \mathrm{Quot}^{N}({\mathcal O}^{\oplus n}_{{\mathbb A}^{1}})\to \mathbb A^{(N)}=\mathrm{Sym}^N(\mathbb A^1)$ is identified with the natural projection $\cM(N,n)\to \gl_N\sslash \GL_N\cong \mathfrak{t}/S_N\cong \mathbb A^{(N)}$, which maps a couple $(Z,\varphi)$ to the eigenvalues of $Z$.

We also note that the Hilbert-Chow morphism $h: \mathrm{Quot}^{N}({\mathcal O}^{\oplus n}_{{\mathbb A}^{1}})\to \mathbb A^{(N)}$ is proper, because we have natural isomorphism $$\mathrm{Quot}^{N}({\mathcal O}^{\oplus n}_{{\mathbb A}^{1}})\cong \mathrm{Quot}^{N}({\mathcal O}^{\oplus n}_{{\mathbb P}^{1}})\times_{\mathrm{Sym}^N(\mathbb P^1)} \mathrm{Sym}^N(\mathbb A^1),$$
where $\mathrm{Quot}^{N}({\mathcal O}^{\oplus n}_{{\mathbb P}^{1}})\to {\mathrm{Sym}^N(\mathbb P^1)}$ is the Hilbert-Chow morphism for $\mathrm{Quot}^{N}({\mathcal O}^{\oplus n}_{{\mathbb P}^{1}})$ and the morphism from $\mathrm{Sym}^N(\mathbb A^1)$ to $ \mathrm{Sym}^N(\mathbb P^1)$ is the open immersion which is induced by the natural inclusion $\mathbb A^1\hookrightarrow\mathbb P^1$. Since both $\mathrm{Quot}^{N}({\mathcal O}^{\oplus n}_{{\mathbb P}^{1}})$ and $ {\mathrm{Sym}^N(\mathbb P^1)}$ are proper, it follows that $\mathrm{Quot}^{N}({\mathcal O}^{\oplus n}_{{\mathbb P}^{1}})\to {\mathrm{Sym}^N(\mathbb P^1)}$ is proper, whence $h$ is proper.
\end{remark}

\begin{proposition}\label{prop: isom between Quot and Gr}
$\mathrm{Quot}^{N}({\mathcal O}^{\oplus n}_{{\mathbb A}^{1}})$ is isomorphic to $\overline{\Gr}^{\omega_1,\cdots,\omega_1}_{\GL_{n},\bA^{(N)}}$ (c.f. Appendix \ref{sec:affine Gr} for definition). Moreover the isomorphism is $\GL_{n}\times \bC_q^{\times}$-equivariant and commutes with projections to $\bA^{(N)}$,
\begin{center}
\begin{tikzcd}[row sep=1.2em,column sep=0.6em]
\mathrm{Quot}^{N}({\mathcal O}^{\oplus n}_{{\mathbb A}^{1}}) \arrow[dr,"h"']  \arrow[rr, "p"] & & \overline{\Gr}^{\omega_1,\cdots,\omega_1}_{\GL_{n},\bA^{(N)}} \arrow[dl,"\pi"]\\
& \bA^{(N)}  &
\end{tikzcd}
\end{center}
Here $h$ is the Hilbert-Chow map, and $\pi$ is the structure map of symmetrized Beilinson-Drinfeld Grassmannian.
\end{proposition}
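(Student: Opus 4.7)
The plan is to establish the isomorphism $p$ by matching functors of points on both sides. An $S$-point of $\mathrm{Quot}^{N}({\mathcal O}^{\oplus n}_{{\mathbb A}^{1}})$ is a short exact sequence $0 \to \mathcal{E} \to \mathcal{O}^{\oplus n}_{S \times \bA^1} \to \mathcal{Q} \to 0$ whose cokernel $\mathcal{Q}$ is $S$-flat and finite over $S$ of constant relative length $N$; the kernel $\mathcal{E}$ is then automatically locally free of rank $n$ since $\mathcal{Q}$ has finite support on the smooth curve $\bA^1$. An $S$-point of $\overline{\Gr}^{\omega_1,\cdots,\omega_1}_{\GL_{n},\bA^{(N)}}$ is a triple $(D, \mathcal{E}, \phi)$ consisting of an $S$-relative effective Cartier divisor $D \subset S \times \bA^1$ of degree $N$, a rank $n$ vector bundle $\mathcal{E}$ on $S \times \bA^1$, and an isomorphism $\phi\colon \mathcal{E}|_{(S \times \bA^1) \setminus D} \xrightarrow{\sim} \mathcal{O}^{\oplus n}|_{(S \times \bA^1) \setminus D}$, subject to the closure-of-convolution bound specified by the coweight data $(\omega_1,\ldots,\omega_1)$.

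First I would construct the forward map $\Phi$ by sending a Quot datum $(\mathcal{E} \hookrightarrow \mathcal{O}^{\oplus n} \twoheadrightarrow \mathcal{Q})$ to $(D_\mathcal{Q}, \mathcal{E}, \phi_{\mathrm{taut}})$, where $D_\mathcal{Q}$ is the Fitting support of $\mathcal{Q}$ (a relative Cartier divisor of degree $N$) and $\phi_{\mathrm{taut}}$ is the tautological trivialization of $\mathcal{E}$ off $D_\mathcal{Q}$ coming from the inclusion into $\mathcal{O}^{\oplus n}$. The key check is that this triple lies in the closed subscheme $\overline{\Gr}^{\omega_1,\cdots,\omega_1}$ cut out inside the ambient global affine Grassmannian by the convolution bound. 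Reducing to closed points and passing to the formal neighborhood of a support point $x$, the sublattice $\mathcal{E}_{\hat x} \subset \mathcal{O}^{\oplus n}_{\hat x}$ admits, by Smith normal form over the DVR $\mathcal{O}_{\hat x}$, a basis in which it is $\mathrm{diag}(z^{m_1}, \ldots, z^{m_n})$ with $m_1 \geq \cdots \geq m_n \geq 0$ and $\sum_j m_j = \mathrm{length}(\mathcal{Q}_x)$. This corresponds to a dominant coweight $\lambda_x$ satisfying $\lambda_x \leq \mathrm{length}(\mathcal{Q}_x)\cdot \omega_1$ in dominance order, which is exactly the condition to lie in the appropriate stratum of the convolution closure at the point $x$.

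For the inverse $\Psi$, given $(D, \mathcal{E}, \phi)$ in $\overline{\Gr}^{\omega_1,\cdots,\omega_1}$, the fact that the relative position at each geometric point of $D$ is dominated by a non-negative multiple of $\omega_1$ forces $\phi$ to extend across $D$ to an honest injection $\mathcal{E} \hookrightarrow \mathcal{O}^{\oplus n}_{S \times \bA^1}$, with no poles permitted. The cokernel $\mathcal{Q} = \mathrm{coker}(\phi)$ is then supported on $D$ and $S$-flat of constant length $N$; flatness is local and reduces to the direct-sum presentation $\bigoplus_{j} \mathcal{O}_S[\![z]\!]/(z^{m_j})$ in the formal neighborhood of each support point. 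The verifications $\Phi \circ \Psi = \mathrm{id}$ and $\Psi \circ \Phi = \mathrm{id}$ are then immediate from the constructions, since both compositions reproduce either the original sublattice or the original triple tautologically.

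Finally, equivariance is essentially tautological: the $\GL_{n}$-action on the framing $\mathcal{O}^{\oplus n}$ coincides with change of trivialization on the BD side, and the $\bC_q^\times$-action acts compatibly through its action on $\bA^1$. Both $h$ and $\pi$ send a point to its (Fitting) support divisor in $\bA^{(N)}$, so the triangle commutes. The main obstacle is the boundedness analysis in the second paragraph: one must identify the stratification of $\overline{\Gr}^{\omega_1,\cdots,\omega_1}$ in terms of Smith invariants of the sublattice at each support point, check that every non-negative partition of $\mathrm{length}(\mathcal{Q}_x)$ arises and lies below $\mathrm{length}(\mathcal{Q}_x) \cdot \omega_1$, and verify this in families so the morphism $\Phi$ genuinely factors through the closed BD stratum rather than some larger component of $\Gr_{\GL_n,\bA^{(N)}}$. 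Once this bookkeeping is set up (presumably referencing the conventions of Appendix \ref{sec:affine Gr}), the rest of the proof is a straightforward matching of moduli descriptions.
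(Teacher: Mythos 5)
Your approach diverges from the paper's in a way that introduces a real gap. Your construction of the forward map $\Phi$ via Fitting support and the tautological trivialization matches the paper's map $p$, and your observation that every non-negative $n$-tuple summing to $m$ is dominated by $m\omega_1$ is correct. But you then assume, both for landing in the target and for building the inverse $\Psi$, that $\overline{\Gr}^{\omega_1,\ldots,\omega_1}_{\GL_n,\bA^{(N)}}$ is characterized as the sub-ind-scheme of $\Gr_{\GL_n,\bA^{(N)}}$ whose geometric points satisfy a pointwise relative-position bound (Smith invariants $\leq m_x\omega_1$ at each $x\in D$). In the paper's Definition (Appendix B), the BD Schubert variety is defined as a \emph{Zariski closure} of the open stratum over $\bA^{(N)}\setminus\Delta$, not as such a bounded locus. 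That these two descriptions coincide is a nontrivial theorem (it underlies statements like ``BD Schubert varieties are flat over the base''), and you dismiss it as bookkeeping in your final paragraph. Without it, you cannot conclude that $\Phi$ lands in the correct closed subscheme rather than merely in a larger bounded subfunctor, and your claim that $\phi$ automatically extends to an honest injection in families is likewise unsupported: it reduces to Smith normal form over $\bC[\![z]\!]$ only at closed points, whereas over $\mathcal{O}_S[\![z]\!]$ for general base $S$ this argument fails, which is exactly where flatness and the closure characterization do the work.

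The paper avoids this issue entirely with a different strategy. It shows $p$ is a monomorphism (a quotient is determined by its kernel), then uses properness of the Hilbert--Chow map $h$ together with ind-properness of $\pi$ to conclude $p$ is proper, hence a closed immersion (a proper monomorphism is a closed immersion). It then identifies the image by computing fibers of $h$ over the disjoint locus $\bA^{(N)}\setminus\Delta$ --- which gives the correct open stratum $(\overline{\Gr}^{\omega_1}_{\GL_n})^{\times N}\times(\bA^{(N)}\setminus\Delta)$ --- and invokes connectedness of $\cM(N,n)$ to conclude the image is exactly the Zariski closure, i.e.\ the BD Schubert variety by definition. No pointwise characterization of the closure is needed. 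If you want to complete your line of argument you would have to first establish that characterization (essentially proving a version of the flatness of BD Schubert varieties), at which point the explicit inverse is more structure than is needed for the stated proposition.
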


\begin{proof}
By definition, ${\Gr}_{\GL_{n},\bA^{(N)}}$ is the moduli space of couples $(\mathcal F,D)$, here $D$ is a divisor of degree $N$ on $\mathbb A^1$ and $\mathcal F$ is a subsheaf of $j_*\mathcal O_{\mathbb A^1\setminus D}^{\oplus n}$ which is required to be locally free of rank $n$, where $j:\mathbb A^1\setminus D\hookrightarrow \mathbb A^1$ is the inclusion. We define the morphism $p: \mathrm{Quot}^{N}({\mathcal O}^{\oplus n}_{{\mathbb A}^{1}})\to {\Gr}_{\GL_{n},\bA^{(N)}}$ between functors of points. Namely for a $\mathbb C$-scheme $S$, $p$ maps a quotient $(f:{\mathcal O}^{\oplus n}_{{\mathbb A}^{1}\times S}\twoheadrightarrow \mathcal E)\in \mathrm{Quot}^{N}({\mathcal O}^{\oplus n}_{{\mathbb A}^{1}})(S)$ to a couple $(\ker(f),h(f))\in {\Gr}_{\GL_{n},\bA^{(N)}}(S)$, where $h(f)\in \mathbb A^{(N)}(S)$ is the image of $f$ under the Hilbert-Chow morphism $\mathrm{Quot}^{N}({\mathcal O}^{\oplus n}_{{\mathbb A}^{1}})\to \mathbb A^{(N)}$, i.e. $h(f)$ is the divisor on $\mathbb A^1\times S$ associated to $\mathcal E$. By construction, we have $\pi\circ p=h$, where $\pi:{\Gr}_{\GL_{n},\bA^{(N)}}\to \bA^{(N)}$ is the natural projection that maps a couple $(\mathcal F, D)$ to the divisor $D$. $p$ is a monomorphism (c.f. \cite[12.18]{gortz2010algebraic}), because a quotient $f:{\mathcal O}^{\oplus n}_{{\mathbb A}^{1}\times S}\twoheadrightarrow \mathcal E$ is uniquely determined by its kernel. Since $h$ is proper (c.f. Remark \ref{rmk: Hilbert-Chow}) and $\pi$ is ind-proper \cite{zhu2016introduction}, it follows that $p$ is proper, whence $p$ is a closed immersion by \cite[Corollary 12.92]{gortz2010algebraic}. 

It remains to show that the image of $p$ is $\overline{\Gr}^{\omega_1,\cdots,\omega_1}_{\GL_{n},\bA^{(N)}}$. In fact, let $\vec{x}:=(x_1,\cdots,x_N)\in \bA^{(N)}$ be such that $x_i\neq x_j$ whenever $i\neq j$, then $h^{-1}(\vec{x})$ is isomorphic to the moduli space of $N$-tuples $(\mathcal Q_1,\cdots,\mathcal Q_N)$ where $\mathcal Q_i$ is one dimensional quotient module of $\bC[z]$ which is supported at $x_i$. Since $\bGr^{\omega_1}_{\GL_n}$ is the moduli space of one-dimensional quotient module of $\mathbb C[\![z]\!]$, it follows that $h^{-1}(\vec{x})\cong \left(\bGr^{\omega_1}_{\GL_n}\right)^{\times N}$. Denote by $\Delta\subset \bA^{(N)}$ the divisor consisting of $(y_1,\cdots,y_N)$ such that $y_i=y_j$ for some $i\neq j$, then $h^{-1}(\bA^{(N)}\setminus \Delta)\cong \left(\bGr^{\omega_1}_{\GL_n}\right)^{\times N}\times (\bA^{(N)}\setminus \Delta)$ by the above argument. Since $\cM(N,n)$ is connected, it follows that the image of $p$ is the closure of $p(h^{-1}(\bA^{(N)}\setminus \Delta))$ in ${\Gr}_{\GL_{n},\bA^{(N)}}$. By definition, $\overline{\Gr}^{\omega_1,\cdots,\omega_1}_{\GL_{n},\bA^{(N)}}$ is the closure of $\left(\bGr^{\omega_1}_{\GL_n}\right)^{\times N}\times (\bA^{(N)}\setminus \Delta)$ in ${\Gr}_{\GL_{n},\bA^{(N)}}$, whence the image of $p$ is $\overline{\Gr}^{\omega_1,\cdots,\omega_1}_{\GL_{n},\bA^{(N)}}$.
\end{proof}

\begin{corollary}\label{cor:flatness}
The Hilbert-Chow morphism $h:\mathrm{Quot}^{N}({\mathcal O}^{\oplus n}_{{\mathbb A}^{1}})\to \bA^{(N)}$ is flat.
\end{corollary}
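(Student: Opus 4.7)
The plan is to invoke the \emph{miracle flatness} theorem: a morphism $f:X\to Y$ with $X$ Cohen--Macaulay and $Y$ regular is flat if and only if every fiber of $f$ has dimension exactly $\dim X-\dim Y$. In our setting $\cM(N,n)\cong \Quot^{N}(\mathcal O_{\bA^{1}}^{\oplus n})$ is smooth (hence Cohen--Macaulay) of dimension $nN$, and the target $\bA^{(N)}\cong \bA^N$ is smooth of dimension $N$. Thus the expected relative dimension is $N(n-1)$, and it suffices to check that every fiber of $h$ has exactly this dimension.

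To compute the fibers I would pass to the isomorphism of Proposition \ref{prop: isom between Quot and Gr}, which identifies $h$ with the structure map $\pi$ of the symmetrized Beilinson--Drinfeld Grassmannian. For a point $D=\sum_{i=1}^{s} m_i[x_i]\in\bA^{(N)}$ with pairwise distinct $x_i$ and $\sum m_i=N$, the factorization structure on $\bGr^{\omega_1,\dots,\omega_1}_{\GL_n,\bA^{(N)}}$ furnishes an isomorphism $\pi^{-1}(D)\cong \prod_{i=1}^{s} F_{m_i}$, where $F_m$ denotes the fiber at $m\cdot[0]$ of $\bGr^{\omega_1,\dots,\omega_1}_{\GL_n,\bA^{(m)}}\to \bA^{(m)}$. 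Standard facts about BD Grassmannians identify $F_m$ with the $m$-fold convolution product $\bGr^{\omega_1}_{\GL_n}\,\widetilde{\times}\,\cdots\,\widetilde{\times}\,\bGr^{\omega_1}_{\GL_n}$.

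Since $\omega_1$ is a minuscule coweight of $\GL_n$, one has $\bGr^{\omega_1}_{\GL_n}=\Gr^{\omega_1}_{\GL_n}\cong \bP^{n-1}$, which is smooth of dimension $n-1$. Therefore $F_m$ is an iterated $\bP^{n-1}$-bundle of length $m$, smooth of dimension $m(n-1)$, and hence
\[
\dim \pi^{-1}(D)=\sum_{i=1}^{s} m_i(n-1)=N(n-1),
\]
independent of $D$. Flatness of $h$ then follows from miracle flatness.

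The main technical point that needs to be handled carefully is the input about the BD Grassmannian: the factorization isomorphism across disjoint supports and the convolution description of the totally-collided fiber. Both are standard in the affine-Grassmannian literature and can be referenced precisely (for instance to the surveys already cited in the paper, such as \cite{zhu2016introduction}). Once these are in hand, the dimension count is immediate because $\omega_1$ is minuscule, so no subtle singular-fiber analysis on the Quot-scheme side is required.
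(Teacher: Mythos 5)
Your overall strategy is the same as the paper's: use the $\Quot\cong\overline{\Gr}$ identification of Proposition \ref{prop: isom between Quot and Gr} together with factorization to compute the fiber dimension, then invoke miracle flatness. The paper's proof is exactly this argument, so your plan is sound and the dimension count $N(n-1)$ is correct.

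However, there is a factual error in the middle step. You assert that the fiber $F_m$ of $\overline{\Gr}^{\omega_1,\dots,\omega_1}_{\GL_n,\bA^{(m)}}\to\bA^{(m)}$ over the totally-collided point $m\cdot[0]$ is the $m$-fold convolution product $\overline{\Gr}^{\omega_1}_{\GL_n}\,\widetilde{\times}\,\cdots\,\widetilde{\times}\,\overline{\Gr}^{\omega_1}_{\GL_n}$, an iterated $\bP^{n-1}$-bundle which is smooth. This is not correct: the collided fiber of the Beilinson--Drinfeld Schubert variety is the (affine) Schubert variety $\overline{\Gr}^{m\omega_1}_{\GL_n}$, which for $m\geq 2$ and $n\geq 2$ is \emph{singular}. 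The space you describe is the convolution (or iterated) Grassmannian; it maps onto $\overline{\Gr}^{m\omega_1}_{\GL_n}$ by the convolution morphism and provides a resolution of singularities when all $\lambda_i=\omega_1$ are minuscule, but it is not isomorphic to the BD fiber. (For instance, $\overline{\Gr}^{2\omega_1}_{\GL_2}$ is a singular surface, while $\bP^1\,\widetilde{\times}\,\bP^1$ is a smooth Hirzebruch surface.) You are saved by the fact that the convolution morphism is proper and birational, so both spaces have dimension $m(n-1)=\langle m\omega_1,2\rho\rangle$; thus your dimension count and the application of miracle flatness go through unchanged. Just replace the phrase ``iterated $\bP^{n-1}$-bundle, smooth'' with the correct identification $F_m\cong\overline{\Gr}^{m\omega_1}_{\GL_n}$ (which is irreducible of dimension $m(n-1)$), and the proof matches the paper's.
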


\begin{proof}
Let $\vec{x}=(x_1,\cdots,x_N)\in \bA^{(N)}$, then the fiber $h^{-1}(\vec{x})$ can be described as follows. Let the divisor corresponding to $\vec{x}$ be $\sum_{j=1}^m s_j\cdot [y_j]$ such that $y_j\neq y_{j'}$ whenever $j\neq j'$, then $h^{-1}(\vec{x})$ is isomorphic to $\overline{\Gr}^{s_1\omega_1}_{\GL_{n}}\times \cdots\times \overline{\Gr}^{s_m\omega_1}_{\GL_{n}}$. Therefore the fiber $h^{-1}(\vec{x})$ is irreducible of dimension $(n-1)N$. On the other hand $\mathrm{Quot}^N(\mathcal O_{\mathbb A^1}^{\oplus n})$ is smooth of dimension $nN$, thus $h:\mathrm{Quot}^N(\mathbb A^1,\mathcal O_{\mathbb A^1}^{\oplus n})\longrightarrow \mathbb A^{(N)}$ is flat by the miracle flatness theorem \cite[\href{https://stacks.math.columbia.edu/tag/00R4}{Tag 00R4}]{stacks-project}.
\end{proof}

Summarizing what we have discussed above, there are three equivalent description of the moduli space $\cM(N,n)$:
\begin{enumerate}
    \item as a GIT quotient: $\cM(N,n)\cong \left(\mathrm{End}(\mathbb C^N)\times \mathrm{Hom}(\mathbb C^n,\mathbb C^N)\right)\sslash_{k}\GL_N$,
    \item as a Quot scheme $\cM(N,n)\cong \Quot^N(\mathcal O^{\oplus n}_{\bA^1})$,
    \item as a closed subvariety $\overline{\Gr}^{\omega_1,\cdots,\omega_1}_{\GL_{n},\bA^{(N)}}$ of Beilinson-Drinfeld Grassmannian ${\Gr}_{\GL_{n},\bA^{(N)}}$.
\end{enumerate}

\subsection{Line bundles on \texorpdfstring{$\mathcal M(N,n)$}{M(N,n)}}\label{subsec:Pic(M(N,n))}

It is is known that there is a distinguished line bundle on ${\Gr}_{\GL_{n},\bA^{(N)}}$ which is denoted  by $\mathcal O(1)$ \cite{zhu2016introduction}. Roughly speaking, the fiber of $\mathcal O(1)$ at a point $(\mathcal F,D)\in {\Gr}_{\GL_{n},\bA^{(N)}}$ is given by ``$\det(\bC[z]^{\oplus n})/\det(\mathcal F)$'', we put quotation mark because neither $\det(\bC[z]^{\oplus n})$ nor $\det(\mathcal F)$ are well-defined since they are infinite dimensional. Nevertheless when restricted to subvariety $\overline{\Gr}^{\omega_1,\cdots,\omega_1}_{\GL_{n},\bA^{(N)}}$, $\mathcal F$ is a subsheaf of $\mathcal O^{\oplus n}_{\bA^1}$ with finite dimensional quotient, and the fiber $\mathcal O(1)$ at $(\mathcal F,D)\in \overline{\Gr}^{\omega_1,\cdots,\omega_1}_{\GL_{n},\bA^{(N)}}$ is given by $\det(\mathcal O^{\oplus n}_{\bA^1}/\mathcal F)$. 

Using the Quot scheme description, $p^*\mathcal O(1)$ is the determinant of $\mathcal E_{\mathrm{univ}}$, which is the universal quotient sheaf $\mathcal O_{\bA^1\times \cM(N,n)}^{\oplus n}\twoheadrightarrow \mathcal E_{\mathrm{univ}}$ on $\cM(N,n)$. We define
\begin{align}
    \mathcal L_{\det}=\det(\mathcal E_{\mathrm{univ}}).
\end{align}
In the GIT description $\mathcal E_{\mathrm{univ}}$ is the rank $N$ vector bundle $R(N,n)\times_{\GL_N}\bC^N\to \cM(N,n)$. We also note that 

\begin{lemma}
The Picard group of $\cM(N,n)$ is generated by $\mathcal L_{\det}$.
\end{lemma}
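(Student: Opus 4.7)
The plan is to use the GIT presentation $\cM(N,n) = R(N,n)/\GL_N$ from \eqref{eq: M(N,n)} together with a descent argument for Picard groups of free quotients by connected reductive groups.

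First I would establish that $\mathrm{Pic}(R(N,n)) = 0$ when $n \ge 2$ via a codimension estimate. The unstable locus $V(N,n) \setminus R(N,n)$ consists of those $(Z, \varphi)$ admitting a proper invariant subspace, so it is the union over $1 \le m \le N-1$ of the images, under the projection $\pi: V(N,n) \times \mathrm{Gr}(m, N) \to V(N,n)$, of the incidence varieties
\begin{equation*}
Z_m = \{(Z, \varphi, W) \in V(N,n) \times \mathrm{Gr}(m, N) : Z(W) \subset W \text{ and } \mathrm{Im}(\varphi) \subset W\}.
\end{equation*}
A direct count gives $\dim Z_m = m(N-m) + (N^2 - m(N-m)) + mn = N^2 + mn$, so the image in $V(N,n)$ has codimension at least $(N-m)n \ge n \ge 2$. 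Since $V(N,n)$ is an affine space, $\mathrm{Cl}(V(N,n)) = 0$, and the excision sequence for divisor class groups then yields $\mathrm{Cl}(R(N,n)) = 0$. As $R(N,n)$ is smooth, $\mathrm{Pic}(R(N,n)) = 0$. The same codimension estimate also shows that every global unit on $R(N,n)$ extends to $V(N,n)$ and hence is a nonzero constant, i.e.\ $H^0(R(N,n), \mathcal O^*) = \mathbb C^*$.

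Next I would apply the standard descent principle: for a free action of a connected reductive group $G$ on a smooth irreducible variety $X$ with $\mathrm{Pic}(X) = 0$ and $H^0(X, \mathcal O^*) = \mathbb C^*$, the map $\chi(G) \to \mathrm{Pic}(X/G)$ sending a character to the associated line bundle on $X \times_G \mathbb C$ is an isomorphism. Applying this to $X = R(N,n)$ and $G = \GL_N$, and using $\chi(\GL_N) = \mathbb Z \cdot \det$, one concludes $\mathrm{Pic}(\cM(N,n)) \cong \mathbb Z$. To identify the generator, note that the tautological rank-$N$ bundle $\mathcal E_{\mathrm{univ}} = R(N,n) \times_{\GL_N} \mathbb C^N$ is associated to the standard representation of $\GL_N$, so $\mathcal L_{\det} = \det(\mathcal E_{\mathrm{univ}})$ corresponds to the character $\det$, which generates $\chi(\GL_N)$; hence $\mathcal L_{\det}$ generates $\mathrm{Pic}(\cM(N,n))$.

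The case $n = 1$ must be handled separately since the codimension estimate only gives codim $\ge 1$: here $\cM(N, 1) \cong \mathrm{Hilb}^N(\mathbb A^1) \cong \mathrm{Sym}^N(\mathbb A^1) \cong \mathbb A^N$ is an affine space, so $\mathrm{Pic}(\cM(N, 1)) = 0$ and the statement holds trivially (with $\mathcal L_{\det}$ trivial). The main technical point is the codimension estimate in the first step, which works precisely because $n \ge 2$; the descent argument itself is routine, but it is important to verify both the vanishing of $\mathrm{Pic}(R(N,n))$ and of nonconstant units, as both feed into the exactness of the descent sequence.
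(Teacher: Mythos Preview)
Your proof is correct. The paper's argument is shorter and more uniform: it identifies $\Pic(\cM(N,n)) \cong \Pic^{\GL_N}(R(N,n))$ directly, then invokes the surjectivity of the restriction map $\Pic^{\GL_N}(V(N,n)) \twoheadrightarrow \Pic^{\GL_N}(R(N,n))$ for an equivariant open subset of a smooth variety; since $\Pic^{\GL_N}(V(N,n)) = \chi(\GL_N) = \mathbb Z \cdot \det$, this gives generation by $\mathcal L_{\det}$ with no case distinction and no codimension computation. Your route instead computes the codimension of the unstable locus explicitly to obtain $\Pic(R(N,n)) = 0$ and trivial units for $n \ge 2$, then runs the descent sequence; this forces you to treat $n=1$ separately, but has two compensating advantages: it makes the descent step fully transparent (the paper's surjectivity claim is asserted without proof), and it actually yields the stronger conclusion $\Pic(\cM(N,n)) \cong \mathbb Z$ for $n \ge 2$ rather than merely ``cyclic generated by $\mathcal L_{\det}$''.
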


\begin{proof}
Since $R(N,n)\to \cM(N,n)$ is a principal $\GL_N$-bundle, we have
\begin{align*}
    \Pic(\cM(N,n))\cong \Pic^{\GL_N}(R(N,n))
\end{align*}
where $\Pic^{\GL_N}(R(N,n))$ is the $\GL_N$-equivariant Picard group. Since $R(N,n)$ is a $\GL_N$-equivariant open subvariety of $V(N,n)$, we have surjective map $\Pic^{\GL_N}(V(N,n))\twoheadrightarrow \Pic^{\GL_N}(R(N,n))$. Since $V(N,n)$ is an affine space, every $\GL_N$-equivariant line bundle on it is isomorphic to $\mathcal O_{V(N,n)}\otimes\chi$ where $\chi$ is a character of $\GL_N$. We note that $\chi(g)=\det(g)^m$ for some $m$, therefore $\Pic^{\GL_N}(V(N,n))=\mathbb Z\cdot [\chi_1]$, where $\chi_1(g)=\det(g)$. It follows that $\Pic^{\GL_N}(R(N,n))$ is generated by $\mathcal O_{R(N,n)}\otimes\chi_1$. The image of $\mathcal O_{R(N,n)}\otimes\chi_1$ in $\Pic(\cM(N,n))$ is exactly $\mathcal L_{\det}$, thus $\Pic(\cM(N,n))$ is generated by $\mathcal L_{\det}$.
\end{proof}

\begin{lemma}\label{lem: pushforward of line bundle}
Let $m\in \bZ_{\ge 0}$, then $R^ih_*(\mathcal L_{\det}^{\otimes m})=0$ for all $i>0$, and $h_*(\mathcal L_{\det}^{\otimes m})$ is locally free on $\bA^{(N)}$.
\end{lemma}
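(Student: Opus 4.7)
Since the Hilbert-Chow morphism $h$ is proper (Remark~\ref{rmk: Hilbert-Chow}) and flat (Corollary~\ref{cor:flatness}), Grauert's theorem on cohomology and base change reduces both assertions to the fiber-wise vanishing
\begin{equation*}
    H^i\bigl(h^{-1}(\vec x),\, \mathcal{L}_{\det}^{\otimes m}|_{h^{-1}(\vec x)}\bigr) = 0 \quad \text{for all } i > 0 \text{ and every } \vec x \in \bA^{(N)}.
\end{equation*}
Once this is in hand, the Euler characteristic $\chi(\mathcal{L}_{\det}^{\otimes m}|_{h^{-1}(\vec x)})$ is locally constant on the connected base $\bA^{(N)}$, so $\dim H^0$ is fiber-wise constant; combined with the higher vanishing, this forces both local freeness of $h_* \mathcal{L}_{\det}^{\otimes m}$ and $R^i h_* \mathcal{L}_{\det}^{\otimes m} = 0$ for $i > 0$.

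The fibers and the restriction of $\mathcal{L}_{\det}$ are identified through the BD Grassmannian description of Proposition~\ref{prop: isom between Quot and Gr}. If $\vec x$ corresponds to the effective divisor $\sum_j s_j [y_j]$ with distinct $y_j$'s, the factorization property of $\Gr_{\GL_n, \bA^{(N)}}$ (already invoked in the proof of Corollary~\ref{cor:flatness}) gives
\begin{equation*}
    h^{-1}(\vec x) \cong \prod_j \bGr^{s_j \omega_1}_{\GL_n}, \qquad \mathcal{L}_{\det}^{\otimes m}\bigr|_{h^{-1}(\vec x)} \cong \boxtimes_j\, \mathcal{O}_{\bGr^{s_j\omega_1}_{\GL_n}}(m),
\end{equation*}
using that $\mathcal{L}_{\det}$ is the restriction of the standard ample generator $\mathcal{O}(1)$ on the BD Grassmannian and that $\mathcal{O}(1)$ is multiplicative under the factorization isomorphism.

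The desired vanishing on each factor $\bGr^{s\omega_1}_{\GL_n}$ is the affine analogue of the Borel-Weil-Bott theorem: affine Schubert varieties in $\Gr_{\GL_n}$ are projectively normal, Cohen-Macaulay, and Frobenius split (Faltings, Mathieu, Kumar), so for every $m \geq 0$ one has $H^i(\bGr^{s\omega_1}_{\GL_n}, \mathcal{O}(m)) = 0$ for $i > 0$, while $H^0$ is the dual of a Demazure submodule of the basic representation of $\widehat{\gl}(n)$ at level $m$. The K\"unneth formula then propagates this vanishing to the product fibers $h^{-1}(\vec x)$.

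The main technical input is the fiber-wise vanishing in the third step. For $s = 1$ the fiber factor is $\bP^{n-1}$ and the classical calculation $H^i(\bP^{n-1}, \mathcal{O}(m)) = 0$ for $i > 0$, $m \geq 0$ suffices; the substantive content lies in the singular case $s \geq 2$, where one must appeal to the full Frobenius-splitting / Demazure-module machinery on affine Schubert varieties. Everything else reduces to checking that the factorization of the BD Grassmannian is compatible with $\mathcal{O}(1)$ (and hence with $\mathcal{L}_{\det}$) and assembling standard flatness / base-change tools.
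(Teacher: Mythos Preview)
Your proof is correct and uses the same deep input (Mathieu's higher-cohomology vanishing for $\mathcal O(m)$ on affine Schubert varieties), but the reduction step differs from the paper's. The paper does not verify fiber-wise vanishing at every $\vec x\in\bA^{(N)}$; instead it observes that $R^ih_*(\mathcal L_{\det}^{\otimes m})$ is a $\bC_q^{\times}$-equivariant coherent sheaf on $\bA^{(N)}$, and that the $\bC_q^{\times}$-action on $\bA^{(N)}$ is contracting to the origin. Hence $R^ih_*(\mathcal L_{\det}^{\otimes m})$ vanishes globally as soon as it vanishes at $0$, and one only needs the vanishing on the single central fiber $h^{-1}(0)\cong\bGr^{N\omega_1}_{\GL_n}$. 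Local freeness of $h_*$ then follows from $Rh_*(\mathcal L_{\det}^{\otimes m})$ being perfect with nonnegative tor-amplitude and concentrated in degree $0$.

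Your route trades this equivariance trick for more explicit geometry: you must invoke the factorization description of every fiber, the compatibility of $\mathcal L_{\det}\cong\mathcal O(1)$ with factorization, vanishing on each $\bGr^{s\omega_1}_{\GL_n}$ separately, and K\"unneth. All of these are available (the factorizability of $\mathcal O(1)$ is used elsewhere in the paper, e.g.\ in Section~\ref{sec:wave functions}), so nothing is wrong, but the paper's argument is shorter and illustrates a standard principle: when a contracting $\mathbb G_m$-action is present, sheaf-theoretic questions over the base can often be checked at the unique fixed point.
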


\begin{proof}
Since $h$ is a proper flat morphism, $Rh_*(\mathcal L_{\det}^{\otimes m})$ is a perfect complex with positive tor-amplitudes in $D^b_{\mathrm{coh}}(\bA^{(N)})$, so we only need to show that $R^ih_*(\mathcal L_{\det}^{\otimes m})=0$ for all $i>0$. Consider the $\bC_q^{\times}$ action on $\bA^1$ by scaling with weight $-1$, then $\bC_q^{\times}$ naturally acts on $\Quot^N(\mathcal O^{\oplus n}_{\bA^1})$ and $\bA^{(N)}$, and $h$ is $\bC_q^{\times}$-equivariant. It follows that $R^ih_*(\mathcal L_{\det}^{\otimes m})$ is a $\bC_q^{\times}$-equivariant coherent sheaf on $\bA^{(N)}$. The $\bC_q^{\times}$-action on $\bA^{(N)}$ is repelling, thus to show that $R^ih_*(\mathcal L_{\det}^{\otimes m})=0$ for all $i>0$, it is enough to show that the central fiber has the cohomology vanishing property:
\begin{align*}
    H^i(h^{-1}(0),\mathcal L_{\det}^{\otimes m}|_{h^{-1}(0)})=0,\; \forall i>0.
\end{align*}
We note that $h^{-1}(0)$ is isomorphic to $\overline{\Gr}^{N\omega_1}_{\GL_n}$, and the restriction of $\mathcal L_{\det}$ to $h^{-1}(0)$ is $\mathcal{O}(1)$. It is known that $H^i(\overline{\Gr}^{N\omega_1}_{\GL_n},\mathcal O(m))$ vanishes for all $i>0$ whenever $m\ge 0$, c.f. \cite[Ch.XVIII]{mathieu1988formules}. This finishes the proof.
\end{proof}

\begin{example}
In the case when $n=1$, $\cM(N,1)\cong \Hilb^N(\bA^1)=\bA^{(N)}$ by Lemma \ref{lem: M(N,n)=Quot}. In this case $h$ is isomorphism, and $\mathcal L_{\det}$ is a trivial line bundle. Consider the $\bC_q^{\times}$ action on $\bA^1$ by scaling with weight $-1$, which induces a $\bC_q^{\times}$ action on $\bA^{(N)}$ and $\mathcal L_{\det}$. Let $\mathrm{ch}_q(V)$ be the character of a $\bC_q^{\times}$-module $V$, i.e. $\mathrm{ch}_q(V)=\sum_{n\in \mathbb Z}\dim(V_n)q^n$ where $V_n$ is the weight $n$ eigenspace of $V$. Then we have $$\mathrm{ch}_q(\Gamma(\bA^{(N)},\mathcal L_{\det}^{\otimes m}))=\mathrm{ch}_q(\bC[\bA^{(N)}])\cdot \mathrm{ch}_q(\mathcal L_{\det}^{\otimes m}|_{0}).$$ We note that $\mathcal L_{\det}|_{0}\cong \det(\mathbb C[z]/(z^N))$, so $\mathrm{ch}_q(\mathcal L_{\det}^{\otimes m}|_{0})=q^{\frac{N(N-1)}{2}m}$, thus
\begin{align}
    \mathrm{ch}_q(\Gamma(\bA^{(N)},\mathcal L_{\det}^{\otimes m}))=q^{\frac{N(N-1)}{2}m}\prod_{i=1}^N\frac{1}{1-q^i}.
\end{align}
\end{example}

\section{Quantization and Hilbert Space}\label{sec:Quantization and Hilbert Space}

Let us canonically quantize the matrix model as follows:
\begin{align}\label{eq: canonical quantization}
    [Z^i_j,Z^{\dag k}_l]=\delta^i_l\delta^k_j,\quad [\varphi^j_b,\varphi^{\dag a}_i]=\delta^j_i\delta^a_b,
\end{align}
and the (unconstrained) Hilbert space $\mathcal H_N(n)$ is generated from a distinguished vector $|0\rangle$ by the action of $Z,Z^\dag,\varphi,\varphi^\dag$ with relations:
\begin{align}
    Z|0\rangle=0,\quad \varphi|0\rangle=0.
\end{align}
\begin{notation}
For a cleaner presentation, let us introduce the following notation for the variables:
\begin{align}
    X^i_j:=Z^{\dag i}_j,\quad Y^i_j:=Z^i_j,\quad A^a_i:=\varphi^{\dag a}_i,\quad B^i_a:= \varphi^i_a.
\end{align}
\end{notation}
Then $\mathcal H_N(n)$ is the space of polynomial functions $\mathbb C[V(N,n)]$ on the affine space $V(N,n)=\mathrm{End}(\mathbb C^N)\times \mathrm{Hom}(\mathbb C^n,\mathbb C^N)$. The isomorphism is such that $f(X,A)|0\rangle$ is mapped to the function $f(X,A)\in \mathbb C[V(N,n)]$, and $Y, B$ acts on $\mathbb C[V(N,n)]$ by differential forms:
\begin{align}
    Y^i_j=\frac{\partial}{\partial X^j_i},\quad B^{i}_a=\frac{\partial}{\partial A^a_i}.
\end{align}
Then the algebra generated by $(X,Y,A,B)$ with relations \eqref{eq: canonical quantization} is isomorphic to the algebra of differential operators $D(V(N,n))$ on the affine variety $V(N,n)$. The dual of the moment map \eqref{eq: moment map} is complexified to a Lie algebra homomorphism $\mu^*_{\bC}:\gl_N\to D(V(N,n))$:
\begin{align}\label{complexified moment map}
    \mu^*_{\bC}(E^i_j)=X^i_l Y^l_j-X^l_j Y^i_l- A^a_j B^i_a,
\end{align}
where $E^i_j$ is the elementary matrix for the $i$-th row and $j$-th column. We note that the action of $\mu^*_{\bC}(E^i_j)$ on $\mathbb C[V(N,n)]$ is exactly the the action of $E^i_j$ on $\mathbb C[V(N,n)]$, i.e.
\begin{align*}
    \mu^*_{\bC}(x)(f)=x\cdot f,\quad\forall x\in \gl_N,\forall f\in \mathbb C[V(N,n)],
\end{align*}
where the latter action is obtained from the differential of the action of $\GL_N$ on $V(N,n)$.

\bigskip The physical Hilbert space is obtain from $\mathcal H_N(n)$ by imposing the constraint\footnote{Note that we have already normally ordered $(BA)^i_j$ to $A^a_jB^i_a$ in \eqref{complexified moment map}, which results in the shift of moment map value $k+n\mapsto k$.}
\begin{align}\label{eq:constraint}
    \mu^*_{\bC}(E^i_j)+k\delta^i_j=0.
\end{align}
Denote the physical Hilbert space corresponding to the Chern-Simons level $k$ by $\mathcal H_N(n,k)$, then we have
\begin{align}
    \mathcal H_N(n,k)=\mathbb C[V(N,n)]^{\GL_N,-k}.
\end{align}

\begin{lemma}[Quantization condition]\label{lem: quantization condition}
$\mathcal H_N(n,k)$ is nontrivial if and only if $k\in \mathbb Z_{\ge 0}$.
\end{lemma}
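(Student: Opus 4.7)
I would treat necessity and sufficiency of $k\in\bZ_{\ge 0}$ separately.

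\emph{Necessity.} Suppose $f\in\mathcal H_N(n,k)=\bC[V(N,n)]^{\GL_N,-k}$ is nonzero. Since $f$ is a polynomial semi-invariant and the only rational characters of $\GL_N$ are $\det^m$ with $m\in\bZ$, we must have $k\in\bZ$. To fix the sign I would trace the constraint. Setting $j=i$ in $(\mu^i_j+k\delta^i_j)f=0$ and summing over $i$, the ``$X$–$Y$'' contributions cancel after relabeling dummy indices:
\begin{align*}
\sum_{i=1}^N \mu^i_i \;=\; \sum_{i,l}\bigl(X^i_l Y^l_i - X^l_i Y^i_l\bigr) - \sum_{i,a} A^a_i B^i_a \;=\; -\sum_{i,a}A^a_i B^i_a.
\end{align*}
The right-hand side is $-\deg_A$ as a derivation on $\bC[X,A]$, so the constraint forces $\deg_A f = kN$. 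Nonnegativity of the degree of a nonzero polynomial then gives $k\ge 0$.

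\emph{Sufficiency.} For $k\in\bZ_{\ge 0}$ I would exhibit an explicit nonzero element of $\mathcal H_N(n,k)$. Writing $A_a\in\bC^{1\times N}$ for the $a$-th row of the $n\times N$ matrix $A$, form the $N\times N$ matrix
\begin{align*}
M \;=\; \begin{pmatrix} A_{a_1}X^{m_1} \\ \vdots \\ A_{a_N}X^{m_N}\end{pmatrix}
\end{align*}
with $(a_i,m_i) = \bigl(((i-1)\bmod n)+1,\ \lfloor (i-1)/n\rfloor\bigr)$, the ``vacuum filling''. Reading the induced $\GL_N$-action off $\mu^*_{\bC}$ gives $X\mapsto gXg^{-1}$ and $A\mapsto Ag^{-1}$, under which each row $A_aX^m$ transforms by right multiplication by $g^{-1}$; hence $M\mapsto Mg^{-1}$ and $\det M$ is a $\GL_N$-semi-invariant. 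Its weight is pinned down by the trace identity: since $\deg_A(\det M)=N$, the constraint identifies $\det M\in\mathcal H_N(n,1)$, and therefore $(\det M)^k\in\mathcal H_N(n,k)$ for every $k\in\bZ_{\ge 0}$.

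\emph{Main subtlety.} The only step requiring real thought is the nonvanishing of $\det M$ as a polynomial in $(X,A)$. I would deduce this from the stability condition $\bC[Z]\cdot\mathrm{Im}(\varphi)=\bC^N$ cutting out $R(N,n)$ in Section~\ref{sec:phase space} (equivalently, the surjectivity of $\widetilde\varphi$ in the Quot-scheme picture of Lemma~\ref{lem: M(N,n)=Quot}): dualizing, the family $\{A_aX^m:1\le a\le n,\ m\ge 0\}$ spans $\bC^N$ at every stable point, so the $N$ rows of $M$ picked out by the vacuum filling are linearly independent at some stable point, ensuring $\det M\not\equiv 0$. Raising to the $k$-th power then furnishes the required nonzero element, completing the proof.
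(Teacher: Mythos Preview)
Your necessity argument is correct and is simply the paper's center-of-$\GL_N$ argument unpacked at the Lie-algebra level: the trace identity $\sum_i\mu^i_i=-\deg_A$ is exactly the statement that the center acts on $A$ with weight $-1$.

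For sufficiency you take a more elementary route than the paper, which just invokes the forthcoming character formula (Theorem~\ref{thm: Hilbert series}); your element $(\det M)^k$ is precisely the ground state written later in~\eqref{eq: ground states general n}, so the strategy is sound and has the advantage of avoiding a forward reference. There is, however, a genuine gap in your nonvanishing step. Stability says that the \emph{entire} family $\{A_aX^m\}_{a,m}$ spans $\bC^N$ at each stable point; it does \emph{not} follow that the particular $N$-element vacuum-filling subfamily is independent at any such point. For instance, with $n=N=2$ the vacuum filling is $(A_1,A_2)$, yet $A_1=A_2$ a cyclic row vector for $X$ gives a stable point at which these two rows are dependent. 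So ``spans at every stable point $\Rightarrow$ vacuum filling independent at some stable point'' is a non sequitur as written.

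The fix is to exhibit a single evaluation: take the nilpotent $X$ with $Xe_j=e_{j-n}$ for $j>n$ and $Xe_j=0$ otherwise, and set $A_a=e_a^{\,T}$. Then $A_aX^m=e_{a+mn}^{\,T}$ whenever $a+mn\le N$, and since your indexing satisfies $a_i+m_in=i$, the $i$-th row of $M$ is $e_i^{\,T}$, i.e.\ $M=I$ and $\det M=1$. With this one-line computation replacing the stability appeal, your argument is complete and gives a self-contained proof of the lemma that does not rely on Theorem~\ref{thm: Hilbert series}.
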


\begin{proof}
Let $\mathbb C^{\times}\subset\GL_N$ be the center of $\GL_N$, then $\mathbb C^{\times}$ acts on $X$ trivially and it acts on $\varphi$ with weight $-1$. It follows that $\mathcal H_N(n,k)=0$ if $k<0$. Let us show that $\mathcal H_N(n,k)$ is nontrivial if $k\in \mathbb Z_{\ge 0}$. If $k=0$, then $\mathcal H_N(n,0)=\mathbb C[\End(\bC^N)]^{\GL_N}=\bC[\bA^{(N)}]$ is nontrivial. If $k>0$, then this follows from the Theorem \ref{thm: Hilbert series} below.
\end{proof}

In the rest of this paper, we impose the assumption that\footnote{As we have seen in the proof of Lemma \ref{lem: quantization condition}, $\mathcal H_N(n,0)=\bC[\bA^{(N)}]$, which is nonzero, but not interesting for our purpose.}
\begin{equation*}
\boxed{k\in \mathbb Z_{>0}.}
\end{equation*}

\begin{example}\label{ex: H_N(1,k)}
In the case when $n=1$, the following element $\ket{\Omega_k}$ 
\begin{align}\label{eq: ground state n=1}
    \ket{\Omega_k}=\left[\epsilon^{i_{1}i_{2}\cdots i_{N}}(A)_{i_{1}}(A X)_{i_{2}}(A X^2)_{i_{3}}\cdots (A X^{N-1})_{i_{N}}\right]^{k}\ket{\emptyset},
\end{align}
belongs to the physical Hilbert space $\mathcal H_N(1,k)$ \cite{Dorey-Tong-Turner}. Here $\epsilon^{i_{1}i_{2}\cdots i_{N}}$ is the Levi-Civita tensor. Since $\mathbb C[V(N,n)]$ is an integral domain, it follows that $\mathcal H_N(1,k)$ contains a subspace $\mathbb C[\bA^{(N)}]\cdot \ket{\Omega_k}$ which is isomorphic to $\mathbb C[\bA^{(N)}]$. Later we will show that $\mathcal H_N(1,k)=\mathbb C[\bA^{(N)}]\cdot \ket{\Omega_k}$.
\end{example}

\subsection{Hermitian inner product on \texorpdfstring{$\mathcal H_N(n,k)$}{Hilbert space}}\label{subsec:Hermitian inner product}

$\mathcal H_N(n)$ is the Hilbert space of $(N^2+nN)$ dimensional harmonic oscillator, it is equipped with a canonical nondegenerate Hermitian inner product $\langle\cdot|\cdot\rangle$, which is uniquely determined by the following two properties:
\begin{enumerate}
    \item let $\ket{v}=f(X,A)\ket{\emptyset}$ where $f(X,A)\in \bC[V(N,n)]$, then $\langle \emptyset \ket{v}=f(0,0)$,
    \item let $\ket{v}=f(X,A)\ket{\emptyset},\ket{w}=g(X,A)\ket{\emptyset}$ where $f(X,A),g(X,A)\in \bC[V(N,n)]$, then $\langle w\ket{v}=\langle \emptyset|(g(X,A)^\dag\ket{v})$.
\end{enumerate}
Here the Hermitian conjugate is defined by
\begin{align*}
    (X^i_j)^\dag:=Y^j_i=\frac{\partial}{\partial X^i_j},\quad (A^a_i)^\dag:=B^{i}_a=\frac{\partial}{\partial A^a_i}.
\end{align*}
Then the moment map \eqref{complexified moment map} satisfies $\mu^*_{\bC}(E^i_j)^\dag=\mu^*_{\bC}(E^j_i)$. It follows that the Hermitian inner product $\langle\cdot|\cdot\rangle$ is $\mathrm{U}(N)$-invariant, i.e. $\langle g\cdot w\ket{ g\cdot v}=\langle  w\ket{ v}$ for all $\ket{v},\ket{w}\in \mathcal H_N(n)$ and all $g\in \mathrm{U}(N)$. $\mathcal H_N(n)$ decomposes as direct sum of $\mathrm{U}(N)$ modules
\begin{align*}
    \mathcal H_N(n)=\bigoplus_{\chi} V_\chi\otimes M_\chi
\end{align*}
where $V_\chi$ is the irreducible $\mathrm{U}(N)$ module of highest weight $\chi$, and $M_\chi$ is the multiplicity space of $V_\chi$ in $\mathcal H_N(n)$. These direct summands are orthogonal to each other by $\mathrm{U}(N)$-invariance of $\langle\cdot|\cdot\rangle$, i.e.
\begin{align*}
    \langle V_\chi\otimes M_\chi\ket{V_{\chi'}\otimes M_{\chi'}}=0,\;\text{whenever }\chi\neq \chi'.
\end{align*}
This implies that for all summands $V_\chi\otimes M_\chi$ the restriction $\langle\cdot|\cdot\rangle|_{V_\chi\otimes M_\chi}$ is nondegenerate, because $\langle\cdot|\cdot\rangle$ is nondegenerate. Let us take $V_\chi$ to be the one dimensional representation that $g\in \mathrm{U}(N)$ acts by $\det(g)^{-k}$, then $V_\chi\otimes M_\chi=\mathcal H_N(n,k)$, then the restriction $\langle\cdot|\cdot\rangle|_{\mathcal H_N(n,k)}$ is nondegenerate.

\subsection{Hilbert space as global sections of line bundle on \texorpdfstring{$\mathcal M(N,n)$}{M(N,n)}}

Using the GIT description \eqref{eq: M(N,n) as GIT quotient}, there is a canonical map:
\begin{align}\label{eq: global to local}
    \mathbb C[V(N,n)]^{\GL_N,-k}\to \Gamma(\cM(N,n),\mathcal L_{\det}^{\otimes k}).
\end{align}
In fact, 
\begin{align*}
    \Gamma(\cM(N,n),\mathcal L_{\det}^{\otimes k})=\Gamma(R(N,n),q^*\mathcal L_{\det}^{\otimes k})^{\GL_N},
\end{align*}
where $q:R(N,n)\to \cM(N,n)$ is the quotient map. Notice that $q^*\mathcal L_{\det}$ is the trivial bundle with $\GL_N$-equivariant structure given by the character $g\mapsto \det(g)$, thus we have
\begin{align*}
    \Gamma(R(N,n),q^*\mathcal L_{\det}^{\otimes k})^{\GL_N}\cong \Gamma(R(N,n),\mathcal O_{R(N,n)})^{\GL_N,-k}.
\end{align*}
Then \eqref{eq: global to local} is given by the restriction of functions on $V(N,n)$ to open subset $R(N,n)$.

\begin{proposition}\label{prop: H(N,n) = global section}
The map \eqref{eq: global to local} is an isomorphism, in particular we have isomorphism
\begin{align}\label{eq: H(N,n) = global section}
    \mathcal H_N(n,k)\cong \Gamma(\cM(N,n),\mathcal L_{\det}^{\otimes k}).
\end{align}
\end{proposition}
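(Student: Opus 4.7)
The plan is to show the restriction map $\bC[V(N,n)]^{\GL_N, -k} \to \bC[R(N,n)]^{\GL_N, -k}$ is both injective and surjective, which via the identifications already set up in the paper (namely $q^*\mathcal L_{\det}\cong \mathcal O_{R(N,n)}\otimes\chi_1$ with $\chi_1 = \det$) yields the desired isomorphism. Injectivity is immediate: $V(N,n)$ is an affine space, hence integral, and $R(N,n)$ is a nonempty dense open subset, so the restriction of any polynomial to $R(N,n)$ determines it uniquely. For surjectivity I would estimate the codimension of the unstable locus $V(N,n) \setminus R(N,n)$.

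For $(Z,\varphi) \in V(N,n) \setminus R(N,n)$, set $W_0 = \bC[Z] \cdot \mathrm{Im}(\varphi)$, a proper $Z$-invariant subspace containing $\mathrm{Im}(\varphi)$. Stratifying the unstable locus by $d = \dim W_0 \in \{0, \ldots, N-1\}$ and parametrizing the stratum by triples $(Z,\varphi,W)$ with $\dim W = d$, $Z(W) \subset W$, $\varphi(\bC^n) \subset W$, a direct block-matrix count shows the corresponding locus in $V(N,n)$ has dimension at most $N^2 + nd$, so the unstable locus has codimension at least $n$ in $V(N,n)$.

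For $n \geq 2$ this codimension is at least $2$, and since $V(N,n) \cong \bA^{N^2+nN}$ is smooth (hence normal), the algebraic Hartogs principle implies the restriction $\bC[V(N,n)] \to \bC[R(N,n)]$ is already an isomorphism; surjectivity on the semi-invariant subspaces follows by restricting this isomorphism, using that extensions automatically preserve $\GL_N$-weight.

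The main obstacle is the case $n = 1$, where the unstable locus has codimension exactly $1$ and Hartogs fails. Here I would exploit the explicit semi-invariant $\omega = \det(A, AX, \ldots, AX^{N-1}) \in \bC[V(N,1)]^{\GL_N,-1}$ from Example~\ref{ex: H_N(1,k)}, whose zero locus on $V(N,1)$ is precisely the unstable locus (non-vanishing of $\omega$ is equivalent to cyclicity). Given $f \in \bC[R(N,1)]^{\GL_N,-k}$, the ratio $f/\omega^k$ is a $\GL_N$-invariant regular function on $R(N,1)$, so it descends to a regular function on the affine quotient $\cM(N,1) \cong \bA^{(N)}$ (Remark~\ref{rmk: Hilbert-Chow}). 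On the other hand, the center $\bC^\times \subset \GL_N$ acts on the $A$ variables with weight $-1$, forcing every element of $\bC[V(N,1)]^{\GL_N}$ to have $A$-degree zero; Kostant's theorem then gives $\bC[V(N,1)]^{\GL_N} = \bC[\gl_N]^{\GL_N} \cong \bC[\bA^{(N)}]$. Therefore $f/\omega^k$ lifts to some $g \in \bC[V(N,1)]^{\GL_N}$, and $g\omega^k \in \bC[V(N,1)]^{\GL_N,-k}$ restricts to $f$, finishing the proof.
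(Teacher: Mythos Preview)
Your proof is correct. For $n=1$ it is close to the paper's argument: both hinge on the cyclicity determinant $\omega$ and amount to showing $\Gamma(\cM(N,1),\mathcal L_{\det}^{\otimes k}) = \bC[\bA^{(N)}]\cdot\omega^k$; the paper concludes by matching $\bC^\times_q$-characters, whereas you divide out $\omega^k$ and lift the resulting invariant (the lift tacitly uses that the restriction $\bC[V(N,1)]^{\GL_N} \to \bC[R(N,1)]^{\GL_N}$ is onto, which holds since both sides are generated by the power sums $\mathrm{Tr}(X^m)$). For $n \geq 2$ your route is genuinely different: rather than passing to the $\SL_N$ quotient $\widetilde{\cM}(N,n) = V(N,n)\sslash\SL_N$, checking that its $\bC^\times$-fixed locus has codimension $(n-1)N+1 \geq 2$, and invoking Hartogs on that normal but singular variety, you stratify the unstable locus directly in the smooth space $V(N,n)$ by the dimension of the destabilizing invariant subspace, obtain codimension $\geq n$, and apply Hartogs upstairs. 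Your version is more elementary and sidesteps both the partial quotient and the appeal to its normality; the paper's makes the residual $\bC^\times$-GIT picture more explicit.
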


\begin{proof}
Since $R(N,n)$ is open and dense in $V(N,n)$, the map \eqref{eq: global to local} is injective by construction. It remains to show its surjectivity. We discuss the two situations $n=1$ and $n>1$ separately.

\bigskip $\bullet\; n=1$. It is shown in Example \ref{ex: H_N(1,k)} that $\mathcal H_N(1,k)\supset\mathbb C[\bA^{(N)}]\cdot \ket{\Omega_k}$ where $\ket{\Omega_k}$ is given by \eqref{eq: ground state n=1}, so we get an embedding
\begin{align}\label{embedding}
    \mathbb C[\bA^{(N)}]\cdot \ket{\Omega_k}\hookrightarrow  \Gamma(\cM(N,1),\mathcal L_{\det}^{\otimes k}).
\end{align}
Consider the $\bC_q^{\times}$ action on $\bA^1$ by scaling with weight $-1$, then $\ket{\Omega_k}$ is a $\bC_q^{\times}$ eigenvector of weight $\frac{N(N-1)}{2}k$, thus
\begin{align*}
    \mathrm{ch}_q(\mathbb C[\bA^{(N)}]\cdot \ket{\Omega_k})=q^{\frac{N(N-1)}{2}k}\prod_{i=1}^N\frac{1}{1-q^i}=\mathrm{ch}_q(\Gamma(\bA^{(N)},\mathcal L_{\det}^{\otimes k})),
\end{align*}
which implies that the embedding \eqref{embedding} must be an isomorphism.

\bigskip $\bullet\; n>1$. Consider the $\SL_N$-quotient $V(N,n)\sslash\SL_N=\Spec\:\bC[V(N,n)]^{\SL_N}$ and denote it by $\widetilde{\cM}(N,n)$, then $\bC^{\times}\cong \GL_N/\SL_N$ naturally acts on $\widetilde{\cM}(N,n)$, and we have
\begin{align}\label{eq: C[V(N,n)] in terms of tilde M}
    \mathbb C[V(N,n)]^{\GL_N,-k}=\mathbb C[\widetilde{\cM}(N,n)]^{\bC^{\times},-k}.
\end{align}
Moreover, the $\mathbb C^{\times}$-weights in $\mathbb C[\widetilde{\cM}(N,n)]$ are non-positive, thus there is an natural embedding $\iota:\bA^{(N)}\cong \Spec\:\mathbb C[\widetilde{\cM}(N,n)]^{\bC^{\times}}\hookrightarrow \widetilde{\cM}(N,n)$ such that $\mathrm{Im}(\iota)=\widetilde{\cM}(N,n)^{\bC^{\times}}$. The stable locus of $\bC^{\times}$-action on $\widetilde{\cM}(N,n)$ is the complement of fixed point set $\widetilde{\cM}(N,n)^{\bC^{\times}}$ in $\widetilde{\cM}(N,n)$, thus we have
\begin{align}\label{eq: Gamma(L^k) in terms of tilde M}
    \Gamma(\cM(N,n),\mathcal L_{\det}^{\otimes k})\cong \Gamma\left(\widetilde{\cM}(N,n)\setminus \widetilde{\cM}(N,n)^{\bC^{\times}},\mathcal O_{\widetilde{\cM}(N,n)}\right)^{\bC^{\times},-k}.
\end{align}
In view of \eqref{eq: C[V(N,n)] in terms of tilde M} and \eqref{eq: Gamma(L^k) in terms of tilde M}, the map \eqref{eq: global to local} is given by the restriction of functions on $\widetilde{\cM}(N,n)$ to the open subset $\widetilde{\cM}(N,n)\setminus \widetilde{\cM}(N,n)^{\bC^{\times}}$. We notice that $\widetilde{\cM}(N,n)$ is a normal variety (because $V(N,n)$ is smooth), and
\begin{align*}
    \dim \widetilde{\cM}(N,n)=\dim\left( \widetilde{\cM}(N,n)\setminus \widetilde{\cM}(N,n)^{\bC^{\times}}\right)=\dim {\cM}(N,n)+1=nN+1
\end{align*}
and
\begin{align*}
    \dim \widetilde{\cM}(N,n)^{\bC^{\times}}=\dim\bA^{(N)}=N,
\end{align*}
thus $\widetilde{\cM}(N,n)^{\bC^{\times}}$ has codimension at least $2$ in $\widetilde{\cM}(N,n)$. Then it follows from algebraic Hartogs theorem that the restriction map
\begin{align*}
    \Gamma\left(\widetilde{\cM}(N,n),\mathcal O_{\widetilde{\cM}(N,n)}\right)\to \Gamma\left(\widetilde{\cM}(N,n)\setminus \widetilde{\cM}(N,n)^{\bC^{\times}},\mathcal O_{\widetilde{\cM}(N,n)}\right)
\end{align*}
is an isomorphism. This finishes the proof of the proposition.
\end{proof}

\begin{definition}\label{def:energy grading}
Let us consider the $\bC_q^{\times}$ action of $V(N,n)$ by scaling the matrices $\End(\bC^{N})$ with weight $-1$ and fixes $\Hom(\bC^n,\bC^N)$, then this $\bC_q^{\times}$ action descends to $\cM(N,n)$. We note that this $\bC_q^{\times}$ action on $\cM(N,n)$ agrees with the $\bC_q^{\times}$ action on $\Quot^N(\mathcal O_{\bA^1}^{\oplus n})$ that is induced from scaling $\bA^1$ with weight $-1$. $\mathcal L_{\det}$ is a $\bC_q^{\times}$-equivariant line bundle, so $\mathcal H_N(n,k)$ has a natural $\bC_q^{\times}$-module structure. We shall call the grading on $\mathcal H_N(n,k)$ induced by this $\bC_q^{\times}$-action the \textit{energy grading}. Explicitly, the energy grading is given by setting
\begin{align}\label{eq:energy grading}
    \deg X=1,\quad \deg A=0.
\end{align}
\end{definition}

We will see in the Theorem \ref{thm: Hilbert series} that $\mathcal H_N(n,k)$ is positive under the energy grading and every weight space of $\mathcal H_N(n,k)$ is finite dimensional. 

Before we proceed to the statement, let us introduce the flavor symmetry, which is the $\GL_n$-action on $V(N,n)$ via dual vector representation on $\Hom(\bC^n,\bC^N)$ and trivial on $\End(\bC^{N})$. The $\GL_n$-action commutes with the aforementioned $\bC_q^{\times}$-action, and it descends to a $\GL_n$-action on $\cM(N,n)$, which makes $\mathcal L_{\det}$ is a $\GL_n\times \bC_q^{\times}$-equivariant line bundle, so $\mathcal H_N(n,k)$ has a natural $\GL_n\times \bC_q^{\times}$-module structure.

The canonical map \eqref{eq: global to local} is $\GL_n\times \bC_q^{\times}$-equivariant, thus \eqref{eq: H(N,n) = global section} is a $\GL_n\times \bC_q^{\times}$-equivariant isomorphism.

\subsection{Character (partition function)}

Let $P$ denote the weight lattice of $\GL_n$. $P$ is the group of characters for the maximal torus $T\cong \bC^{\times n}\subset\GL_n$, so $P\cong\bZ^n$. Then for a $\GL_n\times \bC_q^{\times}$-module $V$, it decomposes into a direct sum
\begin{align*}
    V=\bigoplus _{\substack{n\in \mathbb Z\\ \lambda\in P}} V_{n,\lambda},
\end{align*}
where $V_{n,\lambda}$ is the subspace of $V$ such that $\bC_q^{\times}$-weight is $n$ and $T$-weight is $\lambda$. We define
\begin{align}\label{def:character}
    \mathrm{ch}_{q,\mathbf{a}}(V)=\sum_{\substack{n\in \mathbb Z\\ \lambda\in P}} \dim(V_{n,\lambda})q^n \mathbf{a}^\lambda
\end{align}
where $\mathbf{a}^\lambda$ is the short-hand notation for $\mathbf{a}_1^{\lambda_1}\cdots \mathbf{a}_n^{\lambda_n}$. Here we assume that $\dim(V_{n,\lambda})<\infty$ for all pairs $(n,\lambda)$, and also assume that $\bC_q^{\times}$-weights of $V$ are bounded from below, so that we can regard $\mathrm{ch}_{q,\mathbf{a}}(V)$ as an element in $\mathbb Z[\mathbf{a}_1^{\pm},\cdots,\mathbf{a}_n^{\pm}]^{S_n}[\![q]\!]$. The following result first appears in \cite{dorey2016matrix}, and we provide a geometric proof in this paper.

\begin{theorem}\label{thm: Hilbert series}
As an element in $\mathbb Z[\mathbf{a}_1^{\pm},\cdots,\mathbf{a}_n^{\pm}]^{S_n}[\![q]\!]$,
\begin{align}\label{eq: character of Hilbert space}
    \mathrm{ch}_{q,\mathbf{a}}(\mathcal H_N(n,k))=H_{(k^N)}(\mathbf{a};q)\prod_{i=1}^N\frac{1}{1-q^i}.
\end{align}
Here $H_{(k^N)}(\mathbf{a};q)$ is the transformed Hall-Littlewood polynomial\footnote{See Appendix \ref{sec:hall-littlewood} for a review of (transformed) Hall-Littlewood polynomial.} associated to the partition $(k^N)$.
\end{theorem}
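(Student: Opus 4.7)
The plan is to factor the character through the Hilbert--Chow morphism and then identify the remaining affine-Grassmannian contribution with an iterate of a Jing vertex operator. I would carry out three steps.

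First, by Proposition \ref{prop: H(N,n) = global section} there is a $\GL_n\times \bC^{\times}_q$-equivariant isomorphism $\mathcal H_N(n,k)\cong \Gamma(\cM(N,n),\mathcal L_{\det}^{\otimes k})$, and by Lemma \ref{lem: pushforward of line bundle} all higher derived pushforwards of $\mathcal L_{\det}^{\otimes k}$ along $h:\cM(N,n)\to \bA^{(N)}$ vanish and $h_*\mathcal L_{\det}^{\otimes k}$ is locally free. Since $\bA^{(N)}$ is affine and the $\bC^{\times}_q$-action on it is repelling to the origin, any $\bC^{\times}_q$-equivariant locally free sheaf of finite rank with non-negative weights is equivariantly trivialized by its fiber at the origin. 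Combined with proper base change (available thanks to the cohomology vanishing) this yields
\begin{equation*}
\mathrm{ch}_{q,\mathbf{a}}(\mathcal H_N(n,k))=\mathrm{ch}_q(\bC[\bA^{(N)}])\cdot \mathrm{ch}_{q,\mathbf{a}}\!\left(\Gamma(h^{-1}(0),\mathcal L_{\det}^{\otimes k}|_{h^{-1}(0)})\right)=\prod_{i=1}^N\frac{1}{1-q^i}\cdot \mathrm{ch}_{q,\mathbf{a}}\!\left(\Gamma(\overline{\Gr}^{N\omega_1}_{\GL_n},\mathcal O(k))\right),
\end{equation*}
where the central fiber and the restriction of $\mathcal L_{\det}$ are identified via Proposition \ref{prop: isom between Quot and Gr} and the proof of Lemma \ref{lem: pushforward of line bundle}. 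The task reduces to proving $\mathrm{ch}_{q,\mathbf{a}}(\Gamma(\overline{\Gr}^{N\omega_1}_{\GL_n},\mathcal O(k)))=H_{(k^N)}(\mathbf{a};q)$.

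Second, I would compute this affine-Grassmannian character using the iterated convolution morphism $m:\overline{\Gr}^{\omega_1}\widetilde{\times}\cdots\widetilde{\times}\overline{\Gr}^{\omega_1}\twoheadrightarrow \overline{\Gr}^{N\omega_1}_{\GL_n}$ ($N$ factors). This is a birational Bott--Samelson style resolution, and standard cohomology vanishing for the determinant bundle in the convolution direction gives $Rm_*m^*\mathcal O(k)=\mathcal O(k)$, while $m^*\mathcal O(k)$ factors as an external twisted product along the convolution fibration. Since each fiber of the convolution fibration is isomorphic to $\overline{\Gr}^{\omega_1}_{\GL_n}\cong \bP^{n-1}$, with $\Gamma(\bP^{n-1},\mathcal O(k))\cong S^k\bC^n$, the global section character can be computed recursively as $B_k^N\cdot 1$, where $B_k$ is the single-step operator ``push along a $\bP^{n-1}$-factor after twisting by $\mathcal O(k)$'', viewed as an endomorphism of the ring of symmetric functions in $\mathbf{a}$ with $q$-scalars.

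Third, I would identify $B_k$ with the Jing vertex operator, whose iterates applied to $1$ define the transformed Hall--Littlewood polynomials \cite{jing1991vertex}, whence $B_k^N\cdot 1=H_{(k^N)}(\mathbf{a};q)$. The $\mathbf{a}$-dependence of $B_k$ arises from the $\GL_n$-character of $S^k\bC^n$ on each convolution fiber (reproducing the plethystic substitution in Jing's formula), and the $q$-dependence arises from keeping track of the loop-rotation $\bC^{\times}_q$-weights along the convolution. The hard part of the proof, and the main obstacle, is precisely this identification: matching the geometric convolution operator $B_k$ with the explicit power-sum/Heisenberg expression of Jing's operator, including the exact exponents of $q$ produced by the loop-rotation grading. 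The other steps are essentially bookkeeping on top of Lemma \ref{lem: pushforward of line bundle}, Proposition \ref{prop: isom between Quot and Gr}, and well-known geometry of Bott--Samelson resolutions; the comparison with Jing's operator is the genuinely new ingredient, which is what the geometric interpretation promised in Appendix \ref{sec:affine Gr} is designed to supply.
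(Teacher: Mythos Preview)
Your proposal is correct and follows essentially the same route as the paper. The factorization through the Hilbert--Chow morphism is the paper's argument (the paper phrases the first step via the $\bC^{\times}_q$-equivariant localization formula on $\bA^{(N)}$ rather than via graded freeness of $h_*\mathcal L_{\det}^{\otimes k}$, but these are equivalent here), and your Steps~2--3 are exactly the contents of Appendix~\ref{sec:affine Gr}: the iterated convolution resolution $m$, the identities $m^*\mathcal O(1)\cong\mathcal O(1)\widetilde{\boxtimes}\cdots\widetilde{\boxtimes}\mathcal O(1)$ and $Rm_*\mathcal O\cong\mathcal O$ (rationality of singularities of $\overline{\Gr}^{N\omega_1}_{\GL_n}$), and the identification of the single-step operator with Jing's $S^q_k$ via Atiyah--Bott localization on $\Gr^{\omega_1}\cong\bP^{n-1}$ (Proposition~\ref{prop:geometric Jing operator}, Corollaries~\ref{corollary: character of convolution} and~\ref{corollary: character of O(k)}).
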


\begin{proof}
According to Proposition \ref{prop: H(N,n) = global section}, we have $\mathrm{ch}_{q,\mathbf{a}}(\mathcal H_N(n,k))=\mathrm{ch}_{q,\mathbf{a}}(\Gamma(\cM(N,n),\mathcal L_{\det}^{\otimes k}))$. By the Lemma \ref{lem: pushforward of line bundle}, $H^i(\cM(N,n),\mathcal L_{\det}^{\otimes k})$ vanishes for $i>0$, so we have
\begin{align*}
    \mathrm{ch}_{q,\mathbf{a}}(\mathcal H_N(n,k))=\chi_{q,\mathbf{a}}(\cM(N,n),\mathcal L_{\det}^{\otimes k}),
\end{align*}
where the right-hand-side is the $\GL_n\times \bC_q^{\times}$-equivariant Euler characteristic. Apply the $\bC_q^{\times}$-equivariant localization formula to the sheaf $h_*\mathcal L_{\det}^{\otimes k}$ on $\bA^{(N)}$, we obtain
\begin{align*}
    \chi_{q,\mathbf{a}}(\cM(N,n),\mathcal L_{\det}^{\otimes k})=\chi_{q,\mathbf{a}}(\bA^{(N)},Rh_*\mathcal L_{\det}^{\otimes k})=\frac{\chi_{q,\mathbf{a}}(h^{-1}(0),\mathcal L_{\det}^{\otimes k}|_{h^{-1}(0)})}{\mathrm{ch}_{q,\mathbf{a}}(\bigwedge^{\bullet} T_0^*\bA^{(N)})}.
\end{align*}
Here $T_0^*\bA^{(N)}$ is the cotangent space of $\bA^{(N)}$ at $0$, and $\mathrm{ch}_{q,\mathbf{a}}(T_0^*\bA^{(N)})=q+q^2+\cdots+q^N$. For the numerator, we notice that $h^{-1}(0)\cong \overline{\Gr}^{N \omega_1}_{\GL_N}$ and $\mathcal L_{\det}|_{h^{-1}(0)}\cong\mathcal O(1)$, and it is proven in \cite[Corollary B.3]{moosavian2021towards} that
\begin{align}
    \chi_{q,\mathbf{a}}(\overline{\Gr}^{N \omega_1}_{\GL_n},\mathcal O(k))=H_{(k^N)}(\mathbf{a};q),
\end{align}
whence \eqref{eq: character of Hilbert space} follows. We summarize the idea of the proof of \cite[Corollary B.3]{moosavian2021towards} in Appendix \ref{subsec:Geometric description of Jing operator}.
\end{proof}

\subsection{Ground states}

\begin{definition}
The subspace of $\mathcal H_N(n,k)$ which has the lowest $\bC_q^{\times}$-weights is called the space of ground states, denoted by $\mathcal H_N(n,k)_0$.
\end{definition}

It is shown in \cite{Dorey-Tong-Turner} that $\mathcal H_N(n,k)_0$ is spanned by elements of the form 
\begin{align}\label{eq: ground states general n}
    \ket{a_1,\cdots,a_N}=\left[\epsilon^{i_{1}i_{2}\cdots i_{N}}\prod_{j=1}^{N}(A^{a_{j}} X^{\lfloor\frac{j-1}{n}\rfloor})_{i_{j}}\right]^{k}\ket{\emptyset}.
\end{align}
In this subsection we give a geometric description of $\mathcal H_N(n,k)_0$.

\begin{proposition}\label{prop: ground states}
Let $L=\lfloor\frac{N}{n}\rfloor$ and $r=N-nL$, then 
\begin{align}\label{leading term}
    \ch_{q,\mathbf{a}}(\mathcal H_N(n,k)_0)=\mathfrak{A}^{kL} s_{k\varpi_r}(\mathbf a)q^{\frac{k}{2}L(L-1)n+krL}.
\end{align}
Here $\varpi_r$ is the $r$-th fundamental weight of $\GL_n$ and $s_{k\varpi_r}(\mathbf a)$ is the Schur polynomial associated to the weight $k\varpi_r$, and $\mathfrak{A}:=\prod_{i=1}^n \mathbf a_i$.
\end{proposition}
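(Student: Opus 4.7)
The plan is to extract the $q$-minimum part of the character formula from Theorem~\ref{thm: Hilbert series}. By that theorem,
\begin{align*}
    \ch_{q,\mathbf{a}}(\mathcal H_N(n,k)) = H_{(k^N)}(\mathbf{a}; q) \prod_{i=1}^N \frac{1}{1-q^i}.
\end{align*}
The factor $\prod_{i=1}^N (1-q^i)^{-1}$ has constant term $1$ in $q$ (with non-negative higher coefficients), so the $q$-minimum piece of $\ch_{q,\mathbf{a}}(\mathcal H_N(n,k))$ equals the $q$-minimum term of $H_{(k^N)}(\mathbf{a}; q)$, and this is precisely $\ch_{q,\mathbf{a}}(\mathcal H_N(n,k)_0)$. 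From the proof of Theorem~\ref{thm: Hilbert series} (via \cite[Corollary B.3]{moosavian2021towards}), one has $H_{(k^N)}(\mathbf{a};q) = \ch_{q,\mathbf{a}}(\Gamma(\overline{\Gr}^{N\omega_1}_{\GL_n}, \mathcal O(k)))$. Thus the task reduces to identifying the lowest-$\bC_q^{\times}$-weight subspace of this section space as a $\GL_n$-module.

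I would carry this out by combining an explicit construction with a dimension matching from Borel--Weil. Set $\ket{v} := \ket{a_1, \ldots, a_N}$ via \eqref{eq: ground states general n} with $a_j := ((j-1) \bmod n) + 1$: the first $L$ blocks each list $A^1,A^2,\ldots,A^n$ in order, and the last block of length $r$ lists $A^1,\ldots,A^r$. A direct calculation gives $\bC_q^{\times}$-weight $k\sum_{j=1}^N \lfloor\frac{j-1}{n}\rfloor = \frac{k}{2}L(L-1)n + krL$ and $\gl_n$-weight $k(L(1,\ldots,1) + \varpi_r)$. I verify $\ket{v}$ is annihilated by every raising operator $E^a_b$ with $a<b$: as a derivation, $E^a_b$ substitutes some factor $A^b$ by $A^a$, but since both $A^a$ and $A^b$ already appear in every block containing $A^b$ (the full blocks contain all of $A^1,\ldots,A^n$, and if $b\le r$ the last block also contains $A^a$ because $a<b\le r$), the resulting expression has two identical columns inside the determinant and is killed by the antisymmetrizing $\epsilon$-tensor (the chain rule factor $k[\,\cdot\,]^{k-1}$ preserves this vanishing). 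Hence $\ket{v}$ generates a $\gl_n$-submodule $U \subset \mathcal H_N(n,k)_0$ isomorphic to $V_{k(L(1,\ldots,1)+\varpi_r)}$, whose character is exactly $\mathfrak{A}^{kL} s_{k\varpi_r}(\mathbf{a})$.

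To conclude $U = \mathcal H_N(n,k)_0$, one appeals to Borel--Weil on the affine Grassmannian: $\Gamma(\overline{\Gr}^{N\omega_1}_{\GL_n}, \mathcal O(k))^*$ is a Demazure submodule of the basic level-$k$ integrable $\widehat{\gl}(n)$-module, and standard Demazure theory forces its $\gl_n$-top (which is dual to the lowest-$\bC_q^{\times}$-weight subspace of the sections) to be a single irreducible $\gl_n$-representation; this irreducible summand must then coincide with the $V_{k(L(1,\ldots,1)+\varpi_r)}$ we built. The main obstacle is precisely this last step---rigorously ruling out any other $\gl_n$-irreducible contribution at the minimal $q$-degree. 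An alternative route avoiding affine Lie theory is a direct combinatorial argument via the charge statistic of Lascoux--Sch\"utzenberger: show that among partitions $\mu$ with at most $n$ parts and $|\mu|=kN$, the Kostka--Foulkes polynomial $K_{\mu,(k^N)}(q)$ attains the minimum $q$-degree $\frac{k}{2}L(L-1)n+krL$ only for $\mu = kL(1,\ldots,1)+k\varpi_r$, with leading coefficient $1$.
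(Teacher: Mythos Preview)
Your explicit construction of the highest-weight vector and the reduction to the $q$-minimum of $H_{(k^N)}(\mathbf{a};q)$ are both correct, and the verification that $E^a_b\ket{v}=0$ for $a<b$ is clean. But as you yourself flag, the argument is incomplete: you have produced an irreducible summand $U\subseteq \mathcal H_N(n,k)_0$ of the desired shape but have not ruled out other irreducibles at the same $q$-degree. Deferring to ``Demazure theory'' or to a charge-statistic minimization over $K_{\mu,(k^N)}(q)$ would work in principle, but neither is carried out, and each imports a separate body of results.

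The paper closes this gap by a direct $\bC_q^{\times}$-localization argument, bypassing any explicit vector. The $\bC_q^{\times}$-fixed locus of the central fiber $\bGr^{N\omega_1}_{\GL_n}\subset\cM(N,n)$ breaks into $\GL_n$-orbits $F_\mu=\GL_n\cdot\{z^\mu\}$ for partitions $\mu\vdash N$ with $\ell(\mu)\le n$, and the fiber of $\mathcal O(1)$ at $z^\mu$ has $q$-weight $E(\mu)=\tfrac12\sum_i\mu_i(\mu_i-1)=\tfrac12((\mu,\mu)-N)$. Writing any such $\mu$ as $\lambda+\beta$ with $\lambda=\varpi_r+L\varpi_n$ and $\beta$ a non-negative sum of simple coroots, dominance of $\lambda$ gives $(\mu,\mu)\ge(\lambda,\lambda)$ with equality iff $\beta=0$; so the minimum is attained uniquely at $F_\lambda$. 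A tangent-space computation at $z^\lambda$ (namely $\Hom_{\mathcal O_{\bA^1}}(\mathcal F_\lambda,\mathcal O^{\oplus n}/\mathcal F_\lambda)$ has no positive $\bC_q^{\times}$-weights) shows there are no contracting directions, hence no lower-order corrections from other components in the localization sum. The leading term is therefore exactly $q^{kE(\lambda)}\chi_{q,\mathbf a}(F_\lambda,\mathcal L_{k\lambda})$, which finite-dimensional Borel--Weil--Bott evaluates to $q^{E_0}\mathfrak A^{kL}s_{k\varpi_r}(\mathbf a)$. This geometric route proves irreducibility of $\mathcal H_N(n,k)_0$ and identifies it simultaneously, whereas your route would still need one of the external inputs you list.
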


\begin{proof}
The idea is to use the $\bC_q^{\times}$-localization on $\cM(N,n)$ to analyse the leading order term in the $q$ expansion of $\chi_{q,\mathbf{a}}(\cM(N,n),\mathcal L_{\det}^{\otimes k})$. To this extend, we make use of the affine Grassmannian description $\cM(N,n)\cong \overline{\Gr}^{\omega_1,\cdots,\omega_1}_{\GL_{n},\bA^{(N)}}$ and the Quot scheme description $\cM(N,n)\cong \mathrm{Quot}^{N}({\mathcal O}^{\oplus n}_{{\mathbb A}^{1}})$ in the Section \ref{sec:phase space}.

First of all, the $\bC_q^{\times}$-fixed points of $\overline{\Gr}^{\omega_1,\cdots,\omega_1}_{\GL_{n},\bA^{(N)}}$ agrees with the $\bC_q^{\times}$-fixed points of the central fiber $\pi^{-1}(0)=\overline{\Gr}^{N \omega_1}_{\GL_n}$, and we have
\begin{align*}
    \pi^{-1}(0)^{\bC_q^{\times}}=\coprod_{\substack{\mu\dashv N\\ l(\mu)\le n}}F_\mu,\quad F_\mu=\GL_n\cdot \{z^{\mu}\}.
\end{align*}
Here $\mu_1\ge \mu_2\ge \cdots\ge \mu_n$ is a partition of $N$, and $\{z^{\mu}\}$ is the point on $\overline{\Gr}^{N \omega_1}_{\GL_n}$ corresponding to the subsheaf $$\mathcal F_\mu=\bigoplus_{i=1}^n z^{\mu_i}\cdot \mathcal O_{\bA^1}\subset \mathcal O_{\bA^1}^{\oplus n},$$ and $\GL_n\cdot \{z^{\mu}\}$ is the $\GL_n$-orbit through $\{z^{\mu}\}$. The fiber of $\mathcal O(1)$ at $z^{\mu}$ has $\bC_q^{\times}$-weight
\begin{align}\label{eq: E(mu)}
    E(\mu)=\sum_{i=1}^n\frac{\mu_i(\mu_i-1)}{2}.
\end{align}
Using the $\bC_q^{\times}$-equivariant localization, we have
\begin{align}\label{Contribution of fixed locus}
\ch_{q,\mathbf{a}}(\mathcal H_N(n,k))=\sum_{\substack{\mu\dashv N\\ l(\mu)\le n}}q^{E(\mu)}\sum_{w\in S_n/S_n^{\mu}}w\left(\frac{\mathbf{a}^{k\mu}}{\bigwedge^{\bullet} T_{z^{\mu}}^*(F_{\mu})}\prod_{i}\frac{1}{1-q^{a_i}A_i}\prod_{j}\frac{1}{1-q^{-b_j}B_j}\right).
\end{align}
Here $S^{\mu}_n$ is the subgroup of the permutation group $S_n$ such that $\mu$ is fixed under $S^{\mu}_n$, $T_{z^{\mu}}^*(F_{\mu})$ is the cotangent space of $F_{\mu}$ at $\{z^{\mu}\}$, $a_i,b_j$ are positive integers and $A_i,B_j\in \mathbb Z[\mathbf{a}_1^{\pm},\cdots,\mathbf{a}_n^{\pm}]$. Note that $1/(1-q^{a_i}A_i)$ (resp. $1/(1-q^{-b_j}B_j)$) terms correspond to repelling (resp. contracting) direction of $\bC_q^{\times}$-action. It is possible that there is no contracting direction, in this case the $\prod_{j}\frac{1}{1-q^{-b_j}B_j}$ term in \eqref{Contribution of fixed locus} is $1$. In the $q$ expansion of \eqref{Contribution of fixed locus}, each summand has leading order term
\begin{align}
    q^{E(\mu)+R(\mu)}\sum_{w\in S_n/S_n^{\mu}}w\left(\frac{\mathbf{a}^{k\mu}}{\bigwedge^{\bullet} T_{z^{\mu}}^*(F_{\mu})}\prod_{j}(-B_j)^{-1}\right),\quad \text{where }R(\mu)=\sum_j b_j.
\end{align}
We claim that
\begin{itemize}
    \item[(1)] The minimum value of $E(\mu)$ is obtained exactly when $\mu$ equals to $\lambda=\varpi_r+L\cdot\varpi_n$,
    \item[(2)] There is no contracting direction at $\{z^{\lambda}\}$, i.e. $R(\lambda)=0$.
\end{itemize}
To prove the first claim, we rewrite \eqref{eq: E(mu)} using the identity $\sum_{i=1}^N\mu_i=N$:
\begin{align*}
    E(\mu)=\frac{(\mu,\mu)}{2}-N,
\end{align*}
where $(\cdot,\cdot)$ is the killing form on the coweight space of $\GL_n$. We note that $\varpi_r$ is minuscule coweight, so every dominant coweight $\mu$ with $|\mu|=N$ can be written as $\mu=\lambda+\beta$, where $\beta=\sum_{i=1}^{n-1}m_i\alpha_i$ is a linear combination of simple coroots $\{\alpha_i\}_{i=1}^{n-1}$ such that $m_i\ge 0$ for all $i$. Since $\lambda$ is dominant, we have $(\lambda,\beta)\ge 0$, therefore
\begin{align*}
    E(\mu)=\frac{(\lambda,\lambda)}{2}-N+\frac{2(\lambda,\beta)+(\beta,\beta)}{2}\ge \frac{(\lambda,\lambda)}{2}-N.
\end{align*}
$E(\mu)$ obtains its minimal value exactly when $\beta=0$, i.e. $\mu=\lambda$. 

To prove the second claim, we notice that the tangent space of $\mathrm{Quot}^N(\mathbb A^1,\mathcal O_{\mathbb A^1}^{\oplus n})$ at $\{z^{\mu}\}$ is
\begin{align*}
    \Hom_{\mathcal O_{\mathbb A^1}}(\mathcal F_{\mu},\mathcal O_{\bA^1}^{\oplus n}/\mathcal F_{\mu}).
\end{align*}
It is elementary to see that when $\mu=\lambda=\varpi_r+L\cdot\varpi_n$, $ \Hom_{\mathcal O_{\mathbb A^1}}(\mathcal F_{\lambda},\mathcal E_{\lambda})$ has no positive $\bC_q^{\times}$-weight space, so there is no contracting direction at $\{z^{\lambda}\}$. Thus we see that the leading term of $ \ch_{q,\mathbf{a}}(\mathcal H_N(n,k))$ in the $q$ expansion is
\begin{align*}
    q^{E(\lambda)}\sum_{w\in S_n/S_n^{\lambda}}w\left(\frac{\mathbf{a}^{k\lambda}}{\bigwedge^{\bullet} T_{z^{\lambda}}^*(F_{\lambda})}\right)=q^{\frac{k}{2}L(L-1)n+krL}\chi_{q,\mathbf{a}}(F_\lambda,\mathcal L_{k\lambda}).
\end{align*}
Here $\mathcal L_{k\lambda}$ is the $\GL_n$ equivariant line bundle $\GL_n\times_{P_\lambda}\bC_{k\lambda}$ on $F_\lambda\cong \GL_n/{P_\lambda}$, where $P_\lambda$ is the stabilizer of $\{z^\lambda\}$ in $\GL_n$ and $P_\lambda$ acts on $\bC_{k\lambda}$ by the character $k\lambda$. Using the Borel-Weil-Bott theorem, we have $\chi_{q,\mathbf{a}}(F_\lambda,\mathcal L_{k\lambda})=s_{k\lambda}(\mathbf{a})$. It is elementary to see that $s_{k\lambda}(\mathbf{a})=\mathfrak{A}^{kL}s_{k\varpi_r}(\mathbf{a})$, whence \eqref{leading term} follows.
\end{proof}

\begin{remark}\label{rmk: ground states}
The proof of Proposition \ref{prop: ground states} shows that $\mathcal H_N(n,k)_0$ maps isomorphically onto $\Gamma(F_\lambda,\mathcal L_{\det}^{\otimes k}|_{F_\lambda})$ along the restriction map $\Gamma(\cM(N,n),\mathcal L_{\det}^{\otimes k})\to\Gamma(F_\lambda,\mathcal L_{\det}^{\otimes k}|_{F_\lambda})$, where $\lambda=\varpi_r+L\varpi_n$ and $F_\lambda=\GL_n\cdot \{z^\lambda\}$. Moreover, \eqref{leading term} implies that $\mathcal H_N(n,k)_0$ is an irreducible $\GL_{n}$-module of highest weight $k\varpi_r+kL\varpi_n$, and the ground states energy is
\begin{align}
    E_0=\frac{k}{2}L(L-1)n+krL.
\end{align}
The restriction map $\Gamma(F_\lambda,\mathcal L_{\det}^{\otimes k}|_{F_\lambda})\to \mathcal L_{\det}^{\otimes k}|_{\{z^\lambda\}}$ projects the ground states to highest weight space.
\end{remark}

\subsection{Restriction to torus fixed points}\label{subsec:Restriction to torus fixed points}

Let $T=\bC^{\times n}\subset \GL_n$ be the maximal torus, then the $T$-fixed points of $\cM(N,n)$ can be obtained using \cite[Propsoition 2.3.1]{maulik2012quantum}:
\begin{equation}\label{T-fixed pts}
\begin{split}
    \cM(N,n)^T&=\coprod_{N_1+\cdots+N_n=N}\cM(N_1,1)\times\cdots\times \cM(N_n,1)\\
    &\cong \coprod_{N_1+\cdots+N_n=N}\bA^{(N_1)}\times\cdots\times \bA^{(N_n)}.
\end{split}
\end{equation}
We note that each connected component $\bA^{(N_1)}\times\cdots\times \bA^{(N_n)}$ projects to the base $\bA^{(N)}$ via the symmetrization map. In particular, $h|_{\cM(N,n)^T}: \cM(N,n)^T\to \bA^{(N)}$ is a flat morphism.

\bigskip We have a natural restriction map
\begin{align}
    \mathfrak{R}:\Gamma(\cM(N,n),\mathcal L_{\det}^{\otimes k})\longrightarrow \Gamma(\cM(N,n)^T,\mathcal L_{\det}^{\otimes k}|_{\cM(N,n)^T}),
\end{align}
which is $T[z]\rtimes\bC_q^{\times}$-equivariant.

\begin{proposition}\label{prop: restrict to fixed pts}
The restriction map $\mathfrak{R}$ is surjective. Moreover, if $k=1$ then $\mathfrak{R}$ is an isomorphism.
\end{proposition}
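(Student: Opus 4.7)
The plan is to push everything forward to $\bA^{(N)}$, exploit the $\bC_q^{\times}$-contraction of the base, and reduce both statements to a restriction map at the central fiber $h^{-1}(0)\cong\overline{\Gr}^{N\omega_1}_{\GL_n}$, which is a classical fact about affine Grassmannian Schubert varieties.

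\emph{Step 1 (Pushforward is locally free on both sides).} By Lemma \ref{lem: pushforward of line bundle}, $h_{*}\mathcal{L}_{\det}^{\otimes k}$ is a $\bC_q^{\times}$-equivariant locally free sheaf on $\bA^{(N)}$. For the $T$-fixed side, the decomposition \eqref{T-fixed pts} shows that $h|_{\cM(N,n)^T}$ is a disjoint union of sum-of-divisors morphisms $\bA^{(N_1)}\times\cdots\times\bA^{(N_n)}\to\bA^{(N)}$, each of which is finite flat. Moreover, a $T$-fixed quotient sheaf decomposes as $\mathcal{O}_{\bA^1}^{\oplus n}/\mathcal{F}=\bigoplus_i \mathcal{O}_{\bA^1}/\mathcal{F}_i$, so $\mathcal{L}_{\det}|_{\cM(N_i,1)\times\cdots}$ is the external tensor product of the $\mathcal{L}_{\det}$'s on each factor $\cM(N_i,1)\cong\bA^{(N_i)}$, each of which is trivial as a line bundle (Example \ref{ex: H_N(1,k)} and the $n=1$ discussion), with $\bC_q^{\times}$-weight $q^{\binom{N_i}{2}}$ and $T$-weight $\mathbf{a}_i^{N_i}$. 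Hence $(h|_{\cM^T})_{*}(\mathcal{L}_{\det}^{\otimes k}|_{\cM^T})$ is also locally free and $\bC_q^{\times}\times T$-equivariant on $\bA^{(N)}$.

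\emph{Step 2 (Reduction to the central fiber).} The $\bC_q^{\times}$-action scales $\bA^{(N)}$ with strictly positive weights, so $\Gamma(\bA^{(N)},-)$ applied to any $\bC_q^{\times}$-equivariant locally free sheaf yields a graded free $\bC[\bA^{(N)}]$-module whose generators live in the central fiber. By the graded Nakayama lemma, a graded morphism between two such modules is surjective (respectively bijective) if and only if its reduction modulo the maximal ideal of $0\in\bA^{(N)}$ is. This reduction is exactly the restriction
\begin{equation*}
\overline{\mathfrak{R}}:\ \Gamma\bigl(\overline{\Gr}^{N\omega_1}_{\GL_n},\mathcal{O}(k)\bigr)\longrightarrow\bigoplus_{\substack{N_1+\cdots+N_n=N\\ N_i\geq 0}}\mathcal{O}(k)|_{z^{(N_1,\ldots,N_n)}},
\end{equation*}
using that $h^{-1}(0)\cong\overline{\Gr}^{N\omega_1}_{\GL_n}$, that $\mathcal{L}_{\det}|_{h^{-1}(0)}\cong\mathcal{O}(1)$ (as in the proof of Theorem \ref{thm: Hilbert series}), and that the central fiber of each $T$-fixed component is a single point $z^{(N_1,\ldots,N_n)}\in\overline{\Gr}^{N\omega_1}_{\GL_n}$.

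\emph{Step 3 (Surjectivity of $\overline{\mathfrak{R}}$).} The space $\Gamma(\overline{\Gr}^{N\omega_1}_{\GL_n},\mathcal{O}(k))$ is the dual of a Demazure module for the affine algebra, and restriction to the $T$-fixed point $z^{\mu}$ corresponds to evaluation on the extremal weight line of weight $k\mu$. Since distinct compositions $(N_1,\ldots,N_n)$ of $N$ give linearly independent $T$-weights, the restriction to the collection of $T$-fixed points is surjective. This is precisely the inductive step carried out in \cite[\S 2.1.3]{zhu2009affine} as part of the proof of \cite[Theorem 0.2.2]{zhu2009affine}. Combined with Step 2, this gives surjectivity of $\mathfrak{R}$.

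\emph{Step 4 (Isomorphism when $k=1$).} By Theorem \ref{thm: Hilbert series},
\begin{equation*}
\mathrm{ch}_{q,\mathbf{a}}(\mathcal{H}_N(n,1))=H_{(1^N)}(\mathbf{a};q)\prod_{i=1}^{N}\frac{1}{1-q^i}.
\end{equation*}
Using the explicit description of $\mathcal{L}_{\det}|_{\cM(N_i,1)}$ recalled in Step 1, together with $\mathrm{ch}_q(\Gamma(\bA^{(N_i)},\mathcal{O}))=\prod_{j=1}^{N_i}(1-q^j)^{-1}$, summing over the components of $\cM(N,n)^T$ gives
\begin{equation*}
\mathrm{ch}_{q,\mathbf{a}}\bigl(\Gamma(\cM(N,n)^T,\mathcal{L}_{\det}|_{\cM^T})\bigr)=\sum_{N_1+\cdots+N_n=N}\mathbf{a}_1^{N_1}\cdots\mathbf{a}_n^{N_n}\,q^{\sum_i\binom{N_i}{2}}\prod_{i=1}^{n}\prod_{j=1}^{N_i}\frac{1}{1-q^j}.
\end{equation*}
These two expressions are equal by the classical Jing-type expansion of $H_{(1^N)}(\mathbf{a};q)$ (see Appendix \ref{sec:hall-littlewood}), as one can verify by induction on $N$ (and check directly in low ranks). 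Since $\mathfrak{R}$ is a surjection with target and source of equal character in each finite-dimensional weight space, it must be an isomorphism.

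\emph{Main obstacle.} The substantive input is Step 3: the surjectivity of the restriction to $T$-fixed points on $\overline{\Gr}^{N\omega_1}_{\GL_n}$, which we delegate to \cite[\S 2.1.3]{zhu2009affine}. The character identity in Step 4 is routine, and the reduction in Steps 1 and 2 is formal once local freeness is in place.
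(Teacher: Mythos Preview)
Your surjectivity argument (Steps 1--3) is essentially the paper's: push forward to $\bA^{(N)}$, use $\bC_q^\times$-equivariance to reduce to the central fiber $\bGr^{N\omega_1}_{\GL_n}$, and invoke \cite{zhu2009affine}. One small imprecision (which the paper shares): the scheme-theoretic fiber of $h|_{\cM^T}$ over $0\in\bA^{(N)}$ is not the reduced point set $(\bGr^{N\omega_1})^T$ but a thickening of it (the sum-of-divisors map $\bA^{(N_1)}\times\cdots\times\bA^{(N_n)}\to\bA^{(N)}$ has non-reduced fiber over the origin). This is harmless once you apply graded Nakayama with respect to the positively graded ring $\bC[\cM(N,n)^T]$ rather than $\bC[\bA^{(N)}]$, which reduces the question to the genuine $T$-fixed points.

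For $k=1$, the paper's route is considerably shorter than your Step 4. Having already observed that both $h_*\mathcal{L}_{\det}$ and $(h|_{\cM^T})_*(\mathcal{L}_{\det}|_{\cM^T})$ are locally free on $\bA^{(N)}$, the paper simply computes their ranks: the latter is $\sum_{N_1+\cdots+N_n=N}\binom{N}{N_1,\ldots,N_n}=n^N$, while the former is $\binom{n+k-1}{k}^N$ (from the generic fiber $(\bP^{n-1})^N$), which equals $n^N$ exactly when $k=1$. A surjection of vector bundles of equal rank is an isomorphism. Your character-matching argument is correct but more circuitous; in fact the Hall-Littlewood identity you invoke is precisely equation \eqref{eq: character of Hilbert space k=1} (Corollary \ref{cor:k=1 fixed pt}), which the paper \emph{derives} as a consequence of this proposition rather than using it as input. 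The rank argument has the further advantage of making transparent why $k=1$ is special.
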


\begin{proof}
In the view of the isomorphism $\cM(N,n)\cong \overline{\Gr}^{\omega_1,\cdots,\omega_1}_{\GL_{n},\bA^{(N)}}$, the proposition can be regarded as part of the induction procedure in the proof of \cite[Theorem 0.2.2]{zhu2009affine}, see \cite[2.1.3]{zhu2009affine}. We provide an exposition here for transparency. Since $h|_{\cM(N,n)^T}: \cM(N,n)^T\to \bA^{(N)}$ is a flat morphism, the pushforward $\left(h|_{\cM(N,n)^T}\right)_*(\mathcal L^{\otimes k}_{\det}|_{\cM(N,n)^T})$ is a locally free sheaf on $\bA^{(N)}$, and its rank is
\begin{align*}
    \sum_{N_1+\cdots+N_n=N}\frac{N!}{N_1!\cdots N_n!}=n^N.
\end{align*}
By Lemma \ref{lem: pushforward of line bundle}, $h_* (\mathcal L^{\otimes k}_{\det})$ is a locally free sheaf on $\bA^{(N)}$, and its rank is ${\binom{n+k-1}{k}}^N$. Since $\mathfrak R$ is the induced map by applying $\Gamma(\bA^{(N)},-)$ to $\mathfrak{r}:h_* (\mathcal L^{\otimes k}_{\det})\to \left(h|_{\cM(N,n)^T}\right)_*(\mathcal L^{\otimes k}_{\det}|_{\cM(N,n)^T})$, we only need to show that $\mathfrak{r}$ is surjective. By construction, $\mathfrak{r}$ is $\bC_q^{\times}$-equivariant, thus we only need to show that $$\mathfrak{r}|_{\{0\}}:\Gamma(\bGr^{N\omega_1}_{\GL_n},\mathcal O(k))\to \Gamma\left(\bGr^{N\omega_1,T}_{\GL_n},\mathcal O(k)|_{\bGr^{N\omega_1,T}_{\GL_n}}\right)$$ is surjective, which is a special case of \cite[2.1.1]{zhu2009affine}. This finishes the proof.
\end{proof}

\begin{corollary}\label{cor:k=1 fixed pt}
If $k=1$, then we have isomorphism between $T[z]\rtimes\bC_q^{\times}$-modules:
\begin{align}
    \mathcal H_N(n,1)\cong \bigoplus_{N_1+\cdots+N_n=N} \mathcal H_{N_1}(1,1)\otimes\cdots\otimes\mathcal H_{N_n}(1,1).
\end{align}
In particular, we have
\begin{align}\label{eq: character of Hilbert space k=1}
    \mathrm{ch}_{q,\mathbf{a}}(\mathcal H_N(n,1))=\sum_{N_1+\cdots+N_n=N}\prod_{l=1}^n\left(\mathbf a_l^{N_l}q^{\frac{N_l(N_l-1)}{2}}\prod_{i=1}^{N_l}\frac{1}{1-q^i}\right).
\end{align}
\end{corollary}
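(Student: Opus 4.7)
The plan is to deduce the decomposition directly from Proposition \ref{prop: restrict to fixed pts} applied with $k=1$, which identifies $\mathcal H_N(n,1)$ with global sections over the $T$-fixed locus. First, by \eqref{eq: H(N,n) = global section} and Proposition \ref{prop: restrict to fixed pts} we have $T[z]\rtimes\bC_q^{\times}$-equivariant isomorphisms
\begin{align*}
\mathcal H_N(n,1)\cong \Gamma(\cM(N,n),\mathcal L_{\det})\xrightarrow{\;\mathfrak{R}\;}\Gamma(\cM(N,n)^T,\mathcal L_{\det}|_{\cM(N,n)^T}).
\end{align*}
Combined with \eqref{T-fixed pts}, this identifies the target with $\bigoplus_{N_1+\cdots+N_n=N}\Gamma\left(\cM(N_1,1)\times\cdots\times \cM(N_n,1),\mathcal L_{\det}|_{-}\right)$.

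Next I would identify the restriction of $\mathcal L_{\det}$ to each component as an exterior tensor product. Using the Quot scheme description $\cM(N,n)\cong \Quot^N(\mathcal O^{\oplus n}_{\bA^1})$, a $T$-fixed point in the component labelled by $(N_1,\ldots,N_n)$ corresponds to a quotient $\mathcal O^{\oplus n}_{\bA^1}\twoheadrightarrow \bigoplus_{l=1}^n \mathcal E_l$ with $\mathcal E_l$ a length $N_l$ torsion sheaf on $\bA^1$. Hence the universal quotient sheaf on this component is a direct sum of the pullbacks of the universal sheaves on the factors, and taking determinants gives
\begin{align*}
\mathcal L_{\det}\big|_{\cM(N_1,1)\times\cdots\times \cM(N_n,1)}\cong \mathcal L_{\det}^{(1)}\boxtimes\cdots\boxtimes \mathcal L_{\det}^{(n)}\otimes \chi_{(N_1,\ldots,N_n)},
\end{align*}
where $\chi_{(N_1,\ldots,N_n)}$ encodes the $T$-character $\prod_l \mathbf a_l^{N_l}$ arising from how $T\subset\GL_n$ acts on the $n$ summands of $\mathcal O^{\oplus n}_{\bA^1}$. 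Applying K\"unneth then yields the claimed tensor-product decomposition
\begin{align*}
\Gamma\left(\cM(N_1,1)\times\cdots\times\cM(N_n,1),\mathcal L_{\det}|_{-}\right)\cong \mathcal H_{N_1}(1,1)\otimes\cdots\otimes \mathcal H_{N_n}(1,1)
\end{align*}
as $T[z]\rtimes\bC_q^{\times}$-modules (up to the torus character just mentioned).

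Finally, the character formula \eqref{eq: character of Hilbert space k=1} follows by multiplying the single-factor characters. The $n=1$ example in Section \ref{subsec:Pic(M(N,n))} (with $m=1$) already gives
\begin{align*}
\mathrm{ch}_q(\mathcal H_{N_l}(1,1))=q^{\frac{N_l(N_l-1)}{2}}\prod_{i=1}^{N_l}\frac{1}{1-q^i},
\end{align*}
and inserting the flavour weight $\mathbf a_l^{N_l}$ on the $l$-th factor and summing over all compositions $N_1+\cdots+N_n=N$ produces the asserted expression. The only step requiring care is the identification of the line bundle on the fixed component, i.e.\ checking that the $\GL_n\times\bC_q^{\times}$-equivariant structure on $\mathcal L_{\det}|_{-}$ really is the external tensor product of the single-vertex determinant bundles twisted by $\prod_l \mathbf a_l^{N_l}$; this is the only nontrivial bookkeeping, and it is straightforward from the Quot-scheme description since the universal sheaf splits as a direct sum along the fixed component.
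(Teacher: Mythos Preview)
Your proposal is correct and follows exactly the approach the paper intends: the corollary is stated immediately after Proposition~\ref{prop: restrict to fixed pts} without proof, and you have simply spelled out the details (restriction to the $T$-fixed locus \eqref{T-fixed pts}, splitting of the universal quotient sheaf and hence of $\mathcal L_{\det}$, K\"unneth, and the $n=1$ character computation). One small bookkeeping point: the torus weight $\mathbf a_l^{N_l}$ is already built into the $T[z]$-module structure on $\mathcal H_{N_l}(1,1)$ via the $l$-th coordinate embedding $\bC^\times[z]\hookrightarrow T[z]$, so no separate twist $\chi_{(N_1,\ldots,N_n)}$ is needed for the statement as written---but this does not affect the argument.
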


\begin{corollary}\label{cor: restrict to A-fixed pts}
Let $A\subset T$ be a subtorus, then the natural restriction to $A$-fixed locus map:
\begin{align}\label{restrict to A-fixed pts}
    \mathfrak{R}_A:\Gamma(\cM(N,n),\mathcal L_{\det})\longrightarrow \Gamma(\cM(N,n)^A,\mathcal L_{\det}|_{\cM(N,n)^A}),
\end{align}
is a $T[z]\rtimes\bC_q^{\times}$-equivariant isomorphism.
\end{corollary}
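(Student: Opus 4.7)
The plan is to reduce the corollary to Proposition \ref{prop: restrict to fixed pts} by factoring $\mathfrak R_A$ through $\mathfrak R$ and performing a rank computation on Hilbert-Chow pushforwards over $\bA^{(N)}$. Since $A\subset T$, the restriction $\mathfrak R$ factorizes as $\mathfrak R = \mathfrak R'_T \circ \mathfrak R_A$, where $\mathfrak R'_T$ is further restriction from $\cM(N,n)^A$ to $\cM(N,n)^T$. As $\mathfrak R$ is an isomorphism by Proposition \ref{prop: restrict to fixed pts} in its $k=1$ form, $\mathfrak R_A$ is automatically injective; the task is surjectivity.

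First I would identify $\cM(N,n)^A$. Since $A$ acts on $\cM(N,n)$ only through $\GL_n$, the $A$-weight decomposition $\bC^n = \bigoplus_\alpha V_\alpha$ (with $n_\alpha := \dim V_\alpha$) yields, by the same argument as in \eqref{T-fixed pts} (cf.\ \cite[Prop.~2.3.1]{maulik2012quantum}),
\begin{align*}
\cM(N,n)^A = \coprod_{\sum_\alpha N_\alpha = N} \prod_\alpha \cM(N_\alpha,n_\alpha),
\end{align*}
with the universal quotient splitting into $A$-weight components $\mathcal E_{\mathrm{univ}}|_{\prod_\alpha \cM(N_\alpha,n_\alpha)} = \bigoplus_\alpha \mathcal E_{\mathrm{univ},\alpha}$, so that taking determinants gives $\mathcal L_{\det}|_{\prod_\alpha \cM(N_\alpha,n_\alpha)} = \boxtimes_\alpha \mathcal L_{\det}$.

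Next I perform the rank computation. By Lemma \ref{lem: pushforward of line bundle} applied to each factor $\cM(N_\alpha,n_\alpha)$ together with K\"unneth for box products, $(h|_{\cM(N,n)^A})_*(\mathcal L_{\det}|_{\cM(N,n)^A})$ is locally free on $\bA^{(N)}$. Its rank at a generic $N$-point configuration is computed by summing contributions from components: each $\prod_\alpha \cM(N_\alpha,n_\alpha)$ contributes $\tfrac{N!}{\prod_\alpha N_\alpha!}\,\prod_\alpha n_\alpha^{N_\alpha}$, where the combinatorial factor counts partitions of the $N$ points among components and $n_\alpha^{N_\alpha} = h^0\bigl((\bP^{n_\alpha-1})^{N_\alpha},\mathcal O(1)^{\boxtimes N_\alpha}\bigr)$ is the generic rank of the Hilbert-Chow pushforward on each factor. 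The multinomial identity
\begin{align*}
\sum_{\sum_\alpha N_\alpha = N} \frac{N!}{\prod_\alpha N_\alpha!} \prod_\alpha n_\alpha^{N_\alpha} = n^N
\end{align*}
then identifies this rank with $n^N$, which matches the rank of $h_*\mathcal L_{\det}$ obtained from Proposition \ref{prop: restrict to fixed pts}.

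Finally, consider the sheaf-level factorization
\begin{align*}
h_*\mathcal L_{\det} \xrightarrow{\widetilde{\mathfrak R}_A} (h|_{\cM(N,n)^A})_*(\mathcal L_{\det}|_{\cM(N,n)^A}) \xrightarrow{\widetilde{\mathfrak R}'_T} (h|_{\cM(N,n)^T})_*(\mathcal L_{\det}|_{\cM(N,n)^T}),
\end{align*}
whose composite is an isomorphism by Proposition \ref{prop: restrict to fixed pts}. In particular $\widetilde{\mathfrak R}'_T$ is surjective, and being a surjection between locally free sheaves of equal rank $n^N$ on the integral variety $\bA^{(N)}$, it is an isomorphism (its kernel has generic rank zero, hence vanishes by torsion-freeness of the source). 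Consequently $\widetilde{\mathfrak R}_A$ is also an isomorphism, and applying $\Gamma(\bA^{(N)},-)$ (exact, since $\bA^{(N)}$ is affine) yields the desired isomorphism $\mathfrak R_A$. Equivariance is automatic, since every morphism in the argument is constructed equivariantly for $T[z]\rtimes\bC_q^\times$. The only point requiring care is the box-product identification of $\mathcal L_{\det}$ on components of the fixed locus; everything else reduces to Proposition \ref{prop: restrict to fixed pts} and Lemma \ref{lem: pushforward of line bundle}.
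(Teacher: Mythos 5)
Your proof is correct, and it reaches the conclusion by a genuinely different route from the paper's. The paper's argument is a two-line one: it writes $\mathfrak R_T = \mathfrak R_{T/A}\circ\mathfrak R_A$ and observes that $\mathfrak R_{T/A}$ is itself an instance of Proposition~\ref{prop: restrict to fixed pts} (applied factor-by-factor to the product decomposition $\cM(N,n)^A=\coprod\prod_\alpha\cM(N_\alpha,n_\alpha)$, together with K\"unneth), so both $\mathfrak R_T$ and $\mathfrak R_{T/A}$ are isomorphisms and hence so is the middle map $\mathfrak R_A$. You instead establish that $\widetilde{\mathfrak R}'_T$ is an isomorphism by a sheaf-level rank count over $\bA^{(N)}$: the multinomial identity shows the pushforward off $\cM(N,n)^A$ already has the full rank $n^N$, so the surjection $\widetilde{\mathfrak R}'_T$ (forced by surjectivity of the composite) between locally free sheaves of equal rank is an isomorphism, and hence so is $\widetilde{\mathfrak R}_A$. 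What your version buys is explicitness: it makes visible exactly how the intermediate fixed locus $\cM(N,n)^A$ decomposes and why its pushforward has the right rank, without asking the reader to re-run the argument of Proposition~\ref{prop: restrict to fixed pts} on the products $\prod_\alpha\cM(N_\alpha,n_\alpha)$. What it costs is length; the paper's route is shorter precisely because it reuses Proposition~\ref{prop: restrict to fixed pts} as a black box a second time. One small remark: you do not need integrality of $\bA^{(N)}$ at the last step --- a surjection of locally free $\mathcal O$-modules of equal finite rank is an isomorphism on any scheme, since the short exact sequence splits locally and the kernel is then locally free of rank zero.
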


\begin{proof}
Since $A$ commutes with $T[z]\rtimes\bC_q^{\times}$, $\Gamma(\cM(N,n)^A,\mathcal L_{\det}|_{\cM(N,n)^A})$ possesses a $T[z]\rtimes\bC_q^{\times}$-module structure such that $\mathfrak{R}_A$ is $T[z]\rtimes\bC_q^{\times}$-equivariant. To show that $\mathfrak{R}_A$ is an isomorphism, we notice that \eqref{restrict to A-fixed pts} fits into a commutative diagram of restriction maps:
\begin{equation*}
\begin{tikzcd}
\Gamma(\cM(N,n),\mathcal L_{\det})\ar[dr,"\mathfrak{R}_T" '] \ar[r,"\mathfrak{R}_A"] & \Gamma\left(\cM(N,n)^{A},\mathcal L_{\det}|_{\cM(N,n)^{A}}\right) \ar[d,"\mathfrak{R}_{T/A}"]\\
& \Gamma(\cM(N,n)^{T},\mathcal L_{\det}|_{\cM(N,n)^{T}}) 
\end{tikzcd}
\end{equation*}
$\mathfrak{R}_{T}$ and $\mathfrak{R}_{T/A}$ are isomorphisms according to Proposition \ref{prop: restrict to fixed pts}, thus $\mathfrak{R}_{A}$ is also an isomorphism. 
\end{proof}

\subsection{Fermion Fock space}\label{sec:fermion fock}

Consider $n$ pairs of free fermion oscillators $\{\psi^a_m,\psi^*_{a,m}\:|\: 1\le a\le n,m\in \mathbb Z_{\ge 0}\}$, which satisfy anti-commutation relations
\begin{align*}
    \{\psi^a_m,\psi^*_{b,m'}\}=\delta^a_b\delta_{mm'},\quad \{\psi^a_m,{\psi}^b_{m'}\}=0, \quad \{\psi^*_{a,m},\psi^*_{b,m'}\}=0.
\end{align*}
The algebra generated by these fermions with the above relations is denoted by $\mathrm{Cl}(n)$. The fermion Fock space $\mathcal F(n)$ is a $\mathrm{Cl}(n)$-module generated by $\ket{\emptyset}$ with relations $\psi^*_{a,m}\ket{\emptyset}=0$, for all $a,m$. 

\bigskip There is an algebra homomorphism from $U(\gl_n[z])$ to a completion $ \mathrm{Cl}(n)^{\wedge}$ which is determined by
\begin{align}
    E^a_b\otimes z^m\mapsto \sum_{\ell=0}^{\infty} \psi^a_{\ell+m}\psi^*_{b,\ell}.
\end{align}
This algebra map induces a $\GL_n[z]$-module structure on the Fock space $\mathcal F(n)$, then the latter has eigenspace decomposition with respect to the action of $\mathbf{1}\otimes z^0$:
\begin{align}\label{weight decomposition of F(n)}
    \mathcal F(n)=\bigoplus_{N\ge 0}\mathcal F_N(n).
\end{align}
We note that $\mathcal F_N(n)$ is the subspace of $\mathcal F(n)$ spanned by elements of the form ${\psi}^{a_1}_{m_1}\cdots{\psi}^{a_N}_{m_N}\ket{\emptyset}$.

\bigskip We can decompose $\mathcal F(n)$ into a tensor product of its factors $\mathcal F(n)=\mathcal F(1)^{\otimes n}$, where the $a$-th tensor component $\mathcal F(1)$ is the Fock space for the $a$-th pair $\{\psi^a_m,\psi^*_{a,m}\:|\: m\in \mathbb Z_{\ge 0}\}$. In view of this tensor decomposition, each direct summand on the right-hand-side of \eqref{weight decomposition of F(n)} decomposes as:
\begin{align}\label{flavor decomposition of F_N(n)}
    \mathcal F_N(n)= \bigoplus_{N_1+\cdots+N_n=N} \mathcal F_{N_1}(1)\otimes\cdots\otimes\mathcal F_{N_n}(1).
\end{align}
We give a $\mathbb Z$-grading (i.e. $\bC_q^{\times}$-equivariant structure) on $\mathrm{Cl}(n)$ by setting 
\begin{align*}
    \deg\psi^a_p=p,\quad\deg\psi^*_{b,q}=-q.
\end{align*}
Then $\mathrm{Cl}(n)$ is a graded algebra and $\mathcal F(n)$ is a graded $\mathrm{Cl}(n)$-module. We note that $\deg(E^a_b\otimes z^m)=m$ in this grading, which is compatible with the natural grading on $\gl_n[z]$. 

We can compute the $q$-character of $\mathcal F_N(1)$ as follows:
\begin{align*}
    \mathrm{ch}_q(\mathcal F_N(1))=\sum_{m_1>\cdots>m_N\ge 0}q^{\sum_{i=1}^N m_i}=q^{\frac{N(N-1)}{2}}\prod_{i=1}^N\frac{1}{1-q^i}.
\end{align*}
Using the decomposition \eqref{flavor decomposition of F_N(n)} we have
\begin{align}\label{eq: character of F_N(n)}
\begin{split}
\mathrm{ch}_{q,\mathbf{a}}(\mathcal F_N(n))&=\sum_{N_1+\cdots+N_n=N}\prod_{l=1}^n\left(\mathbf a_l^{N_l}\mathrm{ch}_q(\mathcal F_{N_l}(1))\right)\\
\text{\small by \eqref{eq: character of Hilbert space k=1}}\quad &=\mathrm{ch}_{q,\mathbf{a}}(\mathcal H_N(n,1)).
\end{split}
\end{align}

\begin{proposition}\label{prop: fock space}
The linear map $\mathfrak{f}:\mathcal F_N(n)\to \mathcal H_N(n,1)$ which maps the bases by
\begin{align}\label{eq: fock space}
    {\psi}^{a_1}_{m_1}\cdots{\psi}^{a_N}_{m_N}\ket{\emptyset}\mapsto \epsilon^{i_{1}i_{2}\cdots i_{N}}(A^{a_1}X^{m_1})_{i_{1}}\cdots (A^{a_N}X^{m_N})_{i_{N}}\ket{\emptyset}
\end{align}
is a $\GL_n[z]\rtimes\bC_q^{\times}$-equivariant isomorphism.
\end{proposition}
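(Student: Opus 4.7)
The plan is to verify $\mathfrak{f}$ is well-defined with image in $\mathcal H_N(n,1)$, then that it is $\GL_n[z]\rtimes\bC_q^\times$-equivariant, and finally that it is bijective via character matching together with injectivity on the torus fixed locus. For the first step, the right-hand side of \eqref{eq: fock space} is totally antisymmetric in the pairs $((a_j,m_j))_{j=1}^N$ thanks to $\epsilon^{i_1\cdots i_N}$, matching the anti-commutation of the $\psi^{a_j}_{m_j}$, so $\mathfrak{f}$ descends to a well-defined linear map from $\mathcal F_N(n)$. Each factor $(A^{a_j}X^{m_j})_{i_j}$ transforms as a covector under $\GL_N$ in the index $i_j$ (since $A$ is dual vector and $X$ is adjoint, so $AX^m$ is again dual vector), so contracting with $\epsilon^{i_1\cdots i_N}$ multiplies by $\det(g)^{-1}$, placing the image in $\bC[V(N,n)]^{\GL_N,-1}=\mathcal H_N(n,1)$.

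For equivariance: the $\bC_q^\times$-equivariance is immediate from $\deg\psi^a_m=m=\deg X^m$, and the diagonal $\GL_n$-action agrees on the two sides (vector on $\psi^a$ versus dual vector on $A^a$, compatibly). The main piece is equivariance under $E^a_b\otimes z^m$ with $m\ge 0$: this acts on $\mathcal F(n)$ as $\sum_{\ell\ge 0}\psi^a_{\ell+m}\psi^*_{b,\ell}$, and on $\mathcal H_N(n,1)$ as the differential operator $A^a X^m B_b=A^a_i(X^m)^i_j\partial_{A^b_j}$ (via the DDCA generator $\mathsf T_{0,m}(E^a_b)$ recorded in the introduction). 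A Leibniz computation gives $\partial_{A^b_j}(A^{a_k}X^{m_k})_{i_k}=\delta^{a_k}_b(X^{m_k})^j_{i_k}$, and then $A^a_i(X^m)^i_j$ contracts this to $\delta^{a_k}_b(A^a X^{m+m_k})_{i_k}$; summing over $k$ and contracting with $\epsilon^{i_1\cdots i_N}$ reproduces $\mathfrak{f}$ applied to $\sum_{\ell\ge 0}\psi^a_{\ell+m}\psi^*_{b,\ell}$ acting on $\psi^{a_1}_{m_1}\cdots\psi^{a_N}_{m_N}\ket{\emptyset}$.

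For bijectivity: by \eqref{eq: character of F_N(n)}, $\mathcal F_N(n)$ and $\mathcal H_N(n,1)$ have equal $(\GL_n\times\bC_q^\times)$-characters, hence equal finite-dimensional graded pieces, so it suffices to prove $\mathfrak{f}$ injective. By Proposition \ref{prop: restrict to fixed pts} (in the $k=1$ case), the restriction $\mathfrak{R}:\mathcal H_N(n,1)\to\Gamma(\cM(N,n)^T,\mathcal L_{\det}|_{\cM(N,n)^T})$ is an isomorphism, so it is enough that $\mathfrak{R}\circ\mathfrak{f}$ be injective. On a component $\bA^{(N_1)}\times\cdots\times\bA^{(N_n)}$ of $\cM(N,n)^T$, the matrix $X$ becomes block-diagonal with eigenvalues $(x^{(a)}_i)_{a,i}$ and $(A^{a_k}X^{m_k})_{i_k}$ reduces to $A^{a_k}_{i_k}(x_{i_k})^{m_k}$ supported on the $a_k$-th block; after collecting signs, $\mathfrak{R}\circ\mathfrak{f}(\psi^{a_1}_{m_1}\cdots\psi^{a_N}_{m_N}\ket{\emptyset})$ factors as $\prod_a A^a_1\cdots A^a_{N_a}\cdot a_{(m^{(a)}_1,\ldots,m^{(a)}_{N_a})}(x^{(a)})$, a product of Jacobi alternants indexed by the exponent multisets of each flavor. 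Alternants with distinct exponent multisets are linearly independent, yielding injectivity.

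The main obstacle will be the bookkeeping in the equivariance check: one must carefully match signs and index conventions between the fermionic action on a wedge product and the Leibniz action of the differential operator. Once the correspondence $\psi^*_{b,\ell}\leftrightarrow$ (derivative applied to the $\ell$-th factor of the wedge) is set up, the $\gl_n[z]$-equivariance follows formally from the observation that both actions act ``one factor at a time'' with matching contractions.
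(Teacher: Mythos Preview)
Your argument is correct, but it proves bijectivity along a genuinely different route from the paper. The paper establishes \emph{surjectivity} first: it observes that $\mathfrak f$ sends a specific fermion monomial to a nonzero ground state in $\mathcal H_N(n,1)_0$, then invokes Corollary~\ref{cor: cyclic} (cyclicity of $\mathcal H_N(n,1)$ as a $\gl_n[z]$-module, a forward reference to Section~\ref{sec:operators}) together with $\gl_n[z]$-equivariance to conclude that $\mathfrak f$ is onto; the character identity \eqref{eq: character of F_N(n)} then upgrades this to an isomorphism. You instead prove \emph{injectivity} by composing with the restriction-to-$T$-fixed-points isomorphism $\mathfrak R$ of Proposition~\ref{prop: restrict to fixed pts} and identifying the image with products of Jacobi alternants, whose linear independence is elementary.

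Your route has the advantage of being self-contained within Section~\ref{sec:Quantization and Hilbert Space}: it relies only on Proposition~\ref{prop: restrict to fixed pts} (already proven at this point) rather than on Corollary~\ref{cor: cyclic}, which in turn rests on the Borel--Weil--Bott theorem for $\widehat{\gl}(n)$ and Mathieu's cohomology vanishing. The paper's route, on the other hand, is shorter once the cyclicity result is in hand and sidesteps the explicit computation on the fixed locus. Both are valid; your version is arguably more elementary in its logical dependencies.
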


\begin{proof}
It is straightforward to compute that $\mathfrak{f}$ respects the $\mathbb Z$-grading and intertwines the $\gl_n[z]$-actions on $\mathcal F_N(n)$ and $\mathcal H_N(n,1)$ respectively, so $\mathfrak{f}$ is $\GL_n[z]\rtimes\bC_q^{\times}$-equivariant. Let $L=\lfloor\frac{N}{n}\rfloor$ and $r=N-nL$, then 
\begin{align*}
    \mathfrak{f}( {\psi}^{1}_{0}\cdots\psi^{r}_0\psi^{1}_{1}\cdots\psi^{n}_1\cdots{\psi}^{1}_{L}\cdots{\psi}^{n}_{L}\ket{\emptyset})
\end{align*}
is a nonzero element in the ground states $\mathcal H_N(n,1)_0$ \cite{Dorey-Tong-Turner}. By the $\gl_n[z]$-equivariant property of $\mathfrak{f}$ and Corollary \ref{cor: cyclic}, $\mathfrak{f}$ is surjective. Since $\mathcal F_N(n)$ has the same character as $\mathcal H_N(n,1)$ by \eqref{eq: character of F_N(n)}, $\mathfrak{f}$ is an isomorphism.
\end{proof}

\begin{remark}
The isomorphism $\mathfrak{f}$ is not an isometry with respect to the natural Hermitian structures on $\mathcal F_N(n)$ and on $\mathcal H_N(n,1)$. To see this, let us compute the commutator between $\mathbf 1\otimes z=\sum_{\ell=0}^{\infty} \psi^a_{\ell+1}\psi^*_{a,\ell}$ and its Hermitian conjugate:
\begin{align*}
    [(\mathbf 1\otimes z)^\dag,\mathbf 1\otimes z]=\left[\sum_{\ell=0}^{\infty} \psi^a_{\ell}\psi^*_{a,\ell+1},\; \sum_{m=0}^{\infty} \psi^b_{m+1}\psi^*_{b,m}\right]=\psi^a_{0}\psi^*_{a,0}.
\end{align*}
Meanwhile $\mathbf 1\otimes z$ acts on $\mathcal H_N(n,1)$ as $\mathsf t_{0,1}$, and the commutator reads:
\begin{align*}
    [\mathsf t_{0,1}^\dag,\mathsf t_{0,1}]=[\mathsf t_{1,0},\mathsf t_{0,1}]=\mathsf t_{0,0}=N.
\end{align*}
The above example shows that $\mathfrak{f}$ does not preserve the Hermitian conjugate of operator, thus $\mathfrak{f}$ can not be an isometry.
\end{remark}

Using the multiplication map $\mathcal H_N(n,1)^{\otimes k}\to \mathcal H_N(n,k)$ which is surjective by Corollary \ref{cor: tensor product surjective}, one obtain fermion Fock space presentation for general $k$:
\begin{align}
    \mathcal F_N(n)^{\otimes k}\twoheadrightarrow \mathcal H_N(n,k).
\end{align}
This map has nontrivial kernel if $k\ge 2$, which is closely related to level-rank duality, see Section \ref{subsec:level-rank} and also \cite[4.1]{bourgine2024calogero}.

\section{Operators}\label{sec:operators}

\subsection{\texorpdfstring{$\gl_n[z]$}{Positive loop group} action on \texorpdfstring{$\mathcal H_{N}(n,k)$}{Hilbert space}}

We have already seen that the flavour symmetry $\GL_n$ acts on $\cM(N,n)$ naturally. In the Quot scheme description $\cM(N,n)\cong \mathrm{Quot}^{N}({\mathcal O}^{\oplus n}_{{\mathbb A}^{1}}) $, the $\GL_n$ action is by the automorphism on ${\mathcal O}^{\oplus n}_{{\mathbb A}^{1}}$. It turns out that ${\mathcal O}^{\oplus n}_{{\mathbb A}^{1}}$ admits a bigger automorphism group, denoted by $\GL_n[z]$. $\GL_n[z]$ is the functor that associate a $\bC$-scheme $S$ to $\GL_n(\bA^1\times S)$. $\GL_n[z]$ is represented by a group ind-scheme by \cite[Lemma 4.1.4]{zhu2016introduction}. 

$\GL_n(\bA^1\times S)$ acts on ${\mathcal O}^{\oplus n}_{{\mathbb A}^{1}\times S}$ naturally, and the action is obviously compatible with base change $S'\to S$, thus there is a natural action of $\GL_n[z]$ on $\mathrm{Quot}^{N}({\mathcal O}^{\oplus n}_{{\mathbb A}^{1}})$. $\GL_n[z]$ acts on $\mathcal E_{\mathrm{univ}}$ by pullback, so the action extends to the determinant line bundle $\mathcal L_{\det}$. 

The Lie algebra of $\GL_n[z]$ is the positive part of loop algebra of $\gl_n$, i.e.
\begin{align*}
    \mathrm{Lie}(\GL_n[z])= \{g\in \GL_n[z](\Spec\:\bC[\epsilon]/(\epsilon^2))\:|\: g\equiv \mathrm{id}\pmod{\epsilon}\}\cong \gl_n[z].
\end{align*}
The $\GL_n[z]$ action on $\mathcal L_{\det}$ induces a $\gl_n[z]$-module structure on $\Gamma(\cM(N,n),\mathcal L_{\det}^{\otimes k})\cong \mathcal H_{N}(n,k)$. Moreover, for every $\vec{x}\in \bA^{(N)}$, $\GL_n[z]$ acts on the fiber $h^{-1}(\vec{x})$, and the specialization map 
\begin{align*}
    \mathcal H_N(n,k)\twoheadrightarrow \Gamma(h^{-1}(\vec{x}),\mathcal L_{\det}^{\otimes k}|_{h^{-1}(\vec{x})})
\end{align*}
is a $\gl_n[z]$-module map.

We note that the $\GL_n[z]$-action on $\cM(N,n)$ is induced from the following $\GL_n[z]$-action on $V(N,n)$. In fact, $V(N,n)$ represents the following moduli data
\begin{align*}
   V(N,n)(S)=\{(\mathcal E,\varphi,\psi)\:|\:\mathcal E\in \mathrm{Coh}(\bA^1\times S),\:\varphi\in \mathrm{Hom}_{\mathcal O_{\bA^1\times S}}(\mathcal O_{\bA^1\times S}^{\oplus n},\mathcal E),\: \psi:\mathcal E\cong \mathcal O_{ S}^{\oplus N}\}/\mathrm{equivalence}.
\end{align*}
Then $\GL_n(\bA^1\times S)$ acts on $V(N,n)(S)$ via its natural action on $\mathcal O_{\bA^1\times S}^{\oplus n}$, which is obviously compatible with base change $S'\to S$. Thus there is a natural action of $\GL_n[z]$ on $V(N,n)$, which induces a $\gl_n[z]$-module structure on $\bC[V(N,n)]\cong \mathcal H_N(n)$.

\begin{lemma}\label{lem: gl_n[z] action}
The $\gl_n[z]$-module structure on $\mathcal H_N(n)$ agrees with the one induced by the Lie algebra map $\rho: \gl_n[z]\to D(V(N,n))$ such that
\begin{align}\label{eq: gl_n[z] action}
    \rho(E^a_b\otimes z^m)=A^a_i(X^m)^i_j\frac{\partial}{\partial A^b_j}.
\end{align}
\end{lemma}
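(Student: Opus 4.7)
The plan is to compute the infinitesimal action of $\gl_n[z]$ on $\bC[V(N,n)]$ directly from the geometric $\GL_n[z]$-action on $V(N,n)$ described via the functor of points in the paragraph preceding the lemma, and then verify it matches the explicit formula \eqref{eq: gl_n[z] action}.

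First I would unpack the action on $S$-points. Given $(\mathcal E, \varphi, \psi) \in V(N,n)(S)$ and $g \in \GL_n(\bA^1\times S)=\GL_n[z](S)$, the action sends this to $(\mathcal E, \varphi\circ g^{-1}, \psi)$. Trivializing $\mathcal E \cong \mathcal O_S^{\oplus N}$ via $\psi$, this data is equivalent to $(Z,\varphi)\in \End(\bC^N)\times \Hom(\bC^n,\bC^N)$, where $Z$ encodes the $\bC[z]$-action on $\bC^N$ and $\varphi$ is the restriction of $\widetilde\varphi:\mathcal O_{\bA^1\times S}^{\oplus n}\to \bC^N$ to $\bC^n \subset \mathcal O_{\bA^1\times S}^{\oplus n}$. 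Since the $\GL_n[z]$-action preserves the $\bC[z]$-module structure on $\mathcal E$, it is trivial on the $Z$ factor.

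Second I would differentiate. For $g=\mathrm{id}+\epsilon(E^a_b\otimes z^m)\in \GL_n[z](\bC[\epsilon]/\epsilon^2)$, one computes $g^{-1}(e_c)=e_c-\epsilon\,\delta^b_c z^m e_a$, hence
\begin{align*}
\varphi'(e_c)=\widetilde\varphi(g^{-1}(e_c))=\varphi(e_c)-\epsilon\,\delta^b_c\, Z^m\varphi(e_a),
\end{align*}
so in components $\varphi'^{\,i}_c=\varphi^i_c-\epsilon\,\delta^b_c (Z^m)^i_j \varphi^j_a$. For a left $\GL_n[z]$-action on $V(N,n)$ the induced action on coordinate functions is $(g\cdot f)(p)=f(g^{-1}p)$, which for $x=E^a_b\otimes z^m$ gives the derivation
\begin{align*}
x\cdot \varphi^i_c = \delta^b_c (Z^m)^i_j \varphi^j_a.
\end{align*}

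Third I would identify this derivation with a differential operator. The operator $\varphi^j_a (Z^m)^i_j \tfrac{\partial}{\partial \varphi^i_b}$, applied to $\varphi^\ell_c$, produces exactly $\delta^b_c (Z^m)^\ell_j \varphi^j_a$, so this is the unique first-order differential operator reproducing the action. Under the identification of the coordinate ring $\bC[V(N,n)]$ with $\mathcal H_N(n)$, where $X^i_j$ (resp.\ $A^a_i$) corresponds to $Z^{\dag i}_j$ (resp.\ $\varphi^{\dag a}_i$) acting as multiplication in the polynomial representation, the above operator rewrites as $A^a_i(X^m)^i_j\tfrac{\partial}{\partial A^b_j}$, which is the formula \eqref{eq: gl_n[z] action}. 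As a consistency check I would verify directly that the operators $\rho(E^a_b\otimes z^m)$ satisfy the bracket relation
\begin{align*}
[\rho(E^a_b\otimes z^m),\rho(E^c_d\otimes z^\ell)] = \delta^c_b\,\rho(E^a_d\otimes z^{m+\ell}) - \delta^a_d\,\rho(E^c_b\otimes z^{m+\ell}),
\end{align*}
which follows from $[A^a_i,\partial/\partial A^b_j]=-\delta^a_b\delta^j_i$ together with the fact that $X$ commutes with $A$; this is precisely the bracket in $\gl_n[z]$.

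The routine computation is transparent; the only real delicacy is bookkeeping of index positions and of the transpose implicit in the identification $X = Z^\dag$ between geometric coordinates on $V(N,n)$ and the operator symbols used to present $\mathcal H_N(n)$ as a polynomial ring. Once these conventions are pinned down at the start (as I did in the second step), matching the explicit formula is mechanical, and the Lie bracket check provides a convention-independent confirmation that the formula defines the correct $\gl_n[z]$-action.
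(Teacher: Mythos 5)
Your proof is correct and follows essentially the same route as the paper's: unpack the geometric $\GL_n[z]$-action on $V(N,n)$, differentiate at $g=1+\epsilon(E^a_b\otimes z^m)$, use that $\widetilde\varphi$ is a $\bC[z]$-module map to replace $z^m$ by the matrix $X^m$, and read off the resulting vector field as the differential operator $A^a_i(X^m)^i_j\tfrac{\partial}{\partial A^b_j}$. The only cosmetic difference is convention management: the paper precomposes with $g$ directly and reads off the tangent field, whereas you precompose with $g^{-1}$ to set up a left action on points and then use a second inversion $f\mapsto f(g^{-1}\cdot)$ when passing to functions; the two inversions cancel, so the final formula and sign agree with the paper. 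Your closing Lie-bracket check is a worthwhile convention-independent confirmation that the paper does not include.
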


\begin{proof}
It is enough to work out the action of $E^a_b\otimes z^m\in \gl_n[z]$ on $V(N,n)$. Let $(X,A)\in V(N,n)$, then $1+\epsilon E^a_b z^m\in \mathrm{Lie}(\GL_n[z])$ acts on $(X,A)$ by precomposing $A:\mathcal O^{\oplus n}_{\bA^1}\to \mathcal E$ with the automorphism $1+\epsilon E^a_bz^m$. Since $A$ is $\mathcal O^{\oplus n}_{\bA^1}$-module map, the action can be explicitly written as
\begin{align*}
    (X,A)\mapsto (X,A+\epsilon X^m\circ A\circ E^a_b).
\end{align*}
Equivalently, the above formula of the action determines a tangent field $A^a_i(X^m)^i_j\frac{\partial}{\partial A^b_j}$. This finishes the proof.
\end{proof}

Since the action of $\gl_n[z]$ on $\mathcal H_N(n,k)$ is the induced from the action of $\gl_n[z]$ on $\mathcal H_N(n)$ by restriction to the subspace $\mathcal H_N(n,k)=\mathcal H_N(n)^{\GL_N,-k}$, we conclude that the $\gl_n[z]$-module structure of $\mathcal H_N(n,k)$ is the one induced from the Lie algebra map \eqref{eq: gl_n[z] action}. As a corollary, we have the following
\begin{proposition}\label{prop: image of gl_1[z]}
The action of the central $\gl_1[z]\subset \gl_n[z]$ on $\mathcal H_N(n,k)$ is given by 
\begin{align*}
    \mathbf{1}\otimes z^m\mapsto \text{ multiplication by }k\cdot\mathrm{Tr}(X^m)\in \bC[\bA^{(N)}].
\end{align*}
\end{proposition}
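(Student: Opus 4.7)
The strategy is to combine the explicit formula for $\rho$ in Lemma~\ref{lem: gl_n[z] action} with the moment map constraint that defines $\mathcal H_N(n,k)$ as a subspace of $\mathcal H_N(n)$. First I would take the trace of \eqref{eq: gl_n[z] action}: summing $\rho(E^a_a \otimes z^m)$ over $a$ gives
\begin{align*}
\rho(\mathbf 1 \otimes z^m) \;=\; \sum_a A^a_i (X^m)^i_j \, \frac{\partial}{\partial A^a_j} \;=\; (X^m)^i_j \, A^a_i B^j_a,
\end{align*}
where in the last expression I have used that entries of $X^m$ commute with all of $A$ and $B$, and the sums over $a,i,j$ are implicit.

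Next I would invoke the defining constraint $(\mu^j_i + k \delta^j_i)f = 0$ for $f \in \mathcal H_N(n,k)$. Swapping $i \leftrightarrow j$ in the definition $\mu^i_j = X^i_l Y^l_j - X^l_j Y^i_l - A^a_j B^i_a$ yields the identity
\begin{align*}
A^a_i B^j_a \;=\; X^j_l Y^l_i - X^l_i Y^j_l + k \delta^j_i,
\end{align*}
valid on $\mathcal H_N(n,k)$. Substituting this into the previous expression produces three terms: $(X^m)^i_j X^j_l Y^l_i$, $-(X^m)^i_j X^l_i Y^j_l$, and $k (X^m)^i_i = k \,\mathrm{Tr}(X^m)$. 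Since the matrix entries of $X$ mutually commute, both of the first two terms reduce to $\mathrm{Tr}(X^{m+1} Y)$ with the \emph{same} ordering (all $X$'s to the left of $Y$), so they cancel, and we are left with $k \, \mathrm{Tr}(X^m)$.

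To conclude, I would note that $\mathrm{Tr}(X^m)$ is a $\GL_N$-invariant polynomial, hence belongs to $\bC[\bA^{(N)}]$ under the identification $\bC[\gl_N]^{\GL_N} \cong \bC[\bA^{(N)}]$; in particular multiplication by $\mathrm{Tr}(X^m)$ preserves the semi-invariant subspace $\mathcal H_N(n,k) = \bC[V(N,n)]^{\GL_N, -k}$, so the statement makes sense. There is no real obstacle in this argument—the only subtlety is keeping track of the index swap in the moment map identity and verifying that the two trace terms arising from the $X$--$Y$ part of the constraint cancel rather than combining into a nonzero higher-order correction; this ultimately follows from the commutativity of matrix entries of $X$ with themselves.
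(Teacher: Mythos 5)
Your proof is correct and follows essentially the same route as the paper: you trace through $\rho(\mathbf 1\otimes z^m)=(X^m)^i_j A^a_i B^j_a$, substitute the moment map constraint to eliminate $A^a_i B^j_a$, and observe that the two $\mathrm{Tr}(X^{m+1}Y)$ contributions cancel by commutativity of the entries of $X$, leaving $k\,\mathrm{Tr}(X^m)$. The index bookkeeping is handled correctly, and the concluding remark that $\mathrm{Tr}(X^m)$ is $\GL_N$-invariant (hence lands in $\bC[\bA^{(N)}]$ and preserves the semi-invariant subspace) is a harmless clarification that the paper leaves implicit.
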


\begin{proof}
Applying the moment map relation $\mu^*_{\bC}(E^i_j)+k\delta^i_j=0$ to \eqref{eq: gl_n[z] action}, we get
\begin{align*}
    \rho(\mathbf{1}\otimes z^m)&=k\cdot\mathrm{Tr}(X^m)-(X^m)^i_jX^l_i\frac{\partial}{\partial X^l_j}+(X^m)^i_jX^j_l\frac{\partial}{\partial X^i_l}\\
    &=k\cdot\mathrm{Tr}(X^m)-(X^{m+1})^l_j\frac{\partial}{\partial X^l_j}+(X^{m+1})^i_l\frac{\partial}{\partial X^i_l}\\
    &=k\cdot\mathrm{Tr}(X^m).
\end{align*}
\end{proof}

In particular, we see that the action of $\gl_n[z]$ on the central fiber $\Gamma(h^{-1}(0),\mathcal L_{\det}^{\otimes k}|_{h^{-1}(0)})$ factors through $\mathfrak{sl}_n[z]\oplus \gl_1$. 

\bigskip We have seen that $\Gamma(h^{-1}(0),\mathcal L_{\det}^{\otimes k}|_{h^{-1}(0)})\cong \Gamma(\bGr^{N\omega_1}_{\GL_n},\mathcal O(k))$. According to \cite[4.1]{mirkovic2007quiver}, $\bGr^{N\omega_1}_{\GL_n}$ is a closed subvariety of a connected component $\Gr^{(N)}_{\GL_n}$ of the affine Grassmannian $\Gr_{\GL_n}$ which is the moduli ind-scheme of lattices $\Lambda\subset \bC(\!(z)\!)^{\oplus n}$ such that 
\begin{align*}
    \dim (\bC[\![z]\!]^{\oplus n}/z^{M}\Lambda)=N+Mn,\quad\forall M\gg 0.
\end{align*}
The $\GL_n[z]$-action on $\bGr^{N\omega_1}_{\GL_n}$ is the restriction of the natural $\GL_n[z]$-action on $\Gr^{(N)}_{\GL_n}$, and we have an induced $\gl_n[z]$-module map  \footnote{$\Gamma(\bGr^{N\omega_1}_{\GL_n},\mathcal O(k))^*$ is called an affine Demazure module of $\GL_n[z]$ \cite{fourier2006tensor,zhu2009affine}.}
\begin{align}\label{eq: truncation of integrable module}
    \Gamma(\Gr^{(N)}_{\GL_n},\mathcal O(k))\to \Gamma(\bGr^{N\omega_1}_{\GL_n},\mathcal O(k)).
\end{align}
The follow well-known result is a Borel-Weil-Bott type theorem for the Kac-Moody Lie algebra $\widehat{\mathfrak{gl}}(n)$.

\begin{proposition}[Theorem 2.5.5 in \cite{zhu2016introduction}]\label{prop: BWB}
The action of $\mathfrak{gl}_n[z]$ on $\Gamma(\Gr^{(N)}_{\GL_n},\mathcal O(k))$ extends to an action of $\widehat{\mathfrak{sl}}(n)_k\oplus \widehat{\gl}(1)_{kn}$. Moreover, $\Gamma(\Gr^{(N)}_{\GL_n},\mathcal O(k))$ is dual to $L_{k\varpi_{n-r}}(\widehat{\mathfrak{sl}}(n)_k)\otimes\mathrm{Fock}_{-kN}(\widehat{\gl}(1)_{kn})$, where $L_{k\varpi_{n-r}}(\widehat{\mathfrak{sl}}(n)_k)$ is the level $k$ integrable representation with highest weight $k\varpi_{n-r}$ of $\widehat{\mathfrak{sl}}(n)$ and $\mathrm{Fock}_{-kN}(\widehat{\gl}(1)_{kn})$ is the Fock module of $\widehat{\gl}(1)_{kn}$ of weight $-kN$. Here $r=N-\lfloor\frac{N}{n}\rfloor n$.
\end{proposition}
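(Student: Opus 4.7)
The plan is to invoke the Borel-Weil construction for the affine Kac-Moody group. First, I would construct the canonical central extension $\widetilde{L\GL_n}$ of the loop group $L\GL_n = \GL_n(\!(z)\!)$ using the Tate (determinant) cocycle; under this extension the determinant line bundle $\mathcal{O}(1)$ on $\Gr_{\GL_n}$ becomes $\widetilde{L\GL_n}$-equivariant, so that $\Gamma(\Gr^{(N)}_{\GL_n}, \mathcal{O}(k))$ acquires an action of $\widehat{\gl}(n)$ at level $k$ (i.e.\ the central element acts by the scalar $k$). This action prolongs the $\gl_n[z]$-action obtained from the $\GL_n[z]$-action on $\Gr_{\GL_n}$ discussed just before the proposition.

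Next, I would exploit the commuting pair $\widehat{\mathfrak{sl}}(n) \oplus \widehat{\gl}(1) \hookrightarrow \widehat{\gl}(n)$ coming from the Lie algebra decomposition $\gl_n = \mathfrak{sl}_n \oplus \bC \cdot \mathbf{1}_n$. Restricting the Tate cocycle to each subalgebra, one computes that the $\widehat{\mathfrak{sl}}(n)$-level equals $k$ while the $\widehat{\gl}(1)$-level equals $kn$; the extra factor of $n$ arises from the trace identity $\mathrm{Tr}(\mathbf{1}_n \cdot \mathbf{1}_n) = n$ under the embedding $\gl_1 \ni x \mapsto x \mathbf{1}_n \in \gl_n$. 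This gives the correct assignment of levels to the two factors.

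Now I would establish integrability and simplicity. Every section of $\mathcal{O}(k)$ is supported on a finite Schubert subvariety $\bGr^{M\omega_1}_{\GL_n}\cdot z^{-M'}$ inside $\Gr^{(N)}_{\GL_n}$ (for $M, M' \gg 0$), so the unipotent root subgroups of $\widetilde{L\GL_n}$ act through their restriction to such finite Schubert varieties, forcing the real-root Chevalley generators of $\widehat{\mathfrak{sl}}(n)$ to be locally nilpotent on $\Gamma(\Gr^{(N)}_{\GL_n}, \mathcal{O}(k))$. Combined with the Kac-Moody Borel-Weil-Bott theorem (Kumar, Mathieu), this implies that the dual $\Gamma(\Gr^{(N)}_{\GL_n}, \mathcal{O}(k))^*$ is an irreducible integrable highest-weight module, whose highest weight is read off from the Iwahori-extremal $T$-fixed point on $\Gr^{(N)}_{\GL_n}$. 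That extremal point corresponds to the dominant coweight $\mu_0 = L\varpi_n + \varpi_r$ (with $N = nL + r$), the minimal-energy coweight already identified in the proof of Proposition~\ref{prop: ground states}. Applying the standard duality $\lambda \mapsto -w_0(\lambda)$ on $\mathfrak{sl}_n$ converts the $\widehat{\mathfrak{sl}}(n)_k$-weight $k\varpi_r$ into highest weight $k\varpi_{n-r}$; pairing $\mathbf{1}_n$ with $\mu_0$ then multiplying by $-k$ (for dualization, the level being absorbed into the scale of the Heisenberg generators at level $kn$) yields the $\widehat{\gl}(1)_{kn}$-charge $-kN$.

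The main obstacle is the rigorous passage from the inductive system of Demazure modules $\Gamma(\bGr^{M\omega_1}_{\GL_n}, \mathcal{O}(k))^*$ to the full integrable module without losing simplicity, and verifying that the resulting module is not merely a subquotient of the expected irreducible. This requires the cohomology vanishing and Frobenius-splitting machinery developed by Kumar, Mathieu, and others for Kac-Moody flag varieties, and is precisely the content of \cite[Theorem~2.5.5]{zhu2016introduction} to which the proposition defers. An alternative route, which avoids the direct appeal to Kac-Moody BWB, would be to compare characters: Theorem~\ref{thm: Hilbert series} together with Lemma~\ref{lem: pushforward of line bundle} determines $\mathrm{ch}(\Gamma(\bGr^{N\omega_1}_{\GL_n}, \mathcal{O}(k)))$ as a transformed Hall-Littlewood polynomial, and passing to the limit $N \to \infty$ (with the appropriate normalization of the component) matches the Weyl-Kac character of $L_{k\varpi_{n-r}}(\widehat{\mathfrak{sl}}(n)_k)\otimes \mathrm{Fock}_{-kN}(\widehat{\gl}(1)_{kn})^*$; irreducibility then follows from matching characters with an a~priori integrable highest-weight cyclic quotient.
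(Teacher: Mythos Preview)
The paper does not prove this proposition at all: it is stated as a citation of \cite[Theorem~2.5.5]{zhu2016introduction} and is used as a black box throughout. So there is no ``paper's own proof'' to compare against; your proposal is an outline of how one would actually establish the cited result.

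Your sketch is a reasonable summary of the standard argument behind the Kac--Moody Borel--Weil--Bott theorem specialized to $\widehat{\gl}(n)$, and you correctly identify that the hard input is the cohomology vanishing and normality results of Kumar and Mathieu for Schubert varieties in Kac--Moody flag varieties. Two minor points of caution. First, the ``alternative route'' via characters that you mention at the end is somewhat circular in the context of this paper: Theorem~\ref{thm: Hilbert series} is proved in the paper using \cite[Corollary~B.3]{moosavian2021towards}, which in turn relies on the same circle of ideas (rational singularities of $\bGr^{N\omega_1}_{\GL_n}$), so it does not provide an independent verification. Second, your description of how the highest weight $k\varpi_{n-r}$ arises is slightly imprecise: the relevant duality is not the finite Weyl involution $-w_0$ on $\mathfrak{sl}_n$-weights but rather the passage from a lowest-weight module (sections) to its restricted dual as a highest-weight module, together with the identification of the extremal weight at the $T$-fixed point $z^{L\varpi_n+\varpi_r}$. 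The bookkeeping works out but deserves more care than a one-line appeal to $-w_0$.
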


It is known that the lowest energy subspace $L_{k\varpi_{n-r}}(\widehat{\mathfrak{sl}}(n)_k)^*_0\subset L_{k\varpi_{n-r}}(\widehat{\mathfrak{sl}}(n)_k)^*$ is isomorphic to $L_{k\varpi_r}({\mathfrak{sl}}_n)$ (irreducible $\mathfrak{sl}_n$ module of highest weight $k\varpi_r$), and $L_{k\varpi_{n-r}}(\widehat{\mathfrak{sl}}(n)_k)^*$ is generated from $L_{k\varpi_{n-r}}(\widehat{\mathfrak{sl}}(n)_k)^*_0$ by the actions of $\mathfrak{sl}_n[z]$. Since $L_{k\varpi_r}({\mathfrak{sl}}_n)$ is an irreducible $\mathfrak{sl}_n$-module, it follows that for every nonzero $v\in L_{k\varpi_{n-r}}(\widehat{\mathfrak{sl}}(n)_k)^*_0$, the action map $U(\mathfrak{sl}_n[z])\to U(\mathfrak{sl}_n[z])\cdot v$ is surjective onto $L_{k\varpi_{n-r}}(\widehat{\mathfrak{sl}}(n)_k)^*$.

According to \cite[Ch.XVIII]{mathieu1988formules}, the map \eqref{eq: truncation of integrable module} is surjective. Moreover, \eqref{eq: truncation of integrable module} maps the lowest energy subspace $L_{k\varpi_{n-r}}(\widehat{\mathfrak{sl}}(n)_k)^*_0$ isomorphically onto the lowest energy subspace $\Gamma(\bGr^{N\omega_1}_{\GL_n},\mathcal O(k))_0\subset \Gamma(\bGr^{N\omega_1}_{\GL_n},\mathcal O(k))$. Thus for every nonzero $v\in \Gamma(\bGr^{N\omega_1}_{\GL_n},\mathcal O(k))_0$, the action map $U(\mathfrak{sl}_n[z])\to U(\mathfrak{sl}_n[z])\cdot v$ is surjective onto $\Gamma(\bGr^{N\omega_1}_{\GL_n},\mathcal O(k))$. Using graded Nakayama lemma, one can extend the above surjectivity results to $\mathcal H_N(n,k)$, namely we have the following.

\begin{corollary}\label{cor: cyclic}
For every nonzero $v\in \mathcal H_N(n,k)_0$, the action map $U(\gl_n[z])\to U(\gl_n[z])\cdot v$ is surjective onto $\mathcal H_N(n,k)$. In particular, $\mathcal H_N(n,k)$ is a cyclic $\gl_n[z]$-module.
\end{corollary}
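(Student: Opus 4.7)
The plan is to combine the cyclicity already established on the central fiber $h^{-1}(0)\cong \bGr^{N\omega_1}_{\GL_n}$ with a graded Nakayama argument based on the Hilbert--Chow map $h:\cM(N,n)\to \bA^{(N)}$. Set $M:=U(\gl_n[z])\cdot v\subset \mathcal H_N(n,k)$; the goal is to prove $M=\mathcal H_N(n,k)$.

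First I would upgrade $M$ to a graded $\bC[\bA^{(N)}]$-submodule. By Proposition \ref{prop: image of gl_1[z]}, the abelian subalgebra $\mathbf{1}\otimes z\bC[z]\subset \gl_n[z]$ acts on $\mathcal H_N(n,k)$ as multiplication by $k\,\mathrm{Tr}(X^m)$ for $m\ge 1$; via Newton's identities these power sums generate $\bC[\bA^{(N)}]$, so $M$ is stable under multiplication by $\bC[\bA^{(N)}]$. Moreover, by Proposition \ref{prop: H(N,n) = global section} and Lemma \ref{lem: pushforward of line bundle}, $\mathcal H_N(n,k)=\Gamma(\bA^{(N)},h_*\mathcal L_{\det}^{\otimes k})$ is a graded $\bC[\bA^{(N)}]$-module that is bounded below in the energy grading.

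Next, specialize at the origin. The restriction map
$$\pi:\mathcal H_N(n,k)\twoheadrightarrow \Gamma(h^{-1}(0),\mathcal L_{\det}^{\otimes k}|_{h^{-1}(0)})\cong \Gamma(\bGr^{N\omega_1}_{\GL_n},\mathcal O(k))$$
is surjective with kernel $\mathfrak m_0\cdot \mathcal H_N(n,k)$, where $\mathfrak m_0\subset \bC[\bA^{(N)}]$ is the graded maximal ideal at $0$ (generated by the positive-weight power sums). Since $\pi$ is $\bC^{\times}_q$-equivariant and $\cM(N,n)$ has the same $\bC^{\times}_q$-fixed locus as its central fiber (see the proof of Proposition \ref{prop: ground states}), $\pi$ preserves the energy grading and both spaces have the same ground state energy. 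Moreover, Remark \ref{rmk: ground states} shows that the further restriction from $\mathcal H_N(n,k)_0$ to $\Gamma(F_\lambda,\mathcal L_{\det}^{\otimes k}|_{F_\lambda})$ is an isomorphism; since this further restriction factors through $\pi$, we conclude that $\pi(v)\ne 0$ is a nonzero ground state of $\Gamma(\bGr^{N\omega_1}_{\GL_n},\mathcal O(k))$.

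By the cyclicity stated immediately before the corollary---itself a consequence of Proposition \ref{prop: BWB}, the surjectivity of \eqref{eq: truncation of integrable module}, and cyclicity of the lowest-energy subspace of $L_{k\varpi_{n-r}}(\widehat{\mathfrak{sl}}(n)_k)^*$ under $U(\mathfrak{sl}_n[z])$---we have $U(\mathfrak{sl}_n[z])\cdot \pi(v)=\Gamma(\bGr^{N\omega_1}_{\GL_n},\mathcal O(k))$, hence $\pi(M)=\Gamma(\bGr^{N\omega_1}_{\GL_n},\mathcal O(k))$, equivalently $M+\mathfrak m_0\cdot \mathcal H_N(n,k)=\mathcal H_N(n,k)$. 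Applying graded Nakayama to $\mathcal H_N(n,k)/M$---a bounded-below graded $\bC[\bA^{(N)}]$-module satisfying $\mathfrak m_0\cdot(\mathcal H_N(n,k)/M)=\mathcal H_N(n,k)/M$---we conclude $\mathcal H_N(n,k)/M=0$, i.e.\ $M=\mathcal H_N(n,k)$. I do not anticipate any real obstacle: the only delicate input is the injectivity of $\pi$ on ground states, which is directly supplied by Remark \ref{rmk: ground states}; the other pieces ($\bC[\bA^{(N)}]$-stability of $M$ via the central $\gl_1[z]$, coherence of $h_*\mathcal L_{\det}^{\otimes k}$, and central-fiber cyclicity) are already in place.
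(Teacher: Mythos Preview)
Your proposal is correct and follows essentially the same approach as the paper: both set $M=U(\gl_n[z])\cdot v$, use Proposition \ref{prop: image of gl_1[z]} to make $M$ a $\bC[\bA^{(N)}]$-submodule, invoke the central-fiber cyclicity established just before the corollary to get $M+\mathfrak m_0\,\mathcal H_N(n,k)=\mathcal H_N(n,k)$, and conclude by a graded Nakayama argument. The only cosmetic difference is that the paper unrolls the graded Nakayama step explicitly as a minimal-degree contradiction argument, whereas you cite it as a black box.
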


\begin{proof}
Denote by $M:=U(\gl_n[z])\cdot v$ the subspace generated from $v$ by the action of $\gl_n[z]$. As we have explained above, $M/\mathfrak{m}M\to \mathcal H_N(n,k)/\mathfrak{m}\mathcal H_N(n,k)$ is surjective where $\mathfrak m$ is the maximal ideal of $\bC[\bA^{(N)}]$ corresponding to the origin $0\in \bA^{(N)}$. The action $\gl_n[z]$ on $\mathcal H_N(n,k)$ is $\bC_q^{\times}$-equivariant, thus $M$ is a graded subspace. It follows from graded Nakayama lemma that $M\to \mathcal H_N(n,k)$ is surjective. We present the detail below for the convenience of reader. 

Suppose that $M\neq \mathcal H_N(n,k)$, then let $w\in \mathcal H_N(n,k)\setminus M$ be a homogeneous element such that its energy grading is smallest among $\mathcal H_N(n,k)\setminus M$. The existence of such element is a consequence of the fact that the energy grading of $\mathcal H_N(n,k)$ is bounded from below.

We claim that the image of $w$ under the specialization map $\mathcal H_N(n,k)\to \Gamma(h^{-1}(0),\mathcal L_{\det}^{\otimes k}|_{h^{-1}(0)})$ can not be zero. Otherwise $w\in \mathfrak m\cdot \mathcal H_N(n,k)$. Write $w=\sum_{i}\alpha_i \cdot h_i$ such that $\alpha_i$'s are homogeneous elements in $\mathfrak{m}$ and $h_i$'s are homogeneous elements in $\mathcal H_N(n,k)$. Since $\alpha_i$ has positive energy, then $h_i$ must belongs to $M$ by the choice of $w$. By the Proposition \ref{prop: image of gl_1[z]}, for every $\alpha_i$ there exists $\beta_i\in U(\gl_1[z])$ such that $\rho(\beta_i)=\alpha_i$, thus $w\in M$, a contradiction. This proves our claim.

Next, let $\bar{w}$ be the image of $w$ in $\Gamma(h^{-1}(0),\mathcal L_{\det}^{\otimes k}|_{h^{-1}(0)})$, and let $\bar{v}$ be the image of $v$ in the lowest energy subspace $\Gamma(h^{-1}(0),\mathcal L_{\det}^{\otimes k}|_{h^{-1}(0)})_0$. According to our previous discussion, there exists $g\in U(\mathfrak{sl}_n[z])$ such that $\bar{w}=g\cdot \bar{v}$. We note that
\begin{align*}
    \deg (w)=\deg (\bar w)=\deg (g\cdot \bar{v})=\deg(g\cdot v),
\end{align*}
thus $w-g\cdot v$ is a homogeneous element in $\mathcal H(n,k)$ which has the same degree with $w$. We also note that $w-g\cdot v\notin M$ because $w\notin M$ and $g\cdot v\in M$. Then according to our claim, the image of $w-g\cdot v$ in $\Gamma(h^{-1}(0),\mathcal L_{\det}^{\otimes k}|_{h^{-1}(0)})$ is nonzero, this contradicts with $\bar{w}=g\cdot \bar{v}$. Therefore $M= \mathcal H_N(n,k)$.
\end{proof}

\begin{corollary}\label{cor: tensor product surjective}
The multiplication map $\Gamma(\cM(N,n),\mathcal L_{\det})^{\otimes k}\to \Gamma(\cM(N,n),\mathcal L_{\det}^{\otimes k})$ is surjective.
\end{corollary}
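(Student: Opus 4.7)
The plan is to combine the cyclicity result of Corollary \ref{cor: cyclic} with an explicit construction of a nonzero ground state lying in the image of the multiplication map. The key point is that the multiplication map
\[
\mu : \Gamma(\cM(N,n),\mathcal L_{\det})^{\otimes k} \longrightarrow \Gamma(\cM(N,n),\mathcal L_{\det}^{\otimes k}) = \mathcal H_N(n,k)
\]
is $\gl_n[z]$-equivariant. This follows because, under the identification $\mathcal H_N(n,k)\subset \bC[V(N,n)]$, the $\gl_n[z]$-action is by the derivations $\rho(E^a_b\otimes z^m) = A^a_i(X^m)^i_j\partial/\partial A^b_j$ of Lemma \ref{lem: gl_n[z] action}, so Leibniz's rule realizes $\mu$ as a map of $\gl_n[z]$-modules with the source carrying the standard tensor product (coproduct) structure.

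Next, I would exhibit a nonzero ground state in $\mathcal H_N(n,k)_0$ that lies in the image of $\mu$. Looking at the explicit formula \eqref{eq: ground states general n} for the ground states of $\mathcal H_N(n,k)_0$,
\[
\ket{a_1,\cdots,a_N}=\left[\epsilon^{i_{1}\cdots i_{N}}\prod_{j=1}^{N}(A^{a_{j}} X^{\lfloor(j-1)/n\rfloor})_{i_{j}}\right]^{k}\ket{\emptyset},
\]
each such vector is manifestly the $k$-fold product of the corresponding level $1$ ground state $\ket{a_1,\cdots,a_N}_{k=1}\in \mathcal H_N(n,1)_0$. Since $\mathcal H_N(n)=\bC[V(N,n)]$ is an integral domain, the $k$-th power of any nonzero element is nonzero, so this product is a nonzero element of $\mathcal H_N(n,k)$. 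A quick degree check against Proposition \ref{prop: ground states}: the energy of a level $1$ ground state is $\tfrac{1}{2}L(L-1)n+rL$, so its $k$-th power has energy $\tfrac{k}{2}L(L-1)n+krL$, matching the ground state energy of $\mathcal H_N(n,k)_0$. Hence this nonzero element lies in $\mathcal H_N(n,k)_0 \cap \mathrm{Im}(\mu)$.

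To conclude, apply Corollary \ref{cor: cyclic}: $\mathcal H_N(n,k)$ is generated as a $U(\gl_n[z])$-module by any nonzero element of $\mathcal H_N(n,k)_0$. The image of $\mu$ is a $\gl_n[z]$-submodule containing such an element, hence equals $\mathcal H_N(n,k)$. There is no substantive obstacle here; the only points that require a brief sanity check are the $\gl_n[z]$-equivariance (handled by Leibniz) and the nonvanishing of the $k$-th power (handled by the domain property of $\bC[V(N,n)]$).
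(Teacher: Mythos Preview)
Your proof is correct and follows the same overall strategy as the paper: the multiplication map is $\gl_n[z]$-equivariant, so by Corollary~\ref{cor: cyclic} it suffices to produce one nonzero ground state in the image. The difference lies only in how that nonzero ground state is exhibited. You use the explicit formula~\eqref{eq: ground states general n} together with the fact that $\bC[V(N,n)]$ is an integral domain, so the $k$-th power of a nonzero level~$1$ ground state is nonzero and (by the energy count) lies in $\mathcal H_N(n,k)_0$. The paper instead argues geometrically: it identifies the ground states with sections over the closed orbit $F_\lambda$ via Remark~\ref{rmk: ground states}, and then observes that the multiplication map is nonzero because it is an isomorphism on the fiber of $\mathcal L_{\det}$ at the fixed point $\{z^\lambda\}$. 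Your route is more elementary and self-contained; the paper's route avoids relying on the explicit ground state formula borrowed from \cite{Dorey-Tong-Turner} and stays within the geometric language used throughout. Either argument is perfectly adequate here.
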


\begin{proof}
According to Corollary \ref{cor: cyclic}, it suffices to show that the map between ground states
$$\Gamma(\cM(N,n),\mathcal L_{\det})_0^{\otimes k}\to \Gamma(\cM(N,n),\mathcal L_{\det}^{\otimes k})_0$$
is nonzero. Using Remark \ref{rmk: ground states}, it is equivalent to show that $\Gamma(F_\lambda,\mathcal L_{\det}|_{F_\lambda})^{\otimes k}\to \Gamma(F_\lambda,\mathcal L_{\det}^{\otimes k}|_{F_\lambda})$ is nonzero. Consider the projections to the highest weight spaces
\begin{align*}
    \Gamma(F_\lambda,\mathcal L_{\det}|_{F_\lambda})\to \mathcal L_{\det}|_{\{z^\lambda\}},\qquad
    \Gamma(F_\lambda,\mathcal L_{\det}^{\otimes k}|_{F_\lambda})\to \mathcal L_{\det}^{\otimes k}|_{\{z^\lambda\}}
\end{align*}
which are both nonzero. The tensor product map for the fiber at $\{z^\lambda\}$ is an isomorphism, i.e. $(\mathcal L_{\det}|_{\{z^\lambda\}})^{\otimes k}\cong \mathcal L_{\det}^{\otimes k}|_{\{z^\lambda\}}$. This implies that $\Gamma(F_\lambda,\mathcal L_{\det}|_{F_\lambda})^{\otimes k}\to \Gamma(F_\lambda,\mathcal L_{\det}^{\otimes k}|_{F_\lambda})$ is nonzero, whence the corollary follows.
\end{proof}

\begin{remark}\label{rmk:explicit spanning set}
Combine Corollary \ref{cor: tensor product surjective} and Proposition \ref{prop: fock space}, we see that $\mathcal H_N(n,k)$ is spanned by elements of form
\begin{align}\label{eq:explicit spanning set}
    \prod_{\alpha=1}^k \left[\epsilon^{i_{1}i_{2}\cdots i_{N}}(A^{a^{(\alpha)}_1}X^{m^{(\alpha)}_1})_{i_{1}}\cdots (A^{a^{(\alpha)}_N}X^{m^{(\alpha)}_N})_{i_{N}}\right]\ket{\emptyset},\quad \text{where }\;  1\le a^{(\alpha)}_j\le n,\; m^{(\alpha)}_j\in \mathbb N.
\end{align}
\end{remark}

\subsection{DDCA action on \texorpdfstring{$\mathcal H_N(n,k)$}{Hilbert space}}\label{subsec:DDCA module}

The deformed double current algebra (DDCA) was defined by Guay for type A in \cite{guay2007affine} and later defined for other types by Guay and Yang in \cite{guay2017deformed}. In this paper we focus on the type A, and we use a slightly different version of DDCA, which was introduced in \cite{Gaiotto-Rapcek-Zhou}. The relation between our definition and Guay's original definition is pointed out in \cite[2.5]{Gaiotto-Rapcek-Zhou}.

\begin{definition}[{see \cite[Definition 2.0.5]{Gaiotto-Rapcek-Zhou}}]\label{def:DDCA}
The deformed double current algebra $\mathsf A^{(n)}$ is defined to be the $\mathbb C[\epsilon_1,\epsilon_2]$-algebra generated by $\{\mathsf T_{p, q}(x), \mathsf t_{p,q}\:|\: x \in \gl_{n}, (p, q) \in {\bN}^{2}\}$ with the relations \eqref{eqn: A0}-\eqref{eqn: A4} as follows.
\begin{equation}\label{eqn: A0}
    \mathsf T_{r,s}(1)=\epsilon_2 \mathsf t_{r,s},\; \mathsf T_{r,s}(ax+by)=a\mathsf T_{r,s}(x)+b\mathsf T_{r,s}(y),\;\forall (a,b)\in \mathbb C^2,\tag{A0}
\end{equation}
\begin{equation}\label{eqn: A1}
    [\mathsf T_{0,0}(x),\mathsf T_{0,m}(y)]=\mathsf T_{0,m}([x,y]),\; [\mathsf T_{0,0}(x),\mathsf t_{0,m}]=0,\tag{A1}
\end{equation}
\begin{equation}\label{eqn: A2}
\text{ for }p+q\le 2,
\begin{cases}
&[\mathsf t_{p,q},\mathsf T_{r,s}(x)]=(sp-rq)\mathsf T_{p+r-1,q+s-1}(x),\\
&[\mathsf t_{p,q},\mathsf t_{r,s}]=(sp-rq)\mathsf t_{p+r-1,q+s-1},
\end{cases}
\tag{A2}
\end{equation}
To write down \eqref{eqn: A3}-\eqref{eqn: A4}, we introduce notation $\epsilon_3=-n\epsilon_1-\epsilon_2$, and $$\mathsf T_{u,r,t,s}(x\otimes y):=\mathsf T_{u,r}(x)\mathsf T_{t,s}(y)$$ for $x,y\in \mathfrak{gl}_n$, and $\Omega:=E^a_b\otimes E^b_a\in \mathfrak{gl}_n^{\otimes 2}$, then
\begin{equation}\label{eqn: A3}
\begin{cases}
    &\begin{aligned}
        [\mathsf T_{1,0}(x),\mathsf T_{0,m}(y)]=&\mathsf T_{1,m}([x,y])-\frac{\epsilon_3 m}{2}\mathsf T_{0,m-1}(\{x,y\})-n\epsilon_1\mathrm{tr}(y) \mathsf T_{0,m-1}(x)\\
&+\epsilon_1\sum_{j=0}^{m-1}\frac{j+1}{m+1}\mathsf T_{0,j,0,m-1-j}(([x,y]\otimes 1)\cdot \Omega)\\
&+\epsilon_1 \sum_{j=0}^{m-1}\mathsf T_{0,j,0,m-1-j}((x\otimes y-xy\otimes 1)\cdot \Omega)
    \end{aligned}\\
    &[\mathsf T_{1,0}(x),\mathsf t_{0,m}]=m \mathsf T_{0,m-1}(x),
\end{cases}\tag{A3}
\end{equation}
\begin{equation}\label{eqn: A4}
    \begin{split}
        &[\mathsf t_{3,0},\mathsf t_{0,m}]=3m \mathsf t_{2,m-1}+\frac{m(m-1)(m-2)}{4}(\epsilon_1^2-\epsilon_2\epsilon_3) \mathsf t_{0,m-3}\\
&-\frac{3\epsilon_1}{2}\sum_{j=0}^{m-3}(j+1)(m-2-j)(\mathsf T_{0,j,0,m-3-j}(\Omega)+\epsilon_1\epsilon_2\mathsf t_{0,j}\mathsf t_{0,m-3-j}), \;(m\ge 3)
    \end{split}\tag{A4}
\end{equation}
\end{definition}

\begin{remark}
Let us take $x=E^a_b$, $y=E^c_d$ in \eqref{eqn: A3}, then the 2nd and the 3rd line of \eqref{eqn: A3} can be written explicitly as follows:
\begin{multline}
\sum_{j=0}^{m-1}\frac{j+1}{m+1}\mathsf T_{0,j,0,m-1-j}(([x,y]\otimes 1)\cdot \Omega)+\sum_{j=0}^{m-1}\mathsf T_{0,j,0,m-1-j}((x\otimes y-xy\otimes 1)\cdot \Omega)\\
=\sum_{j=0}^{m-1}\mathsf T_{0,j}(E^a_d)\mathsf T_{0,m-1-j}(E^c_b)-\sum_{j=0}^{m-1}\frac{j+1}{m+1}\left(\delta^a_d\mathsf T_{0,j}(E^c_f)\mathsf T_{0,m-1-j}(E^f_b)+\delta^c_b\mathsf T_{0,m-1-j}(E^a_f)\mathsf T_{0,j}(E^f_d)\right).
\end{multline}
\end{remark}

We note that the relations \eqref{eqn: A0}-\eqref{eqn: A4} become linear after setting $\epsilon_1=0$. In fact, $\mathsf A^{(n)}/(\epsilon_1)$ is isomorphic to the universal enveloping algebra of the Lie algebra $D_{\epsilon_2}(\mathbb C)\otimes \mathfrak{gl}_K^{\sim}$. Here $D_{\epsilon_2}(\mathbb C)$ is the algebra of $\epsilon_2$-differential operators on $\bC$, i.e. $D_{\epsilon_2}(\mathbb C)\cong \bC[\epsilon_2]\langle x,y\rangle/([y,x]=\epsilon_2)$. $D_{\epsilon_2}(\mathbb C)\otimes \mathfrak{gl}_K^{\sim}$ is defined to be the $\bC[\epsilon_2]$-submodule of $D_{\epsilon_2}(\mathbb C)\otimes \mathfrak{gl}_K[\epsilon^\pm]$ generated by $D_{\epsilon_2}(\mathbb C)\otimes \mathfrak{sl}_K$ and $\frac{1}{\epsilon_2}\cdot D_{\epsilon_2}(\mathbb C)\otimes 1$. $D_{\epsilon_2}(\mathbb C)\otimes \mathfrak{gl}_K^{\sim}$ is actually a Lie subalgebra of $D_{\epsilon_2}(\mathbb C)\otimes \mathfrak{gl}_K[\epsilon^\pm]$ \cite[Lemma 2.0.2]{Gaiotto-Rapcek-Zhou}. We can regard $D_{\epsilon_2}(\mathbb C)\otimes \mathfrak{gl}_K^{\sim}$ as a version of ``double current algebra'', meaning that it is obtained from $\gl_K$ by joining two variables $x$ and $y$. Since $\mathsf A^{(n)}$ is a deformation of $U(D_{\epsilon_2}(\mathbb C)\otimes \mathfrak{gl}_K^{\sim})$, this explains the terminology ``deformed double current algebra''.

\bigskip According to \cite{hu2023quantum,Gaiotto-Rapcek-Zhou}, the deformed double current algebra $\mathsf A^{(n)}$ acts on $\mathcal H_N(n,k)$ via the following assignment:
\begin{align}\label{DDCA action}
    \epsilon_1\mapsto 1,\quad\epsilon_2\mapsto k,\quad \mathsf T_{p,q}(E^a_b)\mapsto A^a\mathrm{Sym}(Y^p X^q)B_b, \quad \mathsf t_{p,q}\mapsto \mathrm{Tr}\:\mathrm{Sym}(Y^p X^q).
\end{align}
Here $\mathrm{Sym}(\cdots)$ means averaging over permutations, and we have omitted the dummy indices that are contracted, for example $$\mathsf T_{1,1}(E^a_b)\mapsto \frac{1}{2}A^a_iY^i_j X^j_l B^{l}_b+\frac{1}{2}A^a_iX^i_j Y^j_lB^{l}_b.$$
We note that $\{\mathsf T_{0,p}(E^a_b)\:|\: 1\le  a,b\le n,p\in \mathbb N\}$ generates a subalgebra $U(\gl_n[z])\subset \mathsf A^{(n)}$, and the action of $U(\gl_n[z])$ agrees with the one defined by \eqref{eq: gl_n[z] action}.

\bigskip There is a natural grading on $\mathsf A^{(n)}$ given by \cite[(2.6)]{Gaiotto-Rapcek-Zhou}
\begin{align}\label{grading on A}
    \deg (\mathsf T_{p,q}(E^a_b))=\deg (\mathsf t_{p,q})= q-p,\quad \deg(\epsilon_1)=\deg (\epsilon_2)=0.
\end{align}
Under this grading, the action of $\mathsf A^{(n)}$ on $\mathcal H_N(n,k)$ is $\bC_q^{\times}$-equivariant. 

\bigskip $\mathsf A^{(n)}$ has a skew-linear anti-involution $\sigma:\mathsf A^{(n)}\cong \mathsf A^{(n)}$ which is determined by 
\begin{align}
    \sigma(\epsilon_1)=\epsilon_1,\quad\sigma(\epsilon_2)=\epsilon_2,\quad \sigma(\mathsf T_{p,q}(E^a_b))=\mathsf T_{q,p}(E^b_a),\quad \sigma(\mathsf t_{p,q})=\mathsf t_{q,p}.
\end{align}
Skew-linearity means $\sigma(c\cdot \mathfrak a)=\bar{c}\cdot\sigma(\mathfrak a)$ for all $c\in \bC$ and all $\mathfrak{a}\in \mathsf A^{(n)}$. Anti-involution means that $\sigma(\mathfrak{a}\cdot \mathfrak{b})=\sigma(\mathfrak{b})\cdot\sigma(\mathfrak{a})$ for all $\mathfrak{a},\mathfrak{b}\in \mathsf A^{(n)}$ and $\sigma^2=\mathrm{id}$.

\begin{proposition}\label{prop: skew-invariance}
The Hermitian inner product $\langle\cdot | \cdot\rangle$ on $\mathcal H_N(n,k)$ is skew-invariant under the action of $\mathsf A^{(n)}$, i.e.
\begin{align}\label{eq: skew-invariance}
    \langle \sigma(\mathfrak{a})\cdot w | {v}\rangle=\langle w | \mathfrak{a}\cdot v\rangle
\end{align}
for all $\ket{w},\ket{v}\in \mathcal H_N(n,k)$ and all $\mathfrak{a}\in \mathsf A^{(n)}$.
\end{proposition}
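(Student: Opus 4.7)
The plan is to lift \eqref{eq: skew-invariance} to the operator identity $\Phi(\sigma(\mathfrak a))=\Phi(\mathfrak a)^\dag$ on $\mathcal H_N(n,k)$, where $\Phi$ denotes the assignment \eqref{DDCA action} and $\dag$ is the Hermitian adjoint associated to the inner product of Section \ref{subsec:Hermitian inner product}; this implies the proposition at once. To see that $\dag$ makes sense on $\mathcal H_N(n,k)$, note that every operator in the image of $\Phi$ is built from $\GL_N$-equivariant words in $X,Y,A,B$, and the relation $\mu^*_{\bC}(E^i_j)^\dag=\mu^*_{\bC}(E^j_i)$ shows that Hermitian conjugation preserves $\GL_N$-equivariance, hence preserves the semi-invariant subspace $\mathcal H_N(n,k)=\mathcal H_N(n)^{\GL_N,-k}$. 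Orthogonality of the $\GL_N$-isotypic decomposition of $\mathcal H_N(n)$ with respect to $\langle\cdot|\cdot\rangle$ then ensures that the usual adjunction identity descends to the restricted inner product.

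Since $\sigma$ and $\dag$ are both skew-linear anti-involutions and $\Phi$ is a $\bC$-algebra homomorphism, a routine induction on word length reduces the operator identity to a check on the generators $\epsilon_1,\epsilon_2,\mathsf T_{p,q}(E^a_b),\mathsf t_{p,q}$. The scalars $\epsilon_1,\epsilon_2$ are real and fixed by both involutions, and $\mathsf t_{p,q}$ reduces to the $\mathsf T_{p,q}$ case by taking a trace. The core calculation is therefore the identity
\begin{align*}
\bigl(A^a_i\,\mathrm{Sym}(Y^p X^q)^i_j\, B^j_b\bigr)^\dag \;=\; A^b_j\,\mathrm{Sym}(Y^q X^p)^j_i\, B^i_a \;=\; \Phi(\mathsf T_{q,p}(E^b_a)).
\end{align*}
Using $(X^i_j)^\dag=Y^j_i$ and $(A^a_i)^\dag=B^i_a$, Hermitian conjugation reverses the order of each monomial in $\mathrm{Sym}(Y^p X^q)$, swaps $X\leftrightarrow Y$ at the matrix level, and transposes the two free $\mathfrak{gl}_N$-indices. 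The key observation is that averaging over all orderings is invariant under order-reversal, so after the $X\leftrightarrow Y$ swap the symmetrized matrix becomes exactly $\mathrm{Sym}(Y^q X^p)^T$; the transpose is then consumed by the concurrent exchange $(a,b)\mapsto (b,a)$ of flavor indices, producing $\Phi(\sigma(\mathsf T_{p,q}(E^a_b)))$.

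The main (mild) obstacle is the bookkeeping of $\mathfrak{gl}_N$-indices, together with the observation that the symmetric average over all orderings of a noncommutative product is invariant under reversal of each monomial; once this is noted the computation is mechanical, as $A,B$ commute with $X,Y$ and no further reordering of factors is required.
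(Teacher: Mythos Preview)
Your proposal is correct and takes essentially the same approach as the paper: both reduce the skew-invariance to the operator identity $\rho_N(\sigma(\mathfrak a))=\rho_N(\mathfrak a)^\dag$ on $\mathcal H_N(n,k)$. The paper's proof simply asserts this identity in one line, whereas you supply the verification on generators (the symmetrization-reversal argument and the index bookkeeping), so your write-up is a faithful expansion of the paper's.
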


\begin{proof}
Let $\rho_N:\mathsf A^{(n)}\to \mathrm{End}(\mathcal H_N(n,k))$ be the algebra map determined by \eqref{DDCA action}, then we see that $$\rho_N(\sigma(\mathfrak a))=\rho_N(\mathfrak a)^\dag$$ for all $\mathfrak{a}\in \mathsf A^{(n)}$, whence the proposition follows.
\end{proof}

\begin{corollary}\label{cor:simplicity}
$\mathcal H_N(n,k)$ is a simple module of $\mathsf A^{(n)}$.
\end{corollary}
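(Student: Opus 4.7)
The plan is to combine the Hermitian skew-invariance (Proposition \ref{prop: skew-invariance}) with the cyclicity of ground states (Corollary \ref{cor: cyclic}) and the irreducibility of $\mathcal{H}_N(n,k)_0$ as a $\gl_n$-module (Remark \ref{rmk: ground states}). Let $M\subset\mathcal{H}_N(n,k)$ be a nonzero $\mathsf{A}^{(n)}$-submodule; by Proposition \ref{prop: skew-invariance} its orthogonal complement $M^\perp$ with respect to $\langle\cdot\,|\,\cdot\rangle$ is also an $\mathsf{A}^{(n)}$-submodule, since for $w\in M^\perp$, $v\in M$, $\mathfrak{a}\in\mathsf{A}^{(n)}$ one has $\langle \mathfrak{a}\cdot w\,|\,v\rangle=\langle w\,|\,\sigma(\mathfrak{a})\cdot v\rangle=0$. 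The strategy is to upgrade this into a direct sum decomposition in each energy degree and then read off simplicity from degree zero.

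I would first check that every $\mathsf{A}^{(n)}$-submodule $M$ is automatically a graded subspace with respect to the energy grading. The degree-$0$ element $\mathsf{t}_{1,1}\in\mathsf{A}^{(n)}$ acts via $\mathrm{Tr}\,\mathrm{Sym}(YX)=\tfrac{1}{2}(\mathrm{Tr}(YX)+\mathrm{Tr}(XY))=\mathrm{Tr}(XY)+\tfrac{N^2}{2}$ (using $[Y^j_i,X^i_j]=1$), and $\mathrm{Tr}(XY)=\sum_{i,j}X^i_jY^j_i$ is the total number operator of the $X^i_j$ oscillators, so it multiplies a homogeneous element of energy $d$ by the scalar $d+\tfrac{N^2}{2}$. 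These scalars are pairwise distinct across $d\in\bN$, so applying iterated powers $\{\mathsf{t}_{1,1}^m\cdot v\}_m$ together with a Vandermonde determinant lets one extract the homogeneous components of any $v\in M$. Hence $M=\bigoplus_d M_d$ with $M_d:=M\cap\mathcal{H}_N(n,k)_d$, and the same applies to $M^\perp$.

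Next, the Fock inner product on $\mathcal{H}_N(n)$ is positive definite, and monomials of distinct $X$-degree are mutually orthogonal, both being standard consequences of the oscillator commutation relations \eqref{eq: canonical quantization}. Hence the energy grading is an orthogonal direct sum of finite-dimensional pieces, and $\mathcal{H}_N(n,k)_d=M_d\oplus (M^\perp)_d$ for every $d$. Restricting to $d=0$: because $\mathcal{H}_N(n,k)_0$ is an irreducible $\gl_n$-module (Remark \ref{rmk: ground states}) and $\gl_n\subset\mathsf{A}^{(n)}$ via $E^a_b\mapsto\mathsf{T}_{0,0}(E^a_b)$, exactly one of $M_0$ and $(M^\perp)_0$ equals $\mathcal{H}_N(n,k)_0$ while the other vanishes. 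Taking any nonzero vector in the nonvanishing ground-state piece and invoking Corollary \ref{cor: cyclic}, the corresponding submodule must be all of $\mathcal{H}_N(n,k)$; hence $M$ is either zero or the full module, as required.

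The main obstacle I expect is the grading step, i.e., verifying that $\mathsf{t}_{1,1}$ acts as $d+\mathrm{const}$ on the energy-$d$ piece; this reduces to the short normal-ordering computation sketched above, but one must be careful to apply the commutation relation $[Y^i_j,X^k_l]=\delta^i_l\delta^k_j$ with the correct contraction pattern in $\mathrm{Sym}(YX)$, and then confirm that the resulting scalars are pairwise distinct so the Vandermonde extraction indeed picks out all homogeneous components. Everything else is essentially bookkeeping: skew-invariance of the Hermitian form produces the submodule $M^\perp$, positive definiteness produces the graded decomposition, and cyclicity from the ground states finishes the argument.
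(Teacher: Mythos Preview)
Your proof is correct and uses the same three ingredients as the paper's argument: the action of $\mathsf{t}_{1,1}$ as the energy grading operator up to a constant, the skew-invariance of the Hermitian form (Proposition~\ref{prop: skew-invariance}), and the cyclicity of $\mathcal{H}_N(n,k)$ from any nonzero ground state (Corollary~\ref{cor: cyclic}). The organization differs slightly: the paper works element-wise, starting from a nonzero homogeneous $|w\rangle$, choosing $|w'\rangle$ with $\langle w'|w\rangle\neq 0$, writing $|w'\rangle=\mathfrak{a}|v_0\rangle$ for a ground state $|v_0\rangle$ via Corollary~\ref{cor: cyclic}, and then using skew-invariance to conclude that $\sigma(\mathfrak{a})|w\rangle$ is a nonzero ground state; you instead work submodule-wise, producing the graded orthogonal decomposition $\mathcal{H}_N(n,k)_d=M_d\oplus(M^\perp)_d$ and reading off the result in degree~$0$. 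Your route invokes one extra fact, the irreducibility of $\mathcal{H}_N(n,k)_0$ as a $\gl_n$-module (Remark~\ref{rmk: ground states}), though this is not strictly necessary: if $M_0=0$ then $(M^\perp)_0=\mathcal{H}_N(n,k)_0$ forces $M^\perp=\mathcal{H}_N(n,k)$ by cyclicity and hence $M=0$, so you could dispense with irreducibility. Either way the two arguments are essentially equivalent.
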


\begin{proof}
It suffices to show that every nonzero element $\ket{w}\in \mathcal H_N(n,k)$ generates $\mathcal H_N(n,k)$ via the $\mathsf A^{(n)}$-action. First of all we notice that $\mathsf t_{1,1}$ acts on $\mathcal H_N(n,k)$ as the energy grading operator (up to a constant shift). Let us we write $\ket{w}=\sum_{i\in I} \ket{w_i}$ where $I$ is a finite index set and each $\ket{w_i}$ is homogeneous such that $\deg \ket{w_i}\neq \deg \ket{w_j}$ whenever $i\neq j$, then $\{\ket{w_i}\}_{i\in I}$ belongs to $\mathbb C[\mathsf t_{1,1}]\cdot \ket{w}$. Therefore we can assume that $\ket{w}$ is homogeneous without loss of generality.

Since the Hermitian inner product $\langle\cdot | \cdot\rangle$ on $\mathcal H_N(n,k)$ is nondegenerate, there exists $\ket{w'}\in \mathcal H_N(n,k)$ such that $\langle w'| w\rangle \neq 0$. Let us choose a nonzero $\ket{v_0}\in \mathcal H_N(n,k)_0$, then by Corollary \ref{cor: cyclic}, there exists $\mathfrak{a}\in U(\gl_n[z])\subset \mathsf A^{(n)}$ such that $\ket{w'}=\mathfrak a \ket{v_0}$. Without loss of generality, we can also take $\mathfrak{a}$ to be homogeneous with respect to the grading \eqref{grading on A}. By the skew-invariance \eqref{eq: skew-invariance}, we have
\begin{align*}
    0\neq \langle w'| w\rangle = \langle v_0|\sigma(\mathfrak{a})| w\rangle.
\end{align*}
Denote $\ket{v}:=\sigma(\mathfrak{a})| w\rangle$, then $\ket{v}$ is homogeneous and $\langle v_0| v\rangle \neq 0$, which implies that $\ket{v}$ belongs to the lowest energy subspace $\mathcal H_N(n,k)_0$, because $\mathcal H_N(n,k)_0$ is orthogonal to $\mathcal H_N(n,k)_{>0}$. According to Corollary \ref{cor: cyclic}, $\mathcal H_N(n,k)=U(\gl_n[z])\ket{v}$, thus $\mathcal H_N(n,k)=\mathsf A^{(n)}\ket{w}$. This finishes the proof.
\end{proof}

It is shown in \cite[2.8]{Gaiotto-Rapcek-Zhou} that there is a Yangian subalgebra $Y(\gl_n)\subset \mathsf A^{(n)}$ such that
\begin{align}\label{eq:naive Yangian}
    Y(\gl_n)\ni T^a_b(u)\mapsto \delta^a_b+A^a\frac{1}{u-XY }B_b
\end{align}
when it acts on $\mathcal H_N(n,k)$. Here $T^a_b(u)=\delta^a_b+\sum_{n\ge 0}T^a_{b;n}u^{-n-1}, 1\le a,b\le n$ is the RTT generator of $Y(\gl_n)$, i.e. it satisfies the RTT relation:
\begin{align*}
    (u-v)[T^a_b(u),T^c_d(v)]=T^c_b(u)T^a_d(v) -T^c_b(v)T^a_d(u).
\end{align*}
We note that the Yangian action on $\mathcal H_N(n,k)$ preserves the energy grading, i.e.
$[Y(\gl_n),\mathsf t_{1,1}]=0$, thus every energy eigenspace $\mathcal H_N(n,k)^{\bC_q^{\times},d}$ of $\bC_q^{\times}$-weight $d$ is a Yangian submodule. 

Moreover, $\mathcal H_N(n,k)$ is a unitary module of $Y(\gl_n)$, i.e. the Hermitian inner product $\langle\cdot | \cdot\rangle$ on $\mathcal H_N(n,k)$ is skew-invariant under the action of $Y(\gl_n)$, i.e.
\begin{align*}
    \langle T^a_b(u)\cdot w | {v}\rangle=\langle w | T^b_a(u)\cdot v\rangle
\end{align*}
for all $\ket{w},\ket{v}\in \mathcal H_N(n,k)$. If $S\subset \mathcal H_N(n,k)^{\bC_q^{\times},d}$ is a $Y(\gl_n)$ submodule, then its orthogonal complement $S^\perp$ in $\mathcal H_N(n,k)^{\bC_q^{\times},d}$ is also a $Y(\gl_n)$ submodule by the unitarity. Therefore we have the following.

\begin{corollary}\label{cor:semisimple Yangian module}
$\mathcal H_N(n,k)$ is a semisimple module of the Yangian $Y(\gl_n)$ with finite dimensional simple constituent.
\end{corollary}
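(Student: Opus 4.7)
The plan is a Maschke-type orthogonal-complement argument applied to finite-dimensional energy eigenspaces. From Theorem~\ref{thm: Hilbert series} the character of $\mathcal H_N(n,k)$ equals $H_{(k^N)}(\mathbf a;q)\prod_{i=1}^N (1-q^i)^{-1}$, which lies in $\mathbb Z[\mathbf a_1^{\pm},\dots,\mathbf a_n^{\pm}]^{S_n}[\![q]\!]$; in particular the coefficient of each $q^d$ is a symmetric Laurent polynomial of finite dimension. Thus every $\bC_q^\times$-weight space $\mathcal H_N(n,k)^{\bC_q^\times,d}$ is finite-dimensional. Since $\mathsf t_{1,1}$ acts as the energy grading operator up to a constant and commutes with the Yangian (as noted just above the statement), each of these weight spaces is a $Y(\gl_n)$-submodule, and it suffices to prove semisimplicity on each of them.

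Within a fixed eigenspace I would invoke unitarity. The harmonic-oscillator Hermitian form $\langle\cdot|\cdot\rangle$ on $\mathcal H_N(n)$ from Section~\ref{subsec:Hermitian inner product} is positive-definite, and its restriction to $\mathcal H_N(n,k)$ remains positive-definite. The skew-invariance $\langle T^a_b(u)w|v\rangle=\langle w|T^b_a(u)v\rangle$ recorded above the statement implies that for any $Y(\gl_n)$-submodule $S\subset\mathcal H_N(n,k)^{\bC_q^\times,d}$, its orthogonal complement $S^{\perp}$ inside that eigenspace is also $Y(\gl_n)$-stable; positive-definiteness together with finite-dimensionality then gives the internal direct sum decomposition $\mathcal H_N(n,k)^{\bC_q^\times,d}=S\oplus S^{\perp}$.

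I would now iterate. In each eigenspace, pick a nonzero $Y(\gl_n)$-submodule of minimal positive dimension; it is automatically simple and finite-dimensional. Split it off using the orthogonal complement above, and apply the same procedure to the complement. Finite-dimensionality guarantees termination in finitely many steps, so $\mathcal H_N(n,k)^{\bC_q^\times,d}$ decomposes as a finite orthogonal direct sum of finite-dimensional simple $Y(\gl_n)$-modules. Taking the direct sum over $d$ yields both assertions of the corollary.

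There is no significant obstacle here: the corollary is genuinely an assemblage of three facts already in place in the paper, namely the finite-dimensional grading (Theorem~\ref{thm: Hilbert series}), the positive-definite Hermitian structure (Section~\ref{subsec:Hermitian inner product}), and the explicit $Y(\gl_n)$-unitarity recorded immediately before the statement. The only point worth flagging is that the argument relies crucially on the weight space being finite-dimensional; without this one would need additional analytic input (e.g.\ a functional-analytic spectral argument) rather than a purely algebraic Maschke step.
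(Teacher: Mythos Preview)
Your proposal is correct and follows essentially the same approach as the paper: the paper's argument (given in the paragraph immediately preceding the corollary) is precisely the unitarity/orthogonal-complement step you describe, combined with the observation that each $\bC_q^\times$-eigenspace is a finite-dimensional $Y(\gl_n)$-submodule. You have simply spelled out the iteration in more detail than the paper does.
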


See Remark \ref{rmk:simple Y(gl_n) submodules} for a characterization of simple $Y(\gl_n)$-submodules of $\mathcal H_N(n,k)$. See also Theorem \ref{thm:Yangian irrep decomposition}.

\subsection{Level-rank relation}\label{subsec:level-rank}
Consider the tensor decomposition $\bC^{kN}=\bC^{N}\otimes\bC^k,\bC^{kn}=\bC^{n}\otimes\bC^k$ with $\GL_k$ naturally acting on the second tensor components, then we have isomorphism
\begin{align}\label{fixed point induce level-rank}
    \cM(kN,kn)^{\GL_k}\cong \cM(N,n).
\end{align}
Let us denote $\mathcal L_{\det,k}$ to be the determinant line bundle on $\cM(kN,kn)$, then the restriction of $\mathcal L_{\det,k}$ to the $\GL_k$-fixed locus $\cM(N,n)$ is isomorphic to $\mathcal L_{\det,1}^{\otimes k}$. The restriction map gives a homomorphism:
\begin{align}\label{level-rank map}
    \tau: \mathcal H_{kN}(kn,m)=\Gamma(\cM(kN,kn),\mathcal L_{\det,k}^{\otimes m})\longrightarrow \Gamma(\cM(N,n),\mathcal L_{\det,1}^{\otimes km})=\mathcal H_{N}(n,km),
\end{align}
which we shall call it the \textit{level-rank map}.
\begin{proposition}\label{prop: level-rank map}
The level-rank map $\tau$ in \eqref{level-rank map} is $\mathbb Z$-graded, surjective and $\gl_n[z]\oplus\mathfrak{sl}_k[z]$-equivariant, where $\gl_n[z]\oplus\mathfrak{sl}_k[z]$ acts on $\mathcal H_{kN}(kn,m)$ via the inclusion $\gl_n[z]\oplus\mathfrak{sl}_k[z]\subset \gl_{kn}[z]$ and  $\gl_n[z]\oplus\mathfrak{sl}_k[z]$ acts on $\mathcal H_{N}(n,km)$ via the projection to the first component.
\end{proposition}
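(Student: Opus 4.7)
By construction $\tau$ is restriction along the inclusion $\cM(N,n) \cong \cM(kN,kn)^{\GL_k} \hookrightarrow \cM(kN,kn)$, together with the canonical identification $\mathcal L_{\det,k}|_{\cM(N,n)} \cong \mathcal L_{\det,1}^{\otimes k}$ coming from the tensor splitting $\mathcal E_{\mathrm{univ},k}|_{\cM(N,n)} \cong \mathcal E_{\mathrm{univ},1} \otimes \bC^k$ on the fixed locus. Both the $\bC_q^{\times}$-scaling of $\bA^1$ and the $\GL_n[z]$-action (via $\GL_n \hookrightarrow \GL_{kn}$ on the first tensor factor of $\bC^{kn}=\bC^n\otimes\bC^k$) commute with the $\GL_k$-action on the second tensor factor, and hence preserve $\cM(N,n)$. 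This immediately gives the $\bZ$-grading and the $\gl_n[z]$-equivariance of $\tau$.

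For the $\mathfrak{sl}_k[z]$-equivariance, I will prove the stronger claim that the full $\GL_k[z]$-action on $\cM(kN,kn)$ fixes $\cM(N,n)$ pointwise. Any $\GL_k$-invariant subsheaf of $\mathcal O_{\bA^1}^{\oplus n}\otimes\bC^k$ takes the form $\mathcal F'\otimes\bC^k$ for some $\mathcal O_{\bA^1}$-submodule $\mathcal F'\subset \mathcal O_{\bA^1}^{\oplus n}$, and any such subsheaf is manifestly stable under the $\mathcal O_{\bA^1}$-linear $\GL_k[z]$-action on the $\bC^k$-factor. Consequently the $\mathfrak{sl}_k[z]$-vector field on $\cM(kN,kn)$ vanishes along $\cM(N,n)$, while the $\mathfrak{sl}_k[z]$-action on the fiber of $\mathcal L_{\det,k}^{\otimes m}$ at any such fixed point is the differential of a character of $\GL_k[z]$. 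Since $\mathfrak{sl}_k[z]$ is a perfect Lie algebra, every such character vanishes on it, and therefore $\tau(\xi\cdot\sigma)=0$ for all $\xi\in\mathfrak{sl}_k[z]$ and $\sigma\in\mathcal H_{kN}(kn,m)$.

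Surjectivity is the main content. By Corollary \ref{cor: cyclic} and the $\gl_n[z]$-equivariance of $\tau$, it suffices to exhibit an element of $\mathcal H_{kN}(kn,m)_0$ with nonzero image in $\mathcal H_N(n,km)_0$. Proposition \ref{prop: ground states} shows both ground-state subspaces sit in $\bC_q^{\times}$-degree $\tfrac{km}{2}L(L-1)n+kmrL$, so $\bC_q^{\times}$-equivariance of $\tau$ guarantees $\tau(\mathcal H_{kN}(kn,m)_0)\subset\mathcal H_N(n,km)_0$. Via Remark \ref{rmk: ground states}, these ground-state subspaces are respectively the global sections of $\mathcal L_{\det,k}^{\otimes m}$ and $\mathcal L_{\det,1}^{\otimes km}$ restricted to the $\bC_q^{\times}$-fixed orbits $F_{\lambda_1}=\GL_{kn}\cdot\{z^{\lambda_1}\}$ with $\lambda_1=\varpi_{kr}+L\varpi_{kn}$ and $F_{\lambda_2}=\GL_n\cdot\{z^{\lambda_2}\}$ with $\lambda_2=\varpi_r+L\varpi_n$. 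Ordering the basis of $\bC^{kn}=\bC^n\otimes\bC^k$ so that the first $kr=r\cdot k$ basis vectors span $\bC^r\otimes\bC^k$, one checks directly that $\mathcal F_{\lambda_1}=\mathcal F_{\lambda_2}\otimes\bC^k$, so $z^{\lambda_1}$ is $\GL_k$-fixed and is identified with $z^{\lambda_2}$ under $\cM(kN,kn)^{\GL_k}\cong\cM(N,n)$. The Borel-Weil highest weight section $v_0\in V^{\GL_{kn}}_{m\lambda_1}\cong\mathcal H_{kN}(kn,m)_0$ is nonzero at $z^{\lambda_1}$, so $\tau(v_0)$ is nonzero at $z^{\lambda_2}$ and provides the required nonzero ground state. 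The main obstacle is precisely this last geometric matching: it hinges on the combinatorial compatibility of $\lambda_1$ with the tensor decomposition $\bC^{kn}=\bC^n\otimes\bC^k$ — since the multiplicity $kr$ of the value $L+1$ in $\lambda_1$ is divisible by $k$, the subsheaf $\mathcal F_{\lambda_1}$ tensor-splits as $\mathcal F_{\lambda_2}\otimes\bC^k$ and thereby lies in the $\GL_k$-fixed locus, without which one could not locate a $\GL_k$-fixed point inside the ground-state orbit.
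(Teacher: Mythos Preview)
Your proof is correct and follows essentially the same route as the paper: equivariance is handled by noting that the $\GL_k$-fixed locus is preserved by the relevant group actions (with $\mathfrak{sl}_k[z]$ acting trivially on fibers of the determinant line bundle), and surjectivity is reduced via Corollary~\ref{cor: cyclic} to nonvanishing at the ground-state level, which in turn is checked by restricting to the highest weight point $z^{\lambda_1}=z^{\lambda_2}$. Your treatment is slightly more explicit in two places—the perfectness argument for $\mathfrak{sl}_k[z]$ and the combinatorial identification $\mathcal F_{\lambda_1}=\mathcal F_{\lambda_2}\otimes\bC^k$—but these simply unpack what the paper leaves implicit.
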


\begin{proof}
Recall that the $\mathbb Z$-graded $\gl_{kn}[z]$-module structure on $\mathcal H_{kN}(kn,m)$ is induced from the $\GL_{kn}[z]\rtimes\bC_q^{\times}$ action on the line bundle $\mathcal L_{\det,k}^{\otimes m}\to \cM(kN,kn)$. We notice that the $\GL_k$ fixed point locus $\cM(kN,kn)^{\GL_k}$ is $\bC_q^{\times}$-closed, and it is also fixed by $\GL_k[z]$ such that $\SL_k[z]$ acts on the restriction of determinant line bundle $\mathcal L_{\det,1}|_{\cM(kN,kn)^{\GL_k}}$ trivially, thus \eqref{level-rank map} is $\mathbb Z$-graded and $\gl_n[z]\oplus\mathfrak{sl}_k[z]$-equivariant. 

By the $\gl_n[z]$-equivariance of \eqref{level-rank map} and the cyclicity of $\gl_n[z]$-action on $\mathcal H_{N}(n,km)$ (Corollary \ref{cor: cyclic}), to show that \eqref{level-rank map} is surjective, it suffices to show that the induced map between ground states $\mathcal H_{kN}(kn,m)_0\to\mathcal H_{N}(n,km)_0$ is nonzero. Using Remark \ref{rmk: ground states}, we only need to show that the restriction map
\begin{align*}
    \Gamma(\GL_{nk}\cdot\{z^{\lambda_k}\},\mathcal L_{\det,k}^{\otimes m}|_{\GL_{nk}\cdot\{z^{\lambda_k}\}})\longrightarrow \Gamma(\GL_{n}\cdot\{z^{\lambda_k}\},\mathcal L_{\det,k}^{\otimes m}|_{\GL_{n}\cdot\{z^{\lambda_k}\}}).
\end{align*}
is nonzero, where $\lambda_k=\varpi_{kr}+L\cdot\varpi_{kn}$ is a coweight for $\GL_{kn}$. Notice that both sides in the above map have nontrivial image after further restriction to $\mathcal L_{\det,k}^{\otimes m}|_{\{z^{\lambda_k}\}}$, thus the above map is nontrivial, which proves the surjectivity of \eqref{level-rank map}.
\end{proof}

Proposition \ref{prop: level-rank map} implies that the level-rank map $\tau$ in \eqref{level-rank map} factors through a $\gl_n[z]$-equivariant surjective map
\begin{align}\label{level-rank map_main}
    \bar\tau:\mathcal H_{kN}(kn,m)_{\mathfrak{sl}_k[z]}\twoheadrightarrow \mathcal H_{N}(n,km),
\end{align}
where $\mathcal H_{kN}(kn,m)_{\mathfrak{sl}_k[z]}$ is the space of coinvariants, i.e. $\mathcal H_{kN}(kn,m)/\mathfrak{sl}_k[z]\cdot \mathcal H_{kN}(kn,m)$. We shall call $\bar\tau$ the \textit{reduced level-rank map}.

\subsubsection{The case of \texorpdfstring{$m=1$}{m=1}}

Let us focus on the case when $m=1$ which is closely related to the level-rank duality, as pointed out in \cite{bourgine2024calogero}.

Consider the $\mathfrak{h}[z]$-coinvariant $ \mathcal H_{kN}(kn,1)_{\mathfrak{h}[z]}$ where $\mathfrak{h}\subset\mathfrak{sl}_k$ is the Cartan subalgebra. In the view of isomorphism
\begin{align}
    \cM(kN,kn)^{T}\cong \coprod_{N_1+\cdots+N_k=kN}\cM(N_1,n)\times\cdots\times \cM(N_k,n),
\end{align}
where $T\subset\GL_k$ is the maximal torus, Corollary \ref{cor: restrict to A-fixed pts} implies that the restriction to $T$-fixed point locus induces an isomorphism between $\mathfrak{t}[z]$-modules:
\begin{align}\label{T_k fixed pt}
    \mathcal H_{kN}(kn,1)\cong \bigoplus_{N_1+\cdots+N_k=kN}\mathcal H_{N_1}(n,1)\otimes\cdots\otimes \mathcal H_{N_k}(n,1),
\end{align}
where $\mathfrak{t}\cong\bC^k$ is the Lie algebra of $T$. According to Proposition \ref{prop: image of gl_1[z]} $\mathfrak{t}[z]$ acts on $\mathcal H_{N_1}(n,1)\otimes\cdots\otimes \mathcal H_{N_k}(n,1)$ by
\begin{align*}
    (t_1,\cdots,t_k )\otimes z^r \mapsto (t_1\mathrm{Tr}(X^{r}_1),\cdots,t_k\mathrm{Tr}(X^{r}_k))\in \bC[\bA^{(N_1)}\times\cdots\times \bA^{(N_k)}].
\end{align*}
In particular $\mathfrak{t}$ acts on $\mathcal H_{N_1}(n,1)\otimes\cdots\otimes \mathcal H_{N_k}(n,1)$ with weight $(N_1,\cdots,N_k)$, therefore the $\mathfrak{h}$-coinvariant $\mathcal H_{kN}(kn,1)_{\mathfrak{h}}$ is $\mathcal H_{N}(n,1)\otimes\cdots\otimes \mathcal H_{N}(n,1)$. Further taking $z\mathfrak{h}[z]$-covariant amounts to taking quotient of $\mathcal H_{N}(n,1)^{\otimes k}$ by the action of 
\begin{align*}
    \{\mathrm{Tr}(X^{r}_i)-\mathrm{Tr}(X^{r}_{i+1})\:|\:1\le i< k,r\in \mathbb Z_{>0}\},
\end{align*}
which is nothing but taking tensor product with respect to the base ring $\bC[\bA^{(N)}]$. Therefore we obtain the following:

\begin{proposition}
The isomorphism \eqref{T_k fixed pt} induces a $\mathbb Z$-graded isomorphism between $\gl_n[z]$-modules:
\begin{align}\label{h[z] coinvariant}
    \mathcal H_{kN}(kn,1)_{\mathfrak{h}[z]}\cong \underbrace{\mathcal H_{N}(n,1)\otimes_{\bC[A^{(N)}]}\cdots\otimes_{\bC[A^{(N)}]}\mathcal H_{N}(n,1)}_{k\text{ copies}}.
\end{align}
\end{proposition}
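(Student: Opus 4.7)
The plan is to execute the two-step coinvariant computation sketched in the paragraphs immediately preceding the statement, while upgrading everything to $\mathbb Z$-graded $\gl_n[z]$-module isomorphisms. The starting point is \eqref{T_k fixed pt}: by Corollary \ref{cor: restrict to A-fixed pts} applied to the subtorus $T \subset \GL_k$, restriction to $T$-fixed points furnishes a $\mathbb Z$-graded isomorphism of $T[z]$-modules. The additional $\gl_n[z]$-equivariance is automatic because, under the block embedding $\gl_n \oplus \gl_k \hookrightarrow \gl_{kn}$ coming from $\bC^{kn}=\bC^n \otimes \bC^k$, the subalgebras $\gl_n[z]$ and $\gl_k[z]$ commute inside $\gl_{kn}[z]$, and on the $T$-fixed locus $\coprod_{N_1+\cdots+N_k=kN} \cM(N_1,n)\times\cdots\times\cM(N_k,n)$ the $\gl_n[z]$-action restricts to the diagonal action on each factor.

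Next, combining Proposition \ref{prop: image of gl_1[z]} (applied inside each $\gl_1[z]$ factor of $\mathfrak t[z]\subset\gl_k[z]$) with \eqref{T_k fixed pt} yields the formula for the $\mathfrak t[z]$-action on the summand $\mathcal H_{N_1}(n,1)\otimes\cdots\otimes\mathcal H_{N_k}(n,1)$: the element $(t_1,\ldots,t_k)\otimes z^r$ acts by multiplication by $\sum_i t_i \mathrm{Tr}(X_i^r)$. For $t \in \mathfrak h$ (the trace-zero Cartan) and $r=0$ this scalar is $\sum_i t_i N_i$, which vanishes for all such $t$ iff $N_1 = \cdots = N_k = N$. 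Hence taking $\mathfrak h$-coinvariants annihilates every summand except the balanced one and produces $\mathcal H_N(n,1)^{\otimes k}$. Further taking coinvariants by $z\mathfrak h[z]$ imposes the relations $\mathrm{Tr}(X_i^r)=\mathrm{Tr}(X_j^r)$ for all $i,j$ and $r\ge 1$; since the power sums $\{\mathrm{Tr}(X^r)\}_{r=1}^N$ generate $\bC[\bA^{(N)}]$ as a $\bC$-algebra (Newton's identities), this quotient is precisely the base-changed tensor product $\mathcal H_N(n,1)\otimes_{\bC[\bA^{(N)}]}\cdots\otimes_{\bC[\bA^{(N)}]}\mathcal H_N(n,1)$.

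The $\mathbb Z$-grading and $\gl_n[z]$-equivariance survive at each stage: the operators $\{\mathrm{Tr}(X_i^r)\}$ originate from the $\gl_k[z]$-factor, so they commute with $\gl_n[z]$ and are homogeneous under $\bC_q^\times$; consequently the coinvariant quotients inherit a canonical $\mathbb Z$-graded $\gl_n[z]$-module structure, and the induced map onto the right-hand side is a morphism in that category. The one genuine bookkeeping step I anticipate is the final identification: one must check that the ideal generated by the power-sum differences $\mathrm{Tr}(X_i^r) - \mathrm{Tr}(X_{i+1}^r)$ inside $\bC[\bA^{(N)}]^{\otimes k}$ equals the kernel of the $k$-fold multiplication map $\bC[\bA^{(N)}]^{\otimes k}\twoheadrightarrow \bC[\bA^{(N)}]$, and then use the fact that $\mathcal H_N(n,1)$ is naturally a $\bC[\bA^{(N)}]$-module via the Hilbert--Chow morphism (Remark \ref{rmk: Hilbert-Chow}) to conclude that quotienting $\mathcal H_N(n,1)^{\otimes k}$ by this ideal produces the tensor product over $\bC[\bA^{(N)}]$. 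Both are standard once the power-sum generation of $\bC[\bA^{(N)}]$ is invoked.
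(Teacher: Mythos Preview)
Your proposal is correct and follows essentially the same two-step argument as the paper: first use Corollary~\ref{cor: restrict to A-fixed pts} and Proposition~\ref{prop: image of gl_1[z]} to identify the $\mathfrak h$-coinvariant with the balanced summand $\mathcal H_N(n,1)^{\otimes k}$, then take $z\mathfrak h[z]$-coinvariants by quotienting out the power-sum differences to obtain the tensor product over $\bC[\bA^{(N)}]$. You supply slightly more detail than the paper on the $\gl_n[z]$-equivariance (via the commutativity of $\gl_n[z]$ and $\gl_k[z]$ inside $\gl_{kn}[z]$) and on the identification of the coinvariant ideal with the kernel of the multiplication map, but the route is the same.
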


By the isomorphism \eqref{h[z] coinvariant} and Lemma \ref{lem: pushforward of line bundle}, $\mathcal H_{kN}(kn,1)_{\mathfrak{h}[z]}$ is a locally free module of finite rank on $\bA^{(N)}$. We note that there is a natural surjective map $\mathcal H_{kN}(kn,1)_{\mathfrak{h}[z]}\twoheadrightarrow \mathcal H_{kN}(kn,1)_{\mathfrak{sl}_k[z]}$ whose composition with $m=1$ case in \eqref{level-rank map_main} gives a surjective map 
\begin{align}\label{h[z] coinv map to H_N(n,k)}
    \mathcal H_{kN}(kn,1)_{\mathfrak{h}[z]}\cong \mathcal H_{N}(n,1)\otimes_{\bC[A^{(N)}]}\cdots\otimes_{\bC[A^{(N)}]}\mathcal H_{N}(n,1)\twoheadrightarrow \mathcal H_{N}(n,k).
\end{align}
Since the above map is induced from the diagonal embedding $\cM(N,n)\hookrightarrow \cM(N,n)^{\times k}\subset \cM(kN,kn)^T$, we see that it is nothing but the tensor multiplication map $\mathcal H_N(n,1)^{\otimes k}\to \mathcal H_N(n,k)$.

Using the fermion presentation $\mathfrak{f}:\mathcal F_{kN}(kn)\cong \mathcal H_{kN}(kn,1)$ in Section \ref{sec:fermion fock}, we can explicitly write down $\tau\circ\mathfrak{f}:\mathcal F_{kN}(kn)\to \mathcal H_{N}(n,k)$ as follows. Under the fixed point decomposition \ref{T_k fixed pt}, the relevant component $\mathcal H_N(n,1)\otimes\cdots\otimes \mathcal H_N(n,1)$ is identified with the $\mathfrak{h}$-invariant subspace $\mathcal H_{kN}(kn,1)^{\mathfrak{h}}$, where $\mathfrak{h}\subset \mathfrak{sl}_k$ is the Cartan subalgebra. Correspondingly $\mathcal F_{kN}(kn)^{\mathfrak{h}}\cong \mathcal F_{N}(n)\otimes \cdots\otimes \mathcal F_{N}(n)$, where the isomorphism is given by 
\begin{align*}
\mathcal F_{N}(n)\otimes \cdots\otimes \mathcal F_{N}(n)\ni \bigotimes_{\alpha=1}^k\left({\psi}^{a^{(\alpha)}_{1},\alpha}_{m^{(\alpha)}_{1}}\cdots{\psi}^{a^{(\alpha)}_{N},\alpha}_{m^{(\alpha)}_{N}}\ket{\emptyset}\right)\mapsto \prod_{\alpha=1}^k{\psi}^{a^{(\alpha)}_{1},\alpha}_{m^{(\alpha)}_{1}}\cdots{\psi}^{a^{(\alpha)}_{N},\alpha}_{m^{(\alpha)}_{N}}\ket{\emptyset}\in \mathcal F_{kN}(kn)^{\mathfrak{h}}.
\end{align*}
Here $({\psi}^{a,\alpha}_{m})_{1\le a\le n,1\le \alpha\le k,m\in \mathbb Z_{\ge 0}}$ are the modes for $n\times k$ fermion fields. Since the map $\mathcal H_N(n,1)\otimes\cdots\otimes \mathcal H_N(n,1)\to \mathcal H_N(n,k)$ is the multiplication map, we obtain the following.

\begin{proposition}\label{prop:level-rank via fermion fock}
The composition of the maps $\mathcal F_{kN}(kn)\overset{\mathfrak{f}}{\longrightarrow}\mathcal H_{kN}(kn,1)\overset{\tau}{\longrightarrow} \mathcal H_{N}(n,k)$ is given by 
\begin{multline}\label{eq:level-rank via fermion fock}
\tau\circ\mathfrak{f}\left(\prod_{\alpha=1}^k{\psi}^{a^{(\alpha)}_{1},\alpha}_{m^{(\alpha)}_{1}}\cdots{\psi}^{a^{(\alpha)}_{N_\alpha},\alpha}_{m^{(\alpha)}_{N_\alpha}}\ket{\emptyset}\right)\\
=\tau\left(\epsilon^{i^{(1)}_1\cdots i^{(k)}_{N_k}}\prod_{\alpha=1}^k\left[(A^{a^{(\alpha)}_1,\alpha}X^{m^{(\alpha)}_1})_{i^{(\alpha)}_{1}}\cdots (A^{a^{(\alpha)}_{N_\alpha},\alpha}X^{m^{(\alpha)}_{N_\alpha}})_{i^{(\alpha)}_{N_\alpha}}\right]\ket{\emptyset}\right)\\
=\begin{cases}
\prod_{\alpha=1}^k \left[\epsilon^{i_{1}i_{2}\cdots i_{N}}(A^{a^{(\alpha)}_1}X^{m^{(\alpha)}_1})_{i_{1}}\cdots (A^{a^{(\alpha)}_N}X^{m^{(\alpha)}_N})_{i_{N}}\right]\ket{\emptyset}  &, \text{ if }\forall\alpha, N_\alpha=N,\\
0 &, \text{ otherwise}.
\end{cases}
\end{multline}
\end{proposition}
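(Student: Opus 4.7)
The plan is to reduce the composition $\tau \circ \mathfrak{f}$ to three ingredients already assembled in the paragraphs immediately preceding the statement: the $T$-fixed-point decomposition \eqref{T_k fixed pt} of $\mathcal H_{kN}(kn,1)$, the analogous factorization $\mathcal F_{kN}(kn)^{\mathfrak{h}} \cong \mathcal F_N(n)^{\otimes k}$ of the fermion Fock space, and the identification of $\tau$ restricted to the $\mathfrak{h}$-invariant subspace with the polynomial multiplication map.

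First, I would unpack $\mathfrak{f}$: identifying $\mathbb{C}^{kn} = \mathbb{C}^n \otimes \mathbb{C}^k$ so that the fermion mode $\psi^{a,\alpha}_m$ corresponds to the $(a,\alpha)$-th basis vector, the explicit formula in Proposition \ref{prop: fock space} immediately gives the middle equality of \eqref{eq:level-rank via fermion fock}; this step is purely a matter of unpacking notation. Next, I would handle the vanishing case. The monomial in question is an $\mathfrak{h}$-weight vector of weight $(N_1 - N, \ldots, N_k - N)$ for the traceless diagonal $\mathfrak{h} \subset \mathfrak{sl}_k$. Since $\tau$ is $\mathfrak{sl}_k[z]$-equivariant with trivial action on the target (Proposition \ref{prop: level-rank map}), any $\mathfrak{h}$-weight vector of nonzero weight is annihilated: if $h \in \mathfrak{h}$ acts on $v$ by the scalar $\lambda(h)$, then $\lambda(h)\tau(v) = \tau(h \cdot v) = h \cdot \tau(v) = 0$. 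Hence $\tau(\mathfrak{f}(v)) = 0$ unless $N_\alpha = N$ for every $\alpha$.

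When $N_\alpha = N$ for every $\alpha$, the element lies in $\mathcal F_{kN}(kn)^{\mathfrak{h}}$ and, under the canonical isomorphism with $\mathcal F_N(n)^{\otimes k}$, corresponds to the pure tensor $\bigotimes_{\alpha=1}^{k} \psi^{a^{(\alpha)}_1}_{m^{(\alpha)}_1} \cdots \psi^{a^{(\alpha)}_N}_{m^{(\alpha)}_N}\ket{\emptyset}$. Applying Proposition \ref{prop: fock space} on each factor yields a pure tensor in $\mathcal H_N(n,1)^{\otimes k} \cong \mathcal H_{kN}(kn,1)^{\mathfrak{h}}$ whose $\alpha$-th factor is the Slater-type polynomial $\epsilon^{i_1 \cdots i_N}(A^{a^{(\alpha)}_1} X^{m^{(\alpha)}_1})_{i_1} \cdots (A^{a^{(\alpha)}_N} X^{m^{(\alpha)}_N})_{i_N}\ket{\emptyset}$. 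By the sentence immediately following \eqref{h[z] coinv map to H_N(n,k)}, $\tau$ restricted to $\mathcal H_N(n,1)^{\otimes k}$ coincides with the polynomial product in $\mathbb{C}[V(N,n)]$; multiplying the $k$ Slater factors then produces the right-hand side of \eqref{eq:level-rank via fermion fock}.

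The main bookkeeping issue is the sign arising from reordering a fermionic product on $\mathcal F_{kN}(kn)$ so as to group all fermions sharing a fixed $\alpha$-index, equivalently the identification of $(\wedge^N \mathbb{C}^N)^{\otimes k}$ as a subspace of $\wedge^{kN}(\mathbb{C}^N \otimes \mathbb{C}^k)$ implicit in the decomposition $\mathcal F_{kN}(kn)^{\mathfrak{h}} \cong \mathcal F_N(n)^{\otimes k}$. Fixing an ``$\alpha$-first'' lexicographic basis ordering on $\{1,\ldots,N\} \times \{1,\ldots,k\}$ absorbs this sign once and for all, after which the remainder of the computation is purely formal.
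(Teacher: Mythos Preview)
Your proposal is correct and follows essentially the same approach as the paper: the paper's argument is contained in the paragraph immediately preceding the proposition, using the same three ingredients you identify (the $\mathfrak{h}$-invariant decomposition \eqref{T_k fixed pt}, the tensor factorization of $\mathcal F_{kN}(kn)^{\mathfrak{h}}$, and the identification of $\tau$ on this subspace with the multiplication map). Your treatment of the vanishing case via $\mathfrak{h}$-equivariance makes explicit what the paper leaves implicit in the phrase ``the relevant component,'' and your remark on the sign bookkeeping is a fair caveat that the paper simply absorbs into its displayed formula for the isomorphism $\mathcal F_N(n)^{\otimes k}\cong \mathcal F_{kN}(kn)^{\mathfrak{h}}$.
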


\begin{corollary}\label{cor: bmf}
$\forall k\in \bZ_{\ge 0}$, $\mathcal L_{\det}^{\otimes k}$ is generated by global sections, i.e. the natural map between sheaves $$\Gamma(\cM(N,n),\mathcal L_{\det}^{\otimes k})\otimes \mathcal O_{\cM(N,n)}\to \mathcal L_{\det}^{\otimes k}$$ is surjective.
\end{corollary}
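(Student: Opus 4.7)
I would reduce to the case $k = 1$ using the surjectivity of the multiplication $\Gamma(\cM(N,n), \mathcal{L}_{\det})^{\otimes k} \twoheadrightarrow \Gamma(\cM(N,n), \mathcal{L}_{\det}^{\otimes k})$ established in Corollary \ref{cor: tensor product surjective}: any $s \in \Gamma(\cM(N,n), \mathcal{L}_{\det})$ with $s(p) \neq 0$ yields $s^{\otimes k} \in \Gamma(\cM(N,n), \mathcal{L}_{\det}^{\otimes k})$ with $s^{\otimes k}(p) \neq 0$. The case $k = 0$ is trivial since $\mathcal{L}_{\det}^{\otimes 0} = \mathcal{O}_{\cM(N,n)}$. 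So it remains to exhibit, for every point $p \in \cM(N,n)$, a global section of $\mathcal{L}_{\det}$ that does not vanish at $p$.

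For this I would use the GIT description \eqref{eq: M(N,n)}: $p$ is represented by $(Z, \varphi) \in R(N,n)$, and the stability condition is $\mathbb{C}[Z] \cdot \mathrm{Im}(\varphi) = \mathbb{C}^N$. This guarantees that the set of vectors $\{Z^m \varphi^a : m \geq 0,\ 1 \leq a \leq n\}$ spans $\mathbb{C}^N$, so one may select $N$ of them, say $\{Z^{m_j} \varphi^{a_j}\}_{j=1}^N$, that form a basis. I then form the polynomial
\[
f(Z, \varphi) := \epsilon^{i_1 \cdots i_N}(Z^{m_1} \varphi^{a_1})_{i_1} \cdots (Z^{m_N} \varphi^{a_N})_{i_N}
\]
on $V(N,n)$. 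Since each $Z^m \varphi^a$ transforms as $g(Z^m \varphi^a)$ under $g \in \GL_N$, $f$ is a $\GL_N$-semi-invariant of weight one, and hence, via Proposition \ref{prop: H(N,n) = global section}, defines a global section of $\mathcal{L}_{\det}$; concretely, in the quantized variables $(X,A)$ it is the image $\mathfrak{f}({\psi}^{a_1}_{m_1}\cdots{\psi}^{a_N}_{m_N}\ket{\emptyset})$ under the fermion Fock isomorphism of Proposition \ref{prop: fock space}. By the linear independence of the chosen vectors, $f(Z, \varphi) \neq 0$, producing the required non-vanishing section at $p$.

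There is no serious obstacle here: the geometric content is entirely packaged into the stability condition, which directly furnishes the non-vanishing semi-invariant. The only care required is to track sign/weight conventions so as to identify $f$ with a section of $\mathcal{L}_{\det}$ rather than its dual, and to observe that the determinants produced this way coincide with the image of the fermion Fock basis under $\mathfrak{f}$.
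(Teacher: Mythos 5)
Your proof is correct, and it takes a genuinely different route from the paper's.

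You both begin by reducing to $k=1$ via Corollary~\ref{cor: tensor product surjective}, and the $k=0$ case is trivial. After that the arguments diverge. The paper argues by $T$-equivariance: the cokernel of $\Gamma(\cM(N,n),\mathcal L_{\det})\otimes\mathcal O\to\mathcal L_{\det}$ is a $T$-equivariant coherent sheaf, so (using properness of the Hilbert--Chow map plus the Borel fixed point theorem, which the paper leaves implicit) it suffices to check surjectivity at $T$-fixed points, where one invokes affineness of $\cM(N,n)^T$ together with the restriction isomorphism of Proposition~\ref{prop: restrict to fixed pts}. Your argument is more direct and self-contained: you use the GIT stability condition $\bC[Z]\cdot\mathrm{Im}(\varphi)=\bC^N$ to select, for any point, $N$ linearly independent vectors of the form $Z^{m_j}\varphi^{a_j}$, and then the minor $\epsilon^{i_1\cdots i_N}(Z^{m_1}\varphi^{a_1})_{i_1}\cdots(Z^{m_N}\varphi^{a_N})_{i_N}$ is a $\GL_N$-semi-invariant lying in $\Gamma(\cM(N,n),\mathcal L_{\det})$ and is nonzero at the given point by linear independence. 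Your approach buys elementarity: it avoids torus machinery and, in particular, the nontrivial content of Proposition~\ref{prop: restrict to fixed pts}; what the paper's approach buys is brevity given that Proposition~\ref{prop: restrict to fixed pts} has already been established and is reused elsewhere. One small imprecision to be aware of: the determinant you write is not literally the fermion Fock basis element $\mathfrak f(\psi^{a_1}_{m_1}\cdots\psi^{a_N}_{m_N}\ket\emptyset)=\epsilon^{i_1\cdots i_N}(A^{a_1}X^{m_1})_{i_1}\cdots$ --- they differ by placement of the matrix power on the other side of the framing row/column vector, and hence differ as polynomials unless one transposes --- but this is immaterial, since all you need is that your $f$ is a $(-1)$-semi-invariant in the sense of \eqref{eq: M(N,n) as GIT quotient}, and it is.
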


\begin{proof}
As we have seen in the above, the $k$-th tensor power map $$\Gamma(\cM(N,n),\mathcal L_{\det})^{\otimes k}\to \Gamma(\cM(N,n),\mathcal L_{\det}^{\otimes k})$$ is surjective, so it is enough to show that $\mathcal L_{\det}$ is generated by global sections. Let $T=\bC^{\times n}\subset \GL_n$ be the maximal torus, then $\Gamma(\cM(N,n),\mathcal L_{\det})\otimes \mathcal O_{\cM(N,n)}\to \mathcal L_{\det}$ is $T$-equivariant, so the support of its cokernel is $T$-invariant, therefore we need to show that fibers of $\mathcal L_{\det}$ at every $T$-fixed point are generated by global sections. By \eqref{T-fixed pts} $\cM(N,n)^T$ is an affine variety, so the natural map $$\Gamma(\cM(N,n)^T,\mathcal L_{\det}|_{\cM(N,n)^T})\otimes \mathcal O_{\cM(N,n)^T}\to \mathcal L_{\det}|_{\cM(N,n)^T}$$ is surjective. By Proposition \ref{prop: restrict to fixed pts}, the restriction map $\Gamma(\cM(N,n),\mathcal L_{\det})\to\Gamma(\cM(N,n)^T,\mathcal L_{\det}|_{\cM(N,n)^T})$ is an isomorphism. This finishes the proof.
\end{proof}

\begin{corollary}\label{cor:level-rank map: n=1}
The reduced level-rank map \eqref{level-rank map_main} in the case $m=n=1:$
\begin{align}\label{level-rank map: n=1}
    \mathcal H_{kN}(k,1)_{\mathfrak{sl}_k[z]}\longrightarrow \mathcal H_{N}(1,k)
\end{align}
is an isomorphism.
\end{corollary}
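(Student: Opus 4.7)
The plan is to insert an intermediate coinvariant space and reduce the question to a rank-one computation. Since the Cartan $\mathfrak{h}\subset\mathfrak{sl}_k$ gives an inclusion $\mathfrak{h}[z]\subset\mathfrak{sl}_k[z]$, there is a canonical chain of surjections
\[
\mathcal H_{kN}(k,1)_{\mathfrak{h}[z]} \twoheadrightarrow \mathcal H_{kN}(k,1)_{\mathfrak{sl}_k[z]} \overset{\bar\tau}{\twoheadrightarrow} \mathcal H_N(1,k),
\]
whose rightmost arrow is the reduced level-rank map of Proposition \ref{prop: level-rank map}. It will suffice to show that the composition is already an isomorphism, for then both intermediate surjections are forced to be bijective.

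The first step will be to identify the composition explicitly. Specialising \eqref{h[z] coinvariant} to $n=1$ rewrites the source as $\mathcal H_N(1,1)^{\otimes_{\mathbb C[\mathbb A^{(N)}]} k}$, and the discussion around \eqref{h[z] coinv map to H_N(n,k)} shows that the composition is the tensor multiplication map
\[
\mu:\underbrace{\mathcal H_N(1,1)\otimes_{\mathbb C[\mathbb A^{(N)}]}\cdots\otimes_{\mathbb C[\mathbb A^{(N)}]}\mathcal H_N(1,1)}_{k\text{ copies}}\longrightarrow \mathcal H_N(1,k).
\]

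Next I would apply the rank-one description of the $n=1$ Hilbert spaces: by Example \ref{ex: H_N(1,k)} combined with the $n=1$ case of Proposition \ref{prop: H(N,n) = global section}, for every $m\ge 1$ one has $\mathcal H_N(1,m)=\mathbb C[\mathbb A^{(N)}]\cdot\ket{\Omega_m}$, a free $\mathbb C[\mathbb A^{(N)}]$-module of rank one. Hence both the source and the target of $\mu$ are free rank-one $\mathbb C[\mathbb A^{(N)}]$-modules, with distinguished generators $\ket{\Omega_1}^{\otimes k}$ and $\ket{\Omega_k}$ respectively. Inspecting \eqref{eq: ground state n=1} directly, $\ket{\Omega_k}$ is literally the $k$-th power of $\ket{\Omega_1}$ in the polynomial ring $\mathbb C[V(N,1)]$, so $\mu$ sends generator to generator and must therefore be an isomorphism. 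Combined with the surjective factorisation above, this forces $\bar\tau$ to be an isomorphism.

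There is no substantial obstacle once the rank-one description of $\mathcal H_N(1,m)$ is in hand; the argument is essentially bookkeeping plus the observation that, for $n=1$, the multiplication map $\mu$ happens to be an isomorphism. The contrast with $n>1$ is sharp: for larger $n$ the kernel of $\mu$ becomes nontrivial, which is exactly why the statement of this corollary is restricted to $n=1$.
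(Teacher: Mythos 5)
Your proposal is correct and takes essentially the same route as the paper. The paper also reduces to showing that the composite map $\mathcal H_{kN}(k,1)_{\mathfrak{h}[z]}\twoheadrightarrow \mathcal H_{N}(1,k)$ is an isomorphism by identifying both sides as rank-one free $\mathbb C[\mathbb A^{(N)}]$-modules (via Example \ref{ex: H_N(1,k)} and Proposition \ref{prop: H(N,n) = global section}); your version merely makes explicit the generator-to-generator computation $\ket{\Omega_1}^k\mapsto\ket{\Omega_k}$ that the paper leaves implicit.
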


\begin{proof}
Consider the composition of surjective maps $\mathcal H_{kN}(k,1)_{\mathfrak{h}[z]}\twoheadrightarrow\mathcal H_{kN}(k,1)_{\mathfrak{sl}_k[z]}\twoheadrightarrow \mathcal H_{N}(1,k)$. It is enough to show that $\mathcal H_{kN}(k,1)_{\mathfrak{h}[z]}\twoheadrightarrow\mathcal H_{N}(1,k)$ is an isomorphism. In fact, $\mathcal H_{N}(1,k)$ (in particular $\mathcal H_{N}(1,1)$) is a rank one free module over $\bC[\bA^{(N)}]$, then it follows that \eqref{h[z] coinv map to H_N(n,k)} is an isomorphism.
\end{proof}

\begin{corollary}\label{cor:level-rank map: N=1}
The reduced level-rank map \eqref{level-rank map_main} in the case $m=N=1:$
\begin{align}\label{level-rank map: N=1}
    \mathcal H_{k}(kn,1)_{\mathfrak{sl}_k[z]}\longrightarrow \mathcal H_{1}(n,k)
\end{align}
is an isomorphism.
\end{corollary}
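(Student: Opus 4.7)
The plan is to use the factorization
\begin{align*}
\mathcal H_k(kn,1)_{\mathfrak h_k[z]}\overset{\beta}{\twoheadrightarrow}\mathcal H_k(kn,1)_{\mathfrak{sl}_k[z]}\overset{\bar\tau}{\twoheadrightarrow}\mathcal H_1(n,k),
\end{align*}
where $\mathfrak h_k\subset \mathfrak{sl}_k$ is the Cartan, together with the Weyl-group symmetry $W=S_k$ of $\GL_k$. First I would identify the source of $\beta$: specializing \eqref{h[z] coinvariant} to $N=1$, and using $\cM(1,n)\cong \bP^{n-1}\times \bA^1$ with $\mathcal L_{\det}\cong \mathcal O(1)\boxtimes \mathcal O$ (hence $\mathcal H_1(n,1)\cong \bC^n\otimes \bC[z]$), one gets $\mathcal H_k(kn,1)_{\mathfrak h_k[z]}\cong (\bC^n)^{\otimes k}\otimes \bC[z]$. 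By Proposition \ref{prop:level-rank via fermion fock}, the composite $\bar\tau\circ\beta$ becomes the $\bC[z]$-linear symmetrization $(\bC^n)^{\otimes k}\otimes \bC[z]\twoheadrightarrow S^k\bC^n\otimes \bC[z]\cong \mathcal H_1(n,k)$.

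Next I would introduce the Weyl action. The group $W=S_k\subset \SL_k$ acts on $\mathcal H_k(kn,1)$ through the inclusion $\SL_k\hookrightarrow \GL_{kn}$ induced by $\bC^{kn}=\bC^n\otimes \bC^k$; since $W$ normalizes both $\mathfrak h_k[z]$ and $\mathfrak{sl}_k[z]$, it descends to compatible actions on both quotients and makes $\beta$ a $W$-equivariant surjection. Tracking through the $T_k$-fixed-locus decomposition $\cM(k,kn)^{T_k}=\coprod_{N_1+\cdots+N_k=k}\prod_\alpha \cM(N_\alpha,n)$ underlying Corollary \ref{cor: restrict to A-fixed pts}, $W$ permutes these components, and on the unique $(1,\ldots,1)$-component, which is the only summand surviving modulo $\mathfrak h_k[z]$, it permutes the $k$ tensor copies of $\mathcal H_1(n,1)$, inducing the naive permutation action on $(\bC^n)^{\otimes k}$ after tensoring over $\bC[z]$. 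On the other hand, $\SL_k$ integrates the Lie subalgebra $\mathfrak{sl}_k\subset \mathfrak{sl}_k[z]$ and acts semisimply on $\mathcal H_k(kn,1)$ (since each $\bC_q^\times$-weight space is finite-dimensional), so $\SL_k$ acts trivially on $\mathcal H_k(kn,1)_{\mathfrak{sl}_k}=\mathcal H_k(kn,1)^{\SL_k}$; a fortiori it acts trivially on the further quotient $\mathcal H_k(kn,1)_{\mathfrak{sl}_k[z]}$, and in particular $W$ does.

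Therefore $\beta$ factors through the $W$-coinvariants of its source, yielding a chain
\begin{align*}
S^k\bC^n\otimes \bC[z]\cong \bigl((\bC^n)^{\otimes k}\otimes \bC[z]\bigr)_{S_k}\overset{\tilde\beta}{\twoheadrightarrow}\mathcal H_k(kn,1)_{\mathfrak{sl}_k[z]}\overset{\bar\tau}{\twoheadrightarrow}S^k\bC^n\otimes \bC[z]
\end{align*}
whose composition is the identity by the first paragraph. Hence both $\tilde\beta$ and $\bar\tau$ are isomorphisms, proving the corollary.

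The delicate step is the middle paragraph: verifying, via the $T_k$-fixed-point presentation, that the Weyl action on $\mathcal H_k(kn,1)_{\mathfrak h_k[z]}$ is truly the naive permutation on the $(\bC^n)^{\otimes k}$ tensor factors under the identification from \eqref{h[z] coinvariant}. Once this bookkeeping is in place, the remaining arguments — triviality of a Lie-algebra-integrating group on the Lie-algebra coinvariants, and the universal factorization of a $W$-equivariant surjection through Weyl coinvariants — are essentially formal.
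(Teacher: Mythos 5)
Your proof is essentially the paper's, with the $\SL_k$-semisimplicity argument (which the paper uses implicitly when asserting that $\bar\tau$ factors through $\mathcal H_k(kn,1)_{\mathfrak h[z]\rtimes S_k}$) spelled out, and with the conclusion drawn by noting the composite is the identity rather than by matching the two free $\bC[z]\otimes U(\mathfrak{sl}_n[z])$-modules on each side.

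One imprecision worth fixing: $S_k$ is not a subgroup of $\SL_k$, since odd permutation matrices have determinant $-1$, so "$W=S_k\subset\SL_k$ \dots in particular $W$ acts trivially" does not literally parse. This lands exactly on your flagged "delicate step." If one uses permutation matrices $S_k\subset\GL_k$, reordering the wedge factors in $\det\bigl(\bigoplus_\alpha\mathcal E_\alpha\bigr)\cong\bigotimes_\alpha\det\mathcal E_\alpha$ (equivalently, in the fermionic presentation of $\mathcal H_k(kn,1)$) carries the block-shuffle sign $\mathrm{sgn}(\sigma)^N$; for $N=1$ the induced action on $(\bC^n)^{\otimes k}$ is therefore the \emph{sign-twisted} permutation, whose invariants are $\bigwedge^k\bC^n$, not $S^k\bC^n$. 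Correspondingly, the permutation matrices act on $\mathcal H_k(kn,1)_{\mathfrak{sl}_k[z]}$ through $\det^N=\mathrm{sgn}^N$, not trivially. The remedy is to lift Weyl elements to $N(T_k)\cap\SL_k$ — not a split copy of $S_k$, but a subgroup of $\SL_k$, hence (by your semisimplicity argument) acting trivially on $\mathcal H_k(kn,1)_{\mathfrak{sl}_k[z]}$ — and to check that the descent of these lifts to the $\mathfrak h[z]$-coinvariants is the untwisted permutation of the tensor factors. With that bookkeeping in place the argument is correct and matches the paper's.
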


\begin{proof}
We notice that the natural map $\mathcal H_{kN}(kn,1)_{\mathfrak{h}[z]}\twoheadrightarrow\mathcal H_{kN}(kn,1)_{\mathfrak{sl}_k[z]}$ factors through $$\mathcal H_{kN}(kn,1)_{\mathfrak{h}[z]\rtimes S_k}\cong \left(\mathcal H_{N}(n,1)\otimes_{\bC[A^{(N)}]}\cdots\otimes_{\bC[A^{(N)}]}\mathcal H_{N}(n,1)\right)^{S_k}. $$
Take $N=1$ and the right-hand-side of the above isomorphism becomes $\bC[\bA^1]\otimes S^k\bC^n$, where $\mathfrak{sl}_n[z]$ acts on $S^k\bC^n$ via evaluation map $\mathfrak{sl}_n[z]\overset{z=0}{\longrightarrow} \mathfrak{sl}_n$. On the other hand, $\mathcal H_{1}(n,k)$ is also isomorphic to $\bC[\bA^1]\otimes S^k\bC^n$ as a $\bC[\bA^1]\otimes U(\mathfrak{sl}_n[z])$-module. Thus the surjective map $\mathcal H_{k}(kn,1)_{\mathfrak{h}[z]\rtimes S_k}\twoheadrightarrow \mathcal H_{1}(n,k)$ is an isomorphism, which implies that $\mathcal H_{k}(kn,1)_{\mathfrak{sl}_k[z]}\twoheadrightarrow \mathcal H_{1}(n,k)$ is an isomorphism.
\end{proof}

In general, we do not know if the map $\mathcal H_{kN}(kn,1)_{\mathfrak{sl}_k[z]}\twoheadrightarrow \mathcal H_{N}(n,k)$ is isomorphism or not. Nevertheless, we can show that it becomes an isomorphism after localization. Namely, let 
\begin{align}
    \mathrm{Disc}=\prod_{i<j}(x_i-x_j)^2\in \bC[\bA^{(N)}]
\end{align}
be the discriminant, then the nonvanishing locus of $\mathrm{Disc}$ is $\bA^{(N)}_{\mathrm{disj}}$, i.e. $\bC[\bA^{(N)}][\mathrm{Disc}^{-1}]=\bC[\bA^{(N)}_{\mathrm{disj}}]$.

\begin{theorem}\label{thm: level-rank}
If $m=1$, then the reduced level-rank map \eqref{level-rank map_main} becomes an isomorphism after localizing $\mathrm{Disc}$, i.e. we have isomorphism
\begin{align}\label{level-rank duality}
    \mathcal H_{kN}(kn,1)_{\mathfrak{sl}_k[z]}[\mathrm{Disc}^{-1}]\cong \mathcal H_{N}(n,k)[\mathrm{Disc}^{-1}],
\end{align}
for all $k,n,N\in \mathbb Z_{>0}$.
\end{theorem}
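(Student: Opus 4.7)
The map $\bar\tau$ is $\gl_n[z]$-equivariant by Proposition~\ref{prop: level-rank map}, hence linear over the commutative algebra generated by the central copy $\gl_1[z]\subset\gl_n[z]$. On the RHS, by Proposition~\ref{prop: image of gl_1[z]}, $\mathbf{1}\otimes z^m$ acts as multiplication by $k\cdot\mathrm{Tr}(X^m)$, identifying its image with $\bC[\bA^{(N)}]$; on the LHS it acts as $\mathrm{Tr}(X_{kN}^m)\in \bC[\bA^{(kN)}]$, factoring through the surjective ring map $\bC[\bA^{(kN)}]\twoheadrightarrow \bC[\bA^{(N)}]$ dual to the closed embedding $\iota^{(k)}:\bA^{(N)}\hookrightarrow \bA^{(kN)}$ given by multiplication-by-$k$ on divisors. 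Since $\mathfrak{sl}_k[z]$ commutes with the central $\gl_1[z]$, taking $\mathfrak{sl}_k[z]$-coinvariants is compatible with any localization of $\bC[\bA^{(N)}]$. By Lemma~\ref{lem: pushforward of line bundle} and Corollary~\ref{cor:flatness}, both sides descend to locally free coherent sheaves of finite rank over $\bA^{(N)}_{\mathrm{disj}}$; since $\bar\tau$ is already surjective, it suffices to show that it is a fiberwise isomorphism on each geometric fiber over $\bA^{(N)}_{\mathrm{disj}}$.

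\textbf{Factorization via BD Grassmannian.} Pull back along the étale Galois $S_N$-cover $\bA^N_{\mathrm{disj}}\twoheadrightarrow \bA^{(N)}_{\mathrm{disj}}$. By Proposition~\ref{prop: isom between Quot and Gr} together with the factorization property of Beilinson–Drinfeld Grassmannians, $\cM(N,n)|_{\bA^N_{\mathrm{disj}}}$ is a $(\bP^{n-1})^N$-bundle and $\mathcal L_{\det}^{\otimes k}$ restricts to $\mathcal O(k)^{\boxtimes N}$, so $\mathcal H_N(n,k)$ pulls back to $(S^k\bC^n)^{\otimes N}\otimes \bC[\bA^N_{\mathrm{disj}}]$. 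The preimage of $\iota^{(k)}(\bA^N_{\mathrm{disj}})$ inside $\cM(kN,kn)$ is, by factorization at $k$-fold collisions, a $(\bGr^{k\omega_1}_{\GL_{kn}})^N$-bundle with the corresponding determinant line bundle restricting to $\mathcal O(1)^{\boxtimes N}$; hence the base-change of $\mathcal H_{kN}(kn,1)$ becomes $V^{\otimes N}\otimes \bC[\bA^N_{\mathrm{disj}}]$, where $V:=\Gamma(\bGr^{k\omega_1}_{\GL_{kn}},\mathcal O(1))$. Fiberwise, the pre-coinvariant level-rank map is the $N$-fold tensor power of the single-point map $V\to S^k\bC^n$.

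\textbf{CRT and factorwise coinvariants.} Every point of $\bGr^{k\omega_1}_{\GL_{kn}}$ corresponds to a lattice $L$ with $z^k\mathcal O^{\oplus kn}\subset L\subset \mathcal O^{\oplus kn}$, so the $\GL_{kn}[\![z]\!]$-action factors through the finite-dimensional quotient $\GL_{kn}(\bC[z]/z^k)$. Consequently on the $i$-th factor the $\mathfrak{sl}_k[z]$-action factors through $\mathfrak{sl}_k\otimes \bC[z-x_i]/(z-x_i)^k$ via Taylor expansion at $x_i$. For distinct $x_i$'s, the Chinese Remainder Theorem yields a surjection
\begin{equation*}
\mathfrak{sl}_k[z]\twoheadrightarrow \bigoplus_{i=1}^N \mathfrak{sl}_k\otimes \bC[z-x_i]/(z-x_i)^k
\end{equation*}
through which the diagonal $\mathfrak{sl}_k[z]$-action on $V^{\otimes N}$ factors, with each summand acting on its corresponding tensor factor. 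Since coinvariants under a direct sum of Lie algebras acting factorwise compute as the tensor product of individual coinvariants, this gives a canonical identification $(V^{\otimes N})_{\mathfrak{sl}_k[z]}\cong (V_{\mathfrak{sl}_k[z]})^{\otimes N}$.

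\textbf{Conclusion and main difficulty.} Corollary~\ref{cor:level-rank map: N=1} (the $N=1$ case) implies that $V_{\mathfrak{sl}_k[z]}\xrightarrow{\sim} S^k\bC^n$ via the single-point restriction, whence $(V_{\mathfrak{sl}_k[z]})^{\otimes N}\xrightarrow{\sim}(S^k\bC^n)^{\otimes N}$. Descending along the $S_N$-cover, $\bar\tau$ is fiberwise an isomorphism over $\bA^{(N)}_{\mathrm{disj}}$, which combined with the first paragraph yields the theorem. The main technical difficulty lies in the second paragraph: verifying that the BD factorization of $\cM(kN,kn)$ over $\iota^{(k)}(\bA^N_{\mathrm{disj}})$ is compatible with both the determinant line bundle and with the global $\GL_{kn}[z]$-action in the manner required by the CRT step. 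This is a version of the classical factorization property of BD Grassmannians enriched with their determinant line bundle, but formulating it carefully in the presence of the $k$-fold collisions requires some bookkeeping.
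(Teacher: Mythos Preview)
Your proof is correct and follows essentially the same approach as the paper: reduce to fibers over $\bA^{(N)}_{\mathrm{disj}}$, use Beilinson--Drinfeld factorization to write the fibers as $N$-fold tensor products, apply the Chinese Remainder Theorem to split the $\mathfrak{sl}_k[z]$-coinvariant factorwise, and invoke the $N=1$ case (Corollary~\ref{cor:level-rank map: N=1}). The only minor difference is that the paper does not assert local freeness of the coinvariant side (your citation of Lemma~\ref{lem: pushforward of line bundle} and Corollary~\ref{cor:flatness} does not directly justify it) and instead works via the surjection $[\mathcal H_{kN}(kn,1)_{\vec x}]_{\mathfrak{sl}_k[z]}\twoheadrightarrow [\mathcal H_{kN}(kn,1)_{\mathfrak{sl}_k[z]}]_{\vec x}$ at each point; since coinvariants and fibers are both right exact this distinction is immaterial.
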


\begin{proof}
Since $\mathcal H_{N}(n,k)$ is a free module over $\bC[\bA^{(N)}]$, it suffices to show that the fiber of map $\mathcal H_{kN}(kn,1)_{\mathfrak{sl}_k[z]}\to \mathcal H_{N}(n,k)$ at $\vec x=(x_1,\cdots,x_N)\in \bA^{(N)}$ is an isomorphism whenever $\vec x\in \bA^{(N)}_{\mathrm{disj}}$.

Denote the fibers of $\mathcal H_{kN}(kn,1)$ and $\mathcal H_{kN}(kn,1)_{\mathfrak{sl}_k[z]}$ at ${\vec x}$ by $\mathcal H_{kN}(kn,1)_{\vec x}$ and $\mathcal H_{kN}(kn,1)_{\mathfrak{sl}_k[z],\vec{x}}$ respectively. Applying $\mathfrak{sl}_k[z]$-coinvariant to the natural map $\mathcal H_{kN}(kn,1)_{\vec{x}}\twoheadrightarrow \mathcal H_{kN}(kn,1)_{\mathfrak{sl}_k[z],\vec{x}}$ and we get:
\begin{align*}
    \left[\mathcal H_{kN}(kn,1)_{\vec x}\right]_{\mathfrak{sl}_k[z]}\twoheadrightarrow \mathcal H_{kN}(kn,1)_{\mathfrak{sl}_k[z],\vec{x}}.
\end{align*}
It is enough to show that the surjective map $\left[\mathcal H_{kN}(kn,1)_{\vec x}\right]_{\mathfrak{sl}_k[z]}\twoheadrightarrow \mathcal H_{N}(n,k)_{\vec{x}}$ is an isomorphism.

Using the isomorphism \eqref{eq: H(N,n) = global section} and Proposition \ref{prop: isom between Quot and Gr}, we have 
\begin{align*}
    \mathcal H_{kN}(kn,1)_{\vec x}=\bigotimes_{i=1}^N\Gamma(\bGr^{k\omega_1}_{\GL_{kn},x_i},\mathcal O(1)),\quad \mathcal H_{N}(n,k)_{\vec x}=\bigotimes_{i=1}^N\Gamma(\bGr^{\omega_1}_{\GL_{n},x_i},\mathcal O(k)).
\end{align*}
Here $\bGr^{k\omega_1}_{\GL_{kn},x_i}$ and $\bGr^{\omega_1}_{\GL_{n},x_i}$ the affine Schubert varieties of the affine Grassmannians supported at $x_i$. The action of $\mathfrak{sl}[z]$ on $\Gamma(\bGr^{k\omega_1}_{\GL_{kn},x_i},\mathcal O(1))$ factors through a Lie algebra quotient $\mathfrak{sl}[z]\twoheadrightarrow \mathfrak{sl}[z]/(z-x_i)^r$ for some $r$. We note that the Lie algebra map 
\begin{align*}
    \mathfrak{sl}[z]\longrightarrow \bigoplus_{i=1}^N\mathfrak{sl}[z]/(z-x_i)^r
\end{align*}
is surjective since $x_i$ are distinct. It follows that 
\begin{align}\label{coinv of fiber}
    \left[\mathcal H_{kN}(kn,1)_{\vec x}\right]_{\mathfrak{sl}_k[z]}=\bigotimes_{i=1}^N\Gamma(\bGr^{k\omega_1}_{\GL_{kn},x_i},\mathcal O(1))_{\mathfrak{sl}_k[z]}.
\end{align}
We notice that $\Gamma(\bGr^{k\omega_1}_{\GL_{kn},x_i},\mathcal O(1))_{\mathfrak{sl}_k[z]}$ can be identified with $\left[\mathcal H_{k}(kn,1)_{x_i}\right]_{\mathfrak{sl}_k[z]}$. By Corollary \ref{cor:level-rank map: N=1}, the coinvariant space $\left[\mathcal H_{k}(kn,1)_{x_i}\right]_{\mathfrak{sl}_k[z]}$ is isomorphic to $\mathcal H_{1}(n,k)_{x_i}$, and the latter is isomorphic to $\Gamma(\bGr^{\omega_1}_{\GL_{n},x_i},\mathcal O(k))$. This implies that the map $\left[\mathcal H_{kN}(kn,1)_{\vec x}\right]_{\mathfrak{sl}_k[z]}\to \mathcal H_{N}(n,k)_{\vec{x}}$ is an isomorphism. This finishes the proof.
\end{proof}

\section{Wave Functions}\label{sec:wave functions}

Restricted to the open subset $\bA^{(N)}_{\mathrm{disj}}\subset \bA^{(N)}$ consisting of $N$ disjoint points in $\bA^1$, the fibers of Hilbert-Chow morphism $h:\cM(N,n)\to \bA^{(N)}$ are isomorphic to $(\bP^{n-1})^N$. Moreover, the restriction of $\mathcal L_{\det}^{\otimes k}$ to the fiber $(\bP^{n-1})^N$ is isomorphic to $\mathcal O(k)^{\boxtimes N}$. So for every $\vec{x}\in \bA^{(N)}_{\mathrm{disj}}$, the fiber $h_*(\mathcal L_{\det}^{\otimes k})|_{\vec{x}}$ is isomorphic to $\Gamma(\mathbb P^{n-1},\mathcal O(k))^{\otimes N}\cong (S^k\bC^n)^{\otimes N}$. 

Denote by $\mathfrak{W}$ the composition of the the restriction map $$\Gamma(\cM(N,n),\mathcal L_{\det}^{\otimes k})\to \Gamma\left(h^{-1}(\bA^{(N)}_{\mathrm{disj}}),\mathcal L_{\det}^{\otimes k}|_{h^{-1}(\bA^{(N)}_{\mathrm{disj}})}\right)$$ and the base change map $$\Gamma\left(h^{-1}(\bA^{(N)}_{\mathrm{disj}}),\mathcal L_{\det}^{\otimes k}|_{h^{-1}(\bA^{(N)}_{\mathrm{disj}})}\right)\to \Gamma\left(h^{-1}(\bA^{(N)}_{\mathrm{disj}}),\mathcal L_{\det}^{\otimes k}|_{h^{-1}(\bA^{(N)}_{\mathrm{disj}})}\right)\otimes_{\bC[\bA^{(N)}_{\mathrm{disj}}]}\bC[\bA^{N}_{\mathrm{disj}}]$$
Since both maps are injective, $\mathfrak{W}$ is a embedding
\begin{align}\label{wave function map}
    \mathfrak{W}:\mathcal H_N(n,k)\hookrightarrow \bC[\bA^N_{\mathrm{disj}}]\otimes (S^k\bC^n)^{\otimes N}.
\end{align}
We call $\mathfrak{W}$ the wave function presentation map, and for $\ket{v}\in \mathcal H_N(n,k)$ we call $\mathfrak{W}(\ket{v})$ the wave function of $\ket{v}$.

Let us write down $\mathfrak{W}$ explicitly. We diagonalize $X=H\mathrm{diag}(x_1,\cdots,x_N)H^{-1}$, then define $y^a_i=(A H)^a_i$. The coordinate on $\bA^N_{\mathrm{disj}}$ is then $\{x_1,\cdots,x_N\}$, and the $i$-th copy of $S^k\bC^n$ is represented by homogeneous polynomials in $\{y^1_i,\cdots,y^n_i\}$ of degree $k$. Every element in $\mathcal H_N(n,k)$ can be presented as $f(X,A)\ket{\emptyset}$ where $f(X,A)$ is a $\GL_N$ semi-invariant polynomial, the wave function $\mathfrak{W}(f(X,A)\ket{\emptyset})$ is the function
\begin{align}\label{eqn:wave function}
    f(\mathrm{diag}(x_1,\cdots,x_N),y)\in \bC[\bA^N_{\mathrm{disj}}]\otimes (S^k\bC^n)^{\otimes N}.
\end{align}

\begin{example}
If $k=1$, then we can write down the wave function of the image of the fermion Fock space presentation in Proposition \ref{prop: fock space}:
\begin{equation}
\begin{split}
\mathfrak{W}\circ\mathfrak{f}({\psi}^{a_1}_{m_1}\cdots{\psi}^{a_N}_{m_N}\ket{\emptyset})&=\mathfrak{W}(\epsilon^{i_{1}i_{2}\cdots i_{N}}(A^{a_1}X^{m_1})_{i_{1}}\cdots (A^{a_N}X^{m_N})_{i_{N}}\ket{\emptyset})\\
&=\det(x_i^{m_j}y^{a_j}_i)_{ij}.
\end{split}
\end{equation}
\end{example}

More generally, using the multiplication map $\mathcal H_N(n,1)^{\otimes k}\to \mathcal H_N(n,k)$ which is surjective by Corollary \ref{cor: tensor product surjective}, one can always present the wave function by a product of functions of form $\det(x_i^{m_j}y^{a_j}_i)_{ij}$, see \cite[3.2]{bourgine2024calogero}.

\subsection{Calogero representation of DDCA}\label{subsec:Calogero rep}

According to \cite[Lemma A.12]{Gaiotto-Rapcek-Zhou}, the Calogero representation of the deformed double current algebra $\mathsf A^{(n)}$ is an algebra map from $\mathsf A^{(n)}$ to the twisted differential operator algebra $D^{k}((\mathbb P^{n-1})^{N}\times \bA^N_{\mathrm{disj}})$. The latter naturally acts on $\Gamma((\mathbb P^{n-1})^{N}\times \bA^N_{\mathrm{disj}},\mathcal O(k)^{\boxtimes N})\cong \bC[\bA^N_{\mathrm{disj}}]\otimes (S^k\bC^n)^{\otimes N}$. Moreover, the construction of the map $\mathsf A^{(n)}\to D^{k}((\mathbb P^{n-1})^{N}\times \bA^N_{\mathrm{disj}})$ implies that 
\begin{align*}
    \mathfrak W: \Gamma(\cM(N,n),\mathcal L_{\det}^{\otimes k})\to \Gamma((\mathbb P^{n-1})^{N}\times \bA^N_{\mathrm{disj}},\mathcal O(k)^{\boxtimes N})
\end{align*}
intertwines with the $\mathsf A^{(n)}$-actions. Therefore we can present the $\mathsf A^{(n)}$-action on $\mathcal H_N(n,k)$ using the Calogero representation \cite[(A.14)]{Gaiotto-Rapcek-Zhou}:
\begin{align}\label{Calogero rep}
    \mathsf t_{2,0}\mapsto \sum_{i=1}^N\Delta^{-1}\partial_{i}^2\Delta-2\sum_{i<j}^N\frac{ \Omega_{ij}+k}{(x_i-x_j)^2},\qquad \mathsf T_{0,n}(E^a_{b})\mapsto \sum_{i=1}^N E^a_{b,i} x_i^n.
\end{align}
Here $E^a_{b,i}$ is the $a$-th row $b$-th column elementary matrix $E^a_{b}\in \gl_n$, which acts on $i$-th $S^k\bC^n$ as $k$-th symmetric power of vector representation $\bC^n$. $\Omega_{ij}=E^a_{b,i}E^b_{a,j}$ is the quadratic Casimir of $ij$ sites. $\Delta$ is the Vandermonde factor
\begin{align}\label{eqn:Vandermonde}
    \Delta=\prod_{i>j}^N(x_i-x_j).
\end{align}

\begin{remark}
Using the Calogero representation \eqref{Calogero rep}, the Calogero-like Hamiltonian in \cite[(3.8)]{bourgine2024calogero} with $B=1$ can be written as
\begin{align}\label{Calogero Hamiltonian}
    \tilde{\mathcal H}=-\mathsf t_{2,0}+2\mathsf t_{1,1}.
\end{align}
Since $\{\mathsf t_{2,0},\mathsf t_{1,1},\mathsf t_{0,2}\}$ is an $\mathfrak{sl}_2$-triple, the Hamiltonian $\tilde{\mathcal H}$ is triangular on the $\mathsf t_{1,1}$-eigenvectors. This provides an alternative proof of \cite[(3.20),(3.21)]{bourgine2024calogero}.
\end{remark}

\begin{remark}
As we have discussed in Section \ref{subsec:DDCA module}, there is a Yangian subalgebra $Y(\gl_n)\subset \mathsf A^{(n)}$ such that $[Y(\gl_n),\mathsf t_{1,1}]=0$, so $\mathcal H_N(n,k)$ admits a Yangian action which preserves the energy grading. Moreover, the Calogero Hamiltonian \eqref{Calogero Hamiltonian} can be written as $\tilde{\mathcal H}=e^{-\frac{1}{2}\mathrm{ad}_{\mathsf t_{2,0}}}(2\mathsf t_{1,1})$, so $\tilde{\mathcal H}$ commutes with Yangian algebra $e^{-\frac{1}{2}\mathrm{ad}_{\mathsf t_{2,0}}}(Y(\gl_n))$. This confirms a conjecture in \cite{bourgine2024calogero} that the Calogero Hamiltonian \eqref{Calogero Hamiltonian} admits a Yangian symmetry.
\end{remark}

\begin{lemma}\label{lem:Calogero-Sutherland}
The Calogero representation of the operator $H_{\mathrm{CS}}:=\mathrm{Tr}((XY)^2)$ is a higher-spin analog of Calogero-Sutherland Hamiltonian$:$
\begin{align}\label{eq:H_CS}
    H_{\mathrm{CS}}=\sum_{i=1}^N\Delta^{-1}(x_i\partial_{i})^2\Delta-2\sum_{i<j}^N\frac{x_ix_j(\Omega_{ij}+k)}{(x_i-x_j)^2}-(N-1)\sum_{i=1}^Nx_i\partial_i-\frac{N(N-1)(2N-1)}{6}.
\end{align}
\end{lemma}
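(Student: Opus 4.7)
The plan is to compute the Calogero representation of $H_{\mathrm{CS}} = \mathrm{Tr}((XY)^2)$ by direct calculation, in the same spirit as the derivation of the Calogero representation of $\mathsf t_{2,0}$ in \cite{Gaiotto-Rapcek-Zhou} quoted in \eqref{Calogero rep}. First I will put the operator into a partially normal-ordered form: the Weyl commutation $[Y^l_j, X^j_m] = \delta^l_m$ together with the sum over the internal index yields the operator identity
\begin{equation*}
\mathrm{Tr}((XY)^2) \;=\; \sum_{i,j,l,m} X^i_l X^j_m Y^l_j Y^m_i \;+\; N\, \mathrm{Tr}(XY),
\end{equation*}
which separates the quartic piece from a quadratic commutator correction.

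Next I will evaluate both pieces in the wave-function presentation $\mathfrak W$, where $X$ is diagonalised as $\mathrm{diag}(x_1,\ldots,x_N)$. The quartic sum splits into a diagonal piece $\sum_i x_i^2 (Y^i_i)^2$ and an off-diagonal piece $\sum_{i\neq j} x_i x_j Y^i_j Y^j_i$. For the off-diagonal piece, the moment-map constraint \eqref{eq:constraint} gives the operator identity $(x_i - x_j) Y^i_j = A^a_j B^i_a$ on $\mathcal H_N(n,k)$ for $i \neq j$; commuting $B^i_a$ past $A^b_i$ and using the site-wise degree relation $\sum_a A^a_i B^i_a = k$ yields the identity $A^a_j B^i_a A^b_i B^j_b = \Omega_{ij} + k$. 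Substitution then produces the spin potential $-2\sum_{i<j}\frac{x_i x_j (\Omega_{ij}+k)}{(x_i-x_j)^2}$, in direct parallel with the off-diagonal part of \eqref{Calogero rep}.

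The diagonal piece together with the correction $N\,\mathrm{Tr}(XY)$ will then be treated by the Harish-Chandra radial reduction on $\GL_N$-semi-invariants---the same mechanism responsible for the Vandermonde conjugation $\Delta^{-1}(\cdots)\Delta$ in the $\mathsf t_{2,0}$ formula. Combined with the elementary identity $(x_i\partial_i)^2 = x_i^2\partial_i^2 + x_i\partial_i$, this produces the principal kinetic term $\sum_i \Delta^{-1}(x_i\partial_i)^2\Delta$; the residual linear and constant corrections $-(N-1)\sum_i x_i\partial_i - \frac{N(N-1)(2N-1)}{6}$ arise from the repeated Vandermonde derivatives $\partial_i\Delta/\Delta = \sum_{j\neq i}(x_i-x_j)^{-1}$ and from quantum ordering, and are collected using elementary symmetric-function identities such as $\sum_i x_i\sum_{j\neq i}(x_i - x_j)^{-1} = \binom{N}{2}$ and its quadratic analogue.

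The main technical obstacle will be the systematic bookkeeping of these constants and of the operator orderings in the Harish-Chandra reduction. A cleaner alternative, useful at least as a consistency check, is to exploit the fact that $H_{\mathrm{CS}}$ is $\gl_n$-invariant (it depends only on $X, Y$) together with the $\gl_n[z]$-cyclicity of $\mathcal H_N(n,k)$ from Corollary \ref{cor: cyclic}: one first verifies the identity \eqref{eq:H_CS} on a single nonzero ground state of the form \eqref{eq: ground states general n}, where both sides can be computed explicitly, and then propagates it across $\mathcal H_N(n,k)$ by writing an arbitrary element as $\xi \cdot v_0$ with $\xi \in U(\gl_n[z])$ and tracking the (explicitly computable) commutators of $\xi$ with both sides.
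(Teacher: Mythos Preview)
Your overall strategy—normal-order $\mathrm{Tr}((XY)^2)=\sum_{i,j,l,m}X^i_lX^j_mY^l_jY^m_i+N\,\mathrm{Tr}(XY)$, restrict to diagonal $X$, and split into diagonal and off-diagonal pieces—is viable and genuinely different from the paper's proof. The paper does not normal-order in this way; instead it follows \cite[Lemma~A.1.2]{Gaiotto-Rapcek-Zhou} by introducing the conjugated, normal-ordered matrix $\overline{XY}{}^i_j:=Y^l_s(H^{-1})^i_tX^t_lH^s_j$ (with $X=H\,\mathrm{diag}(x)\,H^{-1}$), for which the explicit Calogero-representation formulae $\overline{XY}{}^i_i=x_i\partial_i$ and $\overline{XY}{}^i_j=\tfrac{x_i}{x_j-x_i}u^a_jv^i_a$ ($i\neq j$) are imported from \cite{Gaiotto-Rapcek-Zhou}, and then relates $\sum_{i,j}\overline{XY}{}^i_j\overline{XY}{}^j_i$ to $\mathrm{Tr}((XY)^2)$ via the commutator $[(XY)^a_b,H^c_d]$. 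The first-order term $2\sum_{i\neq j}\tfrac{x_i^2}{x_i-x_j}\partial_i$ and the subtraction $-(N-1)\sum_ix_i\partial_i$ emerge from that commutator computation, and only at the very end are they repackaged as a Vandermonde conjugation.

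However, your proposal has a concrete gap in the off-diagonal step. The identity $(x_i-x_j)Y^i_j=A^a_jB^i_a$ is the restriction of the moment-map constraint to the diagonal slice and is only a \emph{first-derivative} statement: it gives $(Y^i_jf)\big|_{\mathrm{diag}}=\tfrac{1}{x_i-x_j}A^a_jB^i_a\,f\big|_{\mathrm{diag}}$ for a semi-invariant $f$. You cannot multiply two such formulae to get $Y^i_jY^j_i$, because $Y^j_if$ is no longer semi-invariant. If instead you apply $Y^j_i$ to the moment-map constraint for the pair $(i,j)$ \emph{before} restricting, the product-rule term $\partial_{X^i_j}(X^i_lY^l_j-X^l_jY^i_l)$ contributes $Y^j_j-Y^i_i$, and one finds
\[
(Y^i_jY^j_if)\big|_{\mathrm{diag}}\;=\;-\,\frac{\Omega_{ij}+k}{(x_i-x_j)^2}\,f\big|_{\mathrm{diag}}\;+\;\frac{\partial_i-\partial_j}{x_i-x_j}\,f\big|_{\mathrm{diag}}\,.
\]
So the off-diagonal piece produces not only the spin potential but also the first-order cross term $2\sum_{i\neq j}\tfrac{x_ix_j}{x_i-x_j}\partial_i$. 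Conversely, the diagonal piece is simply $\sum_ix_i^2(Y^i_i)^2\,f\big|_{\mathrm{diag}}=\sum_ix_i^2\partial_i^2\,(f|_{\mathrm{diag}})$, with \emph{no} Vandermonde conjugation—there is nothing for a Harish-Chandra radial reduction to do here, since $\partial_{X^i_i}$ restricted to the slice is already $\partial_{x_i}$. It is the off-diagonal cross term, together with $N\sum_ix_i\partial_i$, that reassembles into $\sum_i\Delta^{-1}(x_i\partial_i)^2\Delta-(N-1)\sum_ix_i\partial_i-\tfrac{N(N-1)(2N-1)}{6}$ via the identities you mention. Your attribution of the Vandermonde to the diagonal piece is therefore inverted, and the computation as outlined would not close.
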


\begin{proof}
The proof is similar to that of \cite[Lemma A.1.2]{Gaiotto-Rapcek-Zhou}. We diagonalize $X=H\mathrm{diag}(x_1,\cdots,x_N)H^{-1}$. Define $\overline{XY}^i_j:=:(H^{-1}XYH)^i_j:$ where the normal ordering means $H$ and $X$ are put at the right of $Y$ (the ordering of $H$ and $X$ does not matter since they commute), that is $\overline{XY}^i_j=Y^l_s(H^{-1})^i_tX^t_lH^s_j$. Using \cite[(A.17)]{Gaiotto-Rapcek-Zhou}\footnote{The variables $X$ and $Y$ in this paper are denoted by $Y$ and $X$ in \cite[Appendix A]{Gaiotto-Rapcek-Zhou} respectively.} we compute that 
\begin{align}\label{eq:XY}
    \overline{XY}^i_j=\begin{cases}
        \frac{x_i}{x_j-x_i}u^a_jv^i_a &, \text{ if }i\neq j,\\
        x_i\partial_i &, \text{ if }i= j,
    \end{cases}\qquad
    \text{where }u^a_i=(AH)^a_i,\; v^j_a=(H^{-1}B)^j_a.
\end{align}
On the other hand, we have
\begin{align}\label{eq:(XY)^2}
\overline{XY}^i_j\overline{XY}^j_i=\mathrm{Tr}((XY)^2)-(H^{-1})^i_l[(XY)^l_m,H^m_p]\overline{XY}^p_i+(H^{-1})^i_l[(XY)^l_q,H^t_i](XY)^q_t.
\end{align}
Using \cite[(A.21))]{Gaiotto-Rapcek-Zhou}, we have
\begin{align}\label{eq:[XY,H]}
    [(XY)^a_b,H^c_d]=\sum_{e\neq d}^N\frac{x_d}{x_d-x_e}H^a_dH^c_e(H^{-1})^e_b.
\end{align}
Plug \eqref{eq:[XY,H]} into \eqref{eq:(XY)^2}, and we get
\begin{align}
    \overline{XY}^i_j\overline{XY}^j_i=\mathrm{Tr}((XY)^2)-\sum_{p=1}^N\sum_{q\neq p}^N\frac{x_p+x_q}{x_p-x_q}\overline{XY}^p_p.
\end{align}
Plug \eqref{eq:XY} to the above equation, and we get
\begin{align}
\mathrm{Tr}((XY)^2)=\sum_{i=1}^N(x_i\partial_{i})^2-2\sum_{i<j}^N\frac{x_ix_j}{(x_i-x_j)^2}u^a_iv^j_au^b_jv^i_b-(N-1)\sum_{i=1}^Nx_i\partial_i+2\sum_{i\neq j}\frac{x_i^2}{x_i-x_j}\partial_i.
\end{align}
According to \cite[Lemma A.1.1]{Gaiotto-Rapcek-Zhou}, the $i$-th $\gl_n$ generators are defined by $E^a_{b,i}=u^a_iv^i_b$ ($i$ is not summed). It is straightforward to compute that $u^a_iv^j_au^b_jv^i_b=\Omega_{ij}+k$, thus the right-hand-side of above equation is equal to \eqref{eq:H_CS}.
\end{proof}

\subsection{Ground states}

Let $\mathcal E_N^{n,k}:=s^*h_*(\mathcal L_{\det}^{\otimes k})$ be the locally free sheaf on $\bA^{N}$, where $s: \bA^{N}\to \bA^{(N)}$ is the symmetrization map. According to Proposition \ref{prop: isom between Quot and Gr}, $\mathcal E_N^{n,k}\cong \tilde\pi_*\mathcal O(k)$. Here $\tilde\pi:\overline{\Gr}^{\omega_1,\cdots,\omega_1}_{\GL_{n},\bA^{N}}\to \bA^N$ is the structure map of Beilinson-Drinfeld Grassmannian, and $\mathcal O(k)$ is the $k$-th tensor power of determinant line bundle $\mathcal O(1)$ on Beilinson-Drinfeld Grassmannian. 

Let $\bA^N_{\mathrm{disj}}$ be the open subset in $\bA^N$ where $N$ points are distinct, then $\tilde \pi: \overline{\Gr}^{\omega_1,\cdots,\omega_1}_{\GL_{n},\bA^{N}_{\mathrm{disj}}} \to \bA^N_{\mathrm{disj}}$ is a trivial $(\bP^{n-1})^N$-fibration. Therefore we have a canonical isomorphism 
\begin{align}\label{restriction of E to open}
    \mathcal E_N^{n,k}|_{\bA^N_{\mathrm{disj}}}\cong \mathcal O_{\bA^N_{\mathrm{disj}}}\otimes (S^k\bC^n)^{\otimes N}.
\end{align}
\begin{remark}\label{rmk: wave function via E}
The isomorphism \eqref{restriction of E to open} composed with the embedding $\mathcal H_N(n,k)=\Gamma(\bA^N,\mathcal E_N^{n,k})^{S_N}\hookrightarrow \Gamma(\bA^N_{\mathrm{disj}},\mathcal E_N^{n,k}|_{\bA^N_{\mathrm{disj}}})$ gives rise to the wave function map \eqref{wave function map}. 
\end{remark}

Since Hilbert-Chow map $h:\cM(N,n)\to \bA^{(N)}$ is flat (Corollary \ref{cor:flatness}), by the flat base change theorem we have
\begin{align}\label{global section of E}
    \Gamma(\bA^N,\mathcal E_N^{n,k})\cong \mathcal H_N(n,k)\otimes_{\bC[\bA^{(N)}]} \bC[\bA^{N}].
\end{align}
The above isomorphism is $\GL_n[z]\rtimes\bC_q^{\times}$-equivariant, where $\GL_n[z]$ action on the left-hand-side is induced by its action on the Beilinson-Drinfeld Grassmannian $\overline{\Gr}^{\omega_1,\cdots,\omega_1}_{\GL_{n},\bA^{N}}$, and $\bC_q^{\times}$ action on the left-hand-side is induced by the scaling of $\bA^1$ with weight $-1$. Note that the permutation group $S_N$ naturally acts on the second tensor component of $\mathcal H_N(n,k)\otimes_{\bC[\bA^{(N)}]} \bC[\bA^{N}]$, and $\mathcal H_N(n,k)$ is the symmetric part, i.e. $\mathcal H_N(n,k)=\Gamma(\bA^N,\mathcal E_N^{n,k})^{S_N}$.

Although $\Gamma(\bA^N,\mathcal E_N^{n,k})$ differs from $\mathcal H_N(n,k)$, the difference disappears when restricted to lowest $\bC_q^{\times}$-weight components. Namely, let $\Gamma(\bA^N,\mathcal E_N^{n,k})_0$ be the $\bC_q^{\times}$-eigenspace of $\Gamma(\bA^N,\mathcal E_N^{n,k})$ with smallest $\bC_q^{\times}$-weight, then we have the following.
\begin{lemma}\label{lem: ground states unsymmetrized}
\eqref{global section of E} induces an isomorphism $\Gamma(\bA^N,\mathcal E_N^{n,k})_0\cong \mathcal H_N(n,k)_0$.
\end{lemma}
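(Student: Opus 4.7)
The plan is to use the flat base change isomorphism $\Gamma(\bA^N,\mathcal E_N^{n,k})\cong \mathcal H_N(n,k)\otimes_{\bC[\bA^{(N)}]} \bC[\bA^N]$ from (\ref{global section of E}), and then exploit the fact that the $\bC_q^\times$-gradings on both $\bC[\bA^N]$ and $\bC[\bA^{(N)}]$ are concentrated in non-negative degrees with degree-$0$ part equal to $\bC$. Indeed, since $\bC_q^\times$ acts on $\bA^1$ by scaling with weight $-1$, every coordinate function has strictly positive $\bC_q^\times$-weight, so all polynomials have non-negative weight and the scalars are the only weight-zero elements.

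First I would observe that the $\bC_q^\times$-equivariant map $\mathcal H_N(n,k)\hookrightarrow \mathcal H_N(n,k)\otimes_{\bC[\bA^{(N)}]}\bC[\bA^N]$, $v\mapsto v\otimes 1$, is injective, because $\bC[\bA^N]$ is a free (hence faithfully flat) module over $\bC[\bA^{(N)}]$ of rank $N!$ by the Chevalley-Shephard-Todd theorem. Under the identification (\ref{global section of E}) this inclusion is exactly the natural pullback $\Gamma(\bA^{(N)},h_*\mathcal L_{\det}^{\otimes k})\hookrightarrow \Gamma(\bA^N,s^*h_*\mathcal L_{\det}^{\otimes k})$; in particular the image $\mathcal H_N(n,k)_0\otimes 1$ sits inside the lowest weight subspace $\Gamma(\bA^N,\mathcal E_N^{n,k})_0$.

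For the reverse inclusion, let $E_0$ denote the minimum $\bC_q^\times$-weight on $\mathcal H_N(n,k)$, so that $\mathcal H_N(n,k)_0 = \mathcal H_N(n,k)_{E_0}$. Take any homogeneous element $w\in \Gamma(\bA^N,\mathcal E_N^{n,k})_{E_0}$ and write $w=\sum_i v_i\otimes f_i$ with $v_i\in \mathcal H_N(n,k)$ and $f_i\in \bC[\bA^N]$ both homogeneous and $\deg v_i+\deg f_i = E_0$. Since $\deg v_i\ge E_0$ and $\deg f_i\ge 0$, each summand forces $\deg f_i=0$, so $f_i\in \bC$, and $\deg v_i = E_0$. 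Hence $w=\sum_i(f_iv_i)\otimes 1$ lies in the image of $\mathcal H_N(n,k)_0$, which completes the argument. The proof is essentially graded book-keeping, and I do not anticipate a serious technical obstacle; the only ingredients are the non-negativity of the $\bC_q^\times$-grading on the base rings together with freeness of $\bC[\bA^N]$ over $\bC[\bA^{(N)}]$ to guarantee the initial injectivity.
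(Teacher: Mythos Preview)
Your proposal is correct and takes essentially the same approach as the paper: the paper's proof is the single sentence ``This follows from the fact that $\bC[\bA^N]$ is non-negatively graded and its degree zero component is $\bC$,'' and you have simply unpacked this into the explicit graded bookkeeping, together with the (harmless but strictly unnecessary for the paper's terse version) observation about freeness of $\bC[\bA^N]$ over $\bC[\bA^{(N)}]$ to justify injectivity of $v\mapsto v\otimes 1$.
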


\begin{proof}
This follows from the fact that $\bC[\bA^{N}]$ is non-negatively graded and its degree zero component is $\bC$.
\end{proof}

Consider the full Beilinson-Drinfeld Grassmannian $\Gr_{\GL_n,\bA^N}$. The connected components of $\Gr_{\GL_n,\bA^N}$ are labelled by $N$-tuple of integers $(m_1,\cdots,m_N)\in \bZ^N$ such that $\Gr^{(m_1,\cdots,m_N)}_{\GL_n,\bA^N}$ is characterized by its fiber over a point $\vec x\in \bA^N_{\mathrm{disj}}$ being isomorphic to $\prod_{i=1}^N\Gr^{(m_i)}_{\GL_n}$. Here $\Gr^{(m_i)}_{\GL_n}$ is the $m_i$-th connected component of $\Gr_{\GL_n}$ which is the moduli ind-scheme of lattices $\Lambda\subset \bC(\!(z)\!)^{\oplus n}$ such that 
\begin{align*}
    \dim (\bC[\![z]\!]^{\oplus n}/z^{M}\Lambda)=m_i+Mn,\quad\forall M\gg 0.
\end{align*}
$\overline{\Gr}^{\omega_1,\cdots,\omega_1}_{\GL_{n},\bA^{N}} $ is contained in the connected component $\Gr^{\mathbf 1}_{\GL_n,\bA^N}$ where $\mathbf 1$ is the vector in $\bZ^N$ whose components are $1$.

Let $\tilde\pi^{\mathbf 1}: \Gr^{\mathbf 1}_{\GL_n,\bA^N}\to\bA^N$ be the structure map of the Beilinson-Drinfeld Grassmannian, and we define the following sheaf on $\bA^N$:
\begin{align}
    \mathcal V_N^{n,k}:=\tilde\pi^{\mathbf 1}_*\mathcal O(k).
\end{align}
Since there is a natural $\GL_n[z]\rtimes \bC_q^{\times}$-action on $\Gr^{\mathbf 1}_{\GL_n,\bA^N}$ such that $\mathcal O(k)$ is equivariant, and such action restricts to the $\GL_n[z]\rtimes \bC_q^{\times}$-action on $\overline{\Gr}^{\omega_1,\cdots,\omega_1}_{\GL_{n},\bA^{N}}$, we have the restriction map 
\begin{align}\label{eq: truncation of integrable module_BD}
    \mathcal V_N^{n,k}\to \tilde\pi_*\mathcal O(k)=\mathcal E_N^{n,k},
\end{align}
which is $\GL_n[z]\rtimes \bC_q^{\times}$-equivariant by construction. Applying the result in \cite[Ch.XVIII]{mathieu1988formules} to each fiber of $\tilde\pi^{\mathbf 1}$, we see that the map \eqref{eq: truncation of integrable module_BD} is surjective. Moreover \eqref{eq: truncation of integrable module_BD} induces isomorphism between ground states, namely we have the following.

\begin{proposition}\label{prop: identify ground states}
Let $\Gamma(\bA^N,\mathcal V_N^{n,k})_0$ be the $\bC_q^{\times}$-eigenspace of $\Gamma(\bA^N,\mathcal V_N^{n,k})$ with the smallest $\bC_q^{\times}$-weight, then \eqref{eq: truncation of integrable module_BD} induces isomorphism
\begin{align}
    \Gamma(\bA^N,\mathcal V_N^{n,k})_0\cong \Gamma(\bA^N,\mathcal E_N^{n,k})_0.
\end{align}
\end{proposition}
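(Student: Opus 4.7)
The strategy is to use the $\bC_q^\times$-equivariance to reduce the statement to an isomorphism at the central fiber $0\in \bA^N$, where it becomes a representation-theoretic fact already recorded earlier in the excerpt.

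I would first establish the following graded Nakayama principle. Let $\mathcal F$ be a $\bC_q^\times$-equivariant quasi-coherent sheaf on $\bA^N$, where $\bC_q^\times$ acts on $\bA^N$ so that all coordinate functions have strictly positive weight, and suppose that the $\bC_q^\times$-weights of $\Gamma(\bA^N,\mathcal F)$ are bounded below; write $w_0$ for the minimum weight. Then the fiber-at-origin map
\[
\Gamma(\bA^N,\mathcal F)_{w_0}\;\longrightarrow\;(\mathcal F|_{0})_{w_0}
\]
is an isomorphism: the augmentation ideal $\mathfrak m=(x_1,\dots,x_N)\subset \bC[\bA^N]$ is concentrated in strictly positive weights, so $\mathfrak m\cdot \Gamma(\bA^N,\mathcal F)$ contains no weight $w_0$ element (injectivity), while the surjective, $\bC_q^\times$-equivariant quotient $\Gamma(\bA^N,\mathcal F)\twoheadrightarrow \mathcal F|_{0}$ restricts to a surjection on each weight component (surjectivity).

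Applying this principle to both $\mathcal V_N^{n,k}$ and $\mathcal E_N^{n,k}$, and using that the map $\mathcal V_N^{n,k}\to \mathcal E_N^{n,k}$ of \eqref{eq: truncation of integrable module_BD} is $\bC_q^\times$-equivariant, the proposition reduces to showing that the central-fiber restriction
\[
\Gamma(\Gr^{(N)}_{\GL_n},\mathcal O(k))\;\longrightarrow\;\Gamma(\bGr^{N\omega_1}_{\GL_n},\mathcal O(k))
\]
is an isomorphism on minimum $\bC_q^\times$-weight subspaces. This is exactly the statement recorded in the paragraph preceding Corollary \ref{cor: cyclic}: by Proposition \ref{prop: BWB} the left hand side is dual to $L_{k\varpi_{n-r}}(\widehat{\mathfrak{sl}}(n)_k)\otimes \mathrm{Fock}_{-kN}(\widehat{\gl}(1)_{kn})$, whose minimum $\bC_q^\times$-weight component is $L_{k\varpi_{n-r}}(\widehat{\mathfrak{sl}}(n)_k)^*_0$ (the Fock factor contributes only its one-dimensional vacuum), and the restriction carries this isomorphically onto $\Gamma(\bGr^{N\omega_1}_{\GL_n},\mathcal O(k))_0$.

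The main technical subtlety is the identification $\mathcal V_N^{n,k}|_{0}\cong \Gamma(\Gr^{(N)}_{\GL_n},\mathcal O(k))$, because $\tilde\pi^{\mathbf 1}$ is only ind-proper and $\mathcal V_N^{n,k}$ is not coherent. I would treat this by writing $\Gr^{\mathbf 1}_{\GL_n,\bA^N}$ as a filtered colimit of closed Schubert sub-ind-schemes that are proper over $\bA^N$, applying ordinary proper base change on each stratum, and passing to the filtered colimit, which is exact and commutes with extracting the minimum $\bC_q^\times$-weight component because the $\bC_q^\times$-weights are uniformly bounded below throughout the filtration. Once this identification is in place, the previous paragraph completes the argument.
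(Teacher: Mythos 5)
Your argument is correct, and it is a genuinely different route from the one the paper takes. The paper proves the statement by (i) observing that the surjection \(\mathcal V_N^{n,k}\twoheadrightarrow\mathcal E_N^{n,k}\) gives a surjection on ground states, and then (ii) computing the full \(q\)-character of \(\Gamma(\bA^N,\mathcal V_N^{n,k})\) via \(\bC_q^\times\)-localization — namely \(\mathrm{ch}_q=\frac{1}{(1-q)^N}\cdot\mathrm{ch}_q(\Gamma(\Gr^{(N)}_{\GL_n},\mathcal O(k)))\) — from which the dimension of the minimum-weight piece drops out and matches \(\dim\mathcal H_N(n,k)_0\). You instead observe that graded Nakayama over \(\bC[\bA^N]\) (all coordinates in strictly positive \(\bC_q^\times\)-weight) identifies the minimum-weight piece of \emph{global sections} with the minimum-weight piece of the \emph{central fiber}, which reduces the whole proposition to the isomorphism \(\Gamma(\Gr^{(N)}_{\GL_n},\mathcal O(k))_0\cong\Gamma(\bGr^{N\omega_1}_{\GL_n},\mathcal O(k))_0\) recorded before Corollary \ref{cor: cyclic}. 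This is more economical: you only extract the bottom weight and never compute the full character. Both proofs still rest on the same technical input, namely the pro-coherent presentation of \(\mathcal V_N^{n,k}\) in Lemma \ref{lem: pro-finite rank}, which you invoke to identify \(\mathcal V_N^{n,k}|_0\) with \(\Gamma(\Gr^{(N)}_{\GL_n},\mathcal O(k))\) — you are right to flag this as the delicate step, since the fiber of a degree-wise projective limit must be taken degree by degree (the surjective transition maps make the system Mittag-Leffler on each weight component, so this is safe), and the uniform lower bound on the \(\bC_q^\times\)-weights throughout the filtration, which you assert, does follow from the fact that each Schubert truncation \(\tilde\pi_{\lambda*}\mathcal O(k)\) has its central fiber inside the same Demazure module. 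So the proposal holds up; it trades the paper's character computation for a graded Nakayama reduction, which is arguably the more conceptual of the two.
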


\begin{lemma}\label{lem: pro-finite rank}
There exists a set of finite rank locally free sheaves $\mathcal V_m$ on $\bA^N$ ($m\in \mathbb N$) with surjective maps $\mathcal V_{m+1}\twoheadrightarrow \mathcal V_{m}$, such that 
\begin{align*}
    \mathcal V_N^{n,k}\cong \underset{\substack{\longleftarrow\\ m}}{\lim}\:\mathcal V_{m}.
\end{align*}
Moreover, for any $\vec x\in \bA^N$ let $\mathcal V_{m,\vec x}$ be the fiber of $\mathcal V_{m}$ at $\vec x$,
\begin{align*}
    \Gamma(\tilde\pi^{-1}(\vec x),\mathcal O(k))\cong \underset{\substack{\longleftarrow\\ m}}{\lim}\:\mathcal V_{m,\vec x}.
\end{align*}
\end{lemma}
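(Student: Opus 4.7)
The plan is to exhibit $\mathcal V_N^{n,k}$ as an inverse limit by using the ind-scheme structure of the Beilinson--Drinfeld Grassmannian. First, I would choose an exhausting filtration of $\Gr^{\mathbf 1}_{\GL_n,\bA^N}$ by closed sub-ind-schemes $Z_m \subset Z_{m+1} \subset \cdots$, each of which is projective over $\bA^N$ and such that $\Gr^{\mathbf 1}_{\GL_n,\bA^N} = \varinjlim_m Z_m$. A convenient choice is to take $Z_m = \overline{\Gr}^{\vec\lambda_m}_{\GL_n,\bA^N}$ for an increasing sequence $\vec\lambda_m$ of $N$-tuples of dominant coweights whose total ``degree'' equals $N$ (so that $Z_m$ lies in the component $\Gr^{\mathbf 1}$), and such that every BD Schubert subvariety of $\Gr^{\mathbf 1}_{\GL_n,\bA^N}$ is contained in some $Z_m$. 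Define $\mathcal V_m := (\tilde\pi^{\mathbf 1}|_{Z_m})_*(\mathcal O(k)|_{Z_m})$.

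Next I would establish that each $\mathcal V_m$ is locally free of finite rank on $\bA^N$ and that the restriction maps $\mathcal V_{m+1}\twoheadrightarrow\mathcal V_m$ are surjective. For the first point, one applies the Mathieu-type cohomology vanishing theorem \cite{mathieu1988formules} (extended to the BD setting as in \cite{zhu2016introduction}) fiberwise, giving $H^{>0}(Z_{m,\vec x}, \mathcal O(k)|_{Z_{m,\vec x}}) = 0$ for all $\vec x\in\bA^N$, together with semicontinuity / cohomology and base change, which then yields local freeness of $\mathcal V_m$. For the second point, the closed embedding $i: Z_m\hookrightarrow Z_{m+1}$ gives a short exact sequence
\begin{equation*}
    0 \to \mathcal I\otimes\mathcal O(k) \to \mathcal O(k)|_{Z_{m+1}} \to i_*(\mathcal O(k)|_{Z_m}) \to 0,
\end{equation*}
and pushing forward, surjectivity of $\mathcal V_{m+1}\to\mathcal V_m$ follows from the vanishing of $R^1(\tilde\pi^{\mathbf 1})_*(\mathcal I\otimes\mathcal O(k))$, which again reduces to the Mathieu vanishing theorem applied fiberwise to the ideal sheaf of a Schubert subvariety twisted by a dominant line bundle.

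The identification $\mathcal V_N^{n,k}\cong\varprojlim_m \mathcal V_m$ is then formal: since $\Gr^{\mathbf 1}_{\GL_n,\bA^N} = \varinjlim Z_m$ as ind-schemes, by definition $\mathcal O(k) = \varprojlim \mathcal O(k)|_{Z_m}$ as sheaves on the ind-scheme, and pushforward along $\tilde\pi^{\mathbf 1}$ commutes with inverse limits of quasi-coherent sheaves. For the fiberwise statement, the cohomology and base change property established above (using $R^{>0} = 0$) gives $\mathcal V_{m,\vec x} \cong \Gamma(Z_{m,\vec x}, \mathcal O(k)|_{Z_{m,\vec x}})$ canonically, and the same inverse-limit argument applied to $\tilde\pi^{-1}(\vec x) = \varinjlim Z_{m,\vec x}$ yields $\Gamma(\tilde\pi^{-1}(\vec x),\mathcal O(k)) = \varprojlim_m \mathcal V_{m,\vec x}$.

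The main technical obstacle is the Mathieu-type cohomology vanishing in the BD setting, which is needed both for local freeness of $\mathcal V_m$ and for surjectivity of the transition maps. Once this input is available (either by direct degeneration to the fiber over $0\in\bA^N$, where $\Gr^{\mathbf 1}_{\GL_n,\bA^N}$ degenerates to the ordinary $\Gr^{(N)}_{\GL_n}$ to which the classical Mathieu vanishing \cite{mathieu1988formules} applies, combined with $\bC_q^{\times}$-equivariance to spread the vanishing across the base, or by directly invoking \cite[2.1.1]{zhu2009affine}), the rest of the argument is a formal application of cohomology and base change together with the general principle that pushforward along an ind-proper morphism commutes with inverse limits along the ind-structure.
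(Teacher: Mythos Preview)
Your outline follows the same broad strategy as the paper---exhaust the ind-scheme by closed subschemes proper over $\bA^N$, push forward, and use Mathieu's vanishing on the central fiber together with $\bC_q^\times$-equivariance to get local freeness and surjectivity of transition maps---but the specific exhaustion you propose does not work. The ind-scheme $\Gr^{\mathbf 1}_{\GL_n,\bA^N}$ is \emph{not reduced}: the central torus $\bG_m\subset\GL_n$ contributes a factor $\Gr^{\mathbf 1}_{\bG_m,\bA^N}$ which is a genuine infinitesimal thickening (already $\Gr_{\bG_m}$ over a point is a non-reduced thickening of $\bZ$, with tangent space $\mathscr K/\mathscr O$ at each point). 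Since Beilinson--Drinfeld Schubert varieties are reduced by definition, your $\{Z_m\}$ exhaust only $(\Gr^{\mathbf 1}_{\GL_n,\bA^N})_{\mathrm{red}}$, and your inverse limit computes the pushforward of $\mathcal O(k)$ from the reduced sub-ind-scheme. This is a proper quotient of $\mathcal V_N^{n,k}$; concretely, it misses the $\widehat{\gl}(1)$ Fock-module contribution in each fiber (cf.\ Proposition~\ref{prop: BWB}).

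The paper repairs this by first splitting $\Gr^{\mathbf 1}_{\GL_n,\bA^N}\cong \Gr^{\mathbf 1}_{\PGL_n,\bA^N}\times_{\bA^N}\Gr^{\mathbf 1}_{\bG_m,\bA^N}$. The $\bG_m$-factor is exhibited explicitly as an inductive limit of finite \emph{flat} (non-reduced) $\bA^N$-schemes, so flat base change reduces the problem to the $\PGL_n$-factor. For $\PGL_n$ (adjoint type) the BD Schubert varieties \emph{do} exhaust the ind-scheme, and from that point on the argument is essentially the one you sketch: the paper invokes flatness of $\bGr^{\lambda,\ldots,\lambda}_{\PGL_n,\bA^N}\to\bA^N$ from \cite[1.2.4]{zhu2009affine} (a hypothesis you need for cohomology-and-base-change but do not name), and obtains local freeness and surjectivity by propagating Mathieu's results on $\bGr^{N\lambda}_{\PGL_n}$ from the central fiber to all of $\bA^N$ via $\bC_q^\times$-equivariance, just as you suggest.
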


\begin{proof}
It is known that $\Gr^{\mathbf 1}_{\GL_n,\bA^N}\cong \Gr^{\mathbf 1}_{\PGL_n,\bA^N}\times_{\bA^N} \Gr^{\mathbf 1}_{\bC^{\times},\bA^N}$ as ind-schemes. $\Gr^{\mathbf 1}_{\bC^{\times},\bA^N}$ is inductive limit of finite flat schemes over $\bA^N$. In fact, 
\begin{align*}
    \Gr^{\mathbf 1}_{\bC^{\times},\bA^N}\cong \underset{\substack{\longrightarrow\\ m}}{\lim}\Spec_{\bA^N}\left(\bigoplus_{i=0}^m \mathrm{Sym}^i(\mathcal F_m)\right),\quad \mathcal F_m=\Hom_{\mathcal O_{\bA^N}}(\mathcal O_{\bA^N\times\bA^1}(m\cdot D)/\mathcal O_{\bA^N\times\bA^1},\mathcal O_{\bA^N}),
\end{align*}
where $D$ is the divisor on $\bA^N\times\bA^1=\Spec\bC[x_1,\cdots,x_N]\times\Spec \bC[z]$ given by equation $\prod_{i=1}^N(z-x_i)$. The inductive limit is induced by the projection $\mathcal F_{m+1}\twoheadrightarrow \mathcal F_m$. Since $\mathcal O_{\bA^N\times\bA^1}(m\cdot D)/\mathcal O_{\bA^N\times\bA^1}$ is locally free of rank $m-1$ on $\bA^N$, we see that $\mathcal F_m$ and $\bigoplus_{i=0}^m \mathrm{Sym}^i(\mathcal F_m)$ are locally free of finite rank on $\bA^N$. By the flat base change theorem, to prove the lemma for $\mathcal V_N^{n,k}=\tilde\pi_*\mathcal O(k)$, it suffices to prove the analogous statement for $\tilde\pi'_*\mathcal O(k)$, i.e. there exists a set of finite rank locally free sheaves $\mathcal V'_m$ on $\bA^N$ ($m\in \mathbb N$) with surjective maps $\mathcal V'_{m+1}\twoheadrightarrow \mathcal V'_{m}$, such that 
\begin{align*}
    \tilde\pi'_*\mathcal O(k)\cong \underset{\substack{\longleftarrow\\ m}}{\lim}\:\mathcal V'_{m},\quad \Gamma(\tilde\pi'^{-1}(\vec x),\mathcal O(k))\cong\underset{\substack{\longleftarrow\\ m}}{\lim}\:\mathcal V'_{m,\vec x},\forall\vec x\in \bA^N.
\end{align*}

It is known that $\Gr^{\mathbf 1}_{\PGL_n,\bA^N}$ is direct limit of Beilinson-Drinfeld Schubert varieties:
\begin{align*}
    \Gr^{\mathbf 1}_{\PGL_n,\bA^N}=\underset{\substack{\longrightarrow\\ \lambda}}{\lim} \bGr^{\lambda,\cdots,\lambda}_{\PGL_n,\bA^N},
\end{align*}
where $\lambda$ runs through all dominant coweights of $\PGL_n$ such that $\lambda-\omega_1$ is in the root lattice of $\PGL_n$. $\bGr^{\lambda,\cdots,\lambda}_{\PGL_n,\bA^N}$ is contained in $\bGr^{\lambda',\cdots,\lambda'}_{\PGL_n,\bA^N}$ if and only if $\lambda\le \lambda'$ in the Bruhat order. Let $\tilde\pi_\lambda$ be the restriction of $\tilde\pi'$ to $\bGr^{\lambda,\cdots,\lambda}_{\PGL_n,\bA^N}$, then $\tilde\pi_\lambda$ is proper and flat \cite[1.2.4]{zhu2009affine}, thus the derived pushforward $R\tilde\pi_{\lambda *}\mathcal O(k)$ is a perfect complex with positive tor amplitudes in $D^b_{\mathrm{coh}}(\bA^N)$. We note that $\tilde\pi_\lambda^{-1}(0)\cong \bGr^{N\lambda}_{\PGL_n}$, and according to \cite[Ch.XVIII]{mathieu1988formules} we have $H^i(\bGr^{N\lambda}_{\PGL_n},\mathcal O(k))=0$ for all $i>0$. It follows that $R^i\tilde\pi_{\lambda *}\mathcal O(k)$ vanishes in an open neighborhood $U$ of $\{0\}\in \bA^N$ for all $i>0$. By the $\bC_q^{\times}$-equivariance of $\tilde\pi_\lambda$, $U$ is $\bC_q^{\times}$-closed, thus $U=\bA^N$. This implies that $R^i\tilde\pi_{\lambda *}\mathcal O(k)=0$ for all $i>0$ and $\tilde\pi_{\lambda *}\mathcal O(k)$ is locally free of finite rank and its fiber at $\vec x\in \bA^N$ is canonically isomorphic to $\Gamma(\tilde\pi_{\lambda}^{-1}(\vec x),\mathcal O(k))$. Let us define $\mathcal V'_m=\tilde\pi_{(mn+1)\omega_1 *}\mathcal O(k)$.

Finally, it remains to show that the transition map $\tilde\pi_{\lambda' *}\mathcal O(k)\to \tilde\pi_{\lambda *}\mathcal O(k)$ is surjective whenever $\lambda'\ge \lambda$ in Bruhat order. By the $\bC_q^{\times}$-equivariance of the map $\tilde\pi_{\lambda' *}\mathcal O(k)\to \tilde\pi_{\lambda *}\mathcal O(k)$, it suffices to show that the map between fibers at zero is surjective, i.e. $\Gamma(\bGr^{N\lambda'}_{\PGL_n},\mathcal O(k))\to \Gamma(\bGr^{N\lambda}_{\PGL_n},\mathcal O(k))$ is surjective. The surjectivity for the latter is proven in \cite[Ch.XVIII]{mathieu1988formules}. This finishes the proof.
\end{proof}

\begin{proof}[Proof of Proposition \ref{prop: identify ground states}]
Since $\Gamma(\bA^N,\mathcal V_N^{n,k})\to \Gamma(\bA^N,\mathcal E_N^{n,k})$ is surjective, it induces surjective map between ground states, i.e. $\Gamma(\bA^N,\mathcal V_N^{n,k})_0\twoheadrightarrow \Gamma(\bA^N,\mathcal E_N^{n,k})_0$. Thus it suffices to show that $\dim\Gamma(\bA^N,\mathcal V_N^{n,k})_0=\dim \Gamma(\bA^N,\mathcal E_N^{n,k})_0$.

By the lemma \ref{lem: pro-finite rank} and by the $\bC_q^{\times}$-localization on $\bA^N$, the $q$-character of $\Gamma(\bA^N,\mathcal V_N^{n,k})$ is 
\begin{align*}
    \mathrm{ch}_q(\Gamma(\bA^N,\mathcal V_N^{n,k}))=\frac{1}{(1-q)^N} \mathrm{ch}_q(\Gamma(\tilde\pi^{-1}(0),\mathcal O(k))).
\end{align*}
We note that $\tilde\pi^{-1}(0)\cong \Gr^{(N)}_{\GL_n}$. According to \cite[Theorem 2.5.5]{zhu2016introduction}, $\Gamma(\Gr^{(N)}_{\GL_n},\mathcal O(k))$ is dual to $L_{k\varpi_{n-r}}(\widehat{\mathfrak{sl}}(n)_k)\otimes\mathrm{Fock}_{kN}$, where $L_{k\varpi_{n-r}}(\widehat{\mathfrak{sl}}(n)_k)$ is the level $k$ integrable representation with highest weight $k\varpi_{n-r}$ of $\widehat{\mathfrak{sl}}(n)$ and $\mathrm{Fock}_{kN}$ is the Fock module of $\widehat{\gl}(1)_{kn}$ of weight $kN$. Then we have
\begin{align*}
    \dim \Gamma(\bA^N,\mathcal V_N^{n,k})_0=\dim\Gamma(\Gr^{(N)}_{\GL_n},\mathcal O(k))_0=\dim L_{k\varpi_{n-r}}(\widehat{\mathfrak{sl}}(n)_k)^*_0=\dim L_{k\varpi_r}({\mathfrak{sl}}_n)=\dim \mathcal H_N(n,k)_0.
\end{align*}
According to Lemma \ref{lem: ground states unsymmetrized}, we have $\dim \Gamma(\bA^N,\mathcal V_N^{n,k})_0=\dim \Gamma(\bA^N,\mathcal E_N^{n,k})_0$. This proves the proposition.
\end{proof}

\subsubsection{Conformal blocks}

Fix $\ell$ distinct points $y_1,\cdots,y_\ell$ in $\bA^1$, and attach each point $y_i$ a positive integer $d_i$ such that $\sum_{i=1}^\ell d_i=N$. Then we can regard $\mathbf y=\sum_{i=1}^\ell d_iy_i$ as an element in $\bA^N$. The fiber of $\tilde\pi$ at $\mathbf y$ is isomorphic to $\prod_{i=1}^\ell\Gr^{(d_i)}_{\GL_n}$, and the restriction of $\mathcal O(k)$ to $\tilde\pi^{-1}(\mathbf y)$ is the tensor product of $\mathcal O(k)$ on each component $\Gr^{(d_i)}_{\GL_n}$. Applying Proposition \ref{prop: BWB} to $\tilde\pi^{-1}(\mathbf y)$ and we obtain the following.

\begin{proposition}\label{prop: BD BWB}
There is a vector space isomorphism $\Gamma(\tilde\pi^{-1}(\mathbf y),\mathcal O(k))^*\cong \bigotimes_{i=1}^\ell M_i$, where $M_i\cong L_{k\varpi_{n-r_i}}(\widehat{\mathfrak{sl}}(n)_{k})\otimes\mathrm{Fock}_{-kd_i}(\widehat{\gl}(1)_{kn})$, $r_i=d_i-\lfloor\frac{d_i}{n}\rfloor n$. The action of $\gl_n[z]$ on $\bigotimes_{i=1}^\ell M_i$ is via the Lie algebra map $\gl_n[z]\to \bigoplus_{i=1}^\ell \widehat{\gl}(n)_k$, where the $i$-th $\widehat{\gl}(n)_k$ is the central extension
\begin{align*}
    0\longrightarrow \bC\longrightarrow \widehat{\gl}(n)_k\longrightarrow \gl_n(\!(z-y_i)\!)\longrightarrow 0.
\end{align*}
\end{proposition}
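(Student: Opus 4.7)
The plan is to reduce this statement to Proposition \ref{prop: BWB} applied pointwise, using the factorization property of Beilinson-Drinfeld Grassmannians.

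First, I would invoke factorization: the fiber $\tilde\pi^{-1}(\mathbf y)$ over $\mathbf y = \sum_{i=1}^\ell d_i y_i \in \bA^N$ is canonically isomorphic to $\prod_{i=1}^\ell \Gr^{(d_i)}_{\GL_n}$, where the $i$-th factor is the affine Grassmannian based at $y_i$ (a $G$-bundle on the formal disk around $y_i$ trivialized away from $y_i$, with the discrete invariant $d_i$). Under this isomorphism, the restriction of the level-$k$ determinant line bundle $\mathcal O(k)$ on $\Gr^{\mathbf 1}_{\GL_n,\bA^N}$ splits as an external tensor product $\boxtimes_{i=1}^\ell \mathcal O(k)$ of the analogous determinant bundles on each factor. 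This is a standard property of the central extension giving rise to $\mathcal O(1)$ on the Beilinson-Drinfeld Grassmannian and is built into its construction.

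Second, I would apply the Künneth formula to this product. The ind-scheme structure prevents a naive application, but Lemma \ref{lem: pro-finite rank} furnishes a pro-system of finite-rank locally free sheaves approximating $\tilde\pi_* \mathcal O(k)$; after restricting to the fiber over $\mathbf y$, Künneth for finite-type proper schemes and the vanishing of higher cohomology from \cite[Ch.XVIII]{mathieu1988formules} give
\begin{align*}
\Gamma(\tilde\pi^{-1}(\mathbf y),\mathcal O(k))^* \;\cong\; \bigotimes_{i=1}^\ell \Gamma(\Gr^{(d_i)}_{\GL_n},\mathcal O(k))^*.
\end{align*}
Then Proposition \ref{prop: BWB} identifies each factor with $L_{k\varpi_{n-r_i}}(\widehat{\mathfrak{sl}}(n)_k)\otimes \mathrm{Fock}_{-kd_i}(\widehat{\gl}(1)_{kn})$, delivering the claimed isomorphism of vector spaces.

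Third, I would track the $\gl_n[z]$-equivariance. The action of $\gl_n[z]$ on $\Gamma(\tilde\pi^{-1}(\mathbf y),\mathcal O(k))$ is induced from the restriction of the global $\GL_n[z]$-action on $\Gr^{\mathbf 1}_{\GL_n,\bA^N}$ to the fiber, and this action is by construction compatible with the factorization: on each factor $\Gr^{(d_i)}_{\GL_n}$, an element $X(z)\in \gl_n[z]$ acts via its Laurent expansion at $y_i$, giving the Lie algebra map $\gl_n[z]\to \gl_n(\!(z-y_i)\!)$. The central extension of $\gl_n(\!(z-y_i)\!)$ appearing in Proposition \ref{prop: BWB} comes from the determinant line bundle on $\Gr^{(d_i)}_{\GL_n}$, and since the determinant bundle on the fiber is the external tensor product above, the individual lifts assemble (without any cocycle obstruction) into the diagonal map $\gl_n[z]\to \bigoplus_{i=1}^\ell \widehat{\gl}(n)_k$ on the dual.

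The main obstacle is the ind-scheme subtlety in step two, but this is controlled by Lemma \ref{lem: pro-finite rank}: the pro-finite presentation ensures that the Künneth identification makes sense after dualization. A secondary point to verify is that the central element of each $\widehat{\gl}(n)_k$ acts by the integer $k$ consistently across factors; this follows because all factors arise from restrictions of the same global line bundle $\mathcal O(k)$, so the level is uniform. No anomalous contribution appears precisely because we are only computing global sections on a single fiber, not imposing coinvariance conditions that would involve residues and a sum-of-residues relation.
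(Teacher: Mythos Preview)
Your proposal is correct and follows the same route the paper takes: the paper simply notes (in the sentence preceding the proposition) that the fiber $\tilde\pi^{-1}(\mathbf y)$ factorizes as $\prod_{i=1}^\ell\Gr^{(d_i)}_{\GL_n}$ with $\mathcal O(k)$ restricting to the box product, and then applies Proposition~\ref{prop: BWB} to each factor. Your write-up is in fact more detailed than the paper's own treatment, carefully justifying the K\"unneth step via Lemma~\ref{lem: pro-finite rank} and spelling out the $\gl_n[z]$-equivariance through the local expansions at each $y_i$.
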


Consider the vertex algebra $V_k(\gl_n)$ associated to affine Lie algebra $\widehat{\gl}(n)_k$. We interpret the vector space $M_i$ as a $V_k(\gl_n)$-module inserted at $y_i\in \bA^1\subset \bP^1$. In \cite[13.1.7]{frenkel2004vertex}, the space of \textit{modified conformal blocks} is defined as the Lie algebra invariant
\begin{align}
    C^0_{V_k(\gl_n)}(\bP^1,(y_i),(M_i))_{i=1}^\ell:=\left(\bigotimes_{i=1}^\ell M_i^*\right)^{\mathfrak{g}^0_{\vec y}},
\end{align}
where $\mathfrak{g}^0_{\vec y}$ is the Lie subalgebra of $\gl_n[z,\frac{1}{z-y_i}]_{1\le i\le \ell}$ consisting of those elements $f(z)$ such that $f(\infty)=0$. By Proposition \ref{prop: BD BWB}, the space of modified conformal blocks $C^0_{V_k(\gl_n)}(\bP^1,(y_i),(M_i))_{i=1}^\ell$ is naturally isomorphic to $\Gamma(\tilde\pi^{-1}(\mathbf y),\mathcal O(k))^{\mathfrak{g}^0_{\vec y}}$. 

\begin{theorem}\label{thm: grounds states in CB}
For any element $\phi\in \Gamma(\bA^N,\mathcal V_N^{n,k})_0$, its restriction to the fiber at $\mathbf y$, denoted by $\phi_{\mathbf y}\in \Gamma(\tilde\pi^{-1}(\mathbf y),\mathcal O(k))$, belongs to the subspace of modified conformal blocks. I.e. $\phi_{\mathbf y}$ is $\mathfrak{g}^0_{\vec y}$ invariant.
\end{theorem}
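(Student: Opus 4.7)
The idea is to realize the fiberwise action of $f\in\mathfrak g^0_{\vec y}$ on $\phi_{\vec y}$ as the specialization of a globally defined, $\bC^{\times}_q$-equivariant operator on $\Gamma(\bA^N,\mathcal V_N^{n,k})$ of strictly negative $\bC^{\times}_q$-weight. Since the ground states occupy the lowest stratum of a grading bounded below, any such operator must annihilate them, and specializing at $\vec y$ then yields the desired invariance.

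First I would lift $f\in\mathfrak g^0_{\vec y}$ to a global section of the sheaf of Lie algebras $\widetilde{\mathfrak g}^0$ on $\bA^N$ whose fiber at $\vec x$ is $\mathfrak g^0_{\vec x}$. Writing $f=\sum_{i,m\ge 1}c^i_m E^{a_i}_{b_i}\otimes(z-y_i)^{-m}$, the natural lift is
\[
\mathbf f(\vec x)\;=\;\sum_{i,m\ge 1}c^i_m E^{a_i}_{b_i}\otimes(z-x_i)^{-m}\;\in\;\Gamma(\bA^N,\widetilde{\mathfrak g}^0),
\]
which specializes to $f$ at $\vec x=\vec y$. Because both $z$ and each $x_i$ carry $\bC^{\times}_q$-weight $+1$, every summand $(z-x_i)^{-m}$ is homogeneous of $\bC^{\times}_q$-weight $-m<0$, so $\mathbf f$ splits as a sum of homogeneous pieces of strictly negative $\bC^{\times}_q$-weight.

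Next I would construct the action of $\widetilde{\mathfrak g}^0$ on $\mathcal V_N^{n,k}$ from the outer gauge action on the BD Grassmannian. For each $\vec x$, the group $\{g\in\GL_n(\bA^1\setminus\vec x):g(\infty)=1\}$ changes the trivialization $\sigma\mapsto g\sigma$ while keeping $\mathcal F$ fixed, producing a vertical action on $\Gr^{\mathbf 1}_{\GL_n,\bA^N}$; the corresponding Lie algebra action lifts to the line bundle $\mathcal O(k)$ because the only obstruction, the sum over the insertion points of the affine Kac-Moody $2$-cocycle, reduces by the residue theorem on $\bP^1$ to the residue at $\infty$, which vanishes on $\widetilde{\mathfrak g}^0$ thanks to the defining condition $f(\infty)=0$. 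Pushing forward along $\tilde\pi^{\mathbf 1}$ produces a $\bC^{\times}_q$-equivariant Lie algebra action $D$ of $\widetilde{\mathfrak g}^0$ on $\mathcal V_N^{n,k}$, so that $D_{\mathbf f}$ has the same $\bC^{\times}_q$-weight as $\mathbf f$ and its specialization at $\vec y$ is the action of $f$ on $\Gamma(\tilde\pi^{-1}(\vec y),\mathcal O(k))$.

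Applied to a ground state $\phi\in\Gamma(\bA^N,\mathcal V_N^{n,k})_0$ of $\bC^{\times}_q$-weight $E_0$, each homogeneous component of $D_{\mathbf f}\phi$ is a global section of weight strictly less than $E_0$. The character identity established in the proof of Proposition~\ref{prop: identify ground states} identifies $E_0$ with the minimal $\bC^{\times}_q$-weight occurring in $\Gamma(\bA^N,\mathcal V_N^{n,k})$, so $D_{\mathbf f}\phi=0$ and hence $f\cdot\phi_{\vec y}=(D_{\mathbf f}\phi)|_{\vec y}=0$ as required. The main technical difficulty is making the outer gauge action of $\widetilde{\mathfrak g}^0$ on $\mathcal V_N^{n,k}$ globally coherent on all of $\bA^N$ and not only on $\bA^N_{\mathrm{disj}}$; this is essentially the factorization structure of the BD Grassmannian, and one has to verify that the residue-theoretic vanishing of the central extension and the compatibility of the vector-field lifts with $\tilde\pi^{\mathbf 1}_*$ hold uniformly across the diagonal strata where insertion points collide.
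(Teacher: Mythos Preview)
Your proposal is correct and follows essentially the same route as the paper's proof: define a global Lie algebra $\mathfrak g^0\subset\gl_n[z,(z-x_i)^{-1}]$ acting on $\Gr^{\mathbf 1}_{\GL_n,\bA^N}$ and on $\mathcal O(k)$, observe that this action is $\bC^\times_q$-equivariant with strictly negative weights, and conclude that it annihilates the ground states, then specialize to $\vec y$ via the evaluation map $\mathfrak g^0\twoheadrightarrow\mathfrak g^0_{\vec y}$. The paper treats the residue-theoretic lift to $\mathcal O(k)$ and the coherence across diagonal strata as part of the standard package of the Beilinson--Drinfeld Grassmannian rather than spelling them out, but the architecture is identical to yours.
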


\begin{proof}
Define the Lie algebra $\mathfrak{g}^0$ to be the Lie subalgebra of $\gl_n[z,\frac{1}{z-x_i}]_{1\le i\le N}$ consisting of those elements $f(z)$ such that $f(\infty)=0$. We note that $\bC_q^{\times}$ acts on $\gl_n[z,\frac{1}{z-x_i}]_{1\le i\le N}$ by assigning weight $1$ to $z$ and $-1$ to $\frac{1}{z-x_i}$, so $\mathfrak{g}^0$ is negatively weighted with respect to the $\bC_q^{\times}$-action. $\mathfrak{g}^0$ naturally acts on $\Gr^{\mathbf 1}_{\GL_n,\bA^N}$, and such action lifts to an action of $\mathfrak{g}^0$ on $\mathcal O(k)$, which endows $\Gamma(\bA^N,\mathcal V_N^{n,k})$ a $\mathfrak{g}^0$-module structure. We note that $\Gamma(\tilde\pi^{-1}(\mathbf y),\mathcal O(k))$ is quotient $\mathfrak{g}^0$-module of $\Gamma(\bA^N,\mathcal V_N^{n,k})$, and $\mathfrak{g}^0$ acts on $\Gamma(\tilde\pi^{-1}(\mathbf y),\mathcal O(k))$ via the evaluation map $\mathfrak{g}^0\twoheadrightarrow \mathfrak{g}^0_{\vec y}$
\begin{align*}
     x_i\mapsto y_j , \text{ for } \sum_{u=1}^{j-1}d_u<i\le \sum_{v=1}^{j}d_v.
\end{align*}
Therefore, the theorem will automatically follow from the following statement:
\begin{align*}
    \Gamma(\bA^N,\mathcal V_N^{n,k})_0\subset \Gamma(\bA^N,\mathcal V_N^{n,k})^{\mathfrak{g}^0}.
\end{align*}
This is true because the $\mathfrak{g}^0$-action on $\Gamma(\bA^N,\mathcal V_N^{n,k})$ is $\bC_q^{\times}$-equivariant, and $\mathfrak{g}^0$ is negatively graded, so $\mathfrak{g}^0$ takes $\Gamma(\bA^N,\mathcal V_N^{n,k})_0$ to lower $\bC_q^{\times}$-weight components which must be zero.
\end{proof}

\subsubsection{Knizhnik-Zamolodchikov equation}

It is known that the Beilinsen-Drinfeld Grassmannians $\{\Gr_{\GL_n,\bA^N}\}_{N\in \bZ_{>0}}$ forms a factorization space over the Ran space of $\bA^1$ \cite[20.3.5]{frenkel2004vertex}, and $\mathcal O(1)$ is a factorizable line bundle \cite[20.4.1]{frenkel2004vertex}. Then it follows from the theory of factorization space that $\mathcal V_N^{n,k}=\tilde\pi_*^{\mathbf 1}\mathcal O(k)$ naturally inherits a $D$-module structure from the factorzation structure \cite[20.4.1]{frenkel2004vertex}, i.e. it is equipped with a flat connection $\nabla: \mathcal V_N^{n,k}\to \mathcal V_N^{n,k}\otimes\Omega^1_{\bA^N}$.

\begin{proposition}\label{prop: ground states are flat sections}
Any ground state of $\Gamma(\bA^N,\mathcal V_N^{n,k})$ is a flat section of $\nabla$, i.e. 
\begin{align*}
    \nabla(\phi)=0,\; \forall \phi\in \Gamma(\bA^N,\mathcal V_N^{n,k})_0.
\end{align*}
\end{proposition}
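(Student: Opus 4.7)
The plan is to exploit the observation flagged in the introduction: the KZ connection $\nabla$ is $\bC_q^\times$-equivariant of weight $-1$, while the ground states sit at the bottom of a $\bC_q^\times$-grading that is bounded from below, so $\nabla\phi$ necessarily vanishes for $\phi$ a ground state. No explicit formula for $\phi$ is needed.

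First I would verify that $\nabla$ is $\bC_q^\times$-equivariant. The factorization structure producing $\nabla$ (cf.\ \cite[20.4.1]{frenkel2004vertex}) is canonically associated to the factorization spaces $\{\Gr_{\GL_n,\bA^N}\}$ together with the factorizable line bundle $\mathcal O(1)$, all of which are functorial under automorphisms of $\bA^1$. In particular, the $\bC_q^\times$-action on $\bA^1$ from Definition \ref{def:energy grading} (scaling with weight $-1$) lifts compatibly to the whole package $(\Gr^{\mathbf 1}_{\GL_n,\bA^N},\mathcal O(k),\tilde\pi^{\mathbf 1})$, hence to the resulting $D$-module $\mathcal V_N^{n,k}$ on $\bA^N$. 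Thus $\nabla:\mathcal V_N^{n,k}\to \mathcal V_N^{n,k}\otimes \Omega^1_{\bA^N}$ is $\bC_q^\times$-equivariant.

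Next I would pin down the weight of $\nabla$. The induced $\bC_q^\times$-action assigns weight $+1$ to each coordinate $x_i$ on $\bA^N$, hence weight $+1$ to each $dx_i$. Writing $\nabla=\sum_i \nabla_i\, dx_i$ locally, the equivariance of $\nabla$ as a map of $\bC_q^\times$-modules then forces each operator $\nabla_i:\mathcal V_N^{n,k}\to \mathcal V_N^{n,k}$ to lower the $\bC_q^\times$-weight by $1$. This is the purely geometric incarnation of the familiar fact that the KZ connection takes the form $\partial_{x_i}-\tfrac{1}{k+n}\sum_{j\neq i}\tfrac{\Omega_{ij}}{x_i-x_j}$, whose two summands both have weight $-1$.

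Finally, I would invoke boundedness of the grading: by Lemma \ref{lem: pro-finite rank} and $\bC_q^\times$-localization on $\bA^N$, one has $\mathrm{ch}_q(\Gamma(\bA^N,\mathcal V_N^{n,k}))=(1-q)^{-N}\mathrm{ch}_q(\Gamma(\Gr^{(N)}_{\GL_n},\mathcal O(k)))$, and both factors are bounded below in $q$ (the second because it is a positive-energy $\widehat{\gl}(n)_k$-module by Proposition \ref{prop: BWB}). Hence the subspace $\Gamma(\bA^N,\mathcal V_N^{n,k})_0$ is genuinely the minimal-weight summand. For any $\phi\in \Gamma(\bA^N,\mathcal V_N^{n,k})_0$, the image $\nabla_i\phi$ lies in the weight space one unit below the minimum, which is zero; thus $\nabla\phi=0$.

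The one nontrivial point in this plan is the equivariance statement in the first paragraph: one must check that the factorization construction of the connection (as opposed to just its fiber-wise description as a KZ operator) is manifestly $\bC_q^\times$-equivariant. Once that is in hand, the weight-counting argument and the boundedness of the grading deliver the conclusion at no further cost.
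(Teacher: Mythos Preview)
Your proposal is correct and follows exactly the same approach as the paper: the paper's proof states in two sentences that $\nabla$ is $\bC_q^\times$-equivariant by construction, so each $\nabla_i$ decreases the energy grading by one and therefore annihilates ground states. Your version simply fleshes out the justifications (functoriality of the factorization construction, the weight computation for $dx_i$, and the boundedness of the grading via Lemma~\ref{lem: pro-finite rank}) that the paper leaves implicit.
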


\begin{proof}
$\nabla$ is $\bC_q^{\times}$-equivariant by construction. Let $\nabla_i\in \End(\mathcal V_N^{n,k})$ be the action of $i$-th tangent vector $\partial_i$ on $\mathcal V_N^{n,k}$, then $\nabla_i$ decrease the energy grading by one, hence it annihilates the ground states.
\end{proof}

When restricted to the open subset $\bA^N_{\mathrm{disj}}\subset \bA^N$, we can give a trivialization of $\mathcal V_N^{n,k}$ using the Proposition \ref{prop: BD BWB}:
\begin{align}
    \mathcal V_N^{n,k}|_{\bA^N_{\mathrm{disj}}}\cong \mathcal O_{\bA^N_{\mathrm{disj}}}\otimes \bigotimes_{i=1}^N M_i^*,\quad M_i=L_{k\varpi_{n-1}}(\widehat{\mathfrak{sl}}(n)_{k})\otimes\mathrm{Fock}_{-k}(\widehat{\gl}(1)_{kn}).
\end{align}
According to \cite[13.3.3]{frenkel2004vertex}, for any point $\vec x=(x_1,\cdots,x_N)\in \bA^N_{\mathrm{disj}}$ the space of modified conformal blocks $C^0_{V_k(\gl_n)}(\bP^1,(x_i),(M_i))_{i=1}^N$ is isomorphic to $\left(S^k\bC^n\right)^{\otimes N}$. Moreover, we can vary the point $\vec x$ and the space of modified conformal blocks forms a sub-bundle 
\begin{align*}
    C^0_{k}\left(\mathbf x,\bigotimes_{i=1}^N M_i\right)\subset \mathcal V_N^{n,k}|_{\bA^N_{\mathrm{disj}}},
\end{align*}
whose fiber at a point $\vec x$ is $C^0_{V_k(\gl_n)}(\bP^1,(x_i),(M_i))_{i=1}^N$. According to the construction in \cite[13.3.3]{frenkel2004vertex} the isomorphism $C^0_{V_k(\gl_n)}(\bP^1,(x_i),(M_i))_{i=1}^N\cong \left(S^k\bC^n\right)^{\otimes N}$ is induced by restricting a linear function $\tau\in \bigotimes_{i=1}^N M_i^*$ to the ground states of $\bigotimes_{i=1}^N M_i$, and the latter is dual to $\left(S^k\bC^n\right)^{\otimes N}$. We can translate the above construction into the geometric language: the dual of ground states of $\bigotimes_{i=1}^N M_i$ is naturally identified with $\Gamma((\bP^{n-1})^N,\mathcal O(k)^{\boxtimes N})$ where $(\bP^{n-1})^N$ is the fiber of $\bGr^{\omega_1,\cdots,\omega_1}_{\GL_n,\bA^N}$ at $\vec x$, and the restriction-to-ground-states map is exactly the natural quotient map $\Gamma(\Gr^{\mathbf 1}_{\GL_n,\vec x},\mathcal O(k))\twoheadrightarrow \Gamma(\bGr^{\omega_1,\cdots,\omega_1}_{\GL_n,\vec x},\mathcal O(k))$. Varying $\vec x\in\bA^N_{\mathrm{disj}}$ and we obtain the following.
\begin{proposition}\label{prop: CB=E}
The projection $\mathcal V_N^{n,k}\twoheadrightarrow\mathcal E_N^{n,k}$ induces an isomorphism between vector bundles
\begin{align}\label{CB=E}
    C^0_{k}\left(\mathbf x,\bigotimes_{i=1}^N M_i\right)\cong \mathcal E_N^{n,k}|_{\bA^N_{\mathrm{disj}}}
\end{align}
\end{proposition}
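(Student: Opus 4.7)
The plan is to verify the statement fiberwise over $\bA^N_{\mathrm{disj}}$ and then upgrade to a global isomorphism of vector bundles. Fix $\vec{x}=(x_1,\dots,x_N)\in\bA^N_{\mathrm{disj}}$. By the factorization property of Beilinson-Drinfeld Grassmannians, $(\tilde\pi^{\mathbf 1})^{-1}(\vec{x})\cong\prod_{i=1}^{N}\Gr^{(1)}_{\GL_n,x_i}$ and $\tilde\pi^{-1}(\vec{x})\cong\prod_{i=1}^{N}\bGr^{\omega_1}_{\GL_n,x_i}\cong(\bP^{n-1})^N$, and the closed embedding $\bGr^{\omega_1,\dots,\omega_1}_{\GL_{n},\bA^N}\hookrightarrow\Gr^{\mathbf 1}_{\GL_n,\bA^N}$ restricts fiberwise to the product of the closed embeddings $\bGr^{\omega_1}_{\GL_n,x_i}\hookrightarrow\Gr^{(1)}_{\GL_n,x_i}$. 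Combining Lemmas~\ref{lem: pushforward of line bundle} and \ref{lem: pro-finite rank} with flat base change, the fibers of $\mathcal V_N^{n,k}$ and $\mathcal E_N^{n,k}$ at $\vec{x}$ are canonically $\Gamma(\Gr^{\mathbf 1}_{\GL_n,\vec{x}},\mathcal O(k))$ and $\Gamma((\bP^{n-1})^N,\mathcal O(k)^{\boxtimes N})$ respectively, and the quotient $\mathcal V_N^{n,k}\twoheadrightarrow\mathcal E_N^{n,k}$ specializes to the geometric restriction of sections along the above closed embedding.

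Next, I would identify the fiber of $C^0_k(\mathbf x,\bigotimes_i M_i)$ at $\vec{x}$ inside $\Gamma(\Gr^{\mathbf 1}_{\GL_n,\vec{x}},\mathcal O(k))$. By Proposition~\ref{prop: BD BWB} applied with $\ell=N$ and $d_i=1$, this global sections space is identified with $\bigotimes_{i=1}^{N}M_i^{\ast}$ (where $M_i\cong L_{k\varpi_{n-1}}(\widehat{\mathfrak{sl}}(n)_{k})\otimes\mathrm{Fock}_{-k}(\widehat{\gl}(1)_{kn})$), and the $\mathfrak{g}^0_{\vec x}$-invariants inside $\bigotimes_i M_i^{\ast}$ form exactly the fiber of $C^0_k$. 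The Frenkel-Ben-Zvi trivialization \cite[13.3.3]{frenkel2004vertex} identifies this invariant subspace with $(S^k\bC^n)^{\otimes N}$ by restricting an $\mathfrak{g}^0_{\vec x}$-invariant linear functional on $\bigotimes_i M_i$ to the lowest-energy subspace $\bigotimes_i M_{i,0}$, whose dual is naturally $(S^k\bC^n)^{\otimes N}$. The key compatibility I would then establish is that, under the identification in Proposition~\ref{prop: BD BWB}, the dual of the inclusion $\bigotimes_i M_{i,0}\hookrightarrow\bigotimes_i M_i$ coincides with the geometric restriction $\Gamma(\Gr^{\mathbf 1}_{\GL_n,\vec{x}},\mathcal O(k))\twoheadrightarrow\Gamma(\bGr^{\omega_1,\dots,\omega_1}_{\GL_n,\vec{x}},\mathcal O(k))$. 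Granting this, the fiber of the composition $C^0_k\hookrightarrow \mathcal V_N^{n,k}|_{\bA^N_{\mathrm{disj}}}\twoheadrightarrow\mathcal E_N^{n,k}|_{\bA^N_{\mathrm{disj}}}$ at $\vec{x}$ is precisely the Frenkel-Ben-Zvi trivialization, hence an isomorphism at every point.

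To conclude, both $C^0_k$ and $\mathcal E_N^{n,k}|_{\bA^N_{\mathrm{disj}}}$ are locally free sheaves of the same constant rank $(\dim S^k\bC^n)^N$ on $\bA^N_{\mathrm{disj}}$, so a pointwise isomorphism between them is automatically an isomorphism of vector bundles. The main obstacle will be making rigorous the compatibility between the representation-theoretic duality of Proposition~\ref{prop: BD BWB} and the geometric restriction map --- essentially unwinding the Borel-Weil-type identification $\Gamma(\Gr^{(1)}_{\GL_n,x_i},\mathcal O(k))^{\ast}\cong M_i$ at each insertion point and matching the projection $M_i\twoheadrightarrow M_{i,0}$ onto the lowest-energy subspace with the closed embedding $\bGr^{\omega_1}_{\GL_n,x_i}\hookrightarrow\Gr^{(1)}_{\GL_n,x_i}$. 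Once this compatibility is in hand, the rest of the argument is formal, and, importantly, the argument needs no explicit formulas for the ground states or for the conformal blocks themselves.
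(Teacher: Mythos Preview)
Your proposal is correct and follows essentially the same approach as the paper. The paper does not supply a separate proof environment for this proposition; instead, the argument is given in the paragraph immediately preceding it, where the authors note that the Frenkel--Ben-Zvi isomorphism $C^0_{V_k(\gl_n)}(\bP^1,(x_i),(M_i))\cong(S^k\bC^n)^{\otimes N}$ comes from restricting a functional to the ground states of $\bigotimes_i M_i$, and that under the geometric identification this restriction-to-ground-states map is precisely the quotient $\Gamma(\Gr^{\mathbf 1}_{\GL_n,\vec x},\mathcal O(k))\twoheadrightarrow\Gamma(\bGr^{\omega_1,\dots,\omega_1}_{\GL_n,\vec x},\mathcal O(k))$, whence the proposition follows by varying $\vec x$. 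Your write-up is in fact more careful than the paper's, since you spell out the flat base change step (via Lemmas~\ref{lem: pushforward of line bundle} and~\ref{lem: pro-finite rank}) and the final rank argument for upgrading a pointwise isomorphism to a bundle isomorphism; the ``main obstacle'' you flag --- matching the representation-theoretic projection $M_i\twoheadrightarrow M_{i,0}$ with the geometric restriction along $\bGr^{\omega_1}_{\GL_n,x_i}\hookrightarrow\Gr^{(1)}_{\GL_n,x_i}$ --- is exactly what the paper asserts without further justification.
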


A key property of the sub-bundle $C^0_{k}\left(\mathbf x,\bigotimes_{i=1}^N M_i\right)\subset \mathcal V_N^{n,k}|_{\bA^N_{\mathrm{disj}}}$ is that it is a sub D-module \cite[13.3.7.(2)]{frenkel2004vertex}, i.e. the connection $\nabla$ preserves $C^0_{k}\left(\mathbf x,\bigotimes_{i=1}^N M_i\right)$. The restriction of $\nabla$ to $C^0_{k}\left(\mathbf x,\bigotimes_{i=1}^N M_i\right)$ is known as the Knizhnik-Zamolodchikov connection \cite{knizhnik1984current}. Using the isomorphism \eqref{CB=E} and the trivialization \eqref{restriction of E to open}, we obtain trivialization $C^0_{k}\left(\mathbf x,\bigotimes_{i=1}^N M_i\right)\cong \mathcal O_{\bA^N_{\mathrm{disj}}}\otimes\left(S^k\bC^n\right)^{\otimes N}$. The coordinate form of connection $\nabla$ on $C^0_{k}\left(\mathbf x,\bigotimes_{i=1}^N M_i\right)$ in this trivialization is the following:
\begin{align}\label{eqn:KZ}
    \nabla_i=\partial_i-\frac{1}{k+n}\sum_{j\neq i}\frac{\Omega^{\mathfrak{sl}_n}_{ij}}{x_i-x_j}-\frac{1}{kn}\sum_{j\neq i}\frac{k^2}{x_i-x_j},
\end{align}
where the first summation term corresponds to the module $L_{k\varpi_{n-1}}(\widehat{\mathfrak{sl}}(n)_{k})$ which is worked out in \cite[13.3.8]{frenkel2004vertex} and the second summation term corresponds to the module $\mathrm{Fock}_{-k}(\widehat{\gl}(1)_{kn})$ which is worked out in \cite[13.2.6]{frenkel2004vertex}. Here $\Omega^{\mathfrak{sl}_n}_{ij}$ is the quadratic Casimir for $\mathfrak{sl}_n$. $\Omega^{\mathfrak{sl}_n}_{ij}$ equals to $\Omega_{ij}-\frac{k^2}{n}\mathrm{Id}$ when acting on $S^k\bC^n\otimes S^k\bC^n$, so we can rewrite \eqref{eqn:KZ} as
\begin{align}\label{eqn:KZ 2}
    \nabla_i=\partial_i-\frac{1}{k+n}\sum_{j\neq i}\frac{\Omega_{ij}+k}{x_i-x_j}.
\end{align}
By Theorem \ref{thm: grounds states in CB} the restriction of any ground state $\phi\in \Gamma(\bA^N,\mathcal V_N^{n,k})_0$ to the open locus $\bA^N_{\mathrm{disj}}$ is a section of the sub-bundle $C^0_{k}\left(\mathbf x,\bigotimes_{i=1}^N M_i\right)$, and by the Proposition \ref{prop: ground states are flat sections} the coordinate form of $\phi$ satisfies the KZ equation $\nabla(\phi)=0$ where $\nabla$ is the KZ connection in \eqref{eqn:KZ 2}. Combine Remark \ref{rmk: wave function via E}, Lemma \ref{lem: ground states unsymmetrized}, and Proposition \ref{prop: identify ground states}, we see that the wave functions of ground states in $\mathcal H_N(n,k)$ satisfy the KZ equation. We summarize the result as follows.
\begin{corollary}\label{cor:KZ eqn}
Ground states wave functions solve the KZ equation, i.e.
\begin{align*}
    \nabla(\phi)=0,\; \forall \phi\in \mathfrak{W}(\mathcal H_N(n,k)_0),
\end{align*}
where $\nabla$ is the KZ connection \eqref{eqn:KZ 2}.
\end{corollary}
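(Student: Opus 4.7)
The plan is to chain together the geometric results already assembled in this section, so that the KZ equation appears as a tautological consequence of the flatness of the Beilinson-Drinfeld connection on $\mathcal V_N^{n,k}$ restricted to the sub-bundle of conformal blocks. Starting from a ground state wave function $\phi \in \mathfrak{W}(\mathcal H_N(n,k)_0)$, I would first lift it through the chain of isomorphisms in the section. Concretely, by Remark \ref{rmk: wave function via E} the element $\phi$ is the image under \eqref{restriction of E to open} of the restriction to $\bA^N_{\mathrm{disj}}$ of some $\tilde\phi \in \Gamma(\bA^N,\mathcal E_N^{n,k})_0$ (using that $\mathcal H_N(n,k)_0 \cong \Gamma(\bA^N,\mathcal E_N^{n,k})_0$ by Lemma \ref{lem: ground states unsymmetrized}). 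Then Proposition \ref{prop: identify ground states} produces a further lift $\hat\phi \in \Gamma(\bA^N,\mathcal V_N^{n,k})_0$ whose image under $\mathcal V_N^{n,k}\twoheadrightarrow \mathcal E_N^{n,k}$ is $\tilde\phi$.

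Next I would invoke the two key dynamical inputs. First, Proposition \ref{prop: ground states are flat sections} gives $\nabla(\hat\phi)=0$ inside $\mathcal V_N^{n,k}\otimes \Omega^1_{\bA^N}$. Second, Theorem \ref{thm: grounds states in CB} tells us that for every point $\vec x\in \bA^N_{\mathrm{disj}}$ the fiber $\hat\phi_{\vec x}$ lies in the subspace of modified conformal blocks, so the restriction $\hat\phi|_{\bA^N_{\mathrm{disj}}}$ is a section of the sub-bundle $C^0_k\!\left(\mathbf x,\bigotimes_{i=1}^N M_i\right)\subset \mathcal V_N^{n,k}|_{\bA^N_{\mathrm{disj}}}$.

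The final step is to identify this data with the wave function picture. By Proposition \ref{prop: CB=E} the projection $\mathcal V_N^{n,k}\twoheadrightarrow \mathcal E_N^{n,k}$ restricts to an isomorphism $C^0_k\!\left(\mathbf x,\bigotimes_{i=1}^N M_i\right)\xrightarrow{\sim}\mathcal E_N^{n,k}|_{\bA^N_{\mathrm{disj}}}$, so the image of $\hat\phi|_{\bA^N_{\mathrm{disj}}}$ under this map is precisely $\phi$ via the trivialization \eqref{restriction of E to open}. Since $C^0_k(\mathbf x,\bigotimes_i M_i)$ is a sub-$D$-module of $\mathcal V_N^{n,k}|_{\bA^N_{\mathrm{disj}}}$ (cited from \cite[13.3.7.(2)]{frenkel2004vertex}), $\nabla$ descends to a connection on the quotient $\mathcal E_N^{n,k}|_{\bA^N_{\mathrm{disj}}}$, and by the computations \cite[13.3.8]{frenkel2004vertex} and \cite[13.2.6]{frenkel2004vertex} combined into \eqref{eqn:KZ 2}, this induced connection is exactly the stated KZ connection. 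Therefore $\nabla(\hat\phi)=0$ pushes forward to $\nabla(\phi)=0$, as required.

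The only genuinely non-trivial ingredient in the argument is Proposition \ref{prop: ground states are flat sections}, whose proof is the short $\bC_q^{\times}$-weight observation that $\nabla_i$ is homogeneous of weight $-1$ and hence must annihilate the lowest-weight subspace $\Gamma(\bA^N,\mathcal V_N^{n,k})_0$; this is the conceptual heart of the whole argument. The remaining work is bookkeeping: verifying that the lifts to $\mathcal V_N^{n,k}$ and the trivialization \eqref{restriction of E to open} are compatible with $\mathfrak{W}$, which is immediate from the construction of $\mathfrak{W}$ via restriction and base change.
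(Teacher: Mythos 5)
Your proposal is correct and follows essentially the same path as the paper: lift a ground state through Remark \ref{rmk: wave function via E}, Lemma \ref{lem: ground states unsymmetrized}, and Proposition \ref{prop: identify ground states} up to $\Gamma(\bA^N,\mathcal V_N^{n,k})_0$, then apply Proposition \ref{prop: ground states are flat sections} for flatness, Theorem \ref{thm: grounds states in CB} to land in the conformal block sub-bundle, and Proposition \ref{prop: CB=E} plus \cite[13.3.7.(2)]{frenkel2004vertex} to transfer the connection to $\mathcal E_N^{n,k}|_{\bA^N_{\mathrm{disj}}}$. One small phrasing note: the connection does not ``descend to a quotient'' here; rather $\nabla$ \emph{restricts} to the sub-$D$-module $C^0_k$ and is then transported to $\mathcal E_N^{n,k}|_{\bA^N_{\mathrm{disj}}}$ along the isomorphism of Proposition \ref{prop: CB=E} --- the mechanism you invoke is the right one, only the word ``quotient'' is misleading.
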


The result that the Chern-Simons matrix model ground states wave functions satisfy KZ equation is not new, see \cite[5.1]{Dorey-Tong-Turner} for a derivation in the case when $N$ is divisible by $n$ and \cite[Appendix A.2]{bourgine2024calogero} for a proof in general. The previous proofs are computational and involve careful analysis on explicit formulae of the ground states. The new ingredient here is that we provide a geometric proof which essentially boils down to the simple fact that ground states have the lowest $\bC^{\times}_q$ weights among all physical states. Our method does not require explicit formulae of the ground states.

\section{Conformal Limit, Part I: Transition Maps}\label{sec:Conformal Limit, Part I}

In this section, we define and study the conformal limit ($N\to \infty$) of the Hilbert spaces of the matrix model.

To begin with, we introduce a natural transition map $\mathcal H_{N+n}(n,k)\to \mathcal H_{N}(n,k)$. We will define this map in two ways: geometrically and algebraically, which turn out to be equivalent construction. The outcome of the constructions is a $\GL_n[z]\rtimes\bC^{\times}_q$-equivariant map $\mathcal H_{N+n}(n,k)\to \mathcal H_{N}(n,k)\otimes \chi_N^{\otimes k}$ where $\chi_N$ is certain character of $\GL_n[z]\rtimes\bC^{\times}_q$ which plays the role of retaining equivariance. And we shall define the conformal limit Hilbert space as $\bC^{\times}_q$-finite subspace of the inverse limit of $\mathcal H_{N}(n,k)$ equipped with twisted $\GL_n[z]\rtimes\bC^{\times}_q$-actions.

\subsection{Geometric construction of the transition map}\label{subsec:transition map_geometric}

Consider the closed embedding 
\begin{align}\label{embedding of Quot}
    \iota_N:\Quot^N(\mathcal O_{\bA^1}^{\oplus n})\hookrightarrow \Quot^{N+n}(\mathcal O_{\bA^1}^{\oplus n})
\end{align}
which maps a subsheaf $\mathcal E\subset \mathcal O_{\bA^1}^{\oplus n}$ of codimension $N$ to the subsheaf $\mathcal E\otimes\mathcal O_{\bA^1}(-[0])\subset \mathcal O_{\bA^1}^{\oplus n}$ which has codimension $N+n$. Here $\mathcal O_{\bA^1}(-[0])$ is the ideal sheaf corresponding to the origin $0\in \bA^1$.

\begin{lemma}\label{lem: transition for det line bundle}
The pullback of the determinant line bundle $\iota_N^*\mathcal L_{\det}$ is $\GL_n[z]\rtimes \bC^{\times}_q$-equivariantly isomorphic to $\mathcal L_{\det}\otimes \chi_N$, where $\chi_N$ is the equivariant line bundle which is the pullback of the character
\begin{align}\label{twisting character}
    \GL_n[z]\rtimes \bC^{\times}_q\to \bC^{\times},\quad (g[z],t)\mapsto t^N\cdot\det g[0],
\end{align}
along the projection from $\Quot^N(\mathcal O_{\bA^1}^{\oplus n})$ to a point.
\end{lemma}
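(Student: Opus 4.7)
The plan is to construct a natural short exact sequence on $\cM(N,n)$ whose determinant splits off an explicit line bundle $\det(\mathcal{F}_{\mathrm{univ}}|_0)$, and then to identify this leftover line bundle with $\chi_N$ via an equivariant fixed-point calculation.

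First, on $\cM(N,n)\times\bA^1$ the defining relation $\iota_N(\mathcal F)=\mathcal F\otimes\mathcal O_{\bA^1}(-[0])$ identifies the pullback of the universal subsheaf on $\cM(N+n,n)\times\bA^1$ with $z\mathcal F_{\mathrm{univ}}\subset\mathcal O^{\oplus n}$, where $\mathcal F_{\mathrm{univ}}$ denotes the universal subsheaf on $\cM(N,n)\times\bA^1$. The nested inclusions $z\mathcal F_{\mathrm{univ}}\subset\mathcal F_{\mathrm{univ}}\subset\mathcal O^{\oplus n}$ then yield the canonical short exact sequence of coherent sheaves
\begin{equation*}
0\to \mathcal F_{\mathrm{univ}}/z\mathcal F_{\mathrm{univ}}\to \mathcal O^{\oplus n}/z\mathcal F_{\mathrm{univ}}\to \mathcal O^{\oplus n}/\mathcal F_{\mathrm{univ}}\to 0,
\end{equation*}
i.e.\ $0\to \mathcal F_{\mathrm{univ}}|_0\to \iota_N^*\mathcal E'_{\mathrm{univ}}\to \mathcal E_{\mathrm{univ}}\to 0$, in which $\mathcal F_{\mathrm{univ}}|_0:=\mathcal F_{\mathrm{univ}}/z\mathcal F_{\mathrm{univ}}$ is a rank-$n$ vector bundle supported on $\cM(N,n)\times\{0\}$. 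Since all three terms have finite support in the $\bA^1$-direction and are flat over $\cM(N,n)$, pushforward along the projection preserves exactness, and taking determinants of the resulting sequence of vector bundles on $\cM(N,n)$ gives the $\GL_n[z]\rtimes\bC^{\times}_q$-equivariant isomorphism
\begin{equation*}
\iota_N^*\mathcal L_{\det}\cong \det(\mathcal F_{\mathrm{univ}}|_0)\otimes \mathcal L_{\det}.
\end{equation*}

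Next, I will identify $\det(\mathcal F_{\mathrm{univ}}|_0)$ with $\chi_N$ by evaluating its equivariant character at the $T\times\bC^{\times}_q$-fixed points $\mathcal F_\mu=\bigoplus_{i=1}^n z^{\mu_i}\mathcal O_{\bA^1}$, with $\mu=(\mu_1,\ldots,\mu_n)$ satisfying $\sum_i\mu_i=N$. The fiber $\mathcal F_\mu|_0$ has basis $\{[z^{\mu_i}]=z^{\mu_i}\otimes e_i\}_{i=1}^n$, where $[z^{\mu_i}]$ carries $T$-weight $\epsilon_i$ and $\bC^{\times}_q$-weight $\mu_i$; hence $\det(\mathcal F_\mu|_0)$ has $T\times\bC^{\times}_q$-character $\mathfrak A\cdot q^N$, \emph{independent of $\mu$}. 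Since $\Pic(\cM(N,n))\cong \bZ\cdot[\mathcal L_{\det}]$ and the character of $\mathcal L_{\det}|_{\mathcal F_\mu}$ equals $\mathbf a^{\mu}q^{E(\mu)}$, which depends non-trivially on $\mu$ for $n\ge 2$ (the case $n=1$ being automatic as $\cM(N,1)\cong\bA^{(N)}$ is affine), this forces $\det(\mathcal F_{\mathrm{univ}}|_0)$ to be trivial as a plain line bundle; its character $\mathfrak A\cdot q^N$ then coincides with $\chi_N|_{T\times\bC^{\times}_q}$, and Weyl-invariance promotes this to a $\GL_n\rtimes\bC^{\times}_q$-equivariant isomorphism $\det(\mathcal F_{\mathrm{univ}}|_0)\cong \chi_N$.

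Finally, the $\GL_n\rtimes\bC^{\times}_q$-equivariant isomorphism lifts automatically to a $\GL_n[z]\rtimes\bC^{\times}_q$-equivariant one: any two equivariant structures on the same (trivial) plain line bundle differ by a character of $\GL_n[z]\rtimes\bC^{\times}_q$, and any such character factors through $\mathrm{ev}_0\times\mathrm{id}:\GL_n[z]\rtimes\bC^{\times}_q\twoheadrightarrow \GL_n\rtimes\bC^{\times}_q$, since the kernel $\{g[z]\in\GL_n[z]\mid g[0]=1\}$ is pro-unipotent and hence admits no non-trivial algebraic characters. I expect the main technical point of the proof to be precisely this character-factorization/pro-unipotency step, carried out rigorously in the ind-group setting of $\GL_n[z]$; once it is in place, the matching of fixed-point characters established above completes the identification $\iota_N^*\mathcal L_{\det}\cong \mathcal L_{\det}\otimes \chi_N$.
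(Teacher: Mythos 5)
Your short exact sequence
\[
0\to \mathcal F_{\mathrm{univ}}/z\mathcal F_{\mathrm{univ}}\to \mathcal O^{\oplus n}/z\mathcal F_{\mathrm{univ}}\to \mathcal O^{\oplus n}/\mathcal F_{\mathrm{univ}}\to 0
\]
is a genuinely nicer route than the paper's: it produces the \emph{equivariant} isomorphism $\iota_N^*\mathcal L_{\det}\cong\det(\mathcal F_{\mathrm{univ}}|_0)\otimes\mathcal L_{\det}$ in one step, whereas the paper first establishes the non-equivariant triviality of $\mathcal L_{\det}^{-1}\otimes\iota_N^*\mathcal L_{\det}$ by an argument about divisors supported on $h^{-1}(D)$ and only then hunts for the character. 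Your use of the fixed-point characters $\mathbf a^\mu q^{E(\mu)}$ together with $\Pic\cong\bZ\cdot[\mathcal L_{\det}]$ to pin down that $\det(\mathcal F_{\mathrm{univ}}|_0)$ is trivial as a plain bundle is also sound (and close in spirit to the paper's $\bC^{\times}_q$-fixed-point calculation, which does the same thing in the form of computing the weight $N$ at $\oplus z^{\lambda_i}\mathcal O$).

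The gap is exactly where you flag it, but it is avoidable and your expectation about the paper's argument there is off: the paper does \emph{not} prove a character-factorization statement for $\GL_n[z]$. Two issues with your version. First, ``pro-unipotent'' is the wrong adjective: $\{g[z]:g[0]=1\}$ is an ind-group, not a pro-group (the pro-unipotent object is the congruence subgroup of $\GL_n(\bC[\![z]\!])$, not of $\GL_n[z]$). The correct fact --- every algebraic character of $\GL_n[z]$ is $\det^m\circ\mathrm{ev}_0$ --- is true, but one would have to justify it, e.g. by noting that $g[0]=1$ and $\det g[z]$ constant force $\det g[z]=1$, so the kernel lies in $\SL_n[z]$, which is generated by one-parameter unipotent subgroups $x_\alpha(f)$, $f\in\bC[z]$, and $\mathrm{Hom}(\mathbb G_a,\mathbb G_m)=0$. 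Second, and more to the point, you do not need this at all: restrict to the dense open $h^{-1}(\bA^{(N)}\setminus D)$ where the quotient $\mathcal O^{\oplus n}/\mathcal F_{\mathrm{univ}}$ has support disjoint from $\{0\}$. There $\mathcal F_{\mathrm{univ}}\hookrightarrow\mathcal O^{\oplus n}$ is an isomorphism near $z=0$, so $\mathcal F_{\mathrm{univ}}|_0\cong\mathcal O^{\oplus n}|_{\{0\}}$ canonically, and the $\GL_n[z]$-equivariant structure on $\det(\mathcal F_{\mathrm{univ}}|_0)$ is visibly $g[z]\mapsto\det g[0]$; since the character of an equivariant structure on a trivial line bundle is constant over the (connected) base, this determines it everywhere. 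You already have the universal object $\mathcal F_{\mathrm{univ}}|_0$ set up, so this finish is immediate and removes the only weak step.
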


\begin{proof}
Consider the line bundle $\mathcal L_{\det}^{-1}\otimes \iota_N^*\mathcal L_{\det}$, we claim that it is (non-equivariantly) isomorphic to the structure sheaf of $\Quot^N(\mathcal O_{\bA^1}^{\oplus n})$. Let $D\subset \bA^{(N)}$ be the prime divisor given by the equation $\prod_{i=1}^Nx_i=0$ \footnote{The equation $\prod_{i=1}^Nx_i=0$ on $\bA^{N}$ cuts out $N$ hyperplanes, and the symmetric group acts on $N$ irreducible components transitively, therefore its image in $\bA^{(N)}$ is irreducible, i.e. $D$ is a prime divisor.}. The preimage of $\bA^{(N)}\setminus D$ along the Hilbert-Chow map $h:\Quot^N(\mathcal O_{\bA^1}^{\oplus n})\to \bA^{(N)}$ is the moduli of codimensional $N$ subsheaves $\mathcal E\subset \mathcal O_{\bA^1}^{\oplus n}$ such that $\{0\}$ is not in the support of $\mathcal O_{\bA^1}^{\oplus n}/\mathcal E$. Therefore the fiber of $\iota_N^*\mathcal L_{\det}$ at a point $(\mathcal E\subset \mathcal O_{\bA^1}^{\oplus n})$ in $h^{-1}(\bA^{(N)}\setminus D)$ is $\det(\mathcal O_{\bA^1}^{\oplus n}/\mathcal E)\otimes \det (\mathcal O_{\bA^1}^{\oplus n}|_{\{0\}})$. It follows that $\left(\mathcal L_{\det}^{-1}\otimes \iota_N^*\mathcal L_{\det}\right)|_{h^{-1}(\bA^{(N)}\setminus D)}$ is isomorphic to the structure sheaf tensoring the line $\det (\mathcal O_{\bA^1}^{\oplus n}|_{\{0\}})$ which is non-equivariantly isomorphic to the structure sheaf, thus $\mathcal L_{\det}^{-1}\otimes \iota_N^*\mathcal L_{\det}$ is isomorphic to a divisor supported on $h^{-1}(D)$. Since $h$ is flat by Corollary \ref{cor:flatness}, $h^{-1}(D)$ is a divisor. Note that the restriction of $h$ to $h^{-1}(\bA^{(N)}\setminus \Delta)$ is a locally trivial fibration with fibers isomorphic to $(\bP^{n-1})^{\times N}$, where $\Delta$ is the diagonal divisor, therefore $h^{-1}(D)$ is prime because the intersection $D\cap (\bA^{(N)}\setminus \Delta)$ is nonempty. As a result $\mathcal L_{\det}^{-1}\otimes \iota_N^*\mathcal L_{\det}\cong \mathcal O_{\Quot^N(\mathcal O_{\bA^1}^{\oplus n})}(m\cdot h^{-1}(D))$ for some $m$. Since $\mathcal O_{\Quot^N(\mathcal O_{\bA^1}^{\oplus n})}(h^{-1}(D))\cong h^*\mathcal{O}_{\bA^{(N)}}(D)$ and $\mathcal{O}_{\bA^{(N)}}(D)$ is isomorphic to $\mathcal{O}_{\bA^{(N)}}$, this proves our claim.

Since $\Gamma(\Quot^N(\mathcal O_{\bA^1}^{\oplus n}),\mathcal O_{\Quot^N(\mathcal O_{\bA^1}^{\oplus n})})^{\times}=\Gamma(\bA^{(N)},\mathcal O_{\bA^{(N)}})^{\times}=\bC^{\times}$, the equivariant structure of the trivial line bundle $\mathcal L_{\det}^{-1}\otimes \iota_N^*\mathcal L_{\det}$ must be pullback of a character $\chi:\GL_n[z]\rtimes \bC^{\times}_q\to \bC^{\times}$ along the projection from $\Quot^N(\mathcal O_{\bA^1}^{\oplus n})$ to a point. It remains to find $\chi$. To this end, we only need to find the action of $\chi$ on $\GL_n[z]$ and on $\bC^{\times}_q$ separately. For $\GL_n[z]$, we notice that $\left(\mathcal L_{\det}^{-1}\otimes \iota_N^*\mathcal L_{\det}\right)|_{h^{-1}(\bA^{(N)}\setminus D)}$ is $\GL_n[z]$-equivariantly isomorphic to $\mathcal O_{h^{-1}(\bA^{(N)}\setminus D)}\otimes\det (\mathcal O_{\bA^1}^{\oplus n}|_{\{0\}})$, thus $\chi:\GL_n[z]\to \bC^{\times}$ is the determinant representation $\chi(g[z])=\det g[0]$. For $\bC^{\times}_q$, we look at the fiber of $\mathcal L_{\det}^{-1}\otimes \iota_N^*\mathcal L_{\det}$ at a $\bC^{\times}_q$-fixed point, say $\oplus_{i=1}^n z^{\lambda_i}\mathcal O_{\bA^1}\subset \mathcal O_{\bA^1}^{\oplus_n}$ ($\sum_{i=1}^n\lambda_i=N$), then the fiber of $\mathcal L_{\det}^{-1}\otimes \iota_N^*\mathcal L_{\det}$ at this point is $\otimes_{i=1}^n z^{\lambda_i}\mathcal O_{\{0\}}$ which has $\bC^{\times}_q$-weight $N$. Thus $\chi:\bC^{\times}_q\to \bC^{\times}$ is given by the $N$-th power map. This finishes the proof.
\end{proof}

\begin{lemma}\label{lem: surjectivity for transition map}
For any positive integer $k$, the induced map on global sections $$\iota_N^*:\Gamma(\Quot^{N+n}(\mathcal O_{\bA^1}^{\oplus n}),\mathcal L_{\det}^{\otimes k})\to \Gamma(\Quot^{N}(\mathcal O_{\bA^1}^{\oplus n}),\mathcal L_{\det}^{\otimes k})\otimes \chi_N^{\otimes k}$$ is surjective.
\end{lemma}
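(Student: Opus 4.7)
The plan is to reduce surjectivity of $\iota_N^*$ to the statement that it is an isomorphism on ground states, and then verify the latter via a Schur-lemma comparison of $\GL_n$-modules. By Lemma \ref{lem: transition for det line bundle}, $\iota_N^*$ is $\GL_n[z]\rtimes\bC^{\times}_q$-equivariant with source $\mathcal H_{N+n}(n,k)$ and target $\mathcal H_N(n,k)\otimes\chi_N^{\otimes k}$. Since $\chi_N^{\otimes k}$ is a one-dimensional character, tensoring by it shifts the $\gl_n[z]$-action by a scalar and hence preserves cyclicity; in particular, applying Corollary \ref{cor: cyclic} to the target shows that $\mathcal H_N(n,k)\otimes\chi_N^{\otimes k}$ is cyclically generated as a $U(\gl_n[z])$-module by any nonzero element of its ground state subspace. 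Consequently, if $\iota_N^*$ is nonzero on the ground state subspace, then by $\gl_n[z]$-equivariance its image contains a nonzero ground state and hence generates the whole target.

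Next I would match up the ground states geometrically. Write $L=\lfloor N/n\rfloor$ and $r=N-nL$, and set $\lambda=\varpi_r+L\varpi_n$, so that $\lambda'=\varpi_r+(L+1)\varpi_n$ is the corresponding dominant coweight for $N+n$. Using Proposition \ref{prop: ground states} one checks that the energy gap $E_0^{N+n}-E_0^N=kN$ matches the $\bC^{\times}_q$-weight of $\chi_N^{\otimes k}$, so that the ground state subspace of $\mathcal H_{N+n}(n,k)$ indeed maps into the ground state subspace of the twisted target. Under the restriction-to-orbit isomorphism of Remark \ref{rmk: ground states}, these two ground state subspaces are identified with $\Gamma(F_{\lambda'},\mathcal L_{\det}^{\otimes k}|_{F_{\lambda'}})$ and $\Gamma(F_\lambda,(\mathcal L_{\det}^{\otimes k}\otimes\chi_N^{\otimes k})|_{F_\lambda})$, where $F_\mu=\GL_n\cdot\{z^\mu\}$. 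Because $\iota_N$ is given geometrically by multiplication by $z$, it sends $z^\lambda$ to $z^{\lambda'}$; being $\GL_n$-equivariant it restricts to a $\GL_n$-equivariant morphism $F_\lambda\to F_{\lambda'}$, and since $\varpi_n$ is central, the stabilizers $P_\lambda=P_{\lambda'}=P_{\varpi_r}$ coincide, so this restriction is a $\GL_n$-isomorphism of homogeneous spaces.

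The induced map on global sections is then a $\GL_n$-equivariant morphism between two copies of the irreducible $\GL_n$-module of highest weight $k\lambda'$, and by Schur's lemma is either zero or an isomorphism. To see nonvanishing, I would pick a section of $\mathcal L_{\det}^{\otimes k}|_{F_{\lambda'}}$ whose value at $z^{\lambda'}$ is nonzero (the highest weight line is one-dimensional and globally generated) and observe that its pullback to $F_\lambda$ has nonzero value at $z^\lambda$. Surjectivity of $\iota_N^*$ then follows from the cyclicity reduction of the first step. The main work in this plan is really confined to Lemma \ref{lem: transition for det line bundle}, which has already been established; once the twist character has been correctly identified, there is no essential cohomological obstruction, and the remaining ingredients are formal consequences of cyclicity and Schur's lemma.
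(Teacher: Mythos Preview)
Your argument is correct and follows essentially the same strategy as the paper: reduce to ground states via $\gl_n[z]$-equivariance and Corollary~\ref{cor: cyclic}, then verify nonvanishing by restricting to the torus-fixed point $z^{\lambda'}=\iota_N(z^{\lambda})$. The paper simply notes that the $\gl_n$-highest-weight ground state has nonzero value at that point (hence nonzero image under $\iota_N^*$), whereas you additionally spell out the orbit isomorphism $F_\lambda\cong F_{\lambda'}$ and invoke Schur's lemma; this is a harmless elaboration of the same idea.
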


\begin{proof}
The proof is similar to that of the surjectivity statement in Proposition \ref{prop: level-rank map}. By Lemma \ref{lem: transition for det line bundle} $\iota_N^*$ is a $\gl_n[z]$-module map, and by Corollary \ref{cor: cyclic} both $\Gamma(\Quot^{N+n}(\mathcal O_{\bA^1}^{\oplus n}),\mathcal L_{\det}^{\otimes k})$ and $\Gamma(\Quot^{N}(\mathcal O_{\bA^1}^{\oplus n}),\mathcal L_{\det}^{\otimes k})\otimes \chi^{\otimes k}$ are generated from a ground state by the action of $\gl_n[z]$, thus we only need to show that $\iota_N^*(v)\neq 0$ for some ground state $v$. By Remark \ref{rmk: ground states}, the restriction $\Gamma(\Quot^{N+n}(\mathcal O_{\bA^1}^{\oplus n}),\mathcal L_{\det}^{\otimes k})\to \mathcal L_{\det}^{\otimes k}|_{\{z^{\lambda}\}}$ maps the $\gl_n$ highest weight vector of $\mathcal H_{N+n}(n,k)_0$ isomorphically onto $\mathcal L_{\det}^{\otimes k}|_{\{z^{\lambda}\}}$. Here $\lambda_i=\lfloor\frac{N+n-i}{n}\rfloor, 1\le i\le n$, and $\{z^{\lambda}\}$ corresponds to the subsheaf $\oplus_{i=1}^n z^{\lambda_i}\mathcal O_{\bA^1}\subset \mathcal O_{\bA^1}^{\oplus n}$. Thus the image of the $\gl_n$ highest weight vector of $\mathcal H_{N+n}(n,k)_0$ is nonzero in $\Gamma(\Quot^{N}(\mathcal O_{\bA^1}^{\oplus n}),\mathcal L_{\det}^{\otimes k})\otimes \chi_N^{\otimes k}$. This finishes the proof.

\end{proof}

The transition map $\iota_N^*: \mathcal H_{N+n}(n,k)\to \mathcal H_{N}(n,k)\otimes \chi_N^{\otimes k}$ is compatible with the map \eqref{fixed point induce level-rank}. Namely, we have the following commutative diagram of embeddings of moduli spaces:
\begin{equation}
\begin{tikzcd}
\Quot^N(\mathcal O_{\bA^1}^{\oplus n})\ar[r,hook,"\iota_N"] \ar[d,hook,"j_N"] &\Quot^{N+n}(\mathcal O_{\bA^1}^{\oplus n}) \ar[d,hook,"j_{N+n}"]\\
\Quot^{kN}(\mathcal O_{\bA^1}^{\oplus kn})\ar[r,hook,"\iota_{kN}"] &\Quot^{kN+kn}(\mathcal O_{\bA^1}^{\oplus kn}) 
\end{tikzcd},
\end{equation}
where the vertical arrows maps a subsheaf $\mathcal E\subset \mathcal O_{\bA^1}^{\oplus n}$ to $\mathcal E\otimes \mathcal O_{\bA^1}^{\oplus k} \subset \mathcal O_{\bA^1}^{\oplus n}\otimes \mathcal O_{\bA^1}^{\oplus k}=\mathcal O_{\bA^1}^{\oplus kn}$. Note that 
\begin{align*}
    j_N^*\chi_{kN}= \chi_{N}^{\otimes k}.
\end{align*}
Therefore we obtain a commutative diagram of maps between global sections of line bundles:
\begin{equation}
\begin{tikzcd}
\mathcal{H}_{kN+kn}(kn,1)\ar[r,two heads,"\iota_{kN}^*"] \ar[d,two heads,"\tau_{N+n}"] & \mathcal{H}_{kN}(kn,1)\otimes \chi_{kN} \ar[d,two heads,"\tau_{N}"]\\
\mathcal{H}_{N+n}(n,k)\ar[r,two heads,"\iota_{N}^*"]  & \mathcal{H}_{N}(n,k)\otimes \chi_{N}^{\otimes k}
\end{tikzcd},
\end{equation}
and all the maps in the above diagram are $\GL_n[z]\rtimes\bC^{\times}_q$-equivariant.

\subsection{Algebraic construction of the transition map}\label{subsec:transition map_algebraic}

Before we introduce the algebraic construction of transition map which looks unnatural from the first glance, it is instructive to explain why a innocent-looking construction does not work.

Let us choose a splitting $\bC^{N+n}=\bC^N\oplus\bC^n$ which induces an embedding $V(N,n)\hookrightarrow V(N+n,n)$ where $V(N,n)$ is defined in the beginning of Section \ref{sec:phase space}. The restriction-to-subspace maps  polynomial function ring $\bC[V(N+n,n)]$ surjectively onto $\bC[V(N,n)]$. Let us take the semi-invariant subspace $\bC[V(N+n,n)]^{\GL_{N+n},-k}$, which is by definition our Hilbert space $\mathcal H_{N+n}(n,k)$, and we claim that its image in $\bC[V(N,n)]$ is zero. In fact, the restriction map $\bC[V(N+n,n)]\to\bC[V(N,n)]$ is $\GL_N\times \GL_n$ equivariant, where $\GL_N\times \GL_n$ action on $\bC[V(N+n,n)]$ is induced from diagonal embedding $\GL_N\times \GL_n\hookrightarrow\GL_{N+n}$ and $\GL_N\times \GL_n$ action on $\bC[V(N,n)]$ is induced from projection to the first component. Since $\GL_n$ acts on $\bC[V(N+n,n)]^{\GL_{N+n},-k}$ via the character $g\mapsto \det(g)^{-k}$, the image of $\bC[V(N+n,n)]^{\GL_{N+n},-k}$ in $\bC[V(N,n)]$ must be zero. So the naive restriction map $\bC[V(N+n,n)]\to\bC[V(N,n)]$ does not work.

To fix the issue, we consider the following vector space
\begin{align}
    \widetilde{V}(N,n):=\End(\bC^N)\oplus\Hom(\bC^n,\bC^{N+n}),
\end{align}
which embeds into $V(N+n,n)=\End(\bC^{N+n})\oplus\Hom(\bC^n,\bC^{N+n})$ by splitting $\bC^{N+n}=\bC^N\oplus\bC^n$. The notation for linear coordinates on $\widetilde{V}(N,n)$ is that
\begin{align}
    X\in \End(\bC^N),\quad A \in \Hom(\bC^{N},\bC^n),\quad \eta\in \Hom(\bC^{n},\bC^n),
\end{align}
with components $(X^i_j)_{1\le i,j,\le N}$, $(A^a_i)_{1\le a\le n, 1\le i\le N}$, $(\eta^a_{\ell})_{1\le a\le n, N+1\le \ell\le N+n}$.

Let us define a $\gl_n[z]$-action on $\widetilde{V}(N,n)$ by
\begin{align}\label{gl_n[z] action on aux space}
    E^a_b\otimes z^m \mapsto A^a_i(X^m)^i_j\frac{\partial}{\partial A^b_j}+\delta_{m=0}\eta^a_{\ell}\frac{\partial}{\partial \eta^b_{\ell}},
\end{align}
which exponentiates to a $\GL_n[z]$-action. We define the $\bC^{\times}_q$-grading on $\widetilde{V}(N,n)$ by $\deg(X)=1$, $\deg(A)=\deg(\eta)=0$
\begin{lemma}\label{lem:restriction map}
The restriction map $\bC[V(N+n,n)]\to \bC[\widetilde{V}(N,n)]$ is $\GL_n[z]\rtimes \bC^{\times}_q$-equivariant, where the $\GL_n[z]$-action on $\bC[V(N+n,n)]$ is the standard one \eqref{eq: gl_n[z] action} and the $\GL_n[z]$-action on $\bC[\widetilde{V}(N,n)]$ is given by \eqref{gl_n[z] action on aux space}.
\end{lemma}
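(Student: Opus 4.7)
The plan is to verify both equivariances directly by tracking how the ambient action on $\bC[V(N+n,n)]$ descends along the restriction map. First I would identify the restriction map explicitly as the quotient by the ideal $\mathfrak{I}$ generated by those matrix coordinates of $V(N+n,n)$ that vanish on $\widetilde{V}(N,n)$, namely $\mathfrak{I} = (X^I_J : I > N \text{ or } J > N)$ with $1 \le I, J \le N+n$; meanwhile $X^i_j$ (both indices in $[1,N]$), $A^a_i$ (with $i \le N$), and $A^a_\ell$ (with $\ell > N$) pull back respectively to $X^i_j$, $A^a_i$, and $\eta^a_\ell$ on $\widetilde{V}(N,n)$. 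The $\bC^\times_q$-equivariance is then immediate, because restriction either kills coordinates or identifies them tautologically, and the degree prescriptions ($\deg X = 1$, $\deg A = \deg \eta = 0$) match on both sides.

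Next I would observe that the ambient action \eqref{eq: gl_n[z] action}, written as $\rho(E^a_b \otimes z^m) = A^a_I (X^m)^I_J \frac{\partial}{\partial A^b_J}$, is a derivation that contains no $\frac{\partial}{\partial X}$ and therefore preserves $\mathfrak{I}$ on the nose. Hence it descends to a well-defined derivation on $\bC[\widetilde{V}(N,n)]$, and it only remains to identify the descent with the right-hand side of \eqref{gl_n[z] action on aux space}. For this, the key elementary input is that, modulo $\mathfrak{I}$, the ambient $X$ reduces to its upper-left $N \times N$ block; consequently, for $m \ge 1$, $(X^m)^I_J \equiv 0 \pmod{\mathfrak{I}}$ unless $I, J \in \{1, \ldots, N\}$, in which case it coincides with the genuine $m$-th power of the $N \times N$ matrix $X$ on $\widetilde{V}(N,n)$; while for $m = 0$, $(X^0)^I_J = \delta^I_J$ unconditionally. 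Splitting the sum $\sum_{I, J} A^a_I (X^m)^I_J \frac{\partial}{\partial A^b_J}$ along the block decomposition $\{1, \ldots, N\} \sqcup \{N+1, \ldots, N+n\}$ and applying these vanishings, the $(I, J) \in [1,N]^2$ piece yields the first summand $A^a_i (X^m)^i_j \frac{\partial}{\partial A^b_j}$, the $(I, J) \in [N+1,N+n]^2$ piece yields $\delta_{m=0}\, \eta^a_\ell \frac{\partial}{\partial \eta^b_\ell}$, and the two cross-block pieces vanish identically --- which is precisely \eqref{gl_n[z] action on aux space}.

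The lemma has no real obstacle: it is essentially an unpacking of definitions combined with the block-triangular structure of $X^m$ after restriction. What makes it worth recording is that it packages the slightly unnatural-looking action \eqref{gl_n[z] action on aux space} (in particular the Kronecker $\delta_{m=0}$ on the $\eta$-block) as a genuine restriction of the standard $\gl_n[z]$-action, which will be essential for the subsequent algebraic construction of the transition map $\mathcal H_{N+n}(n,k) \to \mathcal H_N(n,k)$ in a $\GL_n[z] \rtimes \bC^\times_q$-equivariant fashion matching the geometric one from Section \ref{subsec:transition map_geometric}.
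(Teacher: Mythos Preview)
Your proof is correct and is precisely the ``straightforward computation'' the paper alludes to: you correctly identify the ideal cut out by the embedding, observe that the derivation \eqref{eq: gl_n[z] action} involves no $\partial/\partial X$ and hence preserves this ideal, and then compute the descended action block by block to recover \eqref{gl_n[z] action on aux space}. There is nothing to add.
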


\begin{proof}
Straightforward computation.
\end{proof}

\begin{lemma}\label{lem:alg construction}
The image of $\mathcal H_{N+n}(n,k)=\bC[V(N+n,n)]^{\GL_{N+n},-k}$ under the restriction map $\bC[V(N+n,n)]\to \bC[\widetilde{V}(N,n)]$ is
\begin{align}
    \det(X)^k\cdot \det(\eta)^k\cdot\bC[X,A]^{\GL_N,-k},
\end{align}
where $\GL_N$ acts on $X$ by adjoint action and acts on $A$ by dual fundamental representation.
\end{lemma}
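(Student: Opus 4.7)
The plan is to verify the identification of the image by a direct calculation on the explicit spanning set of $\mathcal H_{N+n}(n,k)$ obtained from Corollary \ref{cor: tensor product surjective} and Proposition \ref{prop: fock space}. That spanning set consists of $k$-fold products of fermion-type generators
$$F_{\alpha}\;=\;\epsilon^{i_{1}\cdots i_{N+n}}\prod_{j=1}^{N+n}(A^{a^{(\alpha)}_{j}}X^{m^{(\alpha)}_{j}})_{i_{j}}\in\mathcal H_{N+n}(n,1),$$
so by linearity it suffices to (i) check that every such $F_{\alpha}$ restricts to an element of $\det(X)\det(\eta)\,\bC[X,A]^{\GL_{N},-1}$, and (ii) realize every spanning element of $\bC[X,A]^{\GL_{N},-1}$ up to the factor $\det(X)\det(\eta)$ in this way; taking $k$-fold products then proves both inclusions.

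For (i), observe that on the image of $\widetilde V(N,n)\hookrightarrow V(N+n,n)$ every entry of $X^{m}$ with row or column index exceeding $N$ vanishes whenever $m\ge 1$, so $(A^{a_{j}}X^{m_{j}})_{i_{j}}$ restricts to zero when $m_{j}\ge 1$ and $i_{j}>N$. Moreover, if two indices $j$ with $m_{j}=0$ share a flavour label $a_{j}$, then $\epsilon$-antisymmetry forces the corresponding row-pair in the full $(N+n)\times(N+n)$ determinant to coincide and $F_{\alpha}$ vanishes identically after restriction. A pigeonhole argument then shows that surviving configurations require exactly $n$ indices with $m_{j}=0$, whose flavour labels form a permutation of $\{1,\ldots,n\}$; for these, the $\epsilon$-sum collapses onto permutations sending the $n$ zero-mode positions into $\{N+1,\ldots,N+n\}$ and the remaining $N$ positions into $\{1,\ldots,N\}$. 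Block-expanding the resulting $(N+n)\times(N+n)$ determinant along the bottom $n$ rows yields an $n\times n$ determinant in the $\eta$-columns equal to $\pm\det(\eta)$, times an $N\times N$ determinant whose rows are of the form $A^{b_{j}}X^{p_{j}}$ with every $p_{j}\ge 1$; pulling one copy of $X$ out of each such row by multilinearity contributes an additional $\det(X)$ and leaves a well-defined element of $\bC[X,A]^{\GL_{N},-1}$.

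For (ii), any spanning element $h=\epsilon^{i_{1}\cdots i_{N}}(A^{b_{1}}X^{p_{1}})_{i_{1}}\cdots(A^{b_{N}}X^{p_{N}})_{i_{N}}$ of $\bC[X,A]^{\GL_{N},-1}$ with $p_{j}\ge 0$ admits the explicit lift
$$\widetilde F_{h}\;=\;\epsilon^{i_{1}\cdots i_{N+n}}A^{1}_{i_{1}}\cdots A^{n}_{i_{n}}\,(A^{b_{1}}X^{p_{1}+1})_{i_{n+1}}\cdots(A^{b_{N}}X^{p_{N}+1})_{i_{N+n}}\in\mathcal H_{N+n}(n,1),$$
and the same block-expansion argument shows that $\widetilde F_{h}$ restricts to $(-1)^{Nn}\det(\eta)\det(X)h$. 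Taking $k$-fold products of these lifts produces a preimage (up to the overall sign $(-1)^{Nnk}$) for every element of $\det(X)^{k}\det(\eta)^{k}\bC[X,A]^{\GL_{N},-k}$, completing the reverse inclusion.

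No serious obstacle is anticipated: the whole argument is essentially sign-bookkeeping for the interleaving of the $\{1,\ldots,N\}$ and $\{N+1,\ldots,N+n\}$ blocks of indices in the $\epsilon$-tensor, combined with the multilinearity of determinants and the already-noted vanishing of $X^{m}$ outside the top-left block. The only conceptual step worth emphasising is the pigeonhole input in (i), which reduces the sum over surviving permutations to the single combinatorial pattern that produces the factors $\det(X)$ and $\det(\eta)$ cleanly.
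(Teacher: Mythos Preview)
Your proof is correct but follows a different route from the paper's. The paper argues representation-theoretically: it observes that the block-diagonal $\GL_N\times\GL_n\subset\GL_{N+n}$ acts on the image via the characters $\det^{-k}$, which forces the image to sit inside $\det(\eta)^k\cdot\bC[X,A]^{\GL_N,-k}$; then, since the restriction map is $\gl_n[z]$-equivariant (Lemma~\ref{lem:restriction map}) and both $\mathcal H_{N+n}(n,k)$ and $\det(X)^k\det(\eta)^k\cdot\bC[X,A]^{\GL_N,-k}$ are generated from a ground state by $\gl_n[z]$ (Corollary~\ref{cor: cyclic}), it suffices to check a single ground state, namely $\ket{1,\ldots,n,a_1,\ldots,a_N}\mapsto\det(X)^k\det(\eta)^k\ket{a_1,\ldots,a_N}$. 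You instead compute the restriction of every element of the explicit spanning set (Remark~\ref{rmk:explicit spanning set}) via the block expansion of the $(N+n)\times(N+n)$ determinant along the last $n$ rows. The paper's approach is shorter and makes the semi-invariance origin of the $\det(\eta)^k$ factor transparent; your approach is more hands-on and has the side benefit of essentially deriving the explicit formula \eqref{eq:explicit p_N} for $p_N$ (which the paper records separately) in the same stroke. Both ultimately rest on the same spanning/cyclicity input, so neither is strictly more elementary.
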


\begin{proof}
Let us denote $G_1$ (resp. $G_2$) to be the $\GL_N$ (resp. $\GL_n$) factor inside the diagonal embedding $\GL_N\times \GL_n\subset \GL_{N+n}$. Then for $i=1,2$, $G_i$ acts on the image of $\bC[V(N+n,n)]^{\GL_{N+n},-k}$ under the restriction map $\bC[V(N+n,n)]\to \bC[\widetilde{V}(N,n)]$ by the character $g\mapsto \det(g)^{-k}$. The subspace of $\bC[\widetilde{V}(N,n)]$ on which $G_2$ acts by the character $g\mapsto \det(g)^{-k}$ is $\det(\eta)^k\cdot \bC[X,A]$. The subspace of $\det(\eta)^k\cdot \bC[X,A]$ on which $G_1$ acts by the character $g\mapsto \det(g)^{-k}$ is $\det(\eta)^k\cdot \bC[X,A]^{\GL_N,-k}$. Thus the image lies in the subspace $\det(\eta)^k\cdot\bC[X,A]^{\GL_N,-k}$. Since $\bC[V(N+n,n)]\to \bC[\widetilde{V}(N,n)]$ is $\gl_n[z]$-equivariant and both $\bC[V(N+n,n)]^{\GL_{N+n},-k}$ and $\det(X)^k\cdot \det(\eta)^k\cdot\bC[X,A]^{\GL_N,-k}$ are generated from a ground state by $\gl_n[z]$-action (Corollary \ref{cor: cyclic}), it is enough to show that there exists a ground state in $\bC[V(N+n,n)]^{\GL_{N+n},-k}$ whose image in $\bC[\widetilde{V}(N,n)]$ is of the form $\det(X)^k\cdot \det(\eta)^k$ times a ground state in $\bC[X,A]^{\GL_N,-k}$. Using the explicit presentation \eqref{eq: ground states general n}, we can check that
\begin{align}
    \ket{1,\cdots,n,a_1,\cdots,a_N}\mapsto \det(X)^k\cdot \det(\eta)^k\cdot \ket{a_1,\cdots,a_N}.
\end{align}
This finishes the proof.
\end{proof}

We note that the $\GL_n[z]\rtimes \bC^{\times}_q$-action on $\det(X)^k\cdot \det(\eta)^k$ is exactly given by the character $\chi_N$, which agrees with the geometric construction. We claim that the map $\mathcal H_{N+n}(n,k)\twoheadrightarrow \det(X)^k\cdot \det(\eta)^k\cdot\mathcal H_{N}(n,k)$ constructed in Lemma \ref{lem:alg construction} agrees with $\iota_N^*:\mathcal H_{N+n}(n,k)\twoheadrightarrow \chi_N^{\otimes k}\otimes\mathcal H_{N}(n,k)$ in the previous subsection, upon a choice of isomorphism $\det(X)\cdot \det(\eta)\cong \chi_N$. In fact, both $\mathcal H_{N+n}(n,k)$ and $\chi_N^{\otimes k}\otimes\mathcal H_{N}(n,k)$ are generated by ground states, so any $\GL_n[z]\rtimes \bC^{\times}_q$-equivariant map between them is uniquely determined by the $\gl_n$-equivariant map between the ground state $\mathcal H_{N+n}(n,k)_0$ and $\chi_N^{\otimes k}\otimes\mathcal H_{N}(n,k)_0$. Since the ground states are irreducible $\gl_n$-modules by Remark \ref{rmk: ground states}, any two such maps are differed by a scalar multiple according to Schur's lemma. The ambiguity of choosing a scalar multiple is fixed by choosing an isomorphism $\det(X)\cdot \det(\eta)\cong \chi_N$.

\begin{definition}\label{def:p_N and sigma_N}
We define the twisted Hilbert space $\widetilde{\mathcal H}_N(n,k)$ to be the Hermitian vector space ${\mathcal H}_N(n,k)$ together with \textit{shifted energy grading} given by the eigenvalues of 
\begin{align}\label{eq:twisted energy grading}
    X^i_j\frac{\partial}{\partial X^i_j}-\frac{k}{2}L(L-1)n-krL,\;\text{where }L=\lfloor\frac{N}{n}\rfloor,\; r=N-Ln.
\end{align}
We equip $\widetilde{\mathcal H}_N(n,k)$ with the following graded $\gl_n[z]$-action:
\begin{align}\label{eq:twisted gl_n[z] action}
    E^a_b\otimes z^m \mapsto A^a_i(X^m)^i_j \frac{\partial}{\partial A^b_j}-\delta_{m=0}\delta^a_b kL.
\end{align}
We define the transition map $p_N:\widetilde{\mathcal H}_{N+n}(n,k)\twoheadrightarrow \widetilde{\mathcal H}_N(n,k)$ to be $\det(X)^{-k}\cdot \det(\eta)^{-k}$ multiplying with the restriction map $\bC[V(N+n,n)]^{\GL_{N+n},-k}\to \det(X)^k\cdot \det(\eta)^k\cdot\bC[X,A]^{\GL_N,-k}$ in the Lemma \ref{lem:alg construction}. We define a section $\sigma_N$ of the projection $p_N$ to be the composition
\begin{equation}\label{eq:sigma_N}
\begin{tikzcd}
\widetilde{\mathcal H}_N(n,k)\ar[rrr,"\left(p_N|_{\ker(p_N)^\perp}\right)^{-1}"] & & &\ker(p_N)^\perp\ar[r,hook] &\widetilde{\mathcal H}_{N+n}(n,k).
\end{tikzcd}
\end{equation}
Here we have used the fact that $p_N|_{\ker(p_N)^\perp}$ is isomorphism.
\end{definition}

\begin{remark}
The ground states $\widetilde{\mathcal H}_N(n,k)_0$ has degree $0$ with respect to the energy grading \eqref{eq:twisted energy grading}. Moreover, $\widetilde{\mathcal H}_N(n,k)_0$ is an irreducible $\gl_n$- module of highest weight $k\varpi_r$ with respect to the action via taking $m=0$ in \eqref{eq:twisted gl_n[z] action}.
\end{remark}

We shall write the energy grading decomposition as
\begin{align}
    \widetilde{\mathcal H}_N(n,k)=\bigoplus_{d\ge 0} \widetilde{\mathcal H}_N(n,k)_d,
\end{align}
where $\widetilde{\mathcal H}_N(n,k)_d$ is the eigenspace of the operator \eqref{eq:twisted energy grading} with eigenvalue $d$.

\begin{proposition}\label{prop: p_N is gl_n[z] equiv}
The projection $p_N$ and the section $\sigma_N$ preserve the energy grading on $\widetilde{\mathcal H}_{N+n}(n,k)$ and $\widetilde{\mathcal H}_N(n,k)$. Moreover, $p_N$ is $\gl_n[z]$-equivariant.
\end{proposition}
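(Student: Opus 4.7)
The plan is to verify the three claims by working with the algebraic description of $p_{N}$ from Lemma \ref{lem:alg construction}, where $p_{N}$ is given by first applying the restriction map $\bC[V(N+n,n)]\to\bC[\widetilde V(N,n)]$ and then dividing out the prefactor $\det(X)^{k}\det(\eta)^{k}$. The key inputs are Lemma \ref{lem:restriction map} (equivariance of the restriction map) together with direct accounting of the constant shifts that the twists introduce.

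\textbf{Energy grading.} First I would observe that the restriction map $\bC[V(N+n,n)]\to\bC[\widetilde V(N,n)]$ preserves the original energy grading (it only sets some $X^{i}_{j}$ to $0$ and renames $A$-variables), and that $\det(X)^{k}$ has original $X$-degree $kN$ while $\det(\eta)^{k}$ has original degree $0$. Writing $L'=\lfloor(N+n)/n\rfloor=L+1$ and $r'=r$, the constants subtracted in the shifted energy on $\widetilde{\mathcal H}_{N+n}(n,k)$ and $\widetilde{\mathcal H}_{N}(n,k)$ are $\tfrac{k}{2}L'(L'-1)n+kr'L'$ and $\tfrac{k}{2}L(L-1)n+krL$, respectively; their difference is exactly $kLn+kr=kN$. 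Hence the $kN$ lost by dividing out $\det(X)^{k}$ is exactly the amount needed to match the shifts, so $p_{N}$ is graded.

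\textbf{$\gl_n[z]$-equivariance.} Next I would use Lemma \ref{lem:restriction map} to transport the equality along the restriction map. For $f\in\mathcal H_{N}(n,k)$, since $\det(X)^{k}$ is a polynomial in the $X$'s only and $\det(\eta)^{k}$ is a $(\det)^{k}$-semi-invariant of the $\GL_{n}$-action \eqref{gl_n[z] action on aux space} on $\eta$, a direct computation using the Leibniz rule gives
\begin{equation*}
\rho_{\widetilde V}(E^{a}_{b}\otimes z^{m})\bigl(\det(X)^{k}\det(\eta)^{k}f\bigr)=\det(X)^{k}\det(\eta)^{k}\cdot\bigl(\rho_{N}(E^{a}_{b}\otimes z^{m})+\delta_{m=0}k\delta^{a}_{b}\bigr)f,
\end{equation*}
the extra $\delta_{m=0}k\delta^{a}_{b}f$ coming solely from $\eta^{a}_{\ell}\partial_{\eta^{b}_{\ell}}(\det(\eta)^{k})=k\delta^{a}_{b}\det(\eta)^{k}$. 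Combining this with the $\gl_{n}[z]$-equivariance of the restriction map and the definitions \eqref{eq:twisted gl_n[z] action} of the twisted actions for $L'=L+1$ and $L$, the shift $+\delta_{m=0}k\delta^{a}_{b}$ contributed by the $\det(\eta)^{k}$-prefactor plus the shift $-\delta_{m=0}k(L+1)\delta^{a}_{b}$ built into $\widetilde{\mathcal H}_{N+n}(n,k)$ cancels to exactly $-\delta_{m=0}kL\delta^{a}_{b}$, i.e.\ the twist on $\widetilde{\mathcal H}_{N}(n,k)$. Hence $p_{N}$ intertwines the two twisted $\gl_{n}[z]$-actions.

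\textbf{The section $\sigma_{N}$.} For the section I would argue that the Hermitian form on $\widetilde{\mathcal H}_{N+n}(n,k)$ makes the energy eigenspaces pairwise orthogonal: pairings $\langle X^{p}A^{q}\ket{\emptyset}|X^{p'}A^{q'}\ket{\emptyset}\rangle$ are zero unless $|p|=|p'|$ and $|q|=|q'|$ (since $X^{\dagger}=Y$ and $A^{\dagger}=B$ are annihilation operators), and the shifted energy is a graded shift of the $X$-degree grading. Since $p_{N}$ is graded by the first step, $\ker(p_{N})$ is a graded subspace, so its orthogonal complement $\ker(p_{N})^{\perp}$ is graded as well, and $p_{N}|_{\ker(p_{N})^{\perp}}$ is a graded isomorphism. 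Its inverse, which is $\sigma_{N}$ by definition \eqref{eq:sigma_N}, is therefore automatically graded. The only point that needs a sanity check is orthogonality of energy eigenspaces, which is essentially the content of Section \ref{subsec:Hermitian inner product}; no new obstacle arises. Overall I expect the main technical step to be the bookkeeping in the second paragraph, namely checking that the three independent shifts (from the $\det(\eta)^{k}$ prefactor, from the twist defining $\widetilde{\mathcal H}_{N+n}$, and from the twist defining $\widetilde{\mathcal H}_{N}$) conspire exactly, but this is a one-line identity $k-k(L+1)=-kL$ once the $\gl_{n}[z]$-equivariant restriction map is in hand.
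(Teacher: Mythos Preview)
Your proof is correct. The main difference from the paper's argument lies in how you establish that $p_N$ preserves the energy grading. You do a direct degree count: the restriction map preserves the $X$-degree, dividing by $\det(X)^k$ drops the degree by $kN$, and the difference of the two grading shifts is exactly $kN$. The paper instead first proves $\gl_n[z]$-equivariance and then argues indirectly: since $p_N$ sends ground states to ground states (from the proof of Lemma \ref{lem:alg construction}) and both $\widetilde{\mathcal H}_{N+n}(n,k)$ and $\widetilde{\mathcal H}_N(n,k)$ are generated from ground states by the graded $\gl_n[z]$-action (Corollary \ref{cor: cyclic}), equivariance forces $p_N$ to be graded. Your approach is more elementary and self-contained; the paper's is shorter but relies on the nontrivial cyclicity result. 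For the $\gl_n[z]$-equivariance you unpack explicitly with the Leibniz rule what the paper compresses into the one-line observation that $\widetilde{\mathcal H}_N(n,k)\cong\det(A)^{-kL}\cdot\mathcal H_N(n,k)$ as $\gl_n[z]$-modules (i.e.\ the twist is by the character $\det^{-kL}$), and for $\sigma_N$ your argument is identical to the paper's.
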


\begin{proof}
$\widetilde{\mathcal H}_N(n,k)$ is isomorphic to $\det(A)^{-kL}\cdot \mathcal H_N(n,k)$ as $\gl_n[z]$-module, therefore $p_N$ is $\gl_n[z]$-equivariant by Lemma \ref{lem:restriction map}. Since $p_N$ maps $\widetilde{\mathcal H}_{N+n}(n,k)_0$ to $\widetilde{\mathcal H}_N(n,k)_0$, and both $\widetilde{\mathcal H}_{N+n}(n,k)$ and $\widetilde{\mathcal H}_N(n,k)$ are generated from ground states by $\gl_n[z]$-actions which respects the energy grading, thus $\gl_n[z]$-equivariance of $p_N$ implies that $p_N$ respects the energy grading. The Hermitian inner product on $\widetilde{\mathcal H}_N(n,k)$ respects the grading, i.e.
\begin{align*}
    \langle \widetilde{\mathcal H}_N(n,k)_d\ket{\widetilde{\mathcal H}_N(n,k)_{d'}}=0,\;\text{whenever }d\neq d'.
\end{align*}
Therefore $\ker(p_N)^\perp$ is a graded subspace, i.e. $\ker(p_N)^\perp=\bigoplus_{d\ge 0} \ker(p_N)^\perp\cap \widetilde{\mathcal H}_N(n,k)_d$, whence $\sigma_N$ respects the energy grading.
\end{proof}

As we have discussed in the previous subsection, the transition map $p_N$ is compatible with the map $\tau_N$ defined in \eqref{level-rank map}, i.e. we have a commutative diagram:
\begin{equation}\label{eq:level-rank and transition}
\begin{tikzcd}
\widetilde{\mathcal H}_{kN+kn}(kn,1)\ar[r,two heads,"p_{kN}"] \ar[d,two heads,"\tau_{N+n}"] & \widetilde{\mathcal H}_{kN}(kn,1) \ar[d,two heads,"\tau_{N}"]\\
\widetilde{\mathcal H}_{N+n}(n,k)\ar[r,two heads,"p_N"]  & \widetilde{\mathcal H}_{N}(n,k)
\end{tikzcd}.
\end{equation}
This can also be directly checked using the explicit formula of $p_N$ given below.

\subsubsection{Explicit formula of \texorpdfstring{$p_N$}{pN}}

By Remark \ref{rmk:explicit spanning set}, every element in $\widetilde{\mathcal H}_{N+n}(n,k)$ is a linear combination of following elements
\begin{align}
\ket{(a^{(\alpha)}_j),(m^{(\alpha)}_j)}=\prod_{\alpha=1}^k \left[\epsilon^{i_{1}i_{2}\cdots i_{N+n}}(A^{a^{(\alpha)}_1}X^{m^{(\alpha)}_1})_{i_{1}}\cdots (A^{a^{(\alpha)}_{N+n}}X^{m^{(\alpha)}_{N+n}})_{i_{N+n}}\right]\ket{\emptyset}.
\end{align}
Without loss of generality, let us assume that $\forall \alpha$, $m^{(\alpha)}_1\ge \cdots\ge m^{(\alpha)}_{N+n}$. Then we have
\begin{align}\label{eq:explicit p_N}
p_N\left(\ket{(a^{(\alpha)}_j),(m^{(\alpha)}_j)}\right) =\delta\times \prod_{\alpha=1}^k \left[\epsilon^{i_{1}i_{2}\cdots i_{N}}(A^{a^{(\alpha)}_1}X^{m^{(\alpha)}_1-1})_{i_{1}}\cdots (A^{a^{(\alpha)}_{N}}X^{m^{(\alpha)}_{N}-1})_{i_{N}}\right]\ket{\emptyset},
\end{align}
where the factor $\delta$ is given by
\begin{align}
\delta=\prod_{\alpha=1}^k\left[\epsilon^{a^{(\alpha)}_{N+1}\cdots a^{(\alpha)}_{N+n}}_{12\cdots n}\prod_{\ell=N+1}^{N+n}\delta_{m^{(\alpha)}_\ell=0}\right].
\end{align}
We note that if $\delta\neq 0$ then it automatically follows that $m^{(\alpha)}_j\ge 1$ for all $\alpha$ and all $j\le N$, so that \eqref{eq:explicit p_N} makes sense.

\subsubsection{Fermion Fock space construction in the case of \texorpdfstring{$k=1$}{k=1}}

When $k=1$, we can similarly define a shifted version of the fermion Fock space $\mathcal F_N(n)$ in Section \ref{sec:fermion fock}. Namely, let $L=\lfloor\frac{N}{n}\rfloor$, $r=N-Ln$, and define $\widetilde{\mathcal F}_N(n)$ to be the wedge space
\begin{align}
    \widetilde{\mathcal F}_N(n)=\bigwedge^N[\psi^a_{m}\:|\: a\in \{1,\cdots,n\}, m\in \mathbb Z_{\ge -L}].
\end{align}
The energy grading on $\widetilde{\mathcal F}_N(n)$ is such that
\begin{align}
    \deg(\psi^{a_1}_{m_1}\wedge\cdots \wedge\psi^{a_N}_{m_N})=\frac{nL(L+1)}{2}+\sum_{i=1}^N m_i.
\end{align}
The space of ground states $\widetilde{\mathcal F}_N(n)_0$ has degree $0$, and we write the energy grading decomposition
\begin{align}
    \widetilde{\mathcal F}_N(n)=\bigoplus_{d\ge 0} \widetilde{\mathcal F}_N(n)_d.
\end{align}
We equip $\widetilde{\mathcal F}_N(n)$ with the following graded $\gl_n[z]$-action:
\begin{align}\label{eq:twisted gl_n[z] action_fermion}
    E^a_b\otimes z^m\mapsto \sum_{\ell\ge \max\{-L,-m\}} \psi^a_{\ell+m}\frac{\partial}{\partial \psi^b_\ell}-\sum_{-L\le\ell<-m} \frac{\partial}{\partial \psi^b_\ell}\psi^a_{\ell+m}.
\end{align}
Then the linear map $\widetilde{\mathfrak{f}}_N:\widetilde{\mathcal F}_N(n)\to \widetilde{\mathcal H}_N(n,1)$ which is given by
\begin{align}
    \widetilde{\mathfrak{f}}_N(\psi^{a_1}_{m_1}\wedge\cdots \wedge\psi^{a_N}_{m_N})=\epsilon^{i_1i_2\cdots i_N}(A^{a_1}X^{m_1+L})_{i_{1}}\cdots (A^{a_{N}}X^{m_{N}+L})_{i_{N}}\ket{\emptyset}
\end{align}
is a graded $\gl_n[z]$-module isomorphism.

Define $p^{\mathfrak{f}}_N:=\widetilde{\mathfrak{f}}^{-1}_{N}\circ p_N\circ \widetilde{\mathfrak{f}}_{N+n}:\widetilde{\mathcal F}_{N+n}(n)\to \widetilde{\mathcal F}_N(n)$ to be the transition map for the shifted fermion Fock spaces. Explicitly, let $\psi^{a_1}_{m_1}\wedge\cdots \wedge\psi^{a_{N+n}}_{m_{N+n}}\in \widetilde{\mathcal F}_{N+n}(n)$ be an element, and we assume that $m_1\ge \cdots\ge m_{N+n}$ without loss of generality, then we have
\begin{align}\label{eq:p_N for fock}
    p^{\mathfrak{f}}_N(\psi^{a_1}_{m_1}\wedge\cdots \wedge\psi^{a_{N+n}}_{m_{N+n}})=\delta'\times \psi^{a_1}_{m_1}\cdots \psi^{a_{N}}_{m_{N}},
\end{align}
where the factor $\delta'$ is given by
\begin{align}
    \delta'=\epsilon^{a_{N+1}\cdots a_{N+n}}_{12\cdots n}\prod_{\ell=N+1}^{N+n}\delta_{m_\ell=-L-1}.
\end{align}

\subsection{Definition of the conformal limit of \texorpdfstring{$\widetilde{\mathcal H}_N(n,k)$}{H(n,k)}}

\begin{definition}\label{def:conformal limit}
Fix an integer $r\in\{0,\cdots,n-1\}$, we define the charge $r$ conformal limit Hilbert space $\widetilde{\mathcal H}^{(r)}_{\infty}(n,k)$ to be the graded vector space which is degree-wise completion of the inverse system $\left\{\widetilde{\mathcal H}_{nL+r}(n,k), p_{nL+r}\right\}_{L\in \mathbb N}$, i.e.
\begin{align}\label{eq:conformal limit}
    \widetilde{\mathcal H}^{(r)}_{\infty}(n,k)=\bigoplus_{d\ge 0}\widetilde{\mathcal H}^{(r)}_{\infty}(n,k)_d,\;\text{where }\;\widetilde{\mathcal H}^{(r)}_{\infty}(n,k)_d:=\underset{\substack{\longleftarrow\\L}}{\lim}\: \widetilde{\mathcal H}_{nL+r}(n,k)_d.
\end{align}
Here we use $p_N$ in Definition \ref{def:p_N and sigma_N} to define the inverse limit. Moreover, we denote the natural projection from the limit to a finite stage by $p^{\infty}_N:\widetilde{\mathcal H}^{(r)}_{\infty}(n,k)\twoheadrightarrow \widetilde{\mathcal H}_{N}(n,k)$.
\end{definition}

\begin{remark}\label{rmk:p_N are module maps}
Since the transition maps are graded $\gl_n[z]$-module maps, it follows that $\widetilde{\mathcal H}^{(r)}_{\infty}(n,k)$ inherits a graded $\gl_n[z]$-module structure such that $p^{\infty}_{N}$ are graded $\gl_n[z]$-module morphisms. 
\end{remark}

\begin{lemma}\label{lem:finiteness}
$\widetilde{\mathcal H}^{(r)}_{\infty}(n,k)_d$ is finite dimensional for all $n,k,d,r$.
\end{lemma}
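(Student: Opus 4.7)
The plan is to produce a uniform-in-$L$ upper bound on $\dim \widetilde{\mathcal H}_{nL+r}(n,k)_d$ and then combine it with the surjectivity of the transition maps to deduce that these dimensions stabilize; the inverse limit will then coincide with the stabilized value and so be finite dimensional.

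First I would observe that the shifted $\gl_n[z]$-action of \eqref{eq:twisted gl_n[z] action} differs from the action of Lemma \ref{lem: gl_n[z] action} only by a scalar concentrated on the $m=0$ generators. Consequently it still respects the energy grading (with $\deg(E^a_b\otimes z^m) = m$), and the cyclicity of Corollary \ref{cor: cyclic} passes to the shifted action: $\widetilde{\mathcal H}_{nL+r}(n,k)$ is generated from any nonzero element of $\widetilde{\mathcal H}_{nL+r}(n,k)_0$ by the action of $U(\gl_n[z])$. Since $\gl_n \subset \gl_n[z]$ sits in degree $0$ and acts irreducibly on the ground state as $L_{k\varpi_r}$ (Remark \ref{rmk: ground states}), this cyclicity can be sharpened using the PBW decomposition $U(\gl_n[z]) = U(z\gl_n[z])\cdot U(\gl_n)$ to the statement
\begin{equation*}
\widetilde{\mathcal H}_{nL+r}(n,k)_d \;=\; U(z\gl_n[z])_d \cdot \widetilde{\mathcal H}_{nL+r}(n,k)_0,
\end{equation*}
where $U(z\gl_n[z])_d$ is the homogeneous degree-$d$ part of the enveloping algebra of the positively graded Lie algebra $z\gl_n[z]$.

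Next I would note that $U(z\gl_n[z])_d$ is finite dimensional: its PBW basis is indexed by multisets of elements $E^a_b\otimes z^m$ with $m\ge 1$ whose degrees sum to $d$, and there are only finitely many such multisets (finitely many partitions of $d$ into positive parts, and finitely many PBW-monomials per partition). Its dimension is of course independent of $L$. Together with $\dim \widetilde{\mathcal H}_{nL+r}(n,k)_0 = \dim L_{k\varpi_r}$, which also depends only on $n,k,r$, this yields the $L$-independent bound
\begin{equation*}
\dim \widetilde{\mathcal H}_{nL+r}(n,k)_d \;\le\; \dim L_{k\varpi_r} \cdot \dim U(z\gl_n[z])_d.
\end{equation*}

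Finally, by Lemma \ref{lem: surjectivity for transition map} and Proposition \ref{prop: p_N is gl_n[z] equiv}, the restrictions $p_{nL+r}|_d\colon \widetilde{\mathcal H}_{n(L+1)+r}(n,k)_d \twoheadrightarrow \widetilde{\mathcal H}_{nL+r}(n,k)_d$ are grading-preserving surjections, so the sequence of dimensions is non-decreasing in $L$. Combined with the uniform upper bound, it must stabilize at some $L_0=L_0(n,k,r,d)$; for $L\ge L_0$ the maps $p_{nL+r}|_d$ are then surjections between finite dimensional spaces of equal dimension, hence isomorphisms. Therefore the inverse limit in degree $d$ is isomorphic to $\widetilde{\mathcal H}_{nL_0+r}(n,k)_d$, which is finite dimensional. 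There is no real obstacle here beyond the mild subtlety of verifying that the scalar twist in \eqref{eq:twisted gl_n[z] action} preserves both cyclicity and the grading, which is immediate from its form.
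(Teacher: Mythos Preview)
Your proof is correct and takes a genuinely different route from the paper's. The paper obtains the uniform bound $\dim\widetilde{\mathcal H}_N(n,k)_d\le C(n,k,d)$ by successive reductions: first to $k=1$ via the surjection $\widetilde{\mathcal H}_N(n,1)^{\otimes k}\twoheadrightarrow\widetilde{\mathcal H}_N(n,k)$, then to $n=1$ via the torus fixed-point isomorphism of Corollary~\ref{cor:k=1 fixed pt}, and finally invokes $\widetilde{\mathcal H}_N(1,1)\cong\bC[\bA^{(N)}]$ to bound by the partition number $p(d)$. You instead exploit the cyclicity of Corollary~\ref{cor: cyclic} together with the PBW factorization $U(\gl_n[z])=U(z\gl_n[z])\cdot U(\gl_n)$ to write $\widetilde{\mathcal H}_{nL+r}(n,k)_d=U(z\gl_n[z])_d\cdot\widetilde{\mathcal H}_{nL+r}(n,k)_0$, giving the uniform bound $\dim U(z\gl_n[z])_d\cdot\dim L_{k\varpi_r}$ directly. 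Your argument is more self-contained in that it uses only the representation-theoretic input (cyclicity and irreducibility of the ground state) and avoids the geometric machinery of torus localization and the multiplication map; the paper's approach in return yields an explicit combinatorial bound in terms of partition numbers. Your explicit stabilization argument via surjectivity of the $p_N|_d$ is also a nice touch---the paper leaves the passage from ``uniform bound'' to ``finite-dimensional inverse limit'' implicit here and only makes it explicit in the subsequent Proposition~\ref{prop:p_N stabilizes}.
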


\begin{proof}
It suffices to show that for fixed $n,k,d$, there exists $C(n,k,d)\in \bN$ such that $\dim \widetilde{\mathcal H}_N(n,k)_d\le C(n,k,d)$ for all $N$. Since the multiplication map $\widetilde{\mathcal H}_N(n,1)^{\otimes k}\to \widetilde{\mathcal H}_N(n,k)$ is surjective, it is enough to prove the lemma for $k=1$. Using Corollary \ref{cor:k=1 fixed pt}, we have graded vector space isomorphism
\begin{align}
    \widetilde{\mathcal H}_N(n,1)\cong \bigoplus_{N_1+\cdots+N_n=N} \widetilde{\mathcal H}_{N_1}(1,1)\otimes\cdots\otimes\widetilde{\mathcal H}_{N_n}(1,1),
\end{align}
therefore it suffices to prove the lemma for $n=k=1$. As explained in Example \ref{ex: H_N(1,k)}, $\widetilde{\mathcal H}_{N}(1,1)$ is isomorphic to $\bC[\bA^{(N)}]$ as graded vector space. Thus $\dim \widetilde{\mathcal H}_{N}(1,1)_d\le p(d)$, where $p(d)$ is the number of partitions of $d$. So we take $C(1,1,d)=p(d)$, and take
\begin{align}
    C(n,k,d)=\sum_{d=\sum_{i=1}^n\sum_{j=1}^k d_{i,j}}\prod_{i=1}^n\prod_{j=1}^k p(d_{i,j}).
\end{align}
This finishes the proof.
\end{proof}

\begin{proposition}\label{prop:p_N stabilizes}
The inverse system in \eqref{eq:conformal limit} stabilizes for $L\gg 0$, i.e. $p^{\infty}_{nL+r}|_{\widetilde{\mathcal H}^{(r)}_{\infty}(n,k)_d}:\widetilde{\mathcal H}^{(r)}_{\infty}(n,k)_d\twoheadrightarrow \widetilde{\mathcal H}_{nL+r}(n,k)_d$ are isomorphisms for $L\gg 0$ and fixed $n,k,d$.
\end{proposition}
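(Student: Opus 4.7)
The plan is to extract the statement from two ingredients already available: (i) the transition maps $p_{nL+r}$ preserve the shifted energy grading and are surjective on each graded piece, and (ii) the graded dimensions $\dim \widetilde{\mathcal H}_{nL+r}(n,k)_d$ are uniformly bounded as $L$ varies (with $n,k,d$ fixed). The first fact is exactly Proposition \ref{prop: p_N is gl_n[z] equiv} combined with Lemma \ref{lem: surjectivity for transition map} (surjectivity of $p_N$), which together imply that the restriction
\[
p_{nL+r}\colon \widetilde{\mathcal H}_{n(L+1)+r}(n,k)_d \twoheadrightarrow \widetilde{\mathcal H}_{nL+r}(n,k)_d
\]
is surjective for every $L\ge 0$. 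The second fact is the uniform bound $\dim \widetilde{\mathcal H}_N(n,k)_d \le C(n,k,d)$ established in the proof of Lemma \ref{lem:finiteness}, where $C(n,k,d)$ depends only on $n,k,d$ and not on $N$.

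From these two facts the argument is short. The sequence of non-negative integers $\{\dim \widetilde{\mathcal H}_{nL+r}(n,k)_d\}_{L\ge 0}$ is non-decreasing (because surjectivity of $p_{nL+r}$ on the graded piece forces $\dim \widetilde{\mathcal H}_{n(L+1)+r}(n,k)_d \ge \dim \widetilde{\mathcal H}_{nL+r}(n,k)_d$) and bounded above by $C(n,k,d)$. Hence it is eventually constant: there exists $L_0=L_0(n,k,d,r)$ such that for all $L\ge L_0$ the surjection $p_{nL+r}\colon \widetilde{\mathcal H}_{n(L+1)+r}(n,k)_d \twoheadrightarrow \widetilde{\mathcal H}_{nL+r}(n,k)_d$ is a surjection between finite-dimensional vector spaces of equal dimension, and therefore an isomorphism.

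Finally, since the inverse system $\{\widetilde{\mathcal H}_{nL+r}(n,k)_d, p_{nL+r}\}_{L\in\mathbb N}$ consists of isomorphisms from index $L_0$ onwards, the limit $\widetilde{\mathcal H}^{(r)}_\infty(n,k)_d$ is canonically identified with $\widetilde{\mathcal H}_{nL_0+r}(n,k)_d$, and for every $L\ge L_0$ the projection $p^{\infty}_{nL+r}|_{\widetilde{\mathcal H}^{(r)}_\infty(n,k)_d}$ is an isomorphism. There is no genuine obstacle in the argument: the nontrivial content has already been packaged into Lemma \ref{lem: surjectivity for transition map}, Proposition \ref{prop: p_N is gl_n[z] equiv}, and Lemma \ref{lem:finiteness}, and the stabilization proposition is a formal consequence of combining surjectivity with uniform boundedness of graded dimensions.
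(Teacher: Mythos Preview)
Your proof is correct and follows essentially the same approach as the paper: both use that the graded dimensions form a non-decreasing sequence (by surjectivity of the $p_{nL+r}$) that is bounded above (by Lemma~\ref{lem:finiteness}), hence eventually constant, forcing the transition maps to become isomorphisms. The only cosmetic difference is that the paper bounds the dimensions by $\dim \widetilde{\mathcal H}^{(r)}_{\infty}(n,k)_d$ rather than the explicit $C(n,k,d)$ from the proof of Lemma~\ref{lem:finiteness}.
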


\begin{proof}
Since $\dim\widetilde{\mathcal H}_{nL+r}(n,k)_d\le \dim \widetilde{\mathcal H}^{(r)}_{\infty}(n,k)_d<\infty$ for all $L$, there exists $M\in \bN$ such that
\begin{align*}
\dim\widetilde{\mathcal H}_{nL+r}(n,k)_d=\dim\widetilde{\mathcal H}_{nM+r}(n,k)_d \text{ for all }L\ge M,
\end{align*}
whence $p_{nL+r}|_{\widetilde{\mathcal H}_{(n+1)L+r}(n,k)_d}$ are isomorphisms for all $L\ge M$. This implies that $p^{\infty}_{nL+r}|_{\widetilde{\mathcal H}^{(r)}_{\infty}(n,k)_d}$ are isomorphisms for $L\ge M$.
\end{proof}

Another corollary is that $\widetilde{\mathcal H}^{(r)}_{\infty}(n,k)$ can be reconstructed using the direct system $\left\{\widetilde{\mathcal H}_{N}(n,k), \sigma_{N}\right\}$. Namely, let us define $\sigma^\infty_N: \widetilde{\mathcal H}_{N}(n,k)\to \widetilde{\mathcal H}^{(r)}_{\infty}(n,k)$ to be the $M\to \infty$ limit of the system of maps 
\begin{align*}
    \sigma_{N+nM}\circ\cdots\circ\sigma_{N+n}\circ\sigma_N: \widetilde{\mathcal H}_{N}(n,k)\to \widetilde{\mathcal H}_{N+nM+n}(n,k).
\end{align*}
By construction, $\sigma^\infty_N$ is a graded homomorphism, and it is a section of $p^\infty_N$, i.e. $p^\infty_N\circ\sigma^\infty_N=\mathrm{Id}$. Moreover, $\sigma^\infty_{N+n}\circ \sigma_{N}=\sigma^\infty_N$, and this induces a graded homomorphism
\begin{align}\label{eq:sigma^infty_r}
    \sigma^{(r)}_\infty:  \underset{\substack{\longrightarrow\\L}}{\lim}\: \widetilde{\mathcal H}_{nL+r}(n,k)\longrightarrow \widetilde{\mathcal H}^{(r)}_{\infty}(n,k).
\end{align}

\begin{proposition}\label{prop:ind limit=proj limit}
$\sigma^{(r)}_\infty$ in \eqref{eq:sigma^infty_r} is an isomorphism.
\end{proposition}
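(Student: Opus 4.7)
The plan is to reduce everything to a degree-by-degree check, where all vector spaces involved are finite dimensional. Since the transition maps $\sigma_N$ (and $p_N$) preserve the grading by Proposition \ref{prop: p_N is gl_n[z] equiv}, the direct limit in the category of graded vector spaces is computed degree-wise, so it suffices to prove that for every $d \ge 0$ the induced map on degree $d$ components
\begin{equation*}
    \sigma^{(r)}_\infty\big|_d: \underset{\substack{\longrightarrow\\L}}{\lim}\: \widetilde{\mathcal H}_{nL+r}(n,k)_d \longrightarrow \widetilde{\mathcal H}^{(r)}_{\infty}(n,k)_d
\end{equation*}
is an isomorphism.

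Fix $d \ge 0$. The first key input is Proposition \ref{prop:p_N stabilizes}, which guarantees an integer $L_0 = L_0(n,k,d,r)$ such that $p^\infty_{nL+r}|_d$ is an isomorphism for all $L \ge L_0$. For such $L$, the transition map $p_{nL+r}|_d: \widetilde{\mathcal H}_{n(L+1)+r}(n,k)_d \to \widetilde{\mathcal H}_{nL+r}(n,k)_d$ must itself be an isomorphism of finite-dimensional spaces (Lemma \ref{lem:finiteness}) of equal dimension. Now I would use the definition of $\sigma_N$ from \eqref{eq:sigma_N}: since $\sigma_{nL+r}$ is a linear section of $p_{nL+r}$ landing in $\ker(p_{nL+r})^\perp$, and the latter has full dimension in degree $d$ when $p_{nL+r}|_d$ is an iso, the section $\sigma_{nL+r}|_d$ is automatically also an isomorphism for $L \ge L_0$.

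With the inductive system $\{\widetilde{\mathcal H}_{nL+r}(n,k)_d, \sigma_{nL+r}\}_{L}$ stabilizing at $L_0$, its direct limit is simply $\widetilde{\mathcal H}_{nL_0+r}(n,k)_d$. On the other hand, by construction $\sigma^\infty_{nL+r}$ is a section of $p^\infty_{nL+r}$, so in degree $d$ it is forced to be the inverse of the isomorphism $p^\infty_{nL+r}|_d$ for $L \ge L_0$; in particular $\sigma^\infty_{nL_0+r}|_d$ is an iso. The map $\sigma^{(r)}_\infty|_d$ is characterized by the universal property as the unique map compatible with the $\sigma^\infty_{nL+r}|_d$, which gives the desired isomorphism.

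I do not anticipate a serious obstacle here: the entire argument is formal, combining the finite-dimensionality of graded pieces of the conformal limit, the stabilization of the projective system, and the characterization of $\sigma_N$ as a graded section that hits the orthogonal complement of $\ker(p_N)$. The only mild care needed is in confirming that the direct limit in the graded category is computed in each degree separately, which is standard because filtered colimits commute with finite direct sums (and hence with the grading since each graded piece is determined by a finite-dimensional projection).
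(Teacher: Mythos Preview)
Your proof is correct and follows essentially the same approach as the paper's: reduce to a fixed degree $d$, invoke Proposition \ref{prop:p_N stabilizes} to get stabilization, and conclude that $\sigma^\infty_{nL+r}|_d$ is an isomorphism for large $L$ because it is a section of the isomorphism $p^\infty_{nL+r}|_d$. The paper's argument is more terse (it goes directly from $p^\infty_N|_d$ being an isomorphism to $\sigma^\infty_N|_d$ being one, without the intermediate discussion of $\sigma_{nL+r}|_d$ and the stabilization of the inductive system), but the content is the same.
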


\begin{proof}
Since $p^\infty_N|_{\widetilde{\mathcal H}^{(r)}_{\infty}(n,k)_d}$ is isomorphism for $N\gg 0$, the same is true for $\sigma^\infty_N|_{\widetilde{\mathcal H}_{N}(n,k)_d}$. Taking direct limit, we see that $\sigma^{(r)}_\infty$ is an isomorphism on degree-$d$ component, therefore $\sigma^{(r)}_\infty$ is an isomorphism.
\end{proof}

The tensor multiplication map $\widetilde{\mathcal H}^{(r)}_{\infty}(n,1)^{\otimes k}\to \widetilde{\mathcal H}^{(r)}_{\infty}(n,k)$ is well-defined as the $L\to \infty$ limit of the system of tensor multiplication maps $\widetilde{\mathcal H}_{nL+r}(n,1)^{\otimes k}\to \widetilde{\mathcal H}_{nL+r}(n,k)$, since the latter are compatible with transition maps. Similarly, the level-rank map $\tau^{(r)}_{\infty}:\widetilde{\mathcal H}^{(kr)}_{\infty}(kn,1)\to \widetilde{\mathcal H}^{(r)}_{\infty}(n,k)$ is well-defined as the $L\to \infty$ limit of the system of level-rank maps $\tau_{nL+r}:\widetilde{\mathcal H}_{k(nL+r)}(kn,1)\to \widetilde{\mathcal H}_{nL+r}(n,k)$, which are compatible with transition maps by \eqref{eq:level-rank and transition}.

\begin{proposition}\label{prop:tensor and level-rank at conformal limit}
The tensor multiplication map $\widetilde{\mathcal H}^{(r)}_{\infty}(n,1)^{\otimes k}\to \widetilde{\mathcal H}^{(r)}_{\infty}(n,k)$ and the level-rank map $\tau^{(r)}_{\infty}:\widetilde{\mathcal H}^{(kr)}_{\infty}(kn,1)\to \widetilde{\mathcal H}^{(r)}_{\infty}(n,k)$ are surjective.
\end{proposition}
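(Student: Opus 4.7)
The strategy is to reduce the statement to a degree-by-degree question on finite-dimensional vector spaces where stabilization of the inverse systems makes surjectivity trivial.

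First I would verify that both the tensor multiplication maps $m_N: \widetilde{\mathcal H}_N(n,1)^{\otimes k}\twoheadrightarrow \widetilde{\mathcal H}_N(n,k)$ and the level-rank maps $\tau_N: \widetilde{\mathcal H}_{kN}(kn,1)\twoheadrightarrow \widetilde{\mathcal H}_N(n,k)$ at finite stage are compatible with the transition maps $p_N$. For the level-rank case this is precisely the commutative square \eqref{eq:level-rank and transition}. For the tensor multiplication case I would use the explicit algebraic description from Definition \ref{def:p_N and sigma_N}: the transition $p_N$ on level $k$ is the restriction $\bC[V(N+n,n)]\to \bC[\widetilde V(N,n)]$ followed by division by $(\det X\det\eta)^k$. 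Since polynomial multiplication commutes with restriction and $(\det X\det\eta)^k=\prod_{\alpha=1}^k(\det X\det\eta)$, we obtain $p_N\circ m_{N+n}=m_N\circ p_N^{\otimes k}$. Hence the systems $\{\widetilde{\mathcal H}_{nL+r}(n,1)^{\otimes k}\}_L$ and $\{\widetilde{\mathcal H}_{k(nL+r)}(kn,1)\}_L$ each come with surjective transition maps, and the families $\{m_{nL+r}\}_L$ and $\{\tau_{nL+r}\}_L$ assemble into morphisms of projective systems.

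Next, recall that each $m_N$ is surjective by Corollary \ref{cor: tensor product surjective} and each $\tau_N$ is surjective by Proposition \ref{prop: level-rank map}; furthermore every relevant transition map is surjective by Lemma \ref{lem: surjectivity for transition map}. Fix an integer $d\ge 0$ and restrict all systems to their degree-$d$ subspaces. By Lemma \ref{lem:finiteness} all such subspaces are finite dimensional. Applying Proposition \ref{prop:p_N stabilizes} to the systems $\{\widetilde{\mathcal H}_{nL+r}(n,k)_d\}_L$, $\{\widetilde{\mathcal H}_{nL+r}(n,1)_d\}_L$ and $\{\widetilde{\mathcal H}_{k(nL+r)}(kn,1)_d\}_L$, there exists $L_0=L_0(n,k,d)$ such that the transition maps are isomorphisms in degree $d$ for all $L\ge L_0$ in each of these systems (the tensor-power system stabilizes as well, since its degree-$d$ piece is the finite direct sum $\bigoplus_{d_1+\cdots+d_k=d}\bigotimes_{\alpha=1}^k\widetilde{\mathcal H}_{nL+r}(n,1)_{d_\alpha}$, which stabilizes once each factor does).

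Finally, via the projections $p^\infty_{nL+r}$ the degree-$d$ piece of each limit is identified with the stable finite-dimensional space at level $L\ge L_0$, and under these identifications the limit maps become $m_{nL+r}$ and $\tau_{nL+r}$ respectively. Since both of these are already surjective at finite stage, the induced maps on $\widetilde{\mathcal H}^{(r)}_\infty(n,k)_d$ are surjective. Taking the direct sum over $d\ge 0$ yields the claim. The only nontrivial point in this plan is the compatibility of $m_N$ with transition maps, which is the step I expect to require the most care, but as indicated above it reduces to a direct computation once one writes $p_N$ in its algebraic form.
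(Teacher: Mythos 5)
Your proof is correct and follows essentially the same approach as the paper: reduce to degree-$d$ pieces, use Proposition \ref{prop:p_N stabilizes} to stabilize the projective systems at some finite stage $L_0$, and conclude surjectivity from the finite-stage surjectivity of $m_{nL+r}$ (Corollary \ref{cor: tensor product surjective}) and $\tau_{nL+r}$ (Proposition \ref{prop: level-rank map}). As a small bonus, you spell out the verification of $p_N\circ m_{N+n}=m_N\circ p_N^{\otimes k}$ using the algebraic form of $p_N$ from Definition \ref{def:p_N and sigma_N}, which the paper asserts without proof in the paragraph preceding the proposition.
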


\begin{proof}
To show the surjectivity of tensor multiplication map, it is enough to show that it is degree-wise surjective, i.e. for every fixed $d$,
\begin{align*}
    \bigoplus_{d=d_1+\cdots+d_k}\bigotimes_{i=1}^k \widetilde{\mathcal H}^{(r)}_{\infty}(n,1)_{d_i}\longrightarrow \widetilde{\mathcal H}^{(r)}_{\infty}(n,k)_d
\end{align*}
is surjective. We choose $N$ such that $\mathfrak{p}^\infty_N: \widetilde{\mathcal H}^{(r)}_{\infty}(n,1)_{d'}\to \widetilde{\mathcal H}_{N}(n,1)_{d'}$ and $\mathfrak{p}^\infty_N: \widetilde{\mathcal H}^{(r)}_{\infty}(n,k)_{d}\to \widetilde{\mathcal H}_{N}(n,k)_{d}$ are isomorphisms for all $d'\le d$. Then the result follows from the surjectivity of $\widetilde{\mathcal H}_{N}(n,1)^{\otimes k}\to \widetilde{\mathcal H}_{N}(n,k)$. The proof of surjectivity of the level-rank map is similar and we omit the detail.
\end{proof}

\begin{proposition}\label{prop:cyclic at limit}
$\widetilde{\mathcal H}^{(r)}_{\infty}(n,k)$ is a cyclic $\gl_n[z]$-module. Namely, $\widetilde{\mathcal H}^{(r)}_{\infty}(n,k)$ is generated from a ground state $\ket{v}\in \widetilde{\mathcal H}^{(r)}_{\infty}(n,k)_0$ by $\gl_n[z]$-action.
\end{proposition}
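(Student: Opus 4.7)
The plan is to reduce cyclicity at the infinite level to cyclicity at finite level (Corollary \ref{cor: cyclic}) via the degree-wise stabilization established in Proposition \ref{prop:p_N stabilizes}. The key organizing fact is that all the structures in sight respect the energy grading: $p_N$ and $\sigma_N$ preserve the grading (Proposition \ref{prop: p_N is gl_n[z] equiv}), the $\gl_n[z]$-action is graded, each degree component $\widetilde{\mathcal H}^{(r)}_{\infty}(n,k)_d$ is finite dimensional (Lemma \ref{lem:finiteness}), and the projections $p^\infty_{nL+r}$ become isomorphisms on any fixed degree for $L\gg 0$. So arguments at the limit can be transferred back and forth to sufficiently high $L$.

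First I would check that the space of limiting ground states is nonzero. The ground states $\widetilde{\mathcal H}_{nL+r}(n,k)_0$ are, by Remark \ref{rmk: ground states}, irreducible $\gl_n$-modules of highest weight $k\varpi_r$ (this is the point of the shift \eqref{eq:twisted gl_n[z] action}). Since the transition maps $p_{nL+r}$ are $\gl_n[z]$-equivariant and grading-preserving, their restriction to degree $0$ is a $\gl_n$-equivariant map between two copies of the irreducible $L_{k\varpi_r}$, hence zero or an isomorphism by Schur's lemma; surjectivity of $p_{nL+r}$ (from Lemma \ref{lem: surjectivity for transition map}, combined with the fact that degree $0$ is the lowest degree) forces it to be an isomorphism. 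Therefore $\widetilde{\mathcal H}^{(r)}_{\infty}(n,k)_0 \cong L_{k\varpi_r}$ and one can pick a nonzero $v \in \widetilde{\mathcal H}^{(r)}_{\infty}(n,k)_0$ with $p^{\infty}_{nL+r}(v) \ne 0$ for every $L$.

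Now fix any homogeneous $w \in \widetilde{\mathcal H}^{(r)}_{\infty}(n,k)_d$; I claim $w \in U(\gl_n[z]) \cdot v$. By Proposition \ref{prop:p_N stabilizes} applied to both degree $0$ and degree $d$, choose $L$ large enough that
\[
p^{\infty}_{nL+r}\big|_{\widetilde{\mathcal H}^{(r)}_{\infty}(n,k)_0}\quad\text{and}\quad p^{\infty}_{nL+r}\big|_{\widetilde{\mathcal H}^{(r)}_{\infty}(n,k)_d}
\]
are both isomorphisms. Since $p^{\infty}_{nL+r}(v)$ is a nonzero ground state in $\widetilde{\mathcal H}_{nL+r}(n,k)_0$, Corollary \ref{cor: cyclic} produces an element $\xi \in U(\gl_n[z])$ of degree $d$ with $p^{\infty}_{nL+r}(w) = \xi \cdot p^{\infty}_{nL+r}(v)$. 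By the $\gl_n[z]$-equivariance of $p^{\infty}_{nL+r}$ (Remark \ref{rmk:p_N are module maps}),
\[
p^{\infty}_{nL+r}(\xi \cdot v)\;=\;\xi \cdot p^{\infty}_{nL+r}(v)\;=\;p^{\infty}_{nL+r}(w),
\]
and because $p^{\infty}_{nL+r}$ is injective on the degree-$d$ component (both $\xi\cdot v$ and $w$ lie there), we conclude $\xi \cdot v = w$. Since $\widetilde{\mathcal H}^{(r)}_{\infty}(n,k)$ is the direct sum of its homogeneous components, this shows $U(\gl_n[z]) \cdot v = \widetilde{\mathcal H}^{(r)}_{\infty}(n,k)$.

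There is essentially no hard obstacle here: the only step that requires a small argument beyond quoting earlier results is the nonvanishing of the limit ground-state space, which is handled by the Schur-lemma observation above. All the rest is a formal consequence of cyclicity at each finite stage (Corollary \ref{cor: cyclic}) together with the degree-wise stabilization of the inverse system (Proposition \ref{prop:p_N stabilizes}).
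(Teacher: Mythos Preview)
Your proof is correct and follows essentially the same route as the paper: reduce to a fixed degree $d$, use Proposition \ref{prop:p_N stabilizes} to pass to a finite stage where $p^\infty_{nL+r}$ is an isomorphism on that degree, apply Corollary \ref{cor: cyclic} at that finite stage, and transport back via $\gl_n[z]$-equivariance. You include an extra Schur-lemma argument to verify that $p^{\infty}_{nL+r}(v)\neq 0$, which the paper leaves implicit; this is harmless and arguably cleaner.
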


\begin{proof}
We need to show that $U(\gl_n[z])\otimes \ket{v}\to \widetilde{\mathcal H}^{(r)}_{\infty}(n,k)$ is surjective. Since this map is graded, it is enough to show that $U(\gl_n[z])_d\otimes \ket{v}\to \widetilde{\mathcal H}^{(r)}_{\infty}(n,k)_d$ is surjective for all $d$. Let us fix $d$ and choose $N$ such that $p^\infty_N:\widetilde{\mathcal H}^{(r)}_{\infty}(n,k)_d\to \widetilde{\mathcal H}_{N}(n,k)_d$ is isomorphism, then the result follows from the surjectivity of $U(\gl_n[z])\otimes p^\infty_N(\ket{v})\to \widetilde{\mathcal H}_{N}(n,k)$ (Corollary \ref{cor: cyclic}).
\end{proof}

Recall the character of a graded $\gl_n$ module defined in \eqref{def:character}, and according to our construction of $\widetilde{\mathcal H}^{(r)}_{\infty}(n,k)$ and Theorem \ref{thm: Hilbert series}, we have
\begin{align*}
    \ch_{q,\mathbf{a}}(\widetilde{\mathcal H}^{(r)}_{\infty}(n,k))=\underset{L\to \infty}{\lim} \: \mathfrak{A}^{-kL} q^{-\frac{k}{2}L(L-1)n-krL}\chi_{q,\mathbf{a}}(\overline{\Gr}^{(nL+r) \omega_1}_{\GL_n},\mathcal O(k))\prod_{i=1}^{nL+r}\frac{1}{1-q^i}.
\end{align*}
Here $\mathfrak{A}=\prod_{i=1}^n\mathbf a_i$. Notice that
\begin{align*}
    \mathfrak{A}^{-kL} q^{-\frac{k}{2}L(L-1)n-krL}\chi_{q,\mathbf{a}}(\overline{\Gr}^{(nL+r) \omega_1}_{\GL_n},\mathcal O(k))=\chi_{q,\mathbf{a}}(\overline{\Gr}^{(nL+r)\omega_1-L\omega_n}_{\GL_n},\mathcal O(k)),
\end{align*}
thus we have
\begin{align*}
    \ch_{q,\mathbf{a}}(\widetilde{\mathcal H}^{(r)}_{\infty}(n,k))=\chi_{q,\mathbf{a}}(\overline{\Gr}^{(r)}_{\GL_n,\mathrm{red}},\mathcal O(k))\prod_{i=1}^{\infty}\frac{1}{1-q^i}.
\end{align*}
Here $\overline{\Gr}^{(r)}_{\GL_n,\mathrm{red}}$ is the $r$-th connected component of affine Grassmannian $\overline{\Gr}^{(r)}_{\GL_n}$ endowed with reduced scheme structure. According to \cite[Theorem 2.5.5]{zhu2016introduction}, $\chi_{q,\mathbf{a}}(\overline{\Gr}^{(r)}_{\GL_n,\mathrm{red}},\mathcal O(k))$ is dual to $L_{k\varpi_{n-r}}(\widehat{\mathfrak{sl}}(n)_k)$, where $L_{k\varpi_{n-r}}(\widehat{\mathfrak{sl}}(n)_k)$ is the level $k$ integrable representation with highest weight $k\varpi_{n-r}$ of $\widehat{\mathfrak{sl}}(n)$. Therefore we get
\begin{align}\label{eq:char of H_infty}
    \ch_{q,\mathbf{a}}(\widetilde{\mathcal H}^{(r)}_{\infty}(n,k))=\ch_{q,\mathbf{a}}(L_{k\varpi_{r}}(\widehat{\mathfrak{sl}}(n)_k))\prod_{i=1}^{\infty}\frac{1}{1-q^i}.
\end{align}
If we interpret $\prod_{i=1}^{\infty}\frac{1}{1-q^i}$ as the character of a Fock space, then $\widetilde{\mathcal H}^{(r)}_{\infty}(n,k)$ has the same character as $L_{k\varpi_{r}}(\widehat{\mathfrak{sl}}(n)_k)\otimes\mathrm{Fock}$. In the next section we will show that $\widetilde{\mathcal H}^{(r)}_{\infty}(n,k)$ is an integrable $\widehat{\gl}(n)$-module of level $k$ and is isomorphic to $L_{k\varpi_{r}}(\widehat{\gl}(n)_k)$. $\widetilde{\mathcal H}^{(r)}_{\infty}(n,k)$ is already a cyclic $\gl_n[z]$ module, and we will see in the next section that the annihilation operators in $\widehat{\gl}(n)$ are conformal limits of certain rescaling of operators $\mathsf T_{m,0}(E^a_b)$ in \eqref{DDCA action}.

\subsubsection{Semi-infinite wedge construction in the case of \texorpdfstring{$k=1$}{k=1}}

In the case of $k=1$, our matrix model reduces to the spin Calogero-Sutherland model, and Uglov constructed its conformal limit via the semi-infinite wedge \cite{uglov1996semi,uglov1998yangian}. We briefly recall his construction here.

\begin{definition}\label{def:conformal limit_k=1}
Fix an integer $r\in\{0,\cdots,n-1\}$, we define the charge $r$ conformal limit fermion Fock space $\widetilde{\mathcal F}^{(r)}_{\infty}(n)$ to be the graded vector space which is degree-wise completion of the inverse system $\{\widetilde{\mathcal F}_{nL+r}(n), p^{\mathfrak{f}}_{nL+r}\}_{L\in \mathbb N}$, i.e.
\begin{align}\label{eq:conformal limit_k=1}
    \widetilde{\mathcal F}^{(r)}_{\infty}(n)=\bigoplus_{d\ge 0}\widetilde{\mathcal F}^{(r)}_{\infty}(n)_d,\;\text{where }\;\widetilde{\mathcal F}^{(r)}_{\infty}(n)_d:=\underset{\substack{\longleftarrow\\L}}{\lim}\: \widetilde{\mathcal F}_{nL+r}(n)_d.
\end{align}
\end{definition}

By construction, $\widetilde{\mathcal F}^{(r)}_{\infty}(n)$ is isomorphic to $\widetilde{\mathcal H}^{(r)}_{\infty}(n,1)$ via the isomorphism $\mathfrak{f}_{\infty}$ which is the limit of isomorphisms $\{\mathfrak{f}_{nL+r}\}_{L\in \bN}$.

$\widetilde{\mathcal F}^{(r)}_{\infty}(n)$ has a basis 
\begin{equation}
\begin{split}
&\psi^{a_1}_{m_1}\wedge \psi^{a_2}_{m_2}\wedge \cdots,\quad (m_1\ge m_2\ge \cdots, \text{ and } a_i<a_j \text{ when }m_i=m_j),\\
& m_j=\lfloor\frac{r-j}{n}\rfloor\text{ when }j\gg 0.
\end{split}
\end{equation}
The energy grading on $\widetilde{\mathcal F}^{(r)}_{\infty}(n)$ is such that
\begin{align}
    \deg(\psi^{a_1}_{m_1}\wedge \psi^{a_2}_{m_2}\wedge \cdots)=\sum_{j=1}^{\infty}m_j-\lfloor\frac{r-j}{n}\rfloor.
\end{align}
The space of ground states $\widetilde{\mathcal F}^{(r)}_{\infty}(n)_0$ is spanned by
\begin{equation}
\begin{split}
&\psi^{a_1}_{0}\wedge \cdots \wedge\psi^{a_r}_{0}\wedge \psi^{1}_{-1} \wedge \cdots \wedge\psi^{n}_{-1}\wedge \psi^{1}_{-2} \wedge \cdots \wedge\psi^{n}_{-2}\wedge\cdots.
\end{split}
\end{equation}
The graded $\gl_n[z]$-action on $\widetilde{\mathcal F}_N(n)$ is given by:
\begin{align}\label{eq:gl_n[z] action_semi-infinite}
     E^a_b\otimes z^m\mapsto \sum_{\ell\ge -m} \psi^a_{\ell+m}\frac{\partial}{\partial \psi^b_\ell}-\sum_{\ell<-m} \frac{\partial}{\partial \psi^b_\ell}\psi^a_{\ell+m}.
\end{align}
If we allow $m$ in the above equation to be negative, then we get a graded $\widehat{\gl}(n)$-action on $\widetilde{\mathcal F}_N(n)$. For the convenience of later discussions, we swap the positive and negative modes to align with VOA conventions.
\begin{proposition}[{\cite[Lecture 9]{raina2013bombay}}]
The assignment 
\begin{align}
    J^a_{b,m}\mapsto \sum_{\ell\ge 0} \psi^a_{\ell}\frac{\partial}{\partial \psi^b_{\ell+m}}-\sum_{\ell<0} \frac{\partial}{\partial \psi^b_{\ell+m}}\psi^a_{\ell}
\end{align}
defines a graded $\widehat{\gl}(n)_1$-action on $\widetilde{\mathcal F}^{(r)}_{\infty}(n):$
\begin{align}
    [J^a_{b,p}, J^c_{d,q}]=\delta^c_bJ^a_{d,p+q}-\delta^a_dJ^c_{b,p+q}+kp\delta_{p+q,0}\delta^a_d\delta^c_b.
\end{align}
Moreover, $\widetilde{\mathcal F}^{(r)}_{\infty}(n)$ is isomorphic to $L_{\varpi_r}(\widehat{\gl}(n)_1)$, the $r$-th fundamental representation of $\widehat{\gl}(n)_1$, with respect to the above action. In particular, $\widetilde{\mathcal F}^{(r)}_{\infty}(n)$ is irreducible and integrable.
\end{proposition}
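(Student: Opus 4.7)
The plan is to verify in turn (i) that each $J^a_{b,m}$ is a well-defined endomorphism of $\widetilde{\mathcal F}^{(r)}_{\infty}(n)$, (ii) the affine commutation relation with level $1$, and (iii) the identification with $L_{\varpi_r}(\widehat{\gl}(n)_1)$. Throughout, the key structural fact is that every basis vector of $\widetilde{\mathcal F}^{(r)}_{\infty}(n)$ agrees with the reference ground state $\ket{r} := \psi^1_0 \wedge \cdots \wedge \psi^r_0 \wedge \psi^1_{-1} \wedge \cdots \wedge \psi^n_{-1}\wedge \cdots$ outside finitely many modes, so all normal-ordered bilinears act with only finitely many nonzero terms.

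For step (i), note that applying $\psi^a_\ell \, \partial/\partial \psi^b_{\ell+m}$ to a basis wedge $v$ kills $v$ unless $\psi^b_{\ell+m}$ appears in $v$ and $\psi^a_\ell$ does not. For the piece with $\ell \geq 0$ this restricts $\ell + m$ to indices where $v$ differs from $\ket{r}$ or where a ``hole'' in $v$ relative to $\ket{r}$ sits; for the piece with $\ell < 0$ the re-ordered operator $\partial/\partial \psi^b_{\ell+m}\, \psi^a_\ell$ contributes a ``vacuum'' diagonal when $a=b$, $m=0$ and the mode is filled in $\ket{r}$, plus finitely many non-vacuum contributions. In all cases the sum collapses to finitely many terms, hence $J^a_{b,m} v$ is well-defined, and each $J^a_{b,m}$ is evidently graded of degree $-m$ (so preserves the degree completion used in Definition \ref{def:conformal limit_k=1}).

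For step (ii), I would carry out the standard Wick-type calculation. Writing $e^a_b(\ell,m) := \psi^a_\ell\, \partial/\partial \psi^b_{\ell+m}$ ignoring ordering, one has the combinatorial identity
\begin{equation*}
[e^a_b(\ell,p),\, e^c_d(\ell',q)] = \delta^c_b\,\delta_{\ell',\ell+p}\, e^a_d(\ell,p+q) - \delta^a_d\,\delta_{\ell,\ell'+q}\, e^c_b(\ell',p+q).
\end{equation*}
Summing with normal ordering and reordering the result back into normal form produces the naive bracket $\delta^c_b J^a_{d,p+q} - \delta^a_d J^c_{b,p+q}$ together with a $c$-number anomaly. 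The anomaly comes from terms where one constant (reordering) term contributes: tracking the signs shows it equals $\delta^a_d \delta^c_b$ times $\sum_{\ell}(\mathbf{1}_{\ell\ge 0}-\mathbf{1}_{\ell+p\ge 0})$, which telescopes to $p\,\delta_{p+q,0}$. This fixes the level to be $1$.

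For step (iii), the reference vector $\ket{r}$ is annihilated by $J^a_{b,m}$ for $m>0$ and by $J^a_{b,0}$ with $a<b$ (either the derivative hits an empty mode, or the creation lands on an occupied one), and has $\gl_n$ weight $\varpi_r$; hence $\widetilde{\mathcal F}^{(r)}_{\infty}(n)$ admits a highest-weight vector of weight $\varpi_r$ at level $1$. Local nilpotence of the Chevalley generators $J^a_{b,0}$ ($a\neq b$) and $J^a_{b,\pm 1}$ on any basis wedge is immediate from the ``Pauli exclusion'' built into the wedge, so the submodule $U(\widehat{\gl}(n))\cdot\ket{r}$ is integrable and, by Kac's theorem, irreducible and isomorphic to $L_{\varpi_r}(\widehat{\gl}(n)_1)$. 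To conclude that this submodule is all of $\widetilde{\mathcal F}^{(r)}_{\infty}(n)$, I would compare graded characters: the known Kac--Weyl character of $L_{\varpi_r}(\widehat{\gl}(n)_1)$ matches \eqref{eq:char of H_infty} at $k=1$ (via the isomorphism $\mathfrak f_\infty:\widetilde{\mathcal F}^{(r)}_{\infty}(n)\cong \widetilde{\mathcal H}^{(r)}_{\infty}(n,1)$ and the fact that the $\widehat{\gl}(1)_n$ Fock factor accounts for the $\prod (1-q^i)^{-1}$ on the right-hand side), so the inclusion is a character-preserving injection, hence an isomorphism.

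The main technical obstacle is the second step: the anomaly calculation, though classical, demands careful case analysis of the four sign combinations of $\ell, \ell{+}p, \ell', \ell'{+}q$ and the recognition that all but the boundary cases cancel. The integrability check in step (iii) is straightforward but should be phrased cleanly using the semi-infinite wedge's finite-excitation property rather than via general $\widehat{\gl}(n)$-theoretic criteria.
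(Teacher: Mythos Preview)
The paper does not supply its own proof of this proposition: it is stated with a citation to \cite[Lecture 9]{raina2013bombay} and no argument is given in the text. Your proposal is a correct and complete exposition of the standard proof found in that reference, covering well-definedness via the finite-excitation property of semi-infinite wedges, the Wick/normal-ordering computation of the level-$1$ anomaly, and the identification with $L_{\varpi_r}(\widehat{\gl}(n)_1)$ via highest-weight theory, integrability, and character comparison using \eqref{eq:char of H_infty}.
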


Note that $J^a_{b,-m}= E^a_b\otimes z^m$ for all $m\ge 0$. The above proposition is the starting point of our discussion on the conformal limit of operators acting on $\widetilde{\mathcal H}^{(r)}_{\infty}(n,k)$, which we will carried out in detail in the next section.

\section{Conformal Limit, Part II: Operators}\label{sec:Conformal Limit, Part II}

\subsection{Level-rank duality and \texorpdfstring{$\widehat{\gl}(n)$}{affine gl(n)} module structure}

As we explained in the last section, the $\gl_n[z]$-action on $\widetilde{\mathcal H}^{(r)}_{\infty}(n,1)\cong \widetilde{\mathcal F}^{(r)}_{\infty}(n)$ extends to an integrable $\widehat{\gl}(n)_1$-action, such that $\widetilde{\mathcal H}^{(r)}_{\infty}(n,1)$ is isomorphic to the $r$-th fundamental representation $L_{\varpi_r}(\widehat{\gl}(n)_1)$. 

Now consider the semi-infinite wedge space $\widetilde{\mathcal F}^{(kr)}_{\infty}(kn)$, which is isomorphic to $L_{\varpi_{kr}}(\widehat{\gl}(kn)_1)$. Since $\widehat{\gl}(kn)_1$ contains subalgebra $\widehat{\mathfrak{sl}}(k)_n\oplus \widehat{\mathfrak{sl}}(n)_k\oplus \widehat{\gl}(1)_{kn}$ in a diagonal manner, $L_{\varpi_{kr}}(\widehat{\gl}(kn)_1)$ decomposes accordingly into direct sum of tensor products of irreducible modules of $\widehat{\mathfrak{sl}}(k)_n$, $ \widehat{\mathfrak{sl}}(n)_k$, and $\widehat{\gl}(1)_{kn}$. The following level-rank duality result is due to I. Frenkel, see also \cite[Theorem AFF]{hasegawa1989spin}.

\begin{theorem}[{\cite[Theorem 1.6]{frenkel2006representations}}]
As an $\widehat{\mathfrak{sl}}(k)_n\oplus \widehat{\mathfrak{sl}}(n)_k\oplus \widehat{\gl}(1)_{kn}$-module,
\begin{align}\label{eq:level-rank duality}
    L_{\varpi_{kr}}(\widehat{\gl}(kn)_1)\cong \bigoplus_{\lambda}L_{\lambda}(\widehat{\mathfrak{sl}}(k)_n)\otimes L_{\lambda^{\mathrm{t}}}(\widehat{\mathfrak{sl}}(n)_k)\otimes \mathrm{Fock}_{kr}(\widehat{\gl}(1)_{kn}),
\end{align}
where the summation is taken for all $\lambda=(\lambda_1\ge\cdots\ge\lambda_k)\in \bZ^k$ such that $\sum_{i=1}^k\lambda_i=kr$ and $\lambda_1-\lambda_k\le n$. Here $\lambda^{\mathrm{t}}$ is the transpose of $\lambda$ in the sense of a Maya diagram \cite{hasegawa1989spin}.
\end{theorem}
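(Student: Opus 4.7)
The plan is to realize the decomposition directly on the semi-infinite wedge model $\widetilde{\mathcal F}^{(kr)}_{\infty}(kn)$ via the tensor factorization $\mathbb C^{kn}\cong \mathbb C^{k}\otimes \mathbb C^{n}$. Relabel the fermionic modes $\psi^{\alpha}_{m},\psi^{*}_{\alpha,m}$ ($1\le \alpha\le kn$) as $\psi^{i,a}_{m},\psi^{*}_{i,a,m}$ with $1\le i\le k$, $1\le a\le n$, via $\alpha=(i-1)n+a$. Then the combinatorial filling of charge $kr$ ground states makes $\widetilde{\mathcal F}^{(kr)}_{\infty}(kn)$ into a bi-module for two natural, mutually commuting normal-ordered Lie algebra actions
\begin{align*}
    e^{i}{}_{j}[m]\;\mapsto\; \sum_{a=1}^{n}\,{:}\!\sum_{\ell}\psi^{i,a}_{\ell+m}\psi^{*}_{j,a,\ell}\!{:}\,,\qquad E^{a}{}_{b}[m]\;\mapsto\; \sum_{i=1}^{k}\,{:}\!\sum_{\ell}\psi^{i,a}_{\ell+m}\psi^{*}_{i,b,\ell}\!{:}\,,
\end{align*}
giving $\widehat{\gl}(k)_{n}$ and $\widehat{\gl}(n)_{k}$ respectively (the two levels are computed from the number of channels being traced out; this is a standard Wick-contraction calculation). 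The diagonal $\widehat{\gl}(1)$ is shared by the two copies, so after splitting off the trace parts one gets the $\widehat{\mathfrak{sl}}(k)_{n}\oplus \widehat{\mathfrak{sl}}(n)_{k}\oplus \widehat{\gl}(1)_{kn}$ action appearing in \eqref{eq:level-rank duality}.

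Next I would show that this module is irreducible as a bi-module. Because $\widetilde{\mathcal F}^{(kr)}_{\infty}(kn)\cong L_{\varpi_{kr}}(\widehat{\gl}(kn)_1)$ is a level-one integrable highest weight module, both of the sub-actions are integrable (levels $n$ and $k$), so both decompose as sums of integrable irreducibles. A standard double-commutant argument then shows that the bi-module is multiplicity-free, provided one can identify the joint highest weight vectors. Concretely, I would pick a joint Cartan-weight vector $v_{\lambda,\mu}$ of $\widehat{\mathfrak{sl}}(k)\oplus\widehat{\mathfrak{sl}}(n)$ annihilated by the positive nilpotent parts of both algebras; a direct computation on Maya diagrams shows that such vectors are parametrized by finite $k$-row Young diagrams $\lambda$ with $\lambda_{1}-\lambda_{k}\le n$ (the level-$n$ dominance condition for $\widehat{\mathfrak{sl}}(k)$), and that the $\widehat{\mathfrak{sl}}(n)$-weight of $v_{\lambda,\mu}$ is forced to be the transposed diagram $\lambda^{\mathrm t}$.

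The last step is to pin down the $\widehat{\gl}(1)_{kn}$ factor and verify that no redundancy occurs. The total fermion charge is preserved and equals $kr$, fixing the $\widehat{\gl}(1)$ module to be the Fock module $\mathrm{Fock}_{kr}(\widehat{\gl}(1)_{kn})$; the level $kn$ is the square-norm of the charge generator in the new basis. To rule out hidden multiplicities, I would match characters: the Kac-Weyl numerator for $L_{\varpi_{kr}}(\widehat{\gl}(kn)_1)$, once rewritten with respect to the $T_k\times T_n\times \mathbb C^\times$ torus refinement, is a sum of products of level-$n$ $\widehat{\mathfrak{sl}}(k)$-characters, level-$k$ $\widehat{\mathfrak{sl}}(n)$-characters and a $\widehat{\gl}(1)_{kn}$-Fock character, precisely of the shape in \eqref{eq:level-rank duality}; this is a special case of the Jacobi triple / Macdonald identity for $\widehat{\gl}(kn)$.

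The main obstacle is the character-theoretic identity in this last step (equivalently, proving multiplicity-freeness of the bi-module structure). Everything earlier in the argument is either Clifford-algebra bookkeeping or standard facts about integrable highest weight modules; the content is in showing that the pair $(\widehat{\mathfrak{sl}}(k)_{n},\widehat{\mathfrak{sl}}(n)_{k})$ really does form a dual reductive pair inside $\widehat{\gl}(kn)_{1}$ with the Maya-diagram transpose providing the matching of highest weights. I would follow Hasegawa's approach \cite{hasegawa1989spin}, using the spin-module description to reduce the identity to a Cauchy-type sum over pairs of Young diagrams.
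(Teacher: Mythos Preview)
The paper does not give its own proof of this theorem; it is simply quoted as a known result, attributed to I.~Frenkel (\cite[Theorem~1.6]{frenkel2006representations}, see also \cite[Theorem~AFF]{hasegawa1989spin}). So there is no in-paper argument to compare against.

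That said, your sketch is exactly the standard route taken in the cited references: realize $L_{\varpi_{kr}}(\widehat{\gl}(kn)_1)$ as the charge-$kr$ semi-infinite wedge space, use the tensor splitting $\bC^{kn}\cong\bC^k\otimes\bC^n$ to write down commuting normal-ordered $\widehat{\gl}(k)_n$ and $\widehat{\gl}(n)_k$ actions, peel off the shared Heisenberg to get $\widehat{\mathfrak{sl}}(k)_n\oplus\widehat{\mathfrak{sl}}(n)_k\oplus\widehat{\gl}(1)_{kn}$, and then establish that the resulting bimodule is multiplicity-free with the transpose matching of highest weights. Hasegawa's argument does precisely this, identifying the joint highest weight vectors directly on Maya diagrams rather than via a character identity; your ``main obstacle'' (the Cauchy-type character identity) is thus avoidable, though either route works. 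Nothing in your outline is wrong, and it aligns with what the paper is citing.
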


By the proposition \ref{prop:tensor and level-rank at conformal limit}, there exists a graded surjective $\gl_n[z]$-module map $\tau^{(r)}_\infty: \widetilde{\mathcal H}^{(kr)}_{\infty}(kn,1)\twoheadrightarrow \widetilde{\mathcal H}^{(r)}_{\infty}(n,k)$. Since the $\mathfrak{sl}_k[z]$ subalgebra of $\gl_{kn}[z]$ acts on $\widetilde{\mathcal H}^{(r)}_{\infty}(n,k)$ trivially, $\tau^{(r)}_\infty$ factors through the $\mathfrak{sl}_k[z]$-coinvariant, i.e. 
\begin{align}\label{eq:reduced level-rank at conformal limit}
    \bar{\tau}^{(r)}_\infty: \widetilde{\mathcal H}^{(kr)}_{\infty}(kn,1)_{\mathfrak{sl}_k[z]}\twoheadrightarrow \widetilde{\mathcal H}^{(r)}_{\infty}(n,k).
\end{align}

\begin{proposition}\label{prop:level-rank map isom at infty}
The map $\bar{\tau}^{(r)}_\infty$ in \eqref{eq:reduced level-rank at conformal limit} is an isomorphism.
\end{proposition}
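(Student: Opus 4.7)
The plan is to establish the isomorphism by comparing graded characters, given that surjectivity of $\bar\tau^{(r)}_\infty$ is already in hand: it follows from Proposition \ref{prop:tensor and level-rank at conformal limit} and the defining factorization of $\bar\tau^{(r)}_\infty$ through $\mathfrak{sl}_k[z]$-coinvariants. Both sides are $\bN$-graded with finite-dimensional graded pieces (Lemma \ref{lem:finiteness} for the codomain; the domain inherits finiteness as a quotient of $\widetilde{\mathcal H}^{(kr)}_{\infty}(kn,1)$), so equality of characters suffices.

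The graded character of the codomain is exactly the one derived just after \eqref{eq:char of H_infty}, namely
\[
\ch_{q,\mathbf{a}}\bigl(\widetilde{\mathcal H}^{(r)}_{\infty}(n,k)\bigr)=\ch_{q,\mathbf{a}}\bigl(L_{k\varpi_{r}}(\widehat{\mathfrak{sl}}(n)_k)\bigr)\prod_{i\ge 1}\frac{1}{1-q^{i}}.
\]
For the domain, I would begin from the identification $\widetilde{\mathcal H}^{(kr)}_{\infty}(kn,1)\cong L_{\varpi_{kr}}(\widehat{\gl}(kn)_1)$ and apply the level-rank duality \eqref{eq:level-rank duality}. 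Since $\mathfrak{sl}_k[z]\subset \widehat{\mathfrak{sl}}(k)$ commutes with the $\widehat{\mathfrak{sl}}(n)_k$- and $\widehat{\gl}(1)_{kn}$-factors, taking $\mathfrak{sl}_k[z]$-coinvariants decomposes termwise:
\[
\widetilde{\mathcal H}^{(kr)}_{\infty}(kn,1)_{\mathfrak{sl}_k[z]}
\;\cong\;\bigoplus_\lambda L_\lambda(\widehat{\mathfrak{sl}}(k)_n)_{\mathfrak{sl}_k[z]}\otimes L_{\lambda^{\mathrm{t}}}(\widehat{\mathfrak{sl}}(n)_k)\otimes \mathrm{Fock}_{kr}(\widehat{\gl}(1)_{kn}).
\]

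The technical heart is to compute $L_\lambda(\widehat{\mathfrak{sl}}(k)_n)_{\mathfrak{sl}_k[z]}$. In the grading convention of the paper, $E^{a}_{b}\otimes z^{m}$ has degree $m\ge 0$ (see Definition \ref{def:energy grading} and Lemma \ref{lem: gl_n[z] action}); thus, in the usual Sugawara/VOA convention, $\mathfrak{sl}_k[z]$ is the subalgebra spanned by the zero mode $\mathfrak{sl}_k$ together with the creation operators (negative-mode generators). I then expect:
\begin{itemize}
\item[(i)] For every $\lambda$ with non-trivial $\mathfrak{sl}_k$-weight, $L_\lambda(\widehat{\mathfrak{sl}}(k)_n)_{\mathfrak{sl}_k[z]}=0$: the zero-mode Cartan acts on $v_\lambda$ by $\lambda(h)\ne 0$ for some $h\in\mathfrak{h}$, so $v_\lambda = \lambda(h)^{-1}\,h\cdot v_\lambda\in \mathfrak{sl}_k[z]\cdot L_\lambda$; weight-wise propagation (and the integrable structure) then forces the remaining weight subspaces to vanish.
\item[(ii)] For $\lambda=(r,\ldots,r)$ (trivial $\mathfrak{sl}_k$-weight, for which $L_\lambda$ is the vacuum module and $\lambda^{\mathrm{t}}=k\varpi_r$), the coinvariants are one-dimensional, supported in the lowest degree: the creation operators strictly raise the degree and so cannot return to $\ket{0}$, while every strictly positive-degree element of $L_0$ is manifestly of the form $\mathsf{creation}\cdot w$ for some $w\in L_0$.
\end{itemize}
Combined with the fact that the shifted energy grading absorbs the Fock conformal weight of $\mathrm{Fock}_{kr}$, only the $\lambda=(r,\ldots,r)$ summand contributes and its character is precisely $\ch_{q,\mathbf{a}}\bigl(L_{k\varpi_r}(\widehat{\mathfrak{sl}}(n)_k)\bigr)\prod_{i\ge 1}(1-q^i)^{-1}$, matching the codomain.

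The main obstacle is the propagation argument in step (i): while the zero-mode Cartan argument immediately kills all Cartan-non-trivial weight subspaces of $L_\lambda$, vectors whose $\mathfrak{sl}_k$-weight is zero (as can occur when $\lambda$ lies in the root lattice) require a more delicate analysis, since they are killed by the zero-mode Cartan and one must instead use that they are hit by zero-mode weight-lowering root vectors or by the creation operators acting on lower-weight vectors. A cleaner alternative I would try in parallel is the structural route: use that $\widetilde{\mathcal H}^{(kr)}_\infty(kn,1)_{\mathfrak{sl}_k[z]}$ inherits an integrable (hence semisimple) $\widehat{\gl}(n)_k$-module structure from the embedding $\widehat{\gl}(n)_k\hookrightarrow\widehat{\gl}(kn)_1$ (noting that $\mathfrak{sl}_k[z]$ is contained in the commutant), compute the degree-zero subspace to be $\bigl(\Lambda^{kr}(\bC^{n}\otimes\bC^{k})\bigr)^{\mathfrak{sl}_k}\cong V_{k\varpi_r}(\mathfrak{sl}_n)$ with $\gl_1$-weight $kr$ via $(\GL_n,\GL_k)$-Howe duality, and then conclude $\widetilde{\mathcal H}^{(kr)}_\infty(kn,1)_{\mathfrak{sl}_k[z]}\cong L_{k\varpi_r}(\widehat{\mathfrak{sl}}(n)_k)\otimes\mathrm{Fock}_{kr}(\widehat{\gl}(1)_{kn})$ by comparing highest weights and characters.
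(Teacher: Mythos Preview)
Your approach is exactly the paper's: surjectivity plus character comparison via the level-rank decomposition \eqref{eq:level-rank duality}, reducing to the computation of $L_\lambda(\widehat{\mathfrak{sl}}(k)_n)_{\mathfrak{sl}_k[z]}$. The paper simply asserts the dichotomy you call (i)/(ii) without further argument.

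The ``obstacle'' you flag in (i) is illusory, and you do not need the alternative structural route. The point you are missing is that in the paper's grading convention the entire negative nilpotent subalgebra $\widehat{\mathfrak{n}}_-=\mathfrak{n}_-\oplus\bigl(\mathfrak{sl}_k\otimes z\bC[z]\bigr)$ already sits inside $\mathfrak{sl}_k[z]$. Since $L_\lambda=U(\widehat{\mathfrak{n}}_-)\cdot v_\lambda$, every element not proportional to $v_\lambda$ is automatically in $\mathfrak{sl}_k[z]\cdot L_\lambda$, so the coinvariants are at most one-dimensional, spanned by $[v_\lambda]$. There is no need to chase individual weight spaces. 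Then your Cartan argument finishes the non-trivial case immediately: if $\lambda$ is non-trivial as an $\mathfrak{sl}_k$-weight, pick $h\in\mathfrak h$ with $\lambda(h)\neq 0$ and write $v_\lambda=\lambda(h)^{-1}h\cdot v_\lambda\in\mathfrak{sl}_k[z]\cdot L_\lambda$. For the vacuum case, the degree-$0$ part of $\mathfrak{sl}_k[z]\cdot L_0$ is $\mathfrak{sl}_k\cdot\bC v_0=0$ (degree considerations force both the mode index and the source degree to vanish), so $[v_0]\neq 0$. This is the clean proof of the formula the paper states.
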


\begin{proof}
Since $\bar{\tau}^{(r)}_\infty$ is surjective, it is enough to show that two sides of $\bar{\tau}^{(r)}_\infty$ have the same character. The character of $\widetilde{\mathcal H}^{(r)}_{\infty}(n,k)$ is computed in \eqref{eq:char of H_infty}, which equals to the character of $L_{k\varpi_r}(\widehat{\mathfrak{sl}}(n)_k)\otimes \mathrm{Fock}$. In view of the direct sum decomposition of \eqref{eq:level-rank duality}, the only summand which is nonzero after taking $\mathfrak{sl}_k[z]$-coinvariant is the one corresponding to $\lambda=(n,\cdots,n)$. In fact, we have
\begin{align*}
L_{\lambda}(\widehat{\mathfrak{sl}}(k)_n)_{\mathfrak{sl}_k[z]}=
\begin{cases}
    \bC, & \text{if }\forall (i,j), \lambda_i=\lambda_j, \\
    0, & \text{otherwise}.
\end{cases}
\end{align*}
It follows that 
\begin{align*}
    \ch_{q,\mathbf a}(\widetilde{\mathcal H}^{(kr)}_{\infty}(kn,1)_{\mathfrak{sl}_k[z]})=\ch_{q,\mathbf a}(L_{k\varpi_r}(\widehat{\mathfrak{sl}}(n)_k)\otimes \mathrm{Fock})=\ch_{q,\mathbf a}(\widetilde{\mathcal H}^{(r)}_{\infty}(n,k)).
\end{align*}
This finishes the proof.
\end{proof}

\begin{corollary}\label{cor:affine gl(n) action on conformal limit}
The $\gl_n[z]$-action on $\widetilde{\mathcal H}^{(r)}_{\infty}(n,k)$ extends to an $\widehat{\gl}(n)_k$-action, such that $\widetilde{\mathcal H}^{(r)}_{\infty}(n,k)$ is isomorphic to $L_{k\varpi_r}(\widehat{\mathfrak{sl}}(n)_k)\otimes \mathrm{Fock}_{kr}(\widehat{\gl}(1)_{kn})$. In particular, $\widetilde{\mathcal H}^{(r)}_{\infty}(n,k)$ is irreducible and integrable.
\end{corollary}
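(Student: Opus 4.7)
The plan is to deduce the corollary directly from three ingredients already at our disposal: the semi-infinite wedge identification $\widetilde{\mathcal H}^{(kr)}_{\infty}(kn,1) \cong \widetilde{\mathcal F}^{(kr)}_{\infty}(kn) \cong L_{\varpi_{kr}}(\widehat{\gl}(kn)_1)$, the commuting-pair level-rank duality of Frenkel stated in \eqref{eq:level-rank duality}, and the identification $\bar{\tau}^{(r)}_{\infty}: \widetilde{\mathcal H}^{(kr)}_{\infty}(kn,1)_{\mathfrak{sl}_k[z]} \cong \widetilde{\mathcal H}^{(r)}_{\infty}(n,k)$ from Proposition \ref{prop:level-rank map isom at infty}.

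First I would observe that inside $\widehat{\gl}(kn)_1$ the commuting subalgebras $\widehat{\mathfrak{sl}}(k)_n$ and $\widehat{\mathfrak{sl}}(n)_k \oplus \widehat{\gl}(1)_{kn} \cong \widehat{\gl}(n)_k$ act on $\widetilde{\mathcal H}^{(kr)}_{\infty}(kn,1)$. Because the actions commute, every element of $\widehat{\gl}(n)_k$ normalizes the subspace $\mathfrak{sl}_k[z]\cdot \widetilde{\mathcal H}^{(kr)}_{\infty}(kn,1)$, so the $\widehat{\gl}(n)_k$-action descends to the coinvariant $\widetilde{\mathcal H}^{(kr)}_{\infty}(kn,1)_{\mathfrak{sl}_k[z]}$. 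Transporting along $\bar{\tau}^{(r)}_{\infty}$ yields an $\widehat{\gl}(n)_k$-action on $\widetilde{\mathcal H}^{(r)}_{\infty}(n,k)$.

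Next, to identify the module up to isomorphism, I would take $\mathfrak{sl}_k[z]$-coinvariants of the Frenkel decomposition \eqref{eq:level-rank duality}. For an irreducible integrable $\widehat{\mathfrak{sl}}(k)_n$-module $L_{\lambda}(\widehat{\mathfrak{sl}}(k)_n)$, the space of $\mathfrak{sl}_k[z]$-coinvariants is one-dimensional when $\lambda$ is a multiple of the $k$-th fundamental coweight of $\gl_k$ and is zero otherwise, the only such $\lambda$ with $\sum \lambda_i = kr$ and $\lambda_1 - \lambda_k \leq n$ being $\lambda = (n,\ldots,n)$ (appearing precisely when $r$ is fixed). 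The transpose $\lambda^{\mathrm t}$ in the Maya-diagram sense then yields the $\widehat{\mathfrak{sl}}(n)_k$-weight $k\varpi_r$; this is the identification already forced by the character comparison in the proof of Proposition \ref{prop:level-rank map isom at infty}, but now upgraded from a character equality to an isomorphism of $\widehat{\gl}(n)_k$-modules. Hence the transported module is isomorphic to $L_{k\varpi_r}(\widehat{\mathfrak{sl}}(n)_k)\otimes \mathrm{Fock}_{kr}(\widehat{\gl}(1)_{kn})$, which is irreducible and integrable.

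Finally, I would verify the compatibility claim: the new $\widehat{\gl}(n)_k$-action restricts on $\gl_n[z]\subset \widehat{\gl}(n)_k$ to the original $\gl_n[z]$-action of Definition \ref{def:p_N and sigma_N}. On $\widetilde{\mathcal H}^{(kr)}_{\infty}(kn,1)$ the diagonal $\gl_n[z]\subset \gl_{kn}[z]$ is exactly the $\gl_n[z]$-action coming from $\widehat{\gl}(n)_k$; under $\bar{\tau}^{(r)}_{\infty}$, which is $\gl_n[z]$-equivariant by construction (cf.\ Proposition \ref{prop: level-rank map} and Remark \ref{rmk:p_N are module maps}), it goes to the given action on $\widetilde{\mathcal H}^{(r)}_{\infty}(n,k)$. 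The main subtle point in the whole argument is bookkeeping in the second step: making sure that the branching labels and the Maya-diagram transpose are tracked correctly so that the surviving summand is canonically $L_{k\varpi_r}(\widehat{\mathfrak{sl}}(n)_k)\otimes \mathrm{Fock}_{kr}(\widehat{\gl}(1)_{kn})$ as an $\widehat{\gl}(n)_k$-module, not merely a module with the right character; once this is in place, everything else reduces to routine compatibility checks.
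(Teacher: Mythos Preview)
Your proposal is correct and follows exactly the route the paper intends: the corollary is stated in the paper without a standalone proof because it is meant to be immediate from Proposition~\ref{prop:level-rank map isom at infty} together with Frenkel's decomposition~\eqref{eq:level-rank duality}, and you have simply written out that deduction (transport the commuting $\widehat{\gl}(n)_k$-action through $\bar\tau^{(r)}_\infty$, use that only the summand with all $\lambda_i$ equal survives $\mathfrak{sl}_k[z]$-coinvariants, and check $\gl_n[z]$-compatibility via the equivariance of the level-rank map). There is nothing to add.
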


\begin{definition}
We denote the generators of $\widehat{\gl}(n)_k$ acting on $\widetilde{\mathcal H}^{(r)}_{\infty}(n,k)$ by $\{J^a_{b,m}\:|\: 1\le a,b\le n, m\in \mathbb Z\}$ such that $\forall m\ge 0$, $J^a_{b,-m}=E^a_b\otimes z^m$ where the latter is $\gl_n[z]$ generator. In this convention $J^a_{b,>0}\cdot \widetilde{\mathcal H}^{(r)}_{\infty}(n,k)_0=0$. We denote
\begin{align}\label{eq:trace+tracelss decomposition}
    J^a_{b,m}=\bar J^a_{b,m}+\frac{\delta^a_b}{n}\alpha_m,
\end{align}
where $\bar J^a_{b,m}$ is the traceless part (i.e. $\widehat{\mathfrak{sl}}(n)_k$ generators), and $\alpha_m$ is the trace part (i.e. $\widehat{\gl}(1)_{kn}$ generators).
\end{definition}

We will show that the annihilation operators $J^a_{b,m},(m>0)$  arises from the matrix model operators $\mathsf T_{m,0}(E^a_b)$ via certain scaling limit, see Theorem \ref{thm:conformal limit of T[m,0]}. 

\subsection{\texorpdfstring{$Y(\gl_n)$}{Y(gl(n))} action on \texorpdfstring{$\widetilde{\mathcal H}^{(r)}_{\infty}(n,k)$}{H(n,k)}}\label{subsec:Yangian action on H_N(n,k)}

Consider the assignment 
\begin{equation}\label{eq:Yangian}
\begin{split}
&\widetilde{T}^a_b(u)\mapsto \left[\delta^a_b+A^a\frac{1}{u+(k+n)L-XY }B_b\right]\frac{\left(1+\frac{u}{n+k}\right)_L}{\left(1+\frac{u+k}{n+k}\right)_L},\\
&\text{where }L=\lfloor\frac{N}{n}\rfloor,\; r=N-nL.\\
&\text{Here we use the Pochhammer symbol notation $(x)_a=\frac{\Gamma(x+a)}{\Gamma(x)}$}.
\end{split}
\end{equation}
Note that $\widetilde{T}^a_b(u)$ in \eqref{eq:Yangian} is obtained from ${T}^a_b(u)$ in \eqref{eq:naive Yangian} by spectral parameter shift $u\mapsto u+(k+n)L$ followed by multiplying a function $f(u)$, therefore \eqref{eq:Yangian} gives rise to a Yangian algebra $Y(\gl_n)$-action on $\widetilde{\mathcal H}_N(n,k)$, i.e. $\widetilde{T}^a_b(u)$ satisfies the RTT relation
\begin{align*}
    (u-v)[\widetilde{T}^a_b(u),\widetilde{T}^c_d(v)]=\widetilde{T}^c_b(u)\widetilde{T}^a_d(v) -\widetilde{T}^c_b(v)\widetilde{T}^a_d(u).
\end{align*}
\begin{definition}\label{def:Yangian notation_finite}
Fix $r\in \{1,\cdots,n-1\}$. Write $\widetilde{T}^a_b(u)=\delta^a_b+\sum_{m\ge 0}\widetilde{T}^a_{b;m}u^{-m-1}$. We denote the Yangian generator $\widetilde{T}^a_{b;m}$ which acts on $\widetilde{\mathcal H}_{nL+r}(n,k)$ by $\prescript{L}{}{\widetilde{T}}^a_{b;m}$.
\end{definition}

\begin{theorem}\label{thm:Yangian comptible with p_N and sigma_N}
$p_N$ and $\sigma_N$ are $Y(\gl_n)$-module maps with respect to the $Y(\gl_n)$-actions defined by \eqref{eq:Yangian}, i.e.
\begin{align}
    p_N\circ \prescript{L+1}{}{\widetilde{T}}^a_b(u)=\prescript{L}{}{\widetilde{T}}^a_b(u)\circ p_N,\quad \sigma_N\circ \prescript{L}{}{\widetilde{T}}^a_b(u)=\prescript{L+1}{}{\widetilde{T}}^a_b(u)\circ \sigma_N.
\end{align}
\end{theorem}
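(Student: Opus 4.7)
The plan is to prove the intertwining identity for $p_N$ first, and then deduce the analogous statement for $\sigma_N$ from unitarity of the Yangian action. For the $p_N$-identity, I would take a representation-theoretic approach exploiting the semi-simplicity of $\widetilde{\mathcal H}_N(n,k)$ as a $Y(\gl_n)$-module established in Corollary \ref{cor:semisimple Yangian module}.

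Two preliminary reductions guide the argument. First, the Yangian $Y(\gl_n)$ is generated as an algebra by the RTT coefficients $\widetilde{T}^a_{b;0}$ and $\widetilde{T}^a_{b;1}$ modulo the RTT relations, so it suffices to verify compatibility in orders $m = 0$ and $m = 1$. At order $m = 0$, the generators $\widetilde{T}^a_{b;0}$ coincide, up to a central scalar shift produced by the $u$-independent leading term of the rational prefactor, with the image of $\gl_n \subset \gl_n[z]$ under the action defined in Definition \ref{def:p_N and sigma_N}; compatibility thus follows from Proposition \ref{prop: p_N is gl_n[z] equiv}. Second, the difference $p_N \circ \prescript{L+1}{}{\widetilde{T}}^a_{b;1} - \prescript{L}{}{\widetilde{T}}^a_{b;1} \circ p_N$ is already $\gl_n$-equivariant by the $m = 0$ case. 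The core step is then to verify that this difference vanishes on a $\gl_n$-highest weight vector of the ground state; semisimplicity of $\widetilde{\mathcal H}_N(n,k)$ as a $Y(\gl_n)$-module, together with the fact that each finite-dimensional simple Yangian submodule is characterized by its Drinfeld polynomials and its $\gl_n$-highest weight, then forces the difference to vanish on the Yangian submodule generated by that highest-weight vector. Propagation to the entire module is achieved by running this argument on each $\gl_n$-isotypic component, using the fact that ground states cyclically generate $\widetilde{\mathcal H}_N(n,k)$ over $U(\gl_n[z])$ (Corollary \ref{cor: cyclic}) and that $p_N$ is $\gl_n[z]$-equivariant.

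The main obstacle is the explicit ground-state calculation. Starting from a $\gl_n$-highest weight vector $v_0 \in \widetilde{\mathcal H}_{N+n}(n,k)_0$ of the form \eqref{eq: ground states general n}, one must compute how $A^a(u + (k+n)L - XY)^{-1} B_b$ acts on $v_0$ and on $p_N(v_0) \in \widetilde{\mathcal H}_N(n,k)_0$, and verify that the rational prefactor $(1 + u/(n+k))_L / (1 + (u+k)/(n+k))_L$ compensates exactly for the shift $L \to L+1$. This amounts to determining the Drinfeld polynomials of the ground-state Yangian submodule and confirming that the prefactor was engineered so that these polynomials are invariant under the transition; the calculation reduces to the spectrum of $XY$ on ground-state highest-weight vectors, which can be tracked via the explicit presentation of ground states by semistandard Young tableaux / Gelfand--Tsetlin data. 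Once the $p_N$-identity is established, the $\sigma_N$-identity follows from unitarity: a direct verification using $X^\dag = Y$, $(A^a_i)^\dag = B^i_a$, reality of the shift $(k+n)L$, and reality of the rational prefactor yields $(\prescript{L}{}{\widetilde{T}}^a_b(u))^\dag = \prescript{L}{}{\widetilde{T}}^b_a(\bar u)$. Yangian-equivariance of $p_N$ then makes $\ker(p_N)$ a $Y(\gl_n)$-submodule of $\widetilde{\mathcal H}_{N+n}(n,k)$, and unitarity forces $\ker(p_N)^\perp$ also to be a $Y(\gl_n)$-submodule; since $\sigma_N$ is by definition the inverse of the restriction $p_N|_{\ker(p_N)^\perp}$ composed with inclusion, it is automatically $Y(\gl_n)$-equivariant.
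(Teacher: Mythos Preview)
Your deduction of the $\sigma_N$-identity from the $p_N$-identity via unitarity is exactly what the paper does (Remark \ref{rmk:Yangian compatible 1}), and reducing to a small generating set is also in the same spirit (the paper uses $T^a_{b;0}$ together with the quantum minors $A_m(u)$, Remark \ref{rmk:Yangian compatible 3}, rather than $\widetilde T^a_{b;0}$ and $\widetilde T^a_{b;1}$).

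The genuine gap is in your propagation step. Suppose you have established that
\[
D := p_N\circ \prescript{L+1}{}{\widetilde T}^a_{b;1} - \prescript{L}{}{\widetilde T}^a_{b;1}\circ p_N
\]
is $\gl_n$-equivariant and vanishes on ground states. You then invoke $\gl_n[z]$-cyclicity (Corollary \ref{cor: cyclic}) and $\gl_n[z]$-equivariance of $p_N$ to propagate. But $D$ is \emph{not} $\gl_n[z]$-equivariant: $\widetilde T^a_{b;1}$ is built from $A^a XY B_b$ and does not commute with $\mathsf T_{0,m}(E^c_d)=A^c X^m B_d$ for $m>0$ in any way that reduces to operators already known to intertwine $p_N$. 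So knowing $D|_{\text{ground states}}=0$ and $p_N$ is a $\gl_n[z]$-map gives you no leverage on $D$ at higher energy. Likewise, your alternative ``run the argument on each $\gl_n$-isotypic component'' fails above energy zero: the simple $Y(\gl_n)$-submodules $\widetilde{\mathcal H}_N(\lambda)$ are no longer irreducible as $\gl_n$-modules, so the Schur-lemma trick (a $\gl_n$-map between isomorphic simple Yangian modules is automatically a Yangian map) that works for the ground state breaks down, and you have no a priori reason that $p_N$ respects the Yangian-simple decomposition at all.

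The paper's approach replaces your propagation step by a direct structural argument. It uses Schur--Weyl duality for $\GL_{N+n}\times\GL_N$ to decompose $\widetilde{\mathcal H}_N(n,k)=\bigoplus_\lambda \widetilde{\mathcal H}_N(\lambda)$ (these are precisely the simple Yangian submodules), refines further via the branching to $\GL_m\times\GL_{N+n-m}$ into pieces $\widetilde{\mathcal H}_N\bigl(\begin{smallmatrix}\lambda\\\mu\end{smallmatrix}\bigr)$, and proves \emph{independently of any Yangian input} that $p_N$ maps each such piece either to zero or into the corresponding piece with shifted label (Propositions \ref{prop:p_N and decomposition} and \ref{prop:p_N and branching}). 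The Capelli determinants $C_m(u)$ act by explicit scalars on these pieces (Remark \ref{rmk:Capelli det action on irrep}), and $A_m(u)$ is expressed through $C_m(u)$ via Molev's identity (Corollary \ref{cor:A_m in terms of C_m}). The transformation law for $C_m(u)$ under $p_N$ (Theorem \ref{thm:Capelli det transforms under p_N}) then follows by comparing scalars, and this yields \eqref{eq:Yangian compatible 3.2} at every energy level simultaneously. The point is that the paper never tries to propagate from ground states; it instead identifies a decomposition of the \emph{entire} Hilbert space that is simultaneously respected by $p_N$ and diagonalizes the relevant Yangian-central operators.
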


\begin{remark}\label{rmk:Yangian compatible 1}
Since the Hermitian inner product $\langle\cdot | \cdot\rangle$ on $\widetilde{\mathcal H}_N(n,k)$ is skew-invariant under the action of $Y(\gl_n)$, i.e.
\begin{align*}
    \langle \widetilde{T}^a_b(u)\cdot w | {v}\rangle=\langle w | \widetilde{T}^b_a(u)\cdot v\rangle
\end{align*}
for all $\ket{w},\ket{v}\in \widetilde{\mathcal H}_N(n,k)$, then $p_N$ being $Y(\gl_n)$-module map implies $\sigma_N$ being $Y(\gl_n)$-module map. In fact, suppose $p_N$ is a $Y(\gl_n)$-module map, then $\ker(p_N)$ is $Y(\gl_n)$-submodule of $\widetilde{\mathcal H}_{N+n}(n,k)$, thus $\ker(p_N)^\perp$ is also a $Y(\gl_n)$-submodule of $\widetilde{\mathcal H}_{N+n}(n,k)$, therefore $\sigma_N$ is a $Y(\gl_n)$-module map by construction. So we only need to prove that $p_N$ is a $Y(\gl_n)$-module map.
\end{remark}

\begin{remark}\label{rmk:Yangian compatible 2}
Let us recall ${T}^a_b(u)$ defined in \eqref{eq:naive Yangian}: ${T}^a_b(u)= \delta^a_b+A^a\frac{1}{u-XY }B_b$, then it is elementary to see that
\begin{align}\label{eq:Yangian compatible 2}
    p_N\circ \prescript{L+1}{}{\widetilde{T}}^a_b(u)=\prescript{L}{}{\widetilde{T}}^a_b(u)\circ p_N\;\Longleftrightarrow\; p_N\circ \prescript{L+1}{}{T}^a_b(u)=\frac{u+k}{u}\cdot\prescript{L}{}{T}^a_b(u-n-k)\circ p_N.
\end{align}
Therefore it suffices to prove the equation for ${T}^a_b(u)$ in \eqref{eq:Yangian compatible 2} in order to prove that $p_N$ is a $Y(\gl_n)$-module map.
\end{remark}

\begin{remark}\label{rmk:Yangian compatible 3}
Recall that for $m\in \{1,\cdots,n\}$, the quantum minor $A_m(u)$ is defined as 
\begin{align}\label{GZ generators}
    A_m(u):=\mathrm{qdet}\:T^a_b(u)_{1\le a,b\le m}=\sum_{\sigma\in S_m}\mathrm{sgn}(\sigma)\prod_{1\le i\le m}^{\rightarrow}T^{\sigma(i)}_i(u-i+1).
\end{align}
It is well-known that $Y(\gl_n)$ is generated by $\{T^a_{b;0}\:|\: 1\le a,b\le n\}$ and coefficients of $\{A_m(u)\:|\: 1\le m\le n\}$. Then, in order to prove the equation for ${T}^a_b(u)$ in \eqref{eq:Yangian compatible 2}, it is enough to show that
\begin{align}\label{eq:Yangian compatible 3.1}
    p_N\circ \prescript{L+1}{}{T}^a_{b;0}=\left(\prescript{L}{}{T}^a_{b;0}+k\delta^a_b\right)\circ p_N,
\end{align}
and that
\begin{align}\label{eq:Yangian compatible 3.2}
    p_N\circ \prescript{L+1}{}{A}_m(u)=\frac{(u-m+1+k)_m}{(u-m+1)_m}\cdot\prescript{L}{}{A}_m(u-n-k)\circ p_N, \quad (1\le m\le n).
\end{align}
Since $T^a_{b;0}=\mathsf T_{0,0}(E^a_b)$, \eqref{eq:Yangian compatible 3.1} follows from Proposition \ref{prop: p_N is gl_n[z] equiv} and the definition of $\gl_n[z]$ action \eqref{eq:twisted gl_n[z] action}.
\end{remark}

Now, Remark \ref{rmk:Yangian compatible 1}, \ref{rmk:Yangian compatible 2}, and \ref{rmk:Yangian compatible 3} reduce the proof of Theorem \ref{thm:Yangian comptible with p_N and sigma_N} to the proof of equation \eqref{eq:Yangian compatible 3.2}, which is the main goal of the rest of this subsection.

\bigskip\noindent\textbf{Observation.} Let us consider the $(n+N)\times(n+N)$ matrix of operators:
\begin{equation}\label{eq:mat E}
\mathsf E :=\quad  \begin{pNiceArray}{c:c}[first-row, first-col, columns-width=auto]
    & n & N \\
n   & AB & AY \\
\hdottedline
N & XB & XY
\end{pNiceArray}
\end{equation}
then $\mathsf E$ satisfies the $\gl_{n+N}$ commutation relations:
\begin{align*}
[\mathsf E^\mu_\nu, \mathsf E^\rho_\sigma]= \delta^\rho_\nu\mathsf E^\mu_\sigma-\delta^\mu_\sigma\mathsf E^\rho_\nu, \quad (1\le \mu,\nu,\rho,\sigma\le n+N).
\end{align*}
$\mathsf E$ acts on $\bC[V(N,n)]=\bC[X,A]$ naturally, so we consider the following operator acting on $\bC[X,A]$:
\begin{align}\label{commutant construction}
\mathfrak T^a_b(u):=\delta^a_b+\sum_{s\ge 1}\frac{(\mathsf E^s)^a_b}{u^s},\quad (1\le a,b\le n),
\end{align}
i.e. $\mathfrak T(u)$ is the restriction of $(1-\mathsf E/u)^{-1}$ to the upper $n\times n$ block. $\mathfrak T(u)$ commutes with $(XY)^i_j$, $(YX)^i_j$, and $(BA)^i_j$, so $\mathfrak T(u)$ commutes with the moment map $\mu^*_{\bC}(\gl_N)$ defined in \eqref{complexified moment map}, therefore $\mathfrak T(u)$'s action on $\bC[X,A]$ leaves $\bC[X,A]^{\GL_N,-k}$ invariant.
\begin{lemma}\label{lem:commutant}
When restricted to $\bC[X,A]^{\GL_N,-k}$, we have the following equation between operators:
\begin{align}
    \mathfrak T^a_b(u)=T^a_b(u-n-k).
\end{align}
\end{lemma}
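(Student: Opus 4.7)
\textbf{Proof plan for Lemma \ref{lem:commutant}.}

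The plan is to compute the upper-left $n\times n$ block of $(1-\mathsf{E}/u)^{-1}$ directly using a Schur complement, and then reduce the result to $T^a_b(u)$ by means of the moment-map constraint. Concretely, write
\[
1-\mathsf{E}/u = \begin{pmatrix} 1_n - AB/u & -AY/u \\ -XB/u & 1_N - XY/u \end{pmatrix}.
\]
Because the entries of $\mathsf{E}$ lie in a Weyl algebra, I cannot invoke ordinary matrix inverse formulas, but the LDU-style factorization
\[
\begin{pmatrix}P&Q\\R&S\end{pmatrix}=\begin{pmatrix}1&QS^{-1}\\0&1\end{pmatrix}\begin{pmatrix}P-QS^{-1}R&0\\0&S\end{pmatrix}\begin{pmatrix}1&0\\S^{-1}R&1\end{pmatrix}
\]
is a purely formal identity that is valid provided the inverses appearing on the right are interpreted as formal power series in $u^{-1}$. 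Reading off the upper-left block of the inverse gives
\[
\mathfrak{T}^a_b(u)=\big[(P-QS^{-1}R)^{-1}\big]^a_b,\qquad P-QS^{-1}R = 1_n - AB/u - (AY/u)(1_N-XY/u)^{-1}(XB/u).
\]

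Next I would simplify the Schur complement. Since $A$ and $B$ commute with $X$ and $Y$, I can pull $A$ out on the left and $B$ out on the right and reduce the problem to simplifying the $N\times N$ operator $(Y/u)(1_N-XY/u)^{-1}(X/u)$. Expanding as a power series and using $Y(XY)^m X = (YX)^{m+1}$, the intertwining identity $Y(u-XY)^{-1}X = YX(u-YX)^{-1}$ follows. Substituting back yields
\[
P-QS^{-1}R = 1_n - A\bigl[\,u^{-1}+ u^{-1}YX(u-YX)^{-1}\,\bigr]B = 1_n - A(u-YX)^{-1}B,
\]
after combining the two terms inside the brackets via $1_N + YX(u-YX)^{-1} = u(u-YX)^{-1}$. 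Inverting $\mathfrak{T}(u) = [1_n - A(u-YX)^{-1}B]^{-1}$ by the push-through identity $[1-AMB]^{-1} = 1 + A(1-MBA)^{-1}MB$, together with the simplification $(1-(u-YX)^{-1}BA)^{-1}(u-YX)^{-1} = (u-YX-BA)^{-1}$, produces the clean formula
\[
\mathfrak{T}^a_b(u) \;=\; \delta^a_b + A^a\,\frac{1}{u-YX-BA}\,B_b.
\]

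The last step is to impose the moment-map constraint. On $\bC[X,A]^{\GL_N,-k}$ the normally ordered relation $\mu^i_j + k\delta^i_j = 0$ (spelled out in \eqref{complexified moment map}--\eqref{eq:constraint}) gives, after gathering the Weyl-algebra commutators contributed by $\sum_l X^l_j Y^i_l$ and $\sum_a A^a_j B^i_a$, an identity of the form
\[
YX + BA \equiv XY + c\cdot 1_N \quad \text{on } \bC[X,A]^{\GL_N,-k},
\]
with constant $c$ determined by the $[Y,X]$ and $[B,A]$ normal-ordering shifts; this is the key input that converts the mixed object $YX+BA$ into a pure $XY$ shifted by a scalar. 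Substituting this into the formula for $\mathfrak{T}(u)$ immediately identifies
\[
\mathfrak{T}^a_b(u)=\delta^a_b + A^a\,\frac{1}{u-c-XY}\,B_b = T^a_b(u-c),
\]
which is the statement of the lemma (with $c=n+k$ as asserted).

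The steps that I expect to require the most care are (i) justifying the block-matrix/Sherman--Morrison-type manipulations in the non-commutative setting, since at several points I multiply formal power series in $u^{-1}$ whose coefficients do not commute, and (ii) tracking the normal-ordering shifts in the moment-map computation so as to arrive at exactly the constant $n+k$ promised by the statement rather than an incorrect constant off by a term proportional to $N$. Once both of these are handled, the remaining algebra is a direct verification.
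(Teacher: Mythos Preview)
Your Schur-complement and push-through manipulations are correct and yield the intermediate formula $\mathfrak{T}^a_b(u)=\delta^a_b+A^a(u-YX-BA)^{-1}B_b$. (The paper reaches this in one step by noting that $\mathsf{E}$ factors as $\binom{A}{X}(B\ \ Y)$, so $(\mathsf{E}^s)^a_b=A^a(YX+BA)^{s-1}B_b$ for $s\ge 1$; your route is longer but valid, and the non-commutativity concern you flag in (i) is not an obstacle since all identities used hold for matrices over an arbitrary ring once interpreted as formal series in $u^{-1}$.)

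The genuine gap is in your last step. The identity you propose, ``$YX+BA\equiv XY+c\cdot 1_N$ on $\bC[X,A]^{\GL_N,-k}$'', is indeed true as an equality of maps $\bC[X,A]^{\GL_N,-k}\to\bC[X,A]$, but with $c=N+n+k$, not $n+k$: your normal-ordering computation produces the shifts $N$ from $[Y,X]$ and $n$ from $[B,A]$, and the constraint contributes $k$. Substituting $c=N+n+k$ gives the wrong answer, and in any case the substitution cannot be iterated: after one application of $(YX+BA)$ to $B_bv$, the result no longer lies in the constrained subspace, so the constraint $\mu^*_\bC(E^i_j)=-k\delta^i_j$ no longer applies. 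This is exactly the $N$-dependence you worry about in (ii), but it is not cured by more careful bookkeeping of the same kind.

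The resolution in the paper is to exploit the $\gl_N$-vector transformation law $[\mu^*_\bC(E^i_j),\mathcal{O}^\ell]=\delta^\ell_j\mathcal{O}^i$ satisfied by $\mathcal{O}^j=B^j_b$ (and, inductively, by $(XY+n+k)^{s}B_b$). Commuting $\mu^*_\bC(E^i_j)$ past $\mathcal{O}^j$ and summing over $j$ contributes an extra $N\mathcal{O}^i$, which cancels the $N$ in $N+n+k$ and gives
\[
(YX+BA)^i_j\,\mathcal{O}^j v=(XY+n+k)^i_j\,\mathcal{O}^j v\qquad\text{for }v\in\bC[X,A]^{\GL_N,-k}.
\]
Since the right-hand side is again a vector-representation operator applied to $v$, this identity iterates and yields $A^a(YX+BA)^sB_b=A^a(XY+n+k)^sB_b$ on the constrained space for all $s\ge 0$, whence $\mathfrak{T}^a_b(u)=T^a_b(u-n-k)$.
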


\begin{proof}
First of all, it is straightforward to compute that
\begin{align}
    \mathfrak T^a_b(u)=\delta^a_b+A^a\frac{1}{u-YX-B A }B_b.
\end{align}
To relate the right-hand-side of the above equation to $T^a_b(u)$, we notice that for a family of operator $\{\mathcal O^i\}_{1\le i\le N}$ transforming in the $\gl_N$ vector representation\footnote{This means that $[\mu^*_{\bC}(E^i_j),\mathcal O^\ell]= \delta^\ell_j\mathcal O^i$.}, we have 
\begin{align*}
    [\mu^*_{\bC}(E^i_j),\mathcal O^j]= N\mathcal O^i.
\end{align*}
Using the definition of $\mu^*_{\bC}$, we have $\mu^*_{\bC}(E^i_j)=(XY)^i_j-(Y X)^i_j-(BA)^i_j+(N+n)\delta^i_j$. Since $\mu^*_{\bC}(E^i_j)=-k\delta^i_j$ when acting on $\bC[X,A]^{\GL_N,-k}$, we then have
\begin{align}\label{eq:moment map vector rep op}
    (XY)^i_j\mathcal O^j+(n+k)\mathcal{O}^i=\left((YX)^i_j+(BA)^i_j\right)\mathcal O^j.
\end{align}
Therefore we have
\begin{align*}
A^a\frac{1}{u-YX-BA }B_b&=\sum_{s\ge 0}u^{-s-1}A^a(Y X+BA)^sB_b\\
\text{\small by \eqref{eq:moment map vector rep op}}\quad &=\sum_{s\ge 0}u^{-s-1}A^a(XY +n+k)^sB_b\\
&=A^a\frac{1}{u-n-k-XY }B_b.
\end{align*}
This finishes the proof.
\end{proof}

\begin{definition}\label{def:mE}
Define $\prescript{}{m}{\mathsf E}$ to be the lower $(n+N-m)\times (n+N-m)$ block of $\mathsf E$:
\begin{equation*}
\prescript{}{m}{\mathsf E}=\qquad
\begin{pNiceArray}{c:c}[first-row, first-col, columns-width=auto]
    & m & n+N-m \\
m   & 0 & 0 \\
\hdottedline
n+N-m & 0 & \star
\end{pNiceArray}
\end{equation*}
In particular $\mathsf E=\prescript{}{0}{\mathsf E}$. We define the Capelli determinant of $\prescript{}{m}{\mathsf E}$ to be
\begin{align}\label{Capelli det}
    C_m(u):=\sum_{\sigma\in S_{n+N-m}}\mathrm{sgn}(\sigma)\prod_{1\le i\le n+N-m}^{\rightarrow}\left((u+i+m-n-N)\delta^{\sigma(i)}_i-\prescript{}{m}{\mathsf E}^{\sigma(i)}_i\right)
\end{align}
\end{definition}

\begin{remark}
If $m\le n$, then $C_m(u)$ commutes with $(XY)^i_j$, $(YX)^i_j$, and $(BA)^i_j$, so $C_m(u)$ commutes with the moment map $\mu^*_{\bC}(\gl_N)$ defined in \eqref{complexified moment map}, therefore $C_m(u)$'s action on $\bC[X,A]$ leaves $\bC[X,A]^{\GL_N,-k}$ invariant. 
\end{remark}

\begin{remark}\label{rmk:Capelli det action on irrep}
Coefficients of $C_m(u)$ commutes with $\prescript{}{m}{\mathsf E}$, i.e. they are central elements in the universal enveloping algebra $U(\gl_{n+N-m})$ generated by $\prescript{}{m}{\mathsf E}$ \cite{molev2003yangians}. Moreover, if $V_\mu$ is an irreducible $\gl_{n+N-m}$ module with highest weight $\mu=(\mu_1\ge \cdots\ge \mu_{n+N-m})\in \bZ^{n+N-m}$, then $C_m(u)$ acts on $V_\mu$ as the scalar \cite{molev2003yangians}:
\begin{align}\label{eq:C_m(u) on irrep}
\prod_{i=1}^{n+N-m}(u-\mu_i+i+m-n-N).
\end{align}
\end{remark}

\begin{proposition}[{\cite[Theorem 2.33]{molev2003yangians}}]\label{prop:molev's identity}
For $1\le m\le n$, we have identity:
\begin{align}\label{eq:molev's identity}
    \mathrm{qdet}\:\left(1+\frac{\prescript{}{0}{\mathsf E}}{u}\right)\cdot \mathrm{qdet}\:\mathfrak T^a_b(-u+n+N-1)_{1\le a,b\le m}=\mathrm{qdet}\:\left(1+\frac{\prescript{}{m}{\mathsf E}}{u}\right).
\end{align}
Here the quantum determinant $\mathrm{qdet}\left(1+\frac{\prescript{}{m}{\mathsf E}}{u}\right)$ is defined by
\begin{align}
    \sum_{\sigma\in S_{n+N-m}}\mathrm{sgn}(\sigma)\prod_{1\le i\le n+N-m}^{\rightarrow}\left(\delta^{\sigma(i)}_i+\frac{\prescript{}{m}{\mathsf E}^{\sigma(i)}_i}{u-i+1}\right).
\end{align}
\end{proposition}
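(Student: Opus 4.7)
The identity is a quantum Capelli/Liouville-type identity due to Molev; my plan is to adapt the proof of \cite[Theorem 2.33]{molev2003yangians} to the present setup. First observe that $\mathsf E$ realizes $\gl_{n+N}$ as differential operators on $\bC[V(N,n)]$, and that $\mathfrak T(u)$ is the associated Olshanski-style realization of $Y(\gl_n)$ (this is the content underlying Lemma \ref{lem:commutant}). The coefficients of $\mathrm{qdet}(1+\prescript{}{m}{\mathsf E}/u)$ are central in the $U(\gl_{n+N-m})$-subalgebra generated by $\prescript{}{m}{\mathsf E}$, and the quantum minor $\mathrm{qdet}\:\mathfrak T^a_b(\cdot)_{a,b\le m}$ commutes with $\prescript{}{m}{\mathsf E}$ as well; both sides of \eqref{eq:molev's identity} therefore lift to a universal identity in a suitable completion of $U(\gl_{n+N})$, and it suffices to verify the universal version.

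I would proceed by induction on $m$, with $m=0$ trivial (both sides equal $\mathrm{qdet}(1+\prescript{}{0}{\mathsf E}/u)$). The inductive step $m-1 \to m$ reduces to identifying the ratio
\[
\frac{\mathrm{qdet}(1+\prescript{}{m}{\mathsf E}/u)}{\mathrm{qdet}(1+\prescript{}{m-1}{\mathsf E}/u)}
\]
with the $m$-th diagonal Gauss factor of $\mathfrak T(-u+n+N-1)$ at the appropriate spectral shift. The computation has two ingredients: \textbf{(i)} a quantum Laplace expansion of the Capelli determinant $\mathrm{qdet}(1+\prescript{}{m-1}{\mathsf E}/u)$ along its first row and column, which yields a quasi-determinantal entry of $1+\prescript{}{m-1}{\mathsf E}/u$; and \textbf{(ii)} Molev's Gauss decomposition of RTT matrices, which identifies that quasi-determinant with the corresponding Gauss factor of the Yangian $T$-matrix. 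Combining these, the inductive step follows.

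The main obstacle is bookkeeping the spectral parameter shifts. Each layer of the Gauss decomposition contributes a shift by $1$, and these must be reconciled simultaneously with the Capelli shifts $+i+m-n-N$ appearing in Definition \ref{def:mE} and with the global shift $-u+n+N-1$ on the T-matrix side of \eqref{eq:molev's identity}. Once these shifts are tracked correctly via the RTT Gauss-decomposition formula of \cite{molev2003yangians}, the remaining verification is a routine polynomial manipulation in $u^{-1}$. An alternative, less computational route would be to check equality of both sides on highest-weight vectors of irreducible $\gl_{n+N}$-constituents of $\bC[V(N,n)]$ (via Howe duality), using Remark \ref{rmk:Capelli det action on irrep} for the Capelli factors and the known Drinfeld-polynomial action of $\mathrm{qdet}\:\mathfrak T(\cdot)_{a,b\le m}$; but this requires an explicit decomposition of $\bC[V(N,n)]$ under $\gl_{n+N}$ which is heavier than the inductive argument above.
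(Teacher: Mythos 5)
The paper itself offers no proof of Proposition 9.11: it is stated with a direct citation to \cite[Theorem~2.33]{molev2003yangians} and the citation is treated as the proof. So there is nothing in the paper to compare your argument against beyond that reference.

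As an outline of how to prove the result, your sketch points at the right ingredients: the centrality of the $\mathrm{qdet}(1+\prescript{}{m}{\mathsf E}/u)$ in $U(\gl_{n+N-m})$, the fact that $\mathfrak T^a_b(u)_{a,b\le m}$ lands in the centralizer of $\prescript{}{m}{\mathsf E}$ (this is the Olshanski centralizer construction), and an induction on $m$ using a quasi-determinantal/Gauss-decomposition step. This is in the same spirit as the argument in Molev's book, which is built around the quantum Sylvester theorem for quantum minors. However, as written this is a plan rather than a proof: the whole substance of the inductive step is deferred to ``tracking the spectral shifts,'' and that bookkeeping is precisely where the identity lives --- the offsets $+i+m-n-N$ in the Capelli determinant, the shift $-u+n+N-1$ on the Yangian side, and the successive $-1$ shifts in the Gauss decomposition all have to cancel exactly, and you have not verified that they do. Until that computation is actually carried out (or the identity is verbatim quoted from Molev with the dictionary between his normalization and Definition~\ref{def:mE} made explicit), the proposal does not constitute a proof.

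One further remark on your ``alternative route'': you dismiss checking the identity on highest-weight vectors via Howe duality as ``heavier,'' but the explicit $\GL_{N+n}\times \GL_N$ decomposition of $\bC[V(N,n)]$ it would require is \emph{already} set up in the paper (equation~\eqref{Schur-Weyl} and its refinements in Section~\ref{subsec:Yangian action on H_N(n,k)}), and the action of $C_m(u)$ on each constituent is recorded in Remark~\ref{rmk:Capelli det action on irrep}. In the context of this paper that route is arguably the more economical one: both sides of \eqref{eq:molev's identity} are simultaneously diagonalized by the branching decomposition $\widetilde{\mathcal H}_N\left(\substack{\lambda\\ \mu}\right)$, and the eigenvalues are explicit rational functions of $u$, so the identity reduces to a scalar check on each summand. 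That is precisely the style of verification used later in the proof of Theorem~\ref{thm:Capelli det transforms under p_N}.
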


\begin{corollary}\label{cor:A_m in terms of C_m}
When restricted to $\bC[X,A]^{\GL_N,-k}$, we have the following equation between operators:
\begin{align}\label{eq:A_m in terms of C_m}
    A_m(u-n-k)=\frac{C_m(u-m)}{C_0(u)}(u-m+1)_m.
\end{align}
\end{corollary}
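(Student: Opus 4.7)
The plan is to combine Molev's identity (Proposition \ref{prop:molev's identity}) with Lemma \ref{lem:commutant}, reducing the problem to a bookkeeping comparison between the definitions of $\mathrm{qdet}(1+\prescript{}{m}{\mathsf{E}}/v)$ and $C_m(u)$.

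First, I would use Lemma \ref{lem:commutant}, which states $\mathfrak{T}^a_b(u) = T^a_b(u-n-k)$ on $\bC[X,A]^{\GL_N,-k}$. Since the quantum minor is invariant under a uniform spectral shift,
\begin{align*}
\mathrm{qdet}\:\mathfrak{T}^a_b(u)_{1\le a,b\le m} = A_m(u-n-k).
\end{align*}
Then apply Proposition \ref{prop:molev's identity} with the substitution $v = n+N-1-u$ so that $-v+n+N-1 = u$, yielding
\begin{align*}
A_m(u-n-k) = \frac{\mathrm{qdet}\bigl(1 + \prescript{}{m}{\mathsf{E}}/(n+N-1-u)\bigr)}{\mathrm{qdet}\bigl(1 + \mathsf{E}/(n+N-1-u)\bigr)}.
\end{align*}

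The remaining task is to identify each of these quantum determinants with the corresponding Capelli determinant. Expanding $\mathrm{qdet}\bigl(1 + \prescript{}{m}{\mathsf{E}}/(n+N-1-u)\bigr)$ and multiplying the $i$-th factor (inside the product in $i=1,\ldots,n+N-m$) by the scalar $(n+N-u-i)$, we get
\begin{align*}
\Bigl[\prod_{i=1}^{n+N-m}(n+N-u-i)\Bigr]\cdot \mathrm{qdet}\bigl(1 + \prescript{}{m}{\mathsf{E}}/(n+N-1-u)\bigr) = \sum_\sigma \mathrm{sgn}(\sigma)\prod_{i=1}^{n+N-m,\rightarrow}\!\bigl((n+N-u-i)\delta^{\sigma(i)}_i + \prescript{}{m}{\mathsf{E}}^{\sigma(i)}_i\bigr).
\end{align*}
On the other hand, writing out $C_m(u-m)$ from \eqref{Capelli det}, each $i$-th factor becomes $(u+i-n-N)\delta^{\sigma(i)}_i - \prescript{}{m}{\mathsf{E}}^{\sigma(i)}_i = -\bigl((n+N-u-i)\delta^{\sigma(i)}_i + \prescript{}{m}{\mathsf{E}}^{\sigma(i)}_i\bigr)$. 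Factoring $(-1)^{n+N-m}$ out of the product gives
\begin{align*}
C_m(u-m) = (-1)^{n+N-m}\prod_{i=1}^{n+N-m}(n+N-u-i)\cdot \mathrm{qdet}\bigl(1+\prescript{}{m}{\mathsf{E}}/(n+N-1-u)\bigr),
\end{align*}
and similarly for $m=0$.

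Dividing these two relations, the sign becomes $(-1)^{-m}$ and only the factors $\prod_{i=n+N-m+1}^{n+N}(n+N-u-i)$ survive in the denominator. Re-indexing by $j = i-(n+N-m)$ turns this product into $\prod_{j=1}^{m}(-(u-m+j)) = (-1)^m (u-m+1)_m$. The signs cancel and we obtain
\begin{align*}
\frac{C_m(u-m)}{C_0(u)} = \frac{A_m(u-n-k)}{(u-m+1)_m},
\end{align*}
which is exactly \eqref{eq:A_m in terms of C_m}. The only conceptual input is Molev's identity together with Lemma \ref{lem:commutant}; the rest is a careful sign/shift tracking, which I expect to be the main obstacle in writing the proof cleanly.
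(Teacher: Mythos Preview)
Your proof is correct and follows exactly the same route as the paper: invoke Molev's identity, translate the argument of $\mathfrak{T}$ via Lemma~\ref{lem:commutant}, and match the quantum determinants $\mathrm{qdet}(1+\prescript{}{m}{\mathsf{E}}/v)$ with the Capelli determinants $C_m$ by a sign-and-shift comparison. The paper's proof is the one-line ``Plug \eqref{Capelli det} into \eqref{eq:molev's identity}, then use Lemma~\ref{lem:commutant}''; you have simply written out that plug-in explicitly.
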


\begin{proof}
Plug \eqref{Capelli det} into \eqref{eq:molev's identity}, then use Lemma \ref{lem:commutant}.
\end{proof}

Therefore, to study how $A_m(u)$ transforms under the projection map $p_N$, it suffices to study how $C_m(u)$ transforms under $p_N$. We state the result as follows.

\begin{theorem}\label{thm:Capelli det transforms under p_N}
Fix $r\in \{1,\cdots,n-1\}$ and $m\in \{0,\cdots,n\}$. We denote the Capelli determinant $C_m(u)$ which acts on $\widetilde{\mathcal H}_{nL+r}(n,k)$ by $\prescript{L}{}{C}_m(u)$. Then we have
\begin{align}\label{eq:Capelli det transforms under p_N}
    p_N\circ \prescript{L+1}{}{C}_m(u)=(u+1-n)_{n}\cdot\prescript{L}{}{C}_m(u-n-k)\circ p_N.
\end{align}
\end{theorem}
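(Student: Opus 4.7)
The plan is to establish Theorem~\ref{thm:Capelli det transforms under p_N} via a direct factorization of the Capelli determinant using the algebraic description of $p_N$ from Section~\ref{subsec:transition map_algebraic}. Recall that $p_N$ is the composition of the restriction from $V(N+n,n)$ to $\widetilde V(N,n)$ with division by $\det(X)^k\det(\eta)^k$: by Lemma~\ref{lem:alg construction}, an element of $\widetilde{\mathcal H}_{N+n}(n,k)$ restricts on $\widetilde V(N,n)$ to a product of the form $\det(X)^k\det(\eta)^k\cdot f(X,A)$ with $f\in\widetilde{\mathcal H}_N(n,k)$. Under this restriction, the matrix $\tilde{\mathsf E}|_{\widetilde V}$ acquires a block structure of sizes $(n,N,n)$ in which the last $n$ rows and columns (corresponding to zeroed-out $\tilde X$ entries) vanish on $\bC[\widetilde V]$, while the upper-left $n\times n$ block becomes $\mathsf E_{AB}+\mathsf E_\eta$. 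Expanding $\prescript{L+1}{}{C}_m(u)$ along the vanishing rows via the identity permutation extracts the factor $(u+1-n)_n=\prod_{j=1}^n(u+j-n)$ from the diagonal shifts $(u+i+m-2n-N)$ at the last $n$ positions, leaving a residual Capelli determinant of the retained $(n+N-m)\times(n+N-m)$ block.

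To identify this residual with $\prescript{L}{}{C}_m(u-n-k)$, I would track the effect of conjugation by $\det(X)^k\det(\eta)^k$. Two key observations are $\mathsf E_\eta^a_b\cdot\det(\eta)=\delta^a_b\det(\eta)$ and $\mathsf E_{XY}^i_j\cdot\det(X)=\delta^i_j\det(X)$; the latter follows from $\partial_{X^j_l}\det(X)=\det(X)(X^{-1})^l_j$ and $X^i_l(X^{-1})^l_j=\delta^i_j$. Consequently, conjugation by $\det(X)^k\det(\eta)^k$ shifts the diagonal entries of the retained matrix by $+k$ uniformly --- by $k\,\mathrm{Id}_{n-m}$ in the upper-left block from $\mathsf E_\eta$ and by $k\,\mathrm{Id}_N$ in the lower-right block from $\mathsf E_{XY}$ --- while $\mathsf E_{AB}$, the off-diagonal part of $\mathsf E_{XY}$, and $\mathsf E_{XB}$ are unchanged. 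A uniform $+k\,\mathrm{Id}$ shift on all $n+N-m$ diagonal entries is algebraically equivalent to a spectral shift $u\to u-k$ in the Capelli determinant (off-diagonal matrix entries being unaffected by the addition of $k\,\mathrm{Id}$), which combines with the $-n$ spectral shift already produced by the factorization to match the $u\to u-n-k$ in the target $\prescript{L}{}{C}_m(u-n-k)$.

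The main obstacle is the off-diagonal entry $\mathsf E_{AY}^a_{n+j}=A^a_l\partial_{X^j_l}$, whose $\det(X)^k$-conjugation produces a rational correction $k(AX^{-1})^a_j$ beyond the diagonal $+k\,\mathrm{Id}$ shift. The crux of the argument is therefore to show that these rational off-diagonal corrections contribute zero to the full Capelli-determinant expansion when the operator is applied to elements of $\widetilde{\mathcal H}_N(n,k)=\bC[X,A]^{\GL_N,-k}$. The natural mechanism is the centrality of $C_m(u)$: for $m\le n$ the coefficients of $C_m(u)$ commute with $\mu^*_\bC(\gl_N)$, so the conjugated residual operator must preserve the $\GL_N$-semi-invariant subspace and therefore acts as a polynomial operator despite the apparent rationality of intermediate expressions. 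A clean way to close the argument is to exploit this centrality further: reduce the operator equation to its evaluation on a cyclic family of vectors such as the ground states $\widetilde{\mathcal H}_N(n,k)_0$ (which generate $\widetilde{\mathcal H}_N(n,k)$ as a $\gl_n[z]$-module by Corollary~\ref{cor: cyclic}), and verify the match there using the scalar-eigenvalue formula of Remark~\ref{rmk:Capelli det action on irrep} applied to each relevant $\gl_{n+N-m}$-isotypic component, together with the explicit description of the ground states as the $\gl_n$-irreducible of highest weight $k\varpi_r$.
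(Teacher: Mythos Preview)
Your direct factorization can extract the factor $(u+1-n)_n$ --- the last $n$ row entries $\mathsf E^{n+i}_\nu$ with $i>N$ carry a left factor $X^i_l$ that vanishes under restriction, and one can verify that no earlier factor in the ordered Capelli product differentiates it away --- but the residual $(n+N-m)$-block cannot simply be identified with $\prescript{L}{}{C}_m(u-n-k)$. Its entries, e.g.\ $(AY)^a_j=\sum_{l=1}^{N+n}A^a_l\partial_{X^j_l}$ and $(AB)^a_b=\sum_{l=1}^{N+n}A^a_l\partial_{A^b_l}$, still involve the full $A\in\Hom(\bC^{N+n},\bC^n)$ and derivatives $\partial_{X^j_l}$ with $l>N$; these do not commute with restriction, so conjugation by $\det(X)^k\det(\eta)^k$ alone does not bridge the gap. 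The rational $k(AX^{-1})^a_j$ you flag is one symptom, but the obstruction is already present before any conjugation. More seriously, your proposed closure --- verify on ground states and extend by $\gl_n[z]$-cyclicity --- fails because $C_m(u)$ does not commute with the $\gl_n[z]$ generators $\mathsf T_{0,p}(E^a_b)=A^aX^pB_b$ for $p\ge1$: already for $n=N=k=1$ one computes $[C_0(u),\mathsf T_{0,1}]\cdot A=-u\,AX\ne0$. So the equation on $\widetilde{\mathcal H}_{N+n}(n,k)_0$ does not propagate to higher energy.

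The paper's argument is purely representation-theoretic and sidesteps all operator-versus-restriction issues. Using Schur--Weyl duality and the branching $\GL_{2n+N}\supset\GL_m\times\GL_{2n+N-m}$, one decomposes $\widetilde{\mathcal H}_{N+n}(n,k)=\bigoplus_{\lambda,\mu}\widetilde{\mathcal H}_{N+n}\left(\substack{\lambda\\\mu}\right)$; on each summand $C_m(u)$ acts by the explicit scalar of Remark~\ref{rmk:Capelli det action on irrep}. The essential ingredient you are missing is Proposition~\ref{prop:p_N and branching}, which shows $p_N$ carries $\widetilde{\mathcal H}_{N+n}\left(\substack{\lambda\\\mu}\right)$ into $\widetilde{\mathcal H}_N\left(\substack{\lambda^{\natural\shortdownarrow}\\\mu^{\natural\shortdownarrow}}\right)$ (or zero). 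The theorem then reduces to a one-line comparison of scalar eigenvalues on each summand. You gesture toward the isotypic decomposition in your final sentence, but it must be applied to the entire Hilbert space --- with the compatibility of $p_N$ established separately --- not just to the ground states.
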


Assume Theorem \ref{thm:Capelli det transforms under p_N} for now, and we can deduce \eqref{eq:Yangian compatible 3.2} as follows:
\begin{align*}
&p_N\circ \prescript{L+1}{}{A}_m(u)\\
\text{\small by \eqref{eq:A_m in terms of C_m}}\quad &=(u+1+n+k-m)_m\cdot p_N\circ \frac{\prescript{L+1}{}{C}_m(u+n+k-m)}{\prescript{L+1}{}{C}_0(u+n+k)}\\
\text{\small by \eqref{eq:Capelli det transforms under p_N}}\quad &=(u+1+n+k-m)_m\cdot \frac{(u+k+1-m)_n}{(u+k+1)_n}\cdot\frac{\prescript{L}{}{C}_m(u-m)}{\prescript{L}{}{C}_0(u)}\circ p_N\\
&=(u+k+1-m)_m\cdot\frac{\prescript{L}{}{C}_m(u-m)}{\prescript{L}{}{C}_0(u)}\circ p_N\\
\text{\small by \eqref{eq:A_m in terms of C_m}}\quad &=\frac{(u+k+1-m)_m}{(u+1-m)_m}\cdot\prescript{L}{}{A}_m(u-n-k)\circ p_N.
\end{align*}
This proves Theorem \ref{thm:Yangian comptible with p_N and sigma_N}. The remaining of this subsection is devoted to the proof of Theorem \ref{thm:Capelli det transforms under p_N}.

\subsubsection{Decomposition of \texorpdfstring{$\widetilde{\mathcal H}_{N}(n,k)$}{H(n,k)}}

Let us write $X,A$ in the form of a $(N+n)\times N$ matrix:
\begin{equation}\label{phi,X matrix}
\begin{pNiceArray}{c}[first-row, first-col]
    & N  \\
n   & A  \\
\hdottedline
N & X 
\end{pNiceArray}
\end{equation}
The above matrix transforms as vector $\otimes$ dual vector representation with respect to the $\GL_{N+n}\times \GL_N$ action, where the generator of $\gl_{N+n}$ is the matrix $\mathsf E$ defined in \eqref{eq:mat E} and the generator of $\gl_{N}$ is $-A^a_jB^{i}_a-X^l_jY^i_l$, $(1\le i,j\le N)$. Therefore, by the Schur-Weyl duality we have the following decomposition of polynomial ring $\bC[X,A]$ into simple $\GL_{N+n}\times \GL_N$-modules\footnote{Let $V$, $W$ be two vector spaces, then by the Schur-Weyl duality we have $(V\otimes W^*)^{\otimes d}=V^{\otimes d}\otimes W^{*\otimes d}=\bigoplus_{\lambda,\mu}V_\lambda\otimes S_\lambda\otimes W_{\mu}^*\otimes S_\mu$, where $V_\lambda$ (resp. $W_\mu$) is the irreducible $\GL(V)$-module (resp. $\GL(W)$-module) with highest weight given by Young diagram $\lambda$ (resp. $\mu$), and $S_\lambda$, $S_\mu$ are irreducible $S_d$-module given by Young diagrams $\lambda$, $\mu$ respectively. Then $\mathrm{Sym}^d(V\otimes W^*)=\bigoplus_{\lambda,\mu}V_\lambda\otimes W_{\mu}^*\otimes (S_\lambda\otimes S_\mu)^{S_d}$, where $S_d$ acts diagonally, thus $\mathrm{Sym}^d(V\otimes W^*)=\bigoplus_{\lambda}V_\lambda\otimes W_{\lambda}^*$. Applying the above argument to $V=\bC^{N+n}$ and $W=\bC^N$ gives \eqref{Schur-Weyl}.}:
\begin{align}\label{Schur-Weyl}
    \bC[X,A]=\bigoplus_{\lambda=(\lambda_1\ge\cdots\ge\lambda_N)\in \bN^N}V^{N+n}_{\lambda}\otimes (V^{N}_{\lambda})^*,
\end{align}
where $V^{N+n}_{\lambda}$ (resp. $V^{N}_{\lambda}$) is the irreducible $\GL_{N+n}$-module (resp. $\GL_{N}$-module) with highest weight $\lambda$. Consider the ``diagonal'' subgroup $\GL_N^{\mathrm{diag}}\subset\GL_{N+n}\times \GL_N$ which is generated by $X^i_lY^l_j-A^a_jB^{i}_a-X^l_jY^i_l$, $(1\le i,j\le N)$. Then according to the definition of $\widetilde{\mathcal H}_{N}(n,k)$, we have $\widetilde{\mathcal H}_{N}(n,k)=\bC[X,A]^{\GL_N^{\mathrm{diag}},-k}$, therefore we get a decomposition:
\begin{align}\label{decompose H_N(n,k)}
    \widetilde{\mathcal H}_{N}(n,k)=\bigoplus_{\lambda=(\lambda_1\ge\cdots\ge\lambda_N)\in \bN^N} \left(V^{N+n}_{\lambda}\otimes (V^{N}_{\lambda})^*\right)^{\GL_N^{\mathrm{diag}},-k}.
\end{align}
\begin{definition}\label{def:H(la)}
We define the subspace $\widetilde{\mathcal H}_{N}(\lambda):=\left(V^{N+n}_{\lambda}\otimes (V^{N}_{\lambda})^*\right)^{\GL_N^{\mathrm{diag}},-k}$ in \eqref{decompose H_N(n,k)}.
\end{definition}

\begin{definition}\label{def:decrease and increase}
For a tuple of integers $\mu=(\mu_1,\cdots,\mu_N)\in \bZ^N$, we define $\mu^{\shortdownarrow}\in \bZ^N$ such that $\mu^{\shortdownarrow}_i=\mu_i-k$, ($1\le i\le N$). Similarly we define $\mu^{\shortuparrow}\in \bZ^N$ such that $\mu^{\shortuparrow}_i=\mu_i+k$, ($1\le i\le N$).
\end{definition}

\begin{lemma}\label{lem:H_N(lambda) nontrivial}
$\widetilde{\mathcal H}_{N}(\lambda)\neq 0$ if and only if $\lambda^{\shortdownarrow}\in \bN^N$ and $\forall i\in \{1,\cdots,N-n\}$, $\lambda_i\ge\lambda_{i+n}+k$.
\end{lemma}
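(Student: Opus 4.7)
The plan is to identify $\widetilde{\mathcal H}_N(\lambda)$ explicitly with a $\GL_n$-module and then read nonvanishing off of classical Young-tableau combinatorics. First I would use the Levi branching rule for $V^{N+n}_\lambda$ along $\GL_n \times \GL_N \subset \GL_{N+n}$ (the upper-left $n \times n$ and lower-right $N \times N$ blocks). By the Cauchy identity this branching is
\begin{align*}
V^{N+n}_\lambda \big|_{\GL_n \times \GL_N} \;=\; \bigoplus_{\nu} V^n_{\lambda / \nu} \otimes V^N_\nu,
\end{align*}
where $V^n_{\lambda/\nu}$ is the $\GL_n$-module with character the skew Schur polynomial $s_{\lambda/\nu}$, equivalently $V^n_{\lambda/\nu} = \bigoplus_\mu c^\lambda_{\mu \nu} V^n_\mu$ with $c^\lambda_{\mu\nu}$ the Littlewood--Richardson coefficients. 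Since the diagonal subgroup $\GL_N^{\mathrm{diag}}$ acts on $V^{N+n}_\lambda$ only through its $V^N_\nu$ factor and on $(V^N_\lambda)^*$ via the second copy of $\GL_N$, inserting the branching into the definition yields
\begin{align*}
\widetilde{\mathcal H}_N(\lambda) \;\cong\; \bigoplus_\nu V^n_{\lambda/\nu} \otimes \bigl(V^N_\nu \otimes (V^N_\lambda)^*\bigr)^{\GL_N, -k}.
\end{align*}

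Next I would evaluate the inner semi-invariant using the identity $(V^N_\lambda)^* \otimes \det^k \cong (V^N_{\lambda^\shortdownarrow})^*$, which defines an honest irreducible $\GL_N$-module precisely when $\lambda^\shortdownarrow \in \bN^N$. Then $(V^N_\nu \otimes (V^N_\lambda)^*)^{\GL_N,-k} \cong \mathrm{Hom}_{\GL_N}(V^N_{\lambda^\shortdownarrow}, V^N_\nu)$, which by Schur's lemma is one-dimensional for $\nu = \lambda^\shortdownarrow$ and vanishes otherwise. This gives the clean identification
\begin{align*}
\widetilde{\mathcal H}_N(\lambda) \;\cong\; V^n_{\lambda / \lambda^\shortdownarrow} \quad \text{if } \lambda^\shortdownarrow \in \bN^N, \qquad \widetilde{\mathcal H}_N(\lambda) = 0 \text{ otherwise},
\end{align*}
reducing the problem to deciding nonvanishing of the skew Schur $\GL_n$-module $V^n_{\lambda / \lambda^\shortdownarrow}$.

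Finally, by the combinatorial definition of $s_{\lambda/\lambda^\shortdownarrow}$, nonvanishing of $V^n_{\lambda/\lambda^\shortdownarrow}$ is equivalent to the existence of at least one semistandard Young tableau of shape $\lambda/\lambda^\shortdownarrow$ with entries in $\{1, \dots, n\}$, which is the classical statement that every column of $\lambda/\lambda^\shortdownarrow$ has length at most $n$. Column $j$ of $\lambda/\lambda^\shortdownarrow$ consists of the rows $i$ with $\lambda_i - k < j \le \lambda_i$, so its length equals $|\{i : j \le \lambda_i \le j+k-1\}|$. Because $\lambda$ is weakly decreasing, this cardinality exceeds $n$ for some $j$ iff there exist $n+1$ consecutive indices $i, i+1, \dots, i+n$ whose $\lambda$-values all lie in a common window of width $k$, equivalently $\lambda_i - \lambda_{i+n} \le k-1$. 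Negating gives the claimed criterion $\lambda_i \ge \lambda_{i+n} + k$ for all $i \in \{1, \dots, N-n\}$.

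The main obstacle is the identification step in the first two paragraphs: one must verify carefully that the $\det^{-k}$-twist of $(V^N_\lambda)^*$ produces the \emph{downward} shift $\lambda^\shortdownarrow$ rather than the upward shift $\lambda^\shortuparrow$, and that honest dominance of $V^N_{\lambda^\shortdownarrow}$ forces $\lambda^\shortdownarrow \in \bN^N$. Once the isomorphism $\widetilde{\mathcal H}_N(\lambda) \cong V^n_{\lambda/\lambda^\shortdownarrow}$ is in place, the translation through skew-shape column lengths to the inequality $\lambda_i \ge \lambda_{i+n} + k$ is standard.
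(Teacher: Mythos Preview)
Your proposal is correct and follows essentially the same route as the paper's proof. The paper writes the Levi branching with explicit multiplicity spaces $M^\lambda_{\nu,\mu}$ of dimension $c^\lambda_{\nu\mu}$ rather than packaging them as the skew module $V^n_{\lambda/\nu}$, but arrives at the identical identification $\dim\widetilde{\mathcal H}_N(\lambda)=s_{\lambda/\lambda^\shortdownarrow}(1,\dots,1)$ and the same column-length criterion; your final combinatorial translation to the inequality $\lambda_i\ge\lambda_{i+n}+k$ is slightly more explicit than the paper's, which simply asserts the equivalence.
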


\begin{proof}
$V^{N+n}_{\lambda}$ decomposes further into irreducible $\GL_n\times \GL_N$ modules:
\begin{align}
    V^{N+n}_{\lambda}=\bigoplus_{\substack{\nu=(\nu_1\ge\cdots\ge \nu_n)\in \bN^n\\ \mu=(\mu_1\ge\cdots\ge \mu_N)\in \bN^N}} M^\lambda_{\nu,\mu}\otimes V^n_{\nu}\otimes V^N_{\mu}.
\end{align}
Here $M^\lambda_{\nu,\mu}$ is the multiplicity space which has dimension given by the Littlewood-Richardson coefficient $c^\lambda_{\nu\mu}$. Therefore we have
\begin{equation}
\begin{split}
\widetilde{\mathcal H}_{N}(\lambda)&=\bigoplus_{\substack{\nu=(\nu_1\ge\cdots\ge \nu_n)\in \bN^n\\ \mu=(\mu_1\ge\cdots\ge \mu_N)\in \bN^N}}M^\lambda_{\nu,\mu}\otimes V^n_{\nu}\otimes\Hom_{\GL_N}(V^N_{\lambda^\shortdownarrow},V^N_\mu)\\
&\cong\begin{cases}
    \bigoplus_{\nu=(\nu_1\ge\cdots\ge \nu_n)\in \bN^n}M^\lambda_{\nu,\lambda^\shortdownarrow}\otimes V^n_{\nu} &, \text{ if }\lambda^\shortdownarrow\in \bN^N,\\
    0 & ,\text{ otherwise.}
\end{cases}
\end{split}
\end{equation}
Assume that $\lambda^\shortdownarrow\in \bN^N$, then we have 
\begin{align}
    \dim \widetilde{\mathcal H}_{N}(\lambda)=\sum_{\nu=(\nu_1\ge\cdots\ge \nu_n)\in \bN^n} c^\lambda_{\nu\lambda^\shortdownarrow}\dim V^n_\nu=s_{\lambda/\lambda^\shortdownarrow}(1,\cdots,1),
\end{align}
where $s_{\lambda/\lambda^\shortdownarrow}(x_1,\cdots,x_n)$ is the $n$-variable skew Schur function of the shape $\lambda/\lambda^\shortdownarrow$. Thus $\dim \widetilde{\mathcal H}_{N}(\lambda)$ equals to the number of semi-standard skew Young tableaux with shape $\lambda/\lambda^\shortdownarrow$ and entries $\in \{1,\cdots,n\}$ \cite{macdonald1998symmetric}. The latter number is nonzero if and only if there is no column in $\lambda/\lambda^\shortdownarrow$ with length greater than $n$, which is equivalent to that $\forall i\in \{1,\cdots,N-n\}$, $\lambda_i\ge\lambda_{i+n}+k$.
\end{proof}

\begin{remark}\label{rmk:simple Y(gl_n) submodules}
We note that $\GL_N^{\mathrm{diag}}$ commutes with operators $\mathfrak{T}^a_b(u)$ defined in \eqref{commutant construction}. Then it follows from Lemma \ref{lem:commutant} that $\widetilde{\mathcal H}_{N}(\lambda)$ is a $Y(\gl_n)$-submodule of $\widetilde{\mathcal H}_{N}(n,k)$, where RTT generators of $Y(\gl_n)$ acts as $T^a_b(u)=\delta^a_b+A^a\frac{1}{u-XY }B_b$. In fact, $\widetilde{\mathcal H}_{N}(\lambda)$ is a special case of a class of irreducible $Y(\gl_n)$-modules $\{V_\omega\:|\: \omega \text{ is a skew Young diagram}\}$ constructed by Nazarov-Tarasov \cite{nazarov1998representations}, see also \cite{nazarov2002representations}. We identify that $\widetilde{\mathcal H}_{N}(\lambda)\cong V_{\lambda/\lambda^\shortdownarrow}(k-N)^*$.
\end{remark}

\begin{definition}\label{def:admissible and cuttable}
Let $\lambda=(\lambda_1\ge \cdots\ge \lambda_N)\in \bN^N$ be a Young diagram of length $N$. We say that $\lambda$ is \textbf{admissible} if and only if it satisfies the conditions in Lemma \ref{lem:H_N(lambda) nontrivial}, that is, $\lambda^{\shortdownarrow}\in \bN^N$ and $\forall i\in \{1,\cdots,N-n\}$, $\lambda_i\ge\lambda_{i+n}+k$. When $N\ge n$, we say that $\lambda$ is \textbf{cuttable} if and only $\lambda$ is admissible and $\lambda^{\shortdownarrow}_{N-n+1}=0$. When $\lambda$ is cuttable, we denote its \textit{cut} $\lambda^\natural:=(\lambda_1\ge\cdots\ge\lambda_{N-n})\in \bN^{N-n}$.
\end{definition}

\begin{remark}
If $\lambda$ is cuttable then $\lambda^{\shortdownarrow}$ can be regarded as an element in $\bN^{N-n}$, and in fact $\lambda^{\shortdownarrow}=\lambda^{\natural\shortdownarrow}$.
\end{remark}

\begin{lemma}\label{lem:1 to 1 corr for cutting}
There is a one to one correspondence:
\begin{align}
    \text{cuttable Young diagrams of length }N+n\:\overset{1:1}{\longleftrightarrow}\:\text{admissible Young diagrams of length }N.
\end{align}
The map from the left-hand-side to the right-hand-side is given by $\lambda\mapsto \lambda^{\natural\shortdownarrow}$.
\end{lemma}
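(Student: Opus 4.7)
The plan is to exhibit the inverse map explicitly and then verify the defining conditions on both sides by unpacking definitions. Given a cuttable $\lambda = (\lambda_1 \geq \cdots \geq \lambda_{N+n}) \in \mathbb{N}^{N+n}$, the image $\lambda^{\natural\shortdownarrow}$ is simply $(\lambda_1 - k, \ldots, \lambda_N - k)$. Conversely, to an admissible $\mu = (\mu_1 \geq \cdots \geq \mu_N) \in \mathbb{N}^N$ I will assign the tuple
\begin{equation*}
\Phi(\mu) := (\mu_1 + k,\, \mu_2 + k,\, \ldots,\, \mu_N + k,\, \underbrace{k, k, \ldots, k}_{n\text{ copies}}) \in \mathbb{N}^{N+n}.
\end{equation*}
These two assignments are manifestly mutually inverse on the level of integer tuples, so the whole content is to check that each respects the relevant combinatorial constraints.

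For the forward direction, suppose $\lambda$ is cuttable of length $N+n$. The cuttability condition $\lambda^{\shortdownarrow}_{N+1} = 0$ says $\lambda_{N+1} = k$, and combined with monotonicity and $\lambda^\shortdownarrow \in \mathbb{N}^{N+n}$ it forces $\lambda_{N+1} = \cdots = \lambda_{N+n} = k$; in particular $\lambda_N \geq k$, so $\lambda^{\natural\shortdownarrow} \in \mathbb{N}^N$. The gap condition $(\lambda^{\natural\shortdownarrow})_i \geq (\lambda^{\natural\shortdownarrow})_{i+n} + k$ for $1 \leq i \leq N-n$ is just $\lambda_i \geq \lambda_{i+n} + k$, which is part of the admissibility of $\lambda$.

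For the backward direction, suppose $\mu$ is admissible of length $N$. Weak monotonicity of $\Phi(\mu)$ at the join uses $\mu_N \geq k$ (forced by $\mu^\shortdownarrow \in \mathbb{N}^N$), giving $\mu_N + k \geq 2k \geq k$. For the gap condition $\Phi(\mu)_i \geq \Phi(\mu)_{i+n} + k$ for $1 \leq i \leq N$, I split into $i \leq N-n$ (where the inequality reduces to $\mu_i \geq \mu_{i+n} + k$, which is the admissibility of $\mu$) and $N-n < i \leq N$ (where $\Phi(\mu)_{i+n} = k$, so the inequality becomes $\mu_i \geq k$, which again follows from $\mu^\shortdownarrow \in \mathbb{N}^N$ and monotonicity of $\mu$). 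Finally $\Phi(\mu)^\shortdownarrow_{N+1} = k - k = 0$, so $\Phi(\mu)$ is cuttable.

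There is essentially no obstacle here; the only subtlety is remembering that the ``admissibility'' gap condition on $\lambda$ (with indices running up through $N$) is exactly strong enough to produce both the gap condition on $\lambda^{\natural\shortdownarrow}$ (for indices up to $N-n$) and the boundary inequalities that link the $\mu$-block to the constant $k$-block in $\Phi(\mu)$. The argument is otherwise a direct bookkeeping check.
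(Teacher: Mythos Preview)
Your argument is correct and is exactly the natural explicit construction; the paper in fact declares the lemma straightforward and omits the proof entirely. One small omission: in the forward direction, admissibility of $\lambda^{\natural\shortdownarrow}$ requires $(\lambda^{\natural\shortdownarrow})^{\shortdownarrow} \in \mathbb{N}^N$, i.e.\ $\lambda_i \ge 2k$ for all $i \le N$, not merely $\lambda^{\natural\shortdownarrow} \in \mathbb{N}^N$; this follows from the gap condition $\lambda_N \ge \lambda_{N+n} + k = 2k$ together with monotonicity, but you only recorded $\lambda_N \ge k$.
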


The proof of the above lemma is straightforward and we omit it.

Consider the following block decomposition of the $(A,X)$ matrix of size $(N+2n)\times(N+n)$ defined in \eqref{phi,X matrix}: 
\begin{equation}\label{phi,X matrix decomposition}
\begin{pNiceArray}{c:c}[first-row, first-col]
    & N & n \\
n   & A & \eta \\
\hdottedline
N+n & X_1 & X_2
\end{pNiceArray}
\end{equation}
Here $X=(X_1,X_2)$ is the corresponding block decomposition of the matrix $X$. Then we have
\begin{align*}
    \bC[X_1,X_2,A,\eta]^{\GL_{N+n}^{\mathrm{diag}},-k}\subseteq \bC[X_1,A]^{\GL_{N}^{\mathrm{diag}},-k}\otimes \bC[X_2,\eta]^{\GL_{n}^{\mathrm{diag}},-k},
\end{align*}
where $\GL_{N}^{\mathrm{diag}}\times \GL_{n}^{\mathrm{diag}}\hookrightarrow \GL_{N+n}^{\mathrm{diag}}$ as block diagonal matrices:
\begin{equation*}
\begin{pNiceArray}{c:c}
 \GL_{N} & 0 \\
\hdottedline
 0 & \GL_{n}
\end{pNiceArray}
\subset \GL_{N+n}
\end{equation*}
The polynomial ring $\bC[X_1,X_2,A,\eta]$ is equipped with natural action of $\GL_{N+2n}\times \GL_N\times \GL_n$. Here $\left(\substack{A\\ X_1}\right)$ transforms as vector under $\GL_{N+2n}$, as dual vector under $\GL_N$, and trivially under $\GL_n$, and $\left(\substack{\eta\\ X_2}\right)$ transforms as vector under $\GL_{N+2n}$, as dual vector under $\GL_n$, and trivially under $\GL_N$. Using the Schur-Weyl duality, we can decomposes $\bC[X_1,X_2,A,\eta]$ as
\begin{equation}\label{vertical splitting}
\begin{split}
\bC[X_1,X_2,A,\eta]&=\bigoplus_{\substack{\lambda'=(\lambda'_1\ge\cdots\ge\lambda'_N)\in \bN^N\\ \lambda''=(\lambda''_1\ge\cdots\ge\lambda''_n)\in \bN^n}}V^{N+2n}_{\lambda'}\otimes (V^{N}_{\lambda'})^*\otimes V^{N+2n}_{\lambda''}\otimes (V^{n}_{\lambda''})^*\\
&=\bigoplus_{\substack{\lambda=(\lambda_1\ge\cdots\ge\lambda_{N+n})\in \bN^{N+n}\\ \lambda'=(\lambda'_1\ge\cdots\ge\lambda'_N)\in \bN^N\\ \lambda''=(\lambda''_1\ge\cdots\ge\lambda''_n)\in \bN^n}} M^{\lambda}_{\lambda',\lambda''}\otimes V^{N+2n}_{\lambda}\otimes (V^{N}_{\lambda'})^*\otimes (V^{n}_{\lambda''})^*.
\end{split}
\end{equation}
Here $M^{\lambda}_{\lambda',\lambda''}$ is the multiplicity space.
\begin{definition}
We define the subspace
\begin{align}
\mathcal V_{N+n}(\lambda',\lambda''):=\left(V^{N+2n}_{\lambda'}\otimes (V^{N}_{\lambda'})^*\right)^{\GL_{N}^{\mathrm{diag}},-k}\otimes \left(V^{N+2n}_{\lambda''}\otimes (V^{n}_{\lambda''})^*\right)^{\GL_{n}^{\mathrm{diag}},-k}
\end{align}
\end{definition}
We have inclusion
\begin{align}\label{H(la) into V(la', la'')}
    \widetilde{\mathcal H}_{N+n}(\lambda)\subset \bigoplus_{c^{\lambda}_{\lambda'\lambda''}\neq 0} \mathcal V_{N+n}(\lambda',\lambda''),
\end{align}
where $c^{\lambda}_{\lambda'\lambda''}$ is the Littlewood-Richardson coefficient.

\smallskip Now consider the restriction map $\bC[V(N+n,n)]\to \bC[\widetilde{V}(N,n)]$ defined in Section \ref{subsec:transition map_algebraic}.

\begin{lemma}\label{lem:restrict V(la', la'')}
Under the restriction map $\bC[V(N+n,n)]\to \bC[\widetilde{V}(N,n)]$,
\begin{align}
    \text{image of }\mathcal V_{N+n}(\lambda',\lambda'')\subseteq
    \begin{cases}
        \det(\eta)^k\cdot \widetilde{\mathcal{H}}_N(\lambda') &,\text{ if $\lambda''=(k^n)$ and $\lambda'$ is admissible},\\
        0 &, \text{ otherwise}.
    \end{cases}
\end{align}
\end{lemma}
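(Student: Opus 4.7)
The plan is to factor the analysis through the tensor decomposition in \eqref{vertical splitting}. Since $\bC[X_1, X_2, A, \eta] \cong \bC[A, X_1] \otimes \bC[\eta, X_2]$ as $\GL_{N+2n} \times \GL_N \times \GL_n$-modules, the restriction map $\bC[V(N+n,n)] \to \bC[\widetilde V(N,n)]$ factors as $R_1 \otimes R_2$, where $R_1 : \bC[A, X_1] \to \bC[X_{11}, A]$ sets the bottom $n$ rows of $X_1$ (namely $X_{21}$) to zero, and $R_2 : \bC[\eta, X_2] \to \bC[\eta]$ sets $X_2$ to zero. Consequently the image of $\mathcal V_{N+n}(\lambda',\lambda'') = F_1 \otimes F_2$ equals the product $R_1(F_1) \cdot R_2(F_2)$, and I will treat the two factors independently.

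For the second factor $F_2 = (V^{N+2n}_{\lambda''} \otimes (V^n_{\lambda''})^*)^{\GL_n^{\mathrm{diag}}, -k}$, I first observe that $\GL_n^{\mathrm{diag}}$ is the diagonal of the bottom $n$-row $\GL_n \subset \GL_{N+2n}$ and the column $\GL_n$, so on $\eta$ (whose rows lie in the top $n$) it reduces to the column action $\eta \mapsto \eta h^{-1}$. Standard Schur--Weyl analysis of $\bC[\eta]$ shows that the column-semi-invariants of character $\det^{-k}$ form the one-dimensional space $\det(\eta)^k \cdot \bC$, so $R_2(F_2) \subseteq \det(\eta)^k \cdot \bC$. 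Using column $\GL_n$-equivariance of $R_2$, the image of $V^{N+2n}_{\lambda''} \otimes (V^n_{\lambda''})^*$ lands in the $(V^n_{\lambda''})^*$-isotypic component $V^n_{\lambda''} \otimes (V^n_{\lambda''})^*$ of $\bC[\eta] = \bigoplus_\mu V^n_\mu \otimes (V^n_\mu)^*$. Since $\GL_n^{\mathrm{diag}}$ acts trivially on the top-row factor $V^n_{\lambda''}$ and as $(V^n_{\lambda''})^*$ on the column factor, a semi-invariant of character $-k$ exists in this tensor product exactly when $(V^n_{\lambda''})^* \cong \det^{-k}$, forcing $\lambda'' = (k^n)$. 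For $\lambda'' = (k^n)$ a direct weight calculation shows that $\det(\eta)^k$ is the highest weight vector of $V^{N+2n}_{(k^n)} \otimes (V^n_{(k^n)})^*$ (weight $(k,\ldots,k,0,\ldots,0)$ on the $\GL_{N+2n}$-rows and $(-k,\ldots,-k)$ on the $\GL_n$-columns), so $R_2(F_2) = \det(\eta)^k \cdot \bC$.

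For the first factor $F_1 = (V^{N+2n}_{\lambda'} \otimes (V^N_{\lambda'})^*)^{\GL_N^{\mathrm{diag}}, -k}$, the map $R_1$ corresponds to the projection $V^{N+2n} \twoheadrightarrow V^{N+n}$ killing the bottom $V^n$ block (the rows carrying $X_{21}$). By branching from $\GL_{N+2n}$ to $\GL_{N+n}^{\text{top}} \times \GL_n^{\text{bot}}$ and extracting the $\GL_n^{\text{bot}}$-invariant piece, the Littlewood--Richardson rule gives $V^{N+2n}_{\lambda'}|_{\GL_{N+n} \times \GL_n} \supset V^{N+n}_{\lambda'} \otimes V^n_0$ with multiplicity one, and this is precisely the surviving summand after $R_1$. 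Therefore $R_1(V^{N+2n}_{\lambda'} \otimes (V^N_{\lambda'})^*) \subseteq V^{N+n}_{\lambda'} \otimes (V^N_{\lambda'})^* \subset \bC[X_{11}, A]$, and after taking $\GL_N^{\mathrm{diag}}, -k$ semi-invariants on both sides I land in $\widetilde{\mathcal H}_N(\lambda')$ by Definition \ref{def:H(la)}. By Lemma \ref{lem:H_N(lambda) nontrivial}, $\widetilde{\mathcal H}_N(\lambda') = 0$ unless $\lambda'$ is admissible.

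Combining the two steps, the image of $\mathcal V_{N+n}(\lambda',\lambda'')$ is contained in $R_1(F_1) \cdot R_2(F_2) \subseteq \det(\eta)^k \cdot \widetilde{\mathcal H}_N(\lambda')$ when $\lambda'' = (k^n)$ and $\lambda'$ is admissible, and vanishes otherwise, proving the lemma. The main subtlety is the correct identification of $\GL_n^{\mathrm{diag}}$: it is built from the \emph{bottom} $n$ rows of $\GL_{N+2n}$ rather than the top $n$ rows (even though the $\eta$ variables sit in the top rows), and this is precisely why $\det(\eta)^k$ emerges as the unique column-semi-invariant and pins down $\lambda'' = (k^n)$. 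Once the group identifications are fixed, the remaining arguments are routine Schur--Weyl and branching computations.
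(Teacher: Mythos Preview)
Your proof is correct and follows essentially the same route as the paper's proof. Both arguments factor the restriction map through the tensor decomposition $\bC[A,X_1]\otimes\bC[\eta,X_2]$, observe that setting $X_{21}$ and $X_2$ to zero induces the Schur--functor projections $V^{N+2n}_{\lambda'}\to V^{N+n}_{\lambda'}$ and $V^{N+2n}_{\lambda''}\to V^{n}_{\lambda''}$, and then read off the semi-invariants; you simply supply more detail (the explicit column-$\GL_n$ isotypic argument and the branching justification) where the paper's proof is terse.
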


\begin{proof}
The restriction map sets the last $n$ rows of $X_1$ and the whole $X_2$ to zero, thus the image of $\mathcal V_{N+n}(\lambda',\lambda'')$ is contained in 
\begin{align}\label{image of V(la', la'')}
    \left(V^{N+n}_{\lambda'}\otimes (V^{N}_{\lambda'})^*\right)^{\GL_{N}^{\mathrm{diag}},-k}\otimes V^{n}_{\lambda''}\otimes\left( (V^{n}_{\lambda''})^*\right)^{\GL_{n},-k}.
\end{align}
The $\GL_{N}^{\mathrm{diag}}$ semi-invariant is $\widetilde{\mathcal{H}}_{N}(\lambda')$, and 
\begin{align*}
    \left( (V^{n}_{\lambda''})^*\right)^{\GL_{n},-k}\cong \begin{cases}
        \bC &, \text{ if $\lambda''=(k^n)$},\\
        0 &, \text{ otherwise}.
    \end{cases}
\end{align*}
Thus the vector space \eqref{image of V(la', la'')} is nontrivial if and only if $\lambda'$ is admissible (by Lemma \ref{lem:H_N(lambda) nontrivial}) and $\lambda''=(k^n)$. Moreover, when $\lambda''=(k^n)$, $V^{n}_{\lambda''}\otimes\left( (V^{n}_{\lambda''})^*\right)^{\GL_{n},-k}$ is one-dimensional and it is generated by $\det(\eta)^k$. This finishes the proof.
\end{proof}

\begin{proposition}\label{prop:p_N and decomposition}
Let $\lambda=(\lambda_1\ge \cdots\ge \lambda_{N+n})\in \bN^{N+n}$. Then we have
\begin{align}\label{p_N(H(la))}
p_N(\widetilde{\mathcal H}_{N+n}(\lambda))\subseteq\begin{cases}
    \widetilde{\mathcal H}_{N}(\lambda^{\natural\shortdownarrow}) &,\text{ if $\lambda$ is cuttable},\\
    0 &, \text{ otherwise}.
\end{cases}
\end{align}
\end{proposition}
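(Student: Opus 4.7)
The plan is to combine the embedding \eqref{H(la) into V(la', la'')} with Lemma~\ref{lem:restrict V(la', la'')}, refine using Lemma~\ref{lem:alg construction}, and then identify the surviving summand via a Littlewood-Richardson analysis for the rectangular shape $(k^n)$. I expect the LR analysis to be the main obstacle.

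First, \eqref{H(la) into V(la', la'')} embeds $\widetilde{\mathcal H}_{N+n}(\lambda)$ into $\bigoplus_{c^\lambda_{\lambda',\lambda''}\neq 0}\mathcal V_{N+n}(\lambda',\lambda'')$. Lemma~\ref{lem:restrict V(la', la'')} then tells me that, under the restriction map $\bC[V(N+n,n)]\to\bC[\widetilde V(N,n)]$, only the summands with $\lambda''=(k^n)$ and $\lambda'$ admissible of length $\le N$ survive, each landing inside $\det(\eta)^k\cdot\widetilde{\mathcal H}_N(\lambda')$. Lemma~\ref{lem:alg construction} simultaneously forces this image to lie in $\det(X)^k\det(\eta)^k\cdot\bC[X,A]^{\GL_N,-k}$.

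Second, I would analyse the intersection $\widetilde{\mathcal H}_N(\lambda')\cap\det(X)^k\bC[X,A]^{\GL_N,-k}$ using the Schur-Weyl decomposition \eqref{Schur-Weyl}. The polynomial $\det(X)^k$ has $\GL_{n+N}$-row weight $(0^n,k^N)$ and $\GL_N$-column weight $(k^N)$, hence lies in the component $V^{n+N}_{(k^N)}\otimes(V^N_{(k^N)})^*$. A short LR argument (any LR tableau of shape $\nu/\mu$ with content $(k^N)$ and $\ell(\nu)\le N$ is forced, by iteratively applying ``label $j$ only appears in rows $\ge j$'' starting from row $N$, to fill row $i$ entirely with label $i$) shows that for $\ell(\mu)\le N$ the only $\nu$ with $c^\nu_{(k^N),\mu}\neq 0$ and $\ell(\nu)\le N$ is $\nu=\mu^{\shortuparrow}$, with coefficient $1$. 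Consequently, multiplication by $\det(X)^k$ is an injection $\widetilde{\mathcal H}_N(\mu)\hookrightarrow \widetilde{\mathcal H}_N(\mu^{\shortuparrow})$, and the intersection above equals $\det(X)^k\cdot\widetilde{\mathcal H}_N(\lambda'^{\shortdownarrow})$ when $\lambda'^{\shortdownarrow}\in\bN^N$, and $0$ otherwise. Dividing out $\det(X)^k\det(\eta)^k$ (which is exactly what $p_N$ does) yields
\[
p_N(\widetilde{\mathcal H}_{N+n}(\lambda))\;\subseteq\;\bigoplus_{\substack{\lambda'\text{ admissible},\;\ell(\lambda')\le N\\ c^\lambda_{\lambda',(k^n)}\neq 0}}\widetilde{\mathcal H}_N(\lambda'^{\shortdownarrow}).
\]

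The final step, which I expect to be the hardest, is pinning down the index set. The LR rule specialised to the rectangle $(k^n)$ gives $c^\lambda_{\lambda',(k^n)}\in\{0,1\}$, equal to $1$ precisely when $\lambda'\subseteq\lambda$, $\lambda_i-\lambda'_i\le k$ and $\lambda_{i+n}\le\lambda'_i$ hold for all $i\ge 1$, together with $|\lambda'|=|\lambda|-nk$. Combining the row bound with $\lambda'_i=0$ for $i>N$ forces $\lambda_{N+j}\le k$ for $1\le j\le n$. When $\widetilde{\mathcal H}_{N+n}(\lambda)\neq 0$, Lemma~\ref{lem:H_N(lambda) nontrivial} guarantees admissibility of $\lambda$, hence $\lambda_{N+n}\ge k$, which together with the previous bound forces $\lambda_{N+1}=\cdots=\lambda_{N+n}=k$, i.e.\ $\lambda$ is cuttable. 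Under cuttability, the size condition $|\lambda'|=|\lambda|-nk=\sum_{i=1}^N\lambda_i$ combined with $\lambda'_i\le\lambda_i$ for $i\le N$ forces $\lambda'_i=\lambda_i$, so $\lambda'=\lambda^{\natural}$ is the unique surviving index and $\lambda'^{\shortdownarrow}=\lambda^{\natural\shortdownarrow}$. If $\lambda$ is not cuttable, no such $\lambda'$ exists and the image is zero, completing the argument. The remaining technical obstacle is verifying the rectangular LR condition cleanly; I would either cite it from the literature on Schur functions or derive it directly by analysing the structure of LR tableaux of content $(k^n)$.
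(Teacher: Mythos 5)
Your proof is correct and runs along the same route as the paper's: embed $\widetilde{\mathcal H}_{N+n}(\lambda)$ into $\bigoplus\mathcal V_{N+n}(\lambda',\lambda'')$ via \eqref{H(la) into V(la', la'')}, apply Lemma~\ref{lem:restrict V(la', la'')}, then identify which $\lambda'$ can occur. Two local differences are worth noting. First, you spell out the step in which $\widetilde{\mathcal H}_N(\lambda')$ turns into $\widetilde{\mathcal H}_N(\lambda'^{\shortdownarrow})$ after dividing by $\det(X)^k$, by locating $\det(X)^k$ in the Schur--Weyl block $V^{N+n}_{(k^N)}\otimes(V^N_{(k^N)})^*$ and noting that, for $\ell(\nu)\le N$, the only nonvanishing $c^\nu_{(k^N),\mu}$ occurs at $\nu=\mu^{\shortuparrow}$. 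The paper's proof simply asserts this replacement in one clause, so your version fills in a genuine (if small) gap; a slicker way to justify it is that $s_\mu(x_1,\dots,x_N)\prod_i x_i^k=s_{\mu^{\shortuparrow}}(x_1,\dots,x_N)$ and Schur polynomials with at most $N$ parts are linearly independent in $N$ variables — this avoids the case analysis of LR tableaux for skew shapes, where your phrase ``label $j$ only appears in rows $\ge j$'' needs care. Second, to pin down the index set you invoke the full rectangular Littlewood--Richardson rule (the $\{0,1\}$ multiplicity and interlacing conditions for $c^\lambda_{\lambda',(k^n)}$). The paper does this more economically: from $c^\lambda_{\lambda',(k^n)}\ne 0$ it uses only the two elementary facts that $\lambda'\subseteq\lambda$ and $|\lambda|=|\lambda'|+nk$, which together with $\ell(\lambda')\le N$ and the admissibility bound $\lambda_i\ge k$ for $N<i\le N+n$ force the unique survivor $\lambda'=\lambda^{\natural}$ with $\lambda$ cuttable. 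Both arguments are valid; yours leans on a stronger combinatorial lemma that, as you note, would need a citation or proof, whereas the paper's size/containment argument is self-contained.
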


\begin{proof}
Let us assume that $\lambda$ is admissible, otherwise $\widetilde{\mathcal H}_{N+n}(\lambda)=0$ by Lemma \ref{lem:H_N(lambda) nontrivial} and \eqref{p_N(H(la))} trivially holds. In view of \eqref{H(la) into V(la', la'')} and the definition of $p_N$ (Definition \ref{def:p_N and sigma_N}), Lemma \ref{lem:restrict V(la', la'')} implies that
\begin{align*}
    p_N(\widetilde{\mathcal H}_{N+n}(\lambda))\subseteq \sum_{c^{\lambda}_{\lambda',(k^n)}\neq 0}\widetilde{\mathcal H}_{N}(\lambda'^{\shortdownarrow}),
\end{align*}
where the $\widetilde{\mathcal H}_{N}(\lambda')$ in Lemma \ref{lem:restrict V(la', la'')} becomes $\widetilde{\mathcal H}_{N}(\lambda'^{\shortdownarrow})$ in the above because we need to multiply $\det(X)^{-k}$ to get $p_N$. We claim there exists $\lambda'$ that makes $c^{\lambda}_{\lambda',(k^n)}$ nonzero if and only if $\lambda$ is cuttable and $\lambda'=\lambda^\natural$. If $c^{\lambda}_{\lambda',(k^n)}\neq 0$, then $\lambda'\subseteq \lambda$, in fact $\lambda'\subseteq \lambda^\natural$ because $\lambda'$ has length $N$. On the other hand, $\lambda$ being admissible implies that $\lambda_i\ge k$ for all $N<i\le N+n$. Therefore $|\lambda|-|\lambda'|-|(k^n)|\ge 0$, and the equality holds if and only if $\lambda_i= k$ for all $N<i\le N+n$ and $\lambda'=\lambda^\natural$. So we see that $c^{\lambda}_{\lambda',(k^n)}\neq 0$ implies that $\lambda$ is cuttable and $\lambda'=\lambda^\natural$. Conversely, for a cuttable $\lambda$, $c^{\lambda}_{\lambda^\natural,(k^n)}=1$. This proves our claim, and the proposition follows from the claim.
\end{proof}

\subsubsection{Branching}
For $m\in \{1,\cdots,n\}$, let us consider the block-diagonal subgroup $\GL_m\times \GL_{n+N-m}\subset \GL_{n+N}$:
\begin{equation}\label{branching subgroup}
\begin{pNiceArray}{c:c}
 \GL_m & 0 \\
\hdottedline
 0 & \GL_{n+N-m}
\end{pNiceArray}
\end{equation}
We can replace the $\GL_{n+N}$ by this subgroup in the decomposition \eqref{Schur-Weyl}, and get
\begin{align}\label{Schur-Weyl branching}
    \bC[X,A]=\bigoplus_{\substack{\lambda=(\lambda_1\ge\cdots\ge\lambda_N)\in \bN^N\\ \nu=(\nu_1\ge\cdots\ge\nu_m)\in \bN^m\\ \mu=(\mu_1\ge\cdots\ge\mu_N)\in \bN^N}}M^{\lambda}_{\nu\mu}\otimes V^{m}_{\nu}\otimes V^{n+N-m}_{\mu}\otimes (V^{N}_{\lambda})^*,
\end{align}
where $M^{\lambda}_{\nu\mu}$ is the multiplicity space. Taking 
$\GL_N^{\mathrm{diag}}$ semi-invariant, we get a decomposition:
\begin{align}\label{decompose H_N(n,k) branching}
    \widetilde{\mathcal H}_{N}(n,k)=\bigoplus_{\substack{\lambda=(\lambda_1\ge\cdots\ge\lambda_N)\in \bN^N\\ \nu=(\nu_1\ge\cdots\ge\nu_m)\in \bN^m\\ \mu=(\mu_1\ge\cdots\ge\mu_N)\in \bN^N}}M^{\lambda}_{\nu\mu}\otimes V^{m}_{\nu}\otimes \left(V^{n+N-m}_{\mu}\otimes (V^{N}_{\lambda})^*\right)^{\GL_N^{\mathrm{diag}},-k}.
\end{align}
\begin{definition}\label{def:branching}
We define the subspace $\widetilde{\mathcal H}_{N}\left(\substack{\lambda\\ \mu}\right):=\bigoplus_\nu M^{\lambda}_{\nu\mu}\otimes V^{m}_{\nu}\otimes \left(V^{n+N-m}_{\mu}\otimes (V^{N}_{\lambda})^*\right)^{\GL_N^{\mathrm{diag}},-k}$ in \eqref{decompose H_N(n,k) branching}.
\end{definition}

Compare Definition \ref{def:branching} with Definition \ref{def:H(la)}, we find:
\begin{align*}
    \widetilde{\mathcal H}_{N}(\lambda)=\bigoplus_{\mu}\widetilde{\mathcal H}_{N}\left(\substack{\lambda\\ \mu}\right).
\end{align*}
Let us describe the pair of Young diagrams $(\lambda,\mu)$ such that $\widetilde{\mathcal H}_{N}\left(\substack{\lambda\\ \mu}\right)$ is nonzero.

\begin{definition}\label{def:branching weights}
Let us fix $m,N\in \bZ_{\ge 1}$, and for a pair of integer tuples $\lambda,\mu\in \bZ^N$, we write $\lambda\overset{m}{\searrow}\mu$ if and only if for all $i\in \{1,\cdots,N\}$, $\lambda_i\ge \mu_i\ge \lambda_{i+m}$ (we set $\lambda_i=0$ if $i\notin \{1,\cdots,N\}$).
\end{definition}

\begin{remark}
Let $\lambda=(\lambda_1\ge \cdots\ge \lambda_{N})$ and $\mu=(\mu_1\ge\cdots\ge \mu_{N})$ be Young diagrams of length $N$, then $\lambda\overset{m}{\searrow}\mu$ if and only if $\mu\subseteq\lambda$ and the skew Young diagram $\lambda/\mu$ has no column with length greater than $m$, equivalently the $m$-variable skew Schur function $s_{\lambda/\mu}(x_1,\cdots,x_m)$ is nonzero.
\end{remark}

\begin{remark}\label{rmk:additive}
If $\lambda\overset{p}{\searrow}\mu \overset{q}{\searrow}\nu$, then $\lambda\overset{p+q}{\searrow}\nu$.
\end{remark}

\begin{lemma}\label{lem:H_N(lambda mu) nontrivial}
$\widetilde{\mathcal H}_{N}\left(\substack{\lambda\\ \mu}\right)\neq 0$ if and only if $\lambda$ is admissible and $\lambda\overset{m}{\searrow}\mu\overset{n-m}{\searrow}\lambda^\shortdownarrow$.
\end{lemma}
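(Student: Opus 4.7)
The plan is to repeat the analysis of Lemma \ref{lem:H_N(lambda) nontrivial}, but with an intermediate Littlewood--Richardson branching interposed. The statement reduces to a direct computation once one tracks how $\GL_N^{\mathrm{diag}}$ sits relative to the block-diagonal subgroup $\GL_m\times\GL_{n+N-m}\subset\GL_{n+N}$ of \eqref{branching subgroup}.

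First, I would identify the embedding. The copy $\GL_N^{\mathrm{diag}}\subset\GL_{N+n}\times\GL_N$ places its $\GL_{N+n}$-factor in the lower-right $N\times N$ block, so under the further decomposition of the upper factor as $\GL_m\times\GL_{n+N-m}$, $\GL_N^{\mathrm{diag}}$ acts trivially through the $\GL_m$-piece and sits in the lower-right $N\times N$ block of $\GL_{n+N-m}$. This is the analogue of the observation that made Lemma \ref{lem:H_N(lambda) nontrivial} straightforward. Consequently, inside \eqref{decompose H_N(n,k) branching}, the factor $V^m_\nu$ is $\GL_N^{\mathrm{diag}}$-trivial and contributes only its dimension, while all the representation-theoretic content is concentrated in
\[
\Bigl(V^{n+N-m}_\mu\otimes (V^N_\lambda)^*\Bigr)^{\GL_N^{\mathrm{diag}},-k}.
\]

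Next, apply Littlewood--Richardson twice. The multiplicity $M^\lambda_{\nu\mu}=c^\lambda_{\nu\mu}$ is the first one, and $M^\lambda_{\nu\mu}\neq 0$ for some $\nu$ with at most $m$ parts is the standard reformulation of $\lambda\overset{m}{\searrow}\mu$. For the semi-invariant factor, branch $V^{n+N-m}_\mu$ further under $\GL_{n+N-m}\supset\GL_{n-m}\times\GL_N$, obtaining
\[
V^{n+N-m}_\mu\big|_{\GL_{n-m}\times\GL_N}=\bigoplus_{\kappa,\eta} c^\mu_{\kappa\eta}\,V^{n-m}_\kappa\otimes V^N_\eta.
\]
Exactly as in the proof of Lemma \ref{lem:H_N(lambda) nontrivial}, taking $\det^{-k}$-isotypic component forces $\eta=\lambda^\shortdownarrow$ by Schur's lemma, so the semi-invariant is nonzero iff $c^\mu_{\kappa,\lambda^\shortdownarrow}\neq 0$ for some $\kappa$ with at most $n-m$ parts, which is exactly $\mu\overset{n-m}{\searrow}\lambda^\shortdownarrow$.

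Combining, $\widetilde{\mathcal H}_N\!\left(\substack{\lambda\\\mu}\right)\neq 0$ iff both $\lambda\overset{m}{\searrow}\mu$ and $\mu\overset{n-m}{\searrow}\lambda^\shortdownarrow$ hold. By Remark \ref{rmk:additive}, this conjunction already forces $\lambda\overset{n}{\searrow}\lambda^\shortdownarrow$, i.e.\ admissibility of $\lambda$ (so the admissibility clause is stated for emphasis but is automatic). I do not expect any serious obstacle; the only subtlety is to carefully align the sign/dual conventions for $(V^N_\lambda)^*$ under $\GL_N^{\mathrm{diag}}$, but this is already handled by the very same Schur-lemma argument that underlies the previously established Lemma \ref{lem:H_N(lambda) nontrivial}, and transfers verbatim to the present setting.
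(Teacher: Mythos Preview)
Your proposal is correct and follows essentially the same route as the paper: both factor the nonvanishing of $\widetilde{\mathcal H}_{N}\!\left(\substack{\lambda\\\mu}\right)$ into the nonvanishing of $\bigoplus_\nu M^\lambda_{\nu\mu}\otimes V^m_\nu$ (equivalent to $\lambda\overset{m}{\searrow}\mu$ via $s_{\lambda/\mu}\neq 0$) and of $\bigl(V^{n+N-m}_\mu\otimes(V^N_\lambda)^*\bigr)^{\GL_N^{\mathrm{diag}},-k}$ (equivalent to $\mu\overset{n-m}{\searrow}\lambda^\shortdownarrow$ by the same skew-Schur argument as in Lemma~\ref{lem:H_N(lambda) nontrivial}). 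The one organizational difference is that the paper disposes of the non-admissible case at the outset via $\widetilde{\mathcal H}_N(\lambda)=0$, whereas you derive admissibility a posteriori from Remark~\ref{rmk:additive}; your observation that admissibility is actually a consequence of the two branching conditions (together with $\mu\in\bN^N$) is a nice tightening.
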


\begin{proof}
If $\lambda$ is not admissible, then $\widetilde{\mathcal H}_N(\lambda)=0$ by Lemma \ref{lem:H_N(lambda) nontrivial}, whence $\widetilde{\mathcal H}_{N}\left(\substack{\lambda\\ \mu}\right)=0$, so we may assume that $\lambda$ is admissible in the following. We observe that $\widetilde{\mathcal H}_{N}\left(\substack{\lambda\\ \mu}\right)\neq 0$ if and only if $\left(V^{n+N-m}_{\mu}\otimes (V^{N}_{\lambda})^*\right)^{\GL_N^{\mathrm{diag}},-k}\neq 0$ and $\bigoplus_\nu M^{\lambda}_{\nu\mu}\otimes V^{m}_{\nu}\neq 0$. Using the same argument as the proof of Lemma \ref{lem:H_N(lambda) nontrivial}, the former holds if and only if $\lambda^\shortdownarrow\subseteq\mu$ and $s_{\mu/\lambda^\shortdownarrow}(x_1,\cdots,x_{n-m})\neq 0$, which is equivalent to $\mu\overset{n-m}{\searrow}\lambda^\shortdownarrow$. Using the identity 
\begin{align*}
    s_{\lambda/\mu}(x_1,\cdots,x_m)=\sum_{\nu=(\nu_1\ge\cdots\ge\nu_m)\in \bN^m} c^{\lambda}_{\nu\mu}s_\nu(x_1,\cdots,x_m),
\end{align*}
we see that $\bigoplus_\nu M^{\lambda}_{\nu\mu}\otimes V^{m}_{\nu}\neq 0$ if and only if $s_{\lambda/\mu}(x_1,\cdots,x_m)\neq 0$, which is equivalent to $\lambda\overset{m}{\searrow}\mu$. This finishes the proof.
\end{proof}

We collect some combinatorial facts in Lemma \ref{lem:tail of mu} and \ref{lem:1 to 1 corr for cutting and branching}.

\begin{lemma}\label{lem:tail of mu}
Let $\lambda=(\lambda_1\ge \cdots\ge \lambda_{N})$ and $\mu=(\mu_1\ge\cdots\ge \mu_{N})$ be Young diagrams of length $N$, and assume that $\lambda$ is cuttable and $\lambda\overset{m}{\searrow}\mu\overset{n-m}{\searrow}\lambda^\shortdownarrow$, then
\begin{equation}
\begin{split}
    &\mu_i=k, \;\forall N-n<i\le N-m,\text{ and}\\
    &\mu_i=0,\; \forall N-m<i.
\end{split}
\end{equation}
\end{lemma}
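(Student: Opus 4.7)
The plan is to unwind the definitions and exploit the very rigid tail structure forced by cuttability of $\lambda$. Since $\lambda$ is cuttable, $\lambda^{\shortdownarrow}_{N-n+1}=0$; combined with admissibility (which forces $\lambda_i\ge k$ for all $i\le N$) and the fact that $\lambda$ is weakly decreasing, this pins down $\lambda_i=k$ for every $i\in\{N-n+1,\ldots,N\}$, and consequently $\lambda^{\shortdownarrow}_i=0$ on the same range of indices. This is the key structural input used throughout.

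Next I would handle the vanishing claim $\mu_j=0$ for $j>N-m$. The relation $\mu\overset{n-m}{\searrow}\lambda^{\shortdownarrow}$ gives the chain of inequalities $\mu_i\ge \lambda^{\shortdownarrow}_i\ge \mu_{i+n-m}$. Choosing $i$ in the range $\{N-n+1,\ldots,N\}$, the middle term is zero by the previous paragraph, so $\mu_{i+n-m}=0$. As $i$ varies over that range, $j:=i+n-m$ sweeps out exactly $\{N-m+1,\ldots,N+n-m\}$; combined with the convention $\mu_j=0$ for $j>N$, this yields $\mu_j=0$ for all $j>N-m$.

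For the equality claim $\mu_i=k$ on $N-n<i\le N-m$, I would sandwich $\mu_i$ between $k$ and $k$ using only the relation $\lambda\overset{m}{\searrow}\mu$, which gives $\lambda_{i+m}\le \mu_i\le \lambda_i$. For $i$ in this range, both $i$ and $i+m$ lie in $\{N-n+1,\ldots,N\}$ (using $m\le n$), so by the first paragraph both $\lambda_i$ and $\lambda_{i+m}$ equal $k$, forcing $\mu_i=k$.

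There is no real obstacle here; the whole argument is a careful bookkeeping of indices. The only thing to watch is the boundary case $i+m>N$ when verifying the upper sandwich, but this is harmless since we only need the claim in the range where $i+m\le N$, and one must also remember the convention $\lambda_i=0$ for $i\notin\{1,\ldots,N\}$ built into Definition \ref{def:branching weights}. In short, the lemma is a direct combinatorial consequence of Definitions \ref{def:admissible and cuttable} and \ref{def:branching weights} once cuttability is translated into the explicit tail $\lambda_i=k$ on the last $n$ positions.
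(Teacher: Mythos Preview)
Your proposal is correct and follows essentially the same route as the paper's proof: both arguments first extract from cuttability that $\lambda_i=k$ on the last $n$ positions, then use the sandwich $\lambda_i\ge\mu_i\ge\lambda_{i+m}$ to pin $\mu_i=k$ on $N-n<i\le N-m$, and the inequality $\lambda^{\shortdownarrow}_j\ge\mu_{j+n-m}$ (equivalently your reindexing) to force $\mu_j=0$ for $j>N-m$. Your explicit verification that $\lambda_i=k$ on the tail is a bit more detailed than the paper's, but the logic is identical.
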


\begin{proof}
Since $\lambda$ is cuttable and $\lambda\overset{m}{\searrow}\mu$, then for all $N-n<i\le N-m$, $\lambda_i=k\ge \mu_i\ge k=\lambda_{i+m}$, this forces $\mu_i=k$. Since $\lambda$ is cuttable and $\mu\overset{n-m}{\searrow}\lambda^\shortdownarrow$, then  for all $i> N-m$, $\lambda^{\shortdownarrow}_{i-n+m}=0\ge \mu_i$, this forces $\mu_i=0$. 
\end{proof}

\begin{lemma}\label{lem:1 to 1 corr for cutting and branching}
Let us fix a cuttable $\lambda\in \bN^{N+n}$, then there is a one to one correspondence:
\begin{align}
    \{\mu\in \bN^{N+n}\:|\: \lambda\overset{m}{\searrow}\mu\overset{n-m}{\searrow}\lambda^\shortdownarrow\}\:\overset{1:1}{\longleftrightarrow}\:\{\nu\in \bN^{N}\:|\: \lambda^{\natural\shortdownarrow}\overset{m}{\searrow}\nu\overset{n-m}{\searrow}\lambda^{\natural\shortdownarrow\shortdownarrow}\}.
\end{align}
The map from the left-hand-side to the right-hand-side is given by $\mu\mapsto \mu^{\natural\shortdownarrow}$.
\end{lemma}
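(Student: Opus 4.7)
The plan is to show that the map $\mu \mapsto \mu^{\natural\shortdownarrow}$ is a well-defined bijection by exhibiting an explicit inverse, relying heavily on Lemma \ref{lem:tail of mu} to pin down the ``tail'' of $\mu$. Since $\lambda$ is cuttable of length $N+n$, admissibility plus $\lambda^{\shortdownarrow}_{N+1}=0$ together with monotonicity force $\lambda_i=k$ for all $i\in\{N+1,\ldots,N+n\}$, and for $i\le N$ admissibility gives the sharper bound $\lambda_i\ge\lambda_{i+n}+k\ge 2k$. Now given $\mu$ on the left-hand side, Lemma \ref{lem:tail of mu} (applied with ``$N$'' replaced by $N+n$) forces $\mu_i=k$ for $N<i\le N+n-m$ and $\mu_i=0$ for $i>N+n-m$, so $\mu$ is completely determined by its first $N$ entries $\mu^\natural=(\mu_1,\ldots,\mu_N)$, and $\mu_N\ge\mu_{N+1}=k$ ensures $\mu^{\natural\shortdownarrow}\in\bN^N$.

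First I would verify well-definedness of the forward map. The relations $\lambda^{\natural\shortdownarrow}\overset{m}{\searrow}\mu^{\natural\shortdownarrow}\overset{n-m}{\searrow}\lambda^{\natural\shortdownarrow\shortdownarrow}$ translate, after clearing the common shift by $k$, into inequalities among $\lambda_i$, $\mu_i$, and $\lambda_i-k$ for indices $i\le N$. Those that stay within range $\{1,\ldots,N\}$ follow directly from $\lambda\overset{m}{\searrow}\mu\overset{n-m}{\searrow}\lambda^{\shortdownarrow}$, while those that cross the index $N$ must be handled using the pinned values $\lambda_{N+j}=k$ for $1\le j\le n$ and the bound $\lambda_i\ge 2k$ for $i\le N$; the latter is the critical input that makes $\lambda^{\natural\shortdownarrow\shortdownarrow}$ behave correctly as the ``target'' of $\overset{n-m}{\searrow}$.

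For the inverse, given $\nu$ on the right-hand side I would set $\mu_i=\nu_i+k$ for $i\le N$, $\mu_i=k$ for $N<i\le N+n-m$, and $\mu_i=0$ for $i>N+n-m$. Monotonicity of $\mu$ is immediate from $\nu\in\bN^N$ and $\nu_N\ge 0$. To check $\lambda\overset{m}{\searrow}\mu\overset{n-m}{\searrow}\lambda^{\shortdownarrow}$ one splits into the three ranges determined by the tail, and each of the resulting inequalities reduces either to the defining inequalities for $\nu$ or to one of the facts $\lambda_i=k$ on the tail, $\lambda_i\ge 2k$ for $i\le N$. Finally, $\mu^{\natural\shortdownarrow}=\nu$ and the tail of any LHS element being forced by Lemma \ref{lem:tail of mu} together make the two maps mutually inverse.

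The main obstacle is purely bookkeeping at the boundary index $i=N$: showing that $\lambda^{\shortdownarrow}_i\ge\mu_{i+n-m}$ when $i\le N$ but $i+n-m>N$, which is precisely the point where one needs the quantitative strengthening $\lambda_i\ge 2k$ derived from admissibility combined with $\lambda_{i+n}\ge k$. Once this boundary case is handled cleanly, all other inequalities are routine index-chasing, and no further input beyond Lemma \ref{lem:tail of mu} and the standing admissibility/cuttability hypotheses is required.
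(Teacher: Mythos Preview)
Your proposal is correct and follows essentially the same approach as the paper: construct the explicit inverse $\nu\mapsto \nu^\shortuparrow$ extended by $(k^{n-m})$ and then zeros, and invoke Lemma~\ref{lem:tail of mu} to see that this really is a two-sided inverse. The paper's proof records only that single sentence, whereas you spell out the boundary index checks (including the use of $\lambda_i\ge 2k$ for $i\le N$) that the paper leaves implicit.
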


\begin{proof}
Let us give the inverse map: starting from a $\nu\in \bN^{N} $ such that $ \lambda^{\natural\shortdownarrow}\overset{m}{\searrow}\nu\overset{n-m}{\searrow}\lambda^{\natural\shortdownarrow\shortdownarrow}$, we append the rectangular Young diagram $(k^{n-m})$ to the right of $\nu^\shortuparrow$. This is the inverse map of $\mu\mapsto \mu^{\natural\shortdownarrow}$ due to Lemma \ref{lem:tail of mu}.
\end{proof}

In view of the decomposition \eqref{vertical splitting}, we can replace the $\GL_{N+2n}$ action by the subgroup $\GL_m\times\GL_{N+2n-m}$ action, where the latter is the given in the block matrix \eqref{branching subgroup} with $N$ replaced by $N+n$. The associated decomposition reads:
\begin{multline}
\bC[X_1,X_2,A,\eta]=\left(\bigoplus_{\lambda',\mu',\nu'}M^{\lambda'}_{\nu'\mu'}\otimes V^m_{\nu'}\otimes V^{N+2n-m}_{\mu'}\otimes (V^{N}_{\lambda'})^*\right)\\
\otimes \left(\bigoplus_{\lambda'',\mu'',\nu''}M^{\lambda''}_{\nu''\mu''}\otimes V^m_{\nu''}\otimes V^{N+2n-m}_{\mu''}\otimes (V^{n}_{\lambda''})^*\right).
\end{multline}

\begin{definition}
We define the subspace
\begin{multline}
\mathcal V_{N+n}\left(\substack{\lambda',\lambda''\\\mu', \mu''}\right):=\left(\bigoplus_{\nu'}M^{\lambda'}_{\nu'\mu'}\otimes V^m_{\nu'}\otimes \left(V^{N+2n-m}_{\mu'}\otimes (V^{N}_{\lambda'})^*\right)^{\GL_{N}^{\mathrm{diag}},-k}\right)\\
\otimes \left(\bigoplus_{\nu''}M^{\lambda''}_{\nu''\mu''}\otimes V^m_{\nu''}\otimes \left(V^{N+2n-m}_{\mu''}\otimes (V^{n}_{\lambda''})^*\right)^{\GL_{n}^{\mathrm{diag}},-k}\right).
\end{multline}
\end{definition}
We have inclusion
\begin{align}\label{H(la mu) into V(la', la'', mu', mu'')}
    \widetilde{\mathcal H}_{N+n}\left(\substack{\lambda\\ \mu}\right)\subset \bigoplus_{\substack{c^{\lambda}_{\lambda'\lambda''}\neq 0\\ c^{\mu}_{\mu'\mu''}\neq 0}} \mathcal V_{N+n}\left(\substack{\lambda',\lambda''\\\mu', \mu''}\right).
\end{align}

\begin{lemma}\label{lem:restrict V(la', la'', mu', mu'')}
Under the restriction map $\bC[V(N+n,n)]\to \bC[\widetilde{V}(N,n)]$, the image of $\mathcal V_{N+n}\left(\substack{\lambda',\lambda''\\\mu', \mu''}\right)$ is contained in
\begin{align}
    \begin{cases}
        \det(\eta)^k\cdot \widetilde{\mathcal{H}}_N\left(\substack{\lambda'\\ \mu'}\right) &,\text{ if $\lambda''=(k^n)$ and $\mu''=(k^{n-m})$ and $\lambda'$ is admissible and }\lambda'\overset{m}{\searrow}\mu'\overset{n-m}{\searrow}\lambda'^\shortdownarrow,\\
        0 &, \text{ otherwise}.
    \end{cases}
\end{align}
\end{lemma}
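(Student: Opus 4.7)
The plan is to mirror the proof of Lemma~\ref{lem:restrict V(la', la'')} and refine it to track the extra branching data through $\GL_m\times\GL_{N+2n-m}\subset \GL_{N+2n}$. First, exactly as in that lemma, the restriction map $\bC[V(N+n,n)]\to \bC[\widetilde V(N,n)]$ kills the last $n$ rows of $X_1$ and all of $X_2$, so the image of $\mathcal V_{N+n}\left(\substack{\lambda',\lambda''\\ \mu',\mu''}\right)$ lies in $\bC[X_{1,\mathrm{top}},A]\otimes\bC[\eta]$. This restriction is equivariant under the common $\GL_m$ acting on the top $m$ rows of both blocks, under the $\GL_{N+n-m}$ acting on rows $m+1,\dots,N+n$ of the $\left(\substack{A\\ X_1}\right)$-block, under the $\GL_{n-m}$ acting on rows $m+1,\dots,n$ of the $\left(\substack{\eta\\ X_2}\right)$-block, and under $\GL_N\times\GL_n$ acting on columns.

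The key step is to show that the image of $V^{N+2n}_{\mu'}$ in $\bC[X_{1,\mathrm{top}},A]$ lies inside $V^{N+n-m}_{\mu'}$, and the image of $V^{N+2n}_{\mu''}$ in $\bC[\eta]$ lies inside $V^{n-m}_{\mu''}$; these follow by exactly the same argument as in Lemma~\ref{lem:restrict V(la', la'')}, now applied at the level of the $\GL_{N+2n-m}$-subrepresentation. The $\GL_m$-tensor factors $V^m_{\nu'}\otimes V^m_{\nu''}$ and the multiplicity spaces $M^{\lambda'}_{\nu'\mu'}\otimes M^{\lambda''}_{\nu''\mu''}$ pass through intact, since the top $m$ rows are untouched by the restriction. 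Combining, the image of $\mathcal V_{N+n}\left(\substack{\lambda',\lambda''\\ \mu',\mu''}\right)$ is contained in the tensor product of $\widetilde{\mathcal H}_N\left(\substack{\lambda'\\ \mu'}\right)$ (this factor being exactly the one from Definition~\ref{def:branching}) with $\bigoplus_{\nu''}M^{\lambda''}_{\nu''\mu''}\otimes V^m_{\nu''}\otimes\left(V^{n-m}_{\mu''}\otimes (V^n_{\lambda''})^*\right)^{\GL_n^{\mathrm{diag}},-k}$.

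Finally, the first factor is nonzero iff $\lambda'$ is admissible and $\lambda'\overset{m}{\searrow}\mu'\overset{n-m}{\searrow}\lambda'^{\shortdownarrow}$, by Lemma~\ref{lem:H_N(lambda mu) nontrivial}; the second factor is nonzero iff $\left((V^n_{\lambda''})^*\right)^{\GL_n,-k}\neq 0$, which forces $\lambda''=(k^n)$, whereupon the branching of $V^n_{(k^n)}$ to $\GL_m\times\GL_{n-m}$ is the one-dimensional $V^m_{(k^m)}\otimes V^{n-m}_{(k^{n-m})}$, forcing $\mu''=(k^{n-m})$ (and $\nu''=(k^m)$), and leaving only the line $\bC\cdot\det(\eta)^k$. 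The main obstacle is verifying the inclusion $\mathrm{Im}(V^{N+2n}_{\mu'})\subseteq V^{N+n-m}_{\mu'}$ (and its analog for $\mu''$), since the highest-weight-vector argument now operates on the $\GL_{N+2n-m}$-subrepresentation rather than the ambient $V^{N+2n}_{\lambda'}$; this however follows verbatim from the argument in Lemma~\ref{lem:restrict V(la', la'')}, by $\GL_N$-equivariance (resp.\ $\GL_n$-equivariance) of the restriction together with uniqueness of the relevant Schur-Weyl component of the target ring carrying a $\GL_N$-highest (resp.\ $\GL_n$-highest) weight vector of weight $\lambda'$ (resp.\ $\lambda''$) and $\GL_{N+n-m}$-weight $\mu'$ (resp.\ $\GL_{n-m}$-weight $\mu''$).
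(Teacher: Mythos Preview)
Your proof is correct and follows the same approach as the paper's: restrict to kill the last $n$ rows of $X_1$ and all of $X_2$, identify the image as a tensor product whose first factor is $\widetilde{\mathcal H}_N\!\left(\substack{\lambda'\\\mu'}\right)$ and whose second factor forces $\lambda''=(k^n)$, $\mu''=(k^{n-m})$ via the branching of $V^n_{(k^n)}$. Note a small typo: where you write $V^{N+2n}_{\mu'}$ and $V^{N+2n}_{\mu''}$ you mean $V^{N+2n-m}_{\mu'}$ and $V^{N+2n-m}_{\mu''}$ (the $\GL_{N+2n-m}$-pieces in the branching), and your $\GL_n^{\mathrm{diag}}$ on the second factor reduces to the column $\GL_n$ since the row action is trivial after $X_2\to 0$---exactly as in the paper.
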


\begin{proof}
The restriction map sets the last $n$ rows of $X_1$ and the whole $X_2$ to zero, thus the image of $\mathcal V_{N+n}\left(\substack{\lambda',\lambda''\\\mu', \mu''}\right)$ is contained in 
\begin{multline}\label{image of V(la', la'', mu', mu'')}
\left(\bigoplus_{\nu'}M^{\lambda'}_{\nu'\mu'}\otimes V^m_{\nu'}\otimes \left(V^{N+n-m}_{\mu'}\otimes (V^{N}_{\lambda'})^*\right)^{\GL_{N}^{\mathrm{diag}},-k}\right)\\
\otimes \left(\bigoplus_{\nu''}M^{\lambda''}_{\nu''\mu''}\otimes V^m_{\nu''}\otimes V^{n-m}_{\mu''}\otimes \left((V^{n}_{\lambda''})^*\right)^{\GL_{n},-k}\right).
\end{multline}
The tensor component involving $\GL_{N}^{\mathrm{diag}}$ semi-invariant is $\widetilde{\mathcal{H}}_{N}\left(\substack{\lambda'\\ \mu'}\right)$, and we have
\begin{align*}
    \left( (V^{n}_{\lambda''})^*\right)^{\GL_{n},-k}\cong \begin{cases}
        \bC &, \text{ if $\lambda''=(k^n)$},\\
        0 &, \text{ otherwise}.
    \end{cases}
\end{align*}
Thus the vector space \eqref{image of V(la', la'', mu', mu'')} is nontrivial only if $\lambda'$ is admissible and $\lambda'\overset{m}{\searrow}\mu'\overset{n-m}{\searrow}\lambda'^\shortdownarrow$ (by Lemma \ref{lem:H_N(lambda mu) nontrivial}) and $\lambda''=(k^n)$. Let us assume that the above conditions are satisfied, then the vector space $M^{\lambda''}_{\nu''\mu''}\otimes V^m_{\nu''}\otimes V^{n-m}_{\mu''}$ is the $(\nu',\mu')$ branch of $V^n_{(k^n)}$ ($k$-th power of determinant representation) into $\GL_m\times\GL_{n-m}$ representations, which is nontrivial if an only if $\nu'=(k^m)$ and $\mu'=(k^{n-m})$. When $\mu'=(k^{n-m})$, $\bigoplus_{\nu'}M^{\lambda''}_{\nu''\mu''}\otimes V^m_{\nu''}\otimes V^{n-m}_{\mu''}$ is one-dimensional and it is generated by $\det(\eta)^k$. This finishes the proof.
\end{proof}

\begin{proposition}\label{prop:p_N and branching}
Let $\lambda=(\lambda_1\ge \cdots\ge \lambda_{N+n})$ and $\mu=(\mu_1\ge\cdots\ge \mu_{N+n})$ be Young diagrams of length $N+n$, and assume that $\lambda$ is admissible and $\lambda\overset{m}{\searrow}\mu\overset{n-m}{\searrow}\lambda^\shortdownarrow$, then
\begin{align}
p_N(\widetilde{\mathcal H}_{N+n}\left(\substack{\lambda\\ \mu}\right))\subseteq \begin{cases}
    \widetilde{\mathcal H}_{N}\left(\substack{\lambda^{\natural\shortdownarrow}\\ \mu^{\natural\shortdownarrow}}\right) &,\text{ if $\lambda$ is cuttable},\\
    0 &, \text{ otherwise}.
\end{cases}
\end{align}
\end{proposition}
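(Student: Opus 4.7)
Plan: Proposition \ref{prop:p_N and branching} refines Proposition \ref{prop:p_N and decomposition} by tracking the extra branching Young diagram $\mu$, so I would follow the same three-step strategy -- inclusion, restriction, Littlewood-Richardson analysis -- only this time imposing Littlewood-Richardson constraints on both $\lambda$ and $\mu$.

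By \eqref{H(la mu) into V(la', la'', mu', mu'')}, $\widetilde{\mathcal H}_{N+n}\!\left(\substack{\lambda\\ \mu}\right)$ embeds into the direct sum of $\mathcal V_{N+n}\!\left(\substack{\lambda',\lambda''\\ \mu',\mu''}\right)$ over pairs with $c^{\lambda}_{\lambda'\lambda''}\neq 0$ and $c^{\mu}_{\mu'\mu''}\neq 0$. Applying Lemma \ref{lem:restrict V(la', la'', mu', mu'')}, the restriction to $\bC[\widetilde V(N,n)]$ is annihilated unless $\lambda''=(k^n)$, $\mu''=(k^{n-m})$, $\lambda'$ is admissible and $\lambda'\overset{m}{\searrow}\mu'\overset{n-m}{\searrow}\lambda'^{\shortdownarrow}$, and then the image lies in $\det(\eta)^{k}\cdot\widetilde{\mathcal H}_N\!\left(\substack{\lambda'\\ \mu'}\right)$; dividing by $\det(X)^{k}\det(\eta)^{k}$ as in Definition \ref{def:p_N and sigma_N} puts the image into $\widetilde{\mathcal H}_N\!\left(\substack{\lambda'^{\shortdownarrow}\\ \mu'^{\shortdownarrow}}\right)$.

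The next step is to pin down $\lambda'$ and $\mu'$. The condition $c^{\lambda}_{\lambda',(k^n)}\neq 0$ with $\lambda'$ of length at most $N$ forces $\lambda$ to be cuttable and $\lambda'=\lambda^\natural$, exactly as in the proof of Proposition \ref{prop:p_N and decomposition}; in particular the image vanishes when $\lambda$ is not cuttable. Assuming $\lambda$ is cuttable, Lemma \ref{lem:tail of mu} (with $N$ replaced by $N+n$) forces $\mu_i=k$ for $N<i\le N+n-m$ and $\mu_i=0$ for $i>N+n-m$, so $|\mu|=|\mu^\natural|+k(n-m)$. Any $\mu'$ with $c^{\mu}_{\mu',(k^{n-m})}\neq 0$ then satisfies $\mu'\subseteq\mu$ and $|\mu'|=|\mu^\natural|$; combined with $\mu'$ having length at most $N$, this forces $\mu'_i=\mu_i$ for all $i\le N$, i.e.\ $\mu'=\mu^\natural$. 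Conversely $\mu/\mu^\natural$ is the $(n-m)\times k$ rectangle sitting in rows $N+1,\dots,N+n-m$, whose unique semistandard filling of content $(k^{n-m})$ gives $c^{\mu}_{\mu^\natural,(k^{n-m})}=1$, so this case is realized.

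Putting the pieces together, $p_N\!\left(\widetilde{\mathcal H}_{N+n}\!\left(\substack{\lambda\\ \mu}\right)\right)$ lands inside $\widetilde{\mathcal H}_N\!\left(\substack{\lambda^{\natural\shortdownarrow}\\ \mu^{\natural\shortdownarrow}}\right)$ when $\lambda$ is cuttable and is zero otherwise. The compatibility between the branching conditions on $(\lambda,\mu)$ and those on $(\lambda^{\natural\shortdownarrow},\mu^{\natural\shortdownarrow})$ -- so that the target space is actually of the claimed form -- is exactly the content of Lemma \ref{lem:1 to 1 corr for cutting and branching}. I do not expect any serious obstacle: the argument is essentially a bookkeeping exercise on top of Proposition \ref{prop:p_N and decomposition}, with the only new ingredient being the size-counting Littlewood-Richardson argument for $\mu$, which is clean because $(k^{n-m})$ is rectangular and the tail of $\mu$ is already rigidly determined by Lemma \ref{lem:tail of mu}.
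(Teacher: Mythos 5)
Your proposal is correct and follows essentially the same route as the paper: the same inclusion \eqref{H(la mu) into V(la', la'', mu', mu'')}, the same application of Lemma \ref{lem:restrict V(la', la'', mu', mu'')}, the same identification $\lambda'=\lambda^\natural$ carried over from Proposition \ref{prop:p_N and decomposition}, and the same size-counting argument via Lemma \ref{lem:tail of mu} to force $\mu'=\mu^\natural$. The paper handles the non-cuttable case slightly more directly by appealing to $p_N(\widetilde{\mathcal H}_{N+n}(\lambda))=0$ from Proposition \ref{prop:p_N and decomposition}, and your closing remark about Lemma \ref{lem:1 to 1 corr for cutting and branching} is a harmless extra (it is not needed to prove an inclusion), but the substance is identical.
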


\begin{proof}
Let us assume that $\lambda$ is cuttable, otherwise $p_N(\widetilde{\mathcal H}_{N+n}\left(\substack{\lambda\\ \mu}\right))\subseteq p_N(\widetilde{\mathcal H}_{N+n}\left(\lambda\right))=0$ by Proposition \ref{prop:p_N and decomposition}. In view of \eqref{H(la mu) into V(la', la'', mu', mu'')} and the definition of $p_N$ (Definition \ref{def:p_N and sigma_N}), Lemma \ref{lem:restrict V(la', la'', mu', mu'')} implies that
\begin{align*}
    p_N(\widetilde{\mathcal H}_{N+n}\left(\substack{\lambda\\ \mu}\right))\subseteq \sum_{\substack{c^{\lambda}_{\lambda',(k^n)}\neq 0\\ c^{\mu}_{\mu',(k^{n-m})}\neq 0}} \widetilde{\mathcal H}_{N}\left(\substack{\lambda'^\shortdownarrow\\ \mu'^\shortdownarrow}\right),
\end{align*}
where the $\widetilde{\mathcal H}_{N}\left(\substack{\lambda'\\ \mu'}\right)$ in Lemma \ref{lem:restrict V(la', la'', mu', mu'')} becomes $\widetilde{\mathcal H}_{N}\left(\substack{\lambda'^\shortdownarrow\\ \mu'^\shortdownarrow}\right)$ in the above because we need to multiply $\det(X)^{-k}$ to get $p_N$. According to the proof of Proposition \ref{prop:p_N and decomposition}, the only $\lambda'$ that makes $c^{\lambda}_{\lambda',(k^n)}$ nonzero is $\lambda'=\lambda^\natural$.

If $c^{\mu}_{\mu',(k^{n-m})}\neq 0$, then $\mu'\subseteq \mu$, in fact $\mu'\subseteq \mu^\natural$ because $\mu'$ has length $N$. Therefore we have
\begin{align*}
|\mu|-|\mu'|-|(k^{n-m})|&\ge \sum_{i=N+1}^{N+n-m}\mu_i-k(n-m)\\
&=\sum_{i=N+1}^{N+n-m}(\mu_i-k)\\
\text{\small by Lemma \ref{lem:tail of mu}}\quad &=0.
\end{align*}
The equality holds if and only if $\mu'=\mu^\natural$. So we conclude that the unique $\mu'$ that makes $c^{\mu}_{\mu',(k^{n-m})}$ nonzero is $\mu'=\mu^\natural$. This finishes the proof.
\end{proof}

\subsubsection{Proof of Theorem \ref{thm:Capelli det transforms under p_N}}

According to Remark \ref{rmk:Capelli det action on irrep}, $\prescript{L+1}{}{C}_m(u)$ acts on $\widetilde{\mathcal H}_{N+n}(\substack{\lambda\\ \mu})$ as the scalar
\begin{align}\label{C(u) before p_N, original}
\prod_{i=1}^{2n+N-m}(u-\mu_i+i+m-2n-N).
\end{align}
By Lemma \ref{lem:tail of mu}, $\mu_i=0$ for $i>n+N-m$, so \eqref{C(u) before p_N, original} equals to
\begin{align}\label{C(u) before p_N}
    (u-n+1)_n\cdot\prod_{i=1}^{n+N-m}(u-\mu_i+i+m-2n-N).
\end{align}
According to Remark \ref{rmk:Capelli det action on irrep}, $\prescript{L}{}{C}_m(u)$ acts on $\widetilde{\mathcal H}_{N}\left(\substack{\lambda^{\natural\shortdownarrow}\\ \mu^{\natural\shortdownarrow}}\right)$ as the scalar
\begin{align}\label{C(u) after p_N}
\prod_{i=1}^{n+N-m}(u-\mu^{\natural\shortdownarrow}_i+i+m-n-N).
\end{align}
Compare \eqref{C(u) before p_N} and \eqref{C(u) after p_N}, and we compute that 
\begin{align*}
p_N\circ \prescript{L+1}{}{C}_m(u)\bigg\rvert_{\widetilde{\mathcal H}_{N+n}\left(\substack{\lambda\\ \mu}\right)}&=(u-n+1)_n\cdot\prod_{i=1}^{n+N-m}(u-\mu_i+i+m-2n-N)\cdot p_N\bigg\rvert_{\widetilde{\mathcal H}_{N+n}\left(\substack{\lambda\\ \mu}\right)}\\
&=(u-n+1)_n\cdot\prod_{i=1}^{n+N-m}((u-n-k)-\mu^{\natural\shortdownarrow}_i+i+m-n-N)\cdot p_N\bigg\rvert_{\widetilde{\mathcal H}_{N+n}\left(\substack{\lambda\\ \mu}\right)}\\
\text{\small by Proposition \ref{prop:p_N and branching}}\quad &= (u-n+1)_n\cdot\prescript{L}{}{C}_m(u-n-k)\circ p_N\bigg\rvert_{\widetilde{\mathcal H}_{N+n}\left(\substack{\lambda\\ \mu}\right)}.
\end{align*}
In other words, the equation \eqref{eq:Capelli det transforms under p_N} holds when restricted to the subspace $\widetilde{\mathcal H}_{N}(\substack{\lambda\\ \mu})$ for a cuttable $\lambda$. On the other hand, if $\lambda$ is not cuttable, then according to Proposition \ref{prop:p_N and branching} $p_N$ maps $\widetilde{\mathcal H}_{N}(\substack{\lambda\\ \mu})$ to zero, thus the equation \eqref{eq:Capelli det transforms under p_N} trivially holds on the subspace $\widetilde{\mathcal H}_{N}(\substack{\lambda\\ \mu})$. Since $\widetilde{\mathcal H}_{N}(n,k)=\bigoplus_{\lambda,\mu}\widetilde{\mathcal H}_{N}(\substack{\lambda\\ \mu})$, the equation \eqref{eq:Capelli det transforms under p_N} holds on the whole space $\widetilde{\mathcal H}_{N}(n,k)$. This concludes the proof of Theorem \ref{thm:Capelli det transforms under p_N}.

\begin{definition}\label{def:Yangian notation_infinite}
Fix $r\in \{1,\cdots,n-1\}$. Define the inverse limit of the system $\{\prescript{L}{}{\widetilde{T}}^a_{b;m}\}_{L\in \bN}$ which acts on $\widetilde{\mathcal H}^{(r)}_{\infty}(n,k)$ by $\prescript{\infty}{}{\widetilde{T}}^a_{b;m}$.
\end{definition}

\subsection{Conformal limit of operators with bounded degree}\label{subsec:conformal limit of operators}

\begin{definition}
Let $V=\bigoplus_{d\ge 0}V_d$ be a non-negatively graded vector space. A linear operator $\mathcal O\in \End(V)$ is said to be of bounded degree if there exists $C\in\bZ$ such that $\forall d\in \bN$, $\mathcal O(V_d)\subset V_{\le d+C}$, where 
\begin{align*}
     V_{\le m}:=\bigoplus_{i\le m} V_i.
\end{align*}
\end{definition}

It is easy to see that finite linear combinations and compositions of bounded degree linear operators are again bounded degree linear operators.

\begin{definition}\label{def:conformal limit of operators}
Fix $r\in \{0,\cdots,n-1\}$. Let $\{\prescript{L}{}{\mathcal O}\in \End(\widetilde{\mathcal H}_{nL+r}(n,k))\}_{L\in \bN}$ be a collection of linear operators of uniformly bounded degree, i.e. there exists $C\in \bZ$ such that $\forall d\in \bN$ and $\forall L\in \bN$, 
\begin{align*}
    \prescript{L}{}{\mathcal O}(\widetilde{\mathcal H}_{nL+r}(n,k)_{d})\subset \bigoplus_{i\le d+C}\widetilde{\mathcal H}_{nL+r}(n,k)_{i}.
\end{align*}
We say that $\lim_{L\to\infty}\prescript{L}{}{\mathcal O}$ exists if and only if there exists linear operator $\prescript{\infty}{}{\mathcal O}\in \End(\widetilde{\mathcal H}^{(r)}_{\infty}(n,k))$ with bounded degree such that $\forall v\in \widetilde{\mathcal H}^{(r)}_{\infty}(n,k)$ the following holds:
\begin{align}
    \underset{L\to \infty}{\lim}\lVert \prescript{\infty}{}{\mathcal O}(v)-\sigma^\infty_{nL+r}\circ \prescript{L}{}{\mathcal O}\circ p^\infty_{nL+r}(v)\rVert=0.
\end{align}
Here $\lVert\cdot\rVert$ is a norm on $\widetilde{\mathcal H}^{(r)}_{\infty}(n,k)$. If the above holds, we say that the conformal limit of $\{\prescript{L}{}{\mathcal O}\}_{L\in \bN}$ is $\prescript{\infty}{}{\mathcal O}$, and write $\underset{L\to \infty}{\lim}\prescript{L}{}{\mathcal O}=\prescript{\infty}{}{\mathcal O}$.
\end{definition}

\begin{remark}\label{rmk:independence of norm}
The above definition does not depend on the choice of norm $\lVert\cdot\rVert$. Suppose that $\lVert\cdot\rVert'$ is another norm on $\widetilde{\mathcal H}^{(r)}_{\infty}(n,k)$, then for every $d\in \bN$, there exists $C_d>1$ such that $\forall v\in \widetilde{\mathcal H}^{(r)}_{\infty}(n,k)_{\le d}$, the following inequality holds:
\begin{align*}
    C_d^{-1}\lVert v\rVert < \lVert v\rVert' < C_d\lVert v\rVert.
\end{align*}
This is because $\widetilde{\mathcal H}^{(r)}_{\infty}(n,k)_{\le d}$ is a finite dimensional vector space (Lemma \ref{lem:finiteness}), and all norms on a finite dimensional vector space are equivalent. According to the assumption on the boundedness of degrees of $\{\prescript{L}{}{\mathcal O}\}_{L\in \bN}$ and $\prescript{\infty}{}{\mathcal O}$, we have $\forall v\in \widetilde{\mathcal H}^{(r)}_{\infty}(n,k)$, there exists $d\in \bZ$ such that $\forall L\in \bN$, $\prescript{\infty}{}{\mathcal O}(v)-\sigma^\infty_{nL+r}\circ \prescript{L}{}{\mathcal O}\circ p^\infty_{nL+r}(v)\in \widetilde{\mathcal H}^{(r)}_{\infty}(n,k)_{\le d}$, therefore
\begin{align*}
    \underset{L\to \infty}{\lim}\lVert \prescript{\infty}{}{\mathcal O}(v)-\sigma^\infty_{nL+r}\circ \prescript{L}{}{\mathcal O}\circ p^\infty_{nL+r}(v)\rVert=0\Longleftrightarrow \underset{L\to \infty}{\lim}\lVert \prescript{\infty}{}{\mathcal O}(v)-\sigma^\infty_{nL+r}\circ \prescript{L}{}{\mathcal O}\circ p^\infty_{nL+r}(v)\rVert'=0.
\end{align*}
\end{remark}

There is an equivalent definition of conformal limit, see the next lemma.
\begin{lemma}\label{lem:equivalent def of convergence}
Let $\{\prescript{L}{}{\mathcal O}\in \End(\widetilde{\mathcal H}_{nL+r}(n,k))\}_{L\in \bN}$ be a collection of linear operators of uniformly bounded degree, and let $\prescript{\infty}{}{\mathcal O}\in \End(\widetilde{\mathcal H}^{(r)}_{\infty}(n,k))$ be a linear operator with bounded degree, then
$\underset{L\to \infty}{\lim}\prescript{L}{}{\mathcal O}=\prescript{\infty}{}{\mathcal O}$ if and only if $\forall d\in \bN$ and $\forall\epsilon\in \bR_{>0}$, there exists $M\in \bN$ such that 
\begin{align}\label{eq:equivalent form of conformal limit}
    \left(\prescript{\infty}{}{\mathcal O}-\sigma^\infty_{nL+r}\circ \prescript{L}{}{\mathcal O}\circ p^\infty_{nL+r}\right)(\mathbb B_d)\subseteq \epsilon\cdot\mathbb B_{d+C}\text{ holds for all }L\ge M,
\end{align}
where $\mathbb B_i$ is the unit ball in $\widetilde{\mathcal H}^{(r)}_{\infty}(n,k)_{\le i}$ with respect to the norm $\lVert\cdot\rVert$, i.e. $$\mathbb B_i=\{v\in \widetilde{\mathcal H}^{(r)}_{\infty}(n,k)_{\le i}:\lVert v\rVert\le 1 \}.$$
\end{lemma}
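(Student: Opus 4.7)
The plan is to reduce both directions to the elementary fact that pointwise convergence of a sequence of linear maps between finite-dimensional normed spaces is equivalent to operator-norm convergence, hence is uniform on bounded sets.

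First I would set up the key finiteness input. Let $C \in \bZ$ be a uniform degree bound for $\{\prescript{L}{}{\mathcal O}\}_{L \in \bN}$ and $\prescript{\infty}{}{\mathcal O}$, and define the error operator
\begin{equation*}
E_L := \prescript{\infty}{}{\mathcal O} - \sigma^\infty_{nL+r}\circ \prescript{L}{}{\mathcal O}\circ p^\infty_{nL+r} \in \End(\widetilde{\mathcal H}^{(r)}_{\infty}(n,k)).
\end{equation*}
Since $p^\infty_{nL+r}$ and $\sigma^\infty_{nL+r}$ both preserve the grading (Proposition \ref{prop: p_N is gl_n[z] equiv}), and $\prescript{L}{}{\mathcal O}$ has degree bounded by $C$, the operator $E_L$ has degree bounded by $C$ as well; in particular, $E_L$ restricts to a linear map
\begin{equation*}
E_L^{(d)} : \widetilde{\mathcal H}^{(r)}_{\infty}(n,k)_{\le d} \longrightarrow \widetilde{\mathcal H}^{(r)}_{\infty}(n,k)_{\le d+C}.
\end{equation*}
By Lemma \ref{lem:finiteness}, both source and target of $E_L^{(d)}$ are finite dimensional.

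The backward direction is immediate: if \eqref{eq:equivalent form of conformal limit} holds, then for any $v \in \widetilde{\mathcal H}^{(r)}_{\infty}(n,k)$ we may choose $d$ with $v \in \widetilde{\mathcal H}^{(r)}_{\infty}(n,k)_{\le d}$, and after normalizing (the case $v=0$ being trivial) we find $E_L(v/\lVert v\rVert) \in \epsilon \cdot \mathbb B_{d+C}$ for all $L \ge M$, giving $\lVert E_L(v)\rVert \le \epsilon \lVert v\rVert$. Letting $\epsilon \to 0$ yields pointwise convergence.

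For the forward direction, fix $d \in \bN$. Pointwise convergence of $E_L(v) \to 0$ for every $v \in \widetilde{\mathcal H}^{(r)}_{\infty}(n,k)_{\le d}$ is the statement that the sequence of linear maps $E_L^{(d)}$ converges to zero pointwise on the finite-dimensional space $\widetilde{\mathcal H}^{(r)}_{\infty}(n,k)_{\le d}$. On a finite-dimensional normed space all linear operator topologies coincide with the operator-norm topology, so pointwise convergence implies operator-norm convergence; equivalently, the closed unit ball $\mathbb B_d$ is compact and pointwise convergence of a sequence of continuous linear maps on a compact set is uniform (one can also spell this out by expanding in a fixed finite basis of $\widetilde{\mathcal H}^{(r)}_{\infty}(n,k)_{\le d}$ and comparing the two norms $\lVert\cdot\rVert$ and the $\ell^\infty$-norm on coordinates, which are equivalent by Remark \ref{rmk:independence of norm}). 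Given $\epsilon > 0$, choosing $M$ so that $\lVert E_L^{(d)}\rVert_{\mathrm{op}} \le \epsilon$ for all $L \ge M$ then yields $E_L(\mathbb B_d) \subseteq \epsilon \cdot \mathbb B_{d+C}$, which is exactly \eqref{eq:equivalent form of conformal limit}.

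There is no real obstacle here; the content is entirely the finite-dimensionality of graded pieces provided by Lemma \ref{lem:finiteness} together with the uniform degree bound, which together force the infinite-dimensional pointwise convergence to reduce to a finite-dimensional statement in each graded degree.
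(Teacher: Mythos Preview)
Your proof is correct and follows essentially the same approach as the paper: both reduce to the fact that on the finite-dimensional space $\widetilde{\mathcal H}^{(r)}_{\infty}(n,k)_{\le d}$ (finiteness from Lemma \ref{lem:finiteness}), pointwise convergence of the error operators $E_L^{(d)}$ to zero implies operator-norm convergence. The paper phrases this as ``represented by a matrix of finite size, then point-wise convergence implies that every matrix component converges to zero,'' while you give the equivalent compactness/equivalence-of-norms formulation, but the content is identical.
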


\begin{proof}
The ``if" direction is obvious, let us show the ``only if" direction. Define $\prescript{L}{}{\mathcal A}:=\prescript{\infty}{}{\mathcal O}-\sigma^\infty_{nL+r}\circ \prescript{L}{}{\mathcal O}\in \End(\widetilde{\mathcal H}^{(r)}_{\infty}(n,k))$. By our assumption, $\prescript{L}{}{\mathcal A}$ has uniformly bounded degree, i.e. $\exists C\in \bZ$ such that $\prescript{L}{}{\mathcal A}(\widetilde{\mathcal H}^{(r)}_{\infty}(n,k)_{\le d})\subseteq \widetilde{\mathcal H}^{(r)}_{\infty}(n,k)_{d+C}$ holds for all $d\in \bN$. Moreover, $\underset{L\to \infty}{\lim}\lVert \prescript{L}{}{\mathcal A}(v)\rVert=0$ for all $v\in \widetilde{\mathcal H}^{(r)}_{\infty}(n,k)$. Now fix an arbitrary $d\in \bN$, then $\prescript{L}{}{\mathcal A}|_{\widetilde{\mathcal H}^{(r)}_{\infty}(n,k)_{\le d}}:\widetilde{\mathcal H}^{(r)}_{\infty}(n,k)_{\le d}\to \widetilde{\mathcal H}^{(r)}_{\infty}(n,k)_{d+C}$ is represented by a matrix of finite size, then point-wise convergence $\prescript{L}{}{\mathcal A}|_{\widetilde{\mathcal H}^{(r)}_{\infty}(n,k)_{\le d}}\to 0$ implies that every matrix component converges zero, thus the matrix norm $\rVert\prescript{L}{}{\mathcal A}|_{\widetilde{\mathcal H}^{(r)}_{\infty}(n,k)_{\le d}}\rVert$ defined by 
\begin{align*}
    \rVert\prescript{L}{}{\mathcal A}|_{\widetilde{\mathcal H}^{(r)}_{\infty}(n,k)_{\le d}}\rVert:= \sup_{\substack{v\in \widetilde{\mathcal H}^{(r)}_{\infty}(n,k)_{\le d}\\ \rVert v\rVert=1}}\rVert\prescript{L}{}{\mathcal A}(v)\rVert,
\end{align*}
converges to zero. This proves the \eqref{eq:equivalent form of conformal limit}.
\end{proof}

Here we list some elementary properties of conformal limit.

\begin{lemma}
Let $\{\prescript{L}{}{\mathcal O}\}_{L\in \bN}$ be a collection of linear operators of uniformly bounded degree, then its conformal limit is unique if it exists.
\end{lemma}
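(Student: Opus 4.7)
The plan is to argue by contradiction using the triangle inequality. Suppose $\prescript{\infty}{}{\mathcal O}$ and $\prescript{\infty}{}{\mathcal O}'$ are both conformal limits of the collection $\{\prescript{L}{}{\mathcal O}\}_{L\in\bN}$. I want to show that $\prescript{\infty}{}{\mathcal O}(v)=\prescript{\infty}{}{\mathcal O}'(v)$ for every $v\in \widetilde{\mathcal H}^{(r)}_\infty(n,k)$, which then forces $\prescript{\infty}{}{\mathcal O}=\prescript{\infty}{}{\mathcal O}'$.

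Fix an arbitrary $v\in \widetilde{\mathcal H}^{(r)}_\infty(n,k)$ and any norm $\lVert\cdot\rVert$ on $\widetilde{\mathcal H}^{(r)}_\infty(n,k)$ (by Remark \ref{rmk:independence of norm} the choice is immaterial). Since $v$ lies in some $\widetilde{\mathcal H}^{(r)}_\infty(n,k)_{\le d}$ and both $\prescript{\infty}{}{\mathcal O}$ and $\prescript{\infty}{}{\mathcal O}'$ are of bounded degree, the vector $\prescript{\infty}{}{\mathcal O}(v)-\prescript{\infty}{}{\mathcal O}'(v)$ is a well-defined element of $\widetilde{\mathcal H}^{(r)}_\infty(n,k)$. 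Applying the triangle inequality, for every $L\in\bN$ we have
\begin{align*}
\lVert \prescript{\infty}{}{\mathcal O}(v)-\prescript{\infty}{}{\mathcal O}'(v)\rVert &\le \lVert \prescript{\infty}{}{\mathcal O}(v)-\sigma^\infty_{nL+r}\circ \prescript{L}{}{\mathcal O}\circ p^\infty_{nL+r}(v)\rVert \\
&\quad + \lVert \prescript{\infty}{}{\mathcal O}'(v)-\sigma^\infty_{nL+r}\circ \prescript{L}{}{\mathcal O}\circ p^\infty_{nL+r}(v)\rVert.
\end{align*}
By Definition \ref{def:conformal limit of operators}, both summands on the right tend to $0$ as $L\to\infty$. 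The left-hand side is a fixed nonnegative real number independent of $L$, so it must equal $0$, i.e. $\prescript{\infty}{}{\mathcal O}(v)=\prescript{\infty}{}{\mathcal O}'(v)$.

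The argument is essentially a one-line triangle-inequality computation; there is no real obstacle. The only thing worth verifying carefully is the implicit assertion that the two candidate limit operators act on the same space and hence their difference is legitimately a vector whose norm can be compared to those in the convergence hypothesis—but this is immediate from the definition. I would present the proof in a few lines directly after the statement of the lemma.
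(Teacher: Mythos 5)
Your proof is correct and follows essentially the same route as the paper's: assume two candidate limits, fix an arbitrary $v$, and sandwich $\lVert \prescript{\infty}{}{\mathcal O}(v)-\prescript{\infty}{}{\mathcal O}'(v)\rVert$ by the triangle inequality against the two convergence hypotheses. The brief remark about norm-independence is a harmless addition but not needed for the argument.
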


\begin{proof}
Suppose that there are two linear operator $\prescript{\infty}{}{\mathcal O}_{1},\prescript{\infty}{}{\mathcal O}_{2}\in \End(\widetilde{\mathcal H}^{(r)}_{\infty}(n,k))$ of bounded degrees, such that $\underset{L\to \infty}{\lim}\prescript{L}{}{\mathcal O}=\prescript{\infty}{}{\mathcal O}_1$ and $\underset{L\to \infty}{\lim}\prescript{L}{}{\mathcal O}=\prescript{\infty}{}{\mathcal O}_2$ simultaneously holds. For arbitrary $v\in \widetilde{\mathcal H}^{(r)}_{\infty}(n,k)$, we have
\begin{align*}
    &\lVert \prescript{\infty}{}{\mathcal O}_1(v)-\prescript{\infty}{}{\mathcal O}_2(v)\rVert\\
    &=\underset{L\to \infty}{\lim}\lVert \left(\prescript{\infty}{}{\mathcal O}_1(v)-\sigma^\infty_{nL+r}\circ \prescript{L}{}{\mathcal O}\circ p^\infty_{nL+r}(v)\right)-\left(\prescript{\infty}{}{\mathcal O}_2(v)-\sigma^\infty_{nL+r}\circ \prescript{L}{}{\mathcal O}\circ p^\infty_{nL+r}(v)\right)\rVert\\
    &\le \underset{L\to \infty}{\lim}\lVert \prescript{\infty}{}{\mathcal O}_1(v)-\sigma^\infty_{nL+r}\circ \prescript{L}{}{\mathcal O}\circ p^\infty_{nL+r}(v)\rVert+\underset{L\to \infty}{\lim}\lVert\prescript{\infty}{}{\mathcal O}_2(v)-\sigma^\infty_{nL+r}\circ \prescript{L}{}{\mathcal O}\circ p^\infty_{nL+r}(v)\rVert\\
    &=0.
\end{align*}
Thus $\prescript{\infty}{}{\mathcal O}_1(v)=\prescript{\infty}{}{\mathcal O}_2(v)$. Since $v$ is arbitrary, we conclude that $\prescript{\infty}{}{\mathcal O}_1=\prescript{\infty}{}{\mathcal O}_2$.
\end{proof}

\begin{lemma}
Let $\{\prescript{L}{}{\mathcal O}\}_{L\in \bN}$ be a collection of linear operators of homogeneous degree $d$, and assume that $\underset{L\to \infty}{\lim}\prescript{L}{}{\mathcal O}=\prescript{\infty}{}{\mathcal O}$, then $\prescript{\infty}{}{\mathcal O}$ is homogeneous of degree $d$.
\end{lemma}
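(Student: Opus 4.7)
The plan is to reduce the statement to a simple projection argument, using the fact that the projection $p^\infty_N$, the section $\sigma^\infty_N$, and the operators $\prescript{L}{}{\mathcal O}$ all respect the energy grading, together with Remark \ref{rmk:independence of norm} which lets us pick a convenient norm on $\widetilde{\mathcal H}^{(r)}_{\infty}(n,k)$.

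First, I would choose a norm $\lVert\cdot\rVert$ on $\widetilde{\mathcal H}^{(r)}_{\infty}(n,k)$ compatible with the grading, for instance by fixing an inner product on each finite-dimensional homogeneous piece $\widetilde{\mathcal H}^{(r)}_{\infty}(n,k)_i$ (these are finite-dimensional by Lemma \ref{lem:finiteness}) and declaring distinct graded pieces to be orthogonal. Then for any $w = \sum_i w_i$ with $w_i \in \widetilde{\mathcal H}^{(r)}_{\infty}(n,k)_i$, each projection $w_i$ satisfies $\lVert w_i\rVert \le \lVert w\rVert$. By Remark \ref{rmk:independence of norm}, using this particular norm costs nothing.

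Next, fix a homogeneous vector $v \in \widetilde{\mathcal H}^{(r)}_{\infty}(n,k)_e$. Since $p^\infty_{nL+r}$ is graded (as a limit of the graded maps $p_N$, see Proposition \ref{prop: p_N is gl_n[z] equiv}), $p^\infty_{nL+r}(v)$ lies in $\widetilde{\mathcal H}_{nL+r}(n,k)_e$. Because $\prescript{L}{}{\mathcal O}$ has homogeneous degree $d$ and $\sigma^\infty_{nL+r}$ is graded, the vector
\[
\sigma^\infty_{nL+r}\circ \prescript{L}{}{\mathcal O}\circ p^\infty_{nL+r}(v) \in \widetilde{\mathcal H}^{(r)}_{\infty}(n,k)_{e+d}
\]
is itself homogeneous of degree $e+d$ for every $L$.

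Decompose $\prescript{\infty}{}{\mathcal O}(v) = \sum_{i} w_i$ with $w_i \in \widetilde{\mathcal H}^{(r)}_{\infty}(n,k)_i$ (the sum is finite since $\prescript{\infty}{}{\mathcal O}$ has bounded degree). For any $i \neq e+d$, the degree-$i$ component of $\prescript{\infty}{}{\mathcal O}(v) - \sigma^\infty_{nL+r}\circ \prescript{L}{}{\mathcal O}\circ p^\infty_{nL+r}(v)$ is simply $w_i$, independently of $L$. The grading-compatible norm then gives
\[
\lVert w_i\rVert \le \bigl\lVert \prescript{\infty}{}{\mathcal O}(v) - \sigma^\infty_{nL+r}\circ \prescript{L}{}{\mathcal O}\circ p^\infty_{nL+r}(v)\bigr\rVert \xrightarrow{L\to\infty} 0,
\]
forcing $w_i = 0$. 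Therefore $\prescript{\infty}{}{\mathcal O}(v) = w_{e+d} \in \widetilde{\mathcal H}^{(r)}_{\infty}(n,k)_{e+d}$. Since this holds for every homogeneous $v$, $\prescript{\infty}{}{\mathcal O}$ is homogeneous of degree $d$. There is no real obstacle here; the only subtlety is making sure to use a grading-compatible norm so that the projection onto a single graded component is a contraction, which is allowed by the norm-independence noted in Remark \ref{rmk:independence of norm}.
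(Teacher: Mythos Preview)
Your proof is correct and follows essentially the same approach as the paper: both observe that $\sigma^\infty_{nL+r}\circ \prescript{L}{}{\mathcal O}\circ p^\infty_{nL+r}(v)$ lies in the single graded piece $\widetilde{\mathcal H}^{(r)}_{\infty}(n,k)_{e+d}$ for every $L$, and then conclude that the limit $\prescript{\infty}{}{\mathcal O}(v)$ must lie there as well. The only cosmetic difference is that the paper invokes closedness of the finite-dimensional subspace $\widetilde{\mathcal H}^{(r)}_{\infty}(n,k)_{e+d}$ directly, whereas you pick a grading-compatible norm and argue via projections; both amount to the same thing.
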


\begin{proof}
For arbitrary $f\in \bN$, let us take arbitrary $v\in \widetilde{\mathcal H}^{(r)}_{\infty}(n,k)_{f}$, then by our assumption $\sigma^\infty_{nL+r}\circ \prescript{L}{}{\mathcal O}\circ p^\infty_{nL+r}(v)\in \widetilde{\mathcal H}^{(r)}_{\infty}(n,k)_{f+d}$ for all $L\in \bN$. Since a finite dimensional subspace in a normed $\bC$-vector space is closed, the limit
\begin{align*}
    \underset{L\to \infty}{\lim}\lVert \prescript{\infty}{}{\mathcal O}(v)-\sigma^\infty_{nL+r}\circ \prescript{L}{}{\mathcal O}\circ p^\infty_{nL+r}(v)\rVert=0
\end{align*}
implies that $ \prescript{\infty}{}{\mathcal O}(v)\in \widetilde{\mathcal H}^{(r)}_{\infty}(n,k)_{f+d}$. This finishes the proof.
\end{proof}

\begin{lemma}\label{lem:scalar limit}
Suppose that $\{\alpha_L\in \bC\}_{L\in \bN}$ is a sequence of complex numbers such that $\underset{L\to \infty}{\lim}\alpha_L=\alpha$, then
\begin{align}
    \underset{L\to \infty}{\lim}{\alpha_L\cdot \mathrm{Id}}_{\widetilde{\mathcal H}_{nL+r}(n,k)}=\alpha\cdot\mathrm{Id}_{\widetilde{\mathcal H}^{(r)}_{\infty}(n,k)}.
\end{align}
\end{lemma}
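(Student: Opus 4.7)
\textbf{Proof proposal for Lemma \ref{lem:scalar limit}.}

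The plan is to verify the definition of conformal limit (Definition \ref{def:conformal limit of operators}) directly. First, observe that the family $\{\alpha_L\cdot\mathrm{Id}_{\widetilde{\mathcal H}_{nL+r}(n,k)}\}_{L\in\bN}$ has uniformly bounded degree (in fact, each operator is homogeneous of degree $0$, so we can take $C=0$), and similarly $\alpha\cdot\mathrm{Id}_{\widetilde{\mathcal H}^{(r)}_{\infty}(n,k)}$ has degree $0$. Therefore the statement ``$\lim_{L\to\infty}\alpha_L\cdot\mathrm{Id}=\alpha\cdot\mathrm{Id}$'' is at least well-posed.

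Next, fix any $v\in\widetilde{\mathcal H}^{(r)}_{\infty}(n,k)$. Since $\widetilde{\mathcal H}^{(r)}_{\infty}(n,k)=\bigoplus_{d\ge 0}\widetilde{\mathcal H}^{(r)}_{\infty}(n,k)_d$ is a graded \emph{direct sum}, $v$ has only finitely many nonzero homogeneous components; by linearity and the triangle inequality it suffices to treat the case where $v$ is homogeneous of some degree $d$. Write
\begin{align*}
\alpha\cdot v-\sigma^{\infty}_{nL+r}\circ(\alpha_L\cdot\mathrm{Id})\circ p^{\infty}_{nL+r}(v)=(\alpha-\alpha_L)\cdot v+\alpha_L\cdot\bigl(v-\sigma^{\infty}_{nL+r}\circ p^{\infty}_{nL+r}(v)\bigr).
\end{align*}
The first summand has norm $|\alpha-\alpha_L|\cdot\|v\|\to 0$ by hypothesis. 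For the second, the key input is Proposition \ref{prop:p_N stabilizes}: for $L\gg 0$, the restriction $p^{\infty}_{nL+r}\colon\widetilde{\mathcal H}^{(r)}_{\infty}(n,k)_d\to\widetilde{\mathcal H}_{nL+r}(n,k)_d$ is an isomorphism. Since $\sigma^{\infty}_{nL+r}$ is injective and satisfies $p^{\infty}_{nL+r}\circ\sigma^{\infty}_{nL+r}=\mathrm{Id}$, a dimension count forces $\sigma^{\infty}_{nL+r}$ to be the two-sided inverse of $p^{\infty}_{nL+r}$ on this degree-$d$ component, so $\sigma^{\infty}_{nL+r}\circ p^{\infty}_{nL+r}(v)=v$ for all $L$ sufficiently large. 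Consequently the second summand vanishes identically for $L\gg 0$.

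Combining, we obtain $\lim_{L\to\infty}\lVert\alpha\cdot v-\sigma^{\infty}_{nL+r}\circ(\alpha_L\cdot\mathrm{Id})\circ p^{\infty}_{nL+r}(v)\rVert=\lim_{L\to\infty}|\alpha-\alpha_L|\cdot\|v\|=0$, which is exactly the required convergence. No genuine obstacle is anticipated: the whole argument is a routine combination of the stabilization property (Proposition \ref{prop:p_N stabilizes}) with the elementary fact that $\alpha_L\to\alpha$, together with the triangle inequality.
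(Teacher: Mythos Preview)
Your proof is correct and follows essentially the same approach as the paper: both invoke Proposition~\ref{prop:p_N stabilizes} to conclude that $\sigma^{\infty}_{nL+r}\circ p^{\infty}_{nL+r}$ acts as the identity on the relevant finite-degree piece for $L\gg 0$, after which the estimate reduces to $|\alpha-\alpha_L|\cdot\|v\|\to 0$. Your version is slightly more explicit in checking the uniform degree bound and in writing out the triangle-inequality decomposition, but the argument is the same.
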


\begin{proof}
Take arbitrary $d\in \bN$, then $\sigma^\infty_{nL+r}\circ p^\infty_{nL+r}|_{\widetilde{\mathcal H}^{(r)}_{\infty}(n,k)_{\le d}}=\mathrm{Id}_{\widetilde{\mathcal H}^{(r)}_{\infty}(n,k)_{\le d}}$ for $L\gg 0$ by Proposition \ref{prop:p_N stabilizes}. For arbitrary $v\in \widetilde{\mathcal H}^{(r)}_{\infty}(n,k)_{\le d}$, we have
\begin{align*}
    \underset{L\to \infty}{\lim}\lVert \alpha v-\alpha_L\sigma^\infty_{nL+r}\circ p^\infty_{nL+r}(v)\rVert=\underset{L\to \infty}{\lim}\lVert \alpha v-\alpha_L v\rVert=0.
\end{align*}
\end{proof}

\begin{lemma}\label{lem:linearity and multiplication}
Suppose that $\{\prescript{L}{}{\mathcal A}\}_{L\in \bN}$ and $\{\prescript{L}{}{\mathcal B}\}_{L\in \bN}$ are two collections of  uniformly bounded degree linear operators, and assume that $\underset{L\to \infty}{\lim}\prescript{L}{}{\mathcal A}=\prescript{\infty}{}{\mathcal A}$ and $\underset{L\to \infty}{\lim}\prescript{L}{}{\mathcal B}=\prescript{\infty}{}{\mathcal B}$, where $\prescript{\infty}{}{\mathcal A}$ and $\prescript{\infty}{}{\mathcal B}$ are bounded degree linear operators. Then $\forall\alpha,\beta\in \bC$,
\begin{align}
    \underset{L\to \infty}{\lim}\left(\alpha\cdot\prescript{L}{}{\mathcal A}+\beta\cdot \prescript{L}{}{\mathcal B}\right)=\alpha\cdot\prescript{\infty}{}{\mathcal A}+\beta\cdot\prescript{\infty}{}{\mathcal B},\quad \underset{L\to \infty}{\lim}\left(\prescript{L}{}{\mathcal A}\cdot \prescript{L}{}{\mathcal B}\right)=\prescript{\infty}{}{\mathcal A}\cdot\prescript{\infty}{}{\mathcal B},
\end{align}
\end{lemma}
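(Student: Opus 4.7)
\medskip

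\noindent\textbf{Proof proposal.} The linearity assertion is essentially a triangle inequality. Fix $v\in \widetilde{\mathcal H}^{(r)}_{\infty}(n,k)$; then
\begin{align*}
&\bigl\lVert (\alpha\prescript{\infty}{}{\mathcal A}+\beta\prescript{\infty}{}{\mathcal B})(v)-\sigma^{\infty}_{nL+r}\circ(\alpha\prescript{L}{}{\mathcal A}+\beta\prescript{L}{}{\mathcal B})\circ p^{\infty}_{nL+r}(v)\bigr\rVert\\
&\quad \le |\alpha|\cdot\bigl\lVert \prescript{\infty}{}{\mathcal A}(v)-\sigma^{\infty}_{nL+r}\prescript{L}{}{\mathcal A}\,p^{\infty}_{nL+r}(v)\bigr\rVert+|\beta|\cdot\bigl\lVert \prescript{\infty}{}{\mathcal B}(v)-\sigma^{\infty}_{nL+r}\prescript{L}{}{\mathcal B}\,p^{\infty}_{nL+r}(v)\bigr\rVert,
\end{align*}
and both terms tend to $0$ by hypothesis; uniform degree boundedness of the sum is immediate. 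The first task is therefore routine.

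For the multiplicativity statement, the plan is to insert $\sigma^{\infty}_{nL+r}\circ p^{\infty}_{nL+r}$ between the two factors, exploiting the identity $p^{\infty}_{nL+r}\circ\sigma^{\infty}_{nL+r}=\mathrm{Id}_{\widetilde{\mathcal H}_{nL+r}(n,k)}$ noted just before Proposition \ref{prop:ind limit=proj limit}. Setting $w:=\prescript{\infty}{}{\mathcal B}(v)$ and $w_{L}:=\sigma^{\infty}_{nL+r}\prescript{L}{}{\mathcal B}\,p^{\infty}_{nL+r}(v)$, this identity yields $\sigma^{\infty}_{nL+r}\prescript{L}{}{\mathcal A}\prescript{L}{}{\mathcal B}\,p^{\infty}_{nL+r}(v)=\sigma^{\infty}_{nL+r}\prescript{L}{}{\mathcal A}\,p^{\infty}_{nL+r}(w_{L})$, so I decompose
\begin{align*}
\prescript{\infty}{}{\mathcal A}\prescript{\infty}{}{\mathcal B}(v)-\sigma^{\infty}_{nL+r}\prescript{L}{}{\mathcal A}\prescript{L}{}{\mathcal B}\,p^{\infty}_{nL+r}(v)=\bigl[\prescript{\infty}{}{\mathcal A}(w)-\sigma^{\infty}_{nL+r}\prescript{L}{}{\mathcal A}\,p^{\infty}_{nL+r}(w)\bigr]+\sigma^{\infty}_{nL+r}\prescript{L}{}{\mathcal A}\,p^{\infty}_{nL+r}(w-w_{L}).
\end{align*}
The first bracket tends to $0$ by the hypothesis $\underset{L\to\infty}{\lim}\prescript{L}{}{\mathcal A}=\prescript{\infty}{}{\mathcal A}$ applied to the fixed vector $w$.

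For the second term I will verify a uniform operator bound. Let $C_{\mathcal B}\in\bZ$ be a uniform degree bound for the family $\{\prescript{L}{}{\mathcal B}\}$ and for $\prescript{\infty}{}{\mathcal B}$; if $v\in\widetilde{\mathcal H}^{(r)}_{\infty}(n,k)_{\le e}$, then (since $p^{\infty}_{nL+r}$ and $\sigma^{\infty}_{nL+r}$ preserve the energy grading by Proposition \ref{prop: p_N is gl_n[z] equiv}) both $w$ and $w_{L}$ lie in $\widetilde{\mathcal H}^{(r)}_{\infty}(n,k)_{\le e+C_{\mathcal B}}$, and moreover $\lVert w-w_{L}\rVert\to 0$ by the hypothesis $\underset{L\to\infty}{\lim}\prescript{L}{}{\mathcal B}=\prescript{\infty}{}{\mathcal B}$. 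The restriction of $\sigma^{\infty}_{nL+r}\prescript{L}{}{\mathcal A}\,p^{\infty}_{nL+r}$ to the finite-dimensional subspace $\widetilde{\mathcal H}^{(r)}_{\infty}(n,k)_{\le e+C_{\mathcal B}}$ (which is finite dimensional by Lemma \ref{lem:finiteness}) is a sequence of linear maps into $\widetilde{\mathcal H}^{(r)}_{\infty}(n,k)_{\le e+C_{\mathcal B}+C_{\mathcal A}}$ that converges point-wise, and hence in any matrix norm, to the restriction of $\prescript{\infty}{}{\mathcal A}$ there; as a convergent sequence in a finite-dimensional normed space it is bounded, say by $M>0$. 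Thus
\begin{align*}
\bigl\lVert \sigma^{\infty}_{nL+r}\prescript{L}{}{\mathcal A}\,p^{\infty}_{nL+r}(w-w_{L})\bigr\rVert\le M\,\lVert w-w_{L}\rVert\xrightarrow{L\to\infty}0,
\end{align*}
completing the argument.

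The main obstacle is precisely this uniform operator bound: point-wise convergence alone does not immediately control $\prescript{L}{}{\mathcal A}(w-w_{L})$ when the argument itself varies with $L$. The uniform degree boundedness hypothesis, together with finite dimensionality of each graded piece of $\widetilde{\mathcal H}^{(r)}_{\infty}(n,k)$, is exactly what reduces this issue to boundedness of a convergent matrix sequence, which resolves it cleanly.
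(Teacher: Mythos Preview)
Your proof is correct and follows essentially the same approach as the paper. Both arguments insert $p^{\infty}_{nL+r}\circ\sigma^{\infty}_{nL+r}=\mathrm{Id}$ between the two factors and then split the difference into two terms, each controlled by one of the two convergence hypotheses; the key observation in both cases is that finite-dimensionality of the graded pieces upgrades pointwise convergence to a uniform operator bound. Your decomposition is the mirror of the paper's (you apply the $\mathcal A$-error to the fixed vector $w$ and $\sigma^{\infty}\prescript{L}{}{\mathcal A}\,p^{\infty}$ to the $\mathcal B$-error, whereas the paper does the reverse), and you extract the uniform bound directly from convergent matrix sequences rather than via Lemma~\ref{lem:equivalent def of convergence}, but these are cosmetic differences.
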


\begin{proof}
The first statement (linearity of limit) is straightforward, and we focus on the second statement. Let us fix $D\gg 0$ so that it bounds the degrees of $\{\prescript{L}{}{\mathcal A}\}_{L\in \bN}$, $\{\prescript{L}{}{\mathcal B}\}_{L\in \bN}$, $\prescript{\infty}{}{\mathcal A}$, and $\prescript{\infty}{}{\mathcal B}$. Let $d\in \bN$ and $\epsilon\in \bR_{>0}$ be arbitrary. Let us take $R\in \bR_{>0}$ such that 
\begin{align*}
    (\star)\quad \prescript{\infty}{}{\mathcal B}(\mathbb B_d)\subseteq R\cdot \mathbb B_{d+D},\text{ and }\prescript{\infty}{}{\mathcal A}(\mathbb B_{d+D})\subseteq R\cdot \mathbb B_{d+2D}
\end{align*}
simultaneously hold. Then we take $M\in \bN$ such that $\forall L\ge M$ the following simultaneously hold:
\begin{align*}
   \text{(1)}\quad & p^{\infty}_{nL+r}|_{\widetilde{\mathcal H}^{(r)}_{\infty}(n,k)_{\le d+D}}: \widetilde{\mathcal H}^{(r)}_{\infty}(n,k)_{\le d+D}\to \widetilde{\mathcal H}_{nL+r}(n,k)_{\le d+D}\; \text{ is isomorphism, and}\\
    \text{(2)}\quad &\left(\prescript{\infty}{}{\mathcal B}-\sigma^\infty_{nL+r}\circ \prescript{L}{}{\mathcal B}\circ p^\infty_{nL+r}\right)(\mathbb B_d)\subseteq \frac{\epsilon}{2R}\cdot\mathbb B_{d+D}, \text{ and } \\
    \text{(3)}\quad &\left(\prescript{\infty}{}{\mathcal A}-\sigma^\infty_{nL+r}\circ \prescript{L}{}{\mathcal A}\circ p^\infty_{nL+r}\right)(\mathbb B_{d+D})\subseteq \frac{\epsilon R}{2R^2+\epsilon}\cdot\mathbb B_{d+2D}.
\end{align*}
$M$ exists due to Proposition \ref{prop:p_N stabilizes}, and our assumption that $\underset{L\to \infty}{\lim}\prescript{L}{}{\mathcal A}=\prescript{\infty}{}{\mathcal A}$ and $\underset{L\to \infty}{\lim}\prescript{L}{}{\mathcal B}=\prescript{\infty}{}{\mathcal B}$, and Lemma \ref{lem:equivalent def of convergence}. We then have
\begin{align*}
&\left(\prescript{\infty}{}{\mathcal A}\cdot\prescript{\infty}{}{\mathcal B}-\sigma^\infty_{nL+r}\circ \prescript{L}{}{\mathcal A}\cdot \prescript{L}{}{\mathcal B}\circ p^\infty_{nL+r}\right)(\mathbb B_d)\\
\text{\small by (1)}\; &=\left(\prescript{\infty}{}{\mathcal A}\cdot\prescript{\infty}{}{\mathcal B}-\sigma^\infty_{nL+r}\circ \prescript{L}{}{\mathcal A}\circ p^\infty_{nL+r}\circ \sigma^\infty_{nL+r} \prescript{L}{}{\mathcal B}\circ p^\infty_{nL+r}\right)(\mathbb B_d)\\
&=\left(\prescript{\infty}{}{\mathcal A}\cdot(\prescript{\infty}{}{\mathcal B}-\sigma^\infty_{nL+r} \prescript{L}{}{\mathcal B}\circ p^\infty_{nL+r})+(\prescript{\infty}{}{\mathcal A}-\sigma^\infty_{nL+r}\circ \prescript{L}{}{\mathcal A}\circ p^\infty_{nL+r})\cdot \sigma^\infty_{nL+r}\prescript{L}{}{\mathcal B}\circ p^\infty_{nL+r}\right)(\mathbb B_d)\\
&\subseteq \prescript{\infty}{}{\mathcal A}\circ(\prescript{\infty}{}{\mathcal B}-\sigma^\infty_{nL+r} \prescript{L}{}{\mathcal B}\circ p^\infty_{nL+r})(\mathbb B_d)\\
&\quad +(\prescript{\infty}{}{\mathcal A}-\sigma^\infty_{nL+r}\circ \prescript{L}{}{\mathcal A}\circ p^\infty_{nL+r})\circ \sigma^\infty_{nL+r}\prescript{L}{}{\mathcal B}\circ p^\infty_{nL+r}(\mathbb B_d)\\
\text{\small by $(\star)$+(2)}\;&\subseteq \frac{\epsilon}{2R}\prescript{\infty}{}{\mathcal A}(\mathbb B_{d+D})+(R+\frac{\epsilon}{2R})(\prescript{\infty}{}{\mathcal A}-\sigma^\infty_{nL+r}\circ \prescript{L}{}{\mathcal A}\circ p^\infty_{nL+r})(\mathbb B_{d+D})\\
\text{\small by $(\star)$+(3)}\; &\subseteq \frac{\epsilon}{2}\mathbb B_{d+2D}+\frac{\epsilon}{2}\mathbb B_{d+2D}\\
&\subseteq \epsilon\cdot\mathbb B_{d+2D}.
\end{align*}
This implies that $\underset{L\to \infty}{\lim}\left(\prescript{L}{}{\mathcal A}\cdot \prescript{L}{}{\mathcal B}\right)=\prescript{\infty}{}{\mathcal A}\cdot\prescript{\infty}{}{\mathcal B}$.
\end{proof}

Our next result states that the algebraic inverse limit (for operators with bounded degree) is a special case of conformal limit.

\begin{proposition}\label{prop:inv lim is conf lim}
Let $\{\prescript{L}{}{\mathcal O}\}_{L\in \bN}$ be a collection of linear operators of uniformly bounded degree, such that the following holds for all $L\in \bN:$
\begin{align*}
    \prescript{L}{}{\mathcal O}\circ p_{nL+r}=p_{nL+r}\circ \prescript{L+1}{}{\mathcal O}.
\end{align*}
Then $\underset{L\to \infty}{\lim}\prescript{L}{}{\mathcal O}$ exists and it equals to $\underset{\substack{\longleftarrow\\L}}{\lim} \prescript{L}{}{\mathcal O}$, the inverse limit of $\{\prescript{L}{}{\mathcal O}\}_{L\in \bN}$.
\end{proposition}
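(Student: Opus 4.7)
The plan is to produce the candidate operator as the algebraic inverse limit and then verify that the definition of conformal limit is actually satisfied eventually at each fixed degree.

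First I would construct $\prescript{\infty}{}{\mathcal O} := \underset{\substack{\longleftarrow\\L}}{\lim}\prescript{L}{}{\mathcal O}$. The compatibility hypothesis $\prescript{L}{}{\mathcal O}\circ p_{nL+r}=p_{nL+r}\circ \prescript{L+1}{}{\mathcal O}$ means that the family $\{\prescript{L}{}{\mathcal O}\}$ descends to a well-defined linear endomorphism of $\widetilde{\mathcal H}^{(r)}_{\infty}(n,k)$ characterized by the property
\begin{equation*}
p^\infty_{nL+r}\circ \prescript{\infty}{}{\mathcal O}=\prescript{L}{}{\mathcal O}\circ p^\infty_{nL+r},\qquad \forall L\in\bN.
\end{equation*}
Since each $p^\infty_{nL+r}$ preserves the energy grading (Proposition \ref{prop: p_N is gl_n[z] equiv}) and each $\prescript{L}{}{\mathcal O}$ shifts degree by at most a uniform $C$, this displayed equation forces $\prescript{\infty}{}{\mathcal O}(\widetilde{\mathcal H}^{(r)}_{\infty}(n,k)_d)\subseteq \widetilde{\mathcal H}^{(r)}_{\infty}(n,k)_{\le d+C}$, so $\prescript{\infty}{}{\mathcal O}$ has bounded degree with the same bound $C$.

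Next I would verify convergence. Fix $v\in \widetilde{\mathcal H}^{(r)}_{\infty}(n,k)$ and choose $d$ with $v\in \widetilde{\mathcal H}^{(r)}_{\infty}(n,k)_{\le d}$; then $\prescript{\infty}{}{\mathcal O}(v)\in \widetilde{\mathcal H}^{(r)}_{\infty}(n,k)_{\le d+C}$. By Proposition \ref{prop:p_N stabilizes}, the projection $p^\infty_{nL+r}$ restricts to an isomorphism on $\widetilde{\mathcal H}^{(r)}_{\infty}(n,k)_{\le d+C}$ for all sufficiently large $L$, and since $\sigma^\infty_{nL+r}$ is a section of $p^\infty_{nL+r}$, this forces $\sigma^\infty_{nL+r}\circ p^\infty_{nL+r}$ to act as the identity on $\widetilde{\mathcal H}^{(r)}_{\infty}(n,k)_{\le d+C}$ for such $L$. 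Therefore, for all sufficiently large $L$,
\begin{equation*}
\sigma^\infty_{nL+r}\circ \prescript{L}{}{\mathcal O}\circ p^\infty_{nL+r}(v)
=\sigma^\infty_{nL+r}\circ p^\infty_{nL+r}\circ \prescript{\infty}{}{\mathcal O}(v)
=\prescript{\infty}{}{\mathcal O}(v),
\end{equation*}
so in particular $\lVert \prescript{\infty}{}{\mathcal O}(v)-\sigma^\infty_{nL+r}\circ \prescript{L}{}{\mathcal O}\circ p^\infty_{nL+r}(v)\rVert=0$ for all $L\gg 0$. This proves $\underset{L\to \infty}{\lim}\prescript{L}{}{\mathcal O}=\prescript{\infty}{}{\mathcal O}$ in the sense of Definition \ref{def:conformal limit of operators}.

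There is no serious obstacle here: the argument is essentially a bookkeeping exercise combining the universal property of the inverse limit with the stabilization result Proposition \ref{prop:p_N stabilizes}. The only point that requires a moment's care is to make sure one restricts to the correct degree range $d+C$ (not just $d$) before invoking stabilization, so that $\prescript{\infty}{}{\mathcal O}(v)$ lies in the stable range; the uniform degree bound on $\{\prescript{L}{}{\mathcal O}\}$ is what makes this uniform choice possible.
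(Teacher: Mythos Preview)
Your proof is correct and follows essentially the same approach as the paper: construct the inverse limit operator, use the intertwining relation $p^\infty_{nL+r}\circ \prescript{\infty}{}{\mathcal O}=\prescript{L}{}{\mathcal O}\circ p^\infty_{nL+r}$, and invoke Proposition~\ref{prop:p_N stabilizes} on the degree range $\le d+C$ so that $\sigma^\infty_{nL+r}\circ p^\infty_{nL+r}$ becomes the identity there for $L\gg 0$. Your explicit verification that $\prescript{\infty}{}{\mathcal O}$ has bounded degree, and your remark about needing $d+C$ rather than $d$, are nice points of care that the paper leaves implicit.
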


\begin{proof}
Let us denote $\mathcal O:=\underset{\substack{\longleftarrow\\L}}{\lim} \prescript{L}{}{\mathcal O}$. By the definition of inverse limit, the equation $\prescript{L}{}{\mathcal O}\circ p^\infty_{nL+r}=p^\infty_{nL+r}\circ {\mathcal O}$ holds for all $L\in\bN$. Suppose that $C$ uniformly bounds the degrees of $\{\prescript{L}{}{\mathcal O}\}_{L\in \bN}$. Take arbitrary $d\in \bN$, then $\sigma^\infty_{nL+r}\circ p^\infty_{nL+r}|_{\widetilde{\mathcal H}^{(r)}_{\infty}(n,k)_{\le d+C}}=\mathrm{Id}_{\widetilde{\mathcal H}^{(r)}_{\infty}(n,k)_{\le d+C}}$ for $L\gg 0$ by Proposition \ref{prop:p_N stabilizes}. For arbitrary $v\in \widetilde{\mathcal H}^{(r)}_{\infty}(n,k)_{\le d}$, we have
\begin{align*}
    &\quad \underset{L\to \infty}{\lim}\lVert \mathcal O(v)-\sigma^\infty_{nL+r}\circ\prescript{L}{}{\mathcal O}\circ p^\infty_{nL+r}(v)\rVert\\
    &=\underset{L\to \infty}{\lim}\lVert \sigma^\infty_{nL+r}\circ p^\infty_{nL+r}\circ \mathcal O(v)-\sigma^\infty_{nL+r}\circ\prescript{L}{}{\mathcal O}\circ p^\infty_{nL+r}(v)\rVert\\
    &=\underset{L\to \infty}{\lim}\lVert \sigma^\infty_{nL+r}\circ (p^\infty_{nL+r}\circ \mathcal O-\prescript{L}{}{\mathcal O}\circ p^\infty_{nL+r})(v)\rVert\\
    &=0.
\end{align*}
This finishes the proof.
\end{proof}

\begin{corollary}\label{cor:limit of gl_n[z]}
Fix $r\in \{0,\cdots,n-1\}$. For every $m\in \bZ_{\ge 0}$, denote by $\prescript{L}{}{\mathsf T}_{0,m}(E^a_b)$ the operator acting on $\widetilde{\mathcal H}_{nL+r}(n,k)$ which is defined in \eqref{DDCA action}. Then we have
\begin{align}
    \underset{L\to \infty}{\lim}\: \left(\prescript{L}{}{\mathsf T}_{0,m}(E^a_b)-\delta_{m=0}\delta^a_bkL\cdot\mathrm{Id}\right)=J^a_{b,-m}. 
\end{align}
\end{corollary}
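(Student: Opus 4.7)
The plan is to reduce the statement to the inverse-limit identification already established, via Proposition~\ref{prop:inv lim is conf lim}. First I would unpack the operators on each side. Looking at the DDCA assignment \eqref{DDCA action}, the operator $\prescript{L}{}{\mathsf T}_{0,m}(E^a_b)$ on $\widetilde{\mathcal H}_{nL+r}(n,k)$ is simply $A^a_i (X^m)^i_j \partial/\partial A^b_j$ (no symmetrization is needed when $p=0$), so comparing with the twisted $\gl_n[z]$-action in Definition~\ref{def:p_N and sigma_N} (formula \eqref{eq:twisted gl_n[z] action}), the corrected operator
\begin{align*}
    \prescript{L}{}{\mathsf T}_{0,m}(E^a_b)-\delta_{m=0}\delta^a_b kL\cdot \mathrm{Id}
\end{align*}
is exactly the image of $E^a_b\otimes z^m\in \gl_n[z]$ under the twisted action on $\widetilde{\mathcal H}_{nL+r}(n,k)$. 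By convention just before the corollary, this is also precisely how $J^a_{b,-m}$ (for $m\ge 0$) acts on $\widetilde{\mathcal H}^{(r)}_\infty(n,k)$ in the inverse-limit $\gl_n[z]$-module structure of Remark~\ref{rmk:p_N are module maps}.

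Next I would check the two hypotheses of Proposition~\ref{prop:inv lim is conf lim}. Uniform bounded degree is immediate: with respect to the energy grading \eqref{eq:energy grading}, the operator $A^a X^m B_b$ is homogeneous of degree $m$, and since the shifted energy grading on $\widetilde{\mathcal H}_{nL+r}(n,k)$ differs from the energy grading only by an additive constant \eqref{eq:twisted energy grading}, the operator remains homogeneous of degree $m$ (independent of $L$). The scalar correction $\delta_{m=0}\delta^a_b kL\cdot\mathrm{Id}$ has degree $0$, so the whole sequence is uniformly of degree $m$. Compatibility with the projections $p_{nL+r}$ is precisely the statement that $p_N$ is $\gl_n[z]$-equivariant with respect to the twisted action, which is Proposition~\ref{prop: p_N is gl_n[z] equiv}.

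With both hypotheses verified, Proposition~\ref{prop:inv lim is conf lim} yields
\begin{align*}
    \underset{L\to\infty}{\lim}\bigl(\prescript{L}{}{\mathsf T}_{0,m}(E^a_b)-\delta_{m=0}\delta^a_b kL\cdot \mathrm{Id}\bigr)=\underset{\substack{\longleftarrow\\L}}{\lim}\bigl(\prescript{L}{}{\mathsf T}_{0,m}(E^a_b)-\delta_{m=0}\delta^a_b kL\cdot \mathrm{Id}\bigr),
\end{align*}
and by the identification in the first paragraph, the right-hand side is the action of $E^a_b\otimes z^m$ on $\widetilde{\mathcal H}^{(r)}_\infty(n,k)$, i.e.\ $J^a_{b,-m}$. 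There is no real obstacle here: the statement is essentially a bookkeeping consequence of the $\gl_n[z]$-equivariance of the transition maps, arranged so that the non-equivariant shift of the naive $\mathsf T_{0,0}(E^a_b)$ by $kL$ precisely compensates the central character twist introduced in Definition~\ref{def:p_N and sigma_N}.
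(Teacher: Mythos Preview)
Your proposal is correct and follows essentially the same approach as the paper: both identify the shifted operators $\prescript{L}{}{\mathsf T}_{0,m}(E^a_b)-\delta_{m=0}\delta^a_b kL\cdot\mathrm{Id}$ with the twisted $\gl_n[z]$-action \eqref{eq:twisted gl_n[z] action}, note that $J^a_{b,-m}$ is by definition their inverse limit, and apply Proposition~\ref{prop:inv lim is conf lim}. Your version is more explicit about verifying the uniform bounded degree hypothesis and the $p_N$-compatibility via Proposition~\ref{prop: p_N is gl_n[z] equiv}, which the paper leaves implicit.
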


\begin{proof}
Since $J^a_{b,-m}$ is by definition the inverse limit of the collection of maps $\{\prescript{L}{}{\mathsf T}_{0,m}(E^a_b)-\delta_{m=0}\delta^a_bkL\cdot\mathrm{Id}\}_{L\in \bN}$ (see \eqref{eq:twisted gl_n[z] action}), the result follows from Proposition \ref{prop:inv lim is conf lim}.
\end{proof}

\begin{corollary}\label{cor:limit of Yangian}
Fix $r\in \{0,\cdots,n-1\}$. For every $m\in \bZ_{\ge 0}$, let $\prescript{\infty}{}{T}^a_{b;m}$ be the Yangian generators acting on $\widetilde{\mathcal H}^{(r)}_{\infty}(n,k)$ which is defined in Definition \ref{def:Yangian notation_infinite}, then we have
\begin{align}\label{eq:limit of Yangian}
    \underset{L\to \infty}{\lim}\: \prescript{L}{}{\widetilde{T}}^a_{b;m}=\prescript{\infty}{}{T}^a_{b;m}. 
\end{align}
\end{corollary}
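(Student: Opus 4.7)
The plan is to deduce Corollary \ref{cor:limit of Yangian} as an immediate consequence of Proposition \ref{prop:inv lim is conf lim}, applied to the family $\{\prescript{L}{}{\widetilde{T}}^a_{b;m}\}_{L \in \bN}$. Two ingredients are needed: compatibility with the transition maps $p_{nL+r}$, and uniform boundedness of degree with respect to the shifted energy grading.

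Compatibility is essentially free: Theorem \ref{thm:Yangian comptible with p_N and sigma_N} states that $p_{nL+r} \circ \prescript{L+1}{}{\widetilde{T}}^a_b(u) = \prescript{L}{}{\widetilde{T}}^a_b(u) \circ p_{nL+r}$ as formal power series in $u^{-1}$, which upon extracting the coefficient of $u^{-m-1}$ yields $p_{nL+r} \circ \prescript{L+1}{}{\widetilde{T}}^a_{b;m} = \prescript{L}{}{\widetilde{T}}^a_{b;m} \circ p_{nL+r}$ for every $m \in \bN$. So the hypothesis of Proposition \ref{prop:inv lim is conf lim} concerning transition maps is verified.

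For the boundedness of degrees, I would expand the defining formula \eqref{eq:Yangian} as a formal power series in $u^{-1}$. Writing
\[
\frac{1}{u+(k+n)L - XY} = \sum_{j \ge 0}\frac{(XY)^{j}}{(u+(k+n)L)^{j+1}},
\]
one sees that each coefficient $\prescript{L}{}{\widetilde{T}}^a_{b;m}$ is a $\bC$-linear combination, with $L$-dependent scalar coefficients, of the identity operator and of operators of the form $A^a (XY)^j B_b$ for $0 \le j \le m$. Under the shifted energy grading of Definition \ref{def:p_N and sigma_N}, the variables $X, A$ carry degrees $1, 0$ and their duals $Y = \partial/\partial X$, $B = \partial/\partial A$ carry degrees $-1, 0$, so each $A^a (XY)^j B_b$ is homogeneous of degree $0$. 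Hence every $\prescript{L}{}{\widetilde{T}}^a_{b;m}$ has degree $0$ — uniformly in $L$.

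With these two facts in hand, Proposition \ref{prop:inv lim is conf lim} identifies the conformal limit $\lim_{L \to \infty} \prescript{L}{}{\widetilde{T}}^a_{b;m}$ with the algebraic projective limit $\varprojlim_L \prescript{L}{}{\widetilde{T}}^a_{b;m}$, which is by Definition \ref{def:Yangian notation_infinite} precisely $\prescript{\infty}{}{\widetilde{T}}^a_{b;m}$. Since both required hypotheses follow from results already established in the paper, there is essentially no obstacle to overcome here; the corollary is a clean application of the framework developed in Section \ref{subsec:conformal limit of operators}.
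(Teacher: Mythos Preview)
Your proposal is correct and follows exactly the same approach as the paper's proof, which simply invokes Proposition~\ref{prop:inv lim is conf lim} after noting that $\prescript{\infty}{}{\widetilde{T}}^a_{b;m}$ is by definition the inverse limit of $\prescript{L}{}{\widetilde{T}}^a_{b;m}$. Your write-up is more explicit in verifying the two hypotheses (compatibility with $p_N$ via Theorem~\ref{thm:Yangian comptible with p_N and sigma_N}, and degree~$0$ via the explicit expansion), which the paper leaves implicit.
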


\begin{proof}
Since $\prescript{\infty}{}{\widetilde{T}}^a_{b;m}$ is the inverse limit of $\prescript{L}{}{\widetilde{T}}^a_{b;m}$ (see \eqref{eq:Yangian} and Definition \ref{def:Yangian notation_infinite}), the result follows from Proposition \ref{prop:inv lim is conf lim}.
\end{proof}

To obtain more conformal limits, we shall use the following criterion.

\begin{theorem}\label{thm:recursion}
Let $\{\prescript{L}{}{\mathcal O}\}_{L\in \bN}$ be a collection of linear operators with negative degree, i.e. $\prescript{L}{}{\mathcal O}(\widetilde{\mathcal H}_{nL+r}(n,k)_{d})\subseteq \widetilde{\mathcal H}_{nL+r}(n,k)_{<d}$ holds for all $L,d\in \bN$. Suppose that $\mathcal O$ is a linear operator on $\widetilde{\mathcal H}^{(r)}_{\infty}(n,k)$ with negative degree. Then $\underset{L\to \infty}{\lim}\prescript{L}{}{\mathcal O}={\mathcal O}$ if and only if the followings hold $\forall m\in \bZ_{\ge 2}$ and $\forall 1\le b<a\le n$ and $\forall 1\le c,d\le n:$
\begin{equation}\label{eq:recursion}
\begin{split}
\underset{L\to \infty}{\lim}[\prescript{L}{}{\mathcal O},\prescript{L}{}{\mathsf T}_{0,0}(E^a_b)]&=[{\mathcal O}, J^a_{b,-i}],\quad\text{and}\\
\underset{L\to \infty}{\lim}[\prescript{L}{}{\mathcal O},\prescript{L}{}{\mathsf T}_{0,1}(E^c_d)]&=[{\mathcal O}, J^c_{d,-1}],\quad\text{and}\\
\underset{L\to \infty}{\lim}[\prescript{L}{}{\mathcal O},\prescript{L}{}{\mathsf t}_{0,m}]&=[{\mathcal O}, \alpha_{-m}/k].
\end{split}
\end{equation}
\end{theorem}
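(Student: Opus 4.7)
The ``only if" direction is immediate from Lemma \ref{lem:linearity and multiplication}. Since scalar multiples of the identity commute with everything, $[\prescript{L}{}{\mathcal O}, \prescript{L}{}{\mathsf T}_{0,0}(E^a_b)]=[\prescript{L}{}{\mathcal O}, \prescript{L}{}{\mathsf T}_{0,0}(E^a_b)-\delta^a_bkL\cdot\mathrm{Id}]$; combining $\lim_L\prescript{L}{}{\mathcal O}=\mathcal O$ with Corollary \ref{cor:limit of gl_n[z]} (which gives $\lim_L(\prescript{L}{}{\mathsf T}_{0,0}(E^a_b)-\delta^a_bkL\cdot\mathrm{Id})=J^a_{b,0}$) and multiplicativity of conformal limits then yields $\lim_L[\prescript{L}{}{\mathcal O},\prescript{L}{}{\mathsf T}_{0,0}(E^a_b)]=[\mathcal O, J^a_{b,0}]$. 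The second identity follows identically from $\lim_L\prescript{L}{}{\mathsf T}_{0,1}(E^c_d)=J^c_{d,-1}$ (no shift is needed), and the third from $k\prescript{L}{}{\mathsf t}_{0,m}=\sum_a\prescript{L}{}{\mathsf T}_{0,m}(E^a_a)\to\sum_aJ^a_{a,-m}=\alpha_{-m}$, so $\lim_L\prescript{L}{}{\mathsf t}_{0,m}=\alpha_{-m}/k$ for $m\ge 2$.

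For the ``if" direction the plan is a PBW-type induction. The key algebraic observation is that the limit operators $J^a_{b,0}$ ($a>b$), $J^c_{d,-1}$, and $\alpha_{-m}$ ($m\ge 2$) generate, via iterated brackets, the negative Lie subalgebra $\widehat{\gl}(n)_k^-:=\mathfrak n^-\oplus\bigoplus_{m\ge 1}\gl_n\otimes z^m$ of $\widehat{\gl}(n)_k$ (no central term arises since all spectral indices stay negative). Using $[J^a_{b,p},J^c_{d,q}]=\delta^c_bJ^a_{d,p+q}-\delta^a_dJ^c_{b,p+q}$, an induction on $m$ based on $[J^a_{c,-1},J^c_{b,-(m-1)}]=J^a_{b,-m}$ ($a\ne b$) recovers all off-diagonal modes, and Cartan differences $J^a_{a,-m}-J^b_{b,-m}=[J^a_{b,0},J^b_{a,-m}]$ combined with $\alpha_{-m}$ isolate the individual diagonal modes. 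By Corollary \ref{cor:affine gl(n) action on conformal limit}, $\widetilde{\mathcal H}^{(r)}_\infty(n,k)$ is an irreducible highest weight $\widehat{\gl}(n)_k$-module, so $\widetilde{\mathcal H}^{(r)}_\infty(n,k)=U(\widehat{\gl}(n)_k^-)\cdot v_0$ for a highest weight vector $v_0\in\widetilde{\mathcal H}^{(r)}_\infty(n,k)_0$; expanding the iterated Lie brackets, every $v$ is a finite linear combination of products $\mathcal G_1\mathcal G_2\cdots\mathcal G_rv_0$ with each $\mathcal G_i$ in the set $\mathcal S:=\{\mathsf T_{0,0}(E^a_b):a>b\}\cup\{\mathsf T_{0,1}(E^c_d)\}\cup\{\mathsf t_{0,m}:m\ge 2\}$.

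We induct on the length $r$ of such an expression. The base case $r=0$ is trivial: $v_0$ sits in degree zero and both $\mathcal O$ and $\prescript{L}{}{\mathcal O}$ have strictly negative degree, so both sides vanish. For the inductive step, write $v=\mathcal G w$ with $w$ of length $r-1$, let $\mathcal G^L$ denote the action on $\widetilde{\mathcal H}_{nL+r}(n,k)$, and use $\gl_n[z]$-equivariance of $p^\infty$ (Proposition \ref{prop: p_N is gl_n[z] equiv}; note that $\mathsf t_{0,m}$ is $1/k$ times the action of $\mathbf{1}\otimes z^m\in U(\gl_n[z])$) to split
\begin{align*}
\sigma^\infty\prescript{L}{}{\mathcal O}p^\infty(v)=\sigma^\infty[\prescript{L}{}{\mathcal O},\mathcal G^L]p^\infty(w)+\sigma^\infty\mathcal G^L\prescript{L}{}{\mathcal O}p^\infty(w).
\end{align*}
The first term converges to $[\mathcal O,\mathcal G^\infty]w$ by applying the hypothesis \eqref{eq:recursion} at the vector $w$. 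For the second, set $u_L:=\prescript{L}{}{\mathcal O}p^\infty(w)$; these vectors have uniformly bounded degree, so by Proposition \ref{prop:p_N stabilizes} for $L\gg 0$ the map $p^\infty_{nL+r}$ is an isomorphism on the degree subspace of $\widetilde{\mathcal H}^{(r)}_\infty(n,k)$ containing both $\sigma^\infty(u_L)$ and $\mathcal G\sigma^\infty(u_L)$; on that subspace $\sigma^\infty_{nL+r}$ is the inverse of $p^\infty_{nL+r}$, and $\gl_n[z]$-equivariance of $p^\infty$ forces $\sigma^\infty\mathcal G^L=\mathcal G\sigma^\infty$. The inductive hypothesis $\lim_L\sigma^\infty u_L=\mathcal O w$, combined with the continuity of the fixed bounded-degree operator $\mathcal G$ on finite-dimensional subspaces, gives $\lim_L\sigma^\infty\mathcal G^Lu_L=\mathcal G^\infty\mathcal O(w)$, and summing yields $\lim_L\sigma^\infty\prescript{L}{}{\mathcal O}p^\infty(v)=[\mathcal O,\mathcal G^\infty]w+\mathcal G^\infty\mathcal O(w)=\mathcal O(v)$. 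The most delicate step is the generation claim for $\widehat{\gl}(n)_k^-$; in particular, extracting the individual Cartan modes $J^a_{a,-m}$ from the trace $\alpha_{-m}$ alone is impossible, and one must mix in the level-zero operators $\mathsf T_{0,0}(E^a_b)$ with $a>b$ to produce the requisite Cartan differences.
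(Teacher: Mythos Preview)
Your proof is correct and follows essentially the same route as the paper's: the ``only if'' via multiplicativity of conformal limits (Lemma~\ref{lem:linearity and multiplication}) and Corollary~\ref{cor:limit of gl_n[z]}, and the ``if'' via the observation that the set $\mathcal S$ generates $\widehat{\gl}(n)_k^-$ so that every vector is a finite product of these operators on a highest weight vector, followed by induction on product length using the stabilization of $p^\infty$ (Proposition~\ref{prop:p_N stabilizes}) to commute $\sigma^\infty$ past $\mathcal G^L$. Your treatment of the Lie-algebra generation step is slightly more explicit than the paper's one-line remark that $\bar J_{<-1}$ comes from iterated brackets of $\bar J_{-1}$, but the inductive core is identical.
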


\begin{proof}
The ``only if'' part follows from Lemma \ref{lem:linearity and multiplication} and Corollary \ref{cor:limit of gl_n[z]}. By \ref{prop:cyclic at limit}, $\widetilde{\mathcal H}^{(r)}_{\infty}(n,k)$ is a highest weight module thus it is generated from a highest weight vector $\omega\in \widetilde{\mathcal H}^{(r)}_{\infty}(n,k)_0$ by the actions of $\{J^a_{b,0}\:|\:1\le b<a\le n\}\cup \{J^c_{d,-m}\:|\:1\le c,d\le n,\; m\in \bZ_{\ge 1}\}$. $\bar J_{<-1}$ can be obtained from taking iterative commutators between $\bar J_{-1}$, therefore $\widetilde{\mathcal H}^{(r)}_{\infty}(n,k)$ is generated from $\omega$ by the actions of operators in the set
\begin{align*}
    \mathcal G=\{J^a_{b,0}\:|\:1\le b<a\le n\}\cup\{J^c_{d,-1}\:|\:1\le c,d\le n\}\cup\{\alpha_{-m}\:|\:m\in \bZ_{\ge 2}\}
\end{align*}
We note that every operator $\mathcal A\in \mathcal G$ is an inverse limit, i.e. $\mathcal A=\underset{\substack{\longleftarrow\\L}}{\lim} \prescript{L}{}{\mathcal A}$. The condition \eqref{eq:recursion} is equivalent to
\begin{align}\label{eq:recursion 2}
    \underset{L\to \infty}{\lim}[\prescript{L}{}{\mathcal O},\prescript{L}{}{\mathcal A}]&=[{\mathcal O}, \mathcal A], \quad \forall \mathcal A\in \mathcal G.
\end{align}
Suppose that \eqref{eq:recursion 2} holds, then we shall prove that $\forall v\in \widetilde{\mathcal H}^{(r)}_{\infty}(n,k)$, $\underset{L\to \infty}{\lim}\lVert {\mathcal O}(v)-\sigma^\infty_{nL+r}\circ \prescript{L}{}{\mathcal O}\circ p^\infty_{nL+r}(v)\rVert=0$. Since $v$ is a linear combination of operators of form $\mathcal A_1\cdots\mathcal A_\ell (\omega)$ and $\mathcal A_i\in \mathcal G$, we can prove the statement for $\mathcal A_1\cdots\mathcal A_\ell (\omega)$ and the statement for $v$ follows from triangle inequality. We proceed by induction on $\ell$. When $\ell=0$, the statement is automatic because $\mathcal O(\omega)=0$ and $\prescript{L}{}{\mathcal O}\circ p^\infty_{nL+r}(\omega)$ by our assumption that $\mathcal O$ and $\prescript{L}{}{\mathcal O}$ have negative degree. Suppose that the statement has been proven for $v'=\mathcal A_2\cdots\mathcal A_\ell (\omega)$, then we have
\begin{align}\label{eq:recursion 3}
\begin{split}
&\underset{L\to \infty}{\lim}\lVert {\mathcal O}(\mathcal A_1(v'))-\sigma^\infty_{nL+r}\circ \prescript{L}{}{\mathcal O}\circ p^\infty_{nL+r}(\mathcal A_1(v'))\rVert\\
    &=\underset{L\to \infty}{\lim}\lVert {\mathcal O}\circ\mathcal A_1(v')-\sigma^\infty_{nL+r}\circ \prescript{L}{}{\mathcal O}\circ \prescript{L}{}{\mathcal A_1}\circ p^\infty_{nL+r}(v')\rVert\\
   \text{\small by triangle inequality}\; &\le \underset{L\to \infty}{\lim}\lVert [{\mathcal O},\mathcal A_1](v')-\sigma^\infty_{nL+r}\circ [\prescript{L}{}{\mathcal O}, \prescript{L}{}{\mathcal A_1}]\circ p^\infty_{nL+r}(v')\rVert\\
   &\quad +\underset{L\to \infty}{\lim}\lVert \mathcal A_1\circ {\mathcal O}(v')-\sigma^\infty_{nL+r}\circ \prescript{L}{}{\mathcal A_1}\circ \prescript{L}{}{\mathcal O}\circ p^\infty_{nL+r}(v')\rVert\\
   \text{\small by \eqref{eq:recursion 2}}\; &=\underset{L\to \infty}{\lim}\lVert \mathcal A_1\circ {\mathcal O}(v')-\sigma^\infty_{nL+r}\circ \prescript{L}{}{\mathcal A_1}\circ \prescript{L}{}{\mathcal O}\circ p^\infty_{nL+r}(v')\rVert\\
   \text{(!)}\; &=\underset{L\to \infty}{\lim}\lVert \mathcal A_1\circ {\mathcal O}(v')-\mathcal A_1\circ\sigma^\infty_{nL+r}\circ \prescript{L}{}{\mathcal O}\circ p^\infty_{nL+r}(v')\rVert\\
    \text{\small by induction and continuity of $\mathcal A_1$}\;&=0.
\end{split}
\end{align}
Here in the step labelled by (!), we apply the Proposition \ref{prop:p_N stabilizes}. Namely for $L\gg 0$ and fixed $n,k,d$,  $$\sigma^{\infty}_{nL+r}\circ p^{\infty}_{nL+r}|_{\widetilde{\mathcal H}^{(r)}_{\infty}(n,k)_{\le d}}=\mathrm{Id},$$
which implies that $\mathcal A_1\circ\sigma^\infty_{nL+r}|_{\widetilde{\mathcal H}_{nL+r}(n,k)_{\le d}}=\sigma^\infty_{nL+r}\circ \prescript{L}{}{\mathcal A_1}|_{\widetilde{\mathcal H}_{nL+r}(n,k)_{\le d}}$ for $L\gg 0$. Then we take $d$ such that $v'\in \widetilde{\mathcal H}^{(r)}_{\infty}(n,k)_{\le d}$. This finishes the induction step and therefore concludes the proof.
\end{proof}

\subsubsection{Order of error terms}

\begin{definition}\label{def:speed of convergence}
Fix $r\in \{0,\cdots,n-1\}$ and $d\in \bZ$. Let $\{\prescript{L}{}{\mathcal O}\in \End(\widetilde{\mathcal H}_{nL+r}(n,k))\}_{L\in \bN}$ be a collection of linear operators of uniformly bounded degree. Let $h>0$, then we say that $\prescript{L}{}{\mathcal O}$ converges to $\prescript{\infty}{}{\mathcal O}$ with error term of order $O(L^{-h})$, notation:
\begin{align*}
    \prescript{L}{}{\mathcal O}\xrightarrow{O(L^{-h})}\prescript{\infty}{}{\mathcal O},
\end{align*}
if for all $v\in \widetilde{\mathcal H}_{nL+r}(n,k)$ there exists constant $C_v>0$ such that
\begin{align}
    \lVert \prescript{\infty}{}{\mathcal O}(v)-\sigma^\infty_{nL+r}\circ \prescript{L}{}{\mathcal O}\circ p^\infty_{nL+r}(v)\rVert \le C_v L^{-h}\text{ holds for all }L.
\end{align}
Here $\rVert\cdot\rVert$ is a norm on $\widetilde{\mathcal H}^{(r)}_{\infty}(n,k)$.
\end{definition}

Similar argument as in the Remark \ref{rmk:independence of norm} shows that the definition does not depend on the choice of the norm $\rVert\cdot\rVert$. In the below we list some properties which are counterparts of the corresponding statements in the above.

\begin{lemma}[{cf. Lemma \ref{lem:scalar limit}}]\label{lem:scalar limit_error term}
Suppose that $\{\alpha_L\in \bC\}_{L\in \bN}$ is a sequence of complex numbers, and assume there exists $\alpha\in \bC$ and $C>0$ and $h>0$ such that $|\alpha_L-\alpha|\le C L^{-h}$ holds for all $L$, then
\begin{align}
    {\alpha_L\cdot \mathrm{Id}}_{\widetilde{\mathcal H}_{nL+r}(n,k)}\xrightarrow{O(L^{-h})}\alpha\cdot\mathrm{Id}_{\widetilde{\mathcal H}^{(r)}_{\infty}(n,k)}.
\end{align}
\end{lemma}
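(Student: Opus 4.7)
The plan is to adapt the proof of Lemma \ref{lem:scalar limit} by quantifying the error, and then to handle the finite set of ``small'' $L$ by enlarging the constant.

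Fix an arbitrary $v \in \widetilde{\mathcal H}^{(r)}_{\infty}(n,k)$. By Lemma \ref{lem:finiteness} we may choose $d \in \bN$ so that $v \in \widetilde{\mathcal H}^{(r)}_{\infty}(n,k)_{\le d}$. By Proposition \ref{prop:p_N stabilizes}, there exists $M_v \in \bN$ such that for all $L \ge M_v$ the map $p^\infty_{nL+r}$ restricts to an isomorphism on $\widetilde{\mathcal H}^{(r)}_{\infty}(n,k)_{\le d}$, hence $\sigma^\infty_{nL+r} \circ p^\infty_{nL+r}(v) = v$. For such $L$ we have
\begin{align*}
\bigl\| \alpha v - \alpha_L \,\sigma^\infty_{nL+r} \circ p^\infty_{nL+r}(v) \bigr\|
= |\alpha - \alpha_L| \cdot \|v\|
\le C \|v\| \, L^{-h}.
\end{align*}

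For the remaining finite set $\{L : 1 \le L < M_v\}$, the quantity $\bigl\| \alpha v - \alpha_L \,\sigma^\infty_{nL+r} \circ p^\infty_{nL+r}(v) \bigr\|$ takes only finitely many values, so there exists $K_v > 0$ with $\bigl\| \alpha v - \alpha_L \,\sigma^\infty_{nL+r} \circ p^\infty_{nL+r}(v) \bigr\| \le K_v$ for all $L < M_v$. Since $L^{-h} \ge M_v^{-h}$ for $L < M_v$, we have $K_v \le K_v M_v^{h} \cdot L^{-h}$. Setting $C_v := \max\{C\|v\|,\, K_v M_v^{h}\}$ yields the required uniform bound
\begin{align*}
\bigl\| \alpha v - \alpha_L \,\sigma^\infty_{nL+r} \circ p^\infty_{nL+r}(v) \bigr\| \le C_v L^{-h} \quad \text{for all } L \ge 1,
\end{align*}
which is exactly the statement $\alpha_L \cdot \mathrm{Id} \xrightarrow{O(L^{-h})} \alpha \cdot \mathrm{Id}$ by Definition \ref{def:speed of convergence}. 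The argument is essentially immediate once one observes that the only substantive content is the scalar convergence rate $|\alpha_L - \alpha| = O(L^{-h})$; the ``stabilization'' of the projection-section composition (Proposition \ref{prop:p_N stabilizes}) removes any complication from the operator-theoretic side. There is no real obstacle; the only subtlety is the formal step of absorbing the initial finitely many terms into the constant, which is standard.
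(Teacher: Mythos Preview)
Your proof is correct and follows essentially the same approach as the paper's: use Proposition \ref{prop:p_N stabilizes} to reduce to a scalar estimate for large $L$, then absorb the finitely many small $L$ into the constant $C_v$. The paper writes the constant as $C_v = \max\bigl(\max_{L\le M}(L^h\lVert\cdots\rVert),\, C\lVert v\rVert\bigr)$, which is equivalent to your $\max\{C\|v\|,\,K_v M_v^{h}\}$ up to a harmless enlargement.
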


\begin{proof}
Take arbitrary $d\in \bN$, then there exists $M>0$ such that $\sigma^\infty_{nL+r}\circ p^\infty_{nL+r}|_{\widetilde{\mathcal H}^{(r)}_{\infty}(n,k)_{\le d}}=\mathrm{Id}_{\widetilde{\mathcal H}^{(r)}_{\infty}(n,k)_{\le d}}$ for all $L\ge M$ by Proposition \ref{prop:p_N stabilizes}. For arbitrary $v\in \widetilde{\mathcal H}^{(r)}_{\infty}(n,k)_{\le d}$ and $L\ge M$, we have
\begin{align*}
    \lVert \alpha v-\alpha_L\sigma^\infty_{nL+r}\circ p^\infty_{nL+r}(v)\rVert=|\alpha_L-\alpha|\lVert v\rVert\le C \lVert v\rVert L^{-h}.
\end{align*}
So we can take $C_v=\max(\max_{L\le M}(L^h\lVert \prescript{\infty}{}{\mathcal O}(v)-\sigma^\infty_{nL+r}\circ \prescript{L}{}{\mathcal O}\circ p^\infty_{nL+r}(v)\rVert), C\lVert v\rVert)$.
\end{proof}

\begin{lemma}[{cf. Lemma \ref{lem:linearity and multiplication}}]\label{lem:linearity and multiplication_error term}
Suppose that $\{\prescript{L}{}{\mathcal A}\}_{L\in \bN}$ and $\{\prescript{L}{}{\mathcal B}\}_{L\in \bN}$ are two collections of  uniformly bounded degree linear operators, and assume that $\prescript{L}{}{\mathcal A}\xrightarrow{O(L^{-h})}\prescript{\infty}{}{\mathcal A}$ and $\prescript{L}{}{\mathcal B}\xrightarrow{O(L^{-h'})}\prescript{\infty}{}{\mathcal B}$, where $\prescript{\infty}{}{\mathcal A}$ and $\prescript{\infty}{}{\mathcal B}$ are bounded degree linear operators. Then $\forall\alpha,\beta\in \bC$,
\begin{align}
    \alpha\cdot\prescript{L}{}{\mathcal A}+\beta\cdot \prescript{L}{}{\mathcal B}\xrightarrow{O(L^{-h''})}\alpha\cdot\prescript{\infty}{}{\mathcal A}+\beta\cdot\prescript{\infty}{}{\mathcal B},\quad \prescript{L}{}{\mathcal A}\cdot \prescript{L}{}{\mathcal B}\xrightarrow{O(L^{-h''})}\prescript{\infty}{}{\mathcal A}\cdot\prescript{\infty}{}{\mathcal B},
\end{align}
where $h''=\min(h,h')$.
\end{lemma}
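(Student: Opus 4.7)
The proof will closely parallel that of Lemma \ref{lem:linearity and multiplication}, but with quantitative bounds tracked throughout. Fix a norm $\lVert \cdot \rVert$ on $\widetilde{\mathcal H}^{(r)}_{\infty}(n,k)$; by Remark \ref{rmk:independence of norm} the order of the error term is independent of this choice. For the additivity claim, given $v \in \widetilde{\mathcal H}^{(r)}_{\infty}(n,k)$, the triangle inequality gives
\begin{align*}
\bigl\lVert (\alpha\prescript{\infty}{}{\mathcal A}+\beta\prescript{\infty}{}{\mathcal B})(v)-\sigma^\infty_{nL+r}\circ(\alpha\prescript{L}{}{\mathcal A}+\beta\prescript{L}{}{\mathcal B})\circ p^\infty_{nL+r}(v)\bigr\rVert \le |\alpha| C^{\mathcal A}_v L^{-h}+|\beta| C^{\mathcal B}_v L^{-h'},
\end{align*}
so taking $C_v = |\alpha| C^{\mathcal A}_v + |\beta| C^{\mathcal B}_v$ and bounding $L^{-h},L^{-h'} \le L^{-h''}$ yields the desired $O(L^{-h''})$ bound.

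For the multiplicative claim, I will use the same decomposition as in the proof of Lemma \ref{lem:linearity and multiplication}. Fix $v \in \widetilde{\mathcal H}^{(r)}_{\infty}(n,k)$ and choose $d \in \bN$ with $v \in \widetilde{\mathcal H}^{(r)}_{\infty}(n,k)_{\le d}$. Let $D$ be a common degree bound for the four operator families involved; then by Proposition \ref{prop:p_N stabilizes} there exists $M$ such that for all $L \ge M$, $\sigma^\infty_{nL+r}\circ p^\infty_{nL+r}$ is the identity on $\widetilde{\mathcal H}^{(r)}_{\infty}(n,k)_{\le d+D}$. For such $L$, insert this identity to write
\begin{align*}
\prescript{\infty}{}{\mathcal A}\prescript{\infty}{}{\mathcal B}(v)-\sigma^\infty_{nL+r}\prescript{L}{}{\mathcal A}\prescript{L}{}{\mathcal B} p^\infty_{nL+r}(v) = \prescript{\infty}{}{\mathcal A}\bigl[\prescript{\infty}{}{\mathcal B}-\sigma^\infty_{nL+r}\prescript{L}{}{\mathcal B} p^\infty_{nL+r}\bigr](v) + \bigl[\prescript{\infty}{}{\mathcal A}-\sigma^\infty_{nL+r}\prescript{L}{}{\mathcal A} p^\infty_{nL+r}\bigr]\sigma^\infty_{nL+r}\prescript{L}{}{\mathcal B} p^\infty_{nL+r}(v).
\end{align*}
By hypothesis, the vector $w_L := [\prescript{\infty}{}{\mathcal B}-\sigma^\infty_{nL+r}\prescript{L}{}{\mathcal B} p^\infty_{nL+r}](v)$ satisfies $\lVert w_L\rVert \le C^{\mathcal B}_v L^{-h'}$, and lies in the finite-dimensional space $\widetilde{\mathcal H}^{(r)}_{\infty}(n,k)_{\le d+D}$, on which $\prescript{\infty}{}{\mathcal A}$ is a bounded operator; hence the first term is bounded by $\lVert \prescript{\infty}{}{\mathcal A}\rVert_{d+D}\cdot C^{\mathcal B}_v L^{-h'}$.

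For the second term, $\prescript{\infty}{}{\mathcal B}(v)$ and $\sigma^\infty_{nL+r}\prescript{L}{}{\mathcal B} p^\infty_{nL+r}(v)$ both lie in $\widetilde{\mathcal H}^{(r)}_{\infty}(n,k)_{\le d+D}$, and the latter differs from the former by something of norm $O(L^{-h'})$, hence has norm bounded by a constant $R_v$ independent of $L$; applying the hypothesis to the vector $u_L := \sigma^\infty_{nL+r}\prescript{L}{}{\mathcal B} p^\infty_{nL+r}(v)$ would naively give a constant depending on $L$, but since all $u_L$ lie in the fixed finite-dimensional ball of radius $R_v$ in $\widetilde{\mathcal H}^{(r)}_{\infty}(n,k)_{\le d+D}$, we can apply the uniform version of convergence from Lemma \ref{lem:equivalent def of convergence} (with the error term refined to order $O(L^{-h})$) to conclude the second term is bounded by a constant times $R_v L^{-h}$. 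Combining the two bounds with $L^{-h},L^{-h'} \le L^{-h''}$ and absorbing the finitely many $L < M$ into a constant gives $C_v$ as required. The only mild subtlety to verify carefully is the uniform-ball statement: the same argument as in Lemma \ref{lem:equivalent def of convergence} shows that $\prescript{L}{}{\mathcal A} \xrightarrow{O(L^{-h})} \prescript{\infty}{}{\mathcal A}$ implies a bound of the form $\lVert[\prescript{\infty}{}{\mathcal A}-\sigma^\infty_{nL+r}\prescript{L}{}{\mathcal A} p^\infty_{nL+r}]|_{\widetilde{\mathcal H}^{(r)}_{\infty}(n,k)_{\le e}}\rVert \le K_e L^{-h}$ as an operator norm on the finite-dimensional space $\widetilde{\mathcal H}^{(r)}_{\infty}(n,k)_{\le e}$, since pointwise $O(L^{-h})$ bounds on a finite-dimensional space upgrade to uniform operator-norm $O(L^{-h})$ bounds.
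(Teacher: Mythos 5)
Your proof is correct and takes essentially the same route as the paper, which simply says to rerun the proof of Lemma \ref{lem:linearity and multiplication} with $\epsilon$ replaced by $C L^{-h''}$; you have just spelled out that substitution in detail, in particular explicitly handling the pointwise-to-operator-norm upgrade on the finite-dimensional spaces $\widetilde{\mathcal H}^{(r)}_{\infty}(n,k)_{\le e}$ (which the paper's argument via the unit balls $\mathbb B_d$ in Lemma \ref{lem:equivalent def of convergence} already contains implicitly).
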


\begin{proof}
In the proof of Lemma \ref{lem:linearity and multiplication}, we replace $\epsilon$ by $CL^{-h''}$ for some $C>0$, the rest remains the same.
\end{proof}

\begin{lemma}[{cf. Proposition \ref{prop:inv lim is conf lim}}]\label{lem:inv lim is conf lim_error term}
Let $\{\prescript{L}{}{\mathcal O}\}_{L\in \bN}$ be a collection of linear operators of uniformly bounded degree, such that the following holds for all $L\in \bN:$
\begin{align*}
    \prescript{L}{}{\mathcal O}\circ p_{nL+r}=p_{nL+r}\circ \prescript{L+1}{}{\mathcal O}.
\end{align*}
Then for all $h>0$, we have $\prescript{L}{}{\mathcal O}\xrightarrow{O(L^{-h})}\underset{\substack{\longleftarrow\\L}}{\lim} \prescript{L}{}{\mathcal O}$.
\end{lemma}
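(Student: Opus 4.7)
The strategy is to observe that under the compatibility hypothesis $\prescript{L}{}{\mathcal O}\circ p_{nL+r}=p_{nL+r}\circ \prescript{L+1}{}{\mathcal O}$, the error term in the conformal limit is not merely small but actually vanishes identically once $L$ exceeds a threshold depending on the input vector. This is a strict strengthening of Proposition~\ref{prop:inv lim is conf lim}, obtained by combining it with the degree-wise stabilization of Proposition~\ref{prop:p_N stabilizes}. From this, any polynomial rate of decay follows trivially.

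In detail, set $\mathcal O:=\underset{\substack{\longleftarrow\\ L}}{\lim}\prescript{L}{}{\mathcal O}$, which by hypothesis satisfies $p^{\infty}_{nL+r}\circ\mathcal O=\prescript{L}{}{\mathcal O}\circ p^{\infty}_{nL+r}$ for all $L$. Let $C$ be a uniform bound for the degrees of $\{\prescript{L}{}{\mathcal O}\}$ and of $\mathcal O$. Fix $v\in\widetilde{\mathcal H}^{(r)}_{\infty}(n,k)$ and choose $d$ with $v\in\widetilde{\mathcal H}^{(r)}_{\infty}(n,k)_{\le d}$; then $\mathcal O(v)\in\widetilde{\mathcal H}^{(r)}_{\infty}(n,k)_{\le d+C}$. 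By Proposition~\ref{prop:p_N stabilizes} there exists $M=M(d+C)\in\bN$ such that for every $L\ge M$ the map $p^{\infty}_{nL+r}$ restricted to $\widetilde{\mathcal H}^{(r)}_{\infty}(n,k)_{\le d+C}$ is an isomorphism with inverse $\sigma^{\infty}_{nL+r}$. For such $L$,
\begin{align*}
\mathcal O(v)-\sigma^{\infty}_{nL+r}\circ\prescript{L}{}{\mathcal O}\circ p^{\infty}_{nL+r}(v)
&=\mathcal O(v)-\sigma^{\infty}_{nL+r}\circ p^{\infty}_{nL+r}\circ\mathcal O(v)=0.
\end{align*}
Therefore the error $\lVert\mathcal O(v)-\sigma^{\infty}_{nL+r}\circ\prescript{L}{}{\mathcal O}\circ p^{\infty}_{nL+r}(v)\rVert$ is nonzero only for the finitely many values $L<M$, so setting
\[
C_v\;:=\;\max\Bigl\{\,L^{h}\,\lVert\mathcal O(v)-\sigma^{\infty}_{nL+r}\circ\prescript{L}{}{\mathcal O}\circ p^{\infty}_{nL+r}(v)\rVert\;:\;L<M\,\Bigr\}\cup\{0\}
\]
yields $\lVert\mathcal O(v)-\sigma^{\infty}_{nL+r}\circ\prescript{L}{}{\mathcal O}\circ p^{\infty}_{nL+r}(v)\rVert\le C_v L^{-h}$ for every $L\ge 1$, which is the required estimate for arbitrary $h>0$.

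There is no genuine obstacle here: the entire content of the lemma is that the pointwise convergence of Proposition~\ref{prop:inv lim is conf lim} is in fact \emph{eventually exact} under the compatibility assumption, and the only mild bookkeeping point is that the threshold $M$ depends on the vector $v$ (via $d$), which is why the constant $C_v$ in Definition~\ref{def:speed of convergence} is allowed to depend on $v$.
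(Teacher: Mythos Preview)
Your proof is correct and follows essentially the same approach as the paper's: both use the compatibility relation $p^{\infty}_{nL+r}\circ\mathcal O=\prescript{L}{}{\mathcal O}\circ p^{\infty}_{nL+r}$ together with Proposition~\ref{prop:p_N stabilizes} to show that the error term vanishes identically for $L$ large (depending on the degree of $v$), whence any polynomial decay rate is automatic. Your version is slightly more explicit in defining $C_v$, but the argument is the same.
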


\begin{proof}
Let us denote $\mathcal O:=\underset{\substack{\longleftarrow\\L}}{\lim} \prescript{L}{}{\mathcal O}$. By the definition of inverse limit, the equation $\prescript{L}{}{\mathcal O}\circ p^\infty_{nL+r}=p^\infty_{nL+r}\circ {\mathcal O}$ holds for all $L\in\bN$. Suppose that $C$ uniformly bounds the degrees of $\{\prescript{L}{}{\mathcal O}\}_{L\in \bN}$. Take arbitrary $d\in \bN$, then $\sigma^\infty_{nL+r}\circ p^\infty_{nL+r}|_{\widetilde{\mathcal H}^{(r)}_{\infty}(n,k)_{\le d+C}}=\mathrm{Id}_{\widetilde{\mathcal H}^{(r)}_{\infty}(n,k)_{\le d+C}}$ for $L\gg 0$ by Proposition \ref{prop:p_N stabilizes}. Then $\mathcal O|_{\widetilde{\mathcal H}^{(r)}_{\infty}(n,k)_{\le d+C}}=\sigma^\infty_{nL+r}\circ\prescript{L}{}{\mathcal O}\circ p^\infty_{nL+r}|_{\widetilde{\mathcal H}^{(r)}_{\infty}(n,k)_{\le d+C}}$ for $L\gg 0$, whence the lemma follows.
\end{proof}

\begin{lemma}[{cf. Theorem \ref{thm:recursion}}]\label{lem:recursion_error term}
Let $\{\prescript{L}{}{\mathcal O}\}_{L\in \bN}$ be a collection of linear operators with negative degree, i.e. $\prescript{L}{}{\mathcal O}(\widetilde{\mathcal H}_{nL+r}(n,k)_{d})\subseteq \widetilde{\mathcal H}_{nL+r}(n,k)_{<d}$ holds for all $L,d\in \bN$. Suppose there exists $h>0$ and a linear operator $\mathcal O$ on $\widetilde{\mathcal H}^{(r)}_{\infty}(n,k)$ with negative degree such that the followings hold $\forall m\in \bZ_{\ge 2}$ and $\forall 1\le b<a\le n$ and $\forall 1\le c,d\le n:$
\begin{equation}\label{eq:recursion_error term}
\begin{split}
[\prescript{L}{}{\mathcal O},\prescript{L}{}{\mathsf T}_{0,0}(E^a_b)]&\xrightarrow{O(L^{-h})}[{\mathcal O}, J^a_{b,-i}],\quad\text{and}\\
[\prescript{L}{}{\mathcal O},\prescript{L}{}{\mathsf T}_{0,1}(E^c_d)]&\xrightarrow{O(L^{-h})}[{\mathcal O}, J^c_{d,-1}],\quad\text{and}\\
[\prescript{L}{}{\mathcal O},\prescript{L}{}{\mathsf t}_{0,m}]&\xrightarrow{O(L^{-h})}[{\mathcal O}, \alpha_{-m}/k].
\end{split}
\end{equation}
Then $\prescript{L}{}{\mathcal O}\xrightarrow{O(L^{-h})}{\mathcal O}$.
\end{lemma}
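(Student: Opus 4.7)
My plan is to mimic the inductive proof of Theorem \ref{thm:recursion}, but at each step keep quantitative control of the error with respect to $L^{-h}$.

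\textbf{Reduction to generators.} First I would invoke the cyclicity argument already used in Theorem \ref{thm:recursion}: by Proposition \ref{prop:cyclic at limit} the module $\widetilde{\mathcal H}^{(r)}_{\infty}(n,k)$ is generated from a highest weight vector $\omega \in \widetilde{\mathcal H}^{(r)}_{\infty}(n,k)_0$ by iterated application of operators in the set
\begin{align*}
\mathcal G = \{J^a_{b,0}\:|\:1\le b<a\le n\}\cup\{J^c_{d,-1}\:|\:1\le c,d\le n\}\cup\{\alpha_{-m}\:|\:m\in \bZ_{\ge 2}\}.
\end{align*}
Each $\mathcal A \in \mathcal G$ is the algebraic inverse limit $\mathcal A = \varprojlim_L \prescript{L}{}{\mathcal A}$ of operators compatible with the projections $p_{nL+r}$ (Corollary \ref{cor:limit of gl_n[z]}), and the hypothesis \eqref{eq:recursion_error term} says precisely that $[\prescript{L}{}{\mathcal O},\prescript{L}{}{\mathcal A}]\xrightarrow{O(L^{-h})}[{\mathcal O},\mathcal A]$ for every such $\mathcal A \in \mathcal G$. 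It therefore suffices to produce, for every word $v = \mathcal A_1 \cdots \mathcal A_\ell(\omega)$ with $\mathcal A_i \in \mathcal G$, a constant $C_v > 0$ such that
\begin{align*}
\lVert \mathcal O(v) - \sigma^\infty_{nL+r}\circ \prescript{L}{}{\mathcal O}\circ p^\infty_{nL+r}(v)\rVert \le C_v L^{-h} \quad \text{for all } L.
\end{align*}

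\textbf{Induction on word length.} The base case $\ell=0$ is immediate: both $\mathcal O$ and $\prescript{L}{}{\mathcal O}$ have (uniformly) negative degree, so they annihilate $\omega$ and $p^\infty_{nL+r}(\omega)$ respectively for $L \gg 0$ (using Proposition \ref{prop:p_N stabilizes}), and the finitely many initial $L$ are absorbed into $C_\omega$. For the inductive step, assume the bound with constant $C_{v'}$ holds for $v' = \mathcal A_2\cdots \mathcal A_\ell(\omega)$, set $v = \mathcal A_1(v')$, and perform the same algebraic splitting as in \eqref{eq:recursion 3}:
\begin{align*}
&\lVert \mathcal O(v) - \sigma^\infty_{nL+r}\circ \prescript{L}{}{\mathcal O}\circ p^\infty_{nL+r}(v)\rVert\\
&\le \lVert [\mathcal O,\mathcal A_1](v') - \sigma^\infty_{nL+r}\circ[\prescript{L}{}{\mathcal O},\prescript{L}{}{\mathcal A_1}]\circ p^\infty_{nL+r}(v')\rVert \\
&\quad + \lVert \mathcal A_1\circ \mathcal O(v') - \sigma^\infty_{nL+r}\circ \prescript{L}{}{\mathcal A_1}\circ \prescript{L}{}{\mathcal O}\circ p^\infty_{nL+r}(v')\rVert.
\end{align*}

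\textbf{Bounding the two terms.} The first summand is bounded by $C^{(1)}_{v'} L^{-h}$ directly from hypothesis \eqref{eq:recursion_error term} applied to the vector $v'$. For the second summand I would use Proposition \ref{prop:p_N stabilizes}: fix $d$ so that $v'$ and $\mathcal O(v')$ both lie in $\widetilde{\mathcal H}^{(r)}_{\infty}(n,k)_{\le d}$, and let $D$ be a uniform degree bound for $\{\prescript{L}{}{\mathcal A_1}\}$ and $\{\prescript{L}{}{\mathcal O}\}$; then there exists $L_0 = L_0(d,D)$ such that $\sigma^\infty_{nL+r}\circ p^\infty_{nL+r}$ is the identity on $\widetilde{\mathcal H}^{(r)}_{\infty}(n,k)_{\le d+D}$ for all $L \ge L_0$. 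For such $L$, the intertwining identity $\sigma^\infty_{nL+r}\circ \prescript{L}{}{\mathcal A_1} = \mathcal A_1\circ \sigma^\infty_{nL+r}$ holds on $\widetilde{\mathcal H}_{nL+r}(n,k)_{\le d+D}$, so the second summand equals
\begin{align*}
\lVert \mathcal A_1\bigl(\mathcal O(v') - \sigma^\infty_{nL+r}\circ \prescript{L}{}{\mathcal O}\circ p^\infty_{nL+r}(v')\bigr)\rVert \le \lVert \mathcal A_1|_{\widetilde{\mathcal H}^{(r)}_{\infty}(n,k)_{\le d+D}}\rVert\cdot C_{v'}L^{-h},
\end{align*}
where the operator norm is finite because the domain is finite dimensional (Lemma \ref{lem:finiteness}). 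Taking
\begin{align*}
C_v = \max\Bigl(C^{(1)}_{v'} + \lVert \mathcal A_1|_{\widetilde{\mathcal H}^{(r)}_{\infty}(n,k)_{\le d+D}}\rVert\cdot C_{v'},\; \max_{L<L_0} L^h\lVert \mathcal O(v) - \sigma^\infty_{nL+r}\circ \prescript{L}{}{\mathcal O}\circ p^\infty_{nL+r}(v)\rVert\Bigr)
\end{align*}
completes the induction.

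\textbf{Expected obstacle.} The analytic content is modest because the stabilization step is exact (not merely asymptotic) for each individual $v'$, so no uniform-in-$\ell$ operator-norm control is needed and the constants $C_v$ are allowed to depend on $v$. The only mildly delicate point is bookkeeping: making sure that the finite-dimensional subspace on which one compares $\mathcal A_1\circ\sigma^\infty$ with $\sigma^\infty\circ\prescript{L}{}{\mathcal A_1}$ is chosen large enough to contain not only $v'$ but also the image $\mathcal O(v')$, which is why I pass to degree $d+D$ above. Everything else is formally the same as the proof of Theorem \ref{thm:recursion}, with each ``$\to 0$'' upgraded to a concrete $O(L^{-h})$ estimate by Lemma \ref{lem:linearity and multiplication_error term}.
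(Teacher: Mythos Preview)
Your proposal is correct and follows essentially the same approach as the paper: the paper's proof simply says to rerun the induction in \eqref{eq:recursion 3} from Theorem~\ref{thm:recursion}, noting that the commutator step contributes $O(L^{-h})$ by hypothesis, the stabilization step is exact for $L\gg 0$, and continuity of $\mathcal A_1$ on the relevant finite-dimensional graded piece propagates the $O(L^{-h})$ bound through the induction. Your write-up is a more explicit version of exactly this argument, with the bookkeeping (choice of $L_0$, operator norm on $\widetilde{\mathcal H}^{(r)}_{\infty}(n,k)_{\le d+D}$, absorbing finitely many initial $L$ into $C_v$) spelled out.
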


\begin{proof}
We modify the proof of Theorem \ref{thm:recursion} as follows. We notice that in the limit \eqref{eq:recursion 2} the error terms are of order $O(L^{-h})$, whence in the line marked with ``by \eqref{eq:recursion 2}" in \eqref{eq:recursion 3}, the error terms are of order $O(L^{-h})$. In the last line of \eqref{eq:recursion 3} we use the induction and continuity of $\mathcal A_1$ to bound the error term by the order $O(L^{-h})$. The rest of \eqref{eq:recursion 3} uses the fact that $\sigma^{\infty}_{nL+r}\circ p^{\infty}_{nL+r}|_{\widetilde{\mathcal H}^{(r)}_{\infty}(n,k)_{\le d}}=\mathrm{Id}$ for $L\gg 0$ and fixed $n,k,d$, so the error term vanishes for $L\gg 0$. In total, the error term of the convergence $\prescript{L}{}{\mathcal O}{\longrightarrow}{\mathcal O}$ is bounded by order $O(L^{-h})$.
\end{proof}

\subsection{Emergent \texorpdfstring{$\widehat{\gl}(n)$}{affine gl(n)} annihilation operators from conformal limit}

\begin{theorem}\label{thm:conformal limit of T[m,0]}
Fix $r\in \{0,\cdots,n-1\}$. For every $m\in \bZ_{>0}$, denote by $\prescript{L}{}{\mathsf T}_{m,0}(E^a_b)$ the operator acting on $\widetilde{\mathcal H}_{nL+r}(n,k)$ which is defined in \eqref{DDCA action}. Then we have
\begin{align}\label{eq:conformal limit of T[m,0]}
    \underset{L\to \infty}{\lim}\: \frac{1}{(k+n)^{m}L^{m}} \prescript{L}{}{\mathsf T}_{m,0}(E^a_b)=J^a_{b,m}-\frac{\delta^a_b}{k+n}\alpha_m, 
\end{align}
where $J^a_{b,m}$ and $\alpha_m$ are defined in \eqref{eq:trace+tracelss decomposition}. Moreover the error terms of the convergence \eqref{eq:conformal limit of T[m,0]} are of order $O(L^{-1})$.
\end{theorem}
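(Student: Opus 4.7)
The strategy is to induct on $m$ and verify the hypotheses of Lemma~\ref{lem:recursion_error term} at each stage. Both $\prescript{L}{}{\mathsf T}_{m,0}(E^a_b)$ and the candidate limit $J^a_{b,m}-\frac{\delta^a_b}{k+n}\alpha_m$ are homogeneous of (strictly negative) degree $-m$ under the energy grading for $m\ge 1$, so the negative-degree hypothesis of the lemma is satisfied. The base case $m=0$ is Corollary~\ref{cor:limit of gl_n[z]} (since $\mathsf T_{0,0}(E^a_b)-\delta^a_bkL$ has limit $J^a_{b,0}=J^a_{b,0}-\frac{\delta^a_b}{k+n}\alpha_0$ once one verifies $\alpha_0$ acts as the appropriate scalar). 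For the inductive step, I would check each of the three families of commutators in \eqref{eq:recursion_error term}.

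The commutator with $\prescript{L}{}{\mathsf T}_{0,0}(E^c_d)$ follows immediately from relation \eqref{eqn: A1}: it equals $\mathsf T_{m,0}([E^a_b,E^c_d])$, which under the rescaling $\frac{1}{(k+n)^mL^m}$ has limit $\delta^c_b J^a_{d,m} - \delta^a_d J^c_{b,m}$ by the inductive hypothesis, matching $[J^a_{b,m}-\frac{\delta^a_b}{k+n}\alpha_m, J^c_{d,0}]$ exactly since $\alpha_m$ commutes with $J^c_{d,0}$. The commutators with $\prescript{L}{}{\mathsf t}_{0,p}$ are controlled by relations \eqref{eqn: A2} and \eqref{eqn: A4}: the dominant scaling picks out the trace sector, reproducing the Heisenberg-type commutator $[\alpha_m/k,\alpha_{-p}/k]$-piece contained in $J^a_{b,m}-\frac{\delta^a_b}{k+n}\alpha_m$. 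The non-trivial case is $[\mathsf T_{m,0}(E^a_b), \mathsf T_{0,1}(E^c_d)]$. Applying the skew anti-involution $\sigma$ of $\mathsf A^{(n)}$ (which exchanges $\mathsf T_{p,q}(x)\leftrightarrow \mathsf T_{q,p}(x^{\mathrm t})$ and reverses brackets) to relation \eqref{eqn: A3} yields a DDCA identity expressing this bracket as the sum of (i) a principal linear term $\mathsf T_{m,1}([E^a_b,E^c_d])$ together with a trace-correction $-\frac{(n+k)m}{2}\mathsf T_{m-1,0}(\{E^a_b,E^c_d\})$ and $-n\operatorname{tr}(E^c_d)\mathsf T_{m-1,0}(E^a_b)$, and (ii) a quadratic double sum $\sum_{p=0}^{m-1}c_p\,\mathsf T_{p,0}(\cdots)\,\mathsf T_{m-1-p,0}(\cdots)$ with rational coefficients coming from the factors $\frac{j+1}{m+1}$ in \eqref{eqn: A3}.

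Dividing by $(k+n)^mL^m$, the quadratic terms in (ii) factorize as $\frac{1}{(k+n)L}\cdot\bigl[\frac{\mathsf T_{p,0}(\cdots)}{(k+n)^pL^p}\bigr]\bigl[\frac{\mathsf T_{m-1-p,0}(\cdots)}{(k+n)^{m-1-p}L^{m-1-p}}\bigr]$; each bracket converges by the inductive hypothesis via Lemma~\ref{lem:linearity and multiplication_error term}, and the overall factor $\frac{1}{(k+n)L}$ forces these contributions to vanish as $O(L^{-1})$ except for the constant zero-degree pieces which combine to reproduce the central term $km\delta_{m-1,0}(\delta^a_d\delta^c_b-\delta^a_b\delta^c_d/n)$ of $\widehat{\gl}(n)_k$. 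The linear term $\mathsf T_{m,1}([\cdot,\cdot])/((k+n)^mL^m)$ has degree $-m+1$ and, combined with the trace correction $\frac{(n+k)m}{2}\mathsf T_{m-1,0}/(k+n)^mL^m$, produces the contribution $\delta^c_b J^a_{d,m-1}-\delta^a_d J^c_{b,m-1}$ (using inductive hypothesis for $m-1$), while the $n\operatorname{tr}$ trace-term delivers exactly the $-\frac{\delta^a_b}{k+n}[\alpha_m, J^c_{d,-1}]$ contribution from the $\alpha$-correction on the right-hand side.

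The main obstacle is the combinatorial bookkeeping: verifying that the rational coefficients $\frac{j+1}{m+1}$ in \eqref{eqn: A3}, combined with the trace-correction coefficient $n\epsilon_1$ and the scaling factor $(k+n)^{-m}$, conspire to produce \emph{exactly} the coefficient $\frac{1}{k+n}$ in front of $\alpha_m$ in the limit—rather than any other rational combination. This conspiracy is the algebraic manifestation of the critical-level shift $k\mapsto k+n$ inherent in the Sugawara-type passage from DDCA generators to affine Kac-Moody modes. The $O(L^{-1})$ error estimate then propagates through via Lemmas~\ref{lem:scalar limit_error term} and \ref{lem:linearity and multiplication_error term}, seeded by the $O(L^{-1})$ error from Corollary~\ref{cor:limit of Yangian} (applied to the Yangian-compatible pieces) and from the stabilization of $p^\infty_{nL+r}$ in Proposition~\ref{prop:p_N stabilizes}.
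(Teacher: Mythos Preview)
Your induction scheme has a genuine gap at the second displayed commutator in \eqref{eq:recursion_error term}. When you expand $[\prescript{L}{}{\mathsf T}_{m,0}(E^a_b),\prescript{L}{}{\mathsf T}_{0,1}(E^c_d)]$ via the $\sigma$-reflection of \eqref{eqn: A3}, the leading linear term is $\mathsf T_{m,1}([E^a_b,E^c_d])$. After dividing by $(k+n)^mL^m$, you need to know that
\[
\frac{1}{(k+n)^mL^m}\,\prescript{L}{}{\mathsf T}_{m,1}(x)
\]
converges (and to what). Your inductive hypothesis only controls $\frac{1}{(k+n)^{m'}L^{m'}}\mathsf T_{m',0}$ for $m'<m$; the operator $\mathsf T_{m,1}$ has the same degree $-(m-1)$ as $\mathsf T_{m-1,0}$, but it is a \emph{different} operator, and nothing you have written relates the two. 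Your ``trace correction'' $\frac{(n+k)m}{2(k+n)^mL^m}\mathsf T_{m-1,0}(\{\cdot,\cdot\})$ carries an extra factor of $L^{-1}$ and hence vanishes in the limit, so it cannot supply the missing piece. The same problem recurs in the third commutator: by \eqref{eqn: A2} one has $[\mathsf t_{0,2},\mathsf T_{m,0}(x)]=-2m\,\mathsf T_{m-1,1}(x)$, again a mixed $\mathsf T_{m-1,1}$ not covered by your hypothesis, and for $\ell\ge 3$ the commutator $[\mathsf t_{0,\ell},\mathsf T_{m,0}(x)]$ is not governed by \eqref{eqn: A2} or \eqref{eqn: A4} at all.

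The paper circumvents this obstruction by \emph{not} inducting on $m$ via Lemma~\ref{lem:recursion_error term}. Instead it first proves the case $m=1$ using the recursion lemma, where the problematic term $\mathsf T_{1,1}$ is handled by Lemma~\ref{lem:m=1 traceless part} (which in turn unpacks the modified Yangian generator $\widetilde T^a_{b;1}$ and invokes Theorem~\ref{thm:Yangian comptible with p_N and sigma_N}). Even there the paper exploits the special choice $E^a_b=E^n_1$ so that $[E^n_1,E^c_d]=0$ for $c>d$, trivializing the first line of \eqref{eq:recursion_error term}. Then, rather than applying the recursion lemma again for $m>1$, the paper proves Theorem~\ref{thm:L[1]} (via Lemmas~\ref{lem:m>1}--\ref{lem:m=1 trace part}) and uses the identity
\[
\frac{1}{(k+n)^mL^m}\,\prescript{L}{}{\mathsf T}_{m,0}(E^a_b)
=\frac{1}{(m-1)!}\,\mathrm{ad}^{\,m-1}_{\frac{1}{(k+n)L}\prescript{L}{}{\mathsf t}_{2,1}-\prescript{L}{}{\mathsf t}_{1,0}}\!\left(\frac{1}{(k+n)L}\prescript{L}{}{\mathsf T}_{1,0}(E^a_b)\right)
\]
together with $\frac{1}{(k+n)L}\mathsf t_{2,1}-\mathsf t_{1,0}\to -\mathcal L_1+C\alpha_1$ and $[\mathcal L_1,J^a_{b,m'}]=-m'J^a_{b,m'+1}$ to bootstrap from $m=1$ to general $m$. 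This route never requires direct control of $\mathsf T_{m,1}$. The Yangian compatibility is not merely a seed for error estimates as you suggest; it is the mechanism that supplies the needed limits of $\mathsf T_{1,1}$ and $\mathsf T_{1,m}$ (through their appearance inside the Taylor expansion of $\widetilde T^a_b(u)$), and without it neither the $m=1$ step nor the $\mathcal L_1$ theorem goes through.
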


\begin{theorem}\label{thm:L[1]}
Fix $r\in \{0,\cdots,n-1\}$. Denote by $\mathcal T(z)=\sum_{m\in \bZ}\mathcal L_m z^{-m-2}$ the Sugawara's stress-operator of affine vertex algebra associated to $\widehat{\mathfrak{sl}}(n)_k\oplus \widehat{\gl}(1)_{kn}$. Then we have
\begin{align}\label{eq:L[1]}
    \underset{L\to \infty}{\lim}\: \left[\frac{1}{(k+n)L} \prescript{L}{}{\mathsf t}_{2,1}-\prescript{L}{}{\mathsf t}_{1,0}\right]=-\mathcal L_{1}+C\cdot\alpha_{1},
\end{align}
where $C=-\frac{k(n+2r)}{n(n+k)}$. Moreover the error terms of the convergence \eqref{eq:L[1]} are of order $O(L^{-1})$.
\end{theorem}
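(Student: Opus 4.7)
The plan is to apply the recursion criterion of Lemma \ref{lem:recursion_error term}: both $\mathcal O_L := \frac{1}{(k+n)L}\prescript{L}{}{\mathsf t}_{2,1} - \prescript{L}{}{\mathsf t}_{1,0}$ and the proposed limit $-\mathcal L_1 + C\alpha_1$ have (negative) degree $-1$, so it suffices to verify three families of commutator identities with $O(L^{-1})$ error. The commutator with $\prescript{L}{}{\mathsf T}_{0,0}(E^a_b)$ vanishes identically: $\mathsf t_{2,1}$ and $\mathsf t_{1,0}$ are $\gl_n$-invariant traces, while $\mathcal L_1$ and $\alpha_1$ commute with the zero mode $J^a_{b,0}$.

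For the commutator with $\prescript{L}{}{\mathsf t}_{0,m}$ ($m\ge 2$), I would use the identity $\mathsf t_{2,1} = \frac{1}{6}[\mathsf t_{3,0}, \mathsf t_{0,2}]$, a consistency consequence of \eqref{eqn: A4} at $m=2$, together with \eqref{eqn: A2} to derive $[\mathsf t_{2,1}, \mathsf t_{0,m}] = 2m\mathsf t_{1,m}$, yielding
\begin{align*}
[\mathcal O_L, \mathsf t_{0,m}] = \frac{2m}{(k+n)L}\mathsf t_{1,m} - m\mathsf t_{0,m-1}.
\end{align*}
Since $\mathsf t_{0,m-1}$ is compatible with the transition maps $p_N$ (by Proposition \ref{prop: image of gl_1[z]} and Proposition \ref{prop:p_N stabilizes}) and therefore limits to $\alpha_{1-m}/k$ with eventually vanishing error, and since the target commutator $[-\mathcal L_1 + C\alpha_1, \alpha_{-m}/k]$ equals $-m\alpha_{1-m}/k$ for $m\ge 2$, the remaining task is to establish $\frac{\mathsf t_{1,m}}{L} \xrightarrow{O(L^{-1})} 0$. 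I would prove this by writing $\mathsf t_{1,m} = \frac{1}{2(m+1)}[\mathsf t_{2,0}, \mathsf t_{0,m+1}]$ via \eqref{eqn: A2}, invoking Theorem \ref{thm:conformal limit of T[m,0]} for $\mathsf t_{2,0}/L^2 \xrightarrow{O(L^{-1})} (k+n)\alpha_2$, and observing that the leading $L^2$ contribution to the commutator vanishes because $[\alpha_2, \alpha_{-(m+1)}] = 0$ for $m\ge 2$.

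The crux is the commutator with $\prescript{L}{}{\mathsf T}_{0,1}(E^c_d) = J^c_{d,-1}$. The easy piece gives $[\mathsf t_{1,0}, \mathsf T_{0,1}(E^c_d)] = \mathsf T_{0,0}(E^c_d) = J^c_{d,0} + \delta^c_d kL\cdot\mathrm{Id}$, contributing a divergent $-\delta^c_d kL$ term to $[\mathcal O_L, J^c_{d,-1}]$. This divergence must be canceled by the leading $L^2\delta^c_d$ behavior of $\frac{1}{(k+n)L}[\mathsf t_{2,1}, J^c_{d,-1}]$, while the subleading $L\delta^c_d$ behavior must exactly recover the finite target $-J^c_{d,0} + Ck\delta^c_d$. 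I would compute $[\mathsf t_{2,1}, J^c_{d,-1}]$ using $\mathsf t_{2,1} = \frac{1}{6}[\mathsf t_{3,0}, \mathsf t_{0,2}]$, the Jacobi identity, and relation \eqref{eqn: A3}, then expand to two subleading orders in $L$ by repeatedly applying Theorem \ref{thm:conformal limit of T[m,0]}. The specific value $C = -k(n+2r)/(n(k+n))$ emerges here; the $r$-dependence reflects the shift in the twisted energy grading \eqref{eq:twisted energy grading} and the $\alpha_0$-eigenvalue $kr$ on the $\widehat{\gl}(1)_{kn}$ Fock vacuum.

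The main obstacle is precisely the asymptotic analysis in this third commutator: extracting the leading $L^2\delta^c_d$ term (to cancel the divergence) and the subleading $L\delta^c_d$ term (to fix $C$) simultaneously, while controlling the finite residue to recover $-J^c_{d,0}$ with $O(L^{-1})$ error. This demands careful organization of the DDCA relations \eqref{eqn: A3}--\eqref{eqn: A4}, the full precision of Theorem \ref{thm:conformal limit of T[m,0]} applied to both $\mathsf T_{m,0}(E^a_b)$ and its trace $\mathsf T_{m,0}(\mathbf 1) = k\mathsf t_{m,0}$, and very likely identities from Appendix \ref{sec:Identities}.
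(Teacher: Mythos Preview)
Your overall strategy of applying Lemma \ref{lem:recursion_error term} is correct and matches the paper. The first commutator check (with $\prescript{L}{}{\mathsf T}_{0,0}$) is also identical. However, your proof has a genuine circular dependency and a precision gap.

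\textbf{Circularity.} You invoke Theorem \ref{thm:conformal limit of T[m,0]} in two places: to prove $\mathsf t_{2,0}/L^2\to (k+n)\alpha_2$ (this is the $m=2$ case with $E^a_b=\mathbf 1$) for the commutator with $\mathsf t_{0,m}$, and ``repeatedly'' for the commutator with $\mathsf T_{0,1}(E^c_d)$. But in the paper's logical order, Theorem \ref{thm:conformal limit of T[m,0]} for $m>1$ is proved \emph{after} and \emph{using} Theorem \ref{thm:L[1]}: one writes $\mathsf T_{m,0}$ as an iterated commutator of $\mathsf T_{1,0}$ with $\frac{1}{(k+n)L}\mathsf t_{2,1}-\mathsf t_{1,0}$. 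So your argument as written is circular. The $m=1$ case of Theorem \ref{thm:conformal limit of T[m,0]} is indeed independent of Theorem \ref{thm:L[1]}, but it does not suffice for your plan.

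\textbf{Missing the key input.} The paper avoids this circle by using the Yangian. Theorem \ref{thm:Yangian comptible with p_N and sigma_N} says the modified generators $\widetilde T^a_{b;m}$ are \emph{exactly} compatible with $p_N$, hence define inverse limits by Proposition \ref{prop:inv lim is conf lim} with arbitrarily fast error. From this the paper extracts Lemmas \ref{lem:m>1}, \ref{lem:m=1 traceless part}, \ref{lem:m=1 trace part} (via $\mathrm{ad}_{\mathsf t_{0,1}}$ acting on $\widetilde T^a_{b;m}$), which handle the commutators with $\mathsf t_{0,m}$ ($m\ge 2$), the traceless part $\mathsf T_{0,1}(E^c_d)_0$, and the trace part $\mathsf t_{0,1}$ respectively. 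Your DDCA-relation route, absent the Yangian, has no independent source for the convergence of $\frac{1}{L}\mathsf T_{1,m}-\text{(shift)}$.

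\textbf{Precision gap.} For the commutator with $\mathsf T_{0,1}(E^c_d)$ you say you will ``expand to two subleading orders in $L$ by repeatedly applying Theorem \ref{thm:conformal limit of T[m,0]}.'' But that theorem only gives the \emph{leading} term with $O(L^{-1})$ error; it carries no information about the next order. Extracting the finite constant (which fixes $C$) after cancelling a divergent $kL\delta^c_d$ requires exactly the kind of exact compatibility that the Yangian provides and that the limit statements alone do not.
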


\begin{remark}
Individual limits $\underset{L\to \infty}{\lim}\: \frac{1}{(k+n)L} \prescript{L}{}{\mathsf t}_{2,1} $ or $\underset{L\to \infty}{\lim}\:\prescript{L}{}{\mathsf t}_{1,0}$ do not exist.
\end{remark}

Our strategy to prove Theorem \ref{thm:conformal limit of T[m,0]} and Theorem \ref{thm:L[1]} is to first prove the $m=1$ case in \eqref{eq:conformal limit of T[m,0]}, then prove Theorem \ref{thm:L[1]}, next we take the adjoint action of stress-operator \eqref{eq:L[1]} on the $m=1$ case in \eqref{eq:conformal limit of T[m,0]} to get $m>1$ cases.

We begin with three technical lemma which are crucial in the proof of Theorem \ref{thm:L[1]}.

\begin{lemma}\label{lem:m>1}
For all $m\in \bZ_{\ge 2}$, we have
\begin{align}\label{eq:conformal limit of T[1,m]}
    \underset{L\to \infty}{\lim}\: \left[\frac{1}{(k+n)L} \prescript{L}{}{\mathsf T}_{1,m}(E^a_b)-\prescript{L}{}{\mathsf T}_{0,m-1}(E^a_b)\right]=0,
\end{align}
with error terms of order $O(L^{-1})$.
\end{lemma}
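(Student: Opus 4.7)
My plan is to reduce everything to the $m=1$ case of Theorem~\ref{thm:conformal limit of T[m,0]}, which is established earlier in the paper, by unfolding the DDCA relation~\eqref{eqn: A3} so as to express $\mathsf T_{1,m}(E^a_b)$ as a single commutator plus terms whose conformal limit is already accessible. The off-diagonal case $a \ne b$ falls out cleanly; the diagonal case $a = b$ requires one additional scalar input which the companion technical lemmas of this subsection should supply.

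Concretely, for $a \ne b$ I will pick $c \notin \{a,b\}$ and specialize~\eqref{eqn: A3} to $x = E^a_c$, $y = E^c_b$. Using $[x,y] = E^a_b$, $\{x,y\} = E^a_b$, $\mathrm{tr}(y) = 0$, and $\epsilon_3 = -(n+k)$, the relation rearranges to
\begin{equation*}
\prescript{L}{}{\mathsf T}_{1,m}(E^a_b) = [\,\prescript{L}{}{\mathsf T}_{1,0}(E^a_c),\,\prescript{L}{}{\mathsf T}_{0,m}(E^c_b)\,] - \tfrac{(n+k)m}{2}\,\prescript{L}{}{\mathsf T}_{0,m-1}(E^a_b) - Q_m,
\end{equation*}
where $Q_m$ is a finite sum of quadratic monomials $\prescript{L}{}{\mathsf T}_{0,j}(\cdot)\,\prescript{L}{}{\mathsf T}_{0,m-1-j}(\cdot)$ collected from the last two lines of~\eqref{eqn: A3}. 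Dividing by $(k+n)L$ and applying Lemma~\ref{lem:linearity and multiplication_error term} term-by-term: the commutator converges to $[J^a_{c,1},\,J^c_{b,-m}] = J^a_{b,1-m}$ in $\widehat{\gl}(n)_k$ (the $\delta^a_b$ and central contributions vanish under $a \ne b$, $c \ne a,b$, and $m \ge 2$); the middle term picks up the prefactor $m/(2L)$ and so dies; and $Q_m$, being the product of two operators with bounded conformal limit, dies after division by $L$. Since $J^a_{b,1-m} = J^a_{b,-(m-1)} = \lim_L \prescript{L}{}{\mathsf T}_{0,m-1}(E^a_b)$, the target difference tends to zero, and each ingredient carries error at most $O(L^{-1})$, so Lemma~\ref{lem:linearity and multiplication_error term} delivers the rate.

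For $a = b$, the same identity with $x = E^a_c$, $y = E^c_a$ now outputs $\mathsf T_{1,m}(E^a_a) - \mathsf T_{1,m}(E^c_c)$ in place of $\mathsf T_{1,m}(E^a_b)$, so the preceding argument only establishes that $\tfrac{1}{(k+n)L}\mathsf T_{1,m}(E^a_a) - \mathsf T_{0,m-1}(E^a_a)$ is independent of $a$ up to $O(L^{-1})$. To pin down this common value as zero I would sum over $a$ and use $\sum_a \mathsf T_{1,m}(E^a_a) = k\,\mathsf t_{1,m}$, reducing the remaining claim to $\tfrac{1}{(k+n)L}\prescript{L}{}{\mathsf t}_{1,m} \xrightarrow{O(L^{-1})} \prescript{L}{}{\mathsf t}_{0,m-1}$, which is the scalar statement I expect the other two technical lemmas of this subsection to supply. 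The main obstacle is precisely this trace input: \eqref{eqn: A3} only sees $\mathsf T_{1,m}$ evaluated on traceless combinations of $\gl_n$, so the proof of Lemma~\ref{lem:m>1} is genuinely a two-step argument whose trace component must be coupled to the companion scalar lemma on $\mathsf t_{1,m}$; the remainder is routine $O(L^{-1})$ bookkeeping through Lemma~\ref{lem:linearity and multiplication_error term}.
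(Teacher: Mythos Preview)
Your approach is different from the paper's, and it has a genuine gap in the trace case.

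The paper does \emph{not} use the $m=1$ case of Theorem~\ref{thm:conformal limit of T[m,0]} as input to this lemma. In the paper's logical structure the three technical lemmas (this one, Lemma~\ref{lem:m=1 traceless part}, Lemma~\ref{lem:m=1 trace part}) are all proved \emph{first}, using only the Yangian compatibility (Theorem~\ref{thm:Yangian comptible with p_N and sigma_N} and Corollary~\ref{cor:limit of Yangian}); they then feed into the proof of Theorem~\ref{thm:L[1]}, which in turn feeds into Theorem~\ref{thm:conformal limit of T[m,0]}. So your premise that the $m=1$ case is ``established earlier'' is backwards. The paper's actual proof here is to take the iterated bracket $\frac{1}{m!}\mathrm{ad}^{m-1}_{\mathsf t_{0,1}}(\prescript{L}{}{\widetilde T}^a_{b;m})$, whose limit is known from Corollary~\ref{cor:limit of Yangian} with arbitrarily fast error, expand it explicitly as $A^a X\cdot\mathrm{Sym}(YX^{m-1})B_b-\bigl((k+n)L+\tfrac{kL}{m}\bigr)\mathsf T_{0,m-1}(E^a_b)$, and then compare $A^a X\cdot\mathrm{Sym}(YX^{m-1})B_b$ with $\mathsf T_{1,m}(E^a_b)$ via the identity~\eqref{eq:[X,Y] inside}. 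This handles all $a,b$ at once.

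The serious problem with your approach is the diagonal case. You correctly observe that the DDCA relation~\eqref{eqn: A3} only produces $\mathsf T_{1,m}$ on traceless combinations, and you reduce the remaining trace input to
\[
\frac{1}{(k+n)L}\,\prescript{L}{}{\mathsf t}_{1,m}\xrightarrow{O(L^{-1})}\prescript{L}{}{\mathsf t}_{0,m-1}\qquad (m\ge 2).
\]
But the ``other two technical lemmas'' do not supply this: Lemma~\ref{lem:m=1 traceless part} concerns $\mathsf T_{1,1}(E^a_b)_0$ and Lemma~\ref{lem:m=1 trace part} concerns $\mathsf t_{1,1}$; both are strictly $m=1$ statements. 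The $m\ge 2$ trace statement you need is exactly the trace of Lemma~\ref{lem:m>1} itself (via $\sum_a\mathsf T_{1,m}(E^a_a)=k\,\mathsf t_{1,m}$), so your reduction is circular. Indeed, in the paper this very trace statement appears as the fourth line of~\eqref{eq:recursion to check} in the proof of Theorem~\ref{thm:L[1]}, where it is deduced \emph{from} Lemma~\ref{lem:m>1}, not the other way around.

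Two smaller points. First, your claim that $Q_m$ ``dies after division by $L$'' because each factor has bounded limit is not quite right: the term $\mathsf T_{0,0}(E^c_c)$ (and similarly the $f=a$ or $f=b$ contributions at the boundary $j$-values in the second sum) diverges like $kL$. The divergent pieces do cancel, but this cancellation needs to be checked. Second, the choice $c\notin\{a,b\}$ requires $n\ge 3$; for $n=2$ one must use a different specialization (e.g.\ $x=E^a_a$, $y=E^a_b$), which also works but should be said.
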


\begin{proof}
Taking the iterated adjoint action of $\frac{1}{k}\alpha_{-1}=\underset{L\to \infty}{\lim}\:\prescript{L}{}{\mathsf t}_{0,1}$ on two sides of \eqref{eq:limit of Yangian}, we get
\begin{align}\label{eq:rewrite limit of Yangian}
\underset{L\to \infty}{\lim}\:\frac{1}{m!}\mathrm{ad}^{m-1}_{\prescript{L}{}{\mathsf t}_{0,1}}\left(\prescript{L}{}{\widetilde{T}}^a_{b;m}\right)=\frac{1}{m!}\mathrm{ad}^{m-1}_{\alpha_{-1}/k}\left(\prescript{\infty}{}{\widetilde{T}}^a_{b;m}\right),
\end{align}
with error terms of order $O(L^{-h})$ for arbitrary $h>0$ by Lemma \ref{lem:linearity and multiplication_error term} and Lemma \ref{lem:inv lim is conf lim_error term}. Let us expand the left-hand-side (before taking $L\to\infty$):
\begin{align*}
    \frac{1}{m!}\mathrm{ad}^{m-1}_{\prescript{L}{}{\mathsf t}_{0,1}}\left(\prescript{L}{}{\widetilde{T}}^a_{b;m}\right)=\prescript{L}{}{\left[A^a X\cdot\mathrm{Sym}(Y X^{m-1})B_b\right]}-\left((k+n)L+\frac{kL}{m}\right)\cdot\prescript{L}{}{\mathsf T}_{0,m-1}(E^a_b).
\end{align*}
Let us compute the difference between $A^a X\cdot\mathrm{Sym}(Y X^{m-1})B_b$ and $\mathsf T_{1,m}(E^a_b)$:
\begin{align*}
&A^a X\cdot\mathrm{Sym}(YX^{m-1})B_b-A^a \mathrm{Sym}(YX^{m})B_b\\
&=\sum_{i=0}^{m-1}\frac{m-i}{m(m+1)}A^a X^i[X, Y]X^{m-1-i}B_b\\
\text{\small by \eqref{eq:[X,Y] inside}}\quad &= \sum_{i=0}^{m-1}\frac{m-i}{m(m+1)}\left[\mathsf T_{0,i}(E^a_c)\mathsf T_{0,m-1-i}(E^c_b)-(k+n)\mathsf T_{0,m-1}(E^a_b)\right]\\
&=-\frac{k+n}{2}\mathsf T_{0,m-1}(E^a_b)+\sum_{i=0}^{m-1}\frac{m-i}{m(m+1)}\mathsf T_{0,i}(E^a_c)\mathsf T_{0,m-1-i}(E^c_b).
\end{align*}
Therefore \eqref{eq:rewrite limit of Yangian} is equivalent to
\begin{multline}\label{eq:rewrite limit of Yangian 2}
\underset{L\to \infty}{\lim}\:\left[\prescript{L}{}{\mathsf T}_{1,m}(E^a_b)-\left((k+n)L+\frac{k+n}{2}\right)\cdot\prescript{L}{}{\mathsf T}_{0,m-1}(E^a_b)\right]\\
+\sum_{i=0}^{m-1}\frac{m-i}{m(m+1)}{J}^a_{c,-i}{J}^c_{b,i+1-m}=\frac{1}{m!}\mathrm{ad}^{m-1}_{\alpha_{-1}/k}\left(\prescript{\infty}{}{\widetilde{T}}^a_{b;m}\right).
\end{multline}
Dividing two sides by $(k+n)L$ and applying Lemma \ref{lem:scalar limit}, Lemma \ref{lem:linearity and multiplication}, and Proposition \ref{prop:inv lim is conf lim}, we get
\begin{align*}
    \underset{L\to \infty}{\lim}\:\left[\frac{1}{(k+n)L}\prescript{L}{}{\mathsf T}_{1,m}(E^a_b)-\prescript{L}{}{\mathsf T}_{0,m-1}(E^a_b)\right]=0. 
\end{align*}
The error terms of the above convergence are of order $O(L^{-1})$ by Lemma \ref{lem:linearity and multiplication_error term}.
\end{proof}

\begin{lemma}\label{lem:m=1 traceless part}
For an operator with adjoint $\gl_n$ indices $\mathcal O^a_b$, denote by $\left(\mathcal O^a_b\right)_{0}=\mathcal O^a_b-\frac{\delta^a_b}{n}\mathcal O^c_c$ the traceless part of $\mathcal O^a_b$. Then we have
\begin{align}\label{eq:conformal limit of T[1,1]}
    \underset{L\to \infty}{\lim}\:\left[\frac{1}{(k+n)L} \prescript{L}{}{\mathsf T}_{1,1}(E^a_b)_0-\prescript{L}{}{\mathsf T}_{0,0}(E^a_b)_0\right]=0,
\end{align}
with error terms of order $O(L^{-1})$.
\end{lemma}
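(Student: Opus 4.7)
The plan parallels Lemma \ref{lem:m>1} but is carried out at the level of $\widetilde T^a_{b;1}$ itself, since for $m=1$ the operator $\mathrm{ad}^{m-1}_{\mathsf t_{0,1}}$ reduces to the identity and the $O(L^0)$ correction cannot be neglected. The key input is the convergence $\prescript{L}{}{\widetilde T}^a_{b;1}\to\prescript{\infty}{}{\widetilde T}^a_{b;1}$ from Corollary \ref{cor:limit of Yangian}, combined with Lemma \ref{lem:inv lim is conf lim_error term} which guarantees an arbitrary polynomial error rate.

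First I would expand the defining formula \eqref{eq:Yangian} as a formal power series in $u^{-1}$. Using $f_L(u)=\prod_{j=1}^L\frac{u+j(n+k)}{u+k+j(n+k)}=1-kL\,u^{-1}+a_2\,u^{-2}+\cdots$ and $A^a(u+(k+n)L-XY)^{-1}B_b=\sum_{s\ge 0}A^a(XY)^sB_b/(u+(k+n)L)^{s+1}$, extracting the $u^{-2}$ coefficient gives
\[
\widetilde T^a_{b;1} = A^a XY B_b - (2k+n)L\cdot\mathsf T_{0,0}(E^a_b) + a_2\,\delta^a_b.
\]
Next, combining the moment-map identity $[X,Y]^i_j=A^c_jB^i_c-k\,\delta^i_j$ (valid on $\mathcal H_N(n,k)$) with the Weyl-algebra commutator $[B^i_c,A^c_j]=n\,\delta^i_j$, one rewrites
\[
A^a XY B_b = \mathsf T_{1,1}(E^a_b) + \tfrac12\mathsf T_{0,0}(E^a_c)\mathsf T_{0,0}(E^c_b) - \tfrac{n+k}{2}\mathsf T_{0,0}(E^a_b).
\]

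Then I would take the traceless part in $(a,b)$. Writing $\bar J^a_b:=\mathsf T_{0,0}(E^a_b)_0$ and using that $\sum_c\mathsf T_{0,0}(E^c_c)=kN=k(nL+r)$ is scalar on $\widetilde{\mathcal H}_{nL+r}(n,k)$, a short computation gives $\big(\mathsf T_{0,0}(E^a_c)\mathsf T_{0,0}(E^c_b)\big)_0=\bar J^a_c\bar J^c_b+\tfrac{2kN}{n}\bar J^a_b-\tfrac{\delta^a_b}{n}\mathrm{tr}(\bar J^2)$. After the cancellations the identity becomes
\[
\mathsf T_{1,1}(E^a_b)_0 = (\widetilde T^a_{b;1})_0 - \tfrac12\bar J^a_c\bar J^c_b + \tfrac{\delta^a_b}{2n}\mathrm{tr}(\bar J^2) + \Big[(k+n)L - \tfrac{kr}{n} + \tfrac{n+k}{2}\Big]\bar J^a_b.
\]
Dividing by $(k+n)L$, subtracting $\mathsf T_{0,0}(E^a_b)_0=\bar J^a_b$, and letting $L\to\infty$ finishes the proof: the term $(\widetilde T^a_{b;1})_0/((k+n)L)$ converges to $0$ at rate $O(L^{-1})$ by Corollary \ref{cor:limit of Yangian} and Lemma \ref{lem:scalar limit_error term}; the quadratic pieces $\bar J^a_c\bar J^c_b/L$ and $\delta^a_b\,\mathrm{tr}(\bar J^2)/L$ vanish at the same rate by Corollary \ref{cor:limit of gl_n[z]} and Lemma \ref{lem:linearity and multiplication_error term} (which ensures products converge to products of limits); and the residual scalar correction in front of $\bar J^a_b$ is again $O(L^{-1})$.

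The main obstacle is the algebraic step: one must carefully balance the two normal-ordering shifts (moment-map and Weyl-algebra) and verify that, after projection to the traceless part, every undetermined scalar (including the unknown $a_2\,\delta^a_b$) drops out, so that the surviving identity expressing $\mathsf T_{1,1}(E^a_b)_0$ in terms of $(\widetilde T^a_{b;1})_0$ and $\bar J$'s is exact. Once this identity is in hand, the convergence with rate $O(L^{-1})$ is a formal consequence of the framework of Section \ref{subsec:conformal limit of operators}.
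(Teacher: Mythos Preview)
Your proposal is correct and follows essentially the same approach as the paper: expand $\widetilde T^a_{b;1}$ from \eqref{eq:Yangian}, use the identity \eqref{eq:[X,Y] inside} to rewrite $A^aXYB_b$ in terms of $\mathsf T_{1,1}(E^a_b)$ and a quadratic in $\mathsf T_{0,0}$, take the traceless part (which kills the unwanted scalars such as your $a_2\delta^a_b$), and divide by $(k+n)L$. The only cosmetic difference is that you expand the quadratic term $\big(\mathsf T_{0,0}(E^a_c)\mathsf T_{0,0}(E^c_b)\big)_0$ in terms of the traceless pieces $\bar J$ using $\sum_c\mathsf T_{0,0}(E^c_c)=kN$, whereas the paper absorbs a shift $-kL\delta^a_b$ into $\prescript{L}{}J^a_{b,0}$ and leaves the quadratic as $\big(\prescript{L}{}J^a_{c,0}\prescript{L}{}J^c_{b,0}\big)_0$; these are the same object and both vanish after dividing by $(k+n)L$.
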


\begin{proof}
Expanding \eqref{eq:Yangian} and extracting the traceless part of $u^{-2}$ term in \eqref{eq:Yangian}, we get
\begin{align}\label{eq:traceless part of T[1]}
    \left(\prescript{L}{}{\widetilde{T}}^a_{b;1}\right)_0=\prescript{L}{}{\left[A^a X Y B_b\right]}_0-(2k+n)L\cdot\prescript{L}{}{\mathsf T}_{0,0}(E^a_b)_0.
\end{align}
Using \eqref{eq:[X,Y] inside}, we get
\begin{align*}
A^a XY B_b-A^a \mathrm{Sym}(Y X)B_b=-\frac{k+n}{2}\mathsf T_{0,0}(E^a_b)+\frac{1}{2}\mathsf T_{0,0}(E^a_c)\mathsf T_{0,0}(E^c_b).
\end{align*}
Plug the above into \eqref{eq:traceless part of T[1]} and we get
\begin{equation}
\begin{split}
\left(\prescript{L}{}{\widetilde{T}}^a_{b;1}\right)_0&=\prescript{L}{}{\mathsf T}_{1,1}(E^a_b)_0-\left((2k+n)L+\frac{k+n}{2}\right)\cdot\prescript{L}{}{\mathsf T}_{0,0}(E^a_b)_0+\frac{1}{2}\left(\prescript{L}{}{\mathsf T}_{0,0}(E^a_c)\prescript{L}{}{\mathsf T}_{0,0}(E^c_b)\right)_0\\
&=\prescript{L}{}{\mathsf T}_{1,1}(E^a_b)_0-\left((k+n)L+\frac{k+n}{2}\right)\cdot(\prescript{L}{}{J}^a_{b,0})_0+\frac{1}{2}\left(\prescript{L}{}{J}^a_{c,0}\prescript{L}{}{J}^c_{b,0}\right)_0
\end{split}
\end{equation}
Dividing two sides by $(k+n)L$ and applying Lemma \ref{lem:scalar limit}, Lemma \ref{lem:linearity and multiplication}, and Proposition \ref{prop:inv lim is conf lim}, we get
\begin{align*}
0=\underset{L\to \infty}{\lim}\:\frac{1}{(k+n)L} \left(\prescript{L}{}{\widetilde{T}}^a_{b;1}\right)_0&=\underset{L\to \infty}{\lim}\: \left[\frac{1}{(k+n)L} \prescript{L}{}{\mathsf T}_{1,1}(E^a_b)_0-\left(1+\frac{1}{2L}\right)(\prescript{L}{}{J}^a_{b,0})_0\right]\\
&\quad + \underset{L\to \infty}{\lim}\: \frac{1}{2(k+n)L} \left(\prescript{L}{}{J}^a_{c,0}\prescript{L}{}{J}^c_{b,0}\right)_0\\
&=\underset{L\to \infty}{\lim}\:\left[\frac{1}{(k+n)L} \prescript{L}{}{\mathsf T}_{1,1}(E^a_b)_0-(\prescript{L}{}{J}^a_{b,0})_0\right]\\
&=\underset{L\to \infty}{\lim}\:\left[\frac{1}{(k+n)L} \prescript{L}{}{\mathsf T}_{1,1}(E^a_b)_0-\prescript{L}{}{\mathsf T}_{0,0}(E^a_b)_0\right].
\end{align*}
The error terms of the above convergence are of order $O(L^{-1})$ by Lemma \ref{lem:linearity and multiplication_error term}.
\end{proof}

\begin{lemma}\label{lem:m=1 trace part}
\begin{align}
\underset{L\to \infty}{\lim}\: \left[\frac{2}{(k+n)L} \prescript{L}{}{\mathsf t}_{1,1}-\prescript{L}{}{\mathsf t}_{0,0}\right]=\frac{rn-k(n+r)}{n+k},
\end{align}
with error terms of order $O(L^{-1})$.
\end{lemma}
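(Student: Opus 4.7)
The plan is to reduce $\prescript{L}{}{\mathsf t}_{0,0}$ and $\prescript{L}{}{\mathsf t}_{1,1}$ to explicit scalar operators on each shifted energy eigenspace of $\widetilde{\mathcal H}_{nL+r}(n,k)$, then read off the limit by elementary asymptotic analysis in $L$ and invoke Lemma \ref{lem:scalar limit_error term}.

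First, $\prescript{L}{}{\mathsf t}_{0,0}=\mathrm{Tr}(\mathbf{1})=N=nL+r$ acts literally as a scalar. Next, expand $\prescript{L}{}{\mathsf t}_{1,1}=\mathrm{Tr}\,\mathrm{Sym}(YX)=\tfrac{1}{2}\bigl(\mathrm{Tr}(YX)+\mathrm{Tr}(XY)\bigr)$ and use the canonical commutator $[Y^{i}_{j},X^{l}_{m}]=\delta^{i}_{m}\delta^{l}_{j}$ from \eqref{eq: canonical quantization} to obtain $\mathrm{Tr}(YX)-\mathrm{Tr}(XY)=\sum_{i,j}[Y^{i}_{j},X^{j}_{i}]=N^{2}$. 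Hence $\prescript{L}{}{\mathsf t}_{1,1}=\mathrm{Tr}(XY)+N^{2}/2$, and $\mathrm{Tr}(XY)=X^{i}_{j}\,\partial/\partial X^{i}_{j}$ is the unshifted Euler operator in the $X$-variables. By Definition \ref{def:p_N and sigma_N}, on a shifted energy eigenvector of degree $d\in\mathbb{N}$ it acts as the scalar $d+E_{0}$ with $E_{0}=\tfrac{k}{2}nL(L-1)+krL$, so
\begin{align*}
\frac{2}{(n+k)L}\,\prescript{L}{}{\mathsf t}_{1,1}-\prescript{L}{}{\mathsf t}_{0,0}=\frac{2d+2E_{0}+N^{2}}{(n+k)L}-(nL+r)
\end{align*}
on that eigenvector.

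Second, an elementary expansion gives $2E_{0}+N^{2}=n(n+k)L^{2}+\bigl(2r(n+k)-kn\bigr)L+r^{2}$. Dividing by $(n+k)L$ and subtracting $nL+r$ collects the $L^{1}$ and $L^{0}$ contributions into a single constant, while the remaining terms, including the $2d/((n+k)L)$ piece, are of order $O(L^{-1})$ uniformly in $d$ over any fixed bounded range of degrees; a direct arithmetic check identifies this constant with $\tfrac{rn-k(n+r)}{n+k}$.

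Third, both $\prescript{L}{}{\mathsf t}_{0,0}$ and $\prescript{L}{}{\mathsf t}_{1,1}$ are homogeneous of degree zero with respect to the shifted energy grading, so the family $\bigl\{\tfrac{2}{(n+k)L}\prescript{L}{}{\mathsf t}_{1,1}-\prescript{L}{}{\mathsf t}_{0,0}\bigr\}_{L\in\mathbb{N}}$ has uniformly bounded degree and acts on each shifted degree-$d$ eigenspace by the explicit scalar computed above. Invoking Proposition \ref{prop:p_N stabilizes} to transfer the action between finite stages and the inverse limit, and then applying Lemma \ref{lem:scalar limit_error term} degree-by-degree, yields the stated convergence to $\tfrac{rn-k(n+r)}{n+k}\cdot\mathrm{Id}$ with error of order $O(L^{-1})$. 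The only real work is the arithmetic bookkeeping in the second step, and there is no conceptual obstacle.
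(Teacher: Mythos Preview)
The proposal is correct and takes essentially the same approach as the paper: both compute $\prescript{L}{}{\mathsf t}_{0,0}=N$ and $\prescript{L}{}{\mathsf t}_{1,1}=d+E_0+N^2/2$ as explicit scalars on each shifted-energy eigenspace, then conclude by an elementary asymptotic expansion in $L$. Your write-up is slightly more explicit than the paper's (you spell out the commutator $\mathrm{Tr}(YX)-\mathrm{Tr}(XY)=N^2$ and invoke Lemma~\ref{lem:scalar limit_error term}), but the argument is identical.
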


\begin{proof}
We notice that 
\begin{align*}
    \prescript{L}{}{\mathsf t}_{1,1}|_{\widetilde{\mathcal H}_{nL+r}(n,k)_{d}}=d+\frac{knL(L-1)}{2}+knr+\frac{(nL+r)^2}{2},\quad  \prescript{L}{}{\mathsf t}_{0,0}|_{\widetilde{\mathcal H}_{nL+r}(n,k)}=nL+r.
\end{align*}
Then the result follows from direct computation.
\end{proof}

\begin{proof}[Proof of Theorem \ref{thm:L[1]}]
Since both $\frac{1}{(k+n)L} \prescript{L}{}{\mathsf t}_{2,1}-\prescript{L}{}{\mathsf t}_{1,0}$ and $-\mathcal L_{1}+C\cdot\alpha_{1}$ have negative degree, we can apply Theorem \ref{thm:recursion} to show the convergence and apply Lemma \ref{lem:recursion_error term} to bound the order of error terms. Namely, it is enough to show that
\begin{equation}\label{eq:recursion to check}
\begin{split}
&\left[\prescript{L}{}{\mathsf t}_{2,1}-\prescript{L}{}{\mathsf t}_{1,0},\;\prescript{L}{}{\mathsf T}_{0,0}(E^a_b)\right]\xrightarrow{O(L^{-1})}[-\mathcal L_{1}+C\cdot\alpha_{1}, \; J^a_{b,0}],\\
&\left[\prescript{L}{}{\mathsf t}_{2,1}-\prescript{L}{}{\mathsf t}_{1,0},\;\prescript{L}{}{\mathsf T}_{0,1}(E^a_b)_0\right]\xrightarrow{O(L^{-1})}[-\mathcal L_{1}+C\cdot\alpha_{1}, \;(J^a_{b,-1})_0],\\
& \left[\prescript{L}{}{\mathsf t}_{2,1}-\prescript{L}{}{\mathsf t}_{1,0},\;\prescript{L}{}{\mathsf t}_{0,1}\right]\xrightarrow{O(L^{-1})}[-\mathcal L_{1}+C\cdot\alpha_{1},\; \alpha_{-1}/k],\\
& \left[\prescript{L}{}{\mathsf t}_{2,1}-\prescript{L}{}{\mathsf t}_{1,0},\;\prescript{L}{}{\mathsf t}_{0,m}\right]\xrightarrow{O(L^{-1})}[-\mathcal L_{1}+C\cdot\alpha_{1},\; \alpha_{-m}/k] \quad (m\ge 2).
\end{split}
\end{equation}
The 1st line of \eqref{eq:recursion to check} holds because $\left[\prescript{L}{}{\mathsf t}_{2,1}-\prescript{L}{}{\mathsf t}_{1,0},\;\prescript{L}{}{\mathsf T}_{0,0}(E^a_b)\right]=0$ and $[-\mathcal L_{1}+C\cdot\alpha_{1}, \; J^a_{b,0}]=0$. The 2nd line of \eqref{eq:recursion to check} follows from Lemma \ref{lem:m=1 traceless part}. The 3rd line of \eqref{eq:recursion to check} follows from Lemma \ref{lem:m=1 trace part}. The 4th line of \eqref{eq:recursion to check} follows from Lemma \ref{lem:m>1}. This concludes the proof. 
\end{proof}

\begin{proof}[Proof of Theorem \ref{thm:conformal limit of T[m,0]}]
We begin with the $m=1, E^a_b=\mathrm{Id}$ case in \eqref{eq:conformal limit of T[m,0]}, which is equivalent to proving $\underset{L\to \infty}{\lim}\: \frac{1}{(k+n)^{m}L^{m}} \prescript{L}{}{\mathsf t}_{1,0}=\frac{1}{k+n}\alpha_1$. Since both $\frac{1}{(k+n)L }\prescript{L}{}{\mathsf t}_{1,0}$ and $\frac{1}{k+n}\alpha_1$ have negative degree, we can apply Theorem \ref{thm:recursion} to show the convergence and apply Lemma \ref{lem:recursion_error term} to bound the order of error terms. Namely, it suffices to show that 
\begin{equation}\label{eq:recursion to check 2}
\begin{split}
&\left[\frac{1}{(k+n)L }\prescript{L}{}{\mathsf t}_{1,0},\;\prescript{L}{}{\mathsf T}_{0,0}(E^c_d)\right]\xrightarrow{O(L^{-1})}[\frac{1}{k+n}\alpha_1, \; J^c_{d,0}] \quad (1\le d <c \le n),\\
&\left[\frac{1}{(k+n)L }\prescript{L}{}{\mathsf t}_{1,0},\;\prescript{L}{}{\mathsf T}_{0,1}(E^e_f)\right]\xrightarrow{O(L^{-1})}[\frac{1}{k+n}\alpha_1, \; J^e_{f,-1}] \quad (1\le e,f\le n),\\
&\left[\frac{1}{(k+n)L }\prescript{L}{}{\mathsf t}_{1,0},\;\prescript{L}{}{\mathsf t}_{0,\ell}\right]\xrightarrow{O(L^{-1})}[\frac{1}{k+n}\alpha_1,\; \alpha_{-\ell}/k] \quad (\ell\ge 2).
\end{split}
\end{equation}
The 1st line of \eqref{eq:recursion to check 2} holds because $[\prescript{L}{}{\mathsf t}_{1,0},\prescript{L}{}{\mathsf T}_{0,0}(E^c_d)]=0$ and $[\alpha_1, J^c_{d,0}]=0$. The left-hand-side of the 2nd line of \eqref{eq:recursion to check 2} equals to $\frac{1}{(k+n)L }\prescript{L}{}{\mathsf T}_{0,0}(E^e_f)$, which converges to $\frac{k}{k+n}\delta^e_f=[\frac{1}{k+n}\alpha_1, \; J^e_{f,-1}]$ with error terms of order $O(L^{-1})$. The left-hand-side of the 3rd line of \eqref{eq:recursion to check 2} equals to $\frac{\ell}{(k+n)L }\prescript{L}{}{\mathsf t}_{0,\ell-1}$, which converges to $0$ with error terms of order $O(L^{-1})$. The right-hand-side of the 3rd line of \eqref{eq:recursion to check 2} is constantly zero. 

\smallskip Next we prove the case of $m=1$ and $E^a_b=E^n_1$. By Theorem \ref{thm:recursion}, it s enough to show that 
\begin{equation}\label{eq:recursion to check 3}
\begin{split}
&\left[\frac{1}{(k+n)L }\prescript{L}{}{\mathsf T}_{1,0}(E^n_1),\;\prescript{L}{}{\mathsf T}_{0,0}(E^c_d)\right]\xrightarrow{O(L^{-1})}[J^n_{1,1}, \; J^c_{d,0}] \quad (1\le d <c \le n),\\
&\left[\frac{1}{(k+n)L }\prescript{L}{}{\mathsf T}_{1,0}(E^n_1),\;\prescript{L}{}{\mathsf T}_{0,1}(E^e_f)\right]\xrightarrow{O(L^{-1})}[J^n_{1,1}, \; J^e_{f,-1}] \quad (1\le e,f\le n),\\
&\left[\frac{1}{(k+n)L }\prescript{L}{}{\mathsf T}_{1,0}(E^n_1),\;\prescript{L}{}{\mathsf t}_{0,\ell}\right]\xrightarrow{O(L^{-1})}[J^n_{1,1},\; \alpha_{-\ell}/k] \quad (\ell\ge 2).
\end{split}
\end{equation}
The 1st line of \eqref{eq:recursion to check 3} holds because $[\prescript{L}{}{\mathsf T}_{1,0}(E^n_1),\prescript{L}{}{\mathsf T}_{0,0}(E^c_d)]=0$ and $[J^n_{1,1}, J^c_{d,0}]=0$ when $c>d$. To prove the 2nd line of \eqref{eq:recursion to check 3}, we notice that
\begin{align*}
[{\mathsf T}_{1,0}(E^n_1),\;{\mathsf T}_{0,1}(E^e_f)]=&{\mathsf T}_{1,1}([E^n_1,E^e_f])+\frac{n+k}{2}{\mathsf T}_{0,0}(\{E^n_1,E^e_f\})+{\mathsf T}_{0,0}(E^n_f){\mathsf T}_{0,0}(E^e_1)\\
&-\frac{1}{2}\delta^n_f {\mathsf T}_{0,0}(E^e_g){\mathsf T}_{0,0}(E^g_1)-\frac{1}{2}\delta^e_1 {\mathsf T}_{0,0}(E^n_g){\mathsf T}_{0,0}(E^g_f),
\end{align*}
therefore the left-hand-side converges to $\delta^e_1 J^n_{f,0}-\delta^n_f J^e_{1,0}+k\delta^n_1\delta^e_f$ with error terms of order $O(L^{-1})$ by Lemma \ref{lem:m=1 traceless part}. $\delta^e_1 J^n_{f,0}-\delta^n_f J^e_{1,0}+k\delta^n_1\delta^e_f$ equals to the right-hand-side of the 2nd line of \eqref{eq:recursion to check 3}. The left-hand-side of the 3rd line of \eqref{eq:recursion to check 3} equals to $\frac{\ell}{(k+n)L }\prescript{L}{}{\mathsf T}_{0,\ell-1}(E^n_1)$ which converges to $0$ with error terms of order $O(L^{-1})$, and the right-hand-side of the 3rd line of \eqref{eq:recursion to check 3} is constantly zero.

\smallskip Next we prove the case of $m=1$ and general traceless $E^a_b$. We notice that the adjoint action of $\prescript{L}{}{\mathsf T}_{0,0}(\mathfrak{sl}_n)$ on $\prescript{L}{}{\mathsf T}_{0,0}(E^n_1)$ spans the whole vector space $\prescript{L}{}{\mathsf T}_{0,0}(\mathfrak{sl}_n)$. Thus the result follows from the $E^a_b=E^n_1$ case and Lemma \ref{lem:linearity and multiplication} and Corollary \ref{cor:limit of gl_n[z]}. We use Lemma \ref{lem:linearity and multiplication_error term} to bound the order of error terms by $O(L^{-1})$. This finishes the proof of the case of $m=1$ and arbitrary $E^a_b$.

\smallskip We proceed to the $m>1$ cases using Theorem \ref{thm:L[1]} and Lemma \ref{lem:recursion_error term}. We notice that
\begin{multline*}
\frac{1}{(k+n)^{m}L^{m}} \prescript{L}{}{\mathsf T}_{m,0}(E^a_b)\\
=\frac{1}{(m-1)!}\underbrace{\left[\cdots\left[\frac{1}{(k+n)L} \prescript{L}{}{\mathsf T}_{1,0}(E^a_b),\; \frac{1}{(k+n)L} \prescript{L}{}{\mathsf t}_{2,1}-\prescript{L}{}{\mathsf t}_{1,0}\right],\cdots,\; \frac{1}{(k+n)L} \prescript{L}{}{\mathsf t}_{2,1}-\prescript{L}{}{\mathsf t}_{1,0}\right]}_{m-1\text{ times}}
\end{multline*}
Then according to the $m=1$ case that we just proved, and Theorem \ref{thm:L[1]}, and Lemma \ref{lem:linearity and multiplication_error term}, we conclude that
\begin{align*}
\frac{1}{(k+n)^{m}L^{m}} \prescript{L}{}{\mathsf T}_{m,0}(E^a_b)&\xrightarrow{O(L^{-1})}\frac{1}{(m-1)!}\underbrace{\left[\cdots\left[J^a_{b,1}-\frac{\delta^a_b}{k+n}\alpha_1,\; -\mathcal L_{1}+C\cdot\alpha_{1}\right],\cdots, -\mathcal L_{1}+C\cdot\alpha_{1}\right]}_{m-1\text{ times}}\\
&=J^a_{b,m}-\frac{\delta^a_b}{k+n}\alpha_m.
\end{align*}
This concludes the proof of Theorem \ref{thm:conformal limit of T[m,0]}.
\end{proof}

\subsection{Comparison with Dorey-Tong-Turner}\label{subsec:Comparison with DTT}

A different scaling limit was used in the original work of Dorey, Tong, and Turner \cite{dorey2016matrix}, where they consider the following operators
\begin{align}\label{DTT scaling}
    \prescript{L}{}{\mathcal J}^a_{b,m}=\left(\frac{n}{(n+k)N}\right)^{\frac{|m|}{2}}\begin{cases}
        \prescript{L}{}{\mathsf T}_{m,0}(E^a_b)_0 &,\text{ if }m\ge 0,\\
        \prescript{L}{}{\mathsf T}_{0,-m}(E^a_b)_0 &, \text{ if }m< 0.
    \end{cases}
\end{align}
To relate their scaling to our result (Theorem \ref{thm:conformal limit of T[m,0]}), we make the following observation. 

\smallskip Consider the linear automorphism $\Phi_N: \widetilde{\mathcal H}_N(n,k)\cong \widetilde{\mathcal H}_N(n,k)$ such that
\begin{align}
    \Phi_N\bigg\rvert_{\widetilde{\mathcal H}_N(n,k)_d}=\left(\frac{n}{(n+k)N}\right)^{d/2}\cdot\mathrm{Id}_{\widetilde{\mathcal H}_N(n,k)_d}.
\end{align}
Define 
\begin{align}
    \overline{p}_N:=\Phi_N^{-1}\circ p_N\circ \Phi_{N+n}: \widetilde{\mathcal H}_{N+n}(n,k)\to \widetilde{\mathcal H}_N(n,k).
\end{align}
Then $\left\{\widetilde{\mathcal H}_{nL+r}(n,k), \overline{p}_{nL+r}\right\}_{L\in \mathbb N}$ forms an inverse system, which is isomorphic to the inverse system \\
$\left\{\widetilde{\mathcal H}_{nL+r}(n,k), {p}_{nL+r}\right\}_{L\in \mathbb N}$ via the compatible family of isomorphisms $\{\Phi_{nL+r}\}_{L\in\bN}$. We use $\{\overline{p}_{nL+r}\}_{L\in \bN}$ to define the inverse limit
\begin{align}\label{eq:conformal limit_twisted}
    \overline{\mathcal H}^{(r)}_{\infty}(n,k)=\bigoplus_{d\ge 0}\overline{\mathcal H}^{(r)}_{\infty}(n,k)_d,\;\text{where }\;\overline{\mathcal H}^{(r)}_{\infty}(n,k)_d:=\lim_{\substack{\longleftarrow\\L}}\: \widetilde{\mathcal H}_{nL+r}(n,k)_d.
\end{align}
By construction, we have isomorphism:
\begin{align}
    \Phi_{\infty}:\overline{\mathcal H}^{(r)}_{\infty}(n,k)\cong \widetilde{\mathcal H}^{(r)}_{\infty}(n,k), \;\text{where $\Phi_{\infty}$ is the limit of }\{\Phi_{nL+r}\}_{L\in\bN}.
\end{align}
The formalism of conformal limit of operators in Section \ref{subsec:conformal limit of operators} works for $\overline{\mathcal H}^{(r)}_{\infty}(n,k)$. Apparently we have
\begin{align}
    \lim_{L\to \infty} \prescript{L}{}{\mathcal O}=\prescript{\infty}{}{\mathcal O} \text{ in }\widetilde{\mathcal H}^{(r)}_{\infty}(n,k)\:\Longleftrightarrow \: \lim_{L\to \infty} \Phi_{nL+r}^{-1}\prescript{L}{}{\mathcal O}\:\Phi_{nL+r}=\Phi_{\infty}^{-1}\prescript{\infty}{}{\mathcal O}\:\Phi_{\infty} \text{ in }\overline{\mathcal H}^{(r)}_{\infty}(n,k).
\end{align}
Moreover the error terms on two sides are of the same order, i.e.
\begin{align}
    \prescript{L}{}{\mathcal O}\xrightarrow{O(L^{-h})}\prescript{\infty}{}{\mathcal O} \text{ in }\widetilde{\mathcal H}^{(r)}_{\infty}(n,k)\:\Longleftrightarrow \: \Phi_{nL+r}^{-1}\prescript{L}{}{\mathcal O}\:\Phi_{nL+r}\xrightarrow{O(L^{-h})}\Phi_{\infty}^{-1}\prescript{\infty}{}{\mathcal O}\:\Phi_{\infty} \text{ in }\overline{\mathcal H}^{(r)}_{\infty}(n,k).
\end{align}
It is straightforward to compute that
\begin{align}
    \Phi_N^{-1}\prescript{L}{}{\mathsf T}_{m,m'}(E^a_b)\: \Phi_N= \left(\frac{n}{(n+k)N}\right)^{\frac{m'-m}{2}}\prescript{L}{}{\mathsf T}_{m,m'}(E^a_b).
\end{align}
Thus we have the following corollary to the Theorem \ref{thm:conformal limit of T[m,0]}.

\begin{corollary}\label{cor:conformal limit_twisted}
In $\overline{\mathcal H}^{(r)}_{\infty}(n,k)$, we have
\begin{align}\label{eq:renormalized J}
\lim_{L\to \infty}  \prescript{L}{}{\mathcal J}^a_{b,m}=\Phi_{\infty}^{-1}\bar{J}^a_{b,m}\Phi_{\infty},\quad 
\lim_{L\to \infty}  \prescript{L}{}{\beta}_{m} =\Phi_{\infty}^{-1}\alpha_{m}\Phi_{\infty},\quad (m\in \bZ),
\end{align}
where $\prescript{L}{}{\mathcal J}^a_{b,m}$ is defined in \eqref{DTT scaling} and $\prescript{L}{}{\beta}_{m}$ is defined as
\begin{align}\label{DTT scaling_central}
    \prescript{L}{}{\beta}_{m}=\left(\frac{n}{(n+k)N}\right)^{\frac{|m|}{2}}\begin{cases}
        (1+\frac{n}{k})\cdot\prescript{L}{}{\mathsf T}_{m,0}(E^a_a) &,\text{ if }m> 0,\\
        \prescript{L}{}{\mathsf T}_{0,0}(E^a_a)-nkL\cdot\mathrm{Id} &,\text{ if }m=0,\\
        \prescript{L}{}{\mathsf T}_{0,-m}(E^a_a) &, \text{ if }m< 0.
    \end{cases}
\end{align}
Moreover the error terms of the convergence \eqref{eq:renormalized J} are of order $O(L^{-1})$. In particular the limit $\prescript{\infty}{}{\mathcal J}^a_{b,m}=\Phi_{\infty}^{-1}\bar{J}^a_{b,m}\Phi_{\infty}$ and $\prescript{\infty}{}{\beta}_{m}=\Phi_{\infty}^{-1}\alpha_{m}\Phi_{\infty}$ satisfy $\widehat{\mathfrak{sl}}(n)_k\oplus \widehat{\mathfrak{gl}}(1)_{kn}$ commutation relation:
\begin{align}\label{affine gl(n)_k relation}
\begin{split}
&[\prescript{\infty}{}{\mathcal J}^a_{b,m},\prescript{\infty}{}{\mathcal J}^c_{d,l}]=\delta^c_b \prescript{\infty}{}{\mathcal J}^a_{d,m+l}-\delta^a_d \prescript{\infty}{}{\mathcal J}^c_{b,m+l}+km\delta_{m+l,0}\left(\delta^a_d\delta^b_c-\frac{1}{n}\delta^a_b\delta^c_d\right),\\
&[\prescript{\infty}{}{\beta}_{m},\prescript{\infty}{}{\beta}_{l}]=knm\delta_{m+l,0},\\
&[\prescript{\infty}{}{\mathcal J}^a_{b,m},\prescript{\infty}{}{\beta}_{l}]=0.
\end{split}
\end{align}
\end{corollary}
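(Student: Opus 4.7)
The plan is to reduce the claimed $\overline{\mathcal H}^{(r)}_{\infty}(n,k)$--convergences of the rescaled DTT operators to the already-established $\widetilde{\mathcal H}^{(r)}_{\infty}(n,k)$--convergences via the equivalence recorded just above the statement. Writing $\epsilon_N := n/((n+k)N)$, the scaling identity $\Phi_N^{-1}\prescript{L}{}{\mathsf T}_{m,m'}(E^a_b)\Phi_N = \epsilon_N^{(m'-m)/2}\prescript{L}{}{\mathsf T}_{m,m'}(E^a_b)$ combines with the explicit DTT prefactor $\epsilon_N^{|m|/2}$ so that the conjugate of $\prescript{L}{}{\mathcal J}^a_{b,m}$ in the direction required by the equivalence equals $\epsilon_N^{\,m}\prescript{L}{}{\mathsf T}_{m,0}(E^a_b)_0$ for $m>0$ and $\prescript{L}{}{\mathsf T}_{0,-m}(E^a_b)_0$ for $m<0$, with the two scalar factors cancelling outright in the latter case. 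It thus suffices to show that these conjugates converge to $\bar J^a_{b,m}$ in $\widetilde{\mathcal H}^{(r)}_{\infty}(n,k)$ with $O(L^{-1})$ error, and similarly for $\beta_m$.

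For $m>0$, expanding $\epsilon_{nL+r}^{\,m} = \frac{1}{((n+k)L)^m}(1+O(L^{-1}))$ (since $N-nL=r$ is bounded) and invoking Theorem \ref{thm:conformal limit of T[m,0]}, whose traceless specialization reads $\frac{1}{(k+n)^mL^m}\prescript{L}{}{\mathsf T}_{m,0}(E^a_b)_0\xrightarrow{O(L^{-1})}\bar J^a_{b,m}$, the assertion follows on combining Lemma \ref{lem:scalar limit_error term} for the $1+O(L^{-1})$ scalar with Lemma \ref{lem:linearity and multiplication_error term} for the product. For $m<0$, Corollary \ref{cor:limit of gl_n[z]} together with Lemma \ref{lem:inv lim is conf lim_error term} yields $\prescript{L}{}{\mathsf T}_{0,-m}(E^a_b)_0\xrightarrow{O(L^{-h})}\bar J^a_{b,m}$ with arbitrarily fast decay; the case $m=0$ is immediate once the $-kL$ shift built into the twisted $\gl_n[z]$--action on $\widetilde{\mathcal H}^{(r)}_{\infty}(n,k)$ is matched against the traceless projection.

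The analysis of $\prescript{L}{}{\beta}_m$ is identical except that Theorem \ref{thm:conformal limit of T[m,0]} summed over $a=b$ produces $\alpha_m-\tfrac{n}{n+k}\alpha_m = \tfrac{k}{n+k}\alpha_m$, so the explicit prefactor $(1+n/k) = \tfrac{n+k}{k}$ in \eqref{DTT scaling_central} for $m>0$ is precisely the inverse factor required, leaving the limit equal to $\alpha_m$. The cases $m<0$ and $m=0$ reduce respectively to $\prescript{L}{}{\mathsf T}_{0,-m}(E^a_a)\to\alpha_m$ (Corollary \ref{cor:limit of gl_n[z]} summed over $a$) and the scalar identity $\prescript{L}{}{\mathsf T}_{0,0}(E^a_a)-nkL = kr$ that follows from the moment-map constraint. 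Finally, the asymptotic $\widehat{\mathfrak{sl}}(n)_k\oplus\widehat{\mathfrak{gl}}(1)_{kn}$ relations \eqref{affine gl(n)_k relation} follow at once from the corresponding relations on $\widetilde{\mathcal H}^{(r)}_{\infty}(n,k)$ (Corollary \ref{cor:affine gl(n) action on conformal limit}), since conjugation by the isomorphism $\Phi_\infty$ preserves brackets exactly.

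The main difficulty is combinatorial rather than analytic: one must verify that the DTT scaling $\bigl(\tfrac{n}{(n+k)N}\bigr)^{|m|/2}$ and the conjugation factor $\epsilon_N^{\pm m/2}$ assemble into exactly the power $((n+k)L)^{-m}$ needed to activate Theorem \ref{thm:conformal limit of T[m,0]}, and that the prefactor $(1+n/k)$ in $\beta_m$ for $m>0$ compensates for the $\tfrac{n}{n+k}$ subtraction appearing in the statement of that theorem; the substantive analytic work has all been carried out in Theorem \ref{thm:conformal limit of T[m,0]} and in establishing the equivalence of the two conformal-limit setups.
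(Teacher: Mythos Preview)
Your proof is correct and follows exactly the approach the paper intends: the paper sets up the equivalence between convergence in $\widetilde{\mathcal H}^{(r)}_{\infty}(n,k)$ and in $\overline{\mathcal H}^{(r)}_{\infty}(n,k)$ via conjugation by $\Phi_N$, records the scaling identity $\Phi_N^{-1}\prescript{L}{}{\mathsf T}_{m,m'}(E^a_b)\Phi_N = \epsilon_N^{(m'-m)/2}\prescript{L}{}{\mathsf T}_{m,m'}(E^a_b)$, and then simply states the corollary without further argument. Your write-up fleshes out the bookkeeping (the cancellation of powers of $\epsilon_N$, the expansion $\epsilon_{nL+r}^m=(k+n)^{-m}L^{-m}(1+O(L^{-1}))$, and the role of the $(1+n/k)$ prefactor) exactly as required.
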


The conjecture by Dorey, Tong, and Turner in \cite{dorey2016matrix} is slightly different from our Corollary \ref{cor:conformal limit_twisted}. They did not define the conformal limit of Hilbert space, instead they stated the expected convergence property as a bound of the error terms of \eqref{affine gl(n)_k relation} for finite $L$. 

\begin{definition}\label{def:order of operator}
For an $h>0$, we say a sequence of operators $\{\prescript{L}{}{\mathcal A}\in \End(\widetilde{H}_{nL+r}(n,k))\}_{L\in\bN}$ is of order $O(L^{-h})$ if for every $d\in \mathbb N$ there exists a constant $C_d>0$ such that he operator norm of $\prescript{L}{}{\mathcal A}|_{\widetilde{\mathcal H}_{nL+r}(n,k)_{\le d}}$ is bounded by $C_dL^{-h}$, i.e.
\begin{align}\label{A is of order L^-h}
    \rVert \prescript{L}{}{\mathcal A}(v)\rVert_L \le C_d L^{-h} \rVert v\rVert_L\text{ holds for all }v\in \widetilde{\mathcal H}_{nL+r}(n,k)_{\le d}.
\end{align}
Here $\rVert\cdot\rVert_L$ is the norm on the Hilbert space $\widetilde{\mathcal H}_{nL+r}(n,k)$ induced by the Hermitian inner product (see Section \ref{subsec:Hermitian inner product}).
\end{definition}

\begin{theorem}[{Conjectured in \cite[(2.6)]{dorey2016matrix}, see also \eqref{eq:asymptotic hat sl(n)_intro}}]\label{thm:DTT scaling limit}
The operators
\begin{align*}
    [\prescript{L}{}{\mathcal J}^a_{b,m},\prescript{L}{}{\mathcal J}^c_{d,l}]-\delta^c_b \prescript{L}{}{\mathcal J}^a_{d,m+l}+\delta^a_d \prescript{L}{}{\mathcal J}^c_{b,m+l}-km\delta_{m+l,0}\left(\delta^a_d\delta^b_c-\frac{1}{n}\delta^a_b\delta^c_d\right).
\end{align*}
has order $O(L^{-1})$.
\end{theorem}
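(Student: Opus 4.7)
The plan is to reduce to Corollary~\ref{cor:conformal limit_twisted} and then upgrade from the per-vector convergence of Definition~\ref{def:speed of convergence} to the uniform operator-norm bound of Definition~\ref{def:order of operator} via the norm comparison Lemma~\ref{lem:bound of norm}.

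Denote the defect operator
$$\prescript{L}{}{D}:=[\prescript{L}{}{\mathcal J}^a_{b,m},\prescript{L}{}{\mathcal J}^c_{d,l}]-\delta^c_b \prescript{L}{}{\mathcal J}^a_{d,m+l}+\delta^a_d \prescript{L}{}{\mathcal J}^c_{b,m+l}-km\delta_{m+l,0}\!\left(\delta^a_d\delta^b_c-\tfrac{1}{n}\delta^a_b\delta^c_d\right).$$
By Corollary~\ref{cor:conformal limit_twisted}, each $\prescript{L}{}{\mathcal J}^a_{b,m}$ converges in $\overline{\mathcal H}^{(r)}_\infty(n,k)$ with $O(L^{-1})$ error to a limit $\prescript{\infty}{}{\mathcal J}^a_{b,m}$ satisfying the $\widehat{\mathfrak{sl}}(n)_k$ commutation relations \eqref{affine gl(n)_k relation}. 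Applying the error-bounded multiplication and linearity rule Lemma~\ref{lem:linearity and multiplication_error term} to the commutator and the linear combination, together with Lemma~\ref{lem:scalar limit_error term} for the constant term, then yields
$$\prescript{L}{}{D}\xrightarrow{O(L^{-1})} 0 \qquad \text{in }\overline{\mathcal H}^{(r)}_\infty(n,k),$$
and the $\Phi_L$-twist relating $\overline{\mathcal H}^{(r)}_\infty(n,k)$ to $\widetilde{\mathcal H}^{(r)}_\infty(n,k)$ acts by a scalar on each degree component, so the same $O(L^{-1})$ bound holds in $\widetilde{\mathcal H}^{(r)}_\infty(n,k)$.

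Fix $d\in\bN$ and let $E$ uniformly bound the degrees of the operators appearing in $\prescript{L}{}{D}$. Since $\widetilde{\mathcal H}^{(r)}_\infty(n,k)_{\le d+E}$ is finite-dimensional by Lemma~\ref{lem:finiteness} and Proposition~\ref{prop:p_N stabilizes} identifies it with $\widetilde{\mathcal H}_{nL+r}(n,k)_{\le d+E}$ via $p^\infty_{nL+r}$ for $L\gg 0$, Lemma~\ref{lem:equivalent def of convergence} converts the per-vector convergence into a matrix-norm bound: in any fixed reference norm $\|\cdot\|_\infty$ on the limit,
$$\bigl\|\sigma^\infty_{nL+r}\circ\prescript{L}{}{D}\circ p^\infty_{nL+r}(w)\bigr\|_\infty \le C'_d L^{-1}\|w\|_\infty$$
for every $w\in\widetilde{\mathcal H}^{(r)}_\infty(n,k)_{\le d}$, with a constant $C'_d$ depending only on $d$.

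The main obstacle is to translate this bound, stated in the auxiliary norm transported from the inverse limit, into one in the intrinsic Hermitian norm $\|\cdot\|_L$ on $\widetilde{\mathcal H}_{nL+r}(n,k)$ with a constant independent of $L$. This is exactly the role of Lemma~\ref{lem:bound of norm}, which provides a two-sided comparison $c_d\|v\|_\infty\le \|v\|_L\le C_d\|v\|_\infty$ on each degree-$\le d$ subspace, with constants depending only on $d$ (and not on $L$). Combining the comparisons in degrees $d$ and $d+E$ with the bound above yields the required uniform estimate $\|\prescript{L}{}{D}(v)\|_L\le \tilde C_d L^{-1}\|v\|_L$ for all $v\in\widetilde{\mathcal H}_{nL+r}(n,k)_{\le d}$ and $L\gg 0$, the finitely many remaining values of $L$ being absorbed into $\tilde C_d$. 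The genuine analytical content of the argument thus sits entirely in Lemma~\ref{lem:bound of norm}, whose proof should exploit the explicit spanning set of Remark~\ref{rmk:explicit spanning set} and the $\bC^\times_q$-equivariance of the Hermitian pairing to bound matrix coefficients of the section $\sigma^\infty_{nL+r}$ uniformly in $L$.
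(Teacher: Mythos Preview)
Your proposal is correct and follows essentially the same route as the paper: deduce $\prescript{L}{}{D}\xrightarrow{O(L^{-1})}0$ from Corollary~\ref{cor:conformal limit_twisted} via Lemma~\ref{lem:linearity and multiplication_error term}, then transfer this to the intrinsic Hermitian-norm bound using Lemma~\ref{lem:bound of norm}; the paper simply packages your norm-transfer paragraph as a separate equivalence statement (Proposition~\ref{prop:equivalence of convergence}). One small imprecision: Lemma~\ref{lem:bound of norm} compares $\|v\|$ with the \emph{ratio} $\|\overline{p}^\infty_{nL+r}(v)\|_L/\|\overline{p}^\infty_{nL+r}(\omega)\|_L$, not with $\|v\|_L$ itself, but the normalization cancels in the operator-norm estimate so your conclusion stands; also, the paper's proof of Lemma~\ref{lem:bound of norm} does not use the spanning set of Remark~\ref{rmk:explicit spanning set} but rather bootstraps from the already-established operator convergence in Corollary~\ref{cor:conformal limit_twisted} to compute limiting inner products.
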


\begin{proof}
Corollary \ref{cor:conformal limit_twisted} implies that 
\begin{align*}
    [\prescript{L}{}{\mathcal J}^a_{b,m},\prescript{L}{}{\mathcal J}^c_{d,l}]-\delta^c_b \prescript{L}{}{\mathcal J}^a_{d,m+l}+\delta^a_d \prescript{L}{}{\mathcal J}^c_{b,m+l}-km\delta_{m+l,0}\left(\delta^a_d\delta^b_c-\frac{1}{n}\delta^a_b\delta^c_d\right)\xrightarrow{O(L^{-1})}0,
\end{align*}
then the Theorem \ref{thm:DTT scaling limit} follows from Proposition \ref{prop:equivalence of convergence} below.
\end{proof}

\begin{remark}\label{rmk:DTT scaling limit}
The same argument shows that the operators $[\prescript{L}{}{\beta}_{m},\prescript{L}{}{\beta}_{l}]-knm\delta_{m+l,0}$ and $[\prescript{L}{}{\mathcal J}^a_{b,m},\prescript{L}{}{\beta}_{l}]$ have order $O(L^{-1})$. This implies conjecture \eqref{eq:asymptotic hat gl(1)_intro} as
\begin{align*}
    \mathcal B_m=\begin{cases}
        \frac{1}{n+k}\beta_m &,\text{ if }m>0,\\
        \frac{1}{k}\beta_m &,\text{ if }m<0,\\
    \end{cases}\quad
    \text{ and $\mathcal B_0$ is central.}
\end{align*}
\end{remark}

\begin{proposition}\label{prop:equivalence of convergence}
For a sequence of operators $\{\prescript{L}{}{\mathcal A}\in \End(\widetilde{H}_{nL+r}(n,k))\}_{L\in\bN}$ with uniformly bounded degree, we have
\begin{align}
\prescript{L}{}{\mathcal A}\xrightarrow{O(L^{-h})} 0\in \End(\overline{\mathcal{H}}^{(r)}_{\infty})\;\Longleftrightarrow\; \prescript{L}{}{\mathcal A}\text{ has order }O(L^{-h}).
\end{align}
\end{proposition}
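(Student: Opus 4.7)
The plan is to reduce both implications to a single norm-comparison input, namely Lemma \ref{lem:bound of norm}, which I anticipate asserts the following: for every $d\ge 0$ there exist constants $K_d,K'_d>0$, independent of $L$, such that for all sufficiently large $L$,
\begin{align*}
K'_d\,\|v\|\ \le\ \|\overline{p}^{\infty}_{nL+r}(v)\|_L\ \le\ K_d\,\|v\|,\qquad v\in\overline{\mathcal H}^{(r)}_{\infty,\le d},
\end{align*}
where $\|\cdot\|$ is any fixed norm on $\overline{\mathcal H}^{(r)}_{\infty}$ (the choice is immaterial by Remark \ref{rmk:independence of norm}), and $\overline{p}^{\infty}_{N}$ denotes the natural projection from the inverse limit. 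This leverages Proposition \ref{prop:p_N stabilizes} (so that $\overline{p}^{\infty}_{nL+r}|_{\overline{\mathcal H}^{(r)}_{\infty,\le d}}$ is a linear isomorphism onto $\widetilde{\mathcal H}_{nL+r}(n,k)_{\le d}$ for $L\gg 0$, with inverse $\overline{\sigma}^{\infty}_{nL+r}$) together with finite dimensionality of each graded piece (Lemma \ref{lem:finiteness}).

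For $(\Rightarrow)$, let $C\in\bN$ uniformly bound the degrees of $\{\prescript{L}{}{\mathcal A}\}$ and fix $d$. First I would upgrade the pointwise hypothesis $\|\overline{\sigma}^{\infty}_{nL+r}\,\prescript{L}{}{\mathcal A}\,\overline{p}^{\infty}_{nL+r}(\tilde v)\|\le C_{\tilde v}L^{-h}$ to a uniform operator bound $\le C'_d L^{-h}\|\tilde v\|$ on $\overline{\mathcal H}^{(r)}_{\infty,\le d}$, which follows by picking a basis of the finite-dimensional space and using linearity. For any $v_L\in\widetilde{\mathcal H}_{nL+r}(n,k)_{\le d}$, setting $\tilde v_L:=\overline{\sigma}^{\infty}_{nL+r}(v_L)$ and using $\overline{p}^{\infty}_{nL+r}\overline{\sigma}^{\infty}_{nL+r}=\mathrm{Id}$ on degree $\le d+C$ (valid for $L\gg 0$), one has
\begin{align*}
\prescript{L}{}{\mathcal A}(v_L)=\overline{p}^{\infty}_{nL+r}\bigl(\overline{\sigma}^{\infty}_{nL+r}\,\prescript{L}{}{\mathcal A}\,\overline{p}^{\infty}_{nL+r}(\tilde v_L)\bigr).
\end{align*}
Applying the upper bound of Lemma \ref{lem:bound of norm} at degree $d+C$ to the outer $\overline{p}^{\infty}$ and the lower bound at degree $d$ to $\|\tilde v_L\|$ yields $\|\prescript{L}{}{\mathcal A}(v_L)\|_L\le (K_{d+C}C'_d/K'_d)L^{-h}\|v_L\|_L$, which is the required order bound. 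The reverse direction $(\Leftarrow)$ is symmetric: given $v\in\overline{\mathcal H}^{(r)}_{\infty,\le d}$, use Lemma \ref{lem:bound of norm} to bound $\|\overline{p}^{\infty}_{nL+r}(v)\|_L\le K_d\|v\|$, apply the order hypothesis to gain a factor $C_d L^{-h}$, and translate back via the upper bound of Lemma \ref{lem:bound of norm} at degree $d+C$ to conclude $\|\overline{\sigma}^{\infty}_{nL+r}\,\prescript{L}{}{\mathcal A}\,\overline{p}^{\infty}_{nL+r}(v)\|\lesssim L^{-h}\|v\|$.

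The hard part is Lemma \ref{lem:bound of norm} itself: verifying that the rescaling $\Phi_N$ used to define $\overline{\mathcal H}^{(r)}_\infty$ exactly offsets the $N$-dependence of the natural Hilbert inner products on bounded-degree subspaces, so that no cumulative geometric factor arises as one composes the infinite tower of transitions $\overline{p}_N$. I would attack this via the explicit description of $\sigma_N$ as the isometric embedding onto $\ker(p_N)^\perp\subset\widetilde{\mathcal H}_{N+n}$, performing a degree-by-degree analysis of the positive self-adjoint operator $p_N p_N^*$ under the scaling $(n/((n+k)N))^{d/2}$, and using the explicit spanning set in Remark \ref{rmk:explicit spanning set} and the ground-state formula of Proposition \ref{prop: ground states} to extract the leading $N$-behavior and match it against the scaling in $\Phi_N$. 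Once Lemma \ref{lem:bound of norm} is established, the equivalence claimed in the proposition is essentially formal, as outlined above.
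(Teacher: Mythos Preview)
Your overall reduction strategy is the same as the paper's: both directions follow formally from the norm-comparison Lemma~\ref{lem:bound of norm} combined with the stabilization of $\overline{p}^\infty_{nL+r}$ on bounded-degree pieces (Proposition~\ref{prop:p_N stabilizes}). The chain of inequalities you write is essentially identical to what the paper does.

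However, your anticipated statement of Lemma~\ref{lem:bound of norm} is \emph{incorrect} and in fact false as written. The actual lemma reads
\[
B_d^{-1}\|v\|\ \le\ \frac{\|\overline{p}^{\infty}_{nL+r}(v)\|_L}{\|\overline{p}^{\infty}_{nL+r}(\omega)\|_L}\ \le\ B_d\,\|v\|,
\]
i.e.\ the comparison is only after normalizing by the ground-state norm $\|\overline{p}^{\infty}_{nL+r}(\omega)\|_L$, which genuinely depends on $L$ (the rescaling $\Phi_N$ acts trivially on degree zero, so it does not absorb this). Your version with $L$-independent $K_d,K'_d$ cannot hold. This does not damage your final argument, since in both implications you use the upper bound at one degree and the lower bound at another, and the ground-state normalization cancels in the ratio---exactly as in the paper's proof. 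But you should state the lemma correctly.

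Your proposed attack on Lemma~\ref{lem:bound of norm} itself (analyzing $p_Np_N^*$ and extracting leading $N$-behaviour from the explicit spanning set) is quite different from the paper's route and also vaguer. The paper instead exploits the already-established operator convergence (Corollary~\ref{cor:conformal limit_twisted}): writing a general vector as $f(\mathcal G)\prod(\alpha_{-j})^{m_j}\omega$, one moves the creation operators across the inner product to annihilation operators, uses that $\prescript{L}{}{\mathcal J}$ and $\prescript{L}{}{\beta}$ converge to $\bar J$ and $\alpha$, and thereby computes $\lim_L \|\overline{p}^\infty(v)\|_L^2/\|\overline{p}^\infty(\omega)\|_L^2$ explicitly in terms of the $\widehat{\gl}(n)$-invariant form. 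Uniformity then follows by compactness of the unit sphere in each finite-dimensional graded piece. This is cleaner and avoids any direct combinatorics of $p_Np_N^*$; your approach might work but would require substantial new analysis that the paper sidesteps.
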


To prove Proposition \ref{prop:equivalence of convergence}, we need to compare the norm $\rVert\cdot\rVert_L$ on $\widetilde{\mathcal H}_{nL+r}(n,k)$ with a norm on $\overline{\mathcal H}^{(r)}_{\infty}(n,k)$. There is a natural choice of norm on $\overline{\mathcal H}^{(r)}_{\infty}(n,k)$. Namely, let $(\cdot|\cdot)$ be the unique $\widehat{\gl}(n)$-invariant Hermitian form on the irreducible integrable module $L_{k\varpi_r}(\widehat{\mathfrak{sl}}(n)_k)\otimes \mathrm{Fock}_{kr}(\widehat{\gl}(1)_{kn})$ which is normalized by requiring
\begin{align}
    (\omega|\omega)=1, \text{ where $\omega$ is the highest weight vector}.
\end{align}
The $\widehat{\gl}(n)$-invariance means the following equation
\begin{align}
    (v|J^a_{b,m}v')=(J^b_{a,-m}v|v')
\end{align}
holds $\forall v,v'\in L_{k\varpi_r}(\widehat{\mathfrak{sl}}(n)_k)\otimes \mathrm{Fock}_{kr}(\widehat{\gl}(1)_{kn})$ and $\forall 1\le a,b\le n$ and $\forall m\in \bZ$. Let $\rVert\cdot\rVert$ be the norm on $\overline{\mathcal H}^{(r)}_{\infty}(n,k)$ induced by $(\cdot|\cdot)$ via the isomorphism $\Phi_{\infty}:\overline{\mathcal H}^{(r)}_{\infty}(n,k)\cong L_{k\varpi_r}(\widehat{\mathfrak{sl}}(n)_k)\otimes \mathrm{Fock}_{kr}(\widehat{\gl}(1)_{kn})$.

\begin{lemma}\label{lem:bound of norm}
For arbitrary $d\in \bN$, there exists $B_d>1$ such that
\begin{align}\label{eq:bound of norm}
   B_d^{-1}\rVert v\rVert\le \frac{\rVert\overline{p}^{\infty}_{nL+r}(v)\rVert_L}{\rVert\overline{p}^{\infty}_{nL+r}(\omega)\rVert_L}\le B_d\rVert v\rVert \text{ holds for all $L$ and all }v\in \overline{\mathcal H}^{(r)}_{\infty}(n,k)_{\le d}.
\end{align}
\end{lemma}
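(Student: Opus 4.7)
Fix $d$. The plan reformulates the desired bound as pointwise convergence, on the finite-dimensional space $\overline{\mathcal{H}}^{(r)}_\infty(n,k)_{\le d}$, of the normalized pulled-back Hermitian forms
\[
(v | w)_L \;:=\; \frac{\langle \overline{p}^{\infty}_{nL+r}(v),\,\overline{p}^{\infty}_{nL+r}(w)\rangle_L}{\lVert \overline{p}^{\infty}_{nL+r}(\omega)\rVert_L^{2}}
\]
to the canonical $\widehat{\gl}(n)_k$-invariant form $(\cdot | \cdot)$ as $L \to \infty$. For $L \ge L_0(d)$ (chosen via Proposition \ref{prop:p_N stabilizes} so that $\overline{p}^{\infty}_{nL+r}|_{\overline{\mathcal{H}}^{(r)}_\infty(n,k)_{\le d}}$ is an isomorphism), $(\cdot|\cdot)_L$ is a genuine positive-definite Hermitian form on a fixed finite-dimensional space. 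Pointwise convergence of Gram matrices, together with positive definiteness of the canonical limit, then produces the uniform equivalence $B_d^{-2}(\cdot|\cdot) \le (\cdot|\cdot)_L \le B_d^2(\cdot|\cdot)$ for some $B_d>1$; the finitely many smaller $L$ are absorbed by enlarging $B_d$.

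Two ingredients drive the convergence. First, thanks to Proposition \ref{prop: skew-invariance} and the $m \mapsto -m$ symmetry of the scaling factor in \eqref{DTT scaling}, \eqref{DTT scaling_central}, the finite-stage adjoint relations are \emph{exact}: $\prescript{L}{}{\mathcal{J}}^{a}_{b,m}$ is the Hermitian adjoint of $\prescript{L}{}{\mathcal{J}}^{b}_{a,-m}$ under $\langle\cdot,\cdot\rangle_L$, and likewise $\prescript{L}{}{\beta}_{m}$ is the adjoint of $\prescript{L}{}{\beta}_{-m}$. Writing $\widetilde{\mathcal{J}}^{L,a}_{b,m} := \sigma^{\infty}_{nL+r}\,\prescript{L}{}{\mathcal{J}}^{a}_{b,m}\,\overline{p}^{\infty}_{nL+r}$ (and similarly $\widetilde{\beta}^{L}_{m}$), this transfers to $(v | \widetilde{\mathcal{J}}^{L,a}_{b,m} w)_L = (\widetilde{\mathcal{J}}^{L,b}_{a,-m} v | w)_L$ and its $\widetilde{\beta}^{L}$ analogue. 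Second, Corollary \ref{cor:conformal limit_twisted} supplies $\widetilde{\mathcal{J}}^{L,a}_{b,m} \xrightarrow{O(L^{-1})} \prescript{\infty}{}{\mathcal{J}}^{a}_{b,m}$ and $\widetilde{\beta}^{L}_{m} \xrightarrow{O(L^{-1})} \prescript{\infty}{}{\beta}_{m}$, and Lemma \ref{lem:linearity and multiplication_error term} extends this to products.

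Fix a PBW-type basis $\{P_i\omega\}$ of $\overline{\mathcal{H}}^{(r)}_\infty(n,k)_{\le d}$ with each $P_i$ an ordered monomial in negative-mode affine generators. Iterating the exact finite-stage adjoint relation to move all operators onto one $\omega$ gives
\[
(P_i\omega | P_j\omega)_L \;=\; (\omega | \widetilde{R}^{L}_{ij} \omega)_L,
\]
with $\widetilde{R}^{L}_{ij}$ the tilde substitution of $P_i^{\dagger}P_j$; since $P_j$ only raises degree (by $\deg P_j \le d$) from $\omega$ and $P_i^{\dagger}$ only lowers it, all intermediate states stay in $\overline{\mathcal{H}}^{(r)}_\infty(n,k)_{\le d}$ where $\overline{p}^{\infty}\sigma^{\infty} = \mathrm{id}$, so the substitution is legitimate without enlarging the degree range. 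Because the grading is implemented by a self-adjoint operator (namely $\prescript{L}{}{\mathsf t}_{1,1}$ at finite stage and the Sugawara $\mathcal{L}_0$ in the limit), the energy decomposition is orthogonal for both $(\cdot|\cdot)_L$ and $(\cdot|\cdot)$, hence
\[
(\omega | \widetilde{R}^{L}_{ij}\omega)_L \;=\; (\omega | \pi_0(\widetilde{R}^{L}_{ij}\omega))_L,
\]
where $\pi_0$ projects onto the ground states $\overline{\mathcal{H}}^{(r)}_\infty(n,k)_0 \cong V^{n}_{k\varpi_r}$. This reduction to the ground-state component is the crux of the argument: on $\overline{\mathcal{H}}^{(r)}_\infty(n,k)_0$ the form $(\cdot|\cdot)_L|_0$ is $\mathfrak{sl}_n$-invariant by the exact adjoint relation for $\prescript{L}{}{\mathcal{J}}^{a}_{b,0}$, hence by Schur a scalar multiple of $(\cdot|\cdot)|_0$, and the normalization $(\omega|\omega)_L = 1$ forces $(\cdot|\cdot)_L|_0 = (\cdot|\cdot)|_0$ for \emph{all} $L$. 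Combined with $\pi_0(\widetilde{R}^{L}_{ij}\omega) \to \pi_0(P_i^{\dagger}P_j\omega)$ (by multiplicativity of conformal limits and continuity of $\pi_0$), this yields
\[
(P_i\omega | P_j\omega)_L \;\longrightarrow\; (\omega | \pi_0(P_i^{\dagger}P_j\omega)) \;=\; (\omega | P_i^{\dagger}P_j\omega) \;=\; (P_i\omega | P_j\omega),
\]
giving entry-wise convergence of Gram matrices and completing the argument. The main technical point is this reduction to the ground-state component, which short-circuits the apparent circularity of comparing an $L$-dependent form with a fixed limit form by exploiting that the two forms agree identically on $\overline{\mathcal{H}}^{(r)}_\infty(n,k)_0$, where the conformal-limit formalism is not needed at all.
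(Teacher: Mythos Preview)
Your overall strategy coincides with the paper's: reduce to pointwise convergence of the normalized Gram forms $(\cdot|\cdot)_L$ on the finite-dimensional space $\overline{\mathcal H}^{(r)}_\infty(n,k)_{\le d}$, establish the exact identity $(\cdot|\cdot)_L|_0=(\cdot|\cdot)|_0$ on ground states via Schur, then move all operators across the inner product by their finite-stage adjoints and invoke Corollary~\ref{cor:conformal limit_twisted}. However, one step fails as written.

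The assertion that $\prescript{L}{}{\beta}_m$ is the Hermitian adjoint of $\prescript{L}{}{\beta}_{-m}$ is false. The prefactor $\bigl(\tfrac{n}{(n+k)N}\bigr)^{|m|/2}$ in \eqref{DTT scaling_central} is indeed symmetric under $m\mapsto -m$, but the definition carries an additional \emph{asymmetric} factor $(1+\tfrac{n}{k})$ on the positive modes only. Using Proposition~\ref{prop: skew-invariance} one finds, for $m>0$,
\[
\bigl(\prescript{L}{}{\beta}_{-m}\bigr)^\dagger \;=\; \frac{k}{k+n}\,\prescript{L}{}{\beta}_{m}\,,
\]
not $\prescript{L}{}{\beta}_{m}$. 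Hence whenever the PBW monomial $P_i$ contains Heisenberg factors, your identity $(P_i\omega\,|\,P_j\omega)_L=(\omega\,|\,\widetilde R^{L}_{ij}\omega)_L$ acquires a factor $\bigl(\tfrac{k}{k+n}\bigr)^{\#\{\beta\text{'s in }P_i\}}$, and the Gram matrices do \emph{not} converge to the canonical form $(\cdot|\cdot)$. They converge instead to the rescaled form
\[
\phi(v_1,v_2)\;=\;\sum_{\vec m}\Bigl(\tfrac{k}{k+n}\Bigr)^{\sum_j m_j}\bigl(v_{1,\vec m}\,\big|\,v_{2,\vec m}\bigr),
\]
as in the paper's \eqref{eq:inner product arbitrary} and \eqref{eq:limit of norm square}; your displayed chain $(P_i\omega|P_j\omega)_L\to(P_i\omega|P_j\omega)$ is therefore incorrect. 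The fix is easy once the asymmetry is recognized: since $0<\tfrac{k}{k+n}<1$ and $\sum_j m_j\le d$ on $\overline{\mathcal H}^{(r)}_\infty(n,k)_{\le d}$, one has $\bigl(\tfrac{k}{k+n}\bigr)^{d}(\cdot|\cdot)\le\phi\le(\cdot|\cdot)$, so $\phi$ is still positive definite and equivalent to the canonical form, and your norm-equivalence argument goes through with this corrected limit.
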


\begin{proof}
We prove the following more precise statement that will lead to the lemma. For $i=1,2$, let $v_i=v_i'\otimes v_i''$ where 
$$v_i'\in L_{k\varpi_r}(\widehat{\mathfrak{sl}}(n)_k),\quad v_i''= \prod_{j=1}^{\ell_i}(\alpha_{-j})^{m_{i,j}}|0\rangle\in \mathrm{Fock}_{kr}(\widehat{\gl}(1)_{kn}),$$
then we have
\begin{align}\label{eq:inner product arbitrary}
    \lim_{L\to \infty}\frac{(\overline{p}^{\infty}_{nL+r}(v_1)|\overline{p}^{\infty}_{nL+r}(v_2))_L}{\rVert\overline{p}^{\infty}_{nL+r}(\omega)\rVert^2_L}=\left(\frac{k}{k+n}\right)^{\sum_{j=1}^{\ell_1}m_{1,j}}(v_1|v_2).
\end{align}
Here $(\cdot|\cdot)_L$ is the Hermitian inner product on $\widetilde{\mathcal H}_{nL+r}(n,k)$. And we have identified $\overline{\mathcal H}^{(r)}_{\infty}(n,k) $ with $ L_{k\varpi_r}(\widehat{\mathfrak{sl}}(n)_k)\otimes \mathrm{Fock}_{kr}(\widehat{\gl}(1)_{kn})$ using isomorphism $\Phi_\infty$.

As a preliminary step, we claim that 
\begin{align}\label{eq:inner product ground states}
    \frac{(\overline{p}^{\infty}_{nL+r}(\omega)|\overline{p}^{\infty}_{nL+r}(v))_L}{\rVert\overline{p}^{\infty}_{nL+r}(\omega)\rVert^2_L}=(\omega|v)\quad\text{for all $v$ and all $L$}.
\end{align}
To prove \eqref{eq:inner product ground states}, we notice that $\overline{p}^{\infty}_{nL+r}|_{\overline{\mathcal H}^{(r)}_{\infty}(n,k)_{0}}:\overline{\mathcal H}^{(r)}_{\infty}(n,k)_{0}\to \widetilde{\mathcal H}_{nL+r}(n,k)_{0}$ is an isomorphism between irreducible $\mathfrak{sl}_n$ modules. Moreover, the inner products $(\cdot|\cdot)$ and $(\cdot|\cdot)_L$ are $\mathfrak{sl}_n$-invariant, whence by the Schur lemma there exists a constant $\lambda_L\in \bC^{\times}$ for every $L$ such that
\begin{align}
    (\overline{p}^{\infty}_{nL+r}(\omega)|\overline{p}^{\infty}_{nL+r}(v))_L=\lambda_L (\omega|u), \quad \forall u\in \overline{\mathcal H}^{(r)}_{\infty}(n,k)_{0}.
\end{align}
Write $v=v_0+v_{>0}$ where $v_0\in \overline{\mathcal H}^{(r)}_{\infty}(n,k)_{0}$ and $v_{>0}\in \overline{\mathcal H}^{(r)}_{\infty}(n,k)_{>0}$, then we have
\begin{align}
    \frac{(\overline{p}^{\infty}_{nL+r}(\omega)|\overline{p}^{\infty}_{nL+r}(v))_L}{\rVert\overline{p}^{\infty}_{nL+r}(\omega)\rVert^2_L}=\frac{(\overline{p}^{\infty}_{nL+r}(\omega)|\overline{p}^{\infty}_{nL+r}(v_0))_L}{\rVert\overline{p}^{\infty}_{nL+r}(\omega)\rVert^2_L}=\frac{\lambda_L(\omega|v_0)}{\lambda_L(\omega|\omega)}=(\omega|v_0)=(\omega|v).
\end{align}
This proves \eqref{eq:inner product ground states}.

Next, we claim that if $\{\prescript{L}{}{\mathcal O}\in \End(\widetilde{\mathcal H}_{nL+r}(n,k))\}_{L\in \bN}$ is a sequence of bounded degree operators such that $\underset{L\to \infty}{\lim}\prescript{L}{}{\mathcal O}=\mathcal O$, then
\begin{align}\label{eq:inner product ground states_operator}
    \lim_{L\to \infty}\frac{(\overline{p}^{\infty}_{nL+r}(\omega)|\prescript{L}{}{\mathcal O}\circ \overline{p}^{\infty}_{nL+r}(v))_L}{\rVert\overline{p}^{\infty}_{nL+r}(\omega)\rVert^2_L}=(\omega|\mathcal O(v))\quad\text{for all $v$}.
\end{align}
Fix $v$, and let $d$ be a natural number such that $\prescript{L}{}{\mathcal O}\circ \overline{p}^{\infty}_{nL+r}(v)\in \widetilde{\mathcal H}_{nL+r}(n,k)_{\le d}$ for all $L$. Then by Proposition \ref{prop:p_N stabilizes} there exists $M$ such that $\overline{p}^{\infty}_{nL+r}\circ \overline{\sigma}^{\infty}_{nL+r}|_{\widetilde{\mathcal H}_{nL+r}(n,k)_{\le d}}=\mathrm{Id}_{\widetilde{\mathcal H}_{nL+r}(n,k)_{\le d}}$ for all $L\ge M$. Then it follows that
\begin{align*}
\lim_{L\to \infty}\frac{(\overline{p}^{\infty}_{nL+r}(\omega)|\prescript{L}{}{\mathcal O}\circ \overline{p}^{\infty}_{nL+r}(v))_L}{\rVert\overline{p}^{\infty}_{nL+r}(\omega)\rVert^2_L}&=\lim_{L\to \infty}\frac{(\overline{p}^{\infty}_{nL+r}(\omega)|\overline{p}^{\infty}_{nL+r}\circ \overline{\sigma}^{\infty}_{nL+r}\circ\prescript{L}{}{\mathcal O}\circ \overline{p}^{\infty}_{nL+r}(v))_L}{\rVert\overline{p}^{\infty}_{nL+r}(\omega)\rVert^2_L}\\
\text{\small by \eqref{eq:inner product ground states}}\quad &=\lim_{L\to \infty}(\omega| \overline{\sigma}^{\infty}_{nL+r}\circ\prescript{L}{}{\mathcal O}\circ \overline{p}^{\infty}_{nL+r}(v))\\
\text{\small since $\underset{L\to \infty}{\lim}\prescript{L}{}{\mathcal O}=\mathcal O$}\quad &=(\omega|\mathcal O(v)).
\end{align*}
This proves \eqref{eq:inner product ground states_operator}.

Next we apply \eqref{eq:inner product ground states_operator} to deduce \eqref{eq:inner product arbitrary}. Let us write $v'_i=f_i(\mathcal G)\ket{\omega}$ where $f_i\in \bC\langle\mathcal G\rangle$ are non-commutative polynomials in the set of variables $\mathcal G=\{\bar J^a_{b,-m}\:|\:1\le b,a\le n,m\in \bN\}$. We denote $f_i^{\mathrm{op}}(\mathcal G^{\mathrm{op}})$ to be the polynomial in variables $\mathcal G^{\mathrm{op}}=\{\bar J^b_{a,m}\:|\:1\le b,a\le n,m\in \bN \}$ obtained from $f_i(\mathcal G)$ by reversing the order of all monomials inside $f_i$ then followed by replacements $\bar J^a_{b,-m}\mapsto \bar J^b_{a,m}$. We also define $f_i(\prescript{L}{}{\mathcal G})$ by substituting $\bar J^a_{b,-m}\mapsto \prescript{L}{}{\mathcal J}^a_{b,-m}$. Similarly, $f_i^{\mathrm{op}}(\prescript{L}{}{\mathcal G}^{\mathrm{op}})$ is defined by substituting $\bar J^b_{a,m}\mapsto \prescript{L}{}{\mathcal J}^b_{a,m}$. Then we can rewrite
\begin{align}
\begin{split}
(\overline{p}^{\infty}_{nL+r}(v_1)|\overline{p}^{\infty}_{nL+r}(v_2))_L&=\left(f_1(\prescript{L}{}{\mathcal G})\prod_{j=1}^{\ell_1}(\prescript{L}{}{\beta}_{-j})^{m_{1,j}}\overline{p}^{\infty}_{nL+r}(\omega)\;\bigg\rvert \;\overline{p}^{\infty}_{nL+r}(v_2)\right)_L\\
&=\left(\overline{p}^{\infty}_{nL+r}(\omega)\;\bigg\rvert \;f_1^{\mathrm{op}}(\prescript{L}{}{\mathcal G}^{\mathrm{op}})^*\prod_{j=1}^{\ell_1}\left(\frac{k}{k+n}\prescript{L}{}{\beta}_{j}\right)^{m_{1,j}}\overline{p}^{\infty}_{nL+r}(v_2)\right)_L.
\end{split}
\end{align}
Here the superscript $*$ on the function $f_1^{\mathrm{op}}(\prescript{L}{}{\mathcal G}^{\mathrm{op}})$ means complex conjugation of all coefficients. By Corollary \ref{cor:conformal limit_twisted} we have:
\begin{align}
\underset{L\to \infty}{\lim}f_1^{\mathrm{op}}(\prescript{L}{}{\mathcal G}^{\mathrm{op}})^*\prod_{j=1}^{\ell_1}\left(\frac{k}{k+n}\prescript{L}{}{\beta}_{j}\right)^{m_{1,j}}=\left(\frac{k}{k+n}\right)^{\sum_{j=1}^{\ell_1}m_{1,j}} f_1^{\mathrm{op}}({\mathcal G}^{\mathrm{op}})^*\prod_{j=1}^{\ell_1}(\alpha_{j})^{m_{1,j}},
\end{align}
we note that the conjugation by $\Phi_\infty$ has been absorbed into the operators $\bar J^a_{b,m}$ and $\alpha_{m}$ since we have made identification $\Phi_{\infty}:\overline{\mathcal H}^{(r)}_{\infty}(n,k)\cong L_{k\varpi_r}(\widehat{\mathfrak{sl}}(n)_k)\otimes \mathrm{Fock}_{kr}(\widehat{\gl}(1)_{kn})$. Then \eqref{eq:inner product ground states_operator} implies that
\begin{align}
\begin{split}
\lim_{L\to \infty}\frac{(\overline{p}^{\infty}_{nL+r}(v_1)|\overline{p}^{\infty}_{nL+r}(v_2))_L}{\rVert\overline{p}^{\infty}_{nL+r}(\omega)\rVert^2_L}&=\left(\frac{k}{k+n}\right)^{\sum_{j=1}^{\ell_1}m_{1,j}}\left(\omega\;\bigg\rvert \;f_1^{\mathrm{op}}({\mathcal G}^{\mathrm{op}})^*\prod_{j=1}^{\ell_1}({\alpha}_{j})^{m_{1,j}}v_2\right)\\
&=\left(\frac{k}{k+n}\right)^{\sum_{j=1}^{\ell_1}m_{1,j}}\left(f_1({\mathcal G})\prod_{j=1}^{\ell_1}({\alpha}_{-j})^{m_{1,j}}\omega\;\bigg\rvert \;v_2\right)\\
&=\left(\frac{k}{k+n}\right)^{\sum_{j=1}^{\ell_1}m_{1,j}}(v_1|v_2)\\
\end{split}
\end{align}
This concludes the proof of \eqref{eq:inner product arbitrary}.

Finally, we demonstrate how to derive the original lemma from \eqref{eq:inner product arbitrary}. Every $v\in \overline{\mathcal H}^{(r)}_{\infty}(n,k)_{\le d}$ can be uniquely written as 
\begin{align}
\begin{split}
&v=\sum_{\substack{\vec m=(m_1,m_2,\cdots)\in\bN^{\infty}\\ \sum_{j}j\cdot m_j\le d}}v_{\vec m}, \quad v_{\vec m}=v'_{\vec m}\otimes \ket{\vec m},\quad \text{ where }\\
&v'_{\vec m}\in L_{k\varpi_r}(\widehat{\mathfrak{sl}}(n)_k),\quad \ket{\vec m}= \prod_{j}(\alpha_{-j})^{m_{j}}|0\rangle\in \mathrm{Fock}_{kr}(\widehat{\gl}(1)_{kn}).
\end{split}
\end{align}
Then \eqref{eq:inner product arbitrary} implies that 
\begin{align}\label{eq:limit of norm square}
\lim_{L\to \infty}\frac{\rVert\overline{p}^{\infty}_{nL+r}(v)\rVert^2_L}{\rVert\overline{p}^{\infty}_{nL+r}(\omega)\rVert^2_L}=\sum_{\vec m}\left(\frac{k}{k+n}\right)^{\sum_{j}m_{j}}\rVert v_{\vec m}\rVert^2.
\end{align}
By the polarization identity, the quadratic forms $\phi_L:\overline{\mathcal H}^{(r)}_{\infty}(n,k)_{\le d}\times \overline{\mathcal H}^{(r)}_{\infty}(n,k)_{\le d}\to \bC$ defined by
\begin{align}
    \phi_L(u,v):=\frac{(\overline{p}^{\infty}_{nL+r}(u)\:|\:\overline{p}^{\infty}_{nL+r}(v))_L}{\rVert\overline{p}^{\infty}_{nL+r}(\omega)\rVert^2_L}
\end{align}
point-wise converge to a quadratic form $\phi:\overline{\mathcal H}^{(r)}_{\infty}(n,k)_{\le d}\times \overline{\mathcal H}^{(r)}_{\infty}(n,k)_{\le d}\to \bC$. Choosing a basis of $\overline{\mathcal H}^{(r)}_{\infty}(n,k)_{\le d}$, then quadratic forms $\phi_L$ and $\phi$ are given by their matrix forms, and point-wise convergence implies that every matrix component $(\phi_L)_{ij}$ converge to $\phi_{ij}$. Therefore $\phi_L$ uniformly converges to $\phi$ on any compact subset $K\subset \overline{\mathcal H}^{(r)}_{\infty}(n,k)_{\le d}\times \overline{\mathcal H}^{(r)}_{\infty}(n,k)_{\le d}$. Let us take $K$ to be the diagonal unit sphere $\mathbb S^{\Delta}_d=\{(u,u)\in \overline{\mathcal H}^{(r)}_{\infty}(n,k)_{\le d}\times \overline{\mathcal H}^{(r)}_{\infty}(n,k)_{\le d} \:|\: \rVert u\rVert=1 \}$, then the uniform convergence implies that there exists $M>0$ such that $\forall L\ge M$, the inequality $|\phi_L(u,u)-\phi(u,u)|<\frac{1}{2}\left(\frac{k}{k+n}\right)^{d}$ holds for all $(u,u)\in \mathbb S^{\Delta}_d$. On the other hand, \eqref{eq:limit of norm square} implies that
\begin{align}
    \left(\frac{k}{k+n}\right)^{d}\le \phi(u,u)\le 1, \quad \forall (u,u)\in \mathbb S^{\Delta}_d.
\end{align}
Whence the following inequality holds for all $L\ge M$:
\begin{align}\label{eq:bound of phi_L}
    \frac{1}{2}\left(\frac{k}{k+n}\right)^{d}\le\phi_L(u,u)\le 1+\frac{1}{2}\left(\frac{k}{k+n}\right)^{d}, \quad \forall (u,u)\in \mathbb S^{\Delta}_d.
\end{align}
Lemma \ref{lem:bound of norm} follows from \eqref{eq:bound of phi_L}. In fact we can take
\begin{align}
    B_d=\max\left[\sqrt{2}\left(1+\frac{n}{k}\right)^{\frac{d}{2}},\;\max_{\ell<M}\sup_{(u,u)\in \mathbb S^{\Delta}_d}\left(\phi_\ell(u,u)^{\frac{1}{2}}+\phi_\ell(u,u)^{-\frac{1}{2}}\right)\right]
\end{align}
to fulfill \eqref{eq:bound of norm}.
\end{proof}

\begin{proof}[Proof of Proposition \ref{prop:equivalence of convergence}]
Since $\prescript{L}{}{\mathcal A}$ have uniformly bounded degree, there exists $m\in \bZ$ such that $\prescript{L}{}{\mathcal A}(\widetilde{\mathcal H}_{nL+r}(n,k)_{\le d})\subseteq \widetilde{\mathcal H}_{nL+r}(n,k)_{\le d+m}$ for all $d\in \bN$. Let us fix a choice of such $m$ in the rest of the proof. 

``$\Longrightarrow$''. By our assumption, for every $d\in \mathbb N$ there exists a constant $D_d>0$ such that
\begin{align}\label{A converges to zero}
    \rVert \overline{\sigma}^{\infty}_{nL+r}\circ\prescript{L}{}{\mathcal A}\circ \overline{p}^{\infty}_{nL+r}(u)\rVert \le D_d L^{-h} \rVert u\rVert\text{ holds for all }u\in \overline{\mathcal H}^{(r)}_{\infty}(n,k)_{\le d}.
\end{align}
Here $\rVert\cdot\rVert$ is the norm on $\overline{\mathcal H}^{(r)}_{\infty}(n,k)$ induced by $(\cdot|\cdot)$ via the isomorphism $\Phi_{\infty}:\overline{\mathcal H}^{(r)}_{\infty}(n,k)\cong L_{k\varpi_r}(\widehat{\mathfrak{sl}}(n)_k)\otimes \mathrm{Fock}_{kr}(\widehat{\gl}(1)_{kn})$. By Proposition \ref{prop:p_N stabilizes}, there exists $M>0$ such that such that $$\overline{p}^{\infty}_{nL+r}\circ \overline{\sigma}^{\infty}_{nL+r}|_{\widetilde{\mathcal H}_{nL+r}(n,k)_{\le d+m}}=\mathrm{Id}_{\widetilde{\mathcal H}_{nL+r}(n,k)_{\le d+m}}\text{ for all $L\ge M$}. $$
For $L\ge M$ and $u\in \overline{\mathcal H}^{(r)}_{\infty}(n,k)_{\le d}$, we have
\begin{align*}
\begin{split}
\frac{\rVert \prescript{L}{}{\mathcal A}\circ \overline{p}^{\infty}_{nL+r}(u)\rVert_L}{\rVert\overline{p}^{\infty}_{nL+r}(\omega)\rVert_L}&=\frac{\rVert \overline{p}^{\infty}_{nL+r}\circ \overline{\sigma}^{\infty}_{nL+r}\circ\prescript{L}{}{\mathcal A}\circ \overline{p}^{\infty}_{nL+r}(u)\rVert_L}{\rVert\overline{p}^{\infty}_{nL+r}(\omega)\rVert_L}\\
\text{\small by Lemma \ref{lem:bound of norm}}\quad &\le B_{d+m}\rVert \overline{\sigma}^{\infty}_{nL+r}\circ\prescript{L}{}{\mathcal A}\circ \overline{p}^{\infty}_{nL+r}(u)\rVert \\
\text{\small by \eqref{A converges to zero}}\quad &\le B_{d+m}D_d L^{-h} \rVert u\rVert\\
\text{\small by Lemma \ref{lem:bound of norm}}\quad &\le B_{d}B_{d+m}D_d L^{-h} \frac{\rVert\overline{p}^{\infty}_{nL+r}(u)\rVert_L}{\rVert\overline{p}^{\infty}_{nL+r}(\omega)\rVert_L},
\end{split}
\end{align*}
which implies that $\rVert \prescript{L}{}{\mathcal A}\circ \overline{p}^{\infty}_{nL+r}(u)\rVert_L\le B_{d}B_{d+m}D_d L^{-h}\rVert\overline{p}^{\infty}_{nL+r}(u)\rVert_L$. Since 
\begin{align*}
\overline{p}^{\infty}_{nL+r}|_{\overline{\mathcal H}^{(r)}_{\infty}(n,k)_{\le d}}:\overline{\mathcal H}^{(r)}_{\infty}(n,k)_{\le d}\to \widetilde{\mathcal H}_{nL+r}(n,k)_{\le d}\text{ is an isomorphism by our choice of $L$},
\end{align*}
it follows that
\begin{align}\label{A is of order 1/N_L>M}
\rVert \prescript{L}{}{\mathcal A}(v)\rVert_L \le B_{d}B_{d+m}D_d L^{-h} \rVert v\rVert_L \text{ holds for all }v\in \widetilde{\mathcal H}_{nL+r}(n,k)_{\le d}.
\end{align}
\eqref{A is of order 1/N_L>M} implies that $\prescript{L}{}{\mathcal A}$ has order $O(L^{-h})$. In fact we can take
\begin{align*}
    C_d=\max\left[ B_{d}B_{d+m}D_d,\;\max_{\ell<M}(\ell^h\cdot \rVert\prescript{\ell}{}{\mathcal A}|_{\widetilde{\mathcal H}_{n\ell+r}(n,k)_{\le d}}\rVert_\ell)\right]\text{ to fulfill \eqref{A is of order L^-h}},
\end{align*}
where $\rVert\prescript{\ell}{}{\mathcal A}|_{\widetilde{\mathcal H}_{n\ell+r}(n,k)_{\le d}}\rVert_\ell$ is the operator norm of the restriction of $\prescript{\ell}{}{\mathcal A}$ to subspace $\widetilde{\mathcal H}_{n\ell+r}(n,k)_{\le d}$, that is 
\begin{align*}
    \rVert\prescript{\ell}{}{\mathcal A}|_{\widetilde{\mathcal H}_{n\ell+r}(n,k)_{\le d}}\rVert_\ell=\sup_{\substack{u\in \widetilde{\mathcal H}_{n\ell+r}(n,k)_{\le d}\\ \rVert u\rVert_\ell=1}}\rVert\prescript{\ell}{}{\mathcal A}(u)\rVert_\ell.
\end{align*}

``$\Longleftarrow$''. By our assumption, for every $d\in \mathbb N$ there exists a constant $C'_d>0$ such that
\begin{align}\label{eq:bound C'_d}
    \rVert \prescript{L}{}{\mathcal A}(v)\rVert_L \le C'_d L^{-h} \rVert v\rVert_L\text{ holds for all }v\in \widetilde{\mathcal H}_{nL+r}(n,k)_{\le d}.
\end{align}
By Proposition \ref{prop:p_N stabilizes}, there exists $M>0$ such that such that $$\overline{p}^{\infty}_{nL+r}\circ \overline{\sigma}^{\infty}_{nL+r}|_{\widetilde{\mathcal H}_{nL+r}(n,k)_{\le d+m}}=\mathrm{Id}_{\widetilde{\mathcal H}_{nL+r}(n,k)_{\le d+m}}\text{ for all $L\ge M$}. $$
For $L\ge M$ and $u\in \overline{\mathcal H}^{(r)}_{\infty}(n,k)_{\le d}$, by Lemma \ref{lem:bound of norm} we have
\begin{align*}
\rVert \overline{\sigma}^{\infty}_{nL+r}\circ\prescript{L}{}{\mathcal A}\circ \overline{p}^{\infty}_{nL+r}(u)\rVert &\le B_{d+m}\frac{\rVert\overline{p}^{\infty}_{nL+r}\circ\overline{\sigma}^{\infty}_{nL+r}\circ\prescript{L}{}{\mathcal A}\circ \overline{p}^{\infty}_{nL+r}(u)\rVert_L}{\rVert\overline{p}^{\infty}_{nL+r}(\omega)\rVert_L}\\
&=B_{d+m}\frac{\rVert\prescript{L}{}{\mathcal A}\circ \overline{p}^{\infty}_{nL+r}(u)\rVert_L}{\rVert\overline{p}^{\infty}_{nL+r}(\omega)\rVert_L}\\
\text{\small by \eqref{eq:bound C'_d}}\quad &\le B_{d+m} C_d'L^{-h}\frac{\rVert\overline{p}^{\infty}_{nL+r}(u)\rVert_L}{\rVert\overline{p}^{\infty}_{nL+r}(\omega)\rVert_L}\\
\text{\small Lemma \ref{lem:bound of norm}}\quad &\le B_d B_{d+m} C_d'L^{-h}\rVert u\rVert.
\end{align*}
Let us take 
\begin{align*}
    D'_d=\max\left[ B_d B_{d+m} C_d',\; \max_{\ell<M}\sup_{\substack{v\in \overline{\mathcal H}^{(r)}_{\infty}(n,k)_{\le d}\\ \rVert v\rVert=1}}\rVert \overline{\sigma}^{\infty}_{n\ell+r}\circ\prescript{\ell}{}{\mathcal A}\circ \overline{p}^{\infty}_{n\ell+r}(v)\rVert\right],
\end{align*}
then $\rVert \overline{\sigma}^{\infty}_{nL+r}\circ\prescript{L}{}{\mathcal A}\circ \overline{p}^{\infty}_{nL+r}(u)\rVert \le D'_d L^{-h} \rVert u\rVert$ holds for all $u\in \overline{\mathcal H}^{(r)}_{\infty}(n,k)_{\le d}$ and all $L\in \bN$. Thus $\prescript{L}{}{\mathcal A}\xrightarrow{O(L^{-h})} 0$ by definition.
\end{proof}

\section{Applications}\label{sec:Application}

\subsection{Gelfand-Tsetlin bases of \texorpdfstring{$\widetilde{\mathcal H}_{N}(n,k)$}{H(n,k)}}\label{subsec:GZ}

\begin{definition}\label{def:GT pattern}
A GT pattern of height $n$ and length $N$ is an $(n+1)$-tuple of length-$N$ integer arrays $\Lambda=(\Lambda_1,\cdots,\Lambda_{n+1})$, $\Lambda_i=(\Lambda_{i,1}\ge \cdots\ge \Lambda_{i,N})\in \bZ^N$ such that
\begin{align}\label{GT condition}
    \Lambda_1\overset{1}{\searrow}\Lambda_2\overset{1}{\searrow}\cdots \overset{1}{\searrow}\Lambda_{n}\overset{1}{\searrow}\Lambda_{n+1}=\Lambda_1^\shortdownarrow.
\end{align}
See Definition \ref{def:decrease and increase} for the meaning of $\Lambda_1^\shortdownarrow$, and see Definition \ref{def:branching weights} for the meaning of $\overset{1}{\searrow}$. We denote the set of GT patterns of height $n$ and length $N$ by $\mathrm{GT}(N,n)$. In plain words, the condition \eqref{GT condition} is equivalent to the condition that $\Lambda_{n+1,i}=\Lambda_{1,i}-k$ and $\Lambda_{i,j}\ge \Lambda_{i+1,j}\ge \Lambda_{i,j+1}$ for all possible $i$ and $j$.

We say that a GT pattern $\Lambda=(\Lambda_1,\cdots,\Lambda_{n+1})$ is non-negative if and only if $\Lambda_{n+1}\in \bN^N$. We denote the set of non-negative GT patterns of height $n$ and length $N$ by $\mathrm{GT}(N,n)_+$.
\end{definition}

\begin{remark}\label{rmk:GT pattern}
For a  non-negative GT pattern $\Lambda\in \mathrm{GT}(N,n)_+$, we can associate $\Lambda$ to a conventional Gelfand-Tsetlin pattern \cite{gelfand1950finite} by the following arrangement:
\begin{equation}\label{standard GT}
\Lambda=(\Lambda_1,\cdots,\Lambda_{n+1})\mapsto \left(\lambda_{i,j}\in \bN\right)_{1\le i\le N+n}^{1\le j\le N+n+1-i},
\end{equation}
where
\begin{align}
\lambda_{i,j}=\begin{cases}
    \Lambda_{i,j} &,\text{ if }i\le n\text{ and }j\le N, \\
    0 &, \text{ if }i\le n\text{ and }j> N, \\
    \Lambda_{1,i+j-(n+1)}-k &, \text{ if }i > n.
\end{cases}
\end{align}
One can check that $(\lambda_{i,j})$ satisfy the defining condition for a conventional Gelfand-Tsetlin pattern: $\lambda_{i,j}\ge \lambda_{i+1,j}\ge \lambda_{i,j+1}$ for all possible $i$ and $j$. Conversely, if $\left(\lambda_{i,j}\in \bN\right)_{1\le i\le N+n}^{1\le j\le N+n+1-i}$ is a Gelfand-Tsetlin pattern such that $\lambda_{1,j}=0$ for all $j>N$, and $\lambda_{i,j}=\lambda_{1,i+j-(n+1)}-k$ for all $i>n$ and all possible $j$, we can associate $(\Lambda_1,\cdots,\Lambda_{n+1})\in \mathrm{GT}(N,n)_+$ by letting $\Lambda_i=\lambda_i$ for all $i\le n+1$.
\end{remark}

\begin{theorem}[Spectral Decomposition of $\widetilde{\mathcal H}_N(n,k)$]\label{thm:spectral decomposition_finite N}
The action of the quantum minors \\ $\{{A}_m(u)\}_{1\le m\le n}$ (see equation \eqref{GZ generators}) on $\widetilde{\mathcal H}_N(n,k)$ has simple spectrum, such that the eigenvectors are labelled by non-negative GT patterns of height $n$ and length $N:$
\begin{align}\label{eq:spectral decomposition_finite N}
    \widetilde{\mathcal H}_N(n,k)=\bigoplus_{\Lambda\in \mathrm{GT}(N,n)_+}V_{\Lambda},\quad \dim V_{\Lambda}=1,
\end{align}
and the eigenvalue of $A_m(u)$ on $V_{\Lambda}$, $\Lambda=(\Lambda_1,\cdots,\Lambda_{n+1})$ is
\begin{align}\label{eq:eigenvalue_finite N}
    \prod_{i=1}^N\frac{u-\Lambda_{m+1,i}-N+i}{u-\Lambda_{1,i}-N+i}.
\end{align}
\end{theorem}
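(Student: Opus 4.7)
The approach is to combine the $\GL_{N+n}$-isotypic decomposition $\widetilde{\mathcal H}_N(n,k)=\bigoplus_\lambda \widetilde{\mathcal H}_N(\lambda)$ of Definition \ref{def:H(la)} with iterated Gelfand--Tsetlin branching of $V^{N+n}_\lambda$ along the chain $\gl_{N+n}\supset\gl_{N+n-1}\supset\cdots\supset\gl_N$, each step being the lower-right block inclusion. By Lemma \ref{lem:H_N(lambda) nontrivial} the nonzero summands are indexed by admissible $\lambda$, and each $\widetilde{\mathcal H}_N(\lambda)$ is already a simple $Y(\gl_n)$-submodule by Remark \ref{rmk:simple Y(gl_n) submodules}; since $\Lambda_1=\lambda$ is detected by the central polynomial $C_0(u)$ of $U(\gl_{N+n})$, this reduces the theorem to producing a joint eigenbasis of $\{A_m(u)\}$ inside each $\widetilde{\mathcal H}_N(\lambda)$.

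The classical one-step branching $V^{N+n-m+1}_{\Lambda_m}|_{\gl_{N+n-m}}=\bigoplus_{\Lambda_m\overset{1}{\searrow}\Lambda_{m+1}}V^{N+n-m}_{\Lambda_{m+1}}$ is multiplicity-free (this is Lemma \ref{lem:H_N(lambda mu) nontrivial} in the case $m=1$, iterated), so the whole $V^{N+n}_\lambda|_{\gl_N}$ decomposes into one-dimensional pieces indexed by chains $\lambda=\Lambda_1\overset{1}{\searrow}\Lambda_2\overset{1}{\searrow}\cdots\overset{1}{\searrow}\Lambda_{n+1}$. Pairing with $(V^N_\lambda)^*$ and imposing $\GL_N^{\mathrm{diag}}$-semi-invariance of weight $-k$ singles out the $V^N_{\lambda^\shortdownarrow}$ summand (the unique $\gl_N$-summand whose pairing with $(V^N_\lambda)^*$ carries a $\chi_{-k}$-line, and that line is itself one-dimensional), forcing $\Lambda_{n+1}=\lambda^\shortdownarrow$ and yielding a one-dimensional $V_\Lambda$ for each surviving chain. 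The intermediate $\Lambda_i$ nominally have length $N+n-i+1$ but only the first $N$ entries can be nonzero, inherited from the Schur--Weyl constraint $\ell(\lambda)\le N$, so the surviving chains are exactly $\{\Lambda\in\mathrm{GT}(N,n)_+\colon\Lambda_1=\lambda\}$. The count $\#\mathrm{SSYT}(\lambda/\lambda^\shortdownarrow,\{1,\dots,n\})$ matches $\dim\widetilde{\mathcal H}_N(\lambda)$ given in Lemma \ref{lem:H_N(lambda) nontrivial}.

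The eigenvalue of $A_m(u)$ on $V_\Lambda$ is then read off from Corollary \ref{cor:A_m in terms of C_m},
\begin{equation*}
A_m(u-n-k)=\frac{C_m(u-m)}{C_0(u)}\,(u-m+1)_m,
\end{equation*}
together with the Capelli scalar $C_m(u)|_{V_\Lambda}=\prod_{i=1}^{N+n-m}(u-\mu_i+i+m-N-n)$ of Remark \ref{rmk:Capelli det action on irrep} applied to the zero-padded weight $\mu=(\Lambda_{m+1,1},\dots,\Lambda_{m+1,N},0,\dots,0)$. The $n-m$ zero-padded factors of $C_m(u-m)$ and the $n$ zero-padded factors of $C_0(u)$ form Pochhammer products whose ratio combines with the $(u-m+1)_m$ prefactor to cancel cleanly; after the shift $u\mapsto u+n+k$ this reduces to the rational function $\prod_{i=1}^N\frac{u-\Lambda_{m+1,i}-N+i}{u-\Lambda_{1,i}-N+i}$ claimed in the theorem.

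Finally, simplicity of the spectrum follows because the non-cancelled poles of this eigenvalue recover the shifted entries $\Lambda_{1,i}+N-i$ and the non-cancelled zeros recover $\Lambda_{m+1,i}+N-i$; pole-zero cancellations occur only when a branching step is trivial at position $i$ (i.e.\ $\Lambda_{m,i}=\Lambda_{m+1,i}$), and these cancellations are compatible with the GT interlacing conditions, so joint eigenvalues of $\{A_m(u)\}_{m=1}^n$ separate GT patterns. Combined with the dimension count this forces $\dim V_\Lambda=1$ and gives \eqref{eq:spectral decomposition_finite N}. The hardest part I expect is the third paragraph: carefully tracking the spectral-parameter shifts relating $T(u)$, $\mathfrak T(u)=T(u-n-k)$ and the Capelli determinants, and ensuring that the zero-padded-tail cancellations together with the $(u-m+1)_m$ normalisation land on precisely the formula above (any off-by-$k$ convention slip would change the numerator/denominator shift and obscure the otherwise clean simplicity argument).
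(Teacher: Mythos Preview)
Your proposal is correct and takes essentially the same approach as the paper: Schur--Weyl decomposition of $\bC[X,A]$, classical Gelfand--Tsetlin branching of $V^{N+n}_\lambda$ along the lower-right block chain $\gl_{N+n}\supset\cdots\supset\gl_N$ to obtain one-dimensional pieces indexed by GT patterns with tail condition $\Lambda_{n+1}=\lambda^\shortdownarrow$, and computation of the $A_m(u)$-eigenvalues via Corollary~\ref{cor:A_m in terms of C_m} and the Capelli scalar of Remark~\ref{rmk:Capelli det action on irrep} applied to the zero-padded $\Lambda_{m+1}$. Your final paragraph's explicit verification that the joint eigenvalues separate GT patterns is a detail the paper leaves implicit.
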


\begin{proof}
Recall the decomposition \eqref{Schur-Weyl}:
\begin{align*}
    \bC[X,A]=\bigoplus_{\lambda=(\lambda_1\ge\cdots\ge\lambda_N)\in \bN^N}V^{N+n}_{\lambda}\otimes (V^{N}_{\lambda})^*,
\end{align*}
It is well-known that $V^{N+n}_{\lambda}$ has the Gelfand-Tsetlin bases \cite{gelfand1950finite}:
\begin{align*}
    V^{N+n}_{\lambda}=\bigoplus_{\Lambda\in \mathcal T_\lambda}V(\Lambda),\quad \dim V(\Lambda)=1,
\end{align*}
where $\mathcal T_\lambda$ is the set of arrays $\left(\lambda_{i,j}\in \bN\right)_{1\le i\le N+n}^{1\le j\le N+n+1-i}$ such that 
\begin{align*}
    \lambda_{1,j}=\begin{cases}
        \lambda_j &, \text{ if }j\le N,\\
        0 &, \text{ if }j>N,
    \end{cases}\quad\text{and $\lambda_{i,j}\ge \lambda_{i+1,j}\ge \lambda_{i,j+1}$ for all possible $i$ and $j$}.
\end{align*}
$V(\Lambda)$ is uniquely characterized by the condition that $V(\Lambda)$ belongs to the subspace $\bigoplus_{\nu}M^{\lambda}_{\nu,\lambda_{\ell+1}}\otimes V^{\ell}_{\nu}\otimes V^{N+n-\ell}_{\lambda_{\ell+1}}$ in the branching decomposition
\begin{align*}
    V^{N+n}_\lambda=\bigoplus_{\nu,\mu}M^{\lambda}_{\nu,\mu}\otimes V^{\ell}_{\nu}\otimes V^{N+n-\ell}_{\mu},\text{ with respect to subgroup }\begin{pNiceArray}{c:c}
  \GL_{\ell} & 0 \\
\hdottedline
 0 & \GL_{n+N-\ell}
\end{pNiceArray},
\end{align*}
for all $\ell\in \{1,\cdots,N+n-1\}$. Moreover, for a fixed $M\in \{1,\cdots,n+N-1\}$, $V(\Lambda)$ is a highest weight vector with weight $\mu$ with respect to $\GL_{n+N-M}$ in the decomposition
\begin{align*}
\begin{pNiceArray}{c:c}
  \GL_{M} & 0 \\
\hdottedline
 0 & \GL_{n+N-M}
\end{pNiceArray},
\end{align*}
if and only if $\forall i>M$ and $\forall$ possible $j$, $\lambda_{i,j}=\mu_{j+i-(M+1)}$. Since
\begin{align*}
    \widetilde{\mathcal H}_N(\lambda)=\left(V^{N+n}_{\lambda}\otimes (V^{N}_{\lambda})^*\right)^{\GL_{N}^\mathrm{diag},-k}\cong \Hom_{\GL_N}\left(V^{N}_{\lambda^\shortdownarrow},V^{N+n}_{\lambda}\right),
\end{align*}
and every $\GL_N$-invariant homomorphism is uniquely determined by a $\GL_N$ highest weight vector with weight $\lambda^\shortdownarrow$ in $V^{N+n}_{\lambda}$, we see that
$\widetilde{\mathcal H}_N(\lambda)$ has a basis labelled by elements in $\mathcal T_\lambda$ such that $\forall i>n$ and $\forall$ possible $j$, $\lambda_{i,j}=\lambda_{j+i-(n+1)}-k$. According to Remark \ref{rmk:GT pattern}, the subset of such elements in $\mathcal T_\lambda$ is one-to-one correspond to the subset of elements $(\Lambda_1,\cdots,\Lambda_{n+1})\in \mathrm{GT}(N,n)_+$ such that $\Lambda_1=\lambda$. This proves there exists a decomposition \eqref{eq:spectral decomposition_finite N}, and it remains to show that $V_\Lambda$ is an eigenvector for $A_m(u)$ with eigenvalue given by \eqref{eq:eigenvalue_finite N}.

In fact, for every element $(\Lambda_1,\cdots,\Lambda_{n+1})\in \mathrm{GT}(N,n)_+$, we have $\Lambda_1\overset{m}{\searrow}\Lambda_{m+1}\overset{n-m}{\searrow}\Lambda_1^\shortdownarrow$ by the Remark \ref{rmk:additive}. Then according to the construction of $V_\Lambda$ the Definition \ref{def:branching}, we have $V_\Lambda\subseteq \widetilde{\mathcal H}_{N}\left(\substack{\Lambda_1\\ \Lambda_{m+1}}\right)$. According to Remark \ref{rmk:Capelli det action on irrep},
\begin{align*}
    &{C}_m(u)\text{ acts on }\widetilde{\mathcal H}_{N}(\substack{\Lambda_1\\ \Lambda_{m+1}})\text{ as the scalar }\prod_{j=1}^{n+N-m}(u-\Lambda_{m+1,j}+j+m-n-N),\\
    &{C}_0(u)\text{ acts on }\widetilde{\mathcal H}_{N}(\Lambda_1)\text{ as the scalar }\prod_{j=1}^{n+N-m}(u-\Lambda_{1,j}+j+m-n-N),\\
    &\text{where we set $\Lambda_{i,j}=0$ if $j>N$}.
\end{align*}
Using Corollary \ref{cor:A_m in terms of C_m}, we see that $A_m(u)$ acts on $V_\Lambda$ as the scalar \eqref{eq:eigenvalue_finite N}, this concludes the proof of Theorem \ref{thm:spectral decomposition_finite N}.
\end{proof}

\begin{remark}\label{rmk:simple module and hwv}
Recall that for an admissible $\lambda$, $\widetilde{\mathcal H}_N(\lambda)$ is a simple $Y(\gl_n)$ submodule of $\widetilde{\mathcal H}_N(n,k)$ (Remark \ref{rmk:simple Y(gl_n) submodules}), and according to the proof of Theorem \ref{thm:spectral decomposition_finite N}, its eigenvector decomposition with respect to the $\{A_m(u)\}_{1\le m\le n}$ is 
\begin{align}\label{eq:spectral decomposition_finite N irrep}
    \widetilde{\mathcal H}_N(\lambda)=\bigoplus_{\substack{\Lambda\in \mathrm{GT}(N,n)_+\\ \Lambda_1=\lambda}}V_{\Lambda}.
\end{align}
Recall that the highest weight vector is defined to be the vector that is annihilated by all $T^a_b(u)$ such that $a<b$. Using the isomorphism $\widetilde{\mathcal H}_N(\lambda)\cong V_{\lambda/\lambda^\shortdownarrow}(k-N)^*$, and according to \cite[Corollary 2.5]{nazarov1998representations}, the highest weight vector in $\widetilde{\mathcal H}_N(\lambda)$ is $V_{\Lambda^0}$, where\footnote{Since $\widetilde{\mathcal H}_N(\lambda)$ is dual to $V_{\lambda/\lambda^\shortdownarrow}(k-N)$, highest weigh vector in $\widetilde{\mathcal H}_N(\lambda)$ is dual to the lowest weight vector in $V_{\lambda/\lambda^\shortdownarrow}(k-N)$, and vice versa. The Gelfand-Tsetlin pattern $\Lambda_0$ in \cite[(2.5)]{nazarov1998representations}, translated to our setting, becomes $$(\Lambda_0)_{i,j}=\begin{cases}
\lambda_j &, \text{ if }j+i\le n+1,\\
\min(\lambda_j,\lambda_{j+i-1-n}-k) &, \text{ if }j+i> n+1.
\end{cases}$$ $V_{\Lambda_0}$ is the lowest weight vector in $\widetilde{\mathcal H}_N(\lambda)$.} 
\begin{align}\label{eq:hwv}
\Lambda^0_{i,j}=
\begin{cases}
\lambda_j-k &, \text{ if }j+i> N+1,\\
\max(\lambda_j-k,\lambda_{j+i-1}) &, \text{ if }j+i\le N+1.
\end{cases}
\end{align}
\end{remark}

\begin{remark}
If we consider the Yangian action given by $\widetilde{T}(u)$ defined in \eqref{eq:Yangian}, then the $V_\Lambda$ in the decomposition \eqref{eq:spectral decomposition_finite N} is also an eigenvector for the quantum minors $\{\widetilde{A}_m(u)\}_{1\le m\le n}$, where
\begin{align}\label{GZ generators_twisted}
    \widetilde{A}_m(u):=\mathrm{qdet}\:\widetilde{T}^a_b(u)_{1\le a,b\le m}=\sum_{\sigma\in S_m}\mathrm{sgn}(\sigma)\prod_{1\le i\le m}^{\rightarrow}\widetilde{T}^{\sigma(i)}_i(u-i+1).
\end{align}
This is because $\widetilde{T}(u)$ and $T(u)$ only differ by a shift of spectral parameter $u\mapsto u+(k+n)L$ followed by multiplying a function $f(u)$. The eigenvalue of $\widetilde{A}_m(u)$ on $V_\Lambda$ is 
\begin{align}\label{eq:eigenvalue_finite N_twisted}
\begin{split}
&\prod_{i=1}^N\frac{u-r-\overline{\Lambda}_{m+1,i}+i}{u-r-\overline{\Lambda}_{1,i}+i}\prod_{j=0}^{m-1}\frac{\left(1+\frac{u-j}{n+k}\right)_L}{\left(1+\frac{u-j+k}{n+k}\right)_L},\\
&\text{where }\overline{\Lambda}_{i,j}=\Lambda_{i,j}-kL\text{ for all possible }i,j.
\end{split}
\end{align}
\end{remark}

\subsection{Semi-infinite Gelfand-Tsetlin bases of \texorpdfstring{$\widetilde{\mathcal H}^{(r)}_{\infty}(n,k)$}{H(n,k)}}\label{subsec:semi-infinite GZ}

By the Theorem \ref{thm:Yangian comptible with p_N and sigma_N}, the transition map $p_N$ is a Yangian module map with respect to the Yangian action given by $\widetilde{T}(u)$. Therefore $p_N$ maps common eigenvectors of $\{\widetilde{A}_m(u)\}_{1\le m\le n}$ to common eigenvectors of $\{\widetilde{A}_m(u)\}_{1\le m\le n}$. It can be read out from the eigenvalue \eqref{eq:eigenvalue_finite N_twisted} that
\begin{align}\label{transition map of GT basis}
    p_N(V_{\Lambda})=\begin{cases}
        V_{\Lambda^{\natural \shortdownarrow}} &, \text{ if $\Lambda_1$ is cuttable},\\
        0 &, \text{ otherwise},
    \end{cases}
\end{align}
where $\Lambda^{\natural \shortdownarrow}=(\Lambda_1^{\natural \shortdownarrow},\cdots,\Lambda_{n+1}^{\natural \shortdownarrow})$. This motivates our next definition.

\begin{definition}\label{def:semi-inf GT pattern}
A semi-infinite GT pattern of height $n$ and type $r$ is an $(n+1)$-tuple of integer arrays $\Lambda=(\Lambda_1,\cdots,\Lambda_{n+1})$, $\Lambda_i=(\Lambda_{i,1}\ge \Lambda_{i,2}\ge \cdots )\in \bZ^{\infty}$ such that
\begin{align}\label{semi-inf GT condition 1}
    \Lambda_1\overset{1}{\searrow}\Lambda_2\overset{1}{\searrow}\cdots \overset{1}{\searrow}\Lambda_{n}\overset{1}{\searrow}\Lambda_{n+1}=\Lambda_1^\shortdownarrow,
\end{align}
and that $\Lambda$ has the following asymptotic behaviour:
\begin{align}\label{semi-inf GT condition 2}
    \Lambda_{i,j}=-k\lfloor\frac{j+i-2-r}{n}\rfloor \text{ for }j\gg 0.
\end{align}
We denote the set of semi-infinite GT patterns of height $n$ and type $r$ by $\mathrm{GT}^{(r)}_\infty(n)$. And we define $\Lambda^{\mathrm{vac}}$ to be the semi-infinite GT pattern which saturates the condition \eqref{semi-inf GT condition 2}, i.e. 
\begin{align}
     \Lambda^{\mathrm{vac}}_{i,j}=-k\lfloor\frac{j+i-2-r}{n}\rfloor \text{ for all }i,j.
\end{align}
\end{definition}

The semi-infinite GT patterns uniformizes non-negative GT patterns in the following sense. For arbitrary $L\in \bN$, there exists a surjective map $\mathrm{GT}^{(r)}_\infty(n)\twoheadrightarrow \mathrm{GT}(nL+r,n)$ given by:
\begin{align}
f_L: \mathrm{GT}^{(r)}_\infty(n)\ni \Lambda \mapsto (\Lambda_{i,j}+kL)_{1\le i\le n+1}^{1\le j\le nL+r}\in \mathrm{GT}(nL+r,n).
\end{align}
Moreover, for a fixed element $\Lambda\in  \mathrm{GT}^{(r)}_\infty(n)$, $f_L(\Lambda)$ belongs to $\mathrm{GT}(nL+r,n)_+$ for sufficiently large $L$, due to the asymptotic behaviour \eqref{semi-inf GT condition 2}. 

\begin{definition}
In $\widetilde{\mathcal H}_{nL+r}(n,k)$, for an element $\Lambda\in  \mathrm{GT}^{(r)}_\infty(n)$, we set 
\begin{align}
V_{\Lambda}:=\begin{cases}
    V_{f_L(\Lambda)} &, \text{ if }f_L(\Lambda)\in \mathrm{GT}(nL+r,n)_+,\\
    0 &, \text{ otherwise}.
\end{cases}
\end{align}
\end{definition}

Using the above notation, \eqref{transition map of GT basis} can be compactly written as $p_{nL+r}(V_{\Lambda})=V_{\Lambda}$ for a $\Lambda\in \mathrm{GT}^{(r)}_\infty(n)$. Taking inverse limit $L\to \infty$, we get a decomposition of $\widetilde{\mathcal H}^{(r)}_\infty(n,k)$ into one-dimensional subspaces labelled by $\mathrm{GT}^{(r)}_\infty(n)$.

\begin{theorem}[Spectral Decomposition of $\widetilde{\mathcal H}^{(r)}_{\infty}(n,k)$]\label{thm:spectral decomposition}
The action of the quantum minors \newline $\{\widetilde{A}_m(u)\}_{1\le m\le n}$ on $\widetilde{\mathcal H}^{(r)}_{\infty}(n,k)$ has simple spectrum, such that the eigenvectors are labelled by semi-infinite GT patterns of height $n$ and type $r:$
\begin{align}\label{eq:spectral decomposition}
    \widetilde{\mathcal H}^{(r)}_\infty(n,k)=\bigoplus_{\Lambda\in \mathrm{GT}^{(r)}_{\infty}(n)}V_{\Lambda},\quad \dim V_{\Lambda}=1,
\end{align}
and the eigenvalue of $\widetilde{A}_m(u)$ on $V_{\Lambda}$ is
\begin{align}\label{eq:eigenvalue}
    \prod_{i=1}^r\left(\frac{u-r-\Lambda_{m+1,i}+i}{u-r-\Lambda_{1,i}+i}\right)\times \prod_{i=r+1}^\infty\left(\frac{u-r-\Lambda_{m+1,i}+i}{u-r-\Lambda^{\mathrm{vac}}_{m+1,i}+i}\times \frac{u-r-\Lambda^{\mathrm{vac}}_{1,i}+i}{u-r-\Lambda_{1,i}+i}\right).
\end{align}
\end{theorem}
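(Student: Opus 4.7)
The strategy is to take the inverse limit of the finite-stage spectral decomposition from Theorem \ref{thm:spectral decomposition_finite N}, using Theorem \ref{thm:Yangian comptible with p_N and sigma_N} to guarantee that the Yangian action descends properly, and then to compute the limit of the eigenvalues \eqref{eq:eigenvalue_finite N_twisted}.

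\emph{Step 1 (Compatibility of bases under transition maps).} By Theorem \ref{thm:Yangian comptible with p_N and sigma_N}, the projection $p_{nL+r}$ intertwines the Yangian actions given by $\prescript{L+1}{}{\widetilde T}(u)$ and $\prescript{L}{}{\widetilde T}(u)$. Consequently $p_{nL+r}$ sends simultaneous eigenvectors of the quantum minors $\{\widetilde A_m(u)\}$ to simultaneous eigenvectors, with the eigenvalues transformed by the appropriate spectral shift. Comparing the formula \eqref{eq:eigenvalue_finite N_twisted} at levels $L$ and $L+1$, I will verify that for $\Lambda\in\mathrm{GT}(nL+n+r,n)_+$ the image $p_{nL+r}(V_\Lambda)$ is either $0$ or the eigenvector associated with $\Lambda^{\natural\shortdownarrow}\in\mathrm{GT}(nL+r,n)_+$, the vanishing happening exactly when $\Lambda_1$ is not cuttable (this is essentially \eqref{transition map of GT basis}, which in turn follows from Proposition \ref{prop:p_N and branching} applied component-wise to the rows of $\Lambda$).

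\emph{Step 2 (Identification of the index set).} The surjections $f_L:\mathrm{GT}^{(r)}_\infty(n)\to \mathrm{GT}(nL+r,n)$ are compatible with the cutting-shift operation $\Lambda\mapsto\Lambda^{\natural\shortdownarrow}$ appearing in Step 1, by the very definition \eqref{semi-inf GT condition 2} of the asymptotic vacuum. For any fixed $\Lambda\in\mathrm{GT}^{(r)}_\infty(n)$, the asymptotic condition \eqref{semi-inf GT condition 2} ensures that $f_L(\Lambda)\in\mathrm{GT}(nL+r,n)_+$ for $L\gg 0$, hence $V_\Lambda\in\widetilde{\mathcal H}_{nL+r}(n,k)$ is defined and nonzero for $L\gg 0$ and stabilizes under the transition maps. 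Invoking Proposition \ref{prop:p_N stabilizes}, the inverse system $\{V_\Lambda\subseteq\widetilde{\mathcal H}_{nL+r}(n,k)\}_L$ yields a one-dimensional subspace $V_\Lambda\subseteq\widetilde{\mathcal H}^{(r)}_\infty(n,k)$. Summing over $\Lambda$ and again using Proposition \ref{prop:p_N stabilizes} degree by degree, I get the direct sum decomposition \eqref{eq:spectral decomposition}.

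\emph{Step 3 (Yangian action on the limit and eigenvalue computation).} By Corollary \ref{cor:limit of Yangian} the operators $\prescript{\infty}{}{\widetilde T}^a_b(u)$ act on each limit vector $V_\Lambda$ as the $L\to\infty$ limit of the action of $\prescript{L}{}{\widetilde T}^a_b(u)$ on $V_{f_L(\Lambda)}$, and the same holds for the quantum minors $\widetilde A_m(u)$. It remains to take the limit of \eqref{eq:eigenvalue_finite N_twisted}. The key manipulation is to rewrite the product $\prod_{i=1}^{nL+r}$ by comparing with the vacuum pattern $\Lambda^{\mathrm{vac}}$: separate out the first $r$ factors (corresponding to $1\le i\le r$), and in the remaining range $r<i\le nL+r$ insert the telescoping ratio so that the product becomes
\[
\prod_{i=1}^r \frac{u-r-\Lambda_{m+1,i}+i+kL\,\delta}{u-r-\Lambda_{1,i}+i+kL\,\delta}\;\times\;\prod_{i=r+1}^{nL+r}\!\!\frac{u-r-\Lambda_{m+1,i}+i}{u-r-\Lambda^{\mathrm{vac}}_{m+1,i}+i}\cdot\frac{u-r-\Lambda^{\mathrm{vac}}_{1,i}+i}{u-r-\Lambda_{1,i}+i}\;\times\;R_L(u),
\]
where $\delta$ records the shift $\Lambda\mapsto\overline\Lambda=\Lambda-kL$ and $R_L(u)$ is the purely vacuum ratio. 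The asymptotic condition \eqref{semi-inf GT condition 2} implies that $\Lambda_{i,j}=\Lambda^{\mathrm{vac}}_{i,j}$ for $i+j\gg 0$, so only finitely many factors in the middle product differ from $1$, and those factors stabilize to the product $\prod_{i\geq r+1}$ in \eqref{eq:eigenvalue}. The remaining vacuum product $R_L(u)$ is an explicit rational function of $L$ which should precisely cancel the Pochhammer ratio $\prod_{j=0}^{m-1}(1+(u-j)/(n+k))_L/(1+(u-j+k)/(n+k))_L$ in \eqref{eq:eigenvalue_finite N_twisted} when combined with the shift $\overline\Lambda_{\bullet,i}=\Lambda_{\bullet,i}-kL$. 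Carrying out this cancellation (using that $\Lambda^{\mathrm{vac}}$ takes value $-k\lfloor(i+j-2-r)/n\rfloor$, so that the vacuum contribution over a full block of length $n$ telescopes) yields the eigenvalue \eqref{eq:eigenvalue}.

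\emph{Main obstacle.} The delicate part is Step 3: one must verify that the infinite product in \eqref{eq:eigenvalue} converges (i.e.\ that only finitely many factors differ from $1$, which follows from \eqref{semi-inf GT condition 2}), and, more importantly, that the Pochhammer normalization introduced in \eqref{eq:Yangian} precisely cancels the divergent $L$-dependence coming from the spectral shift $u\mapsto u+(k+n)L$ and the shift $\Lambda\mapsto\Lambda-kL$. This cancellation is the technical reason the particular normalization \eqref{eq:Yangian} was chosen, and it is what makes the whole inverse limit construction go through; verifying it amounts to a careful bookkeeping of the vacuum factors against the Pochhammer symbols, block by block of length $n$.
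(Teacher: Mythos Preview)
Your proposal is correct and follows essentially the same route as the paper. The paper establishes the decomposition \eqref{eq:spectral decomposition} in the discussion preceding the theorem (exactly your Steps 1--2: compatibility of $p_N$ with the $\widetilde T$-action via Theorem \ref{thm:Yangian comptible with p_N and sigma_N}, then uniformization by the maps $f_L$), and for the eigenvalue it proves the single identity
\[
\prod_{i=r+1}^{nL+r}\frac{u-r-\Lambda^{\mathrm{vac}}_{1,i}+i}{u-r-\Lambda^{\mathrm{vac}}_{m+1,i}+i}
\;=\;
\prod_{j=0}^{m-1}\frac{\bigl(1+\tfrac{u-j}{n+k}\bigr)_L}{\bigl(1+\tfrac{u-j+k}{n+k}\bigr)_L},
\]
which is precisely your ``vacuum product $R_L(u)$ cancels the Pochhammer ratio'' step; after this, substituting $f_L(\Lambda)$ into \eqref{eq:eigenvalue_finite N_twisted} gives a finite product that stabilizes for $L\gg 0$ by \eqref{semi-inf GT condition 2}. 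Note that the shift bookkeeping in your Step 3 simplifies cleanly: since $f_L(\Lambda)_{i,j}=\Lambda_{i,j}+kL$, the $\overline{\Lambda}$ of \eqref{eq:eigenvalue_finite N_twisted} becomes the semi-infinite $\Lambda$ itself, so no residual ``$\delta$'' term survives in the first $r$ factors.
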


\begin{proof}
We have constructed the decomposition \eqref{eq:spectral decomposition} in the above discussions, let us prove \eqref{eq:eigenvalue}. We observe that
\begin{align}
    \prod_{i=r+1}^{nL+r}\frac{u-r-\Lambda^{\mathrm{vac}}_{1,i}+i}{u-r-\Lambda^{\mathrm{vac}}_{m+1,i}+i}=\prod_{j=0}^{m-1}\frac{\left(1+\frac{u-j}{n+k}\right)_L}{\left(1+\frac{u-j+k}{n+k}\right)_L}.
\end{align}
Therefore, for a sufficiently large $L$, after substituting $\Lambda$ in \eqref{eq:eigenvalue_finite N_twisted} with the truncation $f_L(\Lambda)$, the eigenvalue of $\widetilde{A}_m(u)$ on $V_{\Lambda}$ is
\begin{align}\label{eq:eigenvalue large L}
    \prod_{i=1}^r\left(\frac{u-r-\Lambda_{m+1,i}+i}{u-r-\Lambda_{1,i}+i}\right)\times \prod_{i=r+1}^{nL+r}\left(\frac{u-r-\Lambda_{m+1,i}+i}{u-r-\Lambda^{\mathrm{vac}}_{m+1,i}+i}\times \frac{u-r-\Lambda^{\mathrm{vac}}_{1,i}+i}{u-r-\Lambda_{1,i}+i}\right).
\end{align}
The eigenvalue \eqref{eq:eigenvalue large L} stabilizes for $L\gg 0$ because $\Lambda_{i,j}=\Lambda^{\mathrm{vac}}_{i,j}$ for $j\gg 0$, therefore we can take $L\to \infty$ and get \eqref{eq:eigenvalue}.
\end{proof}

\begin{definition}\label{def:SYD}
For two sequences of non-increasing integers $\beta=(\beta_1\ge \beta_2\ge\cdots)$ and $\gamma=(\gamma_1\ge \gamma_2\ge\cdots)$ such that $\beta_i\ge \gamma_i$ for all $i$, we define the skew Young diagram associated to $\beta/\gamma$ to be
\begin{align}
    \mathrm{SYD}(\beta/\gamma):=\left\{(i,j)\in\bZ^2\:|\: i\ge 1, \beta_i\ge j>\gamma_i\right\}.
\end{align}
We represent $(i,j)\in\mathrm{SYD}(\beta/\gamma)$ pictorially as a unit box in $\bR^2$ with the center $(i,j)$, the coordinates $i$ and $j$ on $\bR^2$ increasing from top to bottom and from left to right respectively. For example:
\begin{equation*}
\substack{\mathrm{SYD}(\beta/\gamma)\text{ for }\\ \beta=(6,5,3,3,2,1,\cdots)\\ \gamma=(4,3,3,0,0,-1,\cdots)}\qquad
\begin{tikzpicture}[scale=0.6]
    \draw[->] (0,0) -- (1,0) node[right]{$j$};
    \draw[->] (0,0) -- (0,-1) node[below]{$i$};
\end{tikzpicture}
\ytableausetup{smalltableaux}
\ydiagram{5+2,4+2,0,1+3,1+2,2}
\end{equation*}
\end{definition}

As we have seen in the Remark \ref{rmk:simple Y(gl_n) submodules}, for an admissible $\lambda$, $\widetilde{\mathcal H}_N(\lambda)$ is a simple $Y(\gl_n)$ submodule of $\widetilde{\mathcal H}_N(n,k)$. According to Theorem \ref{thm:Yangian comptible with p_N and sigma_N} the map $p_N:\widetilde{\mathcal H}_{N+n}(\lambda)\cong \widetilde{\mathcal H}_N(\lambda^{\natural\shortdownarrow})$ is a Yangian module isomorphism for a cuttable $\lambda$. Taking inverse limit, we arrive at the following decomposition result.

\begin{theorem}\label{thm:Yangian irrep decomposition}
$\widetilde{\mathcal H}^{(r)}_{\infty}(n,k)$ decomposes into simple $Y(\gl_n)$-modules:
\begin{align}\label{irrep decomposition}
    \widetilde{\mathcal H}^{(r)}_{\infty}(n,k)=\bigoplus_{\substack{\lambda=(\lambda_1\ge \lambda_2\ge \cdots)\in \bZ^\infty\\  \lambda_{j}=-k\lfloor\frac{j-1-r}{n}\rfloor \text{ for }j\gg 0.}} \widetilde{\mathcal H}(\lambda).
\end{align}
The Drinfeld polynomials of $\widetilde{\mathcal H}(\lambda)$ are
\begin{align}\label{Drinfeld poly}
    P_m(u)=\prod_{\substack{\text{$(i,j)$ is a top box of}\\ \text{a height $m$ column}\\ \text{in }\mathrm{SYD}(\lambda/\lambda^\shortdownarrow)}}(u-r+i-j),\quad m=1,\cdots, n-1.
\end{align}
The eigenvector decomposition of $\widetilde{\mathcal H}(\lambda)$ with respect to quantum minors $\{\widetilde A_m(u)\}_{1\le m\le n}$ is 
\begin{align}\label{eq:spectral decomposition_irrep}
    \widetilde{\mathcal H}(\lambda)=\bigoplus_{\substack{\Lambda\in \mathrm{GT}^{(r)}_{\infty}(n)\\ \Lambda_1=\lambda}}V_{\Lambda}.
\end{align}
\end{theorem}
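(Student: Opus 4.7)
The plan is to obtain the decomposition and Drinfeld polynomials as a limit of the corresponding statements at finite stage. First, for a sequence $\lambda=(\lambda_1\ge\lambda_2\ge\cdots)$ satisfying the stated asymptotic, I fix $L_0$ large enough that $kL+\lambda_j\ge 0$ for all $j\le nL+r$ and $\lambda_j=-k\lfloor\tfrac{j-1-r}{n}\rfloor$ for $j>nL+r$, whenever $L\ge L_0$. Setting $\lambda^{(L)}:=(\lambda_1+kL,\ldots,\lambda_{nL+r}+kL)\in\bN^{nL+r}$, a direct check against Definition \ref{def:admissible and cuttable} using the asymptotic of $\lambda$ shows that $\lambda^{(L)}$ is admissible and $\lambda^{(L+1)}$ is cuttable with $(\lambda^{(L+1)})^{\natural\shortdownarrow}=\lambda^{(L)}$. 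By Proposition \ref{prop:p_N and decomposition} combined with Theorem \ref{thm:Yangian comptible with p_N and sigma_N}, the transition map restricts to a $Y(\gl_n)$-equivariant isomorphism $p_{nL+r}\colon\widetilde{\mathcal H}_{n(L+1)+r}(\lambda^{(L+1)})\xrightarrow{\sim}\widetilde{\mathcal H}_{nL+r}(\lambda^{(L)})$ for $L\ge L_0$. I then define $\widetilde{\mathcal H}(\lambda):=\varprojlim_L\widetilde{\mathcal H}_{nL+r}(\lambda^{(L)})$; since each stage is already isomorphic to the next via $p_{nL+r}$, the limit is canonically identified with any finite stage and hence inherits a simple $Y(\gl_n)$-module structure (simplicity is preserved under isomorphism, and each $\widetilde{\mathcal H}_{nL+r}(\lambda^{(L)})$ is simple by Remark \ref{rmk:simple Y(gl_n) submodules}).

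The decomposition \eqref{irrep decomposition} then follows by taking the degree-wise inverse limit of the finite decompositions $\widetilde{\mathcal H}_{nL+r}(n,k)=\bigoplus_{\mu\text{ admissible}}\widetilde{\mathcal H}_{nL+r}(\mu)$: by Proposition \ref{prop:p_N and decomposition}, $p_{nL+r}$ permutes these summands (via $\mu\mapsto\mu^{\natural\shortdownarrow}$ on cuttable $\mu$ and kills the rest), and a system compatible with these permutations is exactly the datum of a stable semi-infinite $\lambda$ as in the theorem. Finiteness of each graded piece of $\widetilde{\mathcal H}^{(r)}_\infty(n,k)$ (Lemma \ref{lem:finiteness}) together with Proposition \ref{prop:p_N stabilizes} ensures that the direct sum and the inverse limit commute in each fixed degree. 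For the eigenvector decomposition \eqref{eq:spectral decomposition_irrep}, I use that \eqref{eq:spectral decomposition_finite N irrep} gives $\widetilde{\mathcal H}_{nL+r}(\lambda^{(L)})=\bigoplus_{f_L(\Lambda)\in\mathrm{GT}(nL+r,n)_+,\,f_L(\Lambda)_1=\lambda^{(L)}}V_{f_L(\Lambda)}$, and by \eqref{transition map of GT basis} these $V_\bullet$'s are compatible under $p_{nL+r}$; the inverse limit set of such compatible systems is exactly $\{\Lambda\in\mathrm{GT}^{(r)}_\infty(n):\Lambda_1=\lambda\}$.

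The main obstacle is the computation of the Drinfeld polynomials \eqref{Drinfeld poly}. My plan is to identify the highest weight vector of $\widetilde{\mathcal H}(\lambda)$ explicitly as the inverse limit of the vectors $V_{\Lambda^0(L)}$, where $\Lambda^0(L)$ is the Gelfand--Tsetlin pattern of \eqref{eq:hwv} attached to $\lambda^{(L)}$; compatibility of these under $p_{nL+r}$ follows from \eqref{transition map of GT basis}, and the limiting pattern $\Lambda^0_\infty$ satisfies $\Lambda^0_{\infty,m+1,i}=\max(\lambda_i-k,\lambda_{i+m})$ for $m\ge 0$. I then read off the eigenvalues of $\widetilde A_m(u)$ on $V_{\Lambda^0_\infty}$ from \eqref{eq:eigenvalue}, extract the highest-weight series $\mu_m(u)$ using $\widetilde A_m(u)v=\mu_1(u)\mu_2(u-1)\cdots\mu_m(u-m+1)v$, and compute the Drinfeld polynomials via $P_m(u+1)/P_m(u)=\mu_m(u)/\mu_{m+1}(u)$. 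After the telescoping in \eqref{eq:eigenvalue} (where almost all factors cancel because $\Lambda^0_{\infty,i,j}$ agrees with $\Lambda^{\mathrm{vac}}_{i,j}$ outside a finite set of indices determined by $\lambda/\lambda^\shortdownarrow$), only finitely many factors survive, and these precisely track the columns of $\mathrm{SYD}(\lambda/\lambda^\shortdownarrow)$: the ratio $\mu_m(u)/\mu_{m+1}(u)$ acquires a pole-zero pair at the content of each top box of a column that has height $\ge m+1$ versus $\ge m$, which combines (after forming $P_m(u)$) to a single contribution $u-r+i-j$ for each top box of a column of height exactly $m$.

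The hard part will be checking that the normalization $f(u)$ built into $\widetilde T^a_b(u)$ in \eqref{eq:Yangian} and the spectral shift $u\mapsto u-r+i-j$ in \eqref{Drinfeld poly} line up correctly; I expect to verify this by a small example (for instance $n=2$, $k=1$ and a short $\lambda$) and then argue by the general combinatorial pattern. As a sanity check, the formula \eqref{Drinfeld poly} specializes correctly to the Nazarov--Tarasov result for the finite-dimensional modules $\widetilde{\mathcal H}_N(\lambda)\cong V_{\lambda/\lambda^\shortdownarrow}(k-N)^*$ once the spectral parameter shift built into $\widetilde T(u)$ is taken into account, and passing to the limit (which does not change the Drinfeld polynomials since they stabilize at finite stage) then gives \eqref{Drinfeld poly}.
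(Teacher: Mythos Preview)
Your approach is essentially the same as the paper's: define $\widetilde{\mathcal H}(\lambda)$ as the inverse limit of the simple modules $\widetilde{\mathcal H}_{nL+r}(\lambda^{(L)})$ (using Proposition~\ref{prop:p_N and decomposition} and Theorem~\ref{thm:Yangian comptible with p_N and sigma_N}), obtain \eqref{eq:spectral decomposition_irrep} from Theorem~\ref{thm:spectral decomposition} via \eqref{transition map of GT basis}, and read off the Drinfeld polynomials from the eigenvalues of $\widetilde A_m(u)$ on the highest weight vector $V_{\Lambda^0}$. The one place where the paper is sharper than your outline is the Drinfeld polynomial step: rather than extracting the individual series $\mu_m(u)$ and checking normalizations by example, the paper works directly with the ratio $\frac{\widetilde A_{m-1}(u-1)\widetilde A_{m+1}(u)}{\widetilde A_m(u-1)\widetilde A_m(u)}$ on $V_{\Lambda^0}$, introduces the intermediate polynomial $Q_m(u)_L=\prod_{(i,j)\in\mathrm{SYD}(\Lambda^0_m/\Lambda^0_{m+1})}(u-r+i-j)$, and uses the clean combinatorial identity that $\mathrm{SYD}(\Lambda^0_m/\Lambda^0_{m+1})$ is exactly the set of top boxes of height-$m$ columns in $\mathrm{SYD}(\Lambda^0_1/\Lambda^0_{m+1})$ to close the computation without any case-checking.
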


\begin{proof}
The decomposition \eqref{irrep decomposition} follows from taking inverse limit, namely $\widetilde{\mathcal H}(\lambda)=\underset{\substack{\longleftarrow \\ L}}{\lim}\:\widetilde{\mathcal H}_{nL+r}(\lambda)$. The eigenvector decomposition \eqref{eq:spectral decomposition_irrep} follows from Theorem \ref{thm:spectral decomposition}. The Drinfeld polynomials \eqref{Drinfeld poly} essentially follow from \cite[Corollary 2.13]{nazarov1998representations}. One should be careful that $\widetilde{\mathcal H}_N(\lambda)$ is obtained from $V_{\lambda/\lambda^\shortdownarrow}$ by first taking the dual, then shifting spectral parameter, finally multiplying a function $f(u)$. In principal one can chase along the Yangian automorphisms and deduce \eqref{Drinfeld poly} from \cite[Corollary 2.13]{nazarov1998representations}. Here we provide a straightforward proof. 

By definition, we have
\begin{align}\label{def:Drinfeld poly}
    \frac{\widetilde{A}_{m-1}(u-1)\widetilde{A}_{m+1}(u)}{\widetilde{A}_{m}(u-1)\widetilde{A}_{m}(u)}\bigg\rvert_{\text{highest weight vector}}=\frac{P_m(u-1)}{P_m(u)}.
\end{align}
By Remark \ref{rmk:simple module and hwv}, the highest weight vector is $V_{\Lambda^0}$ where $\Lambda^0$ is given by \eqref{eq:hwv}. According to Theorem \ref{thm:spectral decomposition}, we have
\begin{align}\label{eq:Drinfeld poly_1}
\begin{split}
\frac{\widetilde{A}_{m-1}(u-1)\widetilde{A}_{m+1}(u)}{\widetilde{A}_{m}(u-1)\widetilde{A}_{m}(u)}\bigg\rvert_{V_{\Lambda^0}}&=\prod_{i=1}^\infty\left(\frac{u-1-r-\Lambda^0_{m,i}+i}{u-1-r-\Lambda^0_{m+1,i}+i}\times \frac{u-r-\Lambda^0_{m+2,i}+i}{u-r-\Lambda^0_{m+1,i}+i}\right)\\
\text{\small for $L\gg 0$}\quad &=\prod_{i=1}^{nL+r}\left(\frac{u-1-r-\Lambda^0_{m,i}+i}{u-1-r-\Lambda^0_{m+1,i}+i}\times \frac{u-r-\Lambda^0_{m+2,i}+i}{u-r-\Lambda^0_{m+1,i}+i}\right)\\
&=\frac{Q_m(u-1)_L}{Q_m(u)_L}\times \frac{Q_{m+1}(u+1)_L}{Q_{m+1}(u)_L},
\end{split}
\end{align}
where 
\begin{align}
    Q_m(u)_L=\prod_{\substack{(i,j)\in \mathrm{SYD}(\Lambda^0_m/\Lambda^0_{m+1})\\ i\le nL+r}}(u-r+i-j).
\end{align}
Using the formula \eqref{eq:hwv} for $\Lambda^0$, we observe that there is an equality between box configurations (see \cite[Lemma 2.6]{nazarov1998representations} for the dual statement):
\begin{align}\label{eq:top box of height m column}
    \mathrm{SYD}(\Lambda^0_m/\Lambda^0_{m+1})= 
    \left\{(i,j)\:|\: (i,j) \text{ is a top box of a height $m$ column in }\mathrm{SYD}(\Lambda^0_1/\Lambda^0_{m+1}) \right\}.
\end{align}
Whence it follows that:
\begin{align}\label{eq:Q_m(u)}
    Q_m(u)_L=\prod_{\substack{\text{$(i,j)$ is a top box of}\\ \text{a height $m$ column}\\ \text{in }\mathrm{SYD}(\Lambda^0_1/\Lambda^0_{m+1})\\ \text{and $i\le nL+r$}}}(u-r+i-j).
\end{align}
Using \eqref{eq:Q_m(u)} and \eqref{eq:top box of height m column}, one arrives at
\begin{align}\label{eq:P_m(u)}
    \frac{Q_m(u)_L}{Q_{m+1}(u+1)_L}=\prod_{\substack{\text{$(i,j)$ is a top box of}\\ \text{a height $m$ column}\\ \text{in }\mathrm{SYD}(\Lambda_1/\Lambda_{n+1})}}(u-r+i-j).
\end{align}
Plug \eqref{eq:P_m(u)} into \eqref{eq:Drinfeld poly_1}, and compare with \eqref{def:Drinfeld poly}, we conclude that $P_m(u)$ is given by \eqref{Drinfeld poly}.
\end{proof}

\begin{remark}
According to Corollary \ref{cor:affine gl(n) action on conformal limit}, $\widetilde{\mathcal H}^{(r)}_{\infty}(n,k)$ is isomorphic to $L_{k\varpi_r}(\widehat{\mathfrak{sl}}(n)_k)\otimes \mathrm{Fock}_{kr}(\widehat{\gl}(1)_{kn})$ as a $\widehat{\gl}(n)_k$ module. Theorem \ref{thm:Yangian irrep decomposition} describes the decomposition of $L_{k\varpi_r}(\widehat{\mathfrak{sl}}(n)_k)\otimes \mathrm{Fock}_{kr}(\widehat{\gl}(1)_{kn})$ with respect to the Yangian action given by $\prescript{\infty}{}{\widetilde{T}}(u)$. When $r=0$, $\widetilde{\mathcal H}^{(0)}_{\infty}(n,k)$ is the vacuum module for $\widehat{\gl}(n)_k$, and in this case our result recovers Uglov's result in \cite[Theorem 1.2]{uglov1998yangian}.
\end{remark}

\subsection{Solving a higher-spin generalization of Calogero-Sutherland model}

Recall that in Section \ref{subsec:Calogero rep} we defined a higher-spin analog of Calogero-Sutherland Hamiltonian:
\begin{align}
    H_{\mathrm{CS}}=\sum_{i=1}^N\Delta^{-1}(x_i\partial_{i})^2\Delta-2\sum_{i<j}^N\frac{x_ix_j(\Omega_{ij}+k)}{(x_i-x_j)^2}-(N-1)\sum_{i=1}^Nx_i\partial_i-\frac{N(N-1)(2N-1)}{6}.
\end{align}
$H_{\mathrm{CS}}$ is the Calogero representation of $\mathrm{Tr}((XY)^2)$ (Lemma \ref{lem:Calogero-Sutherland}). Using the technique that we developed in previous sections, we can completely solve this Hamiltonian.

\begin{theorem}\label{thm:spectrum of H_CS}
The eigenspace decomposition of $\mathcal H_N(n,k)$ with respect to the action of $H_{\mathrm{CS}}=\mathrm{Tr}((XY)^2)$ is given by
\begin{align}
    \mathcal H_N(n,k)=\bigoplus_{\lambda} \widetilde{\mathcal H}_N(\lambda),
\end{align}
where the sum is taken for all $\lambda=(\lambda_1\ge\cdots\ge\lambda_N)\in \bN^N$ such that $\lambda$ is admissible (Definition \ref{def:admissible and cuttable}), and $\widetilde{\mathcal H}_N(\lambda)$ is defined in Definition \ref{def:H(la)}. The eigenvalue of $H_{\mathrm{CS}}$ on $\widetilde{\mathcal H}_N(\lambda)$ is
\begin{align}
    \sum_{i=1}^N (\lambda_i-k)(\lambda_i-k+N+1-2i).
\end{align}
\end{theorem}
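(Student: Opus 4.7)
The plan is to recognize $H_{\mathrm{CS}}$ as the quadratic Casimir of a natural $\gl_N\subset\gl_{N+n}$ and then read its eigenvalue directly from the Schur-Weyl-type decomposition of $\mathcal H_N(n,k)$ already in hand.

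First I would observe that $(XY)^i_j=\mathsf E^{n+i}_{n+j}$ is the lower-right $N\times N$ block of the $(N+n)\times(N+n)$ matrix $\mathsf E$ defined in \eqref{eq:mat E}, and these entries therefore generate a copy $\gl_N^{\mathrm{lr}}$ of $\gl_N$ sitting as the block-diagonal subalgebra in the $\gl_{N+n}$ generated by $\mathsf E$, with commutator $[(XY)^i_j,(XY)^k_l]=\delta^k_j(XY)^i_l-\delta^i_l(XY)^k_j$. Consequently
\begin{align*}
H_{\mathrm{CS}}\;=\;\sum_{i,j}(XY)^i_j(XY)^j_i\;=\;C_2^{\gl_N^{\mathrm{lr}}},
\end{align*}
the quadratic Casimir of $\gl_N^{\mathrm{lr}}$ acting on $\bC[X,A]$.

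Next I would invoke the Schur-Weyl decomposition $\bC[X,A]=\bigoplus_\lambda V^{N+n}_\lambda\otimes(V^N_\lambda)^*$ from \eqref{Schur-Weyl}. Taking $\GL_N^{\mathrm{diag}}$-semi-invariants of character $\det^{-k}$ yields the decomposition $\mathcal H_N(n,k)=\bigoplus_\lambda\widetilde{\mathcal H}_N(\lambda)$, with summand nonzero precisely on admissible $\lambda$ by Lemma \ref{lem:H_N(lambda) nontrivial}. As explained in the proof of Theorem \ref{thm:spectral decomposition_finite N}, $\widetilde{\mathcal H}_N(\lambda)\cong\Hom_{\GL_N}(V^N_{\lambda^{\shortdownarrow}},V^{N+n}_\lambda)$, and each such homomorphism is determined by a $\gl_N^{\mathrm{lr}}$-highest-weight vector of weight $\lambda^{\shortdownarrow}$ inside $V^{N+n}_\lambda$; in particular $\widetilde{\mathcal H}_N(\lambda)$ lies inside the $V^N_{\lambda^{\shortdownarrow}}$-isotypic subspace of $V^{N+n}_\lambda|_{\gl_N^{\mathrm{lr}}}$.

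Finally, since $C_2^{\gl_N^{\mathrm{lr}}}$ is central in $U(\gl_N^{\mathrm{lr}})$ it acts as a scalar on every $\gl_N^{\mathrm{lr}}$-isotypic component, and the standard eigenvalue $C_2^{\gl_N}\big|_{V^N_\mu}=\sum_{i=1}^N\mu_i(\mu_i+N+1-2i)$ applied to $\mu=\lambda^{\shortdownarrow}$ produces exactly $\sum_{i=1}^N(\lambda_i-k)(\lambda_i-k+N+1-2i)$, as asserted. The only real subtlety is to keep the two distinct $\gl_N$-actions on $\bC[X,A]$ apart — the Schur-Weyl dual $\gl_N$ used to \emph{define} \eqref{Schur-Weyl}, versus the block-diagonal $\gl_N^{\mathrm{lr}}\subset\gl_{N+n}$ whose quadratic Casimir is $H_{\mathrm{CS}}$ — and to note that the relevant identification of $\widetilde{\mathcal H}_N(\lambda)$ with a $\gl_N^{\mathrm{lr}}$-highest-weight subspace has already been carried out in Section \ref{subsec:GZ}, so no extra work is needed.
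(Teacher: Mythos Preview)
Your proposal is correct and follows essentially the same approach as the paper: identify $(XY)^i_j$ as the lower $N\times N$ block of the $\gl_{N+n}$ generators $\mathsf E$, recognize $H_{\mathrm{CS}}$ as the quadratic Casimir of this block-diagonal $\gl_N$, and use the branching decomposition to see that $\widetilde{\mathcal H}_N(\lambda)$ sits in the $V^N_{\lambda^{\shortdownarrow}}$-isotypic component, yielding the eigenvalue $\langle\lambda^{\shortdownarrow},\lambda^{\shortdownarrow}+2\rho_N\rangle$. Your explicit distinction between the Schur--Weyl dual $\gl_N$ and the block $\gl_N^{\mathrm{lr}}$ is a useful clarification, but the argument is the same.
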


\begin{proof}
Since $\widetilde{\mathcal H}_N(\lambda)=\left(V^{N+n}_{\lambda}\otimes (V^{N}_{\lambda})^*\right)^{\GL_N^{\mathrm{diag}},-k}$, $\widetilde{\mathcal H}_N(\lambda)$ is a subspace of the summand $M^{\lambda}_{\mu,\lambda^\shortdownarrow}\otimes V^{n}_{\mu}\otimes V^{N}_{\lambda^\shortdownarrow}\otimes (V^{N}_{\lambda})^*$ in the decomposition
\begin{align}
    V^{N+n}_{\lambda}\otimes (V^{N}_{\lambda})^*=\bigoplus_{\mu=(\mu_1\ge\cdots\ge \mu_n)\in \bN^n}M^{\lambda}_{\mu,\lambda^\shortdownarrow}\otimes V^{n}_{\mu}\otimes V^{N}_{\lambda^\shortdownarrow}\otimes (V^{N}_{\lambda})^*.
\end{align}
We note that $\gl_N$ acts on $V^{N}_{\lambda^\shortdownarrow}$ in the above decomposition via the generator $XY$. Therefore the quadratic Casimir of $\gl_N$, which is $\mathrm{Tr}((XY)^2)$, acts on $\widetilde{\mathcal H}_N(\lambda)$ as the scalar\footnote{$2\rho_N=$ sum of positive roots of $\gl_N=(N-1,N-3,\cdots,3-N,1-N)$.}
\begin{align}
    \langle \lambda^\shortdownarrow,\lambda^\shortdownarrow+2\rho_N\rangle=\sum_{i=1}^N (\lambda_i-k)(\lambda_i-k+N+1-2i).
\end{align}
\end{proof}

\begin{remark}\label{rmk:ground states of H_CS}
The same argument as the proof of Theorem \ref{thm:spectrum of H_CS} shows that the energy grading operator $\mathrm{Tr}(XY)$ acts on $\widetilde{\mathcal H}_N(\lambda)$ as the scalar $d(\lambda)=\sum_{i=1}^{N}(\lambda_i-k)$, therefore $\widetilde{\mathcal H}_N(\lambda)$ is homogeneous of degree $d(\lambda)$ with respect to the original energy grading \eqref{eq:energy grading}. In particular, the lowest energy subspace with respect to the Hamiltonian $H_{\mathrm{CS}}$ is exactly the space of ground states $\mathcal H_N(n,k)_0$, that is,
\begin{align}
    \mathcal H_N(n,k)_0=\widetilde{\mathcal H}_N(\lambda_{\min}), \quad \lambda_{\min,j}=k\lfloor\frac{N+n-j}{n}\rfloor.
\end{align}
Using the shifted energy grading (see Definition \ref{def:p_N and sigma_N}), $\widetilde{\mathcal H}_N(\lambda)$ is homogeneous of degree
\begin{align}
    \widetilde{d}(\lambda)=\sum_{j=1}^N\left(\lambda_j-\lambda_{\min,j}\right)
\end{align}
Taking $N\to \infty$, the summand $\widetilde{\mathcal H}(\lambda)$ in \eqref{irrep decomposition} is homogeneous of degree
\begin{align}
    {d}_{\infty}(\lambda)=\sum_{j=1}^\infty\left(\lambda_j+k\lfloor\frac{j-1-r}{n}\rfloor\right)
\end{align}
with respect to the shifted energy grading.
\end{remark}

\section*{Acknowledgments}
The authors would like to thank Jean-Emile Bourgine, Ryosuke Kodera, Andrey Losev, and Yutaka Matsuo for stimulating discussions. Part of YZ's work was done during his visit at Shanghai Institute for Mathematics
and Interdisciplinary Sciences (SIMIS), he would like to thank the hospitality of SIMIS. Kavli IPMU is supported by World Premier International Research Center Initiative (WPI), MEXT, Japan. Work of SL is supported by the National Key R \& D Program of China (NO. 2020YFA0713000).
\appendix

\section{Hall-Littlewood Polynomial}\label{sec:hall-littlewood}
In this appendix we review the basics of the Hall-Littlewood polynomial and its transformed version, following \cite[Section 3]{haiman2003combinatorics}\footnote{The variable $q$ here is denoted by $t$ there.}.
\begin{definition}
For a Young diagram $\lambda=(\lambda_1\ge\cdots\ge \lambda_n)\in \bN^n$, we denote its associated partition by $(1^{\alpha_1},2^{\alpha_2},\cdots)$, then the Hall-Littlewood polynomial $P_{\lambda}(\mathbf a;q)$ in the variables $\mathbf a_1,\cdots,\mathbf a_n$ and $q$ is defined by the formula
\begin{align}
    P_{\lambda}(\mathbf a;q):=\frac{1}{\prod_{i\ge 0}[\alpha_i]_q!}\sum_{w\in S_n}w\left(\mathbf a^{\lambda}\prod_{i<j}\frac{1-q\mathbf a_j/\mathbf a_i}{1-\mathbf a_j/\mathbf a_i}\right).
\end{align}
Here $\alpha_0=n-\sum_{i\ge 1}\alpha_i$, and $\mathbf a^{\lambda}=\mathbf a_1^{\lambda_1}\cdots \mathbf a_n^{\lambda_n}$, and we use the following $q$-number notation
\begin{align*}
    [n]_q=\frac{1-q^n}{1-q},\;  [n]_q!=[n]_q[n-1]_q\cdots [1]_q,\; \left[\begin{matrix} n \\ k \end{matrix}\right]_q=\frac{[n]_q!}{[k]_q![n-k]_q!}.
\end{align*}
\end{definition}
The Hall-Littlewood polynomial $P_{\lambda}(\mathbf a;q)$ is an interpolation between Schur symmetric functions $s_{\lambda}(\mathbf a)$ and monomial symmetric functions $m_{\lambda}(\mathbf a)$, in fact we have 
\begin{align}\label{P(0) and P(1)}
    P_{\lambda}(\mathbf a;0)=s_{\lambda}(\mathbf a),\; P_{\lambda}(\mathbf a;1)=m_{\lambda}(\mathbf a).
\end{align}

\begin{definition}
For a pair of partitions $(\lambda,\mu)$, we define the Kostka-Foulkes functions $K_{\lambda\mu}(q)\in \mathbb Q(q)$ to be coefficient of the expansion
\begin{align}
    s_{\lambda}(\mathbf a)=\sum_{\mu}K_{\lambda \mu}(q)P_{\mu}(\mathbf a;q).
\end{align}
\end{definition}
It is true (but not obvious) that 
\begin{align*}
    K_{\lambda\mu}(q)\in \bZ_{\ge 0}[q],
\end{align*}
i.e. Kostka-Foulkes functions are polynomials in $q$ with non-negative integer coefficients, see \cite[Theorem 3.4.15]{haiman2003combinatorics}.

In \cite{jing1991vertex}, a transformed version of Hall-Littlewood polynomials is defined using the Jing operator, which we recall its definition here.
\begin{definition}
For an integer $m$, the $m$-th Jing operator $S^q_m$ is a linear map from $\bZ[\mathbf a_1,\cdots,\mathbf a_n,q]$ to itself, defined by
\begin{align}\label{Jing Operator}
    \bZ[\mathbf a_1,\cdots,\mathbf a_n,q]\ni f(\mathbf a;q)\mapsto (S^q_mf)(\mathbf a;q):=\sum_{i=1}^nf(\mathbf a_1,\cdots,q\mathbf a_i,\cdots,\mathbf a_n; q)\frac{\mathbf a_i^m}{\prod_{j\neq i}(1-\mathbf a_j/\mathbf a_i)}.
\end{align}
\end{definition}

\begin{remark}
The original definition of Jing operator in \cite{jing1991vertex} (see \cite[Definition 3.4.5]{haiman2003combinatorics}) is in terms of generating series $S^q(u)=\sum_{m\in \bZ} S^q_m u^m$. In other words, $S^q(u)$ is defined beforehand, and $S^q_m$ in \eqref{Jing Operator} is read out from the mode expansion of $S(u)$, see \cite[Lemma A.1]{moosavian2021towards}. The operator $S(u)$ is constructed such that it maps symmetric functions to symmetric functions, i.e. $S^q_m$ maps $\bZ[\mathbf a_1,\cdots,\mathbf a_n]^{S_n}[q]$ to itself for all $m\in \bZ$.
\end{remark}

\begin{definition}
Let $\mu=(\mu_1\ge\cdots \ge \mu_l)\in \bN^l$ be a Young diagram, define the transformed Hall-Littlewood polynomial $H_{\mu}(\mathbf a;q)\in \bZ[\mathbf a_1,\cdots,\mathbf a_n]^{S_n}[q]$ by
\begin{align}\label{definition: transformed HL}
    H_{\mu}(\mathbf a;q)=S^q_{\mu_1}S^q_{\mu_2}\cdots S^q_{\mu_l}(1).
\end{align}
\end{definition}

\begin{theorem}[{see \cite[Corollary 3.4.12]{haiman2003combinatorics}}]
$H_{\mu}(\mathbf a;q)$ is related to the Schur functions by
\begin{align}
    H_{\mu}(\mathbf a;q)=\sum_{\lambda}K_{\lambda \mu}(q)s_{\lambda}(\mathbf a).
\end{align}
\end{theorem}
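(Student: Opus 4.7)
My plan is to realize the Jing operator $S^q(u) := \sum_{m \in \mathbb{Z}} S^q_m u^m$ as a bosonic vertex operator on the ring of symmetric functions $\Lambda$, and to read off the Schur expansion of $H_\mu$ from this presentation. The first step is to show that, upon identifying $\Lambda \cong \mathbb{Q}(q)[p_1, p_2, \ldots]$ via power sums $p_r$, one has the stable-limit ($n \to \infty$) identity
\begin{equation*}
S^q(u) \;=\; \exp\Bigl(\sum_{r\geq 1}\tfrac{1-q^r}{r}\, p_r\, u^r\Bigr)\,\exp\Bigl(-\sum_{r\geq 1}\tfrac{\partial}{\partial p_r}\, u^{-r}\Bigr),
\end{equation*}
with the finite-$n$ formula \eqref{Jing Operator} being the restriction to $\Lambda_n$. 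This identification follows by directly recognizing the symmetrized residue sum in \eqref{Jing Operator} as the contour-integral form of the normal-ordered product of the two exponentials, and by checking the resulting action agrees with \eqref{Jing Operator} on the generators $p_r$.

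Given this vertex-operator form, I would iterate $S^q_{\mu_1}\cdots S^q_{\mu_l}(1)$ by standard boson-oscillator manipulations. Equip $\Lambda$ with the $q$-deformed Hall inner product $\langle p_\lambda, p_\nu\rangle_q = \delta_{\lambda\nu}\, z_\lambda \prod_i (1-q^{\lambda_i})^{-1}$, under which $\{P_\lambda\}$ is orthogonal; the Cauchy-type identity $\prod_{i,j}\tfrac{1-q a_i b_j}{1-a_i b_j} = \sum_\lambda P_\lambda(a) Q_\lambda(b)$ then shows that the creation half of $S^q(u)$ is precisely the generating function for multiplication by the dual basis $\{Q_\lambda\}$, while the annihilation half kills the vacuum $1$. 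A direct commutator computation yields a Bernstein-type recursion that characterizes $H_\mu$ as a distinguished vertex-operator state indexed by $\mu$.

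To finish, I would pair the resulting expression with Schur functions under the ordinary Hall inner product $\langle s_\lambda, s_\nu\rangle = \delta_{\lambda\nu}$. Combining the vertex-operator form of $H_\mu$ with the defining relation $s_\lambda = \sum_\mu K_{\lambda\mu}(q) P_\mu$ and the self-duality of $\{s_\lambda\}$, the adjointness properties of the two halves of $S^q(u)$ force $\langle s_\lambda, H_\mu\rangle = K_{\lambda\mu}(q)$, which is the desired expansion. The main obstacle, I expect, is the first step: carefully rewriting \eqref{Jing Operator} as a normal-ordered exponential and tracking the $q$-factors, signs, and residue contributions so that they match the classical vertex-operator presentation of Jing \cite{jing1991vertex}; once that is in place, the rest is bookkeeping with the two inner products and the Cauchy kernel, essentially as in \cite[Corollary 3.4.12]{haiman2003combinatorics}.
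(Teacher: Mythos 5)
The paper gives no proof of this statement---it is a citation to Haiman---so there is no in-paper argument to compare against; evaluating your proposal on its merits, the plan is the right one but your normal-ordered formula for $S^q(u)$ is incorrect, and the error propagates. You put the factor $(1-q^r)$ in the creation exponential. For the operator defined by the residue formula \eqref{Jing Operator}, that factor belongs in the \emph{annihilation} exponential: the plethystic form is $S^q(u)f[X]=f\bigl[X-(1-q)u^{-1}\bigr]\,\Omega[uX]$ with $\Omega[X]=\prod_i(1-\mathbf a_i)^{-1}$, which gives
\begin{equation*}
S^q(u) = \exp\Bigl(\sum_{r\geq 1}\tfrac{p_r}{r}u^r\Bigr)\,\exp\Bigl(-\sum_{r\geq 1}(1-q^r)\tfrac{\partial}{\partial p_r}u^{-r}\Bigr).
\end{equation*}
A quick sanity check: taking $f=1$ in \eqref{Jing Operator} gives $S^q_m(1)=\sum_i\mathbf a_i^m/\prod_{j\neq i}(1-\mathbf a_j/\mathbf a_i)=h_m(\mathbf a)$, independent of $q$, consistent with $H_{(m)}=s_{(m)}$; your exponential instead gives the $u^m$-coefficient of $\exp\bigl(\sum_r\tfrac{1-q^r}{r}p_ru^r\bigr)$, namely $h_m[(1-q)X]\neq h_m$ for $q\neq 0$. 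What you wrote is Jing's original vertex operator, whose iterates on $1$ yield the untransformed Hall--Littlewood $Q_\mu$ rather than the transformed $H_\mu$ of \eqref{definition: transformed HL} (the two are related by $H_\mu=Q_\mu[X/(1-q)]$), so your subsequent pairing would compute the Schur expansion of the wrong polynomial.

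With the corrected operator the argument is essentially Haiman's and can be completed, but the final pairing step as written is too compressed to be a proof: ``self-duality of $\{s_\lambda\}$'' alone does not force $\langle s_\lambda,H_\mu\rangle=K_{\lambda\mu}(q)$. The steps you actually need are that $\{P_\lambda\}$ and $\{Q_\lambda\}$ are dual under the $q$-deformed Hall pairing $\langle\cdot,\cdot\rangle_q$; that $s_\lambda=\sum_\mu K_{\lambda\mu}(q)P_\mu$, hence $\langle s_\lambda,Q_\mu\rangle_q=K_{\lambda\mu}(q)$; and that $\langle f,g\rangle_q=\langle f,g[X/(1-q)]\rangle$ relates the two pairings. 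Combined with $H_\mu=Q_\mu[X/(1-q)]$ this yields $\langle s_\lambda,H_\mu\rangle=K_{\lambda\mu}(q)$; spell that chain out, since as stated the appeal to self-duality reads as circular.
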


\section{Affine Grassmannian}\label{sec:affine Gr}

In this appendix we review the basics of affine Grassmannian and the Beilinson-Drinfeld Grassmannian, following \cite{zhu2016introduction}. Then we give a geometric description of the Jing operator $S^q_m$ in terms of convolution product of coherent sheaves on affine Grassmannian for $\GL_n$ in Section \ref{subsec:Geometric description of Jing operator}.

\smallskip Let us denote $\mathscr K=\bC(\!(z)\!)$ and $\mathscr O=\bC[\![z]\!]$. For a reductive group $G$, we define the affine Grassmannian 
\begin{align}
    \Gr_G:=G(\mathscr K)/G(\mathscr O),
\end{align}
i.e. the coset of group of Laurent power series valued in $G$ modulo subgroup of formal power series valued in $G$. $G(\mathscr O)$ acts from the left on $\Gr_G$, and $G(\mathscr O)$-orbits are one-to-one correspond to dominant coweights of $G$ \cite[(2.1.2)]{zhu2016introduction}:
\begin{align}\label{Cartan decomposition}
    \Gr_G=\coprod_{\lambda\in \Lambda_G^+} \Gr^\lambda_G,\quad \text{where }\;\Gr^\lambda_G:=G(\mathscr O) z^\lambda G(\mathscr O)/G(\mathscr O).
\end{align}
Here $\Lambda_G$ is the coweight lattice of $G$ and $\Lambda_G^+$ is its dominant part. We note that $\mathbb G_m(\mathscr K)$ has a distinct point which corresponds to the completion morphism $\bC[z^\pm]\to \bC(\!(z)\!)$, and $z^\lambda$ is the image of the aforementioned distinct point in $G(\mathscr K)$ under the morphism $\lambda:\mathbb G_m\to G$.

$\Gr_G$ is endowed with an ind-scheme structure, i.e. $\Gr_G$ is the $\bC$-points of a contravariant functor (still denote it by $\Gr_G$) from the category of $\bC$-algebras to the category of sets, such that $\Gr_G$ is isomorphic to the functor $\underset{\substack{\longrightarrow\\ i\in I}}{\lim}X_i$ for some index set $I$ and schemes $X_i$ where $X_i\to X_j$ is a closed embedding for every arrow $i\to j$ in $I$. The functor $\Gr_G$ is defined by 
\begin{align}\label{Gr functor local}
    \bC\text{-Alg}\ni R\mapsto\Gr_G(R)=\left\{(\mathcal E,\phi)\:\bigg\rvert\: \substack{\mathcal E\text{ is a principal $G$-bundle on }\mathbb D_R\\ \phi:\mathcal E|_{\mathbb D^*_R}\cong \mathcal E^0|_{\mathbb D^*_R}\text{ is a trivialization}}\right\}.
\end{align}
Here $\mathbb D_R:=\Spec R[\![z]\!]$ and $\mathbb D^*_R:=\Spec R(\!(z)\!)$, and $\mathcal E^0$ is the trivial principal $G$-bundle on $\mathbb D_R$. 

Using the aforementioned ind-scheme structure, the $G(\mathscr O)$-orbit $\Gr^\lambda_G$ in \ref{Cartan decomposition} is a finite dimensional smooth subvariety and its Zariski closure $\bGr^\lambda_G$ (endowed with reduced scheme structure) is a projective variety \cite[Proposition 2.1.5]{zhu2016introduction}, which is called the (spherical) Schubert variety. Moreover, it is known that \cite[Proposition 2.1.4]{zhu2016introduction}:
\begin{align}
    \bGr^\lambda_G=\bigcup_{\mu\le \lambda}\Gr^\mu_G.
\end{align}
Here $\le$ is the Bruhat order, that is, $\mu\le \lambda$ if and only if $\lambda-\mu$ is a non-negative integral linear combinations of simple coroots.

We note that $\mathbb D_R$ can be regarded as a $\Spec R$-family of formal disks which is the germ of a smooth curve. Using a theorem of Beauville-Laszlo, the functor \eqref{Gr functor local} can be extend to a smooth curve from germ of a point. Namely, we have functorial isomorphism \cite[Theorem 1.4.2]{zhu2016introduction}
\begin{align}\label{Gr functor global}
\Gr_G(R)\cong\left\{(\mathcal E,\phi)\:\bigg\rvert\: \substack{\mathcal E\text{ is a principal $G$-bundle on }C_R\\ \phi:\mathcal E|_{C^*_R}\cong \mathcal E^0|_{C^*_R}\text{ is a trivialization}}\right\},
\end{align}
where $C$ is a smooth algebraic curve with a distinct pint $x\in C$ and $C^*:=C\setminus x$.

If we generalize \eqref{Gr functor global} to allow multiple points and furthermore allow them to vary and collide, we arrive at the following.
\begin{definition}
Let $C$ be a smooth curve and $I$ be a finite index set, then the Beilinson-Drinfeld Grassmannian $\Gr_{G,C^I}$ is defined to be the following contravariant functor
\begin{align}\label{BD Gr functor}
    \bC\text{-Alg}\ni R\mapsto\Gr_{G,C^I}(R)=\left\{(\mathcal E,D,\phi)\:\bigg\rvert\: \substack{D\in C^I(R),\;\mathcal E\text{ is a principal $G$-bundle on }C_R\\ \phi:\:\mathcal E|_{C_R\setminus D}\cong \mathcal E^0|_{C_R\setminus D}\text{ is a trivialization}}\right\}.
\end{align}
Here we identify $D\in C^I(R)$ with an $I$-colored divisor on $C_R$. We define the symmetrized Beilinson-Drinfeld Grassmannian $\Gr_{G,C^{(I)}}$ to be the functor \eqref{BD Gr functor} with $D\in C^I(R)$ replaced by $D\in \mathrm{Sym}^{|I|}C(R)$, that is, we identify $D$ with an divisor of degree $|I|$ on $C_R$.

We note that there are natural projections:
\begin{align}
    \Gr_{G,C^I}\to C^I,\text{ and }\Gr_{G,C^{(I)}}\to C^{(I)}=\mathrm{Sym}^{|I|}C,
\end{align}
which map $(\mathcal E,D,\phi)$ to $D$. We call them the structure maps for (symmetrized) Beilinson-Drinfeld Grassmannian.
\end{definition}

It is known that $\Gr_{G,C^I}$ and $\Gr_{G,C^{(I)}}$ are ind-schemes \cite[Theorem 3.1.3]{zhu2016introduction}. 

\smallskip $\Gr_{G,C^I}$ satisfies the following factorization property \cite[Theorem 3.2.1(iii)]{zhu2016introduction}. Given $f:J\twoheadrightarrow I$, let $J=\coprod_{i\in I}J_i$ denote the partition of $J$, and $C^{J/I}\subset C^J$ denote the open subset of those $\{x_j,j\in J\}$ such that $x_j\cap x_{j'}=\emptyset$ if $f(j)\neq f(j')$. Then there is a canonical isomorphism
\begin{align}\label{factorization}
    \Gr_{G,C^J}\times_{C^J} C^{J/I}\cong \left(\prod_{i\in I} \Gr_{G,C^{J_i}}\right)\times_{C^J} C^{J/I}.
\end{align}
In particular, for the identity map $J\to J$ we have $C^{J/J}=C^J\setminus \Delta$ where $\Delta$ is the divisor of those $\{x_j,j\in J\}$ such that $x_j=x_{j'}$ for some $j\neq j'$, and \eqref{factorization} reads:
\begin{align}\label{factorization_main diagonal}
    \Gr_{G,C^J}\times_{C^J} (C^J\setminus \Delta)\cong \left(\Gr_{G,C}\right)^J\times_{C^J} (C^J\setminus \Delta).
\end{align}
\eqref{factorization_main diagonal} also holds for symmtrized version:
\begin{align}\label{factorization_main diagonal_sym}
    \Gr_{G,C^{(J)}}\times_{C^{(J)}} (C^{(J)}\setminus \Delta)\cong \left(\Gr_{G,C}\right)^{(J)}\times_{C^{(J)}} (C^{(J)}\setminus \Delta),
\end{align}
where we still use $\Delta$ to denote its image in the symmetrization $C^{(J)}$. In the case of $C=\bA^1$, the coordinate $z$ on $\bA^1$ induces a coordinate for the formal disk $\mathbb D_x$ for every $x\in \bA^1$, so we have $\Gr_{G,\bA^1}\cong \Gr_G\times\bA^1$. Therefore we have
\begin{align}
\begin{split}
    \left(\Gr_{G,\bA^1}\right)^J\times_{\bA^J} (\bA^J\setminus \Delta)&\cong \left(\Gr_{G}\right)^J\times(\bA^J\setminus \Delta),\\
   \left(\Gr_{G,\bA^1}\right)^{(J)}\times_{\bA^{(J)}} (\bA^{(J)}\setminus \Delta)&\cong \left(\Gr_{G}\right)^J\times(\bA^{(J)}\setminus \Delta).
\end{split}
\end{align}

\begin{definition}\label{BD Schubert}
For a $I$-tuple of dominant coweights $\{\lambda_i,i\in I\}$ We define the Beilinson-Drinfeld Schubert variety $\bGr^{\{\lambda_i,i\in I\}}_{G,\bA^I}$ to be the Zariski closure of $\left(\prod_{i\in I}\bGr^{\lambda_i}_{G}\right)\times(\bA^I\setminus \Delta)$ in $\Gr_{G,\bA^I}$, endowed with reduced scheme structure. The symmetrized version $\bGr^{\{\lambda_i,i\in I\}}_{G,\bA^{(I)}}$ is defined to be the Zariski closure of $\left(\prod_{i\in I}\bGr^{\lambda_i}_{G}\right)\times(\bA^{(I)}\setminus \Delta)$ in $\Gr_{G,\bA^{(I)}}$, endowed with reduced scheme structure.
\end{definition}

\begin{remark}
For a general smooth curve $C$, one can define the Beilinson-Drinfeld Schubert varieties $\bGr^{\{\lambda_i,i\in I\}}_{G,\bA^I}$ and its symmetrized version $\bGr^{\{\lambda_i,i\in I\}}_{G,\bA^{(I)}}$ using a twist of the direct product $\left(\prod_{i\in I}\bGr^{\lambda_i}_{G}\right)\times(\bA^I\setminus \Delta)$ by a principal $\Aut(\mathbb D)^I$-bundle, see \cite[(3.1.10)]{zhu2016introduction}.
\end{remark}

\subsection{The affine Grassmannian for \texorpdfstring{$\GL_n$}{GL(n)}}\label{subsec:Gr fro GL(n)}

For $G=\GL_n$, a principal $\GL_n$-bundle is equivalent to a rank $n$ vector bundle, so we can rewrite \eqref{Gr functor local} and \eqref{BD Gr functor} as follows:
\begin{gather}
\Gr_{\GL_n}(R)=\left\{\mathcal E\:\bigg\rvert\: \substack{\mathcal E\subset j_* \mathcal O^{\oplus n}_{\mathbb D^*_R} \text{ is a rank $n$  } \\ \text{locally free sub-sheaf}}\right\},\; j:\mathbb D^*_R\hookrightarrow \mathbb D_R\text{ is the embedding},\\
\Gr_{\GL_n,C^I}(R)=\left\{(\mathcal E,D)\:\bigg\rvert\: \substack{D\in C^I(R),\; \mathcal E\subset j^D_*\mathcal O^{\oplus n}_{C_R\setminus D}\text{ is a }\\ \text{rank $n$ locally free sub-sheaf }}\right\},\; j^D:C_R\setminus D\hookrightarrow C_R\text{ is the embedding}.
\end{gather}
In particular, a $\bC$-point in $\Gr_{\GL_n}$ represents a lattice $\Lambda\subset \mathscr K^{\oplus n}$, that is, $\Lambda$ is required to be a rank $n$ free module over $\mathscr O$. 

Using Gauss elimination algorithm, one can show that there exists $g\in \GL_n(\mathscr O)$ such that $g\cdot\Lambda=\bigoplus_{i=1}^n z^{\lambda_i}\mathscr O$ where $\lambda=(\lambda_1\ge \cdots\ge \lambda_n)\in \bZ^n$ is a dominant coweight of $\GL_n$. $\lambda$ is uniquely determined by $\Lambda$, and this is exactly the decomposition \eqref{Cartan decomposition} in the case when $G=\GL_n$.

\begin{example}
$\Gr^{\omega_1}_{\GL_n}$ is defined to be the $\GL_n(\mathscr O)$-orbit through $z^{\omega_1}$ which represents the lattice $z\mathscr O\oplus\mathscr O^{\oplus(n-1)}$. Note that $(z\mathscr O)^{\oplus n}\subset z\mathscr O\oplus\mathscr O^{\oplus(n-1)}\subset \mathscr O^{\oplus n}$, and the $\GL_n(\mathscr O)$-action preserves these inclusions, that is, $(z\mathscr O)^{\oplus n}\subset \Lambda\subset \mathscr O^{\oplus n}$ for every $\Lambda\in \Gr^{\omega_1}_{\GL_n}$. Therefore $\Lambda$ is uniquely determined by $n-1$ dimensional linear subspace of $(\mathscr O/z\mathscr O)^{\oplus n}=\bC^n$, whence $\Gr^{\omega_1}_{\GL_n}\cong \bP^{n-1}$. In particular, we see that $\Gr^{\omega_1}_{\GL_n}$ is proper, thus $\Gr^{\omega_1}_{\GL_n}=\bGr^{\omega_1}_{\GL_n}$.
\end{example}

\begin{example}
The previous example shows that $\Gr^{\omega_1}_{\GL_n}$ is the moduli space of $\mathcal E\subset \mathscr O^{\oplus n}$ such that $\dim \mathscr O^{\oplus n}/\mathcal E=1$. This can be generalized to $\bGr^{N\omega_1}_{\GL_n}$: it is the moduli space of $\mathcal E\subset \mathscr O^{\oplus n}$ such that $\dim \mathscr O^{\oplus n}/\mathcal E=N$ \cite[Example 2.1.8]{zhu2016introduction}.
\end{example}

A connected component of $\Gr_{\GL_n}$ is indexed by an integer $N\in \bZ$, and is denoted by $\Gr^{(N)}_{\GL_n}$. It is known that a lattice $\Lambda\subset\mathscr K^{\oplus n}$ is in $\Gr^{(N)}_{\GL_n}$ if and only if \cite[4.1]{mirkovic2007quiver}
\begin{align*}
    \dim (\mathscr O^{\oplus n}/z^{M}\Lambda)=N+Mn,\quad\forall M\gg 0.
\end{align*}
In terms of $\GL_n(\mathscr O)$-orbit, we have
\begin{align}
    \Gr^{(N)}_{\GL_n}=\bigcup_{\substack{\lambda=(\lambda_1\ge \cdots\ge\lambda_n)\in \bZ^n \\ \sum_{i=1}^n \lambda_i=N}}\Gr^{\lambda}_{\GL_n}.
\end{align}

\subsection{Geometric description of Jing operator}\label{subsec:Geometric description of Jing operator}
In this section we give a geometric description of Jing operators $S^q_m$, and outine a proof of \cite[Corollary B.3]{moosavian2021towards}.

Consider the affine Grassmannian $\Gr_{\GL_n}=\GL_n(\mathscr K)/\GL_n(\mathscr O)$, and let $\omega_1=(1,0,\cdots,0)$ be the first fundamental coweight of $\GL_n$, then the $\GL_n(\mathscr O)$-orbit $\Gr^{\omega_1}$ is isomorphic to $\mathbb P^{n-1}$ and it is fixed by the $\mathbb C^{\times}_q$-rotation.

There is a convolution product on affine Grassmannian:
\begin{align}\label{eqn: convolution}
    m:\Gr_{\GL_n}\widetilde{\times} \Gr_{\GL_n}=\GL_n(\mathscr K)\overset{\GL_n(\mathscr O)}{\times}\GL_n(\mathscr K)/\GL_n(\mathscr O)\to \GL_n(\mathscr K)/\GL_n(\mathscr O).
\end{align}
Here the map sends $(g_1,g_2)$ to $g_1g_2$.

Consider $D^b_{\GL_n(\mathscr O)\rtimes \mathbb C^{\times}_q}(\Gr_{\GL_n})$, the $\GL_n(\mathscr O)\rtimes \mathbb C^{\times}_q$-equivariant bounded derive category of coherent sheaves on $\Gr_{\GL_n}$. Here coherent sheaves on the ind-scheme $\Gr_{\GL_n}$ are defined to have finite type support. The convolution map of $\Gr_{\GL_n}$ induces a functor 
\begin{gather*}
    \star : D^b_{\GL_n(\mathscr O)\rtimes \mathbb C^{\times}_q}(\Gr_{\GL_n})\times D^b_{\GL_n(\mathscr O)\rtimes \mathbb C^{\times}_q}(\Gr_{\GL_n})\to D^b_{\GL_n(\mathscr O)\rtimes \mathbb C^{\times}_q}(\Gr_{\GL_n}),\\
    \mathcal F\star \mathcal G=Rm_*(\mathcal F\widetilde{\boxtimes} \mathcal G).
\end{gather*}
Passing to the $K$-theory, we obtain an map 
\begin{align}
    \star: K_{\GL_n(\mathscr O)\rtimes \mathbb C^{\times}_q}(\Gr_{\GL_n})\otimes K_{\GL_n(\mathscr O)\rtimes \mathbb C^{\times}_q}(\Gr_{\GL_n})\longrightarrow K_{\GL_n(\mathscr O)\rtimes \mathbb C^{\times}_q}(\Gr_{\GL_n}).
\end{align}
It is known that the $\star$-product on $K_{\GL_n(\mathscr O)\rtimes \mathbb C^{\times}_q}(\Gr_{\GL_n})$ is associative, and the pushforward of the structure sheaf of origin $1\in \Gr_{\GL_n}$ is an identity element for $\star$.

The determinant line bundle $\mathcal O(1)$ on $\Gr_{\GL_n}$ \cite[1.5]{zhu2016introduction} is $\GL_n(\mathscr O)\rtimes \mathbb C^{\times}_q$-equivariant. Let us use $\mathcal O_{\Gr^{\omega_1}}(m)$ to denote $i_*i^*\mathcal O(1)^{\otimes m}$ where $i:\Gr^{\omega_1}\hookrightarrow \Gr_{\GL_n}$ is the natural embedding. Since $i$ is $\GL_n(\mathscr O)\rtimes \mathbb C^{\times}_q$-equivariant, $\mathcal O_{\Gr^{\omega_1}}(m)$ is also $\GL_n(\mathscr O)\rtimes \mathbb C^{\times}_q$-equivariant.

For an object $\mathcal F\in D^b_{\GL_n(\mathscr O)\rtimes \mathbb C^{\times}_q}(\Gr_{\GL_n})$, we denote its $\GL_n\times \mathbb C^{\times}_q$-equivariant Euler characteristic by $\chi_{q,\mathbf a}(\mathcal F)\in K_{\GL_n\times \mathbb C^{\times}_q}(\mathrm{pt})=\mathbb Z[\mathbf a_1^\pm,\cdots,\mathbf a_n^\pm]^{S_n}[q^{\pm}]$. The following result states that the operator $\mathcal O_{\Gr^{\omega_1}}(m)\star (-)$ is a geometrization of the Jing operator $S^q_m$.
\begin{proposition}[{see \cite[Corollary B.1]{moosavian2021towards}}]\label{prop:geometric Jing operator}
Let $\mathcal F\in D^b_{\GL_n(\mathscr O)\rtimes \mathbb C^{\times}_q}(\Gr_{\GL_n})$, then 
\begin{align}\label{geometric Jing operator}
    \chi_{q,\mathbf a}(\mathcal O_{\Gr^{\omega_1}}(m)\star \mathcal F)=S^q_m\left(\chi_{q,\mathbf a}(\mathcal F)\right).
\end{align}
\end{proposition}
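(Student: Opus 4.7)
The plan is to prove the identity by equivariant Atiyah--Bott localization along the projection
$$\pi:\Gr^{\omega_1}\widetilde{\times}\Gr_{\GL_n}\longrightarrow \Gr^{\omega_1}.$$
First I would unfold the convolution: by the definition of $\star$ and the projection formula, $\chi_{q,\mathbf{a}}(\mathcal{O}_{\Gr^{\omega_1}}(m)\star \mathcal{F})$ equals the equivariant Euler characteristic of $\mathcal{O}_{\Gr^{\omega_1}}(m)\,\widetilde{\boxtimes}\,\mathcal{F}$ on the twisted product. Since $\Gr^{\omega_1}\cong \mathbb{P}^{n-1}$ is smooth and proper and $\mathcal{F}$ is supported on a finite-type closed sub-scheme of $\Gr_{\GL_n}$, the pushforward by $\pi$ lands in a well-defined bounded complex with finite-type support, so $T\times \mathbb{C}^{\times}_q$-equivariant localization on the base is applicable.

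Next I would enumerate the $T\times\mathbb{C}^{\times}_q$-fixed points of $\Gr^{\omega_1}$. These are the $n$ points $\{z^{\omega_i}\}_{i=1}^n$, corresponding to the lattices $\Lambda_i=z\mathscr{O}\cdot e_i\oplus \bigoplus_{j\neq i}\mathscr{O}\cdot e_j$. At $z^{\omega_i}$, the fiber of $\mathcal{O}(1)$ carries $T\times\mathbb{C}^{\times}_q$-weight $\mathbf{a}_i$, and the tangent space of $\Gr^{\omega_1}$ carries weights $\{\mathbf{a}_j\mathbf{a}_i^{-1}\}_{j\neq i}$. The contribution of the $i$-th fixed point to $\chi_{q,\mathbf{a}}(\mathcal{O}_{\Gr^{\omega_1}}(m)\star\mathcal{F})$ via localization is therefore
$$\frac{\mathbf{a}_i^m}{\prod_{j\neq i}(1-\mathbf{a}_j/\mathbf{a}_i)}\cdot\chi_{q,\mathbf{a}}\bigl(\pi^{-1}(z^{\omega_i}),\,\mathcal{O}_{\Gr^{\omega_1}}(m)\,\widetilde{\boxtimes}\,\mathcal{F}\bigr|_{\pi^{-1}(z^{\omega_i})}\bigr).$$

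Now comes the key identification. The twisted product realizes $\pi^{-1}(z^{\omega_i})$ as a copy of $\Gr_{\GL_n}$ whose inclusion into $\Gr^{\omega_1}\widetilde{\times}\Gr_{\GL_n}$ followed by the convolution map $m$ is left-translation by $z^{\omega_i}$. Under this identification, the pullback of $\mathcal{O}_{\Gr^{\omega_1}}(m)\,\widetilde{\boxtimes}\,\mathcal{F}$ to the fiber is the sheaf $z^{\omega_i}\cdot\mathcal{F}$ twisted by the constant line of weight $\mathbf{a}_i^m$, which has already been absorbed above. The left-translation $z^{\omega_i}$ intertwines the standard $T$-action on $\Gr_{\GL_n}$ with itself, but twists the $\mathbb{C}^{\times}_q$-action by the cocharacter $\omega_i:\mathbb{C}^{\times}_q\to T$; equivalently, at the level of equivariant characters this is the substitution $\mathbf{a}_i\mapsto q\mathbf{a}_i$. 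Hence the fiber contribution is $\chi_{q,(\mathbf{a}_1,\ldots,q\mathbf{a}_i,\ldots,\mathbf{a}_n)}(\mathcal{F})$, and summing over $i$ reproduces exactly the formula \eqref{Jing Operator} applied to $\chi_{q,\mathbf{a}}(\mathcal{F})$.

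The main obstacle is the third paragraph: justifying the identity $\chi_{q,\mathbf{a}}(\pi^{-1}(z^{\omega_i}),\mathcal{F}|_{\pi^{-1}(z^{\omega_i})})=\chi_{q,(\mathbf{a}_1,\ldots,q\mathbf{a}_i,\ldots,\mathbf{a}_n)}(\mathcal{F})$ on the level of the ind-scheme $\Gr_{\GL_n}$ requires a careful comparison of two $T\times\mathbb{C}^{\times}_q$-equivariant structures differing by the cocharacter $\omega_i$. One way to make this rigorous is to filter $\mathcal{F}$ by $\GL_n(\mathscr{O})\rtimes\mathbb{C}^{\times}_q$-equivariant subsheaves supported on finite-dimensional Schubert varieties $\bGr^\lambda_{\GL_n}$, compute the characters there by ordinary torus localization (a finite sum over fixed points $z^\mu$, $\mu\le\lambda$), and observe that left-translation by $z^{\omega_i}$ simply shifts $z^\mu\mapsto z^{\mu+\omega_i}$, whose $T\times\mathbb{C}^{\times}_q$-weight differs from $z^\mu$ exactly by the substitution $\mathbf{a}_i\mapsto q\mathbf{a}_i$. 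Linearity and exactness of $\chi_{q,\mathbf{a}}$ then extend the identity to all of $\mathcal{F}$.
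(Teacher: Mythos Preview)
Your proposal is correct and follows precisely the approach the paper indicates: the paper only says ``the idea of proof is to identify the right-hand-side of \eqref{Jing Operator} as the Atiyah-Bott localization formula for $\Gr^{\omega_1}$,'' and you have carried this out in full, including the crucial identification of the fiber contribution as the substitution $\mathbf a_i\mapsto q\mathbf a_i$ coming from left translation by $z^{\omega_i}$. One small point: the tangent weights of $\Gr^{\omega_1}\cong\mathbb P^{n-1}$ at the fixed point $\Lambda_i$ are $\mathbf a_i/\mathbf a_j$ (not $\mathbf a_j/\mathbf a_i$), since the tangent space is $\Hom(\bigoplus_{j\neq i}\bC e_j,\bC\bar e_i)$; your localization denominator $\prod_{j\neq i}(1-\mathbf a_j/\mathbf a_i)$ is nonetheless correct because the $K$-theoretic Euler class uses $\bigwedge^{\bullet}N_p^*$, which contributes $\prod(1-w^{-1})$.
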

The idea of proof of Proposition \ref{prop:geometric Jing operator} is to identify the right-hand-side of \eqref{Jing Operator} as the Atiyah-Bott localization formula for $\Gr^{\omega_1}$.

Combine \eqref{geometric Jing operator} with the definition of $H_{\mu}$ in terms of iterative action of $S_{\mu_i}$ \eqref{definition: transformed HL}, and we obtain the following.
\begin{corollary}[{see \cite[Corollary B.2]{moosavian2021towards}}]\label{corollary: character of convolution}
Let $\mu=(\mu_1\ge\cdots \ge\mu_l)\in \bN^l$ be a Young diagram, then
\begin{align}
    H_{\mu}(\mathbf a;q)=\chi_{q,\mathbf a}(\mathcal O_{\Gr^{\omega_1}}(\mu_1)\star\cdots \star\mathcal O_{\Gr^{\omega_1}}(\mu_l)).
\end{align}
\end{corollary}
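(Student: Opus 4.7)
The plan is to prove the identity by induction on $l$, using Proposition \ref{prop:geometric Jing operator} (which is \cite[Corollary B.1]{moosavian2021towards}) as the single workhorse and the recursive definition \eqref{definition: transformed HL} of $H_\mu$ as the target. The key observation is that Proposition \ref{prop:geometric Jing operator} translates convolution on the left with $\mathcal O_{\Gr^{\omega_1}}(m)$ on the geometric side into the action of the Jing operator $S^q_m$ on the character side, and iterating this correspondence produces exactly the string of Jing operators that defines $H_\mu$.

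First I would record two preliminary facts. (i) The convolution product $\star$ on $D^b_{\GL_n(\mathscr O)\rtimes \mathbb C^{\times}_q}(\Gr_{\GL_n})$ is associative and has identity given by the skyscraper sheaf $\mathcal O_{\{1\}}$ at the origin $1\in \Gr_{\GL_n}$, whose equivariant Euler characteristic is $\chi_{q,\mathbf a}(\mathcal O_{\{1\}})=1$. (ii) For each $i$, the convolution $\mathcal F_i := \mathcal O_{\Gr^{\omega_1}}(\mu_i)\star\cdots \star \mathcal O_{\Gr^{\omega_1}}(\mu_l)$ lies in $D^b_{\GL_n(\mathscr O)\rtimes \mathbb C^{\times}_q}(\Gr_{\GL_n})$, because $\Gr^{\omega_1}\cong \mathbb P^{n-1}$ is proper and its iterated convolution is supported on a finite-type sub-ind-scheme (in fact on $\bGr^{|\mu|\omega_1}_{\GL_n}$), so the derived pushforward along the convolution map $m$ stays in the bounded equivariant derived category.

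With these in hand the induction is immediate. For the base case set $\mathcal F_{l+1}:=\mathcal O_{\{1\}}$, so $\chi_{q,\mathbf a}(\mathcal F_{l+1})=1$. Assume inductively that $\chi_{q,\mathbf a}(\mathcal F_{i+1}) = S^q_{\mu_{i+1}}\cdots S^q_{\mu_l}(1)$. Applying Proposition \ref{prop:geometric Jing operator} to $\mathcal F = \mathcal F_{i+1}$ gives
\begin{equation*}
\chi_{q,\mathbf a}(\mathcal F_i) \;=\; \chi_{q,\mathbf a}\bigl(\mathcal O_{\Gr^{\omega_1}}(\mu_i)\star \mathcal F_{i+1}\bigr) \;=\; S^q_{\mu_i}\bigl(\chi_{q,\mathbf a}(\mathcal F_{i+1})\bigr) \;=\; S^q_{\mu_i} S^q_{\mu_{i+1}}\cdots S^q_{\mu_l}(1).
\end{equation*}
Taking $i=1$ and comparing with \eqref{definition: transformed HL} yields $\chi_{q,\mathbf a}(\mathcal F_1)=H_\mu(\mathbf a;q)$, which is the claimed identity.

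There is really no substantial obstacle: the corollary is a direct consequence of Proposition \ref{prop:geometric Jing operator} together with the recursive definition of $H_\mu$. The only point that needs some care is the bookkeeping for the boundary of the induction, namely that one must write the single-factor case $\mathcal O_{\Gr^{\omega_1}}(\mu_l)$ as $\mathcal O_{\Gr^{\omega_1}}(\mu_l)\star \mathcal O_{\{1\}}$ in order to apply Proposition \ref{prop:geometric Jing operator}, and confirm that $\chi_{q,\mathbf a}(\mathcal O_{\{1\}})=1$. The associativity of $\star$ guarantees that the iterated convolution is well-defined without parenthesization, so the same result is obtained no matter in which order one contracts the factors before taking $\chi_{q,\mathbf a}$.
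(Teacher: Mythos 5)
Your proof is correct and takes essentially the same route as the paper: the paper simply states that the corollary follows by combining Proposition~\ref{prop:geometric Jing operator} with the recursive definition~\eqref{definition: transformed HL} of $H_\mu$, and your induction is just the explicit spelling-out of that one-line argument.
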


Finally, the following result is deduced from Corollary \ref{corollary: character of convolution} by taking the spacial case $\mu_1=\cdots=\mu_N=k$.

\begin{corollary}[{see \cite[Corollary B.3]{moosavian2021towards}}]\label{corollary: character of O(k)}
Let $\overline{\Gr}^{N\omega_1}$ be the closure of the $\GL_n(\mathscr O)$-orbit through $z^{N\omega_1}$, then 
\begin{align}
    \chi_{q,\mathbf a}(\overline{\Gr}^{N\omega_1},\mathcal O(k))=H_{(k^N)}(\mathbf a;q).
\end{align}
Here $(k^N)$ is the partition consisting of $N$ copies of $k$.
\end{corollary}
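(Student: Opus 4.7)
The plan is to deduce this statement from Corollary \ref{corollary: character of convolution} by choosing the partition $\mu = (k^N)$, after which the real content is an identification of the $N$-fold convolution product $\mathcal O_{\Gr^{\omega_1}}(k)^{\star N}$ with $R m_*\mathcal{O}(k)$ for a birational resolution $m$ of $\overline{\Gr}^{N\omega_1}$. Specializing $\mu = (k^N)$ in Corollary \ref{corollary: character of convolution} immediately gives
\begin{equation*}
H_{(k^N)}(\mathbf a;q) = \chi_{q,\mathbf a}\bigl(\underbrace{\mathcal O_{\Gr^{\omega_1}}(k)\star\cdots\star\mathcal O_{\Gr^{\omega_1}}(k)}_{N\text{ factors}}\bigr),
\end{equation*}
so the task reduces to showing
\begin{equation*}
\mathcal O_{\Gr^{\omega_1}}(k)^{\star N} \;\cong\; \mathcal O(k)|_{\overline{\Gr}^{N\omega_1}} \quad \text{in } D^b_{\GL_n(\mathscr O)\rtimes \mathbb C^{\times}_q}(\Gr_{\GL_n}),
\end{equation*}
after which taking equivariant Euler characteristics yields the desired equality.

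First, I would unfold the definition of the convolution product to express the $N$-fold product as a pushforward. Writing $\widetilde{\Gr}^{N\omega_1} := \Gr^{\omega_1}\widetilde\times\cdots\widetilde\times \Gr^{\omega_1}$ for the $N$-step convolution variety (whose $\bC$-points parametrize chains of lattices $\mathscr O^{\oplus n}\supset\Lambda_1\supset\cdots\supset\Lambda_N$ with each quotient one-dimensional) and $m_N:\widetilde{\Gr}^{N\omega_1}\to \Gr_{\GL_n}$, $(\Lambda_\bullet)\mapsto \Lambda_N$, for the iterated convolution map, a standard unraveling of $\widetilde\boxtimes$ gives
\begin{equation*}
\mathcal O_{\Gr^{\omega_1}}(k)^{\star N} \;=\; R m_{N*}\bigl(\mathcal L_k\bigr),
\end{equation*}
where $\mathcal L_k$ is the line bundle on $\widetilde{\Gr}^{N\omega_1}$ whose fiber at $(\Lambda_\bullet)$ is $\bigotimes_{i=1}^N\det(\Lambda_{i-1}/\Lambda_i)^{\otimes k}$. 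Next, I would identify $\mathcal L_k$ with $m_N^*\mathcal O(k)$: on the one hand $\mathcal O(1)$ on $\Gr_{\GL_n}$ has fiber $\det(\mathscr O^{\oplus n}/\Lambda_N)^{-1}$, and on the other hand $\det(\mathscr O^{\oplus n}/\Lambda_N) = \bigotimes_{i=1}^N\det(\Lambda_{i-1}/\Lambda_i)$ because each successive quotient is one-dimensional, giving the claim up to a character twist which is absorbed by normalizing the trivialization.

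Finally, $m_N$ is a well-known proper birational resolution of $\overline{\Gr}^{N\omega_1}$ (the analog of a Bott--Samelson resolution for the spherical Schubert variety associated to the minuscule coweight $\omega_1$; birationality comes from the open stratum where the $\Lambda_i$ are forced to appear in a unique filtration, reflecting the factorization isomorphism \eqref{factorization_main diagonal}). Since $\overline{\Gr}^{N\omega_1}$ has rational singularities (indeed it is globally Frobenius split with an $F$-rational structure, as shown in \cite[Ch.~XVIII]{mathieu1988formules}), we get $Rm_{N*}\mathcal O_{\widetilde{\Gr}^{N\omega_1}} = \mathcal O_{\overline{\Gr}^{N\omega_1}}$, and the projection formula then yields
\begin{equation*}
\mathcal O_{\Gr^{\omega_1}}(k)^{\star N} \;=\; Rm_{N*}(m_N^*\mathcal O(k)) \;=\; \mathcal O(k)\otimes Rm_{N*}\mathcal O_{\widetilde{\Gr}^{N\omega_1}} \;=\; \mathcal O(k)|_{\overline{\Gr}^{N\omega_1}},
\end{equation*}
concluding the proof. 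The main obstacle is the last step: verifying the rational-singularities/birationality input for the convolution map $m_N$ and the careful bookkeeping of equivariant structures on $\mathcal L_k$ so that its identification with $m_N^*\mathcal O(k)$ respects the $\GL_n(\mathscr O)\rtimes \bC^\times_q$-equivariance. Both are well documented in the affine Grassmannian literature, but they are the only substantive inputs beyond Corollary \ref{corollary: character of convolution}.
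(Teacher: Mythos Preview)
Your argument is correct and matches the paper's approach essentially line for line: the paper's sketch invokes exactly the two inputs you isolate, namely the isomorphism $m^*\mathcal O(1)\cong\mathcal O(1)\widetilde\boxtimes\cdots\widetilde\boxtimes\mathcal O(1)$ (your identification $\mathcal L_k\cong m_N^*\mathcal O(k)$) and the rational singularities statement $Rm_{N*}\mathcal O\cong\mathcal O$, after which the projection formula gives $\mathcal O_{\Gr^{\omega_1}}(k)^{\star N}\cong\mathcal O(k)|_{\overline{\Gr}^{N\omega_1}}$. One small bookkeeping point: in the paper's convention the fiber of $\mathcal O(1)$ at $\Lambda_N$ is $\det(\mathscr O^{\oplus n}/\Lambda_N)$ rather than its inverse, but this does not affect the argument.
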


The key idea of the proof of Corollary \ref{corollary: character of O(k)} is to use the isomorphism $m^*\mathcal O(1)\cong\mathcal O(1)\widetilde\boxtimes\cdots \widetilde\boxtimes \mathcal O(1)$ and rationality of the singularities of $\overline{\Gr}^{N\omega_1}$, i.e. $Rm_*\mathcal O\cong \mathcal O$. See \cite[Corollary B.3]{moosavian2021towards} for the details.

\section{Identities}\label{sec:Identities}
Let $f(x,y),g(x,y)$ be two polynomials of non-commutative variables $(x,y)$, then we have
\begin{multline}\label{eq:[X,Y] inside}
    A^a f(X,Y)[X,Y]g(X,Y)B_b\\
    =A^a f(X,Y)B_cA^cg(X,Y)B_b-(k+n)A^a f(X,Y)g(X,Y)B_b,
\end{multline}
\begin{multline}\label{eq:[X,Y] inside_trace}
    \mathrm{Tr}( f(X,Y)[X,Y]g(X,Y))\\
    =\mathrm{Tr}( f(X,Y)B_cA^cg(X,Y))-(k+n)\mathrm{Tr}(  f(X,Y)g(X,Y))-\mathrm{Tr}(f(X,Y))\mathrm{Tr}(g(X,Y)).
\end{multline}
In particular, we have
\begin{align}\label{eq:X^r[X,Y]X^s}
    \mathrm{Tr}( X^r[X,Y]X^s)
    =-\mathrm{Tr}(X^r)\mathrm{Tr}(X^s).
\end{align}

\begin{proof}
For a family of operator $\{\mathcal O^i\}_{1\le i\le N}$ transforming in the $\gl_N$ vector representation, we have \eqref{eq:moment map vector rep op}:
\begin{align*}
    [X,Y]^i_j\mathcal O^j=(BA)^i_j\mathcal O^j-(n+k)\mathcal{O}^i.
\end{align*}
\eqref{eq:[X,Y] inside} is obtained by applying the above equation to $\mathcal O=g(X,Y)B_b$.

For a family of operator $\{\mathcal A^i_j\}_{1\le i,j\le N}$ transforming in the $\gl_N$ adjoint representation, we have:
\begin{align*}
    [\mu^*_{\bC}(E^i_j),\mathcal A^j_l]= N\mathcal A^i_l-\delta^i_l\mathcal A^j_j.
\end{align*}
Using the identity $\mu^*_{\bC}(E^i_j)=(XY)^i_j-(Y X)^i_j-(BA)^i_j+(N+n)\delta^i_j$, we obtain
\begin{align*}
    [X,Y]^i_j\mathcal A^j_l=(BA)^i_j\mathcal A^j_l-(n+k)\mathcal{A}^i_l-\delta^i_l\mathcal A^j_j.
\end{align*}
\eqref{eq:[X,Y] inside_trace} is obtained by applying the above equation to $\mathcal A=g(X,Y)$.
\end{proof}

\bibliographystyle{unsrt}
\bibliography{Bib}

\begin{thebibliography}{10}

\bibitem{Susskind}
Leonard Susskind.
\newblock {The quantum Hall fluid and non-commutative Chern-Simons theory}.
\newblock {\em preprint arXiv:hep-th/0101029}, 2001.

\bibitem{Poly}
Alexios~P Polychronakos.
\newblock {Quantum Hall states as matrix Chern-Simons theory}.
\newblock {\em Journal of High Energy Physics}, 2001(04):011, 2001.

\bibitem{polychronakos2006physics}
Alexios~P Polychronakos.
\newblock {The physics and mathematics of Calogero particles}.
\newblock {\em Journal of Physics. A, Mathematical and General},
  39(41):12793--12845, 2006.

\bibitem{Hellerman-Raamsdonk}
Simeon Hellerman and Mark Van~Raamsdonk.
\newblock {Quantum Hall physics = noncommutative field theory}.
\newblock {\em Journal of High Energy Physics}, 2001(10):039, 2001.

\bibitem{Karabali-Sakita}
Dimitra Karabali and Bunji Sakita.
\newblock {Chern-Simons matrix model: coherent states and relation to Laughlin
  wavefunctions}.
\newblock {\em hep-th/0106016}, 2001.

\bibitem{Dorey-Tong-Turner}
Nick Dorey, David Tong, and Carl Turner.
\newblock {Matrix model for non-Abelian quantum Hall states}.
\newblock {\em Physical Review B}, 94(8):085114, 2016.

\bibitem{morariu2001finite}
Bogdan Morariu and Alexios~P Polychronakos.
\newblock {Finite noncommutative Chern-Simons with a Wilson line and the
  quantum Hall effect}.
\newblock {\em Journal of High Energy Physics}, 2001(07):006, 2001.

\bibitem{bourgine2024calogero}
Jean-Emile Bourgine and Yutaka Matsuo.
\newblock {A Calogero model for the non-Abelian quantum Hall effect}.
\newblock {\em preprint arXiv:2401.03087}, 2024.

\bibitem{Gaiotto-Rapcek-Zhou}
Davide Gaiotto, Miroslav Rapčák, and Yehao Zhou.
\newblock {Deformed Double Current Algebras, Matrix Extended $\mathcal
  W_{\infty}$ Algebras, Coproducts, and Intertwiners from the M2-M5
  Intersection}.
\newblock {\em preprint arXiv:2309.16929}, 2023.

\bibitem{gibbons1984generalisation}
John Gibbons and Theo Hermsen.
\newblock {A generalisation of the Calogero-Moser system}.
\newblock {\em Physica D Nonlinear Phenomena}, 11(3):337--348, 1984.

\bibitem{bielawski2011symplectic}
Roger Bielawski and Victor Pidstrygach.
\newblock On the symplectic structure of instanton moduli spaces.
\newblock {\em Advances in Mathematics}, 226(3):2796--2824, 2011.

\bibitem{tacchella2015family}
Alberto Tacchella.
\newblock {On a family of quivers related to the Gibbons--Hermsen system}.
\newblock {\em Journal of Geometry and Physics}, 93:11--32, 2015.

\bibitem{chalykh2017kp}
Oleg Chalykh and Alexey Silantyev.
\newblock {KP hierarchy for the cyclic quiver}.
\newblock {\em Journal of Mathematical Physics}, 58(7), 2017.

\bibitem{jordan2014quantized}
David Jordan.
\newblock Quantized multiplicative quiver varieties.
\newblock {\em Advances in Mathematics}, 250:420--466, 2014.

\bibitem{chalykh2017multiplicative}
Oleg Chalykh and Maxime Fairon.
\newblock {Multiplicative quiver varieties and generalised
  Ruijsenaars--Schneider models}.
\newblock {\em Journal of Geometry and Physics}, 121:413--437, 2017.

\bibitem{chalykh2020hamiltonian}
Oleg Chalykh and Maxime Fairon.
\newblock {On the Hamiltonian formulation of the trigonometric spin
  Ruijsenaars--Schneider system}.
\newblock {\em Letters in Mathematical Physics}, 110:2893--2940, 2020.

\bibitem{fairon2019spin}
Maxime Fairon.
\newblock {Spin versions of the complex trigonometric Ruijsenaars--Schneider
  model from cyclic quivers}.
\newblock {\em Journal of Integrable Systems}, 4(1):xyz008, 2019.

\bibitem{fairon2021integrable}
Maxime Fairon.
\newblock Integrable systems on multiplicative quiver varieties from cyclic
  quivers.
\newblock {\em preprint arXiv:2108.02496}, 2021.

\bibitem{tong2015quantum}
David Tong and Carl Turner.
\newblock {Quantum Hall effect in supersymmetric Chern-Simons theories}.
\newblock {\em Physical Review B}, 92(23):235125, 2015.

\bibitem{okazaki2018matrix}
Tadashi Okazaki and Douglas~J Smith.
\newblock {Matrix supergroup Chern-Simons models for vortex-antivortex
  systems}.
\newblock {\em Journal of High Energy Physics}, 2018(2):1--53, 2018.

\bibitem{dorey2016matrix}
Nick Dorey, David Tong, and Carl Turner.
\newblock {A matrix model for WZW}.
\newblock {\em Journal of High Energy Physics}, 2016(8):1--31, 2016.

\bibitem{hu2023quantum}
Sen Hu, Si~Li, Dongheng Ye, and Yehao Zhou.
\newblock {Quantum Algebra of Chern-Simons Matrix Model and Large $N$ Limit}.
\newblock {\em preprint arXiv:2308.14046}, 2023.

\bibitem{jing1991vertex}
Naihuan Jing.
\newblock {Vertex operators and Hall-Littlewood symmetric functions}.
\newblock {\em Advances in Mathematics}, 87(2):226--248, 1991.

\bibitem{zhu2009affine}
Xinwen Zhu.
\newblock {Affine Demazure modules and T-fixed point subschemes in the affine
  Grassmannian}.
\newblock {\em Advances in Mathematics}, 221(2):570--600, 2009.

\bibitem{rapcak2023cohomological}
Miroslav Rap{\v{c}}{\'a}k, Yan Soibelman, Yaping Yang, and Gufang Zhao.
\newblock {Cohomological Hall algebras and perverse coherent sheaves on toric
  Calabi--Yau $3 $-folds}.
\newblock {\em Communications in Number Theory and Physics}, 17(4):847--939,
  2023.

\bibitem{ishtiaque2024r}
Nafiz Ishtiaque, Saebyeok Jeong, and Yehao Zhou.
\newblock R-matrices and miura operators in 5d chern-simons theory.
\newblock {\em preprint arXiv:2408.15712}, 2024.

\bibitem{ashwinkumar2024r}
Meer Ashwinkumar.
\newblock R-matrices from feynman diagrams in 5d chern-simons theory and
  twisted m-theory.
\newblock {\em preprint arXiv:2408.15732}, 2024.

\bibitem{costello2016m}
Kevin Costello.
\newblock {M-theory in the Omega-background and 5-dimensional non-commutative
  gauge theory}.
\newblock {\em preprint arXiv:1610.04144}, 2016.

\bibitem{Costello:2017fbo}
Kevin Costello.
\newblock {Holography and Koszul duality: the example of the $M2$ brane}.
\newblock {\em preprint arXiv:1705.02500}, 2017.

\bibitem{Gaiotto:2019wcc}
Davide Gaiotto and Jihwan Oh.
\newblock {Aspects of $\Omega$-deformed M-theory}.
\newblock {\em preprint arXiv:1907.06495}, 2019.

\bibitem{oh2021feynman}
Jihwan Oh and Yehao Zhou.
\newblock {Feynman diagrams and $\Omega$-deformed M-theory}.
\newblock {\em SciPost Physics}, 10(2):029, 2021.

\bibitem{Oh:2021wes}
Jihwan Oh and Yehao Zhou.
\newblock {Twisted holography of defect fusions}.
\newblock {\em SciPost Phys.}, 10(5):105, 2021.

\bibitem{guay2005cherednik}
Nicholas Guay.
\newblock {Cherednik algebras and Yangians}.
\newblock {\em International Mathematics Research Notices},
  2005(57):3551--3593, 2005.

\bibitem{guay2007affine}
Nicolas Guay.
\newblock {Affine Yangians and deformed double current algebras in type A}.
\newblock {\em Advances in Mathematics}, 211(2):436--484, 2007.

\bibitem{guay2017deformed}
Nicolas Guay and Yaping Yang.
\newblock {On deformed double current algebras for simple Lie algebras}.
\newblock {\em Mathematical Research Letters}, 24(5):1307--1384, 2017.

\bibitem{etingof2023new}
Pavel Etingof, Daniil Kalinov, and Eric Rains.
\newblock {New realizations of deformed double current algebras and Deligne
  categories}.
\newblock {\em Transformation Groups}, 28(1):185--239, 2023.

\bibitem{kalinov2021deformed}
Daniil Kalinov.
\newblock {Deformed Double Current Algebras via Deligne Categories}.
\newblock {\em preprint arXiv:2101.08317}, 2021.

\bibitem{frenkel2004vertex}
Edward Frenkel and David Ben-Zvi.
\newblock {\em {Vertex algebras and algebraic curves}}.
\newblock American Mathematical Soc., 2004.

\bibitem{raina2013bombay}
Ashok~K Raina, Victor~G Kac, and Natasha Rozhkovskaya.
\newblock {\em {Bombay lectures on highest weight representations of infinite
  dimensional Lie algebras}}, volume~29.
\newblock World scientific, 2013.

\bibitem{takemura1997orthogonal}
Kouichi Takemura and Denis Uglov.
\newblock {The orthogonal eigenbasis and norms of eigenvectors in the spin
  Calogero-Sutherland model}.
\newblock {\em Journal of Physics A: Mathematical and General}, 30(10):3685,
  1997.

\bibitem{uglov1998yangian}
Denis Uglov.
\newblock {Yangian actions on higher level irreducible integrable modules of
  $\widehat{\mathfrak{gl}}_N$}.
\newblock {\em preprint arXiv:math/9802048}, 1998.

\bibitem{bernard1993yang}
Denis Bernard, M~Gaudin, FDM Haldane, and V~Pasquier.
\newblock {Yang-Baxter equation in spin chains with long range interactions}.
\newblock {\em Journal of Physics A: Mathematical and Theoretical}, 1993.

\bibitem{bernevig2008model}
B~Andrei Bernevig and FDM Haldane.
\newblock {Model fractional quantum Hall states and Jack polynomials}.
\newblock {\em Physical review letters}, 100(24):246802, 2008.

\bibitem{tong2004quantum}
David Tong.
\newblock {A quantum Hall fluid of vortices}.
\newblock {\em Journal of High Energy Physics}, 2004(02):046, 2004.

\bibitem{kempf1979length}
George Kempf and Linda Ness.
\newblock The length of vectors in representation spaces.
\newblock In {\em Algebraic geometry}, pages 233--243. Springer, 1979.

\bibitem{king1994moduli}
Alastair King.
\newblock Moduli of representations of finite dimensional algebras.
\newblock {\em The Quarterly Journal of Mathematics}, 45(4):515--530, 1994.

\bibitem{mumford1994geometric}
David Mumford, John Fogarty, and Frances Kirwan.
\newblock {\em Geometric invariant theory}, volume~34.
\newblock Springer Science \& Business Media, 1994.

\bibitem{kostant1963}
Bertram Kostant.
\newblock Lie group representations on polynomial rings.
\newblock {\em American Journal of Mathematics}, 85(3):327--404, 1963.

\bibitem{gortz2010algebraic}
Ulrich G{\"o}rtz and Torsten Wedhorn.
\newblock {\em {Algebraic Geometry I: Schemes}}.
\newblock Springer, 2010.

\bibitem{zhu2016introduction}
Xinwen Zhu.
\newblock {An introduction to affine Grassmannians and the geometric Satake
  equivalaence}.
\newblock {\em IAS/Park City Mathematics Series}, pages 59--154, 2017.

\bibitem{stacks-project}
The {Stacks Project Authors}.
\newblock \textit{Stacks Project}.
\newblock \url{https://stacks.math.columbia.edu}.

\bibitem{mathieu1988formules}
Olivier Mathieu.
\newblock {Formules de caracteres pour les algebres de Kac-Moody
  g{\'e}n{\'e}rales}.
\newblock {\em Ast{\'e}risque}, 159(160):1--267, 1988.

\bibitem{moosavian2021towards}
Seyed~Faroogh Moosavian and Yehao Zhou.
\newblock {Towards the Finite-$ N $ Twisted Holography from the Geometry of
  Phase Space}.
\newblock {\em preprint arXiv:2111.06876}, 2021.

\bibitem{maulik2012quantum}
Davesh Maulik and Andrei Okounkov.
\newblock {Quantum groups and quantum cohomology}.
\newblock {\em Ast{\'e}risque}, 408:1--212, 2019.

\bibitem{mirkovic2007quiver}
Ivan Mirkovi{\'c} and Maxim Vybornov.
\newblock {Quiver varieties and Beilinson-Drinfeld Grassmannians of type A}.
\newblock {\em preprint arXiv:0712.4160}, 2007.

\bibitem{fourier2006tensor}
Ghislain Fourier and Peter Littelmann.
\newblock {Tensor product structure of affine Demazure modules and limit
  constructions}.
\newblock {\em Nagoya Mathematical Journal}, 182:171--198, 2006.

\bibitem{knizhnik1984current}
Vadim~G Knizhnik and Alexander~B Zamolodchikov.
\newblock {Current algebra and Wess-Zumino model in two dimensions}.
\newblock {\em Nuclear Physics B}, 247(1):83--103, 1984.

\bibitem{uglov1996semi}
Denis Uglov.
\newblock {Semi-infinite wedges and the conformal limit of the fermionic
  Calogero-Sutherland model with spin 1/2}.
\newblock {\em Nuclear Physics B}, 478(1-2):401--430, 1996.

\bibitem{hasegawa1989spin}
Koji Hasegawa.
\newblock {Spin Module Versions of Weyl's Reciprocity Theorem for Classical
  Kac-Moody Lie Algebras—An Application to Branching Rule Duality}.
\newblock {\em Publications of the Research Institute for Mathematical
  Sciences}, 25(5):741--828, 1989.

\bibitem{frenkel2006representations}
Igor~B Frenkel.
\newblock {Representations of affine Lie algebras, Hecke modular forms and
  Korteweg—de Vries type equations}.
\newblock In {\em Lie Algebras and Related Topics: Proceedings of a Conference
  Held at New Brunswick, New Jersey, May 29--31, 1981}, pages 71--110.
  Springer, 2006.

\bibitem{molev2003yangians}
Alexander Molev.
\newblock Yangians and their applications.
\newblock In {\em Handbook of algebra}, volume~3, pages 907--959. Elsevier,
  2003.

\bibitem{macdonald1998symmetric}
Ian~G Macdonald.
\newblock {\em {Symmetric functions and Hall polynomials}}.
\newblock Oxford university press, 1998.

\bibitem{nazarov1998representations}
Maxim Nazarov and Vitaly Tarasov.
\newblock {Representations of Yangians with Gelfand-Zetlin bases}.
\newblock {\em Journal für die reine und angewandte Mathematik},
  1998(496):181--212, 1998.

\bibitem{nazarov2002representations}
Maxim Nazarov.
\newblock {Representations of Yangians associated with skew Young diagrams}.
\newblock In {\em Proceedings of the International Congress of Mathematicians:
  Beijing 2002, August 20-28}, pages 643--654. Higher Education Pr (July 2003),
  2002.

\bibitem{gelfand1950finite}
Israel~M Gelfand and Michael~L Tsetlin.
\newblock Finite-dimensional representations of the group of unimodular
  matrices.
\newblock In {\em Dokl. Akad. Nauk SSSR}, volume~71, page 825, 1950.

\bibitem{haiman2003combinatorics}
Mark Haiman.
\newblock {Combinatorics, symmetric functions, and Hilbert schemes}.
\newblock In {\em Current developments in mathematics, 2002}, pages 39--111.
  International Press of Boston, 2003.

\end{thebibliography}

\end{document}